\newcommand{\Red}[1]{\begingroup \color{black} #1\endgroup}
\title[Naked singularity censoring with anisotropic apparent horizon]{Naked Singularity Censoring\\ with Anisotropic Apparent Horizon} 
\date{\today}
\author{Xinliang An}
\address{\small Department of Mathematics, National University of Singapore, 
10 Lower Kent Ridge Road, Singapore, 119076 }
\email{matax@nus.edu.sg}
\theoremstyle{definition}
\newtheorem{lemma}{Lemma}[section]
\newtheorem{proposition}[lemma]{Proposition}
\newtheorem{theorem}[lemma]{Theorem}
\newtheorem{corollary}[lemma]{Corollary}
\newtheorem{remark}{Remark}
\numberwithin{equation}{section}
\newtheorem*{theorem*}{Theorem}
\newtheorem{conjecture}{Conjecture}
\begin{document}

\newcommand{\ub}{\underline{u}}
\newcommand{\Cb}{\underline{C}}
\newcommand{\Lb}{\underline{L}}
\newcommand{\Lh}{\hat{L}}
\newcommand{\Lbh}{\hat{\Lb}}
\newcommand{\phib}{\underline{\phi}}
\newcommand{\Phib}{\underline{\Phi}}
\newcommand{\Db}{\underline{D}}
\newcommand{\Dh}{\hat{D}}
\newcommand{\Dbh}{\hat{\Db}}
\newcommand{\omb}{\underline{\omega}}
\newcommand{\omh}{\hat{\omega}}
\newcommand{\ombh}{\hat{\omb}}
\newcommand{\Pb}{\underline{P}}
\newcommand{\chib}{\underline{\chi}}
\newcommand{\chih}{\hat{\chi}}
\newcommand{\chibh}{\hat{\chib}}

\newcommand{\alb}{\underline{\alpha}}
\newcommand{\zeb}{\underline{\zeta}}
\newcommand{\beb}{\underline{\beta}}
\newcommand{\etb}{\underline{\eta}}
\newcommand{\Mb}{\underline{M}}
\newcommand{\oth}{\hat{\otimes}}

\gdef\dxd{\displaystyle}
\gdef\x{\xi}
\def\al{\aligned}
\def\eal{\endaligned}
\def\be{\begin{equation}}
\def\ee{\end{equation}}

\def\a {\alpha}
\def\b {\beta}
\def\ab {\alphab}
\def\bb {\betab}
\def\nab {\nabla}

\def\ub {\underline{u}}
\def\th {\theta}
\def\Lb {\underline{L}}
\def\Hb {\underline{H}}
\def\chib {\underline{\chi}}
\def\chih {\hat{\chi}}
\def\chibh {\hat{\underline{\chi}}}
\def\omegab {\underline{\omega}}
\def\etab {\underline{\eta}}
\def\betab {\underline{\beta}}
\def\alphab {\underline{\alpha}}
\def\Psib {\underline{\Psi}}
\def\hot{\widehat{\otimes}}
\def\Phib {\underline{\Phi}}
\def\thb {\underline{\theta}}
\def\t {\tilde}
\def\st {\tilde{s}}
\def\h {\hat}

\def\rhoc{\check{\rho}}
\def\sigmac{\check{\sigma}}
\def\Psic{\check{\Psi}}
\def\kappab{\underline{\kappa}}
\def\betabc {\check{\underline{\beta}}}

\def\d {\delta}
\def\f {\frac}
\def\i {\infty}
\def\l {\bigg(}
\def\r {\bigg)}
\def\S {S_{u,\underline{u}}}
\def\o{\omega}
\def\O{\Omega}\
\def\be{\begin{equation}\begin{split}}
\def\en{\end{split}\end{equation}}
\def\at{a^{\frac{1}{2}}}
\def\af{a^{\frac{1}{4}}}
\def\od{\omega^{\dagger}}
\def\ombd{\underline{\omega}^{\dagger}}
\def\K{K-\frac{1}{|u|^2}}
\def\ut{\frac{1}{|u|^2}}
\def\s{\frac{\delta a^{\frac{1}{2}}}{|u|}}
\def\Kb{K-\frac{1}{(u+\underline{u})^2}}
\def\ut{\frac{1}{|u|^2}}
\def\s{\frac{\delta a^{\frac{1}{2}}}{|u|}}
\def\Kb{K-\frac{1}{(u+\underline{u})^2}}
\def\Kt{K-\frac{1}{|u|^2}-\frac14 \nab^A\phi\nab_A\phi} 
\def\bf{b^{\frac{1}{4}}}
\def\bt{b^{\frac{1}{2}}}
\def\lo{\lambda_1}
\def\lt{\lambda_2}
\def\phib{\bar{\phi}}
\def\bR{\bar{R}}
\def\tR{\tilde{R}}
\def\S{S_{u,\ub}}

\def\de{\delta}
\def\ls{\lesssim}
\def\om{\omega}
\def\Om{\Omega}
\def\O{\Omega} 
\def\nab{\nabla}
\def\tp{\widetilde{\phi}}
\def\tO{\tilde{O}}
\def\cp{\check{\phi}}
\newcommand{\ms}{\mu^{*}}
\def\F{\big(1+\frac{\ub^{\frac12}a^{\frac14}}{|u|^{\frac12}\O} \big)}
\def\ho{\hat{\omega}}

\newcommand{\e}{\epsilon}
\newcommand{\et} {\frac{\epsilon}{2}}
\newcommand{\ef} {\frac{\epsilon}{4}}
\newcommand{\LH} {L^2(H_u)}
\newcommand{\LHb} {L^2(\underline{H}_{\underline{u}})}
\newcommand{\M} {\mathcal}
\newcommand{\TM} {\tilde{\mathcal}}
\newcommand{\p}{\psi\hspace{1pt}}
\newcommand{\q}{\underline{\psi}\hspace{1pt}}
\newcommand{\Li}{_{L^{\infty}(S_{u,\underline{u}})}}
\newcommand{\Lt}{_{L^{2}(S)}}
\newcommand{\da}{\delta^{-\frac{\epsilon}{2}}}
\newcommand{\db}{\delta^{1-\frac{\epsilon}{2}}}
\newcommand{\D}{\Delta}


\renewcommand{\div}{\mbox{div }}
\newcommand{\curl}{\mbox{curl }}
\newcommand{\trchb}{\mbox{tr} \chib}
\def\trch{\mbox{tr}\chi}
\newcommand{\tr}{\mbox{tr}}

\newcommand{\Ls}{{\mathcal L} \mkern-10mu /\,}
\newcommand{\eps}{{\epsilon} \mkern-8mu /\,}

\newcommand{\xib}{\underline{\xi}}
\newcommand{\psib}{\underline{\psi}}
\newcommand{\rhob}{\underline{\rho}}
\newcommand{\thetab}{\underline{\theta}}
\newcommand{\gammab}{\underline{\gamma}}
\newcommand{\nub}{\underline{\nu}}
\newcommand{\lb}{\underline{l}}
\newcommand{\mub}{\underline{\mu}}
\newcommand{\Xib}{\underline{\Xi}}
\newcommand{\Thetab}{\underline{\Theta}}
\newcommand{\Lambdab}{\underline{\Lambda}}
\newcommand{\vphb}{\underline{\varphi}}

\newcommand{\ih}{\hat{i}}

\newcommand{\tcL}{\widetilde{\mathscr{L}}}

\newcommand{\sRic}{Ric\mkern-19mu /\,\,\,\,}
\newcommand{\sL}{{\cal L}\mkern-10mu /}
\newcommand{\sLh}{\hat{\sL}}
\newcommand{\sg}{g\mkern-9mu /}
\newcommand{\seps}{\epsilon\mkern-8mu /}
\newcommand{\sd}{d\mkern-10mu /}
\newcommand{\sR}{R\mkern-10mu /}
\newcommand{\snab}{\nabla\mkern-13mu /}
\newcommand{\sdiv}{\mbox{div}\mkern-19mu /\,\,\,\,}
\newcommand{\scurl}{\mbox{curl}\mkern-19mu /\,\,\,\,}
\newcommand{\slap}{\mbox{$\triangle  \mkern-13mu / \,$}}
\newcommand{\sGamma}{\Gamma\mkern-10mu /}
\newcommand{\somega}{\omega\mkern-10mu /}
\newcommand{\somb}{\omb\mkern-10mu /}
\newcommand{\spi}{\pi\mkern-10mu /}
\newcommand{\sJ}{J\mkern-10mu /}
\renewcommand{\sp}{p\mkern-9mu /}
\newcommand{\su}{u\mkern-8mu /}

\begin{abstract} 

Employing the Einstein-scalar field system, we demonstrate an approach for proving high co-dimensional nonlinear instability of naked-singularity solutions as constructed by Christodoulou in \cite{Chr.2}. We further investigate the censorship of Christodoulou's naked singularity and show that a tiny anisotropic perturbation arising from the outgoing characteristic initial data would lead to the emergence of an anisotropic apparent horizon, which covers and censors the naked singularity. Our approach advances the hyperbolic short-pulse method by not requiring the aid of additional large parameters, by permitting the use of initial perturbations for the shear tensor and the derivative of scalar field {\color{black}to be with finite $BV$ and $C^0$ norms}, and by allowing the initial perturbation to be arbitrarily small in scale-critical norms. New elliptic arguments based on non-perturbative methods are also developed. 

\end{abstract} 
\maketitle 
 
\section{Introduction}
In 1994 Christodoulou \cite{Chr.2} constructed his famous naked-singularity solution in $3+1$ dimensions for the Einstein-scalar field system 
\begin{equation}\label{1.1}
\begin{split}
R_{\mu\nu}-&\f12 Rg_{\mu\nu}=T_{\mu\nu},\\
T_{\mu\nu}=&D_{\mu}\phi D_{\nu}\phi-\f12g_{\mu\nu}D^{\lambda}\phi D_{\lambda}\phi.
\end{split}
\end{equation}
Here $\phi: \mathbb{R}^{3+1} \rightarrow \mathbb{R}$ is a real-valued scalar function and $(\mathcal{M}, g)$ is the $3+1$ dimensional spacetime. The stability or instability of these naked singularities is closely related to the \textit{weak cosmic censorship} \cite{Penrose69, Chr.99}: 
\begin{conjecture}\label{conjecture1}  (Weak cosmic censorship conjecture).
For generic asymptotically flat initial data, the maximal development of Einstein's field equations possesses a complete future null infinity $\mathcal{I}^+$ and hides the possibly formed singularities in a (black hole) region causally disconnected from $\mathcal{I}^+$. 
\end{conjecture}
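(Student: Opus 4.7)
The plan is to attack the conjecture via a genericity program built around the known mechanisms of naked-singularity formation. The guiding principle is contrapositive: if every candidate naked singularity arising from regular asymptotically flat Cauchy data is dynamically unstable, and the instability produces a trapped region enclosing the singular locus, then for an open-dense set of data the maximal future development either is globally regular in the domain of outer communication or contains a future event horizon, so in both cases $\mathcal{I}^+$ is complete and all singularities are causally disconnected from it. This reframes WCC as a combination of (a) a classification of naked singularities and (b) a uniform censoring-by-instability theorem for each class.

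As a first concrete step I would catalogue the known avenues to naked singularities for \eqref{1.1}: Christodoulou's self-similar spherically symmetric solutions \cite{Chr.2}, together with their non-spherical and higher-dimensional cousins. Each such family is conjecturally a finite-codimension submanifold in the space of smooth asymptotically flat data in an appropriate scale-critical topology. The aim is to prove for each family a theorem of the following shape: any perturbation in a dense subset of a small neighborhood of the family, measured in a suitable norm, produces instead a spacetime whose maximal development contains an apparent horizon covering the would-be singular boundary and possesses a complete $\mathcal{I}^+$. The paper under consideration supplies exactly this conclusion for the Christodoulou family via anisotropic outgoing characteristic perturbations, and serves as the model to be generalized.

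The key steps are then: (i) choose a topology on asymptotically flat Cauchy data in which the Christodoulou-type families have positive and precisely quantified codimension, and in which the paper's $BV \cap C^0$ perturbation results globalize; (ii) prove a semi-global existence/stability statement ensuring that far from the putative singular locus the perturbed solution agrees with a spacetime enjoying completeness of $\mathcal{I}^+$, so that the local apparent-horizon result can be upgraded to a genuine event horizon via monotonicity of the area and a limit argument as in Dafermos's work on gravitational collapse; (iii) extend the anisotropic short-pulse/apparent-horizon machinery of the paper from the self-similar Christodoulou background to general self-similar or approximately self-similar naked singularities, removing any residual symmetry assumptions; (iv) establish a structural theorem asserting that any incipient naked singularity emerging from regular data is, at leading order near its tip, of the form covered in (iii), so that the local instability suffices.

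The decisive obstacle is step (iv). No classification of potential naked-singularity formation mechanisms is available outside symmetry, and without one there is no finite list of model solutions whose instability would imply WCC; all current techniques, including those advanced in this paper, are perturbative around a known blow-up profile. Even granting an optimistic classification, step (ii) is itself a gravitational-collapse global existence problem on the scale of Christodoulou's monumental spherically-symmetric analysis, now without symmetry, and step (iii) requires extending the non-perturbative elliptic arguments announced in the abstract to backgrounds far less rigid than the exactly self-similar one. For this reason the realistic deliverable of the present program is not WCC itself but a rigorous proof that the naked singularities we can presently construct are non-generic; a proof of the full conjecture as stated must await either a classification result of a kind beyond current PDE technology or a fundamentally new mechanism identifying a universal obstruction to nakedness.
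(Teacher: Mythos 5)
The statement you were asked to prove is Conjecture \ref{conjecture1}, which the paper explicitly presents as an open problem --- ``one of the greatest open problems in classic general relativity'' --- and for which the paper offers no proof and claims none. The paper's actual theorems are far narrower: high co-dimensional nonlinear instability of Christodoulou's specific naked-singularity solution of the Einstein-scalar field system \eqref{1.1}, and the emergence of an anisotropic apparent horizon censoring that particular singularity under small outgoing characteristic perturbations. The only setting in which weak cosmic censorship has actually been established is Christodoulou's spherically symmetric Einstein-(real) scalar field analysis in \cite{Chr.1}--\cite{Chr.3}, and the paper cites this as motivation, not as a special case of a general proof.

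Your proposal is therefore not a proof and cannot be assessed as one; to your credit, you say so yourself in the final paragraph. As a description of the field's strategy (classify candidate naked singularities, prove each is unstable with the instability producing a censoring horizon, then argue genericity), your outline is accurate, and your identification of step (iv) --- the absence of any classification of naked-singularity formation mechanisms outside symmetry --- as the decisive obstruction is exactly right. Your step (ii) also correctly flags that upgrading a local apparent horizon to completeness of $\mathcal{I}^+$ is itself a global existence problem of Christodoulou-scale difficulty without symmetry. The honest conclusion is that both you and the paper stop at the same place: a censoring-by-instability result for one explicit blow-up profile, which is evidence for the conjecture but leaves it open. Nothing in your write-up should be presented as a proof of the statement as posed.
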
  
The above is one of the greatest open problems in classic general relativity. When Penrose formulated the original weak cosmic censorship conjecture in \cite{Penrose69}, the caveat generic was not included. Christodoulou's example in \cite{Chr.2} surprisingly shows that naked singularity could form in the evolution of Einstein's field equations. Within spherical symmetry, for the Einstein-scalar field system \eqref{1.1}, by a celebrated series of works \cite{Chr.1, Chr.1.5, Chr.3} Christodoulou proved that the initial datum leading to the naked-singularity formation is rare, and for generic initial data the possibly formed singularities are indeed hidden in the trapped (black hole) region.  This motivates him to add the term \textit{generic} for Conjecture \ref{conjecture1} in \cite{Chr.99}. 

Christodoulou's proofs in \cite{Chr.1, Chr.1.5, Chr.3} reveal novel mathematical and physical properties of the spherically symmetric gravitational collapse. For the initial datum leading to naked singularity, in \cite{Chr.3} he identified $2$ instability mechanisms and deduced co-dimensional $2$ instability. These instability theorems show that each rare initial datum, which leads to naked singularity formation, is always associated with $2$-dimensional perturbations and each of the perturbations would give rise to a trapped surface and hence a black hole formation. Here a trapped surface is a $2$-sphere in $\mathcal{M}$ with both null expansions negative.

In above arguments, the spherically symmetric perturbation required in \cite{Chr.3} is a bit restrictive, since it imposes a global condition in all angles. Can we relax this requirement by only placing a small anisotropic-in-angle perturbation? If it is possible, this may also significantly raise the co-dimensions of the instability theorems. In this article, we give such an anisotropic result. In particular, we prove that \textit{for Christodoulou's constructed naked singularity in \cite{Chr.2}, by imposing a tiny perturbation from any angle, we can trigger a trapped region and an apparent horizon to censor the naked singularity.} Compared to Christodoulou's instability argument that the singularity is covered by later-formed trapped surfaces, here we further show that the naked singularity is censored by an {\color{black}achronal} apparent horizon emerging from it. Subjected to a general anisotropic perturbation, the naked singularity in \cite{Chr.2} is not naked even locally. It is completely censored for observers outside the apparent horizon.\\   

\subsection{Main Theorem} Here we use the double-null foliation for the spacetime $(\mathcal{M}, g)$. 
Let $u$ and $\ub$ be optical functions increasing towards the future and satisfy the eikonal equations 
$$g^{\mu\nu}\partial_\mu u\partial_\nu u=0,\quad g^{\mu\nu}\partial_\mu\ub\partial_\nu \ub=0.$$ 
Set $H_u$ to be the level sets of $u$ and $\Hb_{\ub}$ to be the level sets of $\ub$. And we require that $H_u$ and $\Hb_{\ub}$ correspond to the outgoing and incoming null cones, respectively. The intersections of $H_u$ and $\Hb_{\ub}$ are topologically 2-spheres and are denoted by $S_{u,\ub}$. We further fix a frame $e_1,e_2$ tangent to $\S$ and denote $\nab$ to be the angular derivative on it. {\color{black}In this paper, we employ the double null coordinates $(u, \ub, \theta^1, \theta^2)$ for $(\mathcal{M}, g)$, where $\partial/\partial \theta^1$ and $\partial/\partial \theta^2$ are tangential to $S_{u, \ub}$.\footnote{See Chapter 1 and Chapter 2 in \cite{Chr:book} for a detailed discussion of the double null coordinates.} For $A, B=1, 2$, the spacetime metric $g$ takes the form}
\begin{equation*}
g=-2\Omega^2(du\otimes d\ub+d\ub\otimes du)+\sg_{AB}(d\th^A-d^Ad\ub)\otimes (d\th^B-d^Bd\ub). 
\end{equation*}
Here $\O(u, \ub, \theta_1, \theta_2)$ is the lapse function, $\sg_{AB}$ is the induced metric on $\S$, and $A, B=1, 2$. For notational simplicity, in this article we also use $\o\in\mathbb{S}^2$ to denote $(\theta_1, \theta_2)$. With the help of $\O$, we choose the null pair $e_3, e_4$ in this article as
$$e_3=\Omega^{-1}\frac{\partial}{\partial u}, \quad \quad e_4=\Omega^{-1}\left(\frac{\partial}{\partial \ub}+d^A\frac{\partial}{\partial \th^A} \right)$$
with $d^A$ obeying $d^A=0$ on {\color{black}$H_{-1}$}. And it holds
$$g(e_3, e_3)=0, \quad g(e_4, e_4)=0, \quad g(e_3, e_4)=-2.$$ 
Define 
$$\chi_{AB}:=g(D_{e_A}e_4, e_B)$$ 
and set $\chih$ to be its traceless part, i.e., the shear tensor. In this article, we first prove 

\begin{theorem} \label{main thm}
Consider the characteristic initial value problem for the Einstein-scalar field system \eqref{1.1}. Assigning Christodoulou's naked-singularity initial data in \cite{Chr.2} along  $\Hb_0$ with $-1\leq u \leq 0$ and prescribing perturbed initial data along $H_{-1}$ satisfying
\begin{equation}\label{upper bound main thm a=1}
\sum_{i\leq 5, j\leq 3}\ub^{j} \|(\partial_{\ub})^j\nab^{i}\chih\|_{L^{\infty}_{\ub}L^2(S_{-1,\ub})}+\ub^{j} \|(\partial_{\ub})^j\nab^{i}\nab_4\phi\|_{L^{\infty}_{\ub}L^2(S_{-1,\ub})}\leq B, 
\end{equation}
then for each $B$ there exist a sufficient small $\delta=\delta(B)$ and the Einstein-scalar field system admits a unique regular solution in the spacetime region $(u,\ub, \o)$ satisfying $0\leq\ub\leq \delta$ and $\ub\leq |u|\O^{2-\t\delta}(u,0)\leq 1$ with $0<\t\d\ll 1$ a fixed constant. 

Moreover, if we further assume 
\begin{equation}\label{lower bound main thm a=1}
 \int_0^{\ub}|\chih|^2 (-1, \ub', \o)+[\nab_4\phi(-1, \ub', \o)-\nab_4\phi(-1, 0, \o)]^2 d\ub'=f(\o, \ub)\ub,
\end{equation}
with $f(\o,\ub)$ obeying
\begin{equation}\label{eq-assumption-lower-bound-f}
0\leq f\leq 1 \mbox{ on } \mathbb S^2\times (0,\delta] \quad \mbox{ and } \quad f(\cdot, \ub)\ge m \mbox{ on } B_{p}(\e)  
\end{equation}
for a point  $p\in \mathbb S^2$  and constants  $m\in (0,1)$  and  $\e\in (0,\pi/2)$, then within the solved hyperbolic region, there exists a unique MOTS (marginally outer trapped surface) $M_{\ub}$ on each $\Hb_{\ub}$ with $0< \ub \leq \d$ and the collection of MOTS $\{M_{\ub}\}$ emerges and censors the central singularity and forms an achronal apparent horizon. 
\end{theorem}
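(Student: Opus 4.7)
The plan is to establish the theorem in two stages: first produce the regular hyperbolic solution on $\mathcal{D}_\d:=\{(u,\ub,\o):0\le\ub\le\d,\ \ub\le|u|\O^{2-\t\d}(u,0)\le 1\}$, then construct the MOTS on each slice $\Hb_{\ub}$ inside $\mathcal D_\d$ and verify that the collection forms an achronal apparent horizon.

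For the existence stage I would set up a double-null bootstrap adapted to Christodoulou's self-similar data on $\Hb_0$. The natural hierarchy of norms is
\begin{equation*}
\mathcal N_{i,j}[\Psi]:=\ub^{j}\,\|(\nab_4)^{j}\nab^{i}\Psi\|_{L^\infty_{\ub}L^2(\S)},\qquad i\le 5,\ j\le 3,
\end{equation*}
matched exactly to \eqref{upper bound main thm a=1}, applied to the Ricci coefficients $\chih,\trch,\omega,\eta,\etab,\chibh,\trchb,\omb$ and to $\nab_4\phi,\nab\phi,\nab_3\phi$ with weights in $|u|$ and $\O$ dictated by the self-similar scaling of \cite{Chr.2}. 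Closure must be driven entirely by $\d$-smallness, since no large parameter $a$ is available; this forces a scale-critical accounting of $\O$-weights and a splitting of each geometric quantity into a background piece inherited from Christodoulou's solution and an anisotropic perturbation piece of mass $O(B\d^{1/2})$. I would then integrate the null structure and null Bianchi equations along $e_3$ or $e_4$, recover top-order angular regularity through Codazzi--Hodge systems on $\S$, and treat $\nab\phi$ both as a source and as a pseudo-curvature component whose transport equations have the same shape as those of the Weyl components; finite-$BV$/$C^0$ data propagate because all $L^2(\Hb_{\ub})$ bounds are allowed to be $O(B\d^{1/2})$ rather than asymptotically small.

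For the MOTS stage fix $\ub\in(0,\d]$ and parametrize $\Hb_{\ub}$ by $(u,\o)$. The Raychaudhuri equation schematically reads
\begin{equation*}
\nab_3\trch+\tfrac12\trchb\,\trch=-|\chih|^2-(\nab_4\phi)^2+(\mbox{lower-order terms}),
\end{equation*}
and integrating in $u$ from $u=-1$ together with the lower bound \eqref{lower bound main thm a=1}--\eqref{eq-assumption-lower-bound-f} drives $\trch$ negative on the cap $B_p(\e)$ once $|u|$ is sufficiently small, while the background solution keeps $\trch$ strictly positive near the outer boundary $\ub=|u|\O^{2-\t\d}(u,0)$. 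Hence $\trch$ changes sign on every $\Hb_{\ub}$. I would realize $M_{\ub}$ as a radial graph $u=U(\o)$, reducing $\trch(U(\o),\ub,\o)=0$ to a quasilinear elliptic equation of the form
\begin{equation*}
\slap U+F(U,\nab U,\o;\ub)=0
\end{equation*}
on $\mathbb S^2$, with $F$ extracted from the null structure equations along $\Hb_{\ub}$. Because $f$ is merely $C^0$ and may vanish outside $B_p(\e)$, linearization around a round sphere is unavailable; I would instead manufacture upper and lower barriers $U_\pm$ directly from the sign analysis of $\trch$ and solve via a Perron-type continuity method coupled with the maximum principle, producing a smooth $U$. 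Uniqueness follows from the strict sign of $\partial_u\trch$ at its zero locus.

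Achronality and the horizon structure come from differentiating $\trch(U(\o;\ub),\ub,\o)=0$ in $\ub$ via the implicit function theorem; combining the resulting formula for $\partial_\ub U$ with the monotonicity $\nab_4\trch\le 0$ propagated in Stage~1 gives $\partial_\ub U\le 0$, whence a short causality computation shows that the normal to $\bigcup_{\ub}M_{\ub}$ is non-timelike. The principal obstacle will be closing the first stage without the usual short-pulse large parameter: the quadratic errors that $a$ normally absorbs must here be carried by $\d$-smallness alone, which requires tight scale-critical tracking of $\O$-weights in every transport and elliptic inversion and genuine use of the $C^0$-in-$\o$ character of the data. A secondary but nontrivial difficulty is the non-perturbative MOTS construction itself, which must tolerate only $C^0$ anisotropy in $f$ and therefore relies on the barrier-based approach rather than on any linearization --- this is presumably the \emph{new elliptic arguments based on non-perturbative methods} alluded to in the abstract.
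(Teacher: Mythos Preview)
Your hyperbolic outline misidentifies the source of smallness. You write that ``closure must be driven entirely by $\d$-smallness,'' but the paper's bootstrap does \emph{not} close by taking $\d$ small; it closes because in the interior region one has $\ub a^{1/2}/(|u|\O^{3/2})\le \O^{1/2-\t\d}(u,0)\ll 1$, i.e.\ the smallness comes from $\O(u,0)\to 0$ as $u\to 0^-$, which is a feature of Christodoulou's naked-singularity data. This forces a very particular $\O$-weighted norm hierarchy (with extra factors $(1+\ub^{1/2}a^{1/4}/(|u|^{1/2}\O))$ on several top-order quantities), renormalized curvature pairs such as $\b_A-\tfrac12\nab_4\phi\nab_A\phi$ and $K-|u|^{-2}-\tfrac14|\nab\phi|^2$, and a specific order of estimates to retire the many borderline terms generated by $\O\omb\sim |u|^{-1}$ and $\partial_u\phi\sim |u|^{-1}$. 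Your generic $\ub^j$-weighted norms and ``$O(B\d^{1/2})$'' accounting do not capture this mechanism. Separately, the Raychaudhuri identity you wrote is incorrect: $-|\chih|^2-(\nab_4\phi)^2$ sits on the right of the $\nab_4\trch$ equation, not $\nab_3\trch$; the sign change of $\trch$ is obtained by integrating $\O\nab_4(\O^{-1}\trch)=-|\chih|^2-(\nab_4\phi)^2-\tfrac12(\trch)^2$ in $\ub$ and then comparing with the background value $(\O^{-1}\trch)_0=2(1-\ms)/|u|$.

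On the elliptic side you have missed the paper's central point. With the exponential ansatz $R=(\ub a)^{1-\ms}e^{-(1-\ms)\tp}$ the MOTS equation reads $\Delta_\gamma\tp+1-\tfrac12 f e^{\tp}-4(1-\ms)R\O\omb|\nab_\gamma\tp|^2+\text{(small)}=0$, and here $R\O\omb\approx \ms/(4(1-\ms))$ is \emph{order one}, not small, so the quadratic gradient term is a genuine principal term. The paper's innovation is a second ansatz $R=(\ub a)^{1-\ms}\cp^{(1-\ms)/\ms}$, chosen so that this borderline term cancels exactly, leaving $\Delta_\gamma\cp-\ms\cp+\tfrac{\ms}{2}f\cp^{1-1/\ms}+\text{(small)}=0$. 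The linearization of this operator is $\Delta_\gamma-\ms-\tfrac{1-\ms}{2}f\cp^{-1/\ms}+\text{(small)}$, whose first eigenvalue is $\ge\ms>0$, so it \emph{is} invertible and the method of continuity goes through; uniqueness then follows from a comparison argument using the same linearized structure. Your claim that ``linearization around a round sphere is unavailable'' and your fallback to a Perron-type scheme miss precisely this device; without it you have no clear route to uniqueness or to the smoothness-in-$\ub$ needed for the horizon statement. (The barriers you propose are indeed used, but only to confine the solution and to feed Moser's iteration for the a priori $C^{2,1/3}$ bound; existence and uniqueness come from the continuity method enabled by the new ansatz.)
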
  

{\color{black}
For the shear tensor and the derivative of scalar field, the above theorem prescribes {\color{black}initial perturbation along $H_{-1}^{[0,\delta]}$ with a finite $BV$ norm satisfying \eqref{lower bound main thm a=1}.} Our next theorem further allows {\color{black} initial perturbation along $H_{-1}^{[0,\delta]}$ to be with finite $C^0$ norm.}  

\begin{theorem}\label{main theorem 2}
The above statements in Theorem \ref{main thm} still hold, if we replace the requirements for $f(\o, \ub)$ in \eqref{lower bound main thm a=1} and \eqref{eq-assumption-lower-bound-f} by 
$$f(\o, \ub)=g(\ub)\t f(\o, \ub) \,\, \mbox{ with } \,\,  g(\ub)\approx [\ln(\ln\f{1}{\ub})]^{-\f12} \,\, \mbox{ obeying } \,\, \lim_{\ub\rightarrow 0^+}g(\ub)=0,$$
and  $\t f(\o,\ub)$ satisfying $0\leq \t f\leq 1 \mbox{ on } \mathbb S^2\times (0,\delta], \,\, \t f(\cdot, \ub)\ge m \mbox{ on } B_{p}(\e)$ for a point  $p\in \mathbb S^2$ with constants  $m\in (0,1)$ and  $\e\in (0,\pi/2)$. 

\end{theorem}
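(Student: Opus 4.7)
The plan is to run the scheme of Theorem \ref{main thm} in two stages, re-using the hyperbolic estimates verbatim and localizing all new work in the MOTS construction. Since the upper bound \eqref{upper bound main thm a=1} is unchanged, the a priori estimates, the bootstrap and the existence argument for the spacetime region $0\leq \ub\leq \d$, $\ub\leq |u|\O^{2-\t\d}(u,0)\leq 1$ carry over without modification. In particular, the signature counting, the transport equations for $\chih$, $\nab_4\phi$, $\trch$, $\O$ and the renormalized curvature norms remain valid, and the degeneracy $g(\ub)\to 0$ only enters through the inhomogeneous term in the MOTS equation, never through the upper bounds.

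For the MOTS construction on each $\Hb_{\ub}$, I would parametrize a candidate surface as $\{u=R(\o;\ub)\}\subset\Hb_{\ub}$ and, integrating the Raychaudhuri equation for $\trch$ along $e_3$ from $H_{-1}$ together with the transport equations for $\O$ and the metric component $\sg_{AB}$, derive a quasilinear elliptic equation for $R$ on $\mathbb S^2$ of the schematic form
\begin{equation*}
\O^{2}\,\slap R+\mathcal{N}(R,\nab R,\ub,\o)=\int_{0}^{\ub} \bigl(|\chih|^{2}+(\nab_4\phi)^{2}\bigr)(R,\ub',\o)\,d\ub'+\mathcal{E}(R,\ub,\o),
\end{equation*}
where $\mathcal N$ is a quasilinear (essentially first-order) term and $\mathcal E$ collects the contribution of Christodoulou's self-similar background together with curvature/shear corrections controlled by the already established norms. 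Using the pointwise propagation of the lower bound \eqref{lower bound main thm a=1} along $e_3$ (which is the non-perturbative ingredient alluded to in the abstract), one then sees that the right-hand side is bounded below by $c\,g(\ub)\,\ub$ for $\o\in B_p(\e/2)$ and is always non-negative.

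The MOTS is then produced by a sub/super-solution scheme adapted to the anisotropy. I would construct a sub-solution supported in $B_p(\e)$ by taking a small geodesic ball whose radius $R_-(\o;\ub)$ is chosen so that the integrated perturbation $g(\ub)\,\ub$ just exceeds the background $\O^{-2}R_-^{2}$ term, forcing $\trch<0$ there; this yields $R_-\sim g(\ub)^{1/2}\ub^{1/2}\O(u,0)$, which remains inside the hyperbolic existence region precisely because the bound $\ub\le |u|\O^{2-\t\d}(u,0)\le 1$ allows $|u|$ to be chosen that small for every fixed $\ub>0$. A super-solution is obtained away from $B_p(\e)$ by a large, nearly round sphere whose $\trch$ stays positive thanks to the upper bounds in \eqref{upper bound main thm a=1}. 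Running the Perron / moving-plane procedure inside this barrier pair then produces the required MOTS $M_{\ub}$; its uniqueness follows from a strong maximum principle argument applied to two competing MOTS, and stability in $\ub$ (together with a Hopf-type sign of $\nab_3\trch$ on $M_{\ub}$) gives the achronality of $\bigcup_{\ub}M_{\ub}$, yielding the apparent horizon.

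The main obstacle is the vanishing lower bound $g(\ub)\to 0$. Standard perturbative elliptic techniques based on treating $\O^{2}\slap R$ as the leading operator fail because the source $g(\ub)\ub$ is far smaller than any fixed fraction of the background, so the linearization is not coercive in the usual sense. The non-perturbative remedy I would implement is to renormalize the problem by the $\ub$-dependent scale $R_-(\o;\ub)\sim g(\ub)^{1/2}\ub^{1/2}\O(u,0)$, rewriting the equation in the variable $\tilde R=R/R_-$. After this rescaling, the inhomogeneity becomes of order one on $B_p(\e)$, the quasilinear terms $\mathcal N$ remain uniformly bounded (thanks to the slow rate $[\ln\ln(1/\ub)]^{-1/2}$, which is integrable against powers of $\ub$), and a uniform-in-$\ub$ elliptic theory can be applied. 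Verifying that this renormalization respects the signature and that the effective equation remains elliptic across the whole sphere (including outside $B_p(\e)$, where the source is only non-negative, not strictly positive) is where the delicate work will concentrate; it is here that the anisotropic character of the perturbation is essential, because the location of $M_{\ub}$ within $\Hb_{\ub}$ is dictated by $\tilde f(\o,\ub)$ and not by any spherically symmetric surrogate.
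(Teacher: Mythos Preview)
Your high-level structure is right: the hyperbolic estimates are unchanged and all new difficulty lives in the MOTS construction. But the concrete mechanism you propose for handling $g(\ub)\to 0$ misses the actual obstruction and the actual cure.

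First, the scaling $R_-\sim g(\ub)^{1/2}\ub^{1/2}\O(u,0)$ is not what occurs here. With Christodoulou's data the MOTS sits near $|u|\approx(\ub a\, g(\ub))^{1-\ms}$ (so $R\O_0^2(R)=\ub a e^{-\tp}$), not at a $\ub^{1/2}$ scale; your barrier ansatz and the resulting renormalization are therefore off from the start. More importantly, after passing to the correct unknown $\tp$ via $R=(\ub a)^{1-\ms}e^{-(1-\ms)\tp}$, the equation is $\Delta_\gamma\tp+1-\tfrac12 g(\ub)\t f e^{\tp}+F_1+F_2=0$, and the vanishing of $g(\ub)$ forces the untrapped barrier to sit at $\underline{\tp}=\kappa:=\ln[\tfrac32 g(\ub)^{-1}]\to\infty$. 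The real obstruction is that the Moser-iteration and Schauder constants in the a priori estimates depend on $\kappa$ like $C(\kappa)=(Ce^{\kappa})^{Ce^{4\Lambda\kappa e^{\kappa}}}$, i.e.\ doubly exponentially. The specific rate $g(\ub)\approx[\ln(\ln\tfrac{1}{\ub})]^{-1/2}$ is chosen precisely so that $C(\kappa)\le \ub^{-l}\le \O(u,0)^{-\t\d}$; this is what allows the smallness parameter $\O^{\t\d}$ (which replaces $a^{-1}$ in this paper) to absorb the blow-up of the elliptic constants. Your remark that the rate is ``integrable against powers of $\ub$'' is not the relevant feature.

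Second, your scheme does not address the borderline term $F_1=-4(1-\ms)R\O\omb|\nab_\gamma\tp|^2$, which is of size one (not small) because of the naked-singularity data $\O\omb(u,0)=\tfrac{\ms}{4(1-\ms)|u|}$. A simple rescaling $\tilde R=R/R_-$ does not remove it, and with this term present the linearized operator picks up an order-one first-order piece that obstructs invertibility. The paper eliminates $F_1$ by a second ansatz $R=(\ub a)^{1-\ms}\cp^{(1-\ms)/\ms}$, which turns the equation into $\Delta_\gamma\cp-\ms\cp+\tfrac{\ms}{2}f\cp^{1-1/\ms}+\text{small}=0$; only then does the continuity method go through. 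In the $g(\ub)\to0$ regime one must additionally check that $f\cp^{-1/\ms}=g(\ub)\t f\cp^{-1/\ms}\approx 1$ uniformly in $\ub$, which follows from the barrier location $\ub a\cp^{1/\ms}\approx \ub a g(\ub)$. None of this is visible in your renormalization $\tilde R=R/R_-$, so the linearization step would fail as written.
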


}

\begin{figure}[h]
\begin{center}
\begin{tikzpicture}[scale=0.86]
\draw [white](0,2.3)--node [midway,sloped, below,black] {$O$}(-0.5,2.3);
\draw [white](1.9, 2.5)--node [midway,sloped, above,black] {$S_{-(\tilde{\ub} a)^{1-\ms}, \tilde{\ub}}$}(2.4, 3);
\fill[white!70!black](0,2)--(1,1)--(2,2)--(1.5, 1.9)--(1,1.8)--(0.8, 1.8)--(0,2);
\draw [white](3,-1)-- node[midway, sloped, below,black]{$H_{-1}(u=-1)$}(4,0);
\draw [white](3.3, 2.6)--node [midway,sloped, above,black] {$S_{-(\delta a)^{1-\ms}, \delta}$}(3.5, 2.8);
\draw [white](2,2)--node [midway,sloped,above,black] {$\Hb_{\delta}(\ub=\delta)$}(4,0);
\draw [white](1,1)--node [midway,sloped, below,black] {$\Hb_{0}(\ub=0)$}(3,-1);
\draw [dashed] (0, 2)--(0, -4);
\draw [dashed] (0, -4)--(4,0)--(2,2);
\draw [dashed] (0,0)--(2,2);
\fill[white!70!black] (1,1)--(3,-1)--(4,0)--(2,2)--(1,1);
\fill[white!50!black] (0.8, 1.7)--(3.25,-0.75)--(3,-1)--(0.55, 1.45)--(0.8, 1.7);
\draw [white] (1.2, 1.3)--node [midway,sloped,above,black] {$\Hb_{\tilde{\ub}}(\ub=\tilde{\ub})$}(3.25,-0.75); 
\draw [thick] (0,2)--(3,-1);
\draw [thick] (1,1)--(3,-1)--(4,0)--(2,2)--(1,1);
\draw [thick] (0.8, 1.7)--(3.25,-0.75); 
\draw [thick] (0.55, 1.45)--(0.8, 1.7); 
\draw[dashed] (0, 2)--(0.5, 1.75);
\draw[dashed] (0.5, 1.75)--(0.8, 1.7);
\draw[dashed] (0.8, 1.7)--(1.5, 1.8);
\draw[dashed] (1.5, 1.8)--(2, 2);
\filldraw[black] (0.8,1.7) circle (2pt);
\filldraw[black] (2,2) circle (2pt);
\end{tikzpicture}
\end{center}
\caption{}\label{Figure 4} 
\end{figure} 

{\color{black}
\begin{remark}\label{tilde f requirement}
To set the above-mentioned $C^0$ perturbation, we can let 
$$|\chih|(-1, \ub, \o)+|\nab_4\phi(-1, \ub, \o)-\nab_4\phi(-1, 0, \o)|=[\ln(\ln \f{1}{\ub})]^{-\f14}\cdot\tilde{f}(\o)^{\f12}$$
with $\tilde{f}(\o)$ being a smooth function depending only on $\o$. By Proposition \ref{ln ln initial data size}, we have
$$\|[\ln(\ln \f{1}{\ub})]^{-\f14}\|_{\dot{H}^{\f12}([0,\delta])}\lesssim [\ln\f{1}{\delta}]^{-\f12}\ll 1.$$
Hence, with Theorem \ref{main theorem 2}, to trigger the censoring of naked-singularity, we only need 
the scale-critical perturbation of $g$ in $\|\cdot\|_{\dot{H}^{\f32}{(H_{-1}^{[0, \delta]}})}$ norm, i.e., 
$$\mbox{the } \|\cdot\|_{\dot{H}^{\f12}{(H_{-1}^{[0, \delta]}})}=\|\cdot\|_{L^2_{\o}\dot{H}^{\f12}_{\ub}([0,\delta])}+\|\cdot\|_{L^2_{\ub}\dot{H}^{\f12}_{\o}(S_{-1,\ub})} \mbox{ norm} $$ 
of $\chih(-1, \ub, \o)$ and $\nab_4\phi(-1, \ub, \o)-\nab_4\phi(-1, 0, \o)$ to be of order $[\ln\f{1}{\delta}]^{-\f12}\ll 1$. And 
as $\delta\rightarrow 0^+$, we allow the size of the perturbation to shrink to $0$.
\end{remark} 
}

\begin{remark}\label{remark a=1}
Compared with the previous short-pulse results in the scale-critical regime, here we don't need the additional large parameter $a$ and we can view that the parameter $a$ is set to be 
$a=a(B)=1$. This is an important \textit{advance} for the short-pulse method invented in \cite{Chr:book} by Christodoulou with $a=\delta^{-\f12}\gg 1$ and generalized in \cite{AL} and \cite{An19} with $a=a(B)\gg 1$ by the author and Luk, and by the author, respectively. With the aid of this parameter $a$, we could consider the strength of the short pulse to be of order $\at$. The larger the short-pulse perturbation, the easier to form a trapped surface. A novel insight of this paper is that, via employing the instability nature of the interior naked-singularity solution, to form an anisotropic trapped surface and the corresponding apparent horizon, we can set the upper bound of the short-pulse perturbation to be of order $a(B)^{\f12}=1$, and according to \eqref{eq-assumption-lower-bound-f} we could even set the size of short pulse to be a universal small number $m$. {\color{black}As pointed out in Remark \ref{tilde f requirement}, we can further choose its size to be $g(\ub)$ with $g(\ub)\rightarrow 0$ as $\ub\rightarrow 0^+$. The perturbation allowed can be arbitrarily small measured at the scale-critical level.} This is markedly different from \cite{AL} by the author and Luk,  and \cite{AH} by the author and Han, where the constant $a=a(B)$ has to be a large parameter and 
the fact $1/a\ll 1$ is utilized crucially throughout \cite{AL, AH} to control the nonlinear terms. This paper contains other hyperbolic and elliptic advancements. To capture the key features of the anisotropic instability mechanism, we also develop quite a few new ingredients. These are summarized in Section \ref{new ingredients}.
\end{remark}

\begin{remark}
In this paper, all the estimates still hold if we restore sharp $a$-weights and set $a=a(B)$ to be a parameter greater than or equal to $1$. In particular, we allow $a=a(B)\gg 1$, but we do not need $a\gg 1$ to close the arguments. To be consistent with \cite{AL, AH} and to contain more analytic information for further use, in below sections we work under the following assumptions for initial perturbation: 

\begin{equation}\label{upper bound main thm}
\sum_{i\leq 5, j\leq 3}\ub^{j} a^{-\frac12}\|(\partial_{\ub})^j\nab^{i}\chih\|_{L^{\infty}_{\ub}L^2(S_{-1,\ub})}+\ub^{j}a^{-\f12} \|(\partial_{\ub})^j\nab^{i}\nab_4\phi\|_{L^{\infty}_{\ub}L^2(S_{-1,\ub})}\leq B, 
\end{equation}
and
\begin{equation}\label{lower bound main thm}
 \int_0^{\ub}|\chih|^2 (-1, \ub', \o)+[\nab_4\phi(-1, \ub', \o)-\nab_4\phi(-1, 0, \o)]^2 d\ub'=f(\o, \ub)\ub a.
\end{equation}
Setting $a=a(B)=1$, we then retrieve assumptions \eqref{upper bound main thm a=1}, \eqref{lower bound main thm a=1}. \end{remark}

\begin{remark}

In below, we also give a stereoscopic perspective, which demonstrates naked-singularity censoring subject to an anisotropic perturbation. Note that, as portrayed in the following picture, the anisotropic apparent horizon censors the naked singularity even locally. No local observer before the apparent horizon can detect the naked singularity. What we prove in this paper is a more precise statement than showing tiny trapped surfaces formed around $O$. 

\begin{figure}[h]
\begin{center}
\begin{tikzpicture}[scale=0.7]
\draw [white](-0.1, 0.1)-- node[midway, sloped, above,black]{$O$}(0.1, 0.1);
\draw [white](3, -0.5)-- node[midway, sloped, above,black]{{\small Anisotropic Apparent Horizon}}(5, -0.5);
\draw [white](5.8, -3.85)-- node[midway, sloped, above,black]{{\small Anisotropic Perturbation}}(6.8, -3.85);
\draw [white](5.8, -3.85)-- node[midway, sloped, below,black]{{\small along} $u=-1$}(6.8, -3.85);

\draw [white](3, -5.5)-- node[midway, sloped, above,black]{$S_{-1, 0}$}(3.5, -5.5);
\draw (0,-4) ellipse (3cm and 1.2cm);
\draw (0,-3) ellipse (3.5cm and 1.2cm);
\draw[thin] (-3, -4)--(0, 0);
\draw[thin] (3, -4)--(0, 0);
\draw[thin] (3, -4)--(3.5, -3);
\draw[thin] (-3, -4)--(-3.5, -3);
\fill[white!30!black] (2.8,-3.7)--(3.4, -3.3)--(3, -4.2)--(2.8,-3.7);

\draw [very thick] (0, 0) to [out=230,in=30] (-1.2, -1);
\draw [very thick] (0, 0) to [out=315,in=150] (2, -1.5);
\draw [very thick] (-1.2,-1) to [out=10,in=150] (2, -1.5);
\draw [very thick] (2, -1.5) to [out=190,in=-35] (-1.2,-1);

\draw[fill] (0,0) circle [radius=0.08];
\end{tikzpicture}
\end{center}
\caption{}
	\label{Figure Naked Singularity Censoring}
\end{figure}

\end{remark}

Our next result further exhibits a high co-dimensional nonlinear instability:

\begin{theorem}\label{main thm 1.2} 
Consider the space of prescribed initial data for $\chih(-1, \ub, \o)$ and $\partial_{\ub}\phi(-1, \ub, \o)$ along $u=-1$ with $\ub\geq 0$. With $|\chih|(-1, 0, \o)=0, \,\, \partial_{\ub}\phi(-1, 0, \o)=\sqrt{2}\cdot\sqrt{\f{1-\ms}{\ms}}\cdot\f{1-\ms}{4}$ corresponding to their values of Christodoulou's naked-singularity solution in \cite{Chr.2}, then for any $k\in \mathbb{Z}^+$, we can find functions $f_1, f_2, ... , f_k$ and $g_1, g_2, ... , g_k$, which are continuous with respect to the $\ub$ variable and smooth with respect to the $\o$ variable, such that the below prescribed initial data for $|\chih|$ and $\partial_{\ub}\phi$ with $\ub\geq 0$ 
\begin{equation*}
\begin{split}
&\bigg(|\chih|(-1, \ub, \o), \quad \partial_{\ub}\phi(-1, \ub, \o)\bigg)\\
=&\bigg(0+\lambda_1 f_1+\lambda_2 f_2+\cdot\cdot\cdot+\lambda_k f_k, \quad \partial_{\ub}\phi(-1, 0, \o)+\lambda_1' g_1+\lambda_2' g_2+\cdot\cdot\cdot+\lambda_k' g_k\bigg)
\end{split}
\end{equation*}
would lead to the trapped surface and the apparent horizon formation when 
$$\sum_{1}^k \lambda_i^2+\sum_1^k {\lambda_i'}^2\neq 0 \quad \mbox{ with } \quad \lambda_k, \lambda_k'\in \mathbb{R}.$$ 

Moreover, if it holds
\begin{equation*}
\begin{split}
&\bigg(0+\lambda_1 f_1+\lambda_2 f_2+\cdot\cdot\cdot+\lambda_k f_k, \quad \partial_{\ub}\phi(-1, 0, \o)+\lambda_1' g_1+\lambda_2' g_2+\cdot\cdot\cdot+\lambda_k' g_k\bigg)\\
=&\bigg(0+\t\lambda_1 f_1+\t\lambda_2 f_2+\cdot\cdot\cdot+\t\lambda_k f_k, \quad \partial_{\ub}\phi(-1, 0, \o)+\t\lambda_1' g_1+\t\lambda_2' g_2+\cdot\cdot\cdot+\t\lambda_k' g_k\bigg),
\end{split}
\end{equation*}
then it renders
$$\lambda_i=\t\lambda_i \quad \mbox{ and } \quad \lambda_i'=\t\lambda_i \quad \mbox{ with } \quad 1\leq i \leq k.$$
Hence, we may say that, for any $k\in\mathbb{Z}^+$, Christodoulou's naked-singularity solution has at least co-dimension $2k$ nonlinear instability subject to outgoing continuous characteristic perturbations at $(-1, 0, \o)$.  
\end{theorem}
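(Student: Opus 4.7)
The plan is to realize Theorem \ref{main thm 1.2} as a direct consequence of the formation result, Theorem \ref{main theorem 2}, by exhibiting an explicit basis of perturbations for which the hypotheses of that theorem can be verified summand-by-summand. I would fix $2k$ pairwise-disjoint small geodesic balls $B_{p_1}(\epsilon_0), \ldots, B_{p_{2k}}(\epsilon_0) \subset \mathbb{S}^2$ with $\epsilon_0 \in (0, \pi/2)$, and pick nonnegative bump functions $\phi_i \in C^{\infty}(\mathbb{S}^2)$ with $\mathrm{supp}(\phi_i) \subset B_{p_i}(\epsilon_0)$ and $\phi_i \equiv 1$ on $B_{p_i}(\epsilon_0/2)$. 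For the $\ub$-profile, the natural choice matching the $C^0$ tolerance in Theorem \ref{main theorem 2} is $h(\ub) := [\ln\ln(1/\ub)]^{-1/4}$, extended continuously by $h(0) := 0$. Setting
\[
f_i(\ub, \omega) := h(\ub)\,\phi_i(\omega), \qquad g_i(\ub, \omega) := h(\ub)\,\phi_{k+i}(\omega), \qquad i = 1, \ldots, k,
\]
each such function is continuous in $\ub$, smooth in $\omega$, and vanishes at $\ub = 0$, ensuring compatibility with the background Christodoulou data at the corner $S_{-1, 0}$.

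The verification of the hypotheses of Theorem \ref{main theorem 2} then splits into upper and lower bounds. The upper bound \eqref{upper bound main thm a=1} follows from elementary calculus: $\ub^j |h^{(j)}(\ub)| \to 0$ as $\ub \to 0^+$ for $j \leq 3$, and the $\phi_i$ have bounded higher Sobolev norms on $\mathbb{S}^2$, so the combined perturbation has bounded norm with some $B = B(k, \{\lambda_i\}, \{\lambda_i'\})$. The lower bound is where the disjointness of supports becomes crucial: for any nontrivial tuple $(\lambda, \lambda')$, pick an index $i_*$ for which the coefficient $\mu$ (either $\lambda_{i_*}$ or $\lambda_{i_*}'$) is nonzero; by disjointness, on the corresponding ball $B_{p_{i_*}}(\epsilon_0/2)$ (respectively $B_{p_{k+i_*}}(\epsilon_0/2)$) the integrand in \eqref{lower bound main thm a=1} reduces to exactly $\mu^2 h(\ub)^2$. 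A direct L'Hopital computation yields $\int_0^{\ub} h(\ub')^2 \, d\ub' = \ub\,[\ln\ln(1/\ub)]^{-1/2}(1 + o(1))$ as $\ub \to 0^+$, matching exactly the factorization $f(\omega, \ub) = g(\ub)\,\tilde{f}(\omega, \ub)$ with $g(\ub) \asymp [\ln\ln(1/\ub)]^{-1/2}$ required by Theorem \ref{main theorem 2}. Choosing the parameter $a$ in \eqref{lower bound main thm} large enough forces $0 \leq \tilde{f} \leq 1$ globally, while preserving $\tilde{f} \geq m$ on the selected ball for some $m = m(\mu, a) > 0$. Invoking Theorem \ref{main theorem 2} then produces the MOTS and the anisotropic apparent horizon censoring the naked singularity.

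For the uniqueness assertion, if two tuples $(\lambda, \lambda')$ and $(\tilde{\lambda}, \tilde{\lambda}')$ induce identical perturbations on $H_{-1}^{[0, \delta]}$, then dividing by the positive factor $h(\ub)$ on $(0, \delta]$ reduces matters to $\sum_i (\lambda_i - \tilde{\lambda}_i) \phi_i \equiv 0$ and $\sum_i (\lambda_i' - \tilde{\lambda}_i') \phi_{k+i} \equiv 0$ on $\mathbb{S}^2$; evaluation at each center $p_i$ (where $\phi_i = 1$ and the other $\phi_j$ vanish) immediately forces $\lambda_i = \tilde{\lambda}_i$ and $\lambda_i' = \tilde{\lambda}_i'$. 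I do not anticipate a substantive obstacle in this approach, as it is essentially a packaging of Theorem \ref{main theorem 2}; the main bookkeeping point is simply that the parameters $(B, \delta, m, a)$ in Theorem \ref{main theorem 2} must be chosen anew for each tuple $(\lambda, \lambda')$, which is allowed by the statement and introduces no new analytical input.
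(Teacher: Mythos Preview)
Your proposal is correct and follows essentially the same route as the paper: fix $2k$ disjoint discs on $\mathbb{S}^2$, take bump functions supported in them, multiply by the profile $h(\ub)=[\ln\ln(1/\ub)]^{-1/4}$ (which is exactly the paper's $g(\ub)^{1/2}$ up to a harmless constant), and invoke Theorem~\ref{main theorem 2}; uniqueness then follows from disjoint supports. The only cosmetic difference is that you absorb the size of the coefficients $(\lambda,\lambda')$ into the parameter $a$, whereas the paper phrases the bounds on $f_i,g_i$ as $0\le f_i\le g(\ub)^{1/2}a^{1/2}$ with $a$ free --- but this is the same manoeuvre.
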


\begin{remark}
In Section \ref{section instability theorems}, we obtain \textit{two} instability theorems. With respect to the shear tensor and the derivative of the scalar field, Theorem \ref{main thm section 14} proves instability subject to {\color{black}perturbations with finite $BV$ norms} and Theorem \ref{main thm 1.2} proves instability subject to perturbations {\color{black}with finite $C^0$ norms}. In \cite{AL, An17, AH}, to trigger the short pulse, BV perturbations for $\partial_{\ub}\sg|_{S_{-1, 0}}$ are prescribed. In \cite{AL, An17, AH}, when $\ub\leq 0$, the shear tensor $\chih$ vanishes, and when $\ub>0$, the shear tensor $\chih$ obeys a large lower bound related to the large parameter $a$.  For this article, in Section \ref{BV instability} we first prove Theorem \ref{main thm section 14}, which is consistent with  \cite{AL, An17, AH} by imposing the BV perturbations. In Section \ref{C1 instability}, we further allow the perturbations to be $C^0$. Additional gains in both hyperbolic and elliptic parts enable us to do so. Detailed discussions are provided in Section \ref{section instability theorems}.  

\end{remark}

\begin{remark}
{\color{black}Our above theorems prove arbitrarily large co-dimensional nonlinear instability for a naked-singularity solution. This is not reported before and we spot the importance of the anisotropic arguments. Together with Remark \ref{tilde f requirement}, we can see that this paper contains a desired scale-critical nonlinear instability result.}  
\end{remark}

\begin{remark}
{\color{black}The proofs of above two theorems and the approach we demonstrate here have the potential to be generalized to other Einstein field equations. For the hyperbolic part, we employ $\O(u, 0)\rightarrow 0$ as $u\rightarrow 0$. For the elliptic part, we utilize $\O\omb(u, 0)=-\partial_u \log\O(u, 0)/2>0$, which also holds for Rodnianski-Shlapentokh-Rothman's naked-singularity in \cite{R-S} for the Einstein vacuum equations. }
\end{remark}

Our proofs of above two theorems are rooted in recent progress on trapped surface formation and apparent horizon emergence. In below, we review these developments.
\subsection{Formation of Trapped Surface}

With a 589-page proof, Christodoulou \cite{Chr:book} proved the first trapped-surface-formation result for Einstein's equations with no symmetry. As depicted in Figure \ref{Figure 1} below, Christodoulou considered a characteristic initial-value problem for the Einstein vacuum equations in the shadowed region. Minkowskian initial data are prescribed along the incoming null cone $\Hb_0$ and short-pulse initial data are prescribed along the outgoing null cone $H_{-1}$. Along $H_{-1}$, denoting $\chi$ to be its associated outgoing null second fundamental form and setting $\chih$ to be its traceless part, in 2008 Christodoulou proved the following\footnote{In \cite{Chr:book} Christodoulou's short-pulse initial data are prescribed at past null infinity. Here we focus on its correspondence in a finite region.}:

\begin{figure}[h]
\begin{center}
\begin{tikzpicture}[scale=0.75]

\draw [white](3,-1)-- node[midway, sloped, below,black]{$H_{-1}(u=-1)$}(4,0);

\draw [white](2,2)--node [midway,sloped,above,black] {$\Hb_{\delta}(\ub=\delta)$}(4,0);
\draw [white](1,1)--node [midway,sloped, below,black] {$\Hb_{0}(\ub=0)$}(3,-1);
\draw [dashed] (0, 2)--(0, -4);
\draw [dashed] (0, -4)--(4,0)--(2,2);
\draw [dashed] (0,0)--(2,2);

\draw [dashed] (0,-4)--(2,-2);
\draw [dashed] (0,2)--(3,-1);
\draw [very thick] (1,1)--(3,-1)--(4,0)--(2,2)--(1,1);
\fill[white!70!black] (1,1)--(3,-1)--(4,0)--(2,2)--(1,1);
\draw [white](1,1)-- node[midway,sloped,above,black]{$H_{u_*}$}(2,2);
\end{tikzpicture}
\end{center}
\caption{}
	\label{Figure 1}
\end{figure}

\begin{theorem*}[Christodoulou \cite{Chr:book}]\label{Chr.thm}
Set $\Hb_0$ to coincide with a backward light cone in the Minkowskian spacetime for $-1\leq u\leq 0$. For every $B>0$ and $-1<u_*<0$, there exists a $\de=\de(B,u_*)>0$ sufficiently small such that the following holds: for $0\leq\ub\leq\delta$, if the prescribed $\chih$ on $H_{-1}$ satisfies
\begin{equation}\label{Chr.upper.bound}
\sum_{i\leq 5,\,j\leq 3}\de^{\frac 12+j}\|(\partial_{\ub})^j \nab^i\chih\|_{L^\infty_{\ub}L^2(S_{-1,\ub})}\leq B 
\end{equation}
with $\nab$ being the angular derivative on $S_{-1,\ub}$, then the solution to Einstein vacuum equations remains regular in the spacetime region $-1\leq u\leq u_*$, $0\leq \ub\leq \de$. Furthermore, if the below lower-bound assumption also satisfies
\begin{equation}\label{Chr.lower.bound}
\inf_{\o\in \mathbb{S}^2} \int_0^{\de}  |\chih|^2(-1, \ub', \o)\,d\ub' \geq M_* > 2|u_*|,
\end{equation}
then the $2$-sphere $S_{u_*,\de}$ is a trapped surface.
\end{theorem*}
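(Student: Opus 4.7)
\textbf{The plan} is to follow Christodoulou's short-pulse method. The argument splits into a bootstrap stage that propagates the solution as a regular spacetime on the slab $-1\leq u\leq u_*$, $0\leq\ub\leq\delta$, followed by a Raychaudhuri-focusing stage that turns the lower bound \eqref{Chr.lower.bound} into negativity of both null expansions at $S_{u_*,\delta}$.

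For the bootstrap I introduce the short-pulse hierarchy adapted to \eqref{Chr.upper.bound}: the shear $\chih$ carries the anomalous $\delta$-weight $\delta^{-1/2}$, the outgoing curvature component $\alpha$ carries $\delta^{-3/2}$, and every other Ricci coefficient and curvature component inherits its natural Minkowski-type weight (so in particular $\trch$ and $\trchb$ stay close to their Minkowski values, and $\chibh$ stays of size $1$). I then bootstrap weighted $L^\infty_u L^\infty_{\ub} L^2(S)$ bounds on $\nab^i\psi$ for each Ricci coefficient $\psi$ and $L^2$-flux bounds on $\nab^i\Psi$ along both $H_u$ and $\Hb_{\ub}$ for the null curvature components $\Psi$, up to the orders dictated by \eqref{Chr.upper.bound}. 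To recover the Ricci bounds I integrate the relevant null structure equation ($\nab_3\psi$ or $\nab_4\psi$, whichever direction has favorable initial data) along the null generators of $\Hb_0$ or $H_{-1}$ and apply Gr\"onwall. To recover the curvature bounds I apply the standard null Bianchi energy identities on the double-null region, paired according to the Bel--Robinson-type structure. The hierarchy is designed so that every borderline product in these identities balances: each nonlinear error acquires a factor of $\delta^{1/2}$ after Cauchy--Schwarz on the short interval $[0,\delta]$. Choosing $\delta=\delta(B,u_*)$ small absorbs these errors and closes the bootstrap, producing a regular solution on the whole slab.

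For the focusing step I integrate the Raychaudhuri equation
\[
\nab_4\trch+\f12(\trch)^2=-|\chih|^2-2\omega\,\trch
\]
along $H_{-1}$ from $\ub=0$ to $\ub=\delta$, starting from the Minkowski value of $\trch$ at $S_{-1,0}$. The bootstrap keeps the quadratic term and $\omega\,\trch$ at total size $O(\delta)$, so that the net change in $\trch$ along $H_{-1}$ is $-\int_0^{\delta}|\chih|^2(-1,\ub',\o)\,d\ub'$ up to negligible error. I then transport $\trch$ along $\Hb_{\delta}$ from $u=-1$ to $u=u_*$ via its $\nab_3$ structure equation; with $\trchb$ close to its Minkowski value $-2/|u|$ and the remaining right-hand side terms controlled by the hierarchy, this transport is essentially a linear ODE whose (approximate) conservation law produces the sharp formula
\[
\trch(u_*,\delta,\o)\;=\;\f{2}{|u_*|}\;-\;\f{1}{|u_*|^2}\int_0^{\delta}|\chih|^2(-1,\ub',\o)\,d\ub'\;+\;o(1).
\]
Under \eqref{Chr.lower.bound} the right-hand side is strictly negative uniformly in $\o\in\mathbb{S}^2$. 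An analogous but simpler analysis of the $\nab_4\trchb$ equation along $H_u$, combined with $\trchb=-2/|u|$ on $\Hb_0$, keeps $\trchb(u_*,\delta,\o)<0$. Hence $S_{u_*,\delta}$ has both null expansions negative and is a trapped surface.

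\textbf{The main obstacle} is the top-order curvature energy estimate: the doubly anomalous product $\chih\cdot\alpha$ appearing in the Bianchi identity for $\nab_3\alpha$ sits exactly at the borderline of the $\delta$-scaling, and closing it requires the specific pairing built into Christodoulou's hierarchy together with the $\delta^{1/2}$ gain from integration in $\ub$. A secondary delicate point is to ensure that the two-step transport of $\trch$ from $S_{-1,0}$ to $S_{u_*,\delta}$ preserves the sign-changing margin $M_*-2|u_*|>0$ in a quantitative way, so that after absorbing the $o(1)$ errors the trapped condition still holds.
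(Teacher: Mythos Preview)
The paper does not contain its own proof of this statement: the theorem is quoted as prior work of Christodoulou (cited as \cite{Chr:book}) in the introductory survey, with no proof given or even sketched. There is therefore nothing in the paper to compare your proposal against.

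That said, your outline is an accurate high-level description of Christodoulou's short-pulse method: the bootstrap hierarchy with anomalous $\delta$-weights on $\chih$ and $\alpha$, closure of the Ricci-coefficient and curvature energy estimates via the null structure and Bianchi equations, and the two-step transport of $\trch$ (first along $H_{-1}$ via the Raychaudhuri equation, then along $\Hb_{\delta}$ via the $\nab_3\trch$ equation with $\trchb\approx -2/|u|$) leading to the sharp formula you wrote. Your identification of the borderline $\chih\cdot\alpha$ term in $\nab_3\alpha$ as the critical obstacle is also correct. As a proof \emph{plan} this is fine; as a proof it is of course only a roadmap, since the actual closure of the bootstrap occupies the bulk of the 589-page monograph.
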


\noindent Using similar upper and lower bounds as in \cite{Chr:book},  extensions of Theorem \ref{Chr.thm} via introducing the signatures for short pulse and for decay rates were obtained in \cite{KR:Trapped} and \cite{An12}, respectively. The trapped surfaces formed are of radius close to $1$.

\smallskip 

An important result was later obtained in \cite{KLR}. Via deforming the double null foliation and solving a quasilinear elliptic inequality, the lower bound assumption in \cite{Chr:book} is vastly relaxed to allow fully anisotropic initial data.

\begin{theorem*}[Klainerman-Luk-Rodnianski \cite{KLR}]\label{K-L-R theorem}
Suppose that the initial data for Einstein vacuum equations satisfy the condition \eqref{Chr.upper.bound} in Theorem \ref{Chr.thm} and the lower bound assumption
\begin{equation*}
\sup_{\o\in \mathbb{S}^2} \int_0^{\de} |\chih|^2(-1, \ub', \o)\,d\ub' \geq M_* > 0.
\end{equation*}
Then, for sufficiently small $\de$, a trapped surface forms in the evolution. 
\end{theorem*}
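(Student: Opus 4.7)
The plan is to follow the deformation-of-foliation strategy and combine Christodoulou's short-pulse existence with an elliptic construction on $\mathbb{S}^2$. Since the upper bound hypothesis is identical to that of Theorem \ref{Chr.thm}, the hierarchy of \cite{Chr:book} applies verbatim and produces a regular vacuum solution in the slab $-1\le u\le u_*$, $0\le \ub\le \delta$ for any fixed $u_*\in(-1,0)$ and any $\delta=\delta(B,u_*)$ small enough, together with all the short-pulse bounds on $\chih$, $\chibh$, $\trch$, $\trchb$, the torsion and the renormalized curvature. Integrating the null Raychaudhuri equation along $L$ from $\Hb_0$, on which $\trch=2/|u|$ by Minkowskian initial data, up to the section $\{u=u_*\}\cap\Hb_\delta$ yields the short-pulse asymptotic
\begin{equation*}
\trch(u_*,\delta,\o)\;=\;\frac{2}{|u_*|}\;-\;\int_0^\delta|\chih|^2(u_*,\ub',\o)\,d\ub'\;+\;\mathcal{O}(\delta),
\end{equation*}
while the $L$-propagation of $\trchb$ keeps $\trchb<0$ throughout $\Hb_\delta$.

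Under a uniform-in-angle lower bound the identity already forces $\trch<0$ on the whole $S_{u_*,\delta}$, which is Christodoulou's original argument. With only a sup-bound, the integrand can be very small in most directions, so the spheres of the original foliation need not be trapped. Following the KLR idea, I replace the fixed section by a deformed two-surface
\begin{equation*}
S_R\;=\;\{\,(u,\ub,\o)\in\Hb_\delta\,:\,u=-R(\o)\,\},\qquad R:\mathbb{S}^2\to\mathbb{R}_{>0},
\end{equation*}
and seek $R$ so that both null expansions adapted to $S_R$ are negative. A direct computation of the null normal to $S_R$ inside $\Hb_\delta$ expresses the outgoing expansion $H_R$ of $S_R$ in the schematic quasilinear form
\begin{equation*}
H_R(\o)\;=\;\trch\bigl(-R(\o),\delta,\o\bigr)\;+\;\slap R(\o)\;+\;\mathcal{Q}(R,\nab R),
\end{equation*}
where $\mathcal{Q}$ collects terms quadratic in $\nab R$ weighted by short-pulse-controlled connection coefficients, and the companion expression for the incoming expansion stays negative by the sign of $\trchb$ and the smallness of $\nab R$.

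The problem reduces to producing a positive $R\in C^2(\mathbb{S}^2)$, of size comparable to $M_*$, solving the quasilinear elliptic inequality
\begin{equation*}
\slap R(\o)\;+\;\mathcal{Q}(R,\nab R)\;<\;\int_0^\delta|\chih|^2\bigl(-R(\o),\ub',\o\bigr)\,d\ub'\;-\;\frac{2}{R(\o)}\;+\;\mathcal{O}(\delta)
\end{equation*}
on $\mathbb{S}^2$. The right-hand side is positive and comparable to $M_*$ at the direction $\o_*$ realizing the supremum. I would use the ansatz $R(\o)=c_1+c_2\,\varphi(\o)$, with $\varphi$ a smooth positive cutoff concentrated near $\o_*$ whose angular Laplacian has a definite sign, tune $c_1,c_2$ so that the leading balance $\slap R\leftrightarrow\text{source}$ holds pointwise, and close the nonlinear terms by a contraction mapping linearized about this ansatz, so that $\mathcal{Q}$ and the $R$-dependence of the source are absorbed as small perturbations.

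The main obstacle is the quasilinear elliptic step. Because the source depends on $R$ through the slab evaluation $\chih(-R(\o),\ub',\o)$, the inequality is not a fixed linear PDE, and the assumed data are only controlled in $L^\infty_\o$, which sits at the wrong end of the Sobolev scale for a direct inverse of $\slap$ on $\mathbb{S}^2$. I would upgrade the short-pulse hierarchy to control $\partial_u\!\int_0^\delta|\chih|^2\,d\ub'$ so that the source becomes Lipschitz in $R$ with a small constant, then combine Calder\'on--Zygmund estimates on $\mathbb{S}^2$ with a maximum-principle argument for $\slap+(\text{potential})$, closing the iteration via the smallness of $\delta$ relative to $M_*$ and thereby obtaining a genuine trapped surface $S_R$.
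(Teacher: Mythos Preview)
This theorem is quoted from \cite{KLR} and the present paper does not supply its own proof; it only records that the result is obtained ``via deforming the double null foliation and solving a quasilinear elliptic inequality'' and notes Le's alternative proof \cite{Le}. So strictly speaking there is no in-paper proof to compare against. That said, your sketch follows the correct high-level architecture of \cite{KLR}: use Christodoulou's existence result unchanged, then replace the round section $S_{u_*,\delta}$ by a graphical two-surface $S_R=\{u=-R(\o)\}\subset\Hb_\delta$ and show that for a suitable $R$ both null expansions are negative. Your expression for the outgoing expansion $H_R$ is of the right schematic form and matches the operator $\mathcal G(R,\ub)$ used later in this paper (Section~\ref{sec-Barriers}).

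Where your proposal diverges from the actual argument is in the elliptic step, and here there is a genuine conceptual gap. You set up the problem as if one must \emph{solve} a quasilinear elliptic equation (or at least get very close via contraction mapping around an ansatz), and you correctly flag the resulting difficulties: the source depends on $R$ through the slab evaluation, and the angular regularity of the data is limited. But for a \emph{trapped} surface one needs only the strict inequality $H_R<0$, not $H_R=0$. The \cite{KLR} argument --- and the barrier construction in Section~\ref{sec-Barriers} of this paper, which implements the same idea in a harder setting --- writes down an \emph{explicit} function $R$ (morally, your $c_1+c_2\varphi$ with a carefully chosen bump/valley near $\o_*$) and checks directly, using only the short-pulse bounds and the maximum principle, that the inequality is strict everywhere on $\mathbb S^2$. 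No linearization, no contraction mapping, no Calder\'on--Zygmund inverse of $\slap$, and no Lipschitz control on $\partial_u\int|\chih|^2$ is required. The iteration you propose would be needed only to locate a MOTS (the equality case), which is a genuinely harder problem and is exactly the one this paper addresses in Sections~\ref{sec-Schauder-estimates}--\ref{linearization}; for the mere existence of a trapped surface it is unnecessary, and trying to carry it out creates obstacles that the barrier method sidesteps entirely.
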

{\color{black}\noindent We also remark that, with a more geometric approach, Le in \cite{Le} gave a different proof of the main theorem in \cite{KLR}.}

\smallskip 

Via introducing two more large parameters $a, b$ satisfying $1\ll b\leq \at\leq \d^{-\f12}$ and designing a new scale-critical short-pulse ansatz, a work \cite{AL} relaxed the upper bound assumption in \cite{Chr:book}.

\begin{theorem*}[An-Luk \cite{AL}] \label{thm1.3}
Prescribe Minkowskian data along  $\Hb_0$ with $-1\leq u \leq 0$. For a fixed $\delta$, for every $B$, there exist $a_0=a_0 (B)$ and $b_0=b_0(B)$ sufficiently large such that the following hold: for any $a$ and $b$ verifying $a_0 \leq a \leq \delta^{-1}$ and $b_0\leq b \leq a^{\frac12} \leq \delta^{-\frac12}$, if prescribed $\chih$ along $H_{-1}$ with $0\leq \ub \leq \d$ satisfies
\begin{equation}\label{AL.upper.bound}
\sum_{i\leq 5, j\leq 3}\delta^{j} a^{-\frac12}\|\nab^{j}_{e_4}\nab^{i}\chih\|_{L^{\infty}_{\ub}L^2(S_{-1,\ub})}\leq B, 
\end{equation}
then the Einstein vacuum equations admit a unique regular solution in the spacetime region $-1\leq u \leq -b\delta \at$ and $0\leq \ub \leq \d$. Moreover, if we further require
\begin{equation*}
\inf_{\o\in \mathbb{S}^2}\int_0^{\delta}|\chih|^2(-1, \ub', \o) d\ub'\geq 4\delta a,
\end{equation*}
then $S_{-\d a, \d}$ is the formed $2$-dimensional trapped surface.  
\end{theorem*}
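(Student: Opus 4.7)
The plan is to implement the scale-critical short-pulse ansatz introduced in \cite{AL}, assigning to each Ricci coefficient $\psi$ and null curvature component $\Psi$ a predicted size of the form $\delta^{s_1}a^{s_2}|u|^{-s_3}$ dictated by null signatures. Concretely, $\chih$ is expected to be of size $\at/|u|$, $\trch-2/|u|$ of size $\delta a/|u|^2$, $\alpha$ of size $\at/(|u|\delta)$, and all underlined ``outgoing'' quantities stay close to their Minkowskian values of size $1/|u|$. The target region $-1\leq u\leq -b\delta\at$ is precisely where the constraint $|u|/(\delta\at)\geq b\gg 1$ supplies a smallness parameter $b^{-1}$; at $u=-\delta a$ this becomes $\at\geq b$, matching the hypothesis and placing the candidate trapped surface $S_{-\delta a,\delta}$ inside the solved slab.

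First I would set up the bootstrap, postulating that the weighted norms of the Ricci coefficients and curvature components (up to five angular derivatives and three transport derivatives) are bounded by twice their expected sizes on $\{-1\leq u\leq u_*,\ 0\leq\ub\leq\delta\}$. The Ricci coefficients are recovered by integrating their null transport equations: those with natural initial data on $\Hb_0$ (Minkowskian) via $\nab_3$-transport, and $\chih$, $\trch$, $\omega$ via their $\nab_4$-transport along $H_u$, with angular commutators absorbed through the bootstrap. Curvature is controlled through the standard Bianchi pairs $(\alpha,\beta)$, $(\beta,(\rho,\sigma))$, $((\rho,\sigma),\betab)$, $(\betab,\alphab)$ in spacetime energy identities, where the weights are arranged so that every nonlinear error term picks up a factor of $b^{-1}$, $\delta^{1/2}$, or $a^{-1/2}$. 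At top order, Hodge/elliptic estimates on $S_{u,\ub}$ trade one derivative of a Ricci coefficient for one derivative of curvature and close the loop.

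Once regular existence is established up to $u=-b\delta\at$, the trapped surface criterion at $S_{-\delta a,\delta}$ reduces to checking $\trch,\trchb<0$ there. For $\trchb$, we start from $\trchb=-2/|u|$ on $\Hb_0$ and integrate its $\nab_4$-transport; the bootstrap bounds the correction by $O(\delta a/|u|^2)$, so $\trchb\approx -2/|u|<0$ at the target sphere. For $\trch$, we use the Raychaudhuri equation
\begin{equation*}
\nab_4\trch+\tfrac{1}{2}(\trch)^2=-|\chih|^2-2\omega\trch
\end{equation*}
along $H_{-1}$ to convert the pointwise lower bound $\int_0^\delta|\chih|^2\,d\ub\geq 4\delta a$ into $\Omega\trch(-1,\delta,\o)\lesssim -4\delta a+2$; we then transport this negativity inward along $\nab_3$, where the $\nab_3$-equation for $\Omega\trch$ produces only corrections that are small relative to $\delta a/|u|^2$ evaluated at $|u|=\delta a$, yielding $\trch<0$ at $S_{-\delta a,\delta}$ in every angular direction.

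The main technical obstacle will be closing the top-order curvature estimates in the scale-critical norms without the $\delta^{1/2}$ smallness that drives \cite{Chr:book}. Since the short pulse here carries the full amplitude $\at$ in $\chih$, borderline terms such as $\int\chih\cdot\alpha\cdot\snab^4\alpha$ in the fourth-derivative $\alpha$ energy identity, and the $\chih\cdot\snab^5\chih$ contributions generated by commuting the $\nab_3$-equation for $\chih$ with $\snab$ five times, must be absorbed purely through the $b^{-1}$ smallness created by the geometric constraint $|u|\geq b\delta\at$. Engineering the weights, the renormalized Bianchi pairs, and the Hodge systems so that these terms genuinely carry a factor of $b^{-1}$ — rather than a constant that fails to vanish as $b\to\infty$ — is the crux of the proof.
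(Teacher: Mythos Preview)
This theorem is quoted from \cite{AL} as background and is not proved in the present paper, so there is no in-paper proof to compare against. Your outline does capture the broad architecture of \cite{AL} --- weighted bootstrap, transport estimates for Ricci coefficients, energy estimates for curvature, top-order Hodge systems, and the Raychaudhuri-plus-$\nab_3$-transport argument for $\trch<0$ at $S_{-\delta a,\delta}$ --- but there is a genuine gap in the curvature step.

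You plan to run the energy estimates through ``the standard Bianchi pairs $(\alpha,\beta),\ldots,(\betab,\alphab)$.'' This is precisely what \cite{AL} does \emph{not} do, and the avoidance is essential rather than cosmetic. The component $\alpha$ carries the worst $\ub$-weight in the hierarchy (your own prediction $|\alpha|\sim\at/(\delta|u|)$), and in the scale-critical regime the source terms $\chih\rho+{}^*\chih\sigma$ in $\nab_3\alpha$ do not pick up the factor of $b^{-1}$ you are counting on; the weight budget is short by a genuine power of $\delta$ that no choice of $b$ repairs. The central innovation of \cite{AL} is to eliminate $\alpha$ and $\alphab$ from the energy scheme altogether by replacing $(\rho,\sigma)$ with the renormalized pair $\bigl(K-|u|^{-2},\,\sigmac:=\sigma+\tfrac12\chih\wedge\chibh\bigr)$ and running the hierarchy through $(\beta;\,K,\sigmac)$ and $(K,\sigmac;\,\betab)$ only. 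You can see this structure carried over verbatim in the present paper: the curvature norms \eqref{R i 1}--\eqref{R i 2} contain no $\alpha$ or $\alphab$, and Section~\ref{energy estimate curvature} is built entirely on these two renormalized pairs. Without that renormalization the bootstrap does not close, so the ``main technical obstacle'' you identify in your last paragraph is not an obstacle to be overcome within your scheme but a signal that the scheme itself must be changed.
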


Later, via employing only the signature for decay rates and a spacetime rescaling, a reproof and an extension of Theorem \ref{thm1.3} to the past null infinity was obtained in \cite{An19} for the case $b=\at$ by the author.  

\begin{remark}
The scale-critical bounds in \cite{AL} inspire the hyperbolic estimates of this article. However, there are at least three major differences:
\begin{enumerate} 
\item In \cite{AL} the formed trapped surface is $S_{-\delta a, \delta}$ and it is in the approximately self-similar regime. The designed short-pulse hierarchy and the hyperbolic estimates are all consistent with the considerations of scaling criticality. This is the underlying reason that in \cite{An19} the author can give an alternative proof in the far field regime and very few borderline terms appear. In this article,  the trapped surface is located around $S_{-(\d a)^{1-\ms}, \d}$ with $0<\ms<1$ and it is not in the approximately self-similar regime. Hence, designing a suitable hierarchy and closing the bootstrap arguments are more delicate. We need to divide the spacetime into two regions and deal with many new borderline terms. Moreover, finding a proper order to conduct the estimates for recovering the bootstrap constants is crucial. The newly designed hierarchy is given in Section \ref{subsection norms} and the order of estimates is explained in Section \ref{order of bootstrap}.

\item To carry out the hyperbolic estimates in \cite{AL} and \cite{An19}, in the spacetime region considered, we apply the below inequality ubiquitously
$$\f{\ub\at}{|u|}\leq \f{1}{\at}\ll 1.$$ 
The above inequality hinges on the largeness of the additional parameter $a$. While in this article our conclusions do not depend on the largeness of $a$ and we can set $a=1$, which advances the short-pulse method.  

\item {\color{black}In this paper, we can even set the short-pulse to be small in scale-critical norms. To fulfill the requirement \eqref{lower bound for C0 perturbation} in Theorem \ref{main thm 2 section 14}, we can choose
$$|\chih(-1, \ub, \o)|+|\nab_4\phi(-1, \ub, \o)-\nab_4\phi(-1, 0, \o)|\approx [\ln(\ln\f{1}{\ub})]^{-\f14}.$$
Note that the function $[\ln(\ln 1/\ub)]^{-\f14}\rightarrow 0$ as $\ub\rightarrow 0^+$. Hence, we can let $|\chih(-1, \ub, \o)|$ and $\nab_4\phi(-1, \ub, \o)-\nab_4\phi(-1, 0, \o)$ to be absolutely continuous functions. Their BV norms in $\ub$ variable (and other scale-critical norms) can be set small. While in all previous trapped surface formation works outside of spherical symmetry, the BV perturbation of $|\chih(-1, \ub, \o)|$ in $\ub$ variable requires to be large.  
}

\end{enumerate}

\end{remark}

\subsection{Emergence of Apparent Horizon}
 
Besides trapped surface formation, the rise of the apparent horizon in the gravitational collapse of Einstein's field equations was proved in \cite{An17} by the author and in \cite{AH} by the author and Han. In \cite{An17}, the following result was proved 

\begin{theorem*}[An \cite{An17}]\label{thm1.4} 
Under the same initial-data assumption as in Theorem \ref{thm1.3}, the Einstein vacuum equations admit a unique regular solution in the region $0< \ub \leq \d$ and $-1\leq u \leq -b\ub \at$. In addition, assuming
\begin{equation}\label{eq-definition-f}
 \int_0^{\ub}|\chih|^2 (-1,\ub', \o)d\ub'=f(\o, \ub)\ub a \quad  \mbox{for each }  0<\ub\leq \d,
\end{equation}
with $f(\o,\ub)$ being a smooth function  in $\mathbb S^2\times (0,\delta]$ satisfying $20/21\leq f(\o, \ub)\leq 22/21$.
Then there \textit{exists a unique} MOTS $M_{\ub}$ along each $\Hb_{\ub}$ $(0< \ub \leq \d)$. Furthermore, requiring
\begin{equation*}
a^{-\frac12}\|{{\color{black}\ub}}^{j}\nab^{j}_{e_4}\nab^{i}\chih\|_{L^{\infty}_{\ub}L^2(S_{-1,\ub})}\leq B \mbox{ for and } i,j\in\mathbb{Z}^+,
\end{equation*}
then the MOTSs $\{M_{\ub}\}$ together form a smooth apparent horizon.  
\end{theorem*}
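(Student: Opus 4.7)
The proof decomposes naturally into a hyperbolic extension step and an elliptic MOTS step, followed by a regularity upgrade to glue the MOTSs into a smooth apparent horizon. The first task is to push the solution of \cite{AL} from the region $u \le -b\delta \at$ all the way down to the sharp wedge $-1 \le u \le -b\ub \at$ for every $0 < \ub \le \delta$. Since the short-pulse hierarchy of \cite{AL} is essentially scale-invariant under $(u,\ub)\mapsto(\lambda u,\lambda\ub)$, I would re-run the bootstrap for $\chih, \trch, \eta, \etab, \chibh, \omb$ and for the curvature components on the enlarged domain, using the large parameter $b$ (through the smallness $\ub\at/|u|\le 1/b$) exactly as $1/a$ was used in \cite{AL} to absorb the nonlinear errors appearing in the transport and Hodge estimates.

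Next, on each fixed $\Hb_\ub$ I would locate a MOTS $M_\ub$ as a graph $u=-R(\omega)$ over $\mathbb{S}^2$. Writing the MOTS condition $\trch|_{u=-R(\omega)}=0$ and using the Raychaudhuri equation for $\nab_3\trch$ together with the commutation with the tangential gradient along the graph yields a quasilinear second-order elliptic equation on $\mathbb{S}^2$ of the schematic form
\begin{equation*}
\slap R + \mathcal{N}(R,\snab R,\omega;\ub) = \int_0^{\ub}|\chih|^2(-R(\omega),\ub',\omega)\,d\ub' + \text{(l.o.t.)},
\end{equation*}
and by \eqref{eq-definition-f} the right-hand side is pinned near $\ub a$. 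I would produce a solution $R_\ub\sim\ub a$ by barrier methods: the concentric round spheres $u\equiv-c_-\ub a$ and $u\equiv-c_+\ub a$, with $c_\pm$ chosen from the pointwise bounds $20/21\le f\le 22/21$, serve as sub- and super-solutions, and a monotone iteration (or Leray--Schauder continuation) in a suitable H\"older class would then deliver a classical solution. Uniqueness comes from the strict positivity $\O\,\omb>0$ inherited from Christodoulou's ansatz, which coercively drives the linearized MOTS operator through the maximum principle.

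To upgrade $\{M_\ub\}$ to a smooth apparent horizon, I would differentiate the MOTS equation in $\ub$; the linearization is uniformly invertible on $C^{2,\alpha}(\mathbb{S}^2)$, again because of $\omb>0$ and the higher-derivative control of $\chih$ from the strengthened upper bound, so the implicit function theorem yields $\ub\mapsto R_\ub\in C^k(\mathbb{S}^2)$ for each $k$, and $\bigcup_{0<\ub\le\delta}M_\ub$ is a smooth three-dimensional hypersurface in $(\mathcal{M},g)$.

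The hardest step is the elliptic MOTS construction. Since no small parameter multiplies the source, one cannot treat the equation as a perturbation of the round-sphere MOTS problem; the argument must be genuinely non-perturbative and rests on the structural positivity of $\omb$ along $\Hb_0$, exactly the feature that the present paper identifies as the portable ingredient for other naked-singularity backgrounds. The narrow two-sided window $20/21\le f\le 22/21$ is precisely what keeps the round-sphere barriers effective once the lower-order nonlinear terms $\mathcal{N}$ are folded in, and I expect the delicate point to be quantifying the gap between the super- and sub-solution through the $\mathcal{N}$ terms uniformly in $\ub\to 0^+$, so that the barriers do not collapse as the apparent horizon approaches the central singularity.
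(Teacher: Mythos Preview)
This theorem is not proved in the present paper; it is quoted from \cite{An17} as background. So there is no ``paper's own proof'' here to compare against. That said, your sketch contains a genuine confusion of settings that would make the argument fail if carried out.

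You repeatedly invoke ``the strict positivity $\Omega\,\omb>0$ inherited from Christodoulou's ansatz'' as the source of coercivity for the linearized MOTS operator and for uniqueness. But Theorem~\ref{thm1.3} (and hence this theorem) uses \emph{Minkowskian} data along $\Hb_0$, not Christodoulou's naked-singularity data. In that setting $\omb$ vanishes on $\Hb_0$, and the hyperbolic estimates give $|R\,\Omega\omb|\lesssim \ub\at/|u|\le 1/\at\ll 1$: the $\omb$-term is \emph{small and negligible}, not a positive main term. The positivity $\Omega\omb>0$ is precisely the new feature exploited in the \emph{present} paper for the naked-singularity background; it plays no role in \cite{An17}. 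The paper spells this contrast out in Section~\ref{new ingredients}: in \cite{AH} (and \cite{An17}) one has $|F_1|+|F_2|\ll |D\tp|^2+1$, whereas here $|F_1|$ is of unit size.

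The actual mechanism in \cite{An17} is different. After the substitution $R=\ub a\,e^{-\tp}$, the MOTS equation becomes $\Delta_\gamma\tp+1-\tfrac12 f e^{\tp}+(\text{small})=0$, and the linearization is $\Delta_\gamma w-\tfrac12 f e^{\tp}w+(\text{small})$. Invertibility and uniqueness come from the coercive zeroth-order term $-\tfrac12 f e^{\tp}<0$ via the maximum principle, with the large parameter $a$ (through $1/\at\ll1$) suppressing every other contribution, including the $\omb$-term. The narrow window $20/21\le f\le 22/21$ is what makes near-round barriers work in that perturbative regime; it is not needed to offset any $\omb$-driven nonlinearity. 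Your barrier/continuation outline is otherwise reasonable in spirit, but the coercivity source you name is the wrong one for this theorem.
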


\begin{figure}[h]
\begin{center}
\begin{tikzpicture}[scale=0.86]
\draw [white](0,2.3)--node [midway,sloped, below,black] {$O$}(-0.5,2.3);
\draw [white](0.6,1.9)--node [midway,sloped, above,black] {$S_{-\tilde{\ub} a, \tilde{\ub}}$}(1.5, 2.8);
\fill[white!70!black](0,2)--(1,1)--(2,2)--(0,2);
\draw [white](3,-1)-- node[midway, sloped, below,black]{$H_{-1}(u=-1)$}(4,0);
\draw [white](2.1,2.1)--node [midway,sloped, above,black] {$S_{-\delta a, \delta}$}(2.8,2.8);
\draw [white](2,2)--node [midway,sloped,above,black] {$\Hb_{\delta}(\ub=\delta)$}(4,0);
\draw [white](1,1)--node [midway,sloped, below,black] {$\Hb_{0}(\ub=0)$}(3,-1);
\draw [dashed] (0, 2)--(0, -4);
\draw [dashed] (0, -4)--(4,0)--(2,2);
\draw [dashed] (0,0)--(2,2);
\draw [dashed] (0,-4)--(2,-2);
\fill[white!70!black] (1,1)--(3,-1)--(4,0)--(2,2)--(1,1);
\fill[white!50!black] (0.5, 2)--(3.25,-0.75)--(3,-1)--(0.25, 1.75)--(0.5, 2);
\draw [white] (1.2, 1.3)--node [midway,sloped,above,black] {$\Hb_{\tilde{\ub}}(\ub=\tilde{\ub})$}(3.25,-0.75); 
\draw [thick] (0,2)--(3,-1);
\draw [thick] (1,1)--(3,-1)--(4,0)--(2,2)--(1,1);
\draw [thick] (0.5, 2)--(3.25,-0.75); 
\draw [thick] (0.25, 1.75)--(0.5, 2); 
\draw[dashed] (0, 2)--(2.1, 1.9);
\end{tikzpicture}
\end{center}
\caption{}\label{Figure 2} 
\end{figure}

The next work by An-Han generalized the above result to the fully anisotropic scenario.  

\begin{theorem*}[An-Han \cite{AH}]\label{thm1.5} With the same prescribed initial data as in Theorem \ref{thm1.3}, for each $0<\ub\leq \delta$, if we require $a^{-\frac12}\|{{\color{black}\ub}}^{j}\nab^{j}_{e_4}\nab^{i}\chih\|_{L^{\infty}_{\ub}L^2(S_{-1,\ub})}\leq B$ for any $i,j\in\mathbb{Z}^+$ and
\begin{equation}\label{lower bound thm1.5}
 \int_0^{\ub}|\chih|^2 (-1, \ub', \o) d\ub'=f(\o, \ub)\ub a,
\end{equation}
with $f(\o,\ub)$ satisfying 
\begin{equation*}\label{eq-assumption-f-0-1}
0\leq f\leq 1\quad\text{on }\mathbb S^2\times (0,\delta]\mbox{ and }  f(\cdot, \ub)\ge m\quad\text{on }B_{p}(\e)
\end{equation*}
for a point $p\in \mathbb S^2$ and constants $m\in (0,1)$ and $\e\in (0,\pi/2)$. Then in a large region of each $\Hb_{\ub}$ with $0< \ub \leq \d$,  there \textit{exists a unique} MOTS $M_{\ub}$ and the collection of MOTS $\{M_{\ub}\}$ emerge from the central singularity and form a smooth apparent horizon. Furthermore, if we require $|\ub\partial_{\ub}f(\o,\ub)|\leq a^{-\f13}$,  then the apparent horizon is a spacelike hypersurface, hence a dynamical horizon.
\end{theorem*}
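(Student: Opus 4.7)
The plan is to produce the MOTS on each $\Hb_{\ub}$ as a solution of a quasilinear elliptic equation on $\mathbb{S}^2$, using extended hyperbolic a priori bounds from the proof of Theorem \ref{thm1.3} together with a non-perturbative barrier argument adapted to the anisotropic data. First, I would extend the scale-critical bounds of \cite{AL} so that the spacetime is solved all the way to the regime $-1 \leq u \leq -b\ub a^{\f12}$, $0 < \ub \leq \delta$. This requires only an iteration of the same null-Bianchi/null-structure estimates used for Theorem \ref{thm1.3}, with the same $1/a$ smallness controlling the nonlinear errors; the point is to have the full geometric package, in particular the value of $|\chih|^2$ along $\Hb_{\ub}$, available at every $u$ in that range.

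Next, I would parametrize a candidate MOTS on $\Hb_{\ub}$ as a graph $u = -R(\o)$ for a positive function $R$ on $\mathbb{S}^2$. Computing $\tr\chi'$ on this deformed surface and setting it to zero produces a quasilinear elliptic equation schematically of the form
\[
\slap R - \f{2}{R} + \mathcal{N}(R,\nab R) = -|\chih|^2(-R,\ub,\o) + \cdots,
\]
where $\mathcal{N}$ collects lower-order nonlinearities and the right-hand side has $\ub$-integral comparable to $f(\o,\ub)\ub a$. Because $f$ is only bounded below on the patch $B_p(\e)$, the classical small-perturbation approach around a round sphere of radius $\sim \ub a$ fails. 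The non-perturbative input is the favorable sign $\O\omb(u,0) > 0$ inherited from the Christodoulou background, which makes the linearization of the MOTS operator coercive on the relevant function spaces. With this coercivity, I would construct an outer barrier at $R = C\ub a$ (outer trapped because the propagated short-pulse $|\chih|^2$ dominates the mean-curvature term) and an inner barrier at smaller $R$ where the round-sphere expansion is still positive, then close existence via degree theory or Perron iteration. Elliptic regularity upgrades the graphical MOTS to a smooth surface, and uniqueness in a large region of $\Hb_{\ub}$ follows from the strong maximum principle applied to the coercive linearization.

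Finally, for smooth $\ub$-dependence and the dynamical-horizon property, I would differentiate the MOTS equation in $\ub$. The linearized operator is the same coercive operator as above, so the implicit function theorem yields smoothness of $\ub \mapsto M_{\ub}$, producing a smooth apparent horizon. Under the extra assumption $|\ub\partial_{\ub}f|\leq a^{-\f13}$, the source in the linearized equation has a definite sign, and the maximum principle gives $\partial_{\ub}R > 0$ with a quantitative rate; a short null-geometric computation then shows the resulting hypersurface is spacelike, hence a dynamical horizon. The main obstacle is the barrier/coercivity step: the absence of any global lower bound for $f$ means one cannot linearize around an explicit background sphere, so the whole elliptic existence argument must rest on the geometric sign $\O\omb > 0$ rather than on the size of the short-pulse perturbation, and this non-perturbative elliptic input is the key new ingredient behind the theorem.
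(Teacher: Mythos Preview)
Your proposal misidentifies the non-perturbative mechanism. Theorem~\ref{thm1.5} is stated with the \emph{same} data as Theorem~\ref{thm1.3}: Minkowskian data along $\Hb_0$, not the Christodoulou naked-singularity background. In that setting $\O\omb$ vanishes on $\ub=0$ and remains small, $|R\O\omb|\lesssim \ub\at/|u|\le a^{-\f12}\ll 1$ (cf.\ the discussion in \S1.4.2). There is no ``sign inherited from the Christodoulou background,'' and $\O\omb>0$ is \emph{not} what furnishes coercivity in \cite{AH}. You have imported the key input of the present paper's naked-singularity setting into a theorem where it is absent.

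What actually runs the argument in \cite{AH}, as the paper summarizes in the Remark following Theorem~\ref{thm1.5} and in \S1.4.2, is the ansatz $R=\ub a\,e^{-\tp}$, which reduces the MOTS equation to $\Delta_\gamma\tp+1-\tfrac12 f e^{\tp}+F=0$ with $|F|\ll|\nabla_\gamma\tp|^2+1$ precisely because the $R\O\omb|\nabla_\gamma\tp|^2$ term is small. The anisotropic trapped super-barrier $\bar{\tp}$ is built directly from this reduced equation (in the style of \cite{KLR}); the sub-barrier is the constant $\underline{\tp}=\ln(3/2)$. A~priori regularity is obtained via Moser's iteration in the order $C^0\to C^{0,\alpha}\to C^{1,1/3}\to C^{2,1/3}$ and Schauder, and existence/uniqueness comes from the method of continuity: the linearization is $\Delta_\gamma w-\tfrac12 f e^{\tp}w+\mbox{small}$, and its invertibility follows because the first eigenvalue of $-\Delta_\gamma+\tfrac12 f e^{\tp}$ on $\mathbb S^2$ is strictly positive once $f\ge m>0$ on $B_p(\e)$. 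Degree theory or Perron iteration are not used. Your final paragraph --- implicit function theorem in $\ub$ for smoothness, and a sign/maximum-principle computation under $|\ub\partial_{\ub}f|\le a^{-\f13}$ for spacelikeness --- does match the scheme.
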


In \cite{AH}, this theorem was further extended to the multi-valley scenario and concrete examples of dynamically formed apparent horizon with single and multi valleys are provided. In Figure \ref{Figure 3}, one can see the shape of MOTS along $\Hb_{\ub}$ with two valleys, which are bounded by the upper and lower barriers.

\begin{figure}[h]
\begin{center}
\begin{tikzpicture}[scale=0.85]
\draw (0,0.5) ellipse (0.875cm and 0.29cm);
\draw (0,-4) ellipse (2cm and 0.6cm);
\draw [white](1.2, -0.3)-- node[midway, sloped, above,black]{$\Hb_{\ub}$}(1.5, -0.3);
\draw [white](2.8, -2.2)-- node[midway, sloped, above,black]{upper barrier}(2.9, -2.2);
\draw [white](3.1, -3.3)-- node[midway, sloped, above,black]{MOTS $M_{\ub}(\o)$}(3.3, -3.3);
\draw [white](3.1, -3.6)-- node[midway, sloped, above,black]{lower barrier}(3.2, -3.6);
\draw [white] (1.85, -2.2)-- node[midway, sloped, below, black]{{\footnotesize A}} (1.9, -2.2);
\draw [white](2.2, -4.3)-- node[midway, sloped, above,black]{$p_1$}(2.4, -4.3);
\draw [white](-2.2, -4.3)-- node[midway, sloped, above,black]{$p_2$}(-2.4, -4.3);
\draw [white](-0.4, -5.2)-- node[midway, sloped, above,black]{$\mathbb{S}^2$}(0.4, -5.2);
\draw[thin] (-0.875,0.5)--(-2,-4);
\draw[thin] (0.875,0.5)--(2,-4);
\draw[thin] (-1,0)--(-2,-4);
\draw[thin] (1,0)--(2,-4);

\fill[white!90!black] (-0.5, -0.5) to [out=50,in=140] (0.6, -0.5) to [out=315,in=130] (1.6,-2.55) to [out=0, in=-60]  (1.625, -2.5) to [out=115, in=315] (0.6, -0.3) to [out=140, in=50] (-0.5, -0.3) to [out=230,in=65] (-1.5,-2) to [out=-20, in=180] (-1.45, -2.02) to [out=45, in=230] (-0.5, -0.5);
\draw [thick] (-0.5, -0.5) to [out=50,in=140] (0.6, -0.5); 
\draw [thick] (0.6,-0.5) to [out=315,in=130] (1.6, -2.55); 
\draw [thick] (1.6,-2.55) to [out=0, in=-60] (1.625, -2.5); 
\draw [dashed] (1.625, -2.5) to [out=115, in=315] (0.6, -0.3);
\draw [dashed] (0.6, -0.3) to [out=140, in=50] (-0.5, -0.3);
\draw [dashed] (-0.5,-0.3) to [out=230,in=65] (-1.5,-2); 
\draw [thick] (-1.5, -2) to [out=-20, in=180] (-1.45, -2.02); 
\draw [thick] (-1.45, -2.02) to [out=45, in=230] (-0.5, -0.5);

\fill[white!70!black] (0.4, -1) to [out=120,in=50] (-0.4, -1) to [out=230,in=15] (-1.575, -2.55) to [out=180, in=-20]  (-1.625,-2.5) to [out=50, in=230] (-0.5, -0.7)  to [out=50,in=120] (0.5, -0.7) to [out=310,in=135] (1.75, -3) to [out=30, in=-30] (1.7, -3.05) to [out=160,in=300] (0.4, -1);
\draw [thick] (0.4, -1) to [out=120,in=50] (-0.4, -1); 
\draw [thick] (-0.4,-1) to [out=230,in=15] (-1.575,-2.55); 
\draw [thick] (-1.575, -2.55) to [out=180, in=-20] (-1.625, -2.5); 
\draw[dashed] (-1.625, -2.5) to [out=50, in=230] (-0.5, -0.7);
\draw [dashed] (-0.5, -0.7) to [out=50,in=120] (0.5, -0.7); 
\draw [dashed] (0.5,-0.7) to [out=310,in=135] (1.75, -3); 
\draw [thick] (1.75,-3) to [out=30, in=-30] (1.7, -3.05); 
\draw [thick] (1.7,-3.05) to [out=160,in=300] (0.4, -1); 

\fill[white!90!black] (0.2, -1.5) to [out=120,in=50] (-0.2, -1.5) to [out=230,in=10] (-1.8,-3.55) to [out=180, in=-20] (-1.875,-3.5) to [out=40, in=230] (-0.2, -1.1) to [out=50,in=120] (0.2, -1.1) to [out=310,in=155] (1.875, -3.5) to [out=30, in=-30] (1.8, -3.55) to [out=170,in=300] (0.2, -1.5);
\draw [thick] (0.2, -1.5) to [out=120,in=50] (-0.2, -1.5); 
\draw [thick] (-0.2,-1.5) to [out=230,in=10] (-1.8,-3.55); 
\draw [thick] (-1.8, -3.55) to [out=180, in=-20] (-1.875, -3.5); 
\draw [dashed] (-1.875, -3.5) to [out=40, in=230] (-0.2, -1.1);
\draw [dashed] (-0.2, -1.1) to [out=50,in=120] (0.2, -1.1); 
\draw [dashed] (0.2,-1.1) to [out=310,in=155] (1.875, -3.5); 
\draw [thick] (1.875,-3.5) to [out=30, in=-30] (1.8, -3.55); 
\draw [thick] (1.8, -3.55) to [out=170,in=300] (0.2, -1.5); 

\draw [white](-0.05, -0.95)-- node[midway, sloped, below,black]{{\footnotesize B}}(0.05, -0.95);
\draw [white](-0.05, -1.91)-- node[midway, sloped, above,black]{{\footnotesize C}}(0.05, -1.91);

\draw[fill] (0,-1.02) circle [radius=0.05];
\draw[fill] (0,-1.43) circle [radius=0.05];
\draw[fill] (1.635, -2.55) circle [radius=0.05];
\end{tikzpicture}
\end{center}
\caption{}\label{Figure 3}
\end{figure}

\begin{remark}
The proof in \cite{AH} takes two steps:
\begin{enumerate}
\item[-] deriving apriori estimates via working through details of Moser's iteration and Schauder's estimates in the order of
$$C^0\rightarrow C^{0, \alpha} \rightarrow C^{1, \f13} \rightarrow C^{2, \f13} \mbox{ with } 0<\alpha\ll 1;$$
\item[-] conducting the method of continuity based on the explicit structure of the linearized operator.
\end{enumerate}
Compared with \cite{AH}, in this article the geometry along the initial incoming cone is completely different and the main terms in the elliptic equation for $R(\ub, \o)$ (with $u=R(\ub, \o)$ being the MOTS along $\Hb_{\ub}$) differentiate, which raises extra challenges. In \cite{AH} we employ the ansatz $R=\ub a e^{-\tp}$ to translate the elliptic equation for $R(\ub, \o)$ into an elliptic equation for $\tp$.  In this article, we design and utilize two forms of ansatz: 

\begin{enumerate}
\item $R=(\ub a)^{1-\ms}e^{-(1-\ms)\tp}$. We apply this form of $R(\ub, \o)$ to perform the apriori estimates via Moser's iteration and Schauder's estimates.
\item $R=(\ub a)^{1-\ms}\cp^{\f{1-\ms}{\ms}}$. We employ this ansatz to eliminate the additional nonlinear borderline term (rising from naked-singularity initial data) in the elliptic equation of $R(\ub, \o)$ and to conduct the method of continuity for the corresponding new linearized equation.
\end{enumerate} 
 
\end{remark}

\subsection{Difficulties and New Ingredients in the Proof}\label{new ingredients}
To prove Theorem \ref{main thm}, we divide the considered spacetime region into two parts:
\begin{itemize}
\item[-] the exterior $\{(u, \ub, \o)| \, -1 \leq u \leq u_1 \, \mbox{ and }\,  0\leq\ub\at\leq |u_1|\O^{2-\t\d}(u_1,0)\}$, 
\item[-] the interior $\{(u, \ub, \o)| \,\,\,\,\,\, u_1\leq u\leq 0 \, \mbox{ and }\, 0\leq\ub\at\leq |u|\O^{2-\t\delta}(u,0)\}$.
\end{itemize}
Here $|u_1|$ is a small constant to be fixed later. For notational simplicity, in the rest of this article, we set constant $\t\d$ to obey $0<\t\d\ll1$. In reality, if needed we can choose $\t\d$ to be any number satisfying $0<\t\d<1/2$ and the below proof still works.

\subsubsection{Hyperbolic Part} Compared with \cite{AL} by An-Luk, here more challenges arise. We list some of them and state several strategies:
\begin{itemize}

\item[-] For the position of MOTS along each $\Hb_{\ub}$, in contrast to the results in \cite{AL, An19, AH}, where the MOTS lies around $u=-\ub a$, in this paper the MOTS is around $u=-(\ub a)^{1-\ms}$ with $0<\ms<1$. The hyperbolic estimates in \cite{AL, An19, AH} are consistent with asymptotically self-similar bounds and crucially rely on 
$$\f{\ub\at}{|u|}\lesssim \f{1}{\at}\ll 1$$
with $a$ being a large parameter. While here the MOTS is not located in the asymptotically self-similar regime, and to obtain a desired instability result we would allow general (tiny) anisotropic perturbations and would set the parameter $a$ to be of size $1$. These bring many new difficulties, especially on how to design and prove the short-pulse hierarchy when $|u|$ is small.\\

\item[-] In this paper, we generalize our constructed ansatz in \cite{AL, An19} by inserting extra $|u|$ weights to the hierarchy. Employing the fact 
$$\f12\O(u,0) \leq \O(u,\ub, \theta_1, \theta_2) \leq 2\O(u,0) \quad \mbox{ and } \quad \O^2(u,0)=|u|^{\f{\ms}{1-\ms}},$$
the suitable $|u|$ weights are added via imposing the corresponding $\O$ weights. Especially, the $\O$-weight top-order energy estimates and top-order elliptic estimates are critical for our proof. In the energy estimates, because of the additional $\O$ weights, basic integration-by-part and pairing arguments are revised. In comparison to \cite{AL}, the top-order elliptic estimates are more subtle. As shown in Section \ref{secRicci} and Section \ref{elliptic estimates}, enhanced estimates for $\sum_{1\leq i\leq 4}\|u^i\nab^i(\O\tr\chi)\|_{L^2(\S)}$, $\|u^6\nab^5(\O\tr\chi)\|_{L^2_{\ub}L^2(\S)}$, {\color{black}precise estimates involving borderline terms for} $\|u^5\nab^5\etb\|_{L^2_{\ub}L^2(\S)}$, $\|u^6\nab^5\eta\|_{L^2_{\ub}L^2(\S)}$,  $\|u^5\nab^5(\tr\chib, \chibh)\|_{L^2_{\ub}L^2(\S)}$ are derived and are employed in an essential way. Subtle structures of the Einstein-scalar field system are explored. \\

\item[-] When bounding the nonlinear terms in the interior region where $u_1\leq u \leq 0$, we utilize the inequality
\begin{equation}\label{quotient small}
\f{\ub\at}{|u|\O^{\f32}}\leq \O^{\f12-\t\delta}(u,0)=|u|^{\f{\ms}{1-\ms}\cdot\f12-\t\delta}\leq |u_1|^{\f{\ms}{1-\ms}\cdot\f12-\t\delta} \ll 1,
\end{equation}
where $0<\t\d\ll1$ and $|u_1|\ll 1$. Notice that the largeness of $a$ is no longer required, but the power of $\O$ has to be discreetly tracked.\\

\item[-] To conduct energy estimates for the scalar field, to derive energy estimates for curvature components and to obtain elliptic estimates, we encounter terms, which only obey the upper bound
$$\f{\ub^{\f12}\af}{|u|^{\f12}\O}=\bigg(\f{\ub\at}{|u|\O^2}\bigg)^{\f12}.$$ 
By contract to \eqref{quotient small}, this quotient is no longer $\lesssim 1$. The terms satisfying only this bound serve as the new \textit{borderline} terms to be estimated. We then add the weights $\F$ to the bootstrap assumptions. Finding the correct order to retrieve the constants (which do not depend on other bootstrap assumptions) in these borderline estimates is decisive.
Additional $\F$ factors are incorporated in the top-order bootstrap assumptions for $\t{\M O}_{5,2}(\etb)$, $\t{\M O}_{5,2}(\O\tr\chib, \tr\chib)$, $\t{\M O}_{5,2}(\O\chibh, \chibh)$, $\M S(\nab_A\phi)$, $\underline{\M S}(\nab_3\phi)$, $\M R(K, \sigmac)$ and $\underline{\M R}(\beb)$ listed in Section \ref{subsection norms}. To improve the bootstrap assumption for each of these terms, the correctly designed order of arguments and enhanced estimates are vitally employed. \\

\item[-] Compared with An-Luk \cite{AL}, there are extra borderline terms arising from $\O\omb$ and $\partial_u\phi$. Along $\ub=0$, they satisfy
$$\O \omb(u,0)=\f{\ms}{4(1-\ms)}\cdot\f{1}{|u|} \quad \mbox{ and } \quad \partial_u\phi(u,0)=\sqrt{2}\sqrt{\f{\ms}{1-\ms}}\f{1}{|u|}.$$
They behave similarly to $\O\tr\chib$ and cause multiple new borderline terms. \\

\item[-] To carry out energy estimates, besides $\sigmac$ defined in \cite{AL}, we also use renormalized quantities $\b_A-\f12\nab_4\phi\nab_A\phi$ {\color{black} (equivalent to $\b_A-\f12 R_{4A}$)} and $\beb_A+\f12\nab_3\phi\nab_A\phi$ {\color{black} (equivalent to $\beb_A+\f12R_{3A}$)} {\color{black}to combine contributions of $\nab_3\b_A$, $\nab_3 R_{4A}$ or $\nab_4\beb_A$, $\nab_4 R_{3A}$ in null Bianchi equations \eqref{eq:null.Bianchi} together}.\footnote{A similar idea was used in \cite{AA} by An-Athanasiou for the Einstein-Maxwell system.} Furthermore, we also introduce and employ the renormalized Gaussian curvature $K-\f{1}{|u|^2}-\f14\nab^A\phi\nab_A\phi$. {\color{black}This new renormalization comes from the below identity for the Weyl curvature 
$$\nab^A W_{A434}=-\f12\nab_4 R_{43}+\f12\nab_3 R_{44}-\f{1}{12}\nab^A R(g_{A3}g_{44}-g_{A4}g_{43}).$$
It gives
$$\nab_4(\rho-\f14 R_{43}-\f{1}{12}R)=\f12\nab^A W_{A434}-\f14\nab_3 R_{44}+\mbox{non-top-order terms}$$
with $\rho=W_{3434}/4$. Further calculations via applying the constraint equation imply that
$$\rho-\f14 R_{43}-\f{1}{12}R=-(K-\f14\nab^A\phi\nab_A\phi)+\f12\chih\cdot\chibh-\f14\tr\chi\tr\chib.$$
For using vanishing initial data along $\Hb_0$, we then utilize the renormalized Gaussian curvature $K-\f{1}{|u|^2}-\f14\nab^A\phi\nab_A\phi$ to carry out the energy estimates.} In our estimates, some hidden structures of the Einstein-scalar field system are revealed. \\ 
 
\end{itemize}

\subsubsection{Elliptic Part} Because of the naked-singularity initial data, here the key structure of the main elliptic equation for MOTS is quite different from its counterpart in \cite{AH}. In this article, we design and utilize two forms of ansatz:

\begin{itemize}
\item[-] $R=(\ub a)^{1-\ms}e^{-(1-\ms)\tp}$. This is a generalization of \cite{AH}, where $R=\ub a e^{-\tp}$. We use this form to construct the anisotropic trapped surfaces and to conduct Moser's iteration and Schauder's estimates as in \cite{AH}.
\item[-] $R=(\ub a)^{1-\ms}\cp^{\f{1-\ms}{\ms}}$. The ansatz here is \textit{new} and is \textit{indispensable} for this paper.  We employ this form to carry out the key continuity method for the new linearized equation to solve for the MOTS.
\end{itemize} 
In below, we will give a more comprehensive explanation on the utilization of these two forms.\\

To construct the trapped and untrapped barriers and to derive the regularity estimates, we employ the ansatz $R=(\ub a)^{1-\ms}e^{-(1-\ms)\tp}$. We observe that 

\begin{itemize}
\item[-] Owning to Christodoulou's initial data, the constructions of the multi- and single-valley anisotropic trapped surfaces as in \cite{AH} and \cite{KLR} still hold. One can see the details in Section \ref{sec-Barriers}.  
\item[-] For the apriori regularity estimates, we find that the approach developed in \cite{AH} by An-Han is robust enough and can be adapted to this new setting. Consider the main elliptic equation \eqref{MOTS eqn}:  

\begin{equation*}
\begin{split}
0=&\Delta_{\gamma}\tp+1-\f12 f(\ub,\o)e^{\tp}-4(1-\ms)R\O\omb|\nab_{\gamma}\tp|^2\\
\end{split}
\end{equation*}
\begin{equation*}
\begin{split}
&-(1+\f{R}{2}\O\tr_g\chib)(1-\ms)|\nab_{\gamma}\tp|^2+2\gamma^{ij}\eta_i\partial_j\tp\\
&+[\f{1}{2(1-\ms)}R\O^{-1}\tr_g\chi-1+\f12 f(\ub,\o)e^{\tp}]\\
&-2(1-\ms)R\O\chibh\cdot\nab_{\gamma}\tp\cdot\nab_{\gamma}\tp.
\end{split}
\end{equation*}
On the right side, besides the main terms $\Delta_{\gamma}\tp+1-\f12 f(\ub,\o)e^{\tp}$ we denote
\begin{align*}
F_1:=&-4(1-\ms)R\O\omb|\nab_{\gamma}\tp|^2,\\
F_2:=&-(1+\f{R}{2}\O\tr_g\chib)(1-\ms)|\nab_{\gamma}\tp|^2+2\gamma^{ij}\eta_i\partial_j\tp\\
&+[\f{1}{2(1-\ms)}R\O^{-1}\tr_g\chi-1+\f12 f(\ub,\o)e^{\tp}]\\
&-2(1-\ms)R\O\chibh\cdot\nab_{\gamma}\tp\cdot\nab_{\gamma}\tp.
\end{align*}
In \cite{AH} we deal with the elliptic equation in the same form as above. But in \cite{AH} we get $|R\O\omb|\leq |u|\ub\at/|u|^2= \ub\at/|u|\leq 1/\at\ll1$. Hence the $F_1$ and $F_2$ terms obey $|F_1|+|F_2|\ll |D\tilde{\phi}|^2+1$. While in this paper, we only have $|R\O\omb|\lesssim |u|\cdot 1/|u|=1$ and it only holds $|F_1|+|F_2|\leq |D\tilde{\phi}|^2+1$. This $-4(1-\ms)R\O\omb|\nab_{\gamma}\tp|^2$ later serves as the main term in the linearization arguments. Luckily, as pointed out one line below $(4.3)$ in \cite{AH}, the approach developed in \cite{AH} via Moser's iteration is robust enough to cover the case 
$$|F_1|+|F_2|\leq c(|D \tp|^2+1) \quad \mbox{ with } \quad c=1.$$
Hence it can still be applied here to obtain the apriori regularity estimates.\\
\end{itemize}

The new ansatz $R=(\ub a)^{1-\ms}\cp^{\f{1-\ms}{\ms}}$ plays a vital role in linearization arguments. 

\begin{itemize}

\item[-] If we proceed with the ansatz $R=(\ub a)^{1-\ms}e^{-(1-\ms)\tp}$. The main elliptic equation for MOTS is then reduced to 
$$0=\Delta_{\gamma}\tp+1-\f12 f(\o,\ub)e^{\tp}-4(1-\ms)R\O\omb|\nab_{\gamma}\tp|^2+\mbox{small terms}.$$ 
In \cite{AH}, since $R\O\omb$ is small, the main terms in above equation are the first three terms on the right and the continuity argument via linearization can be carried out accordingly (without troubles from the $|\nab_{\gamma}\tp|^2$ term). While in this paper, due to the naked-singularity initial data, we have
$$-4(1-\ms)R\O \omb(u, \ub, \o)\approx -4(1-\ms)R\cdot\f{\ms}{4(1-\ms)}\cdot\f{1}{|u|}=-4\ms$$
being of size $1$. It is no longer negligible. Here, by introducing and employing the new ansatz $R=(\ub a)^{1-\ms}\cp^{\f{1-\ms}{\ms}}$, we eliminate this quadratic gradient borderline term and the equation of MOTS becomes
$$0=\Delta_{\gamma}\cp-\ms\cp+\f{\ms}{2}f(\o,\ub)\cp^{1-\f{1}{\ms}}+\mbox{ small terms}.$$
This enables us to complete the method of continuity and to construct the unique MOTS along each $\Hb_{\ub}$ in the considered spacetime region. \\
\end{itemize}

We then consider the position of the MOTS.
\begin{itemize}
\item[-] By the elliptic arguments in Section \ref{sec-Barriers}, along each $\Hb_{\ub}$ we have that the MOTS is bounded in between by an untrapped surface located at $u=-(2\ub a/3)^{1-\ms}$ and by a trapped surface lying within 
$$-(\ub a\cdot m\epsilon^2)^{1-\ms}\leq u \leq -(\ub a\cdot m \epsilon^{\tau+2})^{1-\ms}$$
with $\tau>2.2$. Note that, along the inner hypersurface $u=-(\ub a\cdot m\epsilon^{\tau+2})^{1-\ms}$, it holds 
\begin{equation}\label{star0}
\O(u,0)^{\f{2}{\ms}}=|u|\O^2(u,0)= |u|^{\frac{1}{1-\ms}}=\ub a\cdot m\epsilon^{\tau+2}.
\end{equation}
To guarantee that this hypersurface is located within the hyperbolic existence region, we require $\ub\leq |u|\O^{2-\t\delta}(u,0)$, which is equivalent to $|u|\O^2(u,0)\geq \ub \O^{\t\delta}(u,0)$. Thus, we need 
\begin{equation}\label{star1}
\ub a\cdot m \epsilon^{\tau+2}=|u|\O^2(u, 0)\geq \ub\O^{\t\delta}(u,0).
\end{equation}
The above inequality is equivalent to $m\epsilon^{\tau+2}\geq a^{-1}\O^{\t\delta}(u,0)$. Back to (\ref{star1}), together with \eqref{star0} we need
\begin{equation}\label{star2}
m \epsilon^{\tau+2}\geq a^{-1}(\ub a\cdot m\epsilon^{\tau+2})^{\f{\ms \t\delta}{2}}.
\end{equation}
Hence, if we let $\ub$ satisfy \eqref{star2}, then the apparent horizon constructed in this article stays in the obtained hyperbolic existence region.  Moreover, to fix the small parameter $|u_1|$, we only need to require 
\begin{equation}\label{choose u1}
|2\ub a/3|^{1-\ms}\leq |u_1| \quad \mbox{ and } \quad \O(u_1,0)=|u_1|^{\f{\ms}{1-\ms}}\ll 1. 
\end{equation}
By setting $\ub$ to be sufficiently small, clearly, we can choose $|u_1|$ verifying \eqref{choose u1}. In addition, if we set $a=1$ as in the statement of Theorem \ref{main thm}, we would deduce all the elliptic arguments in the interior region.\\

\end{itemize}

\subsubsection{Instability Results.} Our proven anisotropic trapped surface and apparent horizon formation criteria enable us to obtain instability theorems of high co-dimensions. 

\begin{itemize}
\item Allowing perturbations {\color{black}with finite $BV$ norms} along $u=-1$ with $\ub>0$, in Theorem \ref{co dimension instability theorem v1} we consider the space of initial data for $\chih(-1, \ub, \o)$ and $\partial_{\ub}\phi(-1, \ub, \o)$, that can be prescribed along $u=-1$ with $\ub> 0$. Given Christodoulou's naked-singularity data in \cite{Chr.2} along $\ub=0$, then for any $k\in \mathbb{Z}^+$, we can find smooth functions $f_1, f_2, ... , f_k$ and $g_1, g_2, ... , g_k$ with variables $\ub>0$ and $\o$, such that the below prescribed initial data for $|\chih|$ and $\partial_{\ub}\phi$ with $\ub>0$
\begin{equation*}
\begin{split}
&\bigg(|\chih|(-1, \ub, \o), \quad \partial_{\ub}\phi(-1, \ub, \o)\bigg)\\
=&\bigg(0+\lambda_1 f_1+\lambda_2 f_2+\cdot\cdot\cdot+\lambda_k f_k, \quad \partial_{\ub}\phi(-1, 0, \o)+\lambda_1' g_1+\lambda_2' g_2+\cdot\cdot\cdot+\lambda_k' g_k\bigg)
\end{split}
\end{equation*}
would lead to the trapped surface and the apparent horizon formation, when 
$$\sum_{1}^k \lambda_i^2+\sum_1^k {\lambda_i'}^2\neq 0 \quad \mbox{ with } \quad \lambda_k, \lambda_k'\in \mathbb{R}.$$ 

Moreover, if it holds 
\begin{equation*}
\begin{split}
&\bigg(0+\lambda_1 f_1+\lambda_2 f_2+\cdot\cdot\cdot+\lambda_k f_k, \quad \partial_{\ub}\phi(-1, 0, \o)+\lambda_1' g_1+\lambda_2' g_2+\cdot\cdot\cdot+\lambda_k' g_k\bigg)\\
=&\bigg(0+\t\lambda_1 f_1+\t\lambda_2 f_2+\cdot\cdot\cdot+\t\lambda_k f_k, \quad \partial_{\ub}\phi(-1, 0, \o)+\t\lambda_1' g_1+\t\lambda_2' g_2+\cdot\cdot\cdot+\t\lambda_k' g_k\bigg),
\end{split}
\end{equation*}
then it renders
$$\lambda_i=\t\lambda_i \quad \mbox{ and } \quad \lambda_i'=\t\lambda_i \quad \mbox{ with } \quad 1\leq i \leq k.$$
We may therefore say that, for any $k\in\mathbb{Z}^+$, Christodoulou's naked-singularity solution in \cite{Chr.2} has at least co-dimensional $2k$ nonlinear instability subject to outgoing BV characteristic perturbations at $(-1, 0, \o)$. \\

\item In Theorem \ref{co dimension instability theorem v1}, with $m_i$ and $m'_i$ being positive constants, the constructed $f_i$ and $g_i$ obey the requirement: for $\ub>0$
$$f_i(\o, \ub)\ge m_i\at \mbox{ on } B_{p_i}(\e_i)\subset D_i, \quad g_i(\o, \ub)\ge m_i'\at \mbox{ on } B_{p_i'}(\e_i')\subset D_i',$$
and
$$f_i(\o, \ub)=0, \quad  g_i(\o, \ub)=0 \quad \mbox{ for } \quad \ub\leq 0.$$
Hence, functions $f_i(\o, \ub)$ and $g_i(\o, \ub)$ are not continuous in $\ub$, when $\ub\rightarrow 0^+$. In Section \ref{C1 instability}, by exploiting the extra $\Omega$-weight gained in the hyperbolic estimates, we extend this instability theorem by allowing to choose continuous functions $f_i(\o, \ub)$ and $g_i(\o, \ub)$ for the $\ub$ variable with $\ub\geq 0$. In Theorem \ref{main thm 1.2}, we impose the below requirements for $f_i$ and $g_i$: 
$$\mbox{With } \,\, 1\leq i\leq k, \,\, C_1>2, \,\, \Lambda>0 \,\, \mbox{ and } \, \, g(\ub)=({C_1\Lambda+C_1})^{\f12}\cdot [\ln(\ln\f{1}{\ub})]^{-\f12},$$ 
we request the prescribed $f_i$ and $g_i$ to obey
$$0\leq f_i\leq g(\ub)^{\f12}\at \mbox{ on } \mathbb S^2\times (0,\delta], \quad f_i(\cdot, \ub)\ge m_i g(\ub)^{\f12}\at \mbox{ on } B_{p_i}(\e_i)\subset D_i ,$$
$$0\leq g_i\leq g(\ub)^{\f12}\at \mbox{ on } \mathbb S^2\times (0,\delta], \quad  g_i(\cdot, \ub)\ge m_i' g(\ub)^{\f12}\at \mbox{ on } B_{p_i'}(\e_i')\subset D_i'$$  
with points  $p_i\in D_i\subset \mathbb S^2$, $p_i'\in D_i'\subset \mathbb S^2$ and constants  $m_i, m_i'\in (0,1)$, $\e_i, \e_i'\in (0,\pi/2)$. Note that we can set $a=1$. And as $\ub\rightarrow 0^+$, it holds $$\lim_{\ub\rightarrow 0^+}g(\ub)^{\f12}=0.$$ 
Therefore, we can choose continuous-in-$\ub$ and smooth-in-$\o$ functions $f_i, g_i, \chih, \partial_{\ub}\phi$ along $u=-1$ from $\ub=0$ to prove the corresponding high co-dimensional instability results stated in Theorem \ref{main thm 1.2}.

\end{itemize}

\subsection{Related Results}

\subsubsection{Trapped Surfaces, MOTS and Short-Pulse Method}
After Christodoulou's monumental work \cite{Chr:book}, a systematical approach by introducing signature for the short pulse was developed by Klainerman-Rodnianski in \cite{KR:Trapped} to simplify and to extend \cite{Chr:book} in a finite region. Yu in \cite{Yu2} extended Klainerman-Rodnianski's approach to the Einstein-Maxwell system. The author in \cite{An12} introduced a new signature, called the signature for decay rates, and extended Klainerman-Rodnianski's results to past null infinity. Based on the hyperbolic estimates in \cite{KR:Trapped}, Klainerman-Luk-Rodnianski in \cite{KLR} found a fully anisotropic mechanism for trapped surface formation. 

In \cite{AL} the author and Luk designed a new set of short-pulse hierarchies and proved the first scale-critical trapped surface formation criterion for the Einstein vacuum equations. With only the signature for decay rates, in \cite{An19} the author gave a short proof of the scale-critical trapped surface formation in the far-field regime and related the short-pulse hierarchies to the peeling properties of the gravitational waves. In \cite{AA}, the author and Athanasiou further extended the new approach in \cite{An19} to the Einstein-Maxwell system. With spherically symmetric singular initial data prescribed along the incoming null hypersurface, Li and Liu in \cite{LL2} studied the Einstein-scalar field system and proved an almost scale-critical trapped surface formation criterion. Their result showed that the spherically symmetric singularities considered in \cite{LL2} are unstable (as a result of trapped surface formation) subject to close-to-isotropic gravitational perturbations and the instability in \cite{LL2} is in the sense of the first category, slightly weaker than co-dimension $1$. Very recently, the methods in \cite{An19} by the author and in \cite{AA} by the author and Athanasiou are also extended to the Einstein-Yang-Mills system and the Einstein-scalar field system, by Athanasiou-Mondal-Yau \cite{AMY} and Zhao-Hilditch-Kroon \cite{ZHK}, respectively. 

Based on the hyperbolic estimates in \cite{AL}, in the short-pulse regime, the MOTS along each incoming null hypersurface and the corresponding apparent horizon are constructed in \cite{An17} by the author and in \cite{AH} by the author and Han for the general anisotropic scenarios. For solving the MOTS and the apparent horizon based on spacelike foliations, interested readers are referred to \cite{AEM, AMS, AMS08, AM} by Andersson, Eichmair, Mars, Metzger, Simon and references therein.

\subsubsection{Weak Cosmic Censorship within Spherical Symmetry and Naked Singularity Formation}
In a series of works \cite{Chr.1}-\cite{Chr.3}, Christodoulou achieved the proof of weak cosmic censorship for the spherically symmetric Einstein-(real) scalar field system. After constructing the first example of the naked-singularity solution under spherical symmetry, he ingenuously obtained $4$-integrated theorems:
\begin{enumerate}
\item BV extension principle in triangle and in rectangle regions; 
\item a sharp trapped surface formation criterion; 
\item the first instability theorem;
\item the second instability theorem.
\end{enumerate}
Christodoulou's arguments in \cite{Chr.1}-\cite{Chr.3} rely on the special structures of the spherically symmetric Einstein-(real) scalar field system and the associated monotonicity properties. His $4$-step argument is unified. Each step is exactly matched with the other steps. In recent years, each of Christodoulou's steps has been separately extended. In \cite{An-Lim} the author and Lim extended \cite{Chr.1} to the spherically symmetric Einstein-Maxwell-charged (complex) scalar field system. See also the extension in \cite{Costa} by Costa to the spherically symmetric Einstein-(real) scalar field system with positive cosmological constant. For the instability argument of the Einstein-(real) scalar field system, Li-Liu in \cite{LL1} gave a proof with apriori estimates. Their instability theorem is slightly different from Christodoulou's. 

Extending the proof of the weak cosmic censorship to the other spherically symmetric Einstein-matter field is still very challenging.  The Einstein-Maxwell-charged (complex) scalar field system is the next to be considered. However, the corresponding weak cosmic censorship has remained open since the late 1990s. Several difficulties are easily seen:

\begin{itemize}
\item Due to the presence of non-constant charge $Q$, almost all previously employed monotonicity formulas fail to hold. These monotonicity formulas are critically used in Christodoulou's BV extension/trapped surface formation/instability arguments when $Q=0$.  A new strategy to incorporate non-constant charge $Q$ remains to be developed. 
\item For the trapped surface formation criterion, with $\eta$ being the initial mass input and $\d$ being the  small deformation parameter, Christodoulou needs $\eta\gtrsim \delta\log\f{1}{\delta}$. Following Christodoulou's method, for the charged case, in \cite{An-Lim} the author and Lim obtained a result requiring $\eta\gtrsim \d^{1-\f{\o}{2}}(>\delta\log{\f{1}{\delta}})$ with $0<\o\ll 1$. But Christodoulou's ``blue-shift" $\gamma$ in the instability argument is around $\log{\f{1}{\delta}}$ rather than $\d^{-\f{\o}{2}}$. How to reconcile?  
\item Christodoulou's way of proving his instability theorems is very delicate. Can one obtain the same sharp arguments as Christodoulou did?
\end{itemize}
In a new preprint \cite{An-Tan}, the author and Tan address these questions and we show that \textit{the weak cosmic censorship still holds (in the sense of Christodoulou) for the gravitational collapse of the spherically symmetric Einstein-Maxwell-charged (complex) scalar field system.}\\

\noindent Recently, there is also the exciting development of constructing the naked-singularity solutions for Einstein's field equations. Under spherical symmetry and self-similar ansatz, in \cite{An-ZhangXF} the author and Zhang showed naked-singularity formation for Einstein vacuum equations in higher dimensions, and in \cite {GHJ} Guo-Hadzic-Jang extended \cite{Chr.2} to the Einstein-Euler system. Outside spherical symmetry, important progress has been made by Rodnianski-Shlapentokh-Rothman in \cite{R-S} and by Shlapentokh-Rothman in \cite{YS}. There they prove the first naked-singularity formation result for the Einstein vacuum equations in $3+1$ dimensions.

\subsection{Organizations of the Paper} 
The paper is organized as follows. In Section \ref{setting} we present the settings. In Section \ref{secbasic}, we derive the preliminary estimates. In Section \ref{secRicci}, we provide the $L^2(\S)$ estimates for Ricci coefficients. In Section \ref{scalar field}, we derive the $L^2(\S)$ estimates for 
the scalar field. In Section \ref{energy scalar field}, we prove energy estimates for the scalar field. In Section \ref{elliptic estimates}, we establish the elliptic estimates. In Section \ref{energy estimate curvature}, we conduct the energy estimates for curvature components. In Section \ref{sec-Barriers}, we construct the sub and super barriers for the MOTS. In Section \ref{sec-Schauder-estimates}, we carry out Moser's iteration and the Schauder estimates. In Section \ref{linearization}, we demonstrate the crucial linearization and continuity arguments. In Section \ref{sec-existence-solutions}, we obtain the existence and properties for the apparent horizon.  In Section \ref{section instability theorems}, we prove the instability theorems.  

\subsection{Acknowledgement}
XA would like to thank Yu Deng and Willie Wong for valuable correspondences. XA is supported by MOE Tier 1 grants A-0004287-00-00, A-0008492-00-00 and MOE Tier 2 grant A-8000977-00-00.

\section{Setting}\label{setting}
\subsection{Double Null Coordinates}\label{secdnf} 
In this paper, we adopt the double null coordinates $(u,\ub, \theta^1, \theta^2)$ for $(\M M, g)$. For $A,B=1,2$, the spacetime metric $g$ takes the form
\begin{equation}\label{metric ansatz}
g=-2\Omega^2(du\otimes d\ub+d\ub\otimes du)+\sg_{AB}(d\th^A-d^Adu)\otimes (d\th^B-d^Bdu). 
\end{equation}
Here $u$ and $\ub$ are optical functions satisfying 
$$g^{\mu\nu}\partial_\mu u\partial_\nu u=0,\quad g^{\mu\nu}\partial_\mu\ub\partial_\nu \ub=0.$$
We require both $u$ and $\ub$ are increasing towards the future. The level sets of $u$ and $\ub$ are called $H_u$ and $\Hb_{\ub}$, respectively. They are outgoing and incoming null cones. The intersections of $H_u$ and $\Hb_{\ub}$ are topologically $2$-shpere and are denoted as $\S$. {\color{black}In later sections, along $\Hb_{\ub}$ we consider the $2$-sphere $u=-R(\ub, \theta_1, \theta_2)$ with $R(\ub, \theta_1, \theta_2)$ being a given function, and for simplicity we will write this $2$-sphere as $S_{-R, \ub}$.}

\subsection{Equations}\label{seceqn} 
The equations studied here are the Einstein-scalar field system: 
\begin{equation*}
\begin{split}
&R_{\mu\nu}-\f12 Rg_{\mu\nu}=T_{\mu\nu},\\
&T_{\mu\nu}=D_{\mu}\phi D_{\nu}\phi-\f12g_{\mu\nu}D^{\lambda}\phi D_{\lambda}\phi.
\end{split}
\end{equation*} 

\noindent Utilizing the null frame $e_3$, $e_4$ and an frame ${e_1,e_2}$ tangent to the 2-spheres $S_{u,\ub}$, with the indices $A,B\in\{1,2\}$, we define the Ricci coefficients: 
 \begin{equation}
\begin{split}
&\chi_{AB}=g(D_A e_4,e_B),\, \,\, \quad \chib_{AB}=g(D_A e_3,e_B),\\
&\eta_A=-\frac 12 g(D_3 e_A,e_4),\quad \etab_A=-\frac 12 g(D_4 e_A,e_3),\\
&\h\omega=-\frac 14 g(D_4 e_3,e_4),\quad\,\,\, \omegab=-\frac 14 g(D_3 e_4,e_3),\\
&\zeta_A=\frac 1 2 g(D_A e_4,e_3),
\end{split}
\end{equation}
where $D_A=D_{e_{A}}$. {\color{black}Note that here we use $\h\omega$ to represent the Ricci coefficient and this is because we save $\omega$ to stand for the angular variable on $\mathbb{S}^2$.} We also denote $\nab$ to be the 
induced covariant derivative operator on $S_{u,\ub}$ and let $\nab_3$, $\nab_4$ be
the projections to $S_{u,\ub}$ of the covariant derivatives $D_3$, $D_4$. By above definition, it also holds
\begin{equation}
\begin{split}
&\h\omega=-\frac 12 \nab_4 (\log\Omega),\qquad \omegab=-\frac 12 \nab_3 (\log\Omega),\\
&\eta_A=\zeta_A +\nab_A (\log\Omega),\quad \etab_A=-\zeta_A+\nab_A (\log\Omega).
\end{split}
\end{equation}

In $3+1$ dimensions, for $i,k,l,m=1,2,3,4$ we further decompose the below Weyl tensor
\begin{equation*}
W_{iklm}:=R_{iklm}+\f{1}{2}(R_{im}g_{kl}-R_{il}g_{km}+R_{kl}g_{im}-R_{km}g_{il})+\f{1}{6}R(g_{il}g_{km}-g_{im}g_{kl})
\end{equation*}
and define the following null curvature components
 \begin{equation}
\begin{split}
\a_{AB}&=W(e_A, e_4, e_B, e_4),\quad \, \,\,   \ab_{AB}=W(e_A, e_3, e_B, e_3),\\
\b_A&= \frac 1 2 W(e_A,  e_4, e_3, e_4) ,\quad \bb_A =\frac 1 2 W(e_A,  e_3,  e_3, e_4),\\
\rho&=\frac 1 4 W(e_4,e_3, e_4,  e_3),\quad \sigma=\frac 1 4  \,^*W(e_4,e_3, e_4,  e_3),
\end{split}
\end{equation}
where $\, ^*W$ is the Hodge dual of $W$.

We further set $\phi^{(1)}\cdot\phi^{(2)}$ to stand for a contraction of the tensor product of $\phi^{(1)}$ and $\phi^{(2)}$ with respect to the metric $\sg$ on $\S$. We also denote
$$(\phi^{(1)}\hot\phi^{(2)})_{AB}:=\phi^{(1)}_A\phi^{(2)}_B+\phi^{(1)}_B\phi^{(2)}_A-\delta_{AB}(\phi^{(1)}\cdot\phi^{(2)}) \quad\mbox{for one forms $\phi^{(1)}_A$, $\phi^{(2)}_A$,}$$
$$\phi^{(1)}\wedge\phi^{(2)}:=\eps^{AB}(\sg^{-1})^{CD}\phi^{(1)}_{AC}\phi^{(2)}_{BD}\quad\mbox{for symmetric two tensors $\phi^{(1)}_{AB}$, $\phi^{(2)}_{AB}$},$$
where $\eps$ is the volume form associated to the metric $\sg$.
For totally symmetric tensors, we also employ the below $\div$ and $\curl$ operators
$$(\div\phi)_{A_1...A_r}:=\nabla^B\phi_{BA_1...A_r},$$
$$(\curl\phi)_{A_1...A_r}:=\eps^{BC}\nabla_B\phi_{CA_1...A_r}.$$
In addition, we define the trace to be
$$(\mbox{tr}\phi)_{A_1...A_{r-1}}:=(\sg^{-1})^{BC}\phi_{BCA_1...A_{r-1}}.$$

Let $\chih$ and $\chibh$ be the traceless parts of $\chi$ and $\chib$ respectively. Employing above notations, rewriting $R_{\mu\nu}-\f12 R g_{\mu\nu}=T_{\mu\nu}$ according to the null frame $\{e_3, e_4, e_A, e_B\}$ with $A,B=1,2$, we have that $\chi$ and $\chib$ obey the following null structure equations: 
\begin{equation}
\label{null.str1}
\begin{split}
\nab_4 \trch+\frac 12 (\trch)^2&=-|\chih|^2-2\h\omega \trch-D_4\phi D_4\phi,\\
\nab_4\chih_{AB}+\trch \chih_{AB}&=-2 \h\omega \chih_{AB}-\alpha_{AB},\\
\nab_3 \trchb+\frac 12 (\trchb)^2&=-2\omegab \trchb-|\chibh|^2-D_3\phi D_3\phi,\\
\nab_3\chibh_{AB} + \trchb\,  \chibh_{AB}&= -2\omegab \chibh_{AB} -\alphab_{AB},\\
\nab_4 \trchb+\frac1 2 \trch \trchb &=2\h\omega \trchb +2\rho- \chih\cdot\chibh +2\div \etab +2|\etab|^2-\f13D_3\phi D_4\phi+\f13D_A\phi D^A\phi,\\
\nab_4\chibh_{AB} +\frac 1 2 \trch \chibh_{AB}&=(\nab\widehat{\otimes} \etab)_{AB}+2\h\omega \chibh_{AB}-\frac 12 \trchb \chih_{AB} +(\etab\widehat{\otimes} \etab)_{AB}+D_A\phi D_B\phi-\f12 g_{AB}D_C\phi D^C\phi,\\
\nab_3 \trch+\frac1 2 \trchb \trch &=2\omegab \trch+2\rho- \chih\cdot\chibh+2\div \eta+2|\eta|^2-\f13D_3\phi D_4\phi+\f13 D_A\phi D^A\phi,\\
\nab_3\chih_{AB}+\frac 1 2 \trchb \chih_{AB}&=(\nab\widehat{\otimes} \eta)_{AB}+2\omegab \chih_{AB}-\frac 12 \trch \chibh_{AB} +(\eta\widehat{\otimes} \eta)_{AB}+D_A\phi D_B\phi-\f12g_{AB}D_C\phi D^C\phi.
\end{split}
\end{equation}
And the Ricci coefficients $\eta_A, \etb_A, \omb, \h\o$ obey the remaining null structure equations:
\begin{equation}\label{null.str2}
\begin{split}
\nabla_4\eta_A&=-\chi_{AB}\cdot(\eta-\etab)_B-\b_A-\f12D_A\phi D_4\phi,\\
\nabla_3\etab_A &=-\chib_{AB}\cdot (\etab-\eta)_B+\bb_A-\f12D_A\phi D_3\phi,\\
\nabla_4\omegab&=2\h\omega\omegab\Red{- \eta_A\cdot\etab^A+\f 12|\eta|^2}+\frac 12 \rho+\f16 D_3\phi D_4\phi+\f{1}{12}D_A\phi D^A\phi,\\
\nabla_3\h\omega&=2\h\omega\omegab-\Red{\eta_A\cdot\etab^A+\f 12|\etab|^2}+\frac 12 \rho+\f16 D_3\phi D_4\phi+\f{1}{12}D_A\phi D^A\phi.
\end{split}
\end{equation}
On $\S$, with the null frame, the Gauss-Codazzi equations boil down to the constraint equations: 
\begin{equation}
\label{null.str3}
\begin{split}
\div\chih_A=&\frac 12 \nabla_A \trch - \frac 12 (\eta_B-\etab_B)\cdot (\chih_{BA} -\frac 1 2 \trch g_{BA}) -\beta_A+\f12D_A\phi D_4\phi,\\
\div\chibh_A=&\frac 12 \nabla_A \trchb + \frac 12 (\eta_B-\etab_B)\cdot (\chibh_{BA}-\frac 1 2   \trchb g_{BA}) +\betab_A+\f12 D_A\phi D_3\phi,\\
\curl\eta =&-\curl\etab=\sigma +\frac 1 2\chibh \wedge\chih,\\
K=&-(\rho+\f16 R)-\f14\tr\chi \tr\chib+\f12\chih\cdot \chibh+\f12 g^{AB}D_A\phi D_B\phi\\
=&-\rho-\f14\tr\chi \tr\chib+\f12 \chih\cdot\chibh+\f16 D_3\phi D_4\phi+\f13 D_A\phi D^A\phi.
\end{split}
\end{equation}
Here $K$ is the Gauss curvature of the spheres $S_{u,\ub}$. \\  

For the system \eqref{1.1}, the scalar field $\phi$ obeys $\Box_g\phi=0$. And we rewrite the wave equation as a transport equation
\begin{equation*}
\begin{split}
0=\Box_g\phi=&g^{\lambda\mu}D_{\lambda}D_{\mu}\phi=g^{\lambda\mu}D_{\lambda}(D_{\mu}\phi)-g^{\lambda\mu}D_{D_{\lambda}e_\mu}\phi\\
=&g^{34}D_3(D_4\phi)-g^{34}D_{D_3 e_4}\phi+g^{43}D_4(D_3\phi)-g^{43}D_{D_4 e_3}\phi+g^{AB}D_A(D_B\phi)-g^{AB}D_{D_A e_B}\phi\\
=&-\f12 D_3 D_4\phi-\f12 D_4 D_3\phi+g^{AB}D_A(D_B\phi)-g^{AB}D_{\nab_A e_B+\f12\chi_{AB}e_3+\f12\chib_{AB}e_4}\phi\\
=&-D_3 D_4\phi+\Delta_g\phi-\f12\tr\chi e_3\phi-\f12\tr\chib e_4\phi.
\end{split}
\end{equation*}
It hence holds
\begin{equation}\label{e3e4phi}
D_3 D_4\phi+\f12\tr\chi D_3\phi+\f12\tr\chib D_4\phi=g^{AB}\nab_A\nab_B\phi.
\end{equation}
For convenience, we also employ the following two equivalent forms
(see Section \ref{energy scalar field}) 
\begin{equation}\label{e3e4phi2}
(\O e_3)(\O e_4\phi)+\f12{\O\tr\chib}\O e_4\phi=\O^2\Delta_g \phi-\f12{\O\tr\chi}\O e_3\phi{\color{black}+2\O^2\eta^A e_A\phi},
\end{equation}

\begin{equation}
(\O e_4)(\O e_3\phi)+\f12{\O\tr\chib}\O e_4\phi=\O^2\Delta_g \phi-\f12{\O\tr\chi}\O e_3\phi{\color{black}+2\O^2\etb^A e_A\phi}.
\end{equation}

Using the property of covariant derivatives, the scalar field $\phi$ also satisfies
\begin{equation}
\begin{split}
\nab_{\O e_3}\nab_A \phi=&e_A(\O e_3\phi)-\O {\chib_{A}}^B e_B\phi,\\
\nab_{\O e_4}\nab_A \phi=&e_A(\O e_4\phi)-\O {\chi_{A}}^B e_B\phi.
\end{split}
\end{equation} 

Recall that in $3+1$ dimensions for $i,k,l,m=1,2,3,4$, the Weyl curvatures obey  
\begin{equation}\label{nabla Weyl}
\begin{split}
\nab^iW_{iklm}=&-\f12\nab_m R_{lk}+\f12\nab_l R_{mk}-\f{1}{12}(\nab_l R g_{mk}-\nab_m R g_{lk}).
\end{split}
\end{equation}
Employing the null frame $\{e_3, e_4, e_A, e_B\}$ with $A,B=1,2$, we rewrite the above \eqref{nabla Weyl} as the below null Bianchi equations: 
\begin{equation}
\label{eq:null.Bianchi} 
\begin{split}
&\nab_3\alpha+\frac 12 \trchb \alpha=\nabla\hot \beta+ 4\omegab\alpha-3(\chih\rho+^*\chih\sigma)+
(\zeta+4\eta)\hot\beta+\f14(D_3 R_{44}-D_4 R_{43})g_{AB}+\f16\nab_4 R g_{AB},\\
&\nab_4\beta+2\trch\beta = \div\alpha - 2\h\omega\beta +  \eta\cdot \alpha-\f12(D_A R_{44}-D_4 R_{4A}),\\
&\nab_3\beta+\trchb\beta=\nabla\rho + 2\omegab \beta +^*\nabla\sigma +2\chih\cdot\betab+3(\eta\rho+^*\eta\sigma)+{\color{black}\f12\nab_A(g^{CD}R_{CD})-\nab^B R_{BA}+\f16\nab_A R+\f12\nab_3 R_{4A}},\\
&\nab_4\sigma+\frac 32\trch\sigma=-\div^*\beta+\frac 12\chibh\cdot ^*\alpha-\zeta\cdot^*\beta-2\etab\cdot
^*\beta-\f14(D_{\mu}R_{4\nu}-D_{\nu}R_{4\mu}){\epsilon^{\mu\nu}}_{34},\\
&\nab_3\sigma+\frac 32\trchb\sigma=-\div ^*\betab+\frac 12\chih\cdot ^*\alphab-\zeta\cdot ^*\betab-2\eta\cdot 
^*\betab+\f14(D_{\mu}R_{3\nu}-D_{\nu}R_{3\mu}){\epsilon^{\mu\nu}}_{34},\\
&\nab_4\rho+\frac 32\trch\rho=\div\beta-\frac 12\chibh\cdot\alpha+\zeta\cdot\beta+2\etab\cdot\beta-\f14(D_3 R_{44}-D_4 R_{43})+\f{1}{12}\nab_4 R,\\
&\nab_3\rho+\frac 32\trchb\rho=-\div\betab- \frac 12\chih\cdot\alphab+\zeta\cdot\betab-2\eta\cdot\betab+\f14(D_3 R_{34}-D_4 R_{33})+\f{1}{12}\nab_3 R,\\
&\nab_4\betab+\trch\betab=-\nabla\rho +^*\nabla\sigma+ 2\h\omega\betab +2\chibh\cdot\beta-3(\etab\rho-^*\etab\sigma){\color{black}-\f12\nab_A(g^{CD}R_{CD})+\nab^B R_{BA}-\f16\nab_A R-\f12\nab_4 R_{3A} },\\\
&\nab_3\betab+2\trchb\,\betab=-\div\alphab-2\omegab\betab+\etab \cdot\alphab+\f12(D_A R_{33}-D_3 R_{3A}),\\
&\nab_4\alphab+\frac 12 \trch\alphab=-\nabla\hot \betab+ 4\h\omega\alphab-3(\chibh\rho-^*\chibh\sigma)+
(\zeta-4\etab)\hot \betab+\f14(D_4 R_{33}-D_3 R_{34})g_{AB}+\f16\nab_3 R g_{AB}.
\end{split}
\end{equation}
Here, $^*$ denotes the Hodge dual on $S_{u,\ub}$.

In our proof, we utilize the null Bianchi equations to deduce energy estimates. We will employ the renormalized quantities\footnote{\label{renormalized K}For more general incoming initial data, the renormalized quantity for $K$ is $K-K(u,0)-\f14\nab^A\phi\nab_A\phi$.}
\begin{equation*}
\begin{split}
&\b_A-\f12\nab_4\phi\nab_A\phi, \quad K-\f{1}{|u|^2}-\f14\nab^A\phi\nab_A\phi,\\
&\sigmac=\sigma+\frac 12 \chih\wedge\chibh, \quad \beb_A+\f12\nab_3\phi\nab_A\phi.
\end{split}
\end{equation*}
These renormalized quantities satisfy the below renormalized null Bianchi equations:

\begin{equation}
\label{eq:null.Bianchi2}
\begin{split}
\nab_3(\beta_{\cdot}-\f12\nab_4\phi\nab_{\cdot}\phi)_A+\trchb\beta_A=&-\nabla K  +^*\nabla\sigmac + 2\omegab \beta+2\chih\cdot\betab\Red{-}3(\eta K\Red{-}^*\eta\sigmac)\\
&+\frac 1 2(\nabla(\chih\cdot\chibh)-^*\nabla(\chih\wedge\chibh))-\frac 34 \eta\trch\trchb\\
&+{\color{black}\f32}(\eta\chih\cdot\chibh-^*\eta\chih\wedge\chibh)-\frac 14 (\nab\trch \trchb+\trch\nab\trchb)\\
&+\nab_A(g^{CD} R_{CD})-\nab^B R_{BA}\\
&{\color{black}+\f12\eta_A\nab_3\phi\nab_4\phi+\eta_A\nab_C\phi\nab^C\phi.}
\end{split}
\end{equation}
\begin{equation}
\begin{split}
\nab_4\sigmac+\frac 32\trch\sigmac=&-\div^*(\beta-\f12\nab_4\phi\nab\phi)-\zeta\cdot^*\beta-2\etab\cdot
^*\beta+\frac 12 \chih\cdot^*(\nab\widehat{\otimes}\etab)+\frac 12 \chih\cdot^*(\etab\widehat{\otimes}\etab)\\
&-\nab_1 R_{42}+\nab_2 R_{41}+\f12(\nab_A\phi \nab_B\phi-\f12 g_{AB}\nab_C\phi \nab^C\phi)\wedge \chih^{AB},\\
\end{split}
\end{equation}
\begin{equation}
\begin{split}
&\nab_4 (K-\f14\nab^A\phi \nab_A\phi)+\trch (K-\f14\nab^A\phi \nab_A\phi)\\
=&-\div\beta-\zeta\cdot\beta-2\etab\cdot\beta+\frac 12 \chih\cdot\nab\widehat{\otimes}\etab+\frac 12 \chih\cdot(\etab\widehat{\otimes}\etab)\\
&-\frac 12 \trch\div\etab-\frac 12\trch |\etb|^2+\f12\etb^A\nab_A\phi\nab_4\phi+\f12\Delta_g\phi\nab_4\phi\\
&+\f12(\nab_A\phi\nab_B\phi-\f12g_{AB}\nab_C\phi\nab^C\phi)\cdot\chih^{AB},\\
\end{split}
\end{equation}
\begin{equation}
\begin{split}
\nab_3\sigmac+\frac 32\trchb\sigmac=&-\div ^*(\betab+\f12\nab_3\phi\nab\phi)-\zeta\cdot ^*\betab-2\eta\cdot^*\betab+\frac 12 \chibh\cdot^*(\nab\widehat{\otimes}\eta)+\frac 12 \chibh\cdot^*(\eta\widehat{\otimes}\eta)\\
&+\nab_1 R_{32}-\nab_2 R_{31}+\f12\chibh^{AB}\wedge(\nab_A\phi \nab_B\phi-\f12 g_{AB}\nab_C\phi \nab^C\phi),\\
\end{split}
\end{equation}

\begin{equation}
\begin{split}
&\nab_3 (K-\f{1}{|u|^2}-\f14\nab^A\phi\nab_A\phi)+\f32\trchb (K-\f{1}{|u|^2}-\f14\nab^A\phi\nab_A\phi)\\
=& \div(\betab+\f12\nab_3\phi\nab\phi)-\f12\nab^A\nab_3\phi\nab_A\phi-\zeta\cdot\betab+2\eta\cdot\betab\\
&+\frac 12 \chibh\cdot\nab\widehat{\otimes}\eta+\frac 12 \chibh\cdot(\eta\widehat{\otimes}\eta)-\frac 12 \trchb |\eta|^2+\f12\eta^A\nab_A\phi\nab_3\phi\\
&+\f12(\nab_A\phi\nab_B\phi-\f12 g_{AB}\nab_C\phi\nab^C\phi)\cdot\chibh^{AB}-\f18\tr\chib\nab^A\phi\nab_A\phi\\
&+\f12\tr\chib(-\div \eta+\Kt)-\f{\Omega^{-1}}{|u|^2}(\O\tr\chib+\f{2}{|u|}), 
\end{split}
\end{equation}

\begin{equation} 
\begin{split}
\nab_4(\betab_{\cdot}+\f12\nab_3\phi\nab_{\cdot}\phi)_A+\trch\betab_A=&\nabla (K-\f{1}{|u|^2}-\f14\nab^A\phi \nab_A\phi)+^*\nabla\sigmac+ 2\omega\betab +2\chibh\cdot\beta+3(\Red{-}\etab K+^*\etab\sigmac)\\
&+\nabla(-\f14\nab^A\phi \nab_A\phi+\f14\tr\chi \tr\chib-\f12\chih\cdot\chibh)-\f12{^*\nabla}(\chih\wedge\chibh)\\
&+\nab_A(g^{CD}R_{CD})+\nab^B R_{BA}\\
&{\color{black}-\f12\etb_A\nab_3\phi\nab_4\phi-\etb_A\nab_C\phi\nab^C\phi.}
\end{split}
\end{equation}

\begin{remark} 
In below, if no further clarification, we employ the capital Latin letters $A\in \{1,2\}$ as frame indices on the spheres $S_{u,\ub}$, and use Greek letters $\mu\in\{1,2,3,4\}$ as frame indices in the whole 3+1 dimensional spacetime.
\end{remark}

\subsection{Naked Singularity Initial Data} 
In \cite{Chr.2}, Christodoulou constructed a spherically-symmetric naked-singularity spacetime. It is portrayed in below picture.

\begin{minipage}[!t]{0.48\textwidth}
\begin{tikzpicture}[scale=1]
   \draw [white](1.3, -0.25)-- node[midway, sloped, above,black]{{\footnotesize Naked Singularity}}(2.3, -0.25);
  \draw (0,0) -- node[pos=0.5, below, sloped]{$r=0$}
         (0,-4) node[below]{$$};
   \draw (0,-4) -- node[pos=0.5, below, sloped]{$u=-1$}
         (2,-2) node[below]{$$};
      \draw node[above]{$O\,\,\,\,$} (0,0) -- node[pos=0.5, below, sloped]{$\ub=0$}
         (2,-2) node[below]{$$};      
   \draw (0, 0) -- node[midway, above, sloped]{$u=0$}
         (2,2) node[right=4pt]{$$};
   \filldraw[black] (0,0) circle (2pt);      
  \end{tikzpicture}
  \end{minipage}
\hspace{0.02\textwidth}
\begin{minipage}[!t]{0.48\textwidth}
\begin{tikzpicture}[scale=1]
   \draw [white](2, 0)-- node[midway, sloped, above,black]{{\footnotesize Anisotropic Apparent Horizon}}(3, 0);
   \draw (0,0) -- node[pos=0.5, below, sloped]{$r=0$}
         (0,-4) node[below]{$$};
   \draw (0,-4) -- node[pos=0.7, below, sloped]{$u=-1$}
         (2,-2) node[below]{$$};
      \draw node[above]{$O\,\,\,\,$} (0,0) -- node[pos=0.5, below, sloped]{$\ub=0$}
         (2,-2) node[below]{$$};      
   \draw[ultra thick] (0, 0) -- node[midway, above, sloped]{$$}
         (0.2, -0.05)--(0.3, -0.03)--(0.4, -0.02) node[right=4pt]{$$};
     \draw (0.4, -0.02) -- node[midway, above, sloped]{$\ub=\delta$}
         (2.2, -1.8)--(2, -2) node[right=4pt]{$$};       
     \draw[ultra thick]    (2.2, -1.8)--(2, -2) node[right=4pt]{{\footnotesize Anisotropic Perturbation}};
   \filldraw[black] (0,0) circle (2pt);      
  \end{tikzpicture}
\end{minipage}

Christodoulou's solution solves the Einstein-scalar field system \ref{1.1}. In \cite{Chr.2} a spacetime singularity forms in gravitational collapse and the whole spacetime is free of trapped surfaces and the apparent horizon. 

{\color{black}
In this article, we consider a characteristic initial value problem with data prescribed along $\ub=0$ with $-1 \leq u \leq 0$ and $u=-1$ with $0\leq \ub \leq \delta$. For $\ub\leq 0$, we adopt the spacetime $(\M M_{int}, g_{int})$ constructed in \cite{Chr.2} with $\M M_{int}$ standing for the interior spacetime region $\ub\leq 0$. 
$$\mbox{Setting } v=4\ub, \mbox{ we let } g_{int}=-\f12\O^2(u,v)(du\otimes dv+dv\otimes du)+r^2(u,v)[(d\theta^1)^2+\sin^2\theta^1(d\theta^2)^2].$$ 
Here $r(u,v)$ is the radius function. 

We will add general (anisotropic) perturbation along $u=-1$ starting from $\ub=0$ and for $\ub\geq 0$ the spacetime metric $g$ is of the form \eqref{metric ansatz}. In particular, along $\ub=0$, we prescribe Christodoulou's spherically-symmetric naked-singularity initial data: via Christodoulou's construction in \cite{Chr.2}, for Hawking mass
$m=\f{r}{2}(1+4\O^{-2}\partial_u r\partial_{v}r)$, it holds that 

$$\f{2m}{r}(u,0)=\ms \mbox{ with } \ms \mbox{ being a positive constant less than }1 .$$ 
We also set
$$r(-1,0)=1, \quad \partial_{v}r(-1,0)=\f{1-\ms}{4} \quad \mbox{ and } \quad \partial_u r(u,0)=-1.$$
Since along $\ub=0$ it holds $\ms=\f{2m}{r}=1+4\O^{-2}\partial_u r \partial_{v} r$, we get
$$\O^{-2}\partial_{v} r(u,0)=\f{1-\ms}{4} \quad \mbox{along} \quad \ub=0.$$
Employing the equation\footnote{It is equivalent to $r\partial_u \partial_{v} r=-\partial_u r \partial_{v}r-\f14\O^2$.}
$$\partial_u[\ln(\partial_{v}r)]=\f{\partial_u r}{r}\cdot\f{\f{2m}{r}}{1-\f{2m}{r}},$$ 
with $r(-1,0)=1$ and $\partial_{v}r(-1,0)=(1-\ms)/4$, we obtain
$$\partial_{v}r(u,0)=\f{1-\ms}{4}r^{\f{\ms}{1-\ms}}(u,0) \quad \mbox{ and } \quad \O^2(u,0)=r^{\f{\ms}{1-\ms}}(u,0).$$
Using $v=4\ub$,  these further imply
$$\partial_{\ub} r(u,0)=(1-\ms)\cdot r^{\f{\ms}{1-\ms}}(u,0) \quad \mbox{ and } \quad \O^{-2}\partial_{\ub} r(u,0)=1-\ms.$$
Recall  $\tr\chi(u,0)=g^{AB}g(D_A e_4, e_B)(u, 0)$, $\tr\chib(u,0)=g^{AB}g(D_A e_3, e_B)(u, 0)$ and along $\ub=0$ it holds $e_3=\O^{-1}\partial_u, e_4=\O^{-1}\partial_{\ub}$. We hence derive 
$$\O^{-1}\tr\chi(u,0)=\O^{-1}g^{AB}\cdot\Gamma^{D}_{A4}\cdot g_{BD}=\f{2}{r}\O^{-2}\partial_{\ub}r(u,0)=\f{2(1-\ms)}{r(u,0)},$$
$$\O\tr\chib(u,0)=\O g^{AB}\cdot\Gamma^{D}_{A3}\cdot g_{BD}=\f{2}{r}\partial_{u}r(u,0)=-\f{2}{r(u,0)}.$$

}

\noindent Utilizing $\omb=-\f12\nab_3\log \O$, we also have
\begin{equation}\label{omb initial data}
\O\omb(u,0)=-\f12\partial_u \log\O(u,0)=-\f12\partial_u\log [r^{\f{\ms}{2(1-\ms)}}(u,0)]=\f{\ms}{4(1-\ms)}\cdot\f{1}{r(u,0)}.
\end{equation}

{\color{black}
\noindent We proceed to calculate the initial value of $K$ along $\ub=0$. Applying the below null structure equation and the Gauss equation
\begin{equation*}
\begin{split}
\nab_3 \trch+\frac1 2 \trchb \trch=&2\omegab \trch+2\rho- \chih\cdot\chibh+2\div \eta+2|\eta|^2-\f13D_3\phi D_4\phi+\f13 D_A\phi D^A\phi,\\
K=&-\rho-\f14\tr\chi \tr\chib+\f12 \chih\cdot\chibh+\f16 D_3\phi D_4\phi+\f13 D_A\phi D^A\phi,
\end{split}
\end{equation*}
further employing $\omb=-\f12\O^{-1}\nab_3\O$ and restricting the equations to $\ub=0$, we obtain 
$$2\O^2 K(u,0)+\partial_u(\O^2\cdot\O^{-1}\tr\chi)(u,0)+\O^2\cdot\O^{-1}\tr\chi\cdot\O\tr\chib(u,0)=0.$$
Together with the fact
$$\O^{-1}\tr\chi(u,0)=\f{2(1-\ms)}{r(u,0)}, \quad \O\tr\chib(u,0)=-\f{2}{r(u,0)}, \quad \O^2(u,0)=r^{\f{\ms}{1-\ms}}(u,0), \quad \partial_u r(u,0)=-1,$$
we deduce
$$2K(u,0)+r^{-\f{\ms}{1-\ms}}\cdot\partial_u(r^{\f{\ms}{1-\ms}-1})\cdot 2(1-\ms)(u,0)-\f{4(1-\ms)}{r(u,0)^2}=0,$$
which infers
$$K(u,0)=-(2-4\ms-4+4\ms)\cdot\f{1}{2r(u,0)^2}=\f{1}{r(u,0)^2}.$$

}

Quoting the results in \cite{Chr.2}, for the scalar field along $\ub=0$ it holds
\begin{equation}\label{scalar field initial data}
\begin{split}
\O\partial_3 \phi(u,0)=&\partial_u\phi(u,0)=\sqrt{2}\sqrt{\f{\ms}{1-\ms}}\f{1}{r(u,0)},\\ 
\O\partial_{4}\phi(u,0)=&\partial_{\ub}\phi(u,0)=\sqrt{2}\sqrt{\f{1-\ms}{\ms}}\cdot\f{\partial_{\ub}r(u,0)}{r(u,0)}.
\end{split}
\end{equation}

\begin{remark}
Along $\ub=0$, employing $\partial_u r(u,0)=-1$ and
$\partial_u(\O^{-2}\partial_u r)=-\f12 r\O^{-2}(\partial_u \phi)^2,$
via (\ref{scalar field initial data}) we also obtain 
\begin{equation*}
\O\omb(u,0)=-\f12\partial_u\log\O(u,0)=\f{|r\partial_u \phi|^2}{8r}(u,0)=\f{\ms}{4(1-\ms)}\cdot\f{1}{r(u,0)},
\end{equation*}
which is consistent with \eqref{omb initial data}. 
\end{remark}

\begin{remark}
The $\sqrt{2}$ on the right of (\ref{scalar field initial data}) is owing to in \cite{Chr.2} Christodoulou's use of
\begin{equation*}
\begin{split}
&\mbox{Ric}_{\mu\nu}-\f12 R g_{\mu\nu}=2T_{\mu\nu},\\
&T_{\mu\nu}=\partial_{\mu}\phi_c\partial_{\nu}\phi_c-\f12g_{\mu\nu}\partial^{\sigma}\phi_c \partial_{\sigma}\phi_c.
\end{split}
\end{equation*}
Here we denote $\phi_c$ to be the scalar field function employed in \cite{Chr.2}. Denoting $k:=\sqrt{\f{\ms}{1-\ms}}$, on page 644 of \cite{Chr.2}, we have that along $s=s_*$ (i.e. $\ub=0$) it holds
\begin{equation}\label{Christodoulou scalar field initial data}
r\f{\partial_u\phi_c}{\partial_u r}(u,0)=-k \quad \mbox{ and } \quad r\f{\partial_{\ub} \phi_c}{\partial_{\ub} r}(u,0)=\f{1}{k}.
\end{equation}
Setting the scalar field function $\phi(u, \ub, \theta^1, \theta^2)$ used in this paper to be 
$$\phi(u, \ub, \theta_1, \theta_2)=\sqrt{2}\phi_c(u, \ub, \theta_1, \theta_2),$$
via $\partial_u r(u,0)=-1$ we then translate (\ref{Christodoulou scalar field initial data}) to (\ref{scalar field initial data}).  
\end{remark}

{\color{black}
\begin{remark}
At the end of this subsection, we point out that, along $\ub=0$ by employing equations
$r\partial_u\partial_v\phi=-\partial_u r\partial_v\phi-\partial_v r\partial_u\phi$ and  $r\partial_u\partial_v r=-\partial_u r\partial_v r-\f14\O^2$, it holds
\begin{equation}\label{equation partial v phi over partial v r}
\f{\partial}{\partial u}\f{\partial_v \phi}{\partial_v r}-\f{\O^2}{4r\partial_v r}\cdot\f{\partial_v\phi}{\partial_v r}+\f{\partial_u \phi}{r}=0.
\end{equation}
Treat $\partial_v\phi(u,0)$ as it is determined by other geometric quantities. Applying the fact 
$$-\f{\O^2}{4r\partial_v r}(u,0)=\f{-1}{1-\ms}\cdot\f{1}{r(u,0)}\quad \mbox{ and } \quad \f{\partial_u\phi}{r}(u,0)=\sqrt{2}\sqrt{\f{\ms}{1-\ms}}\cdot\f{1}{r(u,0)^2},$$
we can solve an ODE and we see that the below equality as stated in \eqref{scalar field initial data}
$$\f{\partial_{v}\phi(u,0)}{\partial_v r(u,0)}=\f{\partial_{\ub}\phi(u,0)}{\partial_{\ub}r(u,0)}=\sqrt{2}\sqrt{\f{1-\ms}{\ms}}\cdot\f{1}{r(u,0)}\approx\f{1}{r(u,0)}$$ 
is debt to a special initial choice of $\partial_v\phi/\partial_v r$ at $S_{-1,0}$.  If we do not use this choice (by adding a general perturbation), back to \eqref{equation partial v phi over partial v r}, we would have that $\partial_v\phi(u,0)/\partial_v r(u,0)$ behaviors like $r(u,0)^{\f{-1}{1-\ms}}=r(u,0)^{-1}\cdot r(u,0)^{\f{-\ms}{1-\ms}}$. With $\partial_v r(u,0)\approx r^{\f{\ms}{1-\ms}}(u,0)$, it yields $\partial_v \phi(u,0)\approx r(u,0)^{-1}$. In below sections, we demonstrate an approach, which can deal with these general scenarios. 
\end{remark} 
}

\subsection{Norms}\label{subsection norms} For geometric quantities defined on $\S$,  we now set norms that we will work with. We use $\p$ and $\q$ to denote the below geometric quantities
\begin{equation}
\p\in\{\O \tr\chi, \O\chih, \O\h\o, \O\partial_{4}\phi\},
\end{equation}
\begin{equation}
\q\in\{\eta, \etb,\, \O\omb-\O \omb(u,0), \, \O\chibh, \, \O\tr\chib-\O\tr\chib(u,0),  \nab_{e_1}\phi, \, \nab_{e_2}\phi, \, \O\partial_{3}\phi-\O\partial_3\phi(u,0)\}.
\end{equation}

\noindent For conventional simplicity, we also write
\begin{equation}
\begin{split}
\O\omb(u,0)=&(\O\omb)_0, \quad \O\partial_3\phi(u,0)=(\O\partial_3\phi)_0,\\
\O\tr\chib(u,0)=(\O\tr\chib)_0,& \quad \O^{-1}\tr\chi(u,0)=(\O^{-1}\tr\chi)_0, \quad \O(u,0)=\O_0.
\end{split}
\end{equation}

\noindent With constants $a, b$ chosen as in Section \ref{secbasic}, we then define
\begin{equation}\label{O i infty}
\begin{split}
\mathcal{O}_{i,\infty}(u,\underline{u}):= &\f{1}{\at}\|u^{i+1}\nab^i\p\|_{L^{\infty}(\S)}+\f{1}{\ub\at}\|u^{i+2}\nab^i\q\|_{L^{\infty}(\S)} \mbox{ with } 0\leq i \leq 2,
\end{split}
\end{equation}

\begin{equation}\label{R i infty}
\begin{split}
\mathcal{R}_{i,\infty}(u,\underline{u}):=& \f{1}{\at}\|u^{i+2}\nab^i(\O\b)\|_{L^{\infty}(\S)}+\f{1}{\ub a}\|u^{i+3}\nab^i \rho\|_{L^{\infty}(\S)}\\
&+\f{1}{\ub\at}\|u^{i+3}\nab^i \sigmac\|_{L^{\infty}(\S)}+\f{1}{\ub\at}\|u^{i+3}\nab^i (K-\f{1}{|u|^2})\|_{L^{\infty}(\S)}\\
&+\f{1}{\ub\at}\|u^{i+3}\nab^i (\O\beb)\|_{L^{\infty}(\S)} \mbox{ with } 0\leq i\leq 1.
\end{split}
\end{equation}

\begin{equation}\label{O i 2}
\begin{split}
\mathcal{O}_{i,2}(u,\underline{u}):= &\f{1}{\at}\|u^{i}\nab^i\p\|_{L^{2}(\S)}+\f{1}{\ub\at}\|u^{i+1}\nab^i\q\|_{L^{2}(\S)} \mbox{ with } 0\leq i \leq 4.
\end{split}
\end{equation}
\noindent We also set
\begin{equation}\label{tilde O 5 2}
\begin{split}
\tilde{\mathcal{O}}_{5,2}(u,\underline{u}):=&\f{1}{\ub^{\f12}\at}\|u^5\nab^5 (\O\tr\chi, \O\h\o, \O\chih)\|_{L^{2}_{\ub}L^2(\S)}\\
&+\bigg(\f{\ub^{\f12}a^{\f14}}{|u|^{\f12}\O}\bigg)^{-1} \f{1}{\ub^{\f12}a^{\f14}}\|u^5\nab^5 \etb\|_{L^2_{\ub}L^2(\S)}+\f{1}{\ub^{\f32}a^{\f34}}\|u^6\nab^5 \eta\|_{L^2_{\ub}L^2(\S)}\\
&+\f{\O}{\ub^{\f12}\at}\|u^5\nab^5 \eta\|_{L^2_{u}L^2(\S)}+\f{1}{\ub^{\f12}\at}\|\O u^5\nab^5 \eta\|_{L^2_{u}L^2(\S)}\\
&+\bigg(1+\f{\ub^{\f12}a^{\f14}}{|u|^{\f12}\O}\bigg)^{-1} \f{1}{\ub^{\f12}a^{\f14}}\|u^5\nab^5 (\O\tr\chib, \O\chibh, \tr\chib, \chibh)\|_{L^2_{u}L^2(\S)},
\end{split}
\end{equation}

\begin{equation}\label{tilde' O 5 2}
\tilde{\mathcal{O}}'_{5,2}(u,\underline{u}):=\f{1}{\ub^{\f32}a}\|u^6\nab^5(\O\tr\chi)\|_{L^2_{\ub}L^2(\S)}. 
\end{equation}
For $1\leq i \leq 4$, we further let
\begin{equation}\label{R i 1}
\begin{split}
\mathcal{R}_{i,1}(u,\underline{u}):=&\f{1}{\ub^{\f12}\at}\|u^{i+1}\nab^i(\O\b_A-\f12\O\nab_4\phi\nab_A\phi)\|_{L^2_{\ub}L^2(\S)}\\
&+\f{1}{\ub^{\f12}\at}\|\O u^{i+1}\nab^i(K-\f{1}{|u|^2}-\f14\nab^A\phi\nab_A\phi, \sigmac)\|_{L^2_{u}L^2(\S)},
\end{split}
\end{equation}
\begin{equation}\label{R i 2}
\begin{split}
\mathcal{R}_{i,2}(u,\underline{u}):=&\bigg(1+\f{\ub^{\f12}a^{\f14}}{|u|^{\f12}\O}\bigg)^{-1}\f{1}{\ub^{\f32}a^{\f34}}\|u^{i+2}\nab^i(K-\f{1}{|u|^2}-\f14\nab^A\phi\nab_A\phi, \sigmac)\|_{L^2_{\ub}L^2(\S)} \\
&+\bigg(1+\f{\ub^{\f12}a^{\f14}}{|u|^{\f12}\O}\bigg)^{-1}\f{1}{\ub^{\f32}a^{\f34}}\|u^{i+2}\nab^i(\beb_A+\f12\nab_3\phi\nab_A\phi)\|_{L^2_{u}L^2(\S)}.
\end{split}
\end{equation}
For $1\leq i \leq 5$, we then set 
\begin{equation}\label{S i 1}
\begin{split}
\mathcal{S}_{i,1}(u,\underline{u}):=&\f{1}{\ub^{\f12}\at}\|u^{i}\nab^i(\O e_4\phi)\|_{L^2_{\ub}L^2(\S)}+\f{1}{\ub^{\f12}\at}\|\O u^{i}\nab^i(e_A\phi)\|_{L^2_{u}L^2(\S)},
\end{split}
\end{equation}
\begin{equation}\label{S i 2}
\begin{split}
\mathcal{S}_{i,2}(u,\underline{u}):=&\bigg(1+\f{\ub^{\f12}a^{\f14}}{|u|^{\f12}\O}\bigg)^{-1}\f{1}{\ub^{\f12}a^{\f14}}\|u^i\nab^i(e_A\phi)\|_{L^2_{\ub}L^2(\S)}\\
&+\bigg(1+\f{\ub^{\f12}a^{\f14}}{|u|^{\f12}\O}\bigg)^{-1}\f{1}{\ub^{\f12}a^{\f14}}\|\O^{-1}u^i\nab^i(\O e_3\phi)\|_{L^2_u L^2(\S)}.
\end{split}
\end{equation}
And we further define $\mathcal{O}(u,\ub),\, \mathcal{R}(u,\ub),\, \mathcal{S}(u,\ub)$ as
$$\mathcal{O}(u,\ub):=\sum_{i\leq 2}\mathcal{O}_{i,\infty}(u,\ub)+\sum_{i\leq 4}\M O_{i,2}(u,\ub),\quad \mathcal{R}(u,\ub):=\sum_{1\leq i\leq 4}\mathcal{R}_{i,1}(u,\ub)+\mathcal{R}_{i,2}(u,\ub),$$
$$ \mathcal{S}(u,\ub):=\sum_{1\leq i\leq 4}\mathcal{S}_{i,1}(u,\ub)+\mathcal{S}_{i,2}(u,\ub).$$

\noindent Next we denote $\mathcal{O}_{i,\i},\, \mathcal{O}_{i,2},\, \tilde{\mathcal{O}}_{5,2}, \, \tilde{\mathcal{O}}'_{5,2}, \,\mathcal{R}_{i,\infty},\, \mathcal{R}_{i, 1},\, \mathcal{R}_{i,2}, \, \mathcal{S}_{i, 1},\, \mathcal{S}_{i,2}$ to be the supremum over $u,\ub$ in the considered spacetime region of $\mathcal{O}_{i,\i}(u,\ub),\, \mathcal{O}_{i,2}(u,\ub),\, \tilde{\mathcal{O}}_{5,2}(u,\ub),$ $ \tilde{\mathcal{O}}'_{5,2}(u,\ub), \, \mathcal{R}_{i,\infty}(u,\ub),\mathcal{R}_{i, 1}(u,\ub),\, \mathcal{R}_{i,2}(u,\ub), \, \mathcal{S}_{i, 1}(u,\ub),\, \mathcal{S}_{i,2}(u,\ub)$ respectively. Finally, we set  $\mathcal{O},\, \mathcal{R},\, \mathcal{S}$ to be:
$$\mathcal{O}:=\sum_{i\leq 2}\mathcal{O}_{i,\infty}+\sum_{i\leq 4}\M O_{i,2},\quad \mathcal{R}:=\sum_{1\leq i\leq 4}\mathcal{R}_{i,1}+\mathcal{R}_{i,2}, \quad \mathcal{S}:=\sum_{1\leq i\leq 4}\mathcal{S}_{i,1}+\mathcal{S}_{i,2}.$$
We also denote $\mathcal{O}^{(0)},\,\t{\mathcal{O}}^{(0)}, \, \t{\mathcal{O}}'^{(0)}, \, \mathcal{R}^{(0)},\, \mathcal{S}^{(0)}$ to be the corresponding norms along the initial hypersurfaces $u=-1$ and $\ub=0$. Lastly, we define an initial data quantity
$$\mathcal{I}^{(0)}:=\sup_{0\leq \ub \leq \delta}\mathcal{I}^{(0)}(\ub)$$
with
\begin{equation*}
\begin{split}
\mathcal{I}^{(0)}(\ub):=1+\sum_{0\leq m \leq 6,}\sum_{0\leq k \leq 5}\f{1}{\at}\|(\nab)^{m}(\nab_4)^k(\chih, \nab_4\phi)\|_{L^2(S_{-1,\ub})}.
\end{split}
\end{equation*}

\begin{remark} 
{\color{black}  
Following the arguments carried out in detail in Chapter of \cite{Chr:book}, we have
$$\sum_{i\leq 2}\mathcal{O}_{i,\infty}(-1, \ub)+\sum_{i\leq 4}\M O_{i,2}(-1, \ub)+\sum_{i\leq 2}\mathcal{R}_{i,\infty}(-1, \ub)\lesssim \mathcal{I}^{(0)}.$$
In Section \ref{elliptic estimates}, we will employ the initial values of $\O\kappa, \, \underline{\mu}, \, \O^{-1}\tr\chib$ along $H_{-1}$, with
$$\kappa:=\nab\h\omega+{^*\nab\omega^{\dagger}}-\f12(\beta-\f12\nab_4\phi\nab\phi), \quad \underline{\mu}:=-\div\etb+K-\f{1}{|u|^2}-\f14\nab^A\phi\nab_A\phi.$$
Note that we set $\omega^{\dagger}$ to be $0$ and $\Omega$ to be $1$ along $H_{-1}$. And it holds
$$\|u^4\nab^4(\O\kappa)\|_{L^2(S_{-1,\ub})}\lesssim \at, \quad \|u^5\nab^4\underline{\mu}\|_{L^2(S_{-1,\ub})}\lesssim \ub\at, \quad  \|u^5\nab^5(\O^{-1}\tr\chib)\|_{L^2(S_{-1,\ub})}\lesssim \ub\at.$$

}
\end{remark}

\subsection{Exterior and Interior Regions}
To establish the hyperbolic existence result, we first choose a $u_1<0$ satisfying $0<|u_1|\ll 1$ and then divide the spacetime $\M M$ into two regions: 
\begin{itemize}
\item[$\cdot$] the exterior: \quad {\color{black} $-1 \leq u \leq u_1$ \quad and \quad $0\leq\ub\at\leq |u_1|\O^{2-\t\d}(u_1,0)$,}
\item[$\cdot$] the interior: \quad \, {\color{black}$u_1\leq u\leq 0$ \quad \, and \quad $0\leq\ub\at\leq |u|\O^{2-\t\delta}(u,0)$.}
\end{itemize} 
{\color{black}
\begin{remark}
For notational simplicity, in the rest of this paper, we set constant $\t\d$ obeying $0<\t\d\ll1$. In reality, if needed we can set $\t\d$ to be any number satisfying $0<\t\d<1/2$ and all the below proof stays the same. 
\end{remark} 
}
In this article, we will use the below inequality frequently 
\begin{equation}\label{smallness}
\f{\ub\at}{|u|}\leq \f{\ub\at}{|u|\O}\leq \f{\ub\at}{|u|\O^{\f32}}\ll 1. 
\end{equation}
This inequality is correct because (with $a=1$)
\begin{itemize}
\item[$\cdot$] in the exterior we have {\color{black} $\f{\ub\at}{|u|\O^{\f32}}\leq \f{|u_1|\O^{2-\t\d}(u_1,0)}{|u|\O^{\f32}}=\f{|u_1|\O^{\f32}(u_1,0)}{|u|\O^{\f32}}\cdot\O^{\f12-\t\d}(u_1,0)\ll 1$,}
\item[$\cdot$] in the interior we have $\f{\ub\at}{|u|\O^{\f32}}\leq \O^{\f12-\t\delta}(u, 0)\lesssim \O^{\f12-\t\delta}(u_1, 0)\ll 1$.
\end{itemize}
With above discussion, we can further choose a constant $B$, such that in both regions 
\begin{equation}\label{largeness B}
\f{\ub\at}{|u|}\leq\f{\ub\at}{|u|\O}\leq \f{\ub\at}{|u|\O^{\f32}}\leq\f{1}{B}\ll 1.
\end{equation} 

\subsection{An Approach of Bootstrap}\label{order of bootstrap}
In this article, we carry out a bootstrap argument to prove uniform upper bounds for $\M O, \M R$ and $\M S$. For initial data along $\ub=0$ and $u=-1$, we have
$$\M O^{(0)}+\M R^{(0)}+\M S^{(0)}\leq C_0 \M I^{(0)},$$
where $C_0$ depends only on initial data. For notational simplicity, we rewrite the above line as
$$\M O^{(0)}+\M R^{(0)}+\M S^{(0)}\lesssim \M I^{(0)}.$$
For the rest of this article, we keep the same notation: if $P\leq C_0 Q$ with constant $C_0$ depending only on the initial data, we could write it as 
$$P\lesssim Q.$$
Under this notation, we have $ \M I^{(0)}\lesssim 1$.\\

Our goal in below sections is to show that in the considered spacetime region, it holds
$$\M O+\M R+\M S\leq \M I^{(0)}+(\M I^{(0)})^2+1.$$
To achieve this, we make bootstrap assumptions
\begin{equation}\label{BA.1}
\M O(u,\ub)\leq O,
\end{equation}
\begin{equation}\label{BA.2}
\M R(u,\ub)\leq R,
\end{equation}
\begin{equation}\label{BA.3}
\M S(u,\ub)\leq S,
\end{equation}
\begin{equation}\label{BA.4}
\t{\M O}_{5,2}(u,\ub)\leq \t O, \quad \t{\M O}'_{5,2}(u,\ub)\leq \t O'.
\end{equation}

Recall \eqref{largeness B}. For constants in bootstrap assumptions \eqref{BA.1}-\eqref{BA.4}, for the rest of this paper, we assume that they satisfy
\begin{equation}\label{BA.5}
1\leq O, R, S, \t O \ll (\t O')^{\f18}\ll B^{\f{1}{32}}\leq (\f{\ub \at}{|u|\O^{\f32}})^{-\f{1}{32}}\leq (\f{\ub\at}{|u|\O})^{-\f{1}{32}}. 
\end{equation}

{\color{black}
In below, we demonstrate the steps for improving the bootstrap assumptions:\footnote{A more detailed explanation of the bootstrap method can be found in Section 3.1 of \cite{An19}.}  

\noindent In Section \ref{secRicci} we prove 
\begin{itemize}
\item[] $\sum_{i\leq 4}\M O_{i,2}(\chih, \tr\chi)\lesssim 1$ with improved estimates for $\tr\chi$,
\item[] $\sum_{i\leq 4}\M O_{i,2}(\h\o)\lesssim 1+\sum_{i\leq 4}\M O_{i,2}(\nab_4\phi)$,
\item[] $\sum_{i\leq 4}\M O_{i,2}(\q)\lesssim 1+\sum_{i\leq 4}\M O_{i,2}(\h\o)+\t {\M O}_{5,2}(\h\o)+\M R(\b)$ with improved bound for $\O\tr\chib+\f{2}{|u|}$.
\end{itemize}

\noindent In Section \ref{scalar field} we obtain 
\begin{itemize}
\item[] $\sum_{i\leq 4}\M O_{i,2}(\nab_4\phi)\lesssim 1, \quad \sum_{i\leq 4}\M O_{i,2}(\nab_A\phi)\lesssim S, \quad \sum_{i\leq 4}\M O_{i,2}(\nab_3\phi)\lesssim S$.
\end{itemize}

\noindent In Section \ref{energy scalar field} we derive 
\begin{itemize}
\item[] $\M S_{5,1}\lesssim 1$, \quad $\M S_{5,2}\lesssim 1$.
\end{itemize}
With the bounds obtained in Section \ref{scalar field} and Section \ref{energy scalar field}, we revisit the estimates in Section \ref{secRicci}. Hence we now get 
\begin{itemize}
\item[] $\sum_{i\leq 4}\M O_{i,2}(\p)\lesssim 1$, \quad $\sum_{i\leq 4}\M O_{i,2}(\q)\lesssim 1+\t {\M O}_{5,2}(\h\o)+\M R(\b)$.
\end{itemize}

\noindent In Section \ref{elliptic estimates} we show
\begin{itemize}
\item[] $\t{\M O}_{5,2}(\h\o)\lesssim 1+\M R(\b)$, \quad $\t{\M O}_{5,2}(\tr\chi, \chih)\lesssim 1+\M R(\b)$, 
\item[] $\t{\M O}'_{5,2}(\tr\chi)\lesssim 1$,
\item[] $\t{\M O}_{5,2}(\etb)\lesssim 1$, \quad $\t{\M O}_{5,2}(\eta)\lesssim 1+\M R(K, \sigmac)+\M R(\beta)+\underline{\M R}(K, \sigmac)$,
\item[] $\t{\M O}_{5,2}(\O\tr\chib, \tr\chib)\lesssim 1$, \quad $\t{\M O}_{5,2}(\O\chibh)\lesssim 1$, \quad $\t{\M O}_{5,2}(\chibh)\lesssim \M R(\b)$.
\end{itemize}

\noindent In Section \ref{energy estimate curvature} we derive
\begin{itemize}
\item[] $\M R(\b)+\underline{\M R}(K, \sigmac)\lesssim 1$, \quad $\M R(K, \sigmac)+\underline{\M R}(\beb)\lesssim 1$.
\end{itemize} 

}

{\color{black}
\begin{remark}\label{small scale critical norm} 
The proof strategy and the main theorems of this paper also hold for perturbed Christodoulou's initial data prescribed along $\ub=0$. Denote 
$$\mathring{\O}(u, 0)=\O(u,0) \mbox{ and } \mathring{\Gamma}(u,0)\in\{\f{\partial_u r}{r}(u,0), \, \f{\partial_{\ub} r}{r}(u,0), \, \partial_u\phi(u,0), \, \partial_{\ub}\phi(u,0) \}.$$
We further set $\mathring{\O}_{c}(u,0)$ and $\mathring{\Gamma}_{c}(u,0)$ to be the corresponding values of Christodoulou's naked-singularity solution along $\ub=0$. Our below proof strategy also extends to spherically-symmetric initial data along $\ub=0$ satisfying 
$$|\f{\mathring{\O}(u,0)}{\mathring{\O}_{c}(u,0)}|\leq 1+\f{\mathring{\O}_{c}(u,0)^{\t\delta}}{a^{\f13}}, \quad |\f{\mathring{\Gamma}(u,0)}{\mathring{\Gamma}_{c}(u,0)}|\leq 1+\f{\mathring{\O}(u,0)^{\t\delta}}{a^{\f13}}$$
with constants $0<\t\delta\ll 1$ and $a\geq 1$.\footnote{The reader is also referred to a spherically-symmetric extension of Christodoulou's naked-singularity solution by Singh in \cite{Singh}.} For simplicity of the proof and for more precise statements of the main theorems, in the main body of this paper we stick to Christodoulou's naked-singularity initial data with Footnotes \ref{renormalized K}, \ref{trchib evolution lemma}, \ref{initial omb perturbed data}, \ref{alpha1 perturbed data}, \ref{initial data cancellation 1}, \ref{initial data cancellation 2}, \ref{instability theorems perturbed initial data} and Remark \ref{phi def with perturbed data} added to explain how to deal with the perturbed scenario.\footnote{{\color{black}The methods developed in this paper also allow non-spherically-symmetric initial data prescribed along $\ub=0$, the author will provide the details in a separate paper.}}  

\end{remark}
}

\section{Preliminary Estimates}\label{secbasic}

\subsection{Estimates for metric components}\label{metric} We now start the hyperbolic estimates. We first obtain a bound for $\Omega$ under the bootstrap assumptions:
\begin{proposition}\label{Omega}
Under the assumptions of Theorem \ref{main thm} and the bootstrap assumptions \eqref{BA.1}, \eqref{BA.2}, \eqref{BA.3}, \eqref{BA.4}, we have
$$\|\f{\O}{\O(u,0)}-1\|_{L^\infty(S_{u,\ub})}\leq B^{-\f12}.$$
\end{proposition}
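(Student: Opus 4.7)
The plan is to integrate the transport equation $e_4(\log \Omega) = -2\h\omega$ from $\ub=0$ to the point of interest, exploiting two facts: along $\ub=0$ the initial data is Christodoulou's spherically-symmetric naked-singularity solution, so $\Omega(u,0,\theta^1,\theta^2) = \Omega(u,0)$ is a function of $u$ alone; and the bootstrap bound $\M O_{0,\infty}(u,\ub) \leq O$ together with the crucial smallness \eqref{largeness B} gives a tiny $\ub$-integrated $L^\infty$ bound for $\Omega\h\omega$.

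First, I would fix $(u,\ub,\theta^1,\theta^2)$ in the spacetime region under consideration and consider the integral curve $\gamma$ of the null generator $e_4$ ending at this point. Since $u$ is constant along $e_4$ and $e_4$ is transverse to each $\Hb_{\ub'}$, this curve traces back within $H_u$ to some point $(u,0,\theta_0^1,\theta_0^2)$ on the initial incoming cone $\Hb_0$. Reparameterizing $\gamma$ by $\ub'\in [0,\ub]$, the transport equation integrates to
\begin{equation*}
\log \Omega(u,\ub,\theta^1,\theta^2) - \log\Omega(u,0,\theta_0^1,\theta_0^2) = -\int_0^{\ub} 2\Omega\h\omega\bigl|_{\gamma(\ub')}\,d\ub'.
\end{equation*}
By the spherical symmetry of the initial data along $\Hb_0$, the second term on the left equals $\log \Omega(u,0)$, independently of $\theta_0$, which is the key reason we may compare to the reference value $\Omega(u,0)$ at the original angle.

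Second, I would use the bootstrap assumption \eqref{BA.1}, which yields $\|\Omega\h\omega\|_{L^\infty(\S)} \leq O a^{\frac12}/|u|$, hence
\begin{equation*}
\Bigl|\log\tfrac{\Omega(u,\ub,\theta)}{\Omega(u,0)}\Bigr| \leq 2\int_0^{\ub} \frac{Oa^{\frac12}}{|u|}\,d\ub' = \frac{2O\,\ub a^{\frac12}}{|u|} \leq \frac{2O}{B},
\end{equation*}
where the last inequality uses \eqref{largeness B}. Because the hierarchy \eqref{BA.5} forces $O \ll B^{1/32}$, the right side is much smaller than $B^{-\frac12}$, so exponentiating and applying $|e^{x}-1|\leq 2|x|$ for $|x|\leq 1$ yields $\|\Omega/\Omega(u,0) - 1\|_{L^\infty(\S)} \leq 4O/B \leq B^{-\frac12}$.

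The only subtle point is the geometric justification of the integral-curve argument, in particular that such an $e_4$-integral curve exists on the entire interval $[0,\ub]$ and remains inside the bootstrap region; this follows because $e_4$ is tangent to $H_u$ and the $\ub$-component of $e_4$ in double-null coordinates is $\Omega^{-1}>0$, so $\ub'$ is a valid monotone parameter along $\gamma$. I do not anticipate a serious obstacle, as the spherical symmetry of $\Omega|_{\ub=0}$ eliminates the angular shift-vector term $d^A\partial_A\log\Omega$ that would otherwise complicate a direct $\partial_{\ub}$-integration at fixed $\theta$.
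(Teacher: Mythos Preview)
Your proposal is correct and follows essentially the same approach as the paper: integrate $\Omega\h\omega = -\tfrac12(\partial_{\ub}+d^A\partial_{\theta^A})\log\Omega$ along $e_4$, apply the bootstrap bound $\|\Omega\h\omega\|_{L^\infty(\S)}\le a^{1/2}O/|u|$ from \eqref{BA.1}, and then use \eqref{largeness B} with the hierarchy $O\ll B^{1/32}$ to conclude. Your added remarks on the integral curve and the spherical symmetry of $\Omega(u,0)$ merely spell out what the paper's terse phrase ``integrating along $e_4$ direction'' presumes, and are entirely in line with the paper's argument.
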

\begin{proof}
We employ the equation
\begin{equation}\label{Omegatransport}
 \O\h\omega=-\frac{\O}{2}\nabla_4\log\Omega={\color{black}-\f12(\f{\partial}{\partial\ub}+d^A\f{\partial}{\partial \theta^A})\log\O.}
 \end{equation}
 Integrating equation (\ref{Omegatransport}) along $e_4$ direction, we obtain
$$||\log\f{\O}{\O(u,0)}||_{L^\infty(S_{u,\ub})}\ls \int_0^{\ub}||\O\h\omega||_{L^\infty(S_{u,\ub'})}d\ub'\ls \frac{\ub\at O}{|u|}.$$
Here we use the bootstrap assumption \eqref{BA.1}.
This implies that
$$\|\f{\O}{\O(u,0)}-1\|_{L^\infty(S_{u,\ub})}\ls \frac{\ub\at O}{|u|}\leq B^{-\f12}.$$

\end{proof}

We then control $\sg$ under the bootstrap assumptions. Following the same argument as for Proposition 5.2 in \cite{AL}, we have

\begin{proposition}\label{gamma} 
Fix a point $(u, \o)$ on the initial hypersurface $\Hb_0$. Along the outgoing characteristic emanating from $(u, \o)$, we define $\Lambda(\ub)$ and $\lambda(\ub)$ to be the larger and smaller eigenvalue of $\sg^{-1}(u, \ub=0, \o)\sg(u, \ub, \o)$. Under the assumptions of Theorem \ref{main thm} and the bootstrap assumptions \eqref{BA.1}, \eqref{BA.2}, \eqref{BA.3} and \eqref{BA.4}, then it holds
$$|\Lambda(\ub)-1|+|\lambda(\ub)-1|\leq B^{-\f12}$$
for every $\ub\in[0,\de]$. Consequently, for every $\ub\in[0,\delta]$, this also implies 
$$|\xi(\ub)-1|\leq B^{-\f12} \quad \mbox{ with } \quad \xi(\ub)=\f{dvol_{\sg_{\ub}}}{dvol_{\sg_0}}.$$
\end{proposition}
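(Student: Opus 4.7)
The plan is to adapt the argument of Proposition 5.2 in \cite{AL} to the present geometric setting. Along the outgoing characteristic emanating from a fixed $(u,\o)\in\Hb_0$, the induced metric $\sg_{AB}$ satisfies the transport equation
\begin{equation*}
\frac{\partial}{\partial \ub}\sg_{AB}=2\O\chi_{AB}=2\O\chih_{AB}+\O\tr\chi\,\sg_{AB},
\end{equation*}
which follows directly from the definition of $\chi_{AB}=g(D_{e_A}e_4,e_B)$ and the double null ansatz. I would then introduce the matrix $\nu(\ub):=\sg^{-1}(u,0,\o)\sg(u,\ub,\o)$, so that $\nu(0)=I$ and $\Lambda(\ub),\lambda(\ub)$ are exactly its two (positive) eigenvalues.

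Next, I would rewrite the above transport equation as an ODE in $\ub$ for the matrix $\nu$:
\begin{equation*}
\frac{\partial}{\partial \ub}\nu_{A}{}^{B}=2\O\,(\sg^{-1}(u,0,\o))^{BC}\chi_{AC}(u,\ub,\o).
\end{equation*}
Using the bootstrap assumption \eqref{BA.1} together with $\M O_{0,\infty}(u,\ub)\leq O$ gives
\begin{equation*}
\|\O\chih\|_{L^\infty(S_{u,\ub})}+\|\O\tr\chi\|_{L^\infty(S_{u,\ub})}\lesssim \frac{\at O}{|u|},
\end{equation*}
so that $\|\O\chi\|_{L^\infty(S_{u,\ub})}\lesssim \at O/|u|$. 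Integrating the ODE for $\nu$ from $0$ to $\ub$ and invoking the smallness \eqref{largeness B} together with \eqref{BA.5}, which guarantees $\ub\at O/|u|\lesssim O\cdot B^{-1}\leq B^{-1/2}$, yields
\begin{equation*}
\|\nu(\ub)-I\|_{L^\infty(S_{u,\ub})}\lesssim \int_0^{\ub}\|\O\chi\|_{L^\infty(S_{u,\ub'})}\,d\ub'\lesssim \frac{\ub\at O}{|u|}\leq B^{-\f12}.
\end{equation*}

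By standard linear algebra (eigenvalue perturbation for matrices close to the identity), this immediately translates into the desired bound $|\Lambda(\ub)-1|+|\lambda(\ub)-1|\leq B^{-\f12}$. Finally, since the volume form ratio satisfies $\xi(\ub)=\sqrt{\det\nu(\ub)}=\sqrt{\Lambda(\ub)\lambda(\ub)}$, we immediately deduce $|\xi(\ub)-1|\leq B^{-\f12}$ from the two-sided bounds on $\Lambda$ and $\lambda$ after harmlessly enlarging the implicit constants. The only mild obstacle is the verification that $O/B\leq B^{-1/2}$ under the hierarchy \eqref{BA.5}, which is built precisely to absorb such losses; no structural subtlety arises, and the whole argument is an almost direct transcription of \cite[Proposition 5.2]{AL}, with the additional weights of $\Omega(u,0)$ playing no role at this stage since they cancel in the ratio defining $\nu$.
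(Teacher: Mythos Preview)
Your proposal is correct and follows essentially the same approach as the paper, which simply defers to Proposition 5.2 in \cite{AL}: you derive the transport equation $\partial_{\ub}\sg=2\O\chi\,\sg$, translate it into an ODE for $\nu=\sg_0^{-1}\sg$, integrate using the bootstrap bound $\|\O\chi\|_{L^\infty}\lesssim \at O/|u|$ and the smallness $\ub\at O/|u|\leq B^{-1/2}$ from \eqref{largeness B}--\eqref{BA.5}, and then read off the eigenvalue and volume-form estimates. The only point worth making explicit is that passing from the geometric norm $\|\O\chi\|_{L^\infty(S_{u,\ub'})}$ to a bound on the matrix $(\sg_0^{-1})^{BC}\chi_{AC}$ uses $\sg_0\approx\sg_{\ub'}$, which is handled by a standard continuity argument (already implicit in the bootstrap framework) and is exactly how \cite{AL} proceeds.
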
 

As corollaries, we also obtain the below area estimate and the Sobolev embedding.
\begin{proposition}\label{area}
Under the assumptions of Theorem \ref{main thm} and the bootstrap assumptions \eqref{BA.1}, \eqref{BA.2}, \eqref{BA.3}, \eqref{BA.4}, it holds that
$$\sup_{u,\ub}|\mbox{Area}(S_{u,\ub})-\mbox{Area}(S_{u,0})|\leq |u|^2 B^{-\f12}.$$
\end{proposition}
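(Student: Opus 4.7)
The plan is to reduce the area comparison to the pointwise Jacobian bound already established in Proposition \ref{gamma}. Recall $\xi(\ub) = dvol_{\sg_{\ub}}/dvol_{\sg_0}$ is the ratio of the induced area elements along a fixed outgoing null generator, and Proposition \ref{gamma} gives $|\xi(\ub) - 1|\leq B^{-1/2}$ uniformly.

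First, I would pull back the integration to the reference sphere $S_{u,0}$ along the outgoing null generators of $H_u$. Writing
\[
\mathrm{Area}(S_{u,\ub}) \;=\; \int_{S_{u,\ub}} dvol_{\sg_{\ub}} \;=\; \int_{S_{u,0}} \xi(\ub)\, dvol_{\sg_0},
\]
subtracting $\mathrm{Area}(S_{u,0}) = \int_{S_{u,0}} dvol_{\sg_0}$ and applying Proposition \ref{gamma} yields
\[
\bigl|\mathrm{Area}(S_{u,\ub}) - \mathrm{Area}(S_{u,0})\bigr|
\;\leq\; \|\xi(\ub)-1\|_{L^\infty(S_{u,0})}\cdot \mathrm{Area}(S_{u,0})
\;\leq\; B^{-\frac12}\,\mathrm{Area}(S_{u,0}).
\]

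It then remains to compute $\mathrm{Area}(S_{u,0})$ along the initial cone $\Hb_0$, where Christodoulou's spherically symmetric naked-singularity data are prescribed. From the initial data setup we have $r(-1,0)=1$ and $\partial_u r(u,0)=-1$, so $r(u,0)=|u|$ for $-1\leq u\leq 0$. Spherical symmetry then gives $\mathrm{Area}(S_{u,0}) = 4\pi r^2(u,0) = 4\pi |u|^2$. Combining with the previous line and taking the supremum over $u,\ub$ in the considered spacetime region produces
\[
\sup_{u,\ub}\bigl|\mathrm{Area}(S_{u,\ub})-\mathrm{Area}(S_{u,0})\bigr|\;\lesssim\; |u|^2 B^{-\frac12},
\]
which is the claimed bound. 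There is no genuine obstacle here since the heavy lifting is done by Proposition \ref{gamma}; the only thing to be careful about is making sure the pullback identity is used on each $H_u$ separately, so that the Jacobian estimate (which is proved along outgoing null generators) applies uniformly in $\o$ before integration.
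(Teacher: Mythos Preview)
Your proposal is correct and follows exactly the approach the paper intends: Proposition~\ref{area} is stated as a corollary of Proposition~\ref{gamma} with no separate proof, and your argument---pull back the area integral to $S_{u,0}$ via the Jacobian $\xi(\ub)$, invoke $|\xi(\ub)-1|\le B^{-1/2}$, and use $\mathrm{Area}(S_{u,0})=4\pi|u|^2$ from the spherically symmetric initial data---is precisely the intended derivation. The only cosmetic point is that your conclusion carries a factor $4\pi$ while the stated bound is $|u|^2 B^{-1/2}$; this is harmless since the paper's convention absorbs universal constants, and $B$ is chosen large.
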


\begin{proposition}\label{Sobolev}
With the assumptions of Theorem \ref{main thm} and the bootstrap assumptions \eqref{BA.1}, \eqref{BA.2}, \eqref{BA.3} and \eqref{BA.4}, it holds that
\begin{equation}
\begin{split}
\|\phi\|_{L^\infty(S_{u,\ub})} \ls & \sum_{i\leq 2}\|u^{i-1}\nab^i\phi\|_{L^2(\S)}.
\end{split}
\end{equation}
\end{proposition}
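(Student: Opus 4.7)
The statement is a Sobolev embedding $W^{2,2}(\S)\hookrightarrow L^\infty(\S)$, with the weights $|u|^{i-1}$ encoding the fact that $\S$ is a topological sphere of area $\sim |u|^2$. My plan is to transfer all norms from $(\S,\sg)$ back to $(S_{u,0},\sg_0)$ using Proposition \ref{gamma}, then apply the standard Sobolev embedding on a round $2$-sphere of radius $r\sim |u|$, and read off the $|u|$-weights by scaling.

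First, by Proposition \ref{gamma} the eigenvalues of $\sg^{-1}(u,0,\o)\sg(u,\ub,\o)$ lie in $[1-B^{-1/2},1+B^{-1/2}]$ and the volume ratio satisfies $|\xi(\ub)-1|\le B^{-1/2}$. Hence pointwise tensor norms and $L^p$-norms with respect to $\sg$ and $\sg_0$ are uniformly comparable, so it is enough to prove
\[
 \|\phi\|_{L^\infty(S_{u,0})} \lesssim \sum_{i\le 2}\|u^{i-1}\,\nab_0^{i}\phi\|_{L^2(S_{u,0})},
\]
where $\nab_0$ is the Levi-Civita connection of $\sg_0$. Along $\ub=0$ Christodoulou's naked-singularity data make $(S_{u,0},\sg_0)$ a round sphere of area-radius $r(u,0)\sim |u|$. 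On such a sphere, using the diffeomorphism $\tilde\phi(x)=\phi(rx)$ from $\mathbb S^2_r$ to the unit $\mathbb S^2$ one has $|\tilde\nab^{i}\tilde\phi|^2_{\tilde g}\,d\tilde A = r^{2i-2}|\nab_0^{i}\phi|^2_g\,dA$, so the classical Sobolev embedding $W^{2,2}(\mathbb S^2)\hookrightarrow L^\infty(\mathbb S^2)$ becomes
\[
 \|\phi\|_{L^\infty(\mathbb S^2_r)} \lesssim r^{-1}\|\phi\|_{L^2(\mathbb S^2_r)} + \|\nab_0\phi\|_{L^2(\mathbb S^2_r)} + r\,\|\nab_0^{2}\phi\|_{L^2(\mathbb S^2_r)}.
\]
Inserting $r\sim |u|$ yields precisely the weights $|u|^{i-1}$ on $\|\nab_0^{i}\phi\|_{L^2(S_{u,0})}$, $i=0,1,2$.

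The only point requiring any care is the comparison between $\nab^{i}\phi$ and $\nab_0^{i}\phi$ for $i=2$, since the difference $\nab^2\phi-\nab_0^2\phi$ involves $(\Gamma_{\sg}-\Gamma_{\sg_0})\cdot\nab\phi$. Under \eqref{BA.1}, integrating $\chi$ in $\ub$ bounds this Christoffel defect by $\ub\at O/|u|$ in $L^\infty$, which by the smallness \eqref{smallness} is a harmless lower-order correction that can be absorbed into the first-derivative term on the right-hand side. Thus the main obstacle is essentially bookkeeping rather than substantive: once Propositions \ref{Omega}--\ref{area} are in hand, the proof reduces to rescaled Sobolev on a round $2$-sphere plus a metric-comparison argument, so I would expect the full proof to be short.
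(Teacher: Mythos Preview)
Your proposal is correct and essentially the standard route. The paper itself does not prove this proposition; it simply records that ``the detailed proof of the Sobolev embedding is provided in Propositions 5.6--5.9 in \cite{AL}.'' The argument in \cite{AL} proceeds through a chain of $L^p$ and Gagliardo--Nirenberg type inequalities on $\S$, using the metric comparison (the analogue of Proposition~\ref{gamma}) to control the constants uniformly and to pull back to a reference sphere; the $|u|$-weights then enter through the area estimate exactly as in your scaling computation. Your direct reduction to the round sphere of radius $r(u,0)=|u|$ together with the bound on the Christoffel defect $\Gamma_{\sg}-\Gamma_{\sg_0}$ is a clean repackaging of the same ingredients and is perfectly valid here; the only cosmetic point is that for tensorial $\phi$ the comparison $\nab\phi$ versus $\nab_0\phi$ already at first order picks up a Christoffel term, but this is handled by the same $\ub\at O/|u|$ smallness you invoke for $i=2$.
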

\noindent Note that the detailed proof of the Sobolev embedding is provided in Propositions 5.6-5.9 in \cite{AL}. 

\subsection{Estimates for Transport Equations}\label{transportsec}
In this article, we will employ the below two propositions frequently: 

\begin{proposition}\label{transport}
With the assumptions of Theorem \ref{main thm} and the bootstrap assumptions \eqref{BA.1}, \eqref{BA.2}, \eqref{BA.3} ,\eqref{BA.4}, for an $\S$-tangent tensor field $\phi$ of arbitrary rank it holds
\[
 ||\phi||_{L^2(S_{u,\ub})}\ls ||\phi||_{L^2(S_{u,\ub'})}+\int_{\ub'}^{\ub} ||\O\nabla_4\phi||_{L^2(S_{u,\ub''})}d{\ub''}.
\]
\end{proposition}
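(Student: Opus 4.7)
The argument is a standard energy-in-$L^2(S_{u,\ub})$ inequality coupled with Gr\"onwall; the only nonstandard ingredient is the verification that the area-form deformation factor $\O\tr\chi$ is already controlled by a priori information (from Proposition \ref{Omega} and the $\p$-part of $\M O_{0,\infty}$), so that no bootstrap loop arises. The plan is to differentiate $\|\phi\|_{L^2(S_{u,\ub})}^2$ in $\ub$, absorb the resulting drift term by the size of $\O\tr\chi$, and then conclude.

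\smallskip

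\noindent \textbf{Step 1 (Derivative of the $L^2$ norm).} Using the double-null coordinate expression $\O e_4=\partial_{\ub}+d^A\partial_{\theta^A}$ and the identity $\O\tr\chi=\tfrac12 \sg^{AB}(\partial_{\ub}+d^A\partial_{\theta^A})\sg_{AB}$, I would differentiate under the integral sign to obtain
\begin{equation*}
\frac{d}{d\ub}\int_{S_{u,\ub}} |\phi|^2\,dvol_{\sg}
= \int_{S_{u,\ub}}\Bigl(2\langle \O\nab_4\phi,\phi\rangle + \O\tr\chi\,|\phi|^2\Bigr)dvol_{\sg}.
\end{equation*}
Applying Cauchy--Schwarz to the first term on the right and dividing by $\|\phi\|_{L^2(S_{u,\ub})}$ yields the pointwise-in-$\ub$ estimate
\begin{equation*}
\frac{d}{d\ub}\|\phi\|_{L^2(S_{u,\ub})}
\leq \|\O\nab_4\phi\|_{L^2(S_{u,\ub})} + \tfrac12\|\O\tr\chi\|_{L^\infty(S_{u,\ub})}\|\phi\|_{L^2(S_{u,\ub})}.
\end{equation*}

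\noindent \textbf{Step 2 (Smallness of the drift factor).} The next step is to notice that $\O\tr\chi\in\p$, so the bootstrap assumption \eqref{BA.1} together with \eqref{O i infty} gives $\|\O\tr\chi\|_{L^\infty(S_{u,\ub})}\lesssim \at O/|u|$. Combined with the exterior/interior range $\ub\at\leq|u|\O^{2-\tilde\delta}(u,0)$ and \eqref{largeness B}, this yields
\begin{equation*}
\int_0^{\ub}\|\O\tr\chi\|_{L^\infty(S_{u,\ub''})}\,d\ub'' \;\lesssim\; \frac{\ub\at O}{|u|} \;\lesssim\; \frac{O}{B}\;\ll\;1,
\end{equation*}
so in particular the exponential factor appearing from Gr\"onwall is bounded by an absolute constant.

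\smallskip

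\noindent \textbf{Step 3 (Gr\"onwall).} Writing $F(\ub):=\|\phi\|_{L^2(S_{u,\ub})}$ and $G(\ub):=\|\O\nab_4\phi\|_{L^2(S_{u,\ub})}$, the differential inequality above takes the form $F'(\ub)\leq G(\ub) + h(\ub)F(\ub)$ with $h:=\tfrac12\|\O\tr\chi\|_{L^\infty(S_{u,\ub})}$. An integrating-factor computation (or equivalently the standard Gr\"onwall lemma) then gives
\begin{equation*}
F(\ub)\leq \exp\!\Bigl(\!\int_{\ub'}^{\ub} h(\ub''')\,d\ub'''\!\Bigr)\!\left(F(\ub') + \int_{\ub'}^{\ub} G(\ub'')\,d\ub''\right),
\end{equation*}
and by Step 2 the exponential factor is $\lesssim 1$, producing exactly the claimed inequality.

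\smallskip

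\noindent \textbf{Main obstacle.} There is no deep obstacle: the proposition is a baseline transport estimate used throughout the rest of the paper. The only point that requires attention is making sure the drift term $\|\O\tr\chi\|_{L^\infty}$ is controlled by bootstrap information already available at this stage, so that Step 3 does not depend on estimates yet to be proved. This is why the proposition is stated with the factor $\O\nabla_4$ on the right-hand side rather than $\nabla_4$: the $\O$-weight matches the geometric derivative $\partial_{\ub}+d^A\partial_{\theta^A}$, keeping the drift contribution at the scale-critical size $\ub\at/|u|$, which is forced to be small by \eqref{largeness B}.
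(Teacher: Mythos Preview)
Your proof is correct and follows the standard transport-plus-Gr\"onwall argument that the paper has in mind; the paper itself omits the proof and simply refers to Proposition~5.4 of \cite{AL}, where essentially the same computation appears. The key point you isolate---that the drift $\int_{\ub'}^{\ub}\|\O\tr\chi\|_{L^\infty}\,d\ub''\lesssim \ub\at O/|u|\ll 1$ is already controlled by the bootstrap assumption \eqref{BA.1}---is exactly what makes the implicit constant in $\lesssim$ independent of later estimates.
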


\begin{proposition}\label{evolution lemma}
Set $\phi$ and $F$ to be $\S$-tangent tensor fields of rank $k$ obeying the below transport equation:
\begin{equation*}
\O\nab_3 \phi_{A_1...A_k}+\lambda_0\O{\tr\underline{\chi}}\phi_{A_1...A_k}=\O F_{A_1...A_k}
\end{equation*}
and denote $\lambda_1=2(\lambda_0-\frac{1}{2})$. Under the assumptions of Theorem \ref{main thm} and the bootstrap assumptions \eqref{BA.1}, \eqref{BA.2}, \eqref{BA.3}, \eqref{BA.4}, then it holds that
\begin{equation*}
|u|^{\lambda_1}\|\phi\|_{L^2(\S)}\ls
\|\phi\|_{L^2(S_{-1,\underline{u}})}+\int_{-1}^u |u'|^{\lambda_1}\|\O F\|_{L^2(S_{u',\underline{u}})}du'.
\end{equation*}
\end{proposition}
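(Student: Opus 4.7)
\textbf{Proof proposal for Proposition \ref{evolution lemma}.} My plan is the standard energy/transport argument for $e_3$-equations, adapted to the precise weight $|u|^{\lambda_1}$ coming from Christodoulou's naked-singularity initial data. Since $e_3=\Omega^{-1}\partial_u$, we have $\Omega\nabla_3=\partial_u$ when acting on scalars, and for $S$-tangent tensors the projected derivative $\nabla_3$ is compatible with $\sg$, so $\partial_u|\phi|^2=2\phi\cdot\Omega\nabla_3\phi$. Combining with the standard volume-form evolution $\partial_u(d\mathrm{vol}_{\sg})=\Omega\tr\chib\, d\mathrm{vol}_{\sg}$ (up to corrections from the shift $d^A$ which enter at lower order and can be absorbed by the bootstrap assumptions controlling $\eta,\etab$ and $\O$ via Propositions \ref{Omega}, \ref{gamma}) and substituting the transport equation, I obtain
\begin{equation*}
\frac{d}{du}\|\phi\|_{L^2(\S)}^2 = -\lambda_1\int_{\S}\Omega\tr\chib\,|\phi|^2 + 2\int_{\S}\phi\cdot\Omega F,
\end{equation*}
using $1-2\lambda_0=-\lambda_1$.

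The key observation is the choice of weight $|u|^{\lambda_1}$. From the naked-singularity initial data computed earlier in the paper, $(\O\tr\chib)_0=\O\tr\chib(u,0)=-2/|u|$. Writing $\O\tr\chib=-2/|u|+(\O\tr\chib-(\O\tr\chib)_0)$ and using $\partial_u(|u|^{2\lambda_1})=-2\lambda_1|u|^{2\lambda_1-1}$ (as $u<0$), the leading $-2/|u|$ piece produces exactly the cancellation
\begin{equation*}
\partial_u\bigl(|u|^{2\lambda_1}\|\phi\|^2_{L^2(\S)}\bigr) = -\lambda_1\,|u|^{2\lambda_1}\!\!\int_{\S}\bigl(\O\tr\chib+\tfrac{2}{|u|}\bigr)|\phi|^2 \,+\, 2|u|^{2\lambda_1}\!\!\int_{\S}\phi\cdot\Omega F.
\end{equation*}
This is the crucial point: Christodoulou's interior profile pins the weight down uniquely.

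Next I set $g(u)=|u|^{\lambda_1}\|\phi\|_{L^2(\S)}$, derive $g'\le C\|\O\tr\chib-(\O\tr\chib)_0\|_{L^\infty(\S)}\,g+|u|^{\lambda_1}\|\O F\|_{L^2(\S)}$, and apply Gr\"onwall. The error coefficient is bounded using the bootstrap assumption \eqref{BA.1} for $\q$: since $\O\tr\chib-(\O\tr\chib)_0\in\q$, one has $\|\O\tr\chib-(\O\tr\chib)_0\|_{L^\infty(\S)}\lesssim \ub\at O/|u|^2$, hence
\begin{equation*}
\int_{-1}^{u}\frac{\ub\at O}{|u'|^2}\,du' \;\le\; \frac{\ub\at O}{|u|} \;\le\; \frac{O}{B} \;\ll\; 1
\end{equation*}
by \eqref{largeness B}, so the exponential Gr\"onwall factor is $\lesssim 1$. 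This yields the claimed estimate.

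The main (minor) obstacle is simply verifying that the error piece $\O\tr\chib+2/|u|$ is integrable in $u$ uniformly in the considered spacetime region, given that $|u|$ is allowed to be small in the interior. The smallness condition \eqref{largeness B}, which holds both in the exterior and interior regions by the design of the spacetime foliation, handles exactly this concern. The argument is robust enough that minor corrections from $d^A$, from the non-metric-compatibility of $\partial_u$ versus $\nabla_3$ in frame indices, and from the $\Omega$-weights (Proposition \ref{Omega}) all enter only through the bootstrap quantities already controlled and so can be absorbed into the Gr\"onwall factor.
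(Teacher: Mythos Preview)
Your proposal is correct and follows essentially the same approach as the paper (which defers to \cite{AL}, Proposition 5.5, but displays the underlying identity later as Proposition \ref{intbyparts3}): compute $\partial_u(|u|^{2\lambda_1}\|\phi\|_{L^2(\S)}^2)$, observe that the leading $-2/|u|$ piece of $\O\tr\chib$ exactly cancels the weight derivative, and close by Gr\"onwall using the smallness of $\|\O\tr\chib+2/|u|\|_{L^1_u L^\infty(\S)}$. One small clarification: in the gauge used here $e_3=\O^{-1}\partial_u$ carries no shift (the $d^A$ sits only in $e_4$), so the ``corrections from the shift $d^A$'' you mention do not arise at all in this computation and the first-variation identity $\partial_u(d\mathrm{vol}_{\sg})=\O\tr\chib\,d\mathrm{vol}_{\sg}$ holds without further error terms.
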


\noindent The proof of the above two propositions are the same as in Proposition 5.4 and Proposition 5.5 of \cite{AL} and hence are omitted.\footnote{\label{trchib evolution lemma}With perturbed Christodoulou initial data, the quantity $\|\O\tr\chib(u,\ub, \o)+\f{2}{|u|}\|_{L^1_u}$ is still uniformly bounded and hence the proof for Proposition \ref{evolution lemma} is the same.}

\subsection{Commutation Formula}\label{commutation}
We quote the following formula from \cite{KNI:book}:
\begin{proposition}
The commutator $[\nabla_4,\nabla]$ acting on an $(0,r)$ S-tensor satisfies
\begin{equation*}
 \begin{split}
&[\nabla_4,\nabla_B]\phi_{A_1...A_r}\\
=&[D_4,D_B]\phi_{A_1...A_r}+(\nabla_B\log\Omega)\nabla_4\phi_{A_1...A_r}-(\sg^{-1})^{CD}\chi_{BD}\nabla_C\phi_{A_1...A_r} \\
&-\sum_{i=1}^r (\sg^{-1})^{CD}\chi_{BD}\etab_{A_i}\phi_{A_1...\hat{A_i}C...A_r}+\sum_{i=1}^r (\sg^{-1})^{CD}\chi_{A_iB}\etab_{D}\phi_{A_1...\hat{A_i}C...A_r}.
 \end{split}
\end{equation*}
\end{proposition}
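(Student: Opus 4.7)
My plan is to verify the commutation identity by a direct computation in the null frame. I would extend the $S$-tangent tensor $\phi_{A_1\cdots A_r}$ to an ambient spacetime tensor $\hat\phi$ that annihilates any slot fed with $e_3$ or $e_4$; then the projected derivatives $\nab_4\phi$ and $\nab_B\phi$ are just $(D_4\hat\phi)$ and $(D_B\hat\phi)$ evaluated on $S$-tangent frame vectors. The identity will emerge purely from bookkeeping of how the non-tangent components of $D_\mu e_\nu$, i.e. the null Ricci coefficients, interact with the vanishing convention of $\hat\phi$.

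The essential ingredients are the null-frame decompositions
\begin{align*}
D_B e_4 &= \chi_B{}^C e_C - \zeta_B\, e_4, & D_4 e_B &= \etab_B\, e_4 + \Phi_B{}^C e_C, \\
D_B e_A &= \tfrac{1}{2}\chi_{BA}\,e_3 + \tfrac{1}{2}\chib_{BA}\,e_4 + \sGamma_{BA}{}^C e_C, & D_4 e_A &= \etab_A\, e_4 + \Phi_A{}^C e_C,
\end{align*}
which follow from $\chi_{AB}=g(D_A e_4,e_B)$, $\chib_{AB}=g(D_A e_3,e_B)$, $\etab_A=-\tfrac12g(D_4 e_A,e_3)$, together with $g(e_3,e_4)=-2$ and $g(D_4 e_4,e_A)=0$; here $\sGamma$ and $\Phi$ denote tangential connection coefficients. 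Using $\eta_A=\zeta_A+\nab_A\log\Omega$ and $\etab_A=-\zeta_A+\nab_A\log\Omega$, one obtains $\zeta_A=-\etab_A+\nab_A\log\Omega$, whence the Lie bracket $[e_4,e_B]=(\nab_B\log\Omega)\,e_4+(\Phi_B{}^C-\chi_B{}^C)\,e_C$. This bracket is the origin of the lapse term $(\nab_B\log\Omega)\nab_4\phi$ on the right-hand side.

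I will first check the scalar case $r=0$ as a sanity test: using the formulas above, one finds $\nab_4\nab_B\phi=e_4(e_B\phi)-\Phi_B{}^C e_C\phi$ and $\nab_B\nab_4\phi=e_B(e_4\phi)$, so $[\nab_4,\nab_B]\phi=[e_4,e_B]\phi-\Phi_B{}^C e_C\phi=(\nab_B\log\Omega)\nab_4\phi-\chi_B{}^C\nab_C\phi$; on the other hand the ambient connection is torsion-free on scalars, i.e. $[D_4,D_B]\phi=0$, and the $\sum_{i=1}^r$ terms are empty, so the scalar case matches the statement. For general $r$ I induct. Applying Leibniz to $\nab_4\nab_B\phi_{A_1\cdots A_r}$ and $\nab_B\nab_4\phi_{A_1\cdots A_r}$ produces many contractions against $D_?e_?$; because $\hat\phi$ annihilates $e_3$- and $e_4$-slots, the only non-tangent Ricci-coefficient pieces that survive are those pairing an $e_3$-part of $D_B e_{A_i}$ (namely $\tfrac12\chi_{BA_i}e_3$) against an $e_4$-part of $D_4 e_{A_j}$ (namely $\etab_{A_j}e_4$) through $g(e_3,e_4)=-2$. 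These pairings yield exactly the two sums $-\sum_i(\sg^{-1})^{CD}\chi_{BD}\etab_{A_i}\phi_{\cdots C\cdots}$ and $+\sum_i(\sg^{-1})^{CD}\chi_{A_iB}\etab_D\phi_{\cdots C\cdots}$ in the statement. The $\chib$-pieces of $D_Be_{A_i}$ can only pair against the vanishing $e_3$-component of $D_4 e_?$ and drop out, while the tangential pieces $\Phi$, $\sGamma$ reassemble into the ambient Hessian commutator $[D_4,D_B]\phi_{A_1\cdots A_r}$ on the right-hand side.

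The main obstacle is purely combinatorial: carefully tracking signs, index positions, and which pairings of Ricci-coefficient components survive the $e_3/e_4$-annihilation of $\hat\phi$. Since every ingredient is an algebraic identity rather than an estimate, no analysis is involved beyond the null-frame definitions already recorded in Section \ref{setting}.
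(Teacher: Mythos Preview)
The paper does not supply a proof of this proposition at all; it is simply quoted from Klainerman--Nicol\`o \cite{KNI:book}. Your direct computation via the extended tensor $\hat\phi$ and the null-frame connection coefficients is exactly the standard derivation, and your scalar case is correct.

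One small imprecision in your description of the general-$r$ step: the two $\chi\cdot\etab$ sums do not arise from a single ``pairing through $g(e_3,e_4)=-2$''. Rather, they come from two separate mechanisms. In $D_4(D_B\hat\phi)$ the $e_4$-component $\etab_{A_i}e_4$ of $D_4 e_{A_i}$ is fed into a slot of $D_B\hat\phi$, and $(D_B\hat\phi)(\ldots,e_4,\ldots)=-\hat\phi(\ldots,D_Be_4,\ldots)=-\chi_B{}^C\phi_{\ldots C\ldots}$; this produces the first sum. In $D_B(D_4\hat\phi)$ the $e_3$-component $\tfrac12\chi_{BA_i}e_3$ of $D_Be_{A_i}$ is fed into a slot of $D_4\hat\phi$, and $(D_4\hat\phi)(\ldots,e_3,\ldots)=-\hat\phi(\ldots,D_4e_3,\ldots)=-2\etab^C\phi_{\ldots C\ldots}$; this produces the second sum. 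The $\chib$-parts vanish because $(D_4\hat\phi)(\ldots,e_4,\ldots)=-\hat\phi(\ldots,D_4e_4,\ldots)=0$, exactly as you say. With this clarification the bookkeeping closes and your argument is complete.
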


\begin{proposition}
The commutator $[\nabla_3,\nabla]$ acting on an $(0,r)$ S-tensor satisfies
\begin{equation*}
 \begin{split}
&[\nabla_3,\nabla_B]\phi_{A_1...A_r}\\
=&[D_3,D_B]\phi_{A_1...A_r}+(\nabla_B\log\Omega)\nabla_3\phi_{A_1...A_r}-(\sg^{-1})^{CD}\chib_{BD}\nabla_C\phi_{A_1...A_r} \\
&-\sum_{i=1}^r (\sg^{-1})^{CD}\chib_{BD}\eta_{A_i}\phi_{A_1...\hat{A_i}C...A_r}+\sum_{i=1}^r (\sg^{-1})^{CD}\chib_{A_iB}\eta_{D}\phi_{A_1...\hat{A_i}C...A_r}.
 \end{split}
\end{equation*}
\end{proposition}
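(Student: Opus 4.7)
The plan is to derive the commutator formula for $[\nabla_3,\nabla_B]$ by the same argument used to establish the $[\nabla_4,\nabla_B]$ formula (quoted in the previous proposition from \cite{KNI:book}), exploiting the $e_3 \leftrightarrow e_4$ duality of the null formalism. Under this interchange one sends $\chi \leftrightarrow \chib$, $\eta \leftrightarrow \etab$, $\h\omega \leftrightarrow \omegab$, while $g(e_3,e_4)=-2$ and $\log\Omega$ are invariant. The target formula is literally obtained from the $[\nabla_4,\nabla_B]$ formula by this substitution, which is a strong consistency check and essentially dictates the answer.

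Concretely, I would start from the definitions $\nabla_3\psi = \Pi D_{e_3}\psi$ and $\nabla_A\psi = \Pi D_{e_A}\psi$, where $\Pi$ is the projection from spacetime tensors onto $T^*S_{u,\ub}$-valued tensors. Expanding
$$[\nabla_3,\nabla_B]\phi_{A_1\cdots A_r} = \nabla_3(\nabla_B\phi)_{A_1\cdots A_r} - \nabla_B(\nabla_3\phi)_{A_1\cdots A_r}$$
in the null frame, each side equals the corresponding spacetime second covariant derivative $D_3 D_B\phi$ or $D_B D_3\phi$ plus corrections produced by the fact that $D_{e_3}e_B$ and $D_{e_B}e_3$ have components along $e_3,e_4$ (encoded by $\eta,\zeta$) and that $D_{e_B}e_{A_i}$ has components along $e_3,e_4$ (encoded by $\chi,\chib$). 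Using the frame identities $D_B e_3 = \chib_B{}^C e_C + \zeta_B e_3$ and $D_3 e_B = \nabla_3 e_B + \eta_B e_3 - \zeta_B e_4$ that follow from the definitions in Section \ref{seceqn}, one organizes the difference into (i) the spacetime commutator $[D_3,D_B]\phi$, (ii) the sliding of indices along $S_{u,\ub}$ producing $-(\sg^{-1})^{CD}\chib_{BD}\nabla_C\phi$, (iii) the $\log\Omega$ contribution coming from the discrepancy between $e_B$ acting as a coordinate derivative and $\nabla_B$ on $\Omega$-weighted quantities, and (iv) the two rank-$r$ sums encoding the coupling of $\chib$ with $\eta$ across each tensor index.

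The main obstacle is purely bookkeeping: verifying that all non-$S$-tangent contributions cancel pairwise except for those stated, and that the signs and index placements match the conventions $\eta_A = -\tfrac12 g(D_3 e_A, e_4)$, $\etab_A = -\tfrac12 g(D_4 e_A, e_3)$, $\chib_{AB} = g(D_A e_3, e_B)$ fixed in Section \ref{seceqn}. Rather than performing the full null-frame calculation from scratch, the cleanest route is to invoke the $e_3 \leftrightarrow e_4$ duality: the substitution $\chi\leftrightarrow\chib$, $\eta\leftrightarrow\etab$ applied to the already-proved $[\nabla_4,\nabla_B]$ formula—which preserves all underlying definitions since they are symmetric in the roles of $e_3$ and $e_4$—immediately yields the desired identity.
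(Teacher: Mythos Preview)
Your proposal is correct. The paper does not supply its own proof of this proposition: both the $[\nabla_4,\nabla_B]$ and the $[\nabla_3,\nabla_B]$ commutator identities are simply quoted from \cite{KNI:book}. Your route---deriving the $e_3$ formula from the already-stated $e_4$ formula via the $e_3\leftrightarrow e_4$ duality (which sends $\chi\leftrightarrow\chib$, $\eta\leftrightarrow\etab$ and leaves $\nabla_B\log\Omega$ invariant)---is the standard and correct way to obtain it, and matches how such pairs of identities are handled in the null-frame literature.
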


\noindent By induction, using commutation formula repeatedly we get the schematic form:
\begin{proposition}\label{commuteeqn}
Assume $\O\nabla_4\phi=\O F_0$. Let $\O \nabla_4\nabla^i\phi=\O F_i$.
Then it holds
\begin{equation*}
\begin{split}
\O F_i= &\sum_{i_1+i_2+i_3=i}\nabla^{i_1}(\eta+\underline{\eta})^{i_2}\nabla^{i_3} (\O F_0)+\sum_{i_1+i_2+i_3+i_4=i}\nabla^{i_1}(\eta+\underline{\eta})^{i_2}\nabla^{i_3}(\O\chi)\nabla^{i_4} \phi\\
&+\sum_{i_1+i_2+i_3+i_4=i-1} \nabla^{i_1}(\eta+\underline{\eta})^{i_2}\nabla^{i_3}(\O\beta)\nabla^{i_4} \phi.
\end{split}
\end{equation*}
Here by $\nabla^{i_1}(\eta+\underline{\eta})^{i_2}$ we mean the sum of all terms, which itself is a product of $i_2$ factors, each factor being $\nabla^j (\eta+\underline{\eta})$ for some $j\in \mathbb{Z}^{\geq 0}$ and that the sum of all $j$ being $i_1$, i.e., $\nabla^{i_1}(\eta+\underline{\eta})^{i_2}=\displaystyle\sum_{j_1+...+j_{i_2}=i_1}\nabla^{j_1}(\eta+\underline{\eta})...\nabla^{j_{i_2}}(\eta+\underline{\eta})$. Similarly, assume $\O\nabla_3\phi=\O G_{0}$. Suppose $\O\nabla_3\nabla^i\phi=\O G_{i}$.
Then we have
\begin{equation*}
\begin{split}
\O G_{i}=-\frac{i}{2}\O\tr\chib \nab^i \phi&+\sum_{i_1+i_2+i_3=i}\nabla^{i_1}(\eta+\underline{\eta})^{i_2}\nabla^{i_3} (\O G_{0})\\
&+\sum_{i_1+i_2+i_3+i_4=i}\nabla^{i_1}(\eta+\underline{\eta})^{i_2}\nabla^{i_3}(\O\chibh,\O\tr\chib-(\O\tr\chib)_0)\nabla^{i_4} \phi\\
&+\sum_{i_1+i_2+i_3+i_4=i-1} \nabla^{i_1}(\eta+\underline{\eta})^{i_2}\nabla^{i_3}(\O\underline{\beta})\nabla^{i_4} \phi.
\end{split}
\end{equation*}
\end{proposition}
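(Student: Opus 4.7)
The strategy is induction on $i$. The base case $i=0$ is tautological. For the inductive step in the $\nab_4$-version, I split
\begin{equation*}
\O \nab_4 \nab^{i+1}\phi \;=\; \O\,\nab \nab_4 \nab^i\phi \;+\; \O\,[\nab_4,\nab]\nab^i\phi \;=\; \nab(\O F_i) - \tfrac12(\eta+\etab)\,\O F_i + \O\,[\nab_4,\nab]\nab^i\phi,
\end{equation*}
where I used $\nab\log\O = \tfrac12(\eta+\etab)$. Substituting the inductive formula for $\O F_i$ into $\nab(\O F_i)$ and applying Leibniz distributes one additional $\nab$ across the existing schematic factors; since $\nab$ of $(\eta+\etab)$, $(\O\chi)$, $(\O\beta)$, or $\phi$ produces terms of the same schematic types with the derivative count bumped by one, this part is absorbed into the $\sum_{i_1+i_2+i_3=i+1}$-style sums.

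For the commutator term I would invoke the first commutation proposition of the paper: the piece $-\O\chi_B{}^C\nab_C\nab^i\phi$ fits the $\nab^{i_3}(\O\chi)\nab^{i_4}\phi$ bucket with $i_4=i$, $i_3=0$; the term $(\nab_B\log\O)\,\O\nab_4\nab^i\phi = \tfrac12(\eta+\etab)_B\cdot \O F_i$ feeds into the $(\eta+\etab)$-slot combined with the inductive form of $\O F_i$; and the $\O\chi\cdot\etab\cdot\nab^i\phi$ contraction is already of the stated schematic form. The purely geometric $[D_4,D_B]\nab^i\phi$ is a Riemann-curvature contraction, and I would decompose $R = W + \tfrac12(R_{\cdot\cdot}g_{\cdot\cdot}-\cdots) + \tfrac{1}{6}R\,g\wedge g$, using the Einstein equations $R_{\mu\nu}=T_{\mu\nu}=\nab_\mu\phi\nab_\nu\phi-\tfrac12 g_{\mu\nu}\nab^\lambda\phi\nab_\lambda\phi$ to absorb the Ricci contributions into $\nab^{i_4}\phi$-type factors. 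The Weyl principal part then contributes the $\O\beta\cdot\nab^i\phi$ term, and every piece slots into one of the three schematic buckets.

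The $\nab_3$-case runs in parallel, but one must track carefully the isolated coefficient $-\tfrac{i}{2}\O\tr\chib\,\nab^i\phi$. Its origin is the trace part of the $-\O\chib_B{}^C\nab_C$ commutator piece: decomposing $\chib=\chibh+\tfrac12\tr\chib\cdot g$ contributes $-\tfrac12\O\tr\chib\,\nab_B\nab^i\phi$ at each step, while $\nab(\O G_i)$ applied to the inductive $-\tfrac{i}{2}\O\tr\chib\,\nab^i\phi$ regenerates $-\tfrac{i}{2}\O\tr\chib\,\nab^{i+1}\phi$ modulo a remainder $\nab(\O\tr\chib-(\O\tr\chib)_0)\cdot\nab^i\phi$, admissible because $(\O\tr\chib)_0$ depends only on $u$ (so its angular derivative vanishes) and therefore fits exactly the $\nab^{i_3}(\O\chibh,\,\O\tr\chib-(\O\tr\chib)_0)$-slot. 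Summing yields the advertised $-\tfrac{i+1}{2}\O\tr\chib\,\nab^{i+1}\phi$. The only real obstacle I foresee is the combinatorial bookkeeping that rules out spurious null-curvature factors (such as $\alpha$, $\alphab$, $\rho$ or $\sigma$) in the final schematic sum; verifying that the $[D_3,D_B]$ Riemann contraction produces a pure $\O\beb\cdot\nab^i\phi$ principal part, modulo factors absorbable into the $\O\chib$- and $(\eta+\etab)$-slots, rests on the null Bianchi identities together with the Einstein-scalar-field constraint, and this is the place where the schematic assertion has real content.
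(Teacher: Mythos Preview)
Your proposal is correct and follows exactly the approach the paper indicates; the paper's own proof consists of the single sentence ``By induction, using commutation formula repeatedly we get the schematic form,'' so your write-up is already more detailed than what appears there.

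Two minor remarks are worth making. First, the terms $-\tfrac12(\eta+\etab)\,\O G_i$ (from pulling $\nab$ past $\O$) and $+\tfrac12(\eta+\etab)\,\O G_i$ (from the commutator's $(\nab_B\log\O)\nab_3$ piece) in fact cancel exactly. You treat them as separately feeding the $(\eta+\etab)$-slot, but in the $\nab_3$-case this would otherwise produce a stray $(\eta+\etab)\cdot(\O\tr\chib)_0\cdot\nab^i\phi$ contribution not captured by the stated buckets; the cancellation resolves this cleanly and is worth noting. Second, your treatment of the Ricci part of $[D_4,D_B]$ conflates the generic $S$-tensor $\phi$ of the proposition with the scalar field: the contribution $R_{A4}=\partial_A\phi_{\mathrm{scalar}}\,\partial_4\phi_{\mathrm{scalar}}$ is not literally a $\nab^{i_4}\phi$-factor but rather of schematic type $\p\cdot\q$, which is absorbed once $\O\beta$ is replaced via Codazzi by $\nab\p+\p\,\q$ (as the paper does in the immediately following simplified proposition). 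Your concern about spurious $\alpha,\rho,\sigma$ has a clean answer: $[D_4,D_B]$ acting on $S$-tensors contracts only against $R_{CA4B}$ with three tangential indices, and in two angular dimensions the Weyl part $W_{CA4B}$, being antisymmetric in $C,A$, is proportional to $\eps_{CA}\,{}^*\beta_B$, so only $\beta$ enters; the analogous statement holds for $[D_3,D_B]$ and $\beb$.
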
 

By replacing $\O\beta$ and $\O\betab$ via the schematic Codazzi equations:
$$\O\beta=\nabla\p+\p\q, \quad \O\beb=\nabla\q+\f{1}{|u|}\q+\q\,\q,$$
rewriting $\O\trch$, $\O\chih$ as $\p$ and substituting $\eta$, $\etab$, $\O\chibh$, $\O\trchb-(\O\tr\chib)_0$ with $\q$, we obtain a simplified version 
\begin{proposition}
Suppose $\nabla_4\phi=F_0$. Let $\nabla_4\nabla^i\phi=F_i$.
Then we have
\begin{equation*}
\begin{split}
\O F_i= &\sum_{i_1+i_2+i_3=i}\nabla^{i_1}\q^{i_2}\nabla^{i_3} (\O F_0)+\sum_{i_1+i_2+i_3+i_4=i}\nabla^{i_1}\q^{i_2}\nabla^{i_3}\p\nabla^{i_4} \phi.\\
\end{split}
\end{equation*}
Similarly, assume $\nabla_3\phi=G_{0}$. Let $\nabla_3\nabla^i\phi=G_{i}$.
Then it holds
\begin{equation*}
\begin{split}
\O G_{i}+\f {i}{2} \O\tr\chib\nab^i\phi =&\sum_{i_1+i_2+i_3=i}\nabla^{i_1}\q^{i_2}\nabla^{i_3} (\O G_{0})\sum_{i_1+i_2+i_3=i}\nabla^{i_1}\q^{i_2} \nab^{i_3}\phi\\
&+\sum_{\substack{i_1+i_2+i_3=i,\\ i_3\leq i-1}} (\O\tr\chib, \O\nab_3\phi)\nabla^{i_1}\q^{i_2}\nabla^{i_3} \phi. 
\end{split}
\end{equation*}
\end{proposition}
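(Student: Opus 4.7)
The approach is a direct schematic substitution into Proposition \ref{commuteeqn}. First I would record the dictionary implicit in the setup of Section \ref{subsection norms}: the Ricci coefficients $\eta$ and $\etab$ belong to $\q$; the quantities $\O\chih$ and $\O\tr\chi$ belong to $\p$; and $\O\chibh$ together with $\O\tr\chib-(\O\tr\chib)_0$ belongs to $\q$. The two schematic Codazzi identities $\O\beta=\nab\p+\p\,\q$ and $\O\beb=\nab\q+\f{1}{|u|}\q+\q\,\q$ follow by taking the Codazzi equations from \eqref{null.str3}, multiplying by $\O$, splitting off the background values $(\O\tr\chib)_0=-2/|u|$ and $(\O\nab_3\phi)_0\sim 1/|u|$ along $\ub=0$, and observing that angular derivatives annihilate functions of $u$ alone.

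For the $\nab_4$ identity, I substitute these into the three sums on the right-hand side of Proposition \ref{commuteeqn}. The first sum converts directly via $(\eta+\etab)\mapsto\q$ into the first simplified sum. The second sum, in which $\nab^{i_3}(\O\chi)$ appears, becomes $\nab^{i_1}\q^{i_2}\nab^{i_3}\p\,\nab^{i_4}\phi$ once $\O\chi\in\p$ is applied, landing inside the second simplified sum. The third sum, which involves $\nab^{i_3}(\O\beta)$ with $i_1+i_2+i_3+i_4=i-1$, expands via $\O\beta=\nab\p+\p\,\q$ into two pieces: one carrying an extra derivative on $\p$, and one with an additional $\q$ factor produced by the product rule. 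After relabeling indices, both pieces fit the second simplified sum, the index weight being bumped by exactly one so that the new total is $i$.

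For the $\nab_3$ identity, after moving $-\f{i}{2}\O\tr\chib\,\nab^i\phi$ to the left-hand side, the same substitutions apply. The first original sum becomes the first simplified sum. The middle original sum, with $(\O\chibh,\O\tr\chib-(\O\tr\chib)_0)\in\q$, collapses by absorbing the interior $\q$ factor into the schematic power $\q^{i_2+1}$ and feeds into the second simplified sum. Substituting $\O\beb=\nab\q+\f{1}{|u|}\q+\q\,\q$ into the third original sum produces, on the one hand, pieces without an explicit $1/|u|$ weight, which merge into the second simplified sum after Leibniz expansion and index relabeling; and, on the other hand, the $1/|u|$-weighted piece, which is rewritten using $-\f12(\O\tr\chib)_0=1/|u|$ and $(\O\nab_3\phi)_0\sim 1/|u|$ as a term carrying the prefactor $(\O\tr\chib,\O\nab_3\phi)$. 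The latter inherits the constraint $i_3\leq i-1$ from the original $i_1+i_2+i_3+i_4=i-1$, thereby producing the third simplified sum with its stated index restriction.

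The main (and essentially only) technical obstacle is careful schematic bookkeeping: each Leibniz expansion of products like $\nab^{i_3}(\p\,\q)$ or $\nab^{i_3}(\q\,\q)$ must be tracked, and one must verify that every resulting term lies in exactly one of the three simplified sums with the correct index relation, with background $1/|u|$ weights consistently absorbed into $(\O\tr\chib,\O\nab_3\phi)$. No new analytic or geometric input is required beyond the schematic Codazzi replacements stated just above the proposition; the result is a clean unification of the full formula of Proposition \ref{commuteeqn} under the $\p$–$\q$ accounting.
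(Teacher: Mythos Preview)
Your proposal is correct and takes essentially the same approach as the paper: the paper's entire justification is the sentence preceding the proposition, which says to replace $\O\beta$ and $\O\beb$ via the schematic Codazzi equations, rewrite $\O\trch,\O\chih$ as $\p$, and substitute $\eta,\etab,\O\chibh,\O\trchb-(\O\tr\chib)_0$ with $\q$. You have simply fleshed out this substitution, including the correct observation that the $1/|u|$ weight from $\O\beb$ is absorbed into the $(\O\tr\chib,\O\nab_3\phi)$ prefactor via their background values along $\ub=0$.
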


{\color{black}
\begin{remark}
For the rest of this paper, when there is no danger of confusion, the constants on the left of the equations are kept precise. The uniform constants in front of coefficients on the right are usually omitted.
\end{remark} 
}

\subsection{Elliptic Estimates for Hodge Systems}\label{elliptic}

We first recall the definition of the divergence and curl of a symmetric covariant tensor with an arbitrary rank:
$$(\div\phi)_{A_1...A_r}:=\nabla^B\phi_{BA_1...A_r},$$
$$(\curl\phi)_{A_1...A_r}:=\eps^{BC}\nabla_B\phi_{CA_1...A_r}.$$
Here $\eps$ is the volume form associated to the metric $\sg$ on $\S$.
Recall that the trace is defined via
$$(\mbox{tr }\phi)_{A_1...A_{r-1}}=(\sg^{-1})^{BC}\phi_{BCA_1...A_{r-1}}.$$
In this article, we will use the following $L^2(\S)$ elliptic estimates proved in {\color{black}Chapter 7} of \cite{Chr:book}. 

\begin{proposition}\label{ellipticthm}
Under the assumptions of Theorem \ref{main thm} and the bootstrap assumptions \eqref{BA.1}, \eqref{BA.2}, \eqref{BA.3}, \eqref{BA.4}, denoting $\phi$ to be a totally symmetric $r+1$ covariant tensorfield on a 2-sphere $(\S,\sg)$ satisfying
$$\div\phi=\t f,\quad \curl\phi=\t g,\quad \mbox{tr}\phi=\t h,$$
then, for $1\leq i\leq 5$, it holds that
\begin{equation*}
 \begin{split}
  \|u^i\nabla^{i}\phi||_{L^2(\S)}\ls \sum_{j=0}^{i-1}(||u^{j+1}\nabla^{j}(\t f, \t g)||_{L^2(\S)}+\|u^j\nab^j \t h\|_{L^2(\S)}+||u^j\nab^j\phi||_{L^2(\S)}).
 \end{split}
\end{equation*}
\end{proposition}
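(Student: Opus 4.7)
The plan is to prove Proposition \ref{ellipticthm} by induction on $i$, exploiting the fact that on the topological $2$-sphere $(\S,\sg)$ the Hodge-type operator $(\div,\curl,\tr)$ is elliptic with a clean $L^2$ integration-by-parts identity, and that the Gauss curvature satisfies $K \approx |u|^{-2}$ as guaranteed by the area estimate (Proposition \ref{area}) together with the bootstrap bounds \eqref{BA.1}--\eqref{BA.4}.

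\textbf{Base case $i=1$.} I would first establish the standard Bochner-type identity for a totally symmetric $(r+1)$-covariant tensor on a Riemannian $2$-sphere: for the trace-free projection of $\phi$ one has, after two integrations by parts,
\[
\int_{\S}\bigl(|\nabla\phi|^2 + c_r\,K\,|\phi|^2\bigr)\,dvol
\;=\;\int_{\S}\bigl(|\div\phi|^2 + |\curl\phi|^2\bigr)\,dvol + (\text{terms involving }\tr\phi),
\]
where $c_r$ is a positive rank-dependent constant (this is precisely the Chapter~7 identity in \cite{Chr:book}). Substituting $\div\phi=\tilde f$, $\curl\phi=\tilde g$, $\tr\phi=\tilde h$, and using that Proposition \ref{Omega}, Proposition \ref{gamma} and the bootstrap assumptions imply $|K|\lesssim |u|^{-2}$, the curvature term on the left is dominated by $|u|^{-2}\|\phi\|_{L^2(\S)}^2$. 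Multiplying through by $|u|^2$ and rearranging gives
\[
\bigl\||u|\nabla\phi\bigr\|_{L^2(\S)}\;\lesssim\;\bigl\||u|(\tilde f,\tilde g)\bigr\|_{L^2(\S)} + \|\tilde h\|_{L^2(\S)} + \|\phi\|_{L^2(\S)},
\]
which is the $i=1$ case.

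\textbf{Inductive step.} Suppose the estimate holds at level $i$. I would apply $\nabla$ to the Hodge system and use the commutation formulas $[\nabla_A,\div]\phi$ and $[\nabla_A,\curl]\phi$ on the $2$-sphere; since the Riemann tensor of $(\S,\sg)$ is determined by $K$, each commutator produces a single factor of $K$ multiplying a lower-derivative of $\phi$, i.e., schematically $K\cdot\nabla^{k}\phi$ with $k\leq i$. Hence $\nabla^i\phi$ satisfies a modified Hodge system with
\[
\div(\nabla^i\phi)\;=\;\nabla^i\tilde f + \sum_{k\leq i-1} K^{\,i-k}\nabla^k\phi + (\text{lower order}),\qquad
\curl(\nabla^i\phi)\;=\;\nabla^i\tilde g + \cdots,\qquad
\tr(\nabla^i\phi)\;=\;\nabla^i\tilde h + \cdots.
\]
(The fact that $\nabla^i\phi$ is not totally symmetric is handled by decomposing into symmetric and antisymmetric pieces; the antisymmetric part is itself controlled by $\curl$-type quantities and lower-order commutator terms.) Applying the base-case estimate to this new Hodge system for $\nabla^i\phi$, using $|K|\lesssim |u|^{-2}$ to convert each commutator factor into a $|u|^{-1}$-weight, and then invoking the inductive hypothesis on the lower-order $\nabla^k\phi$ terms, one exactly recovers the right-hand side displayed in the proposition.

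\textbf{Main obstacle.} The delicate part is the bookkeeping of $|u|$-weights through the iterated commutators: each application of $\nabla$ introduces, via $[\nabla,\div]$ and $[\nabla,\curl]$, a factor of $K \sim |u|^{-2}$, and one must verify that these weights line up precisely so that the final estimate produces $\|u^{j+1}\nabla^j(\tilde f,\tilde g)\|_{L^2(\S)}$, $\|u^j\nabla^j\tilde h\|_{L^2(\S)}$ and $\|u^j\nabla^j\phi\|_{L^2(\S)}$ for $0\leq j\leq i-1$, with no mismatched powers. Because the $2$-sphere has area comparable to $|u|^2$ and the bootstrap bounds give uniform control on $K-|u|^{-2}$, the accounting closes with the constants independent of bootstrap parameters. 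The restriction $1\leq i\leq 5$ is simply a matter of the order of derivatives needed in the rest of the paper; the argument itself is uniform in $i$ provided the norms $\|u^j\nabla^j\phi\|_{L^2(\S)}$ on the right are finite.
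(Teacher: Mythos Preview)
The paper does not actually prove this proposition: it simply states that the estimate is ``proved in Chapter~7 of \cite{Chr:book}'' and moves on. Your sketch --- the Bochner-type identity for the Hodge system on the $2$-sphere giving the $i=1$ case, followed by commuting $\nabla$ through the system and using $|K|\lesssim |u|^{-2}$ to handle the commutator terms inductively --- is precisely the standard argument from that reference, so your approach is correct and coincides with what the paper invokes.
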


As a special case with $\phi$ being a symmetric traceless 2-tensor, it holds that $\curl \phi=^* \div \phi$. Using the above proposition, we then get

\begin{proposition}\label{elliptictraceless}
Assume that $\phi$ is a symmetric traceless 2-tensor obeying
$$\div\phi=\t f.$$
Then, under the assumptions of Theorem \ref{main thm} and the bootstrap assumptions \eqref{BA.1}, \eqref{BA.2}, \eqref{BA.3}, \eqref{BA.4}, for $1\leq i\leq 5$, it holds
\begin{equation*}
 \begin{split}
||u^i\nabla^{i}\phi||_{L^2(\S)}
\ls& \sum_{j=0}^{i-1}(||u^{j+1}\nabla^{j}\t f||_{L^2(\S)}+||u^j\nab^j\phi||_{L^2(\S)}).
 \end{split}
\end{equation*}
\end{proposition}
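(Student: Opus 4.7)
The plan is to derive Proposition \ref{elliptictraceless} as a direct consequence of Proposition \ref{ellipticthm}. The key observation is the following pointwise identity on a 2-surface: if $\phi$ is a symmetric traceless $2$-tensor, then
\[
(\curl \phi)_A = \eps^{BC}\nabla_B \phi_{CA} = \eps_A{}^{C}\nabla^B \phi_{BC} = ({}^{*}\!\div \phi)_A,
\]
which follows by expanding in a local orthonormal frame and using symmetry and tracelessness (the two-dimensional Hodge dual of a symmetric traceless $2$-tensor is again symmetric traceless, and this identity is a standard fact used throughout \cite{Chr:book,KNI:book}). Hence from $\div \phi = \t f$ we immediately obtain $\curl \phi = {}^{*}\t f$, together with $\mathrm{tr}\,\phi = 0$.

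With these three pieces of Hodge data in hand, I would apply Proposition \ref{ellipticthm} with $r+1 = 2$ (so $\phi$ viewed as a totally symmetric covariant $2$-tensor), taking $\t g := {}^{*}\t f$ and $\t h := 0$. This yields, for $1\leq i\leq 5$,
\[
\|u^i\nabla^i\phi\|_{L^2(\S)}\ls \sum_{j=0}^{i-1}\bigl(\|u^{j+1}\nabla^j \t f\|_{L^2(\S)}+\|u^{j+1}\nabla^j({}^{*}\t f)\|_{L^2(\S)}+\|u^j\nabla^j\phi\|_{L^2(\S)}\bigr).
\]
Since $\eps$ is parallel with respect to $\nabla$ and pointwise $|\nabla^j({}^{*}\t f)|_{\sg} = |\nabla^j \t f|_{\sg}$, the middle sum is absorbed into the first, producing the stated estimate.

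There is essentially no obstacle beyond verifying the identity $\curl \phi = {}^{*}\!\div \phi$ and checking that the commutation of ${}^{*}$ with higher covariant derivatives is harmless; both are routine on a Riemannian $2$-surface once the metric compatibility of $\eps$ is used. One should, however, be careful at the highest order $i=5$ to ensure that the bootstrap regime of Proposition \ref{ellipticthm} (and its underlying dependence on the bounds for $K$, $\chi$, $\chib$ in \eqref{BA.1}--\eqref{BA.4}) is genuinely available — this is exactly the range where $\|u^i\nabla^i\phi\|_{L^2(\S)}$ is controlled, and no additional geometric input beyond what already underlies Proposition \ref{ellipticthm} is needed.
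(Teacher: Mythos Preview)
Your proposal is correct and follows exactly the same approach as the paper: the paper simply notes that for a symmetric traceless $2$-tensor one has $\curl\phi = {}^{*}\div\phi$, and then invokes Proposition~\ref{ellipticthm} with $\t g = {}^{*}\t f$ and $\t h = 0$. Your additional remarks about $\eps$ being parallel and the $i=5$ case are accurate but already implicit in the statement of Proposition~\ref{ellipticthm}.
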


\section{$L^2(\S)$ estimates for Ricci coefficients}\label{secRicci}

We then proceed to derive the $L^2(\S)$ estimates for the Ricci coefficients and their first four angular derivatives. 

We first state a useful lemma, which can be verified directly.\footnote{A detailed proof is provided in Proposition 6.1 of \cite{AL}.} 
\begin{lemma}\label{product}
Under the assumptions of Theorem \ref{main thm} and the bootstrap assumptions \eqref{BA.1}, \eqref{BA.2}, \eqref{BA.3}, \eqref{BA.4}, it holds that
\begin{equation*}
\begin{split}
\sum_{i_1+i_2\leq 4}\|u^{i_1+i_2+1}\nabla^{i_1}\q^{i_2+1}\|_{L^{2}(\S)}\ls &\sum_{i_1\leq 4}\|u^{i_1+1}\nab^{i_1} \q\|_{L^2(\S)}
\end{split}
\end{equation*}
and
\begin{equation*}
\begin{split}
\sum_{i_1+i_2\leq 2}\|u^{i_1+i_2+2}\nabla^{i_1}\q^{i_2+1}\|_{L^{\infty}(\S)}
\ls &\sum_{i_1\leq 2}\|u^{i_1+2}\nab^{i_1} \q\|_{L^\infty(\S)}.
\end{split}
\end{equation*}
In particular, by \eqref{BA.1}, we have
$$\sum_{i_1+i_2\leq 4}\|u^{i_1+i_2+1}\nabla^{i_1}\q^{i_2+1}\|_{L^{2}(\S)}+\sum_{i_1+i_2\leq 2}\|u^{i_1+i_2+2}\nabla^{i_1}\q^{i_2+1}\|_{L^{\infty}(\S)}\ls \ub \at O.$$
\end{lemma}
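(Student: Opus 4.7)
The plan is to prove both inequalities by Hölder combined with the Sobolev embedding of Proposition \ref{Sobolev}, after expanding
$$\nabla^{i_1}\q^{i_2+1}=\sum_{j_1+\cdots+j_{i_2+1}=i_1}\nabla^{j_1}\q\cdots\nabla^{j_{i_2+1}}\q$$
and relabeling within each multi-index so that $j_1=\max_k j_k$. The key combinatorial observation driving the whole argument is that $j_k\le j_1$ together with $j_1+j_2\le i_1\le 4$ forces $j_2\le 2$, and hence $j_k\le 2$ for every $k\ge 2$; in the $L^\infty$ version, the stronger constraint $i_1\le 2$ makes every $j_k\le 2$ automatically.

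For the $L^2$ estimate I will keep the top factor $\nabla^{j_1}\q$ in $L^2(\S)$ (which, once multiplied by $|u|^{j_1+1}$, sits directly in the right-hand side of the lemma), and push each $\nabla^{j_k}\q$ with $k\ge 2$ into $L^\infty(\S)$ via
$$\|\nabla^{j_k}\q\|_{L^\infty(\S)}\lesssim \sum_{m\le 2}\|u^{m-1}\nabla^{m+j_k}\q\|_{L^2(\S)},$$
whose right-hand side only involves $\nabla^{\le 4}\q$ since $j_k+m\le 2+2=4$, and is therefore controlled by the $\mathcal{O}_{\cdot,2}$ hierarchy via \eqref{BA.1}. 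The same scheme works for the $L^\infty$ estimate with all factors placed in $L^\infty(\S)$, either through Proposition \ref{Sobolev} or directly by the $\mathcal{O}_{\cdot,\infty}$ bootstrap norms.

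The main bookkeeping step is tracking the powers of $|u|$. Distributing $|u|^{j_1+1}$ onto the top factor and $|u|^{j_k+2}$ onto each of the $i_2$ lower factors produces total weight $|u|^{i_1+2i_2+1}$, which exceeds the target weight $|u|^{i_1+i_2+1}$ by $|u|^{i_2}$. The deficit is absorbed by noting that by \eqref{BA.1} and \eqref{largeness B} each lower factor satisfies $\|u^{j_k+1}\nabla^{j_k}\q\|_{L^\infty(\S)}\lesssim \ub\at O/|u|\lesssim B^{-1}\ll 1$, so each supplies an extra $|u|^{-1}$ with room to spare. This yields the first inequality, and the parallel argument (with the analogous factor $(\ub\at O/|u|)^{i_2}$ again negligible) yields the second. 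The ``in particular'' statement then drops out by substituting the $L^2$ and $L^\infty$ pieces of \eqref{BA.1}. The only genuine obstacle is the $|u|$-weight bookkeeping against the smallness from \eqref{largeness B}; once that mismatch is isolated and resolved, the remainder of the proof is the standard Hölder-and-Sobolev product estimate.
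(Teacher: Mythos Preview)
Your proposal is correct and takes essentially the same approach as the paper, which simply defers to Proposition~6.1 of \cite{AL}: expand the product, keep the highest-derivative factor in $L^2$ (resp.\ $L^\infty$), bound the remaining low-derivative factors in $L^\infty$ via Sobolev and the bootstrap assumption \eqref{BA.1}, and absorb the $|u|$-weight mismatch using $\ub\at O/|u|\lesssim 1$. Your weight bookkeeping is slightly roundabout---if you distribute $|u|^{j_k+1}$ (rather than $|u|^{j_k+2}$) to each lower factor the powers already sum to $i_1+i_2+1$ with no deficit---but the argument is sound either way.
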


We now start from the $L^2(\S)$ estimates for $\O \chih$.

\begin{proposition}\label{chih.bd}
With the assumptions of Theorem \ref{main thm} and the bootstrap assumptions \eqref{BA.1}, \eqref{BA.2}, \eqref{BA.3}, \eqref{BA.4}, it holds
\[
 \sum_{i\leq 4}\|u^i\nab^i(\O\chih)\|_{L^2(S_{u,\ub})} \ls a^{\frac 12}.
\]

\end{proposition}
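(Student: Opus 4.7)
The plan is to propagate $\Omega\chih$ from the initial hypersurface $H_{-1}$ using the null structure equation
\begin{equation*}
\nab_3\chih + \tfrac{1}{2}\tr\chib\,\chih = \nab\hot\eta + 2\omb\,\chih - \tfrac{1}{2}\tr\chi\,\chibh + \eta\hot\eta + \nab\phi\hot\nab\phi,
\end{equation*}
namely the third equation in \eqref{null.str1}, rather than the $\nab_4$ equation, in order to avoid the uncontrolled curvature component $\alpha$. The initial data assumption \eqref{upper bound main thm} provides $\|\nab^i\chih\|_{L^2(S_{-1,\ub})} \lesssim \at$ for $i \leq 5$, which matches the desired output size, while $\chih(u,0) = 0$ by spherical symmetry of Christodoulou's data would only be needed for the alternative $\nab_4$ propagation.

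First I would commute with $\nab^i$ for $0\le i\le 4$ via Proposition \ref{commuteeqn}, producing
\begin{equation*}
\Omega\nab_3(\nab^i\chih) + \tfrac{i+1}{2}\Omega\tr\chib\,\nab^i\chih = \Omega G_i,
\end{equation*}
where $G_i$ is a schematic sum of terms of the form $\nab^{j_1}\q^{j_2}\nab^{j_3}\!\p$, $\nab^{j_1}\q^{j_2}\nab^{j_3+1}\q$, and $\nab^{j_1}(\nab\phi\,\hot\,\nab\phi)$ with $j_1+j_2+j_3 \le i$. Then I would apply Proposition \ref{evolution lemma} with $\lambda_0 = (i+1)/2$, hence $\lambda_1 = i$, to obtain
\begin{equation*}
|u|^i\|\nab^i\chih\|_{L^2(S_{u,\ub})} \lesssim \|\nab^i\chih\|_{L^2(S_{-1,\ub})} + \int_{-1}^u |u'|^i \|\Omega G_i\|_{L^2(S_{u',\ub})}\,du'.
\end{equation*}
Each term in $G_i$ would then be controlled via Lemma \ref{product}, the Sobolev embedding in Proposition \ref{Sobolev}, and the bootstrap assumptions \eqref{BA.1}--\eqref{BA.4}. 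The crucial smallness $\ub\at/|u| \leq B^{-1}$ from \eqref{largeness B} absorbs all factors of the bootstrap constants $O,R,S$, so the $u'$-integral contributes at most $\at$. The passage between $\nab^i(\Omega\chih)$ and $\Omega\nab^i\chih$ is handled using Proposition \ref{Omega} and the bootstrap bound on $\h\omega = -\tfrac{1}{2}\nab_4\log\Omega$.

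The main obstacle occurs at $i=4$: the RHS contains the top-order term $\nab^5\eta$, which exceeds the basic bootstrap $\M O_{4,2}(\q) \le O$ and must be controlled by $\t{\M O}_{5,2}(\eta)$, which is only an $L^2_u L^2(S)$-type bound and carries the weight factor $\F$ in its definition. Converting the $L^1_{u'}$ integral above into an $L^2_{u'}$ estimate via Cauchy--Schwarz and then carefully absorbing the $\F$-weight (using \eqref{smallness} in the interior region and the choice of $u_1$ in the exterior) is the delicate crux of the argument. A secondary care point is the $\nab\phi\hot\nab\phi$ contribution, for which the quadratic structure together with the $\ub^{1/2}\at^{1/4}/|u|^{1/2}\Omega$ smallness must be exploited to keep the final constant independent of $S$.
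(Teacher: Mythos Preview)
Your approach is essentially the paper's: both propagate $\Omega\chih$ along $\nab_3$ via Proposition \ref{evolution lemma} with $\lambda_0=(i+1)/2$. The paper commutes directly with $\nab^i(\Omega\chih)$ rather than $\nab^i\chih$, which is slightly cleaner because the $2\omb\chih$ term cancels and the equation becomes $\Omega\nab_3(\Omega\chih)+\tfrac12\Omega\tr\chib(\Omega\chih)=\Omega^2\nab\hot\eta+\p\q+\q\q$, but this is not a substantive difference.

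Your assessment of the $\nab^5\eta$ ``crux'' overstates the difficulty. Look again at the definition of $\tilde{\mathcal O}_{5,2}$ in \eqref{tilde O 5 2}: the $L^2_u$ norms for $\eta$ are $\tfrac{\Omega}{\ub^{1/2}a^{1/2}}\|u^5\nab^5\eta\|_{L^2_u L^2(S)}$ and $\tfrac{1}{\ub^{1/2}a^{1/2}}\|\Omega u^5\nab^5\eta\|_{L^2_u L^2(S)}$, and neither carries the $\F$ weight. The $\F$-type factors in \eqref{tilde O 5 2} are attached only to $\etb$ (in $L^2_{\ub}$) and to $\Omega\tr\chib,\Omega\chibh,\tr\chib,\chibh$ (in $L^2_u$). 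The paper simply uses the $\Omega^2$ prefactor on $\nab\eta$: one Cauchy--Schwarz gives
\[
\|u^4\Omega^2\nab^5\eta\|_{L^1_u L^2(S)}\le \Big\|\tfrac{\Omega}{|u|}\Big\|_{L^2_u}\,\|\Omega u^5\nab^5\eta\|_{L^2_u L^2(S)}\lesssim \frac{\ub^{1/2}a^{1/2}}{|u|^{1/2}}\,\tilde{\mathcal O}_{5,2}\le a^{1/2}B^{-1/4},
\]
with no delicate weight absorption required.

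Your secondary worry about $\nab\phi\hot\nab\phi$ is also unnecessary: by the norm conventions in Section \ref{subsection norms}, $\nab_{e_A}\phi$ belongs to the $\q$ class, so this term is schematically $\q\,\q$ and falls under the same routine estimate as the other lower-order contributions.
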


\begin{proof}
We employ the null structure equation 
$${\color{black}\O}\nab_3({\color{black}\O}\chih)+\f12 {\color{black}\O}\tr\chib({\color{black}\O}\chih)={\color{black}\O^2}\nab\eta+\p\q+\q\q.$$ 
After commuting this equation with $i$ angular derivatives, we obtain

\begin{equation*}
\begin{split}
&\O\nab_3 \nab^i(\O\chih) +\frac{i+1}{2}\O\tr\chib\nab^i(\O\chih)\\
= &{\color{black}\O^2}\nab^{i+1}\eta+\sum_{i_1+i_2=i} \nab^{i_1}\q^{i_2+2}+\sum_{i_1+i_2+i_3=i}\nabla^{i_1}\q^{i_2+1}\nab^{i_3}\p+\sum_{i_1+i_2+i_3=i-1}\frac {1}{|u|} \nabla^{i_1}\q^{i_2+1}\nab^{i_3}\p.
\end{split}
\end{equation*}
{\color{black}Note that the fourth term only presents when $i\geq 1$.}
To bound $\|u^i\nab^{i}(\O\chih)\|_{L_u^{\i}L_{\ub}^{\i}L^{2}(\S)}$, we then apply Proposition \ref{evolution lemma} with $\lambda_0=\frac{i+1}{2}$ and estimate each term on the right. 

For the $\nab^{i+1}\eta$ term, when $i=4$ we have
\begin{equation*}
\begin{split}
\|u^4{\color{black}\O^2}\nab^5\eta\|_{L_{u}^{1}L^{2}(\S)}\leq \|\frac{{\color{black}\O}}{|u|}\|_{L^2_u}\|{\color{black}\O}u^5\nab^5\eta\|_{L_{u}^{2}L^{2}(\S)}\ls \frac{\ub^{\f12}a^{\f12}}{|u|^{\f12}} \tilde{\M {O}}_{5,2}{\color{black}\leq \at B^{-\f14}}.
\end{split}
\end{equation*}

The rest terms are lower order. For example, we control the last term as

\begin{equation}\label{chih.4}
\begin{split}
&\sum_{i\leq 4}\|\sum_{i_1+i_2+i_3=i-1}u^{i-1}\nabla^{i_1}\q^{i_2+1}\nab^{i_3}\p\|_{L_{u}^{1}L^{2}(\S)}
\ls \frac{\ub a O^2}{|u|}\leq \at B^{-\f14}.
\end{split}
\end{equation}
All the remaining terms on the right obey the same upper bound $\at B^{-\f14}$. 

For initial data, we have $\sum_{i\leq 4}\|\nab^i (\O\chih)\|_{L^\infty_{\ub}L^2(S_{-1,\ub})}\leq \at$. Gathering all above estimates, we get 
$$\sum_{i\leq 4}\|u^i\nab^i(\O\chih)\|_{L^{\i}_{u}L^2(\S)}\ls a^{\frac 12}.$$

\end{proof} 

We then derive the estimates for $\O\trch$ and its derivatives. We will also point out that $\nab^i(\O\trch)$ obeys a better bound than the standard estimate for $\nab^i\p$.

\begin{proposition}\label{trch.bd}
Under the assumptions of Theorem \ref{main thm} and the bootstrap assumptions \eqref{BA.1}, \eqref{BA.2}, \eqref{BA.3}, \eqref{BA.4}, it holds
\begin{align*}
\sum_{i\leq 4} \|u^{i+1}\nab^i(\O\trch-&\O^2(\O^{-1}\tr\chi)_0)\|_{L^2(\S)}\ls \ub a O^2\ll |u|\at,\\
\sum_{i\leq 4} \|u^{i}\nab^i(\O&\trch)\|_{L^2(\S)}\ls \f{\ub a O^2}{|u|}\ll \at.
\end{align*}
\end{proposition}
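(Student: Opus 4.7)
The plan is to derive a transport equation in the $e_4$ direction for the renormalized scalar
$$\psi:=\O\trch-\O^2(\O^{-1}\tr\chi)_0,$$
which vanishes on $\Hb_0$ by the very definition of $(\O^{-1}\tr\chi)_0$. Combining the null structure equation
$\nab_4\trch+\tfrac12(\trch)^2=-|\chih|^2-2\h\omega\trch-(\nab_4\phi)^2$
with the identity $\nab_4\O=-2\O\h\omega$, and using that $(\O^{-1}\tr\chi)_0$ depends only on $u$, a direct computation yields
\begin{equation*}
\O\nab_4\psi+4\O\h\omega\,\psi \;=\; -\tfrac12(\O\trch)^2 \;-\; |\O\chih|^2 \;-\; (\O\nab_4\phi)^2.
\end{equation*}

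I then commute this equation with $\nab^i$ for $0\le i\le 4$ using the schematic formula in Proposition \ref{commuteeqn}, and integrate via Proposition \ref{transport}. The damping coefficient $4\O\h\omega$ is absorbed by Gr\"onwall using the bootstrap bound $\|\O\h\omega\|_{L^\infty(\S)}\ls \at O/|u|$ together with the smallness $\ub\at/|u|\ls B^{-1}$ from \eqref{largeness B}. On the right-hand side, the $|\O\chih|^2$ term is the representative one: distributing derivatives and combining the Sobolev-type bound $\|\O\chih\|_{L^\infty(\S)}\ls \at O/|u|$ (from Proposition \ref{Sobolev}) with $\sum_{i\le 4}\|u^i\nab^i(\O\chih)\|_{L^2(\S)}\ls \at$ (from Proposition \ref{chih.bd}) yields $\||u|^{i+1}\nab^i(|\O\chih|^2)\|_{L^2(\S)}\ls aO^2$. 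The $(\O\trch)^2$ and $(\O\nab_4\phi)^2$ sources are controlled identically using the bootstrap assumption \eqref{BA.1} together with Lemma \ref{product} for commutator errors. Integrating in $\ub'$ and summing over $i\le 4$ produces the first estimate $\sum_{i\le 4}\|u^{i+1}\nab^i\psi\|_{L^2(\S)}\ls \ub aO^2$, and the strict improvement $\ub aO^2\ll |u|\at$ follows from \eqref{BA.5}. The second estimate is then obtained by dividing the weights by $|u|$, after controlling the contribution $\O^2(\O^{-1}\tr\chi)_0$ through transporting $\nab^i\log\O$ from $\Hb_0$ (where $\log\O(u,0)$ is spherically symmetric, so all nontrivial angular derivatives vanish) via $\nab_4\log\O=-2\h\omega$.

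The main obstacle is really bookkeeping rather than a genuine analytic difficulty: the damping coefficient $4\O\h\omega$ produced by the $\O^2$-rescaling is nontrivial here, unlike in the vacuum $\Omega\equiv 1$ gauge used in earlier short-pulse works, and the renormalization by $\O^2(\O^{-1}\tr\chi)_0$ must be chosen so that the subtracted equation simultaneously vanishes at $\ub=0$ and retains the clean quadratic source structure without generating any new linear terms that would spoil the desired $\ub a O^2$ bound. Once this structural point is in place, the remaining analysis parallels the argument for Proposition \ref{chih.bd}, with integration performed in the opposite null direction.
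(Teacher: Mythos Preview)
Your approach is correct and close in spirit to the paper's, but you renormalize at a different $\O$-weight, and this is worth a brief remark. The paper transports the quantity $\O^{-1}\trch-(\O^{-1}\tr\chi)_0$ rather than your $\psi=\O\trch-\O^2(\O^{-1}\tr\chi)_0$. Its equation
\[
\O\nab_4\bigl(\O^{-1}\trch-(\O^{-1}\tr\chi)_0\bigr)=-|\chih|^2-(\nab_4\phi)^2-\tfrac12(\trch)^2
\]
has \emph{no linear damping term}, so Gr\"onwall is never invoked; the price is that one carries an $\O^{-2}$ weight through the estimates (the quadratic sources $|\chih|^2=\O^{-2}|\O\chih|^2$, etc.\ yield the bound $\O^{-2}\ub aO^2$) and then multiplies by $\O^2$ at the end using $\nab_A\O=\tfrac12\O(\eta_A+\etb_A)$. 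Your version trades this for the damping $4\O\h\omega\,\psi$, absorbed via Gr\"onwall and bootstrap; the commutator terms $\nab^j(\O\h\omega)\nab^{i-j}\psi$ are then extra sources, all of size $\ub aO^2$ after integrating, so nothing is lost. Both routes deliver the same bound; the paper's is marginally cleaner because it avoids the Gr\"onwall step and the associated commutator bookkeeping altogether.
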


\begin{proof}
We use the following null structure equation 
$${\color{black}\O}\nab_4 ({\color{black}\O^{-1}}\tr\chi)=-|\chih|^2-(\nab_4\phi)^2-\f12(\tr\chi)^2.$$
Note that $\O^{-1}\tr\chi(u,0)=2(1-\ms)/|u|$. Utilizing $\nab_4(\O^{-1}\tr\chi)_0=0$, we have
 $${\color{black}\O}\nab_4 ({\color{black}\O^{-1}}\tr\chi-(\O^{-1}\tr\chi)_0)=-|\chih|^2-(\nab_4\phi)^2-\f12(\tr\chi)^2.$$

\noindent Commuting this equation with $i$ angular derivatives, we get the below form

\begin{equation*}
\begin{split}
&\O\nab_4 \nab^i(\O^{-1}\trch-(\O^{-1}\tr\chi)_0)\\
=& \sum_{i_1+i_2+i_3+i_4=i}\nab^{i_1}\q^{i_2}\nab^{i_3}(\chih, \partial_4\phi)\nab^{i_4}(\chih, \partial_4\phi)+\sum_{i_1+i_2+i_3+i_4=i}\O^{-1}\nab^{i_1}\q^{i_2}\nab^{i_3}\p\nab^{i_4}\trch\\
&+\sum_{i_1+i_2+i_3=i}\frac{{\color{black}1}}{|u|}\nab^{i_1}\q^{i_2}\nab^{i_3}\p.
\end{split}
\end{equation*}
We then employ Proposition \ref{transport} and control the right hand side in the $\|u^{i+1}\cdot\|_{L^1_{\ub}L^2(\S)}$ norm to bound $\|u^{i+1}\nab^i(\O^{-1}\trch-(\O^{-1}\tr\chi)_0)\|_{L^\i_{\ub}L^2(\S)}$.
\begin{remark}
{\color{black}In below estimates and for the rest of this article, when there is no danger of confusion, we use $\O^{-1}$ to stand for $\|\O^{-1}\|_{L^{\infty}(\S)}$.}
\end{remark}
Employing the bootstrap assumption \eqref{BA.1} together with Sobolev embedding in Proposition \ref{Sobolev}, for the case $i_2=0$, we first control 

\begin{equation*}
\begin{split}
&\sum_{i\leq 4}\|\sum_{i_3+i_4=i}u^{i+1}\nab^{i_3}(\chih,\partial_4\phi)\nab^{i_4}(\chih, \partial_4\phi)\|_{L_{\ub}^{1}L^{2}(\S)}\\
\ls& \ub\sum_{i_3\leq 2}\|u^{i_3+1}\nab^{i_3}(\chih, \partial_4\phi)\|_{L_{u}^{\i} L^{\i}_{\ub} L^{\i}(\S)}\sum_{i_4\leq 4}\|u^{i_4}\nab^{i_4}(\chih, \partial_4\phi)\|_{L_{u}^\infty L^{\i}_{u} L^2(\S)}
\ls \O^{-2}\ub a O^2.
\end{split}
\end{equation*} 

\noindent Similarly, contribution from $\sum_{i_1+i_2+i_3+i_4=i}\nab^{i_1}\q^{i_2}\nab^{i_3}\p\nab^{i_4}\trch$ with $i_2=0$ obeys
\begin{equation*}
\begin{split}
&\sum_{i\leq 4}\|\sum_{i_3+i_4=i}\O^{-1}u^{i+1}\nab^{i_3}\p\nab^{i_4}\trch\|_{L_{\ub}^{1}L^{2}(\S)}\\
\ls & \frac{\ub \at O}{|u|}\sum_{i\leq 4}\|u^{i+1}\nab^{i}(\O^{-1}\trch-(\O^{-1}\tr\chi)_0)\|_{L_{u}^\infty L^{\i}_{u} L^2(\S)}+{\color{black}\O^{-2}\ub a O^2} .
\end{split}
\end{equation*} 
\noindent When $i_2=0$, we can also control the third term $\sum_{i_1+i_2+i_3=i}\frac{1}{|u|}\nab^{i_1}\q^{i_2}\nab^{i_3}\p$ by
\begin{equation*}
\begin{split}
\sum_{i\leq 4}\|u^{i}\nab^{i}\p\|_{L_{\ub}^{1}L^{2}(\S)}
\ls \ub\sum_{i\leq 4}\|u^{i}\nab^{i}\p\|_{L_{u}^\infty L^{\i}_{u} L^2(\S)}
\ls & \ub \at O=\O^{-2}\ub a\cdot \f{\O^2 O}{\at} \leq \O^{-2}\ub a B^{-\f12}.
\end{split}
\end{equation*} 

\noindent For the case $i_2\geq 1$, using the bootstrap assumption \eqref{BA.1} together with Proposition \ref{product}, for the first two terms we also have 
\begin{equation*}
\begin{split}
\sum_{i\leq 4}\|\sum_{i_1+i_2+i_3+i_4=i-1}u^{i+1}\nab^{i_1}\q^{i_2+1}\nab^{i_3}\p\nab^{i_4}\p\|_{L_{\ub}^{1}L^{2}(\S)}\ls  \frac{\ub^2 a^{\f32}O^3}{|u|}\leq \O^{-2}\ub a B^{-\f14}.
\end{split}
\end{equation*}
The third term can be bounded similarly via
\begin{equation*}
\begin{split}
\sum_{i\leq 4}\|\sum_{i_1+i_2+i_3=i-1}u^{i}\nab^{i_1}\q^{i_2+1}\nab^{i_3}\p\|_{L_{\ub}^{1}L^{2}(\S)}\ls  \frac{\ub^2 a O^2}{|u|}\leq \O^{-2}\ub a B^{-\f14}. 
\end{split}
\end{equation*} 

Gathering above estimates, we obtain
\begin{equation*}
\begin{split}
&\sum_{i\leq 4} \|u^{i+1}\nab^i(\O^{-1}\trch-(\O^{-1}\tr\chi)_0)\|_{L^{\i}_u L^\infty_{\ub}L^2(\S)}\\
\ls &\O^{-2}\ub a O^2+\f{\ub\at O}{|u|}\sum_{i\leq 4} \|u^{i+1}\nab^i(\O^{-1}\trch-(\O^{-1}\tr\chi)_0)\|_{L^{\i}_ uL^\infty_{\ub}L^2(\S)}.
\end{split}
\end{equation*}
Since $\ub\at O/|u|\leq B^{-\f12}\ll 1$, we hence prove
{\color{black}
$$\sum_{i\leq 4} \|u^{i+1}\nab^i(\O^{-1}\trch-(\O^{-1}\tr\chi)_0)\|_{L^{\i}_u L^\infty_{\ub}L^2(\S)}\ls \O^{-2}\ub a O^2.$$
Using $\nab_A\O=\O(\eta_A+\etb_A)/2$, this further implies
}
\begin{align*}
\sum_{i\leq 4} \|u^{i+1}\nab^i(\O\trch-&\O^2(\O^{-1}\tr\chi)_0)\|_{L^{\i}_u L^\infty_{\ub}L^2(\S)}\ls \ub a O^2,\\
\sum_{i\leq 4} \|u^{i}\nab^i(\O&\trch)\|_{L^{\i}_u L^\infty_{\ub}L^2(\S)}\ls \f{\ub a O^2}{|u|} \ll \at.
\end{align*}
\end{proof} 

We proceed to prove the estimates for $\h\om$ and its derivatives. And $\h\o$ also obeys a slightly better bound than a general component $\p$. 

\begin{proposition}\label{om.bd}
Under the assumptions of Theorem \ref{main thm} and the bootstrap assumptions \eqref{BA.1}, \eqref{BA.2}, \eqref{BA.3}, \eqref{BA.4}, it holds
$$\sum_{i\leq 4} \|u^i\nab^i(\O\h\om)\|_{L^2(\S)}\ls \at+\sum_{i\leq 4} \|u^i\nab^i(\O \nab_4\phi)\|_{L^2(\S)}.$$ 
\end{proposition}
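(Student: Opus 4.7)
The plan is to parallel the proofs of Propositions \ref{chih.bd} and \ref{trch.bd}, now starting from the $\nab_3$ null structure equation for $\h\om$ in \eqref{null.str2} and integrating along $e_3$ from $u=-1$. Using $\nab_3\O=-2\O\omegab$, one rewrites the structure equation as a transport equation for the rescaled quantity $\O\h\om$:
\[
\O\nab_3(\O\h\om)=\O^2\Bigl[-\eta\!\cdot\!\etb+\tfrac12|\etb|^2+\tfrac12\rho+\tfrac16 D_3\phi\,D_4\phi+\tfrac{1}{12}D^A\phi\,D_A\phi\Bigr],
\]
since the $2\O^2\h\om\,\omegab$ produced by the structure equation cancels exactly against the $-2\O^2\omegab\,\h\om$ arising from differentiating $\O$. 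Commuting with $\nab^i$ for $0\leq i\leq 4$ via Proposition \ref{commuteeqn} and applying Proposition \ref{evolution lemma} with $\lambda_0=i/2$ (so $\lambda_1=i-1$), then multiplying through by $|u|$, yields schematically
\[
|u|^i\|\nab^i(\O\h\om)\|_{L^2(\S)}\ls |u|\,\|\nab^i(\O\h\om)\|_{L^2(S_{-1,\ub})}+|u|\int_{-1}^{u}|u'|^{i-1}\|\O F_i\|_{L^2(S_{u',\ub})}\,du',
\]
where $\O F_i$ collects the commuted right-hand side. Since the paper fixes $\O\equiv 1$ and $d^A\equiv 0$ along $H_{-1}$, one has $\h\om|_{H_{-1}}=-\tfrac12 e_4(\log\O)|_{H_{-1}}=0$, so the initial-data contribution vanishes identically and everything on the right must come from the bulk integral.

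The bulk terms split into three groups. The curvature piece $\tfrac12\O^2\nab^i\rho$ is controlled by $\mathcal R_{i,\infty}$, giving $\|\O^2\,u^i\nab^i\rho\|_{L^2(\S)}\ls \O^2\ub a/|u|^2$; inserting this into the integral and multiplying by $|u|$ produces a contribution of size $\O^2\ub a/|u|\leq (\ub\at/(|u|\O^{3/2}))\cdot\O^{7/2}\cdot\at\ls \at/B\ll\at$ by \eqref{largeness B}. The quadratic Ricci and scalar-field terms $\eta\cdot\etb$, $|\etb|^2$, $D^A\phi\,D_A\phi$, together with every lower-order commutator term produced by Proposition \ref{commuteeqn}, are controlled the same way using the bootstrap assumptions \eqref{BA.1}--\eqref{BA.4} and Lemma \ref{product}, and are likewise $\ls\at$.

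The decisive term, and the source of the $\O\nab_4\phi$ dependence in the stated bound, is $\tfrac16 D_3\phi\,D_4\phi$. Decomposing $\O\nab_3\phi=(\O\nab_3\phi)_0+[\O\nab_3\phi-(\O\nab_3\phi)_0]$ with $(\O\nab_3\phi)_0(u)=\sqrt{2}\sqrt{\ms/(1-\ms)}\,|u|^{-1}$ and the deviation being $\q$-type, the deviation produces only lower-order contributions. The principal piece $(\O\nab_3\phi)_0\,(\O\nab_4\phi)$ is a function of $u$ alone times $\O\nab_4\phi$, so its $\nab^i$ commutator simply yields $(\O\nab_3\phi)_0\cdot\nab^i(\O\nab_4\phi)$; substituting into the evolution integral gives
\[
|u|\int_{-1}^u |u'|^{i-1}\cdot\tfrac{1}{|u'|}\,\|\nab^i(\O\nab_4\phi)\|_{L^2(S_{u',\ub})}\,du'\;\ls\;\sup_{u'\in[-1,u]}\|u'^i\nab^i(\O\nab_4\phi)\|_{L^2(S_{u',\ub})},
\]
which is exactly $\sum_{i\leq 4}\|u^i\nab^i(\O\nab_4\phi)\|_{L^2(\S)}$ on the right-hand side of the proposition. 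The principal obstacle I expect is structural rather than technical: the critical $1/|u|$ weight of $(\O\nab_3\phi)_0$, a direct imprint of Christodoulou's naked-singularity interior data along $\ub=0$, prevents this contribution from being absorbed into $\at$ and compels it to appear as an independent $\O\nab_4\phi$ term on the right. All other contributions proceed by direct analogy with the arguments of Propositions \ref{chih.bd} and \ref{trch.bd}.
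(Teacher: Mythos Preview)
Your overall architecture is right: the same $\O$-weighted transport equation for $\O\h\om$, the vanishing initial data on $H_{-1}$, Proposition \ref{evolution lemma} with $\lambda_0=i/2$, and the correct identification of $(\O\nab_3\phi)_0\cdot(\O\nab_4\phi)$ as the borderline term that forces $\sum_{i\leq 4}\|u^i\nab^i(\O\nab_4\phi)\|_{L^2(\S)}$ onto the right-hand side. That part matches the paper's proof.

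The gap is in your treatment of the curvature term. You write that $\tfrac12\O^2\nab^i\rho$ is controlled by $\mathcal R_{i,\infty}$, but $\mathcal R_{i,\infty}$ in \eqref{R i infty} is defined only for $0\leq i\leq 1$ and, more importantly, is \emph{not} among the bootstrap assumptions \eqref{BA.1}--\eqref{BA.4}. The bootstrap norm $\mathcal R$ in \eqref{BA.2} consists of $\mathcal R_{i,1}$ and $\mathcal R_{i,2}$ for $1\leq i\leq 4$, and neither of these contains $\rho$; they control the renormalized quantities $K-\tfrac{1}{|u|^2}-\tfrac14\nab^A\phi\nab_A\phi$, $\sigmac$, $\O\b-\tfrac12\O\nab_4\phi\nab\phi$, and $\beb+\tfrac12\nab_3\phi\nab\phi$. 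So for $i=2,3,4$ you have no direct handle on $\nab^i\rho$.

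The paper closes this gap by using the Gauss equation \eqref{null.str3} to replace $\rho$ by $-K-\tfrac14\tr\chi\tr\chib+\tfrac12\chih\cdot\chibh+\tfrac16 D_3\phi D_4\phi+\tfrac13 D_A\phi D^A\phi$ before commuting. This produces the schematic source
\[
\O\nab_3(\O\h\om)=\O^2 K+\p\,\q+\q\,\q+\O\tr\chi\cdot\O\tr\chib+\O\nab_4\phi\cdot\O\nab_3\phi,
\]
and now the $K$ term, split as $K=(K-\tfrac{1}{|u|^2})+\tfrac{1}{|u|^2}$, is controlled for all $i\leq 4$ via the $L^2_u$ norm in $\mathcal R_{i,1}$ (bootstrap \eqref{BA.2}) together with Cauchy--Schwarz in $u$. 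The substitution also generates the term $\O\tr\chi\cdot\O\tr\chib$, whose highest-derivative piece $\tfrac{1}{|u|}\nab^i(\O\tr\chi)$ requires the improved bound of Proposition \ref{trch.bd}; this is absent from your outline because you never perform the Gauss substitution. Once you make that replacement, the rest of your argument goes through exactly as you describe.
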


\begin{proof}
We employ the following schematic null structure equation for $\o$:
$$\O\nab_3(\O\h\o)={\color{black}\O^2}K+\p\q+\q\q+ \O\tr\chi\O\tr\chib+\O\nab_4\phi\cdot\O\nab_3\phi.$$ 
By commuting it with angular derivative for $i$ times, we obtain
\begin{equation*}
\begin{split}
&\O\nab_3 \nab^i(\O\h\o) +\frac i2 \O\trchb\nab^i(\O\h\o)\\
= &\sum_{i_1+i_2+i_3\leq i}{\color{black}\O^2}\nab^{i_1}\q^{i_2}\nab^{i_3}K+\sum_{i_1+i_2=i} \nab^{i_1}\q^{i_2+2}+\sum_{i_1+i_2+i_3=i}\nabla^{i_1}\q^{i_2+1}\nab^{i_3}\p\\
&+\sum_{i_1+i_2+i_3=i-1}\frac {1}{|u|} \nabla^{i_1}\q^{i_2+1}\nab^{i_3}\p+\frac {1}{|u|} \nab^i(\O\trch, \O\nab_4\phi).
\end{split}
\end{equation*}
{\color{black}Note that we only encounter the fourth term when $i\geq 1$.} 

We then apply Proposition \ref{evolution lemma} with $\lambda_0=\frac i2$. Since $\h\o(-1, \ub)$=0, we control $\|u^{i-1}\nab^{i}(\O\h\o)\|_{L_u^{\i}L_{\ub}^{\i}L^{2}(\S)}$ by the $\|u^{i-1}\cdot\|_{L_{\ub}^{\i}L_{u}^{1}L^{2}(\S)}$ norm of the right hand side.  

\begin{equation*}
\begin{split}
\mbox{Denote} \quad F_i:=&\sum_{i_1+i_2=i} \nab^{i_1}\q^{i_2+2}+\sum_{i_1+i_2+i_3=i}\nabla^{i_1}\q^{i_2+1}\nab^{i_3}\p\\
&+\sum_{i_1+i_2+i_3=i-1}\frac {1}{|u|} \nabla^{i_1}\q^{i_2+1}\nab^{i_3}\p.
\end{split}
\end{equation*}
We first have
\begin{equation*}
\begin{split}
\sum_{i\leq 4} \|u^{i-1}F_i\|_{L^{\i}_{\ub}L^1_uL^2(\S)}\ls\sum_{i\leq 4} \frac{1}{|u|}\|u^{i}F_i\|_{L^{\i}_{\ub}L^1_uL^2(\S)}\ls \frac{\ub^{\f12}\at O^2}{|u|^{\f32}}\leq{\color{black}\f{\at B^{-\f14}}{|u|}}.
\end{split}
\end{equation*}

\noindent For the terms containing $K$, if needed, we also employ $K\ls (\K)+\frac 1{|u|^2}$. Applying bootstrap assumptions \eqref{BA.3},\eqref{BA.1} and Proposition \ref{product}, their contributions obey
$$\lesssim \f{\ub^{\f12}\at(R+O)}{|u|^{\f32}}+\f{\ub \at O'}{|u|^{\f52}}\cdot \ub^{\f12}\at (R+O)\leq \f{\at B^{-\f14}}{|u|}.$$

We proceed to control the contribution from $\frac{1}{|u|^2}$. Note that for this term, the case $i_2=0$ happens only when $i=0$. For this scenario, the term obeys
\begin{equation}\label{om.0}
\begin{split}
\|u^{-1}\frac{1}{|u|^2}\|_{L_{u}^{1}L^{2}(\S)}
\ls &\frac 1{|u|}.
\end{split}
\end{equation}
For $i_2\geq 1$, it holds 
\begin{equation*}
\begin{split}
\sum_{i\leq 4}\|\sum_{i_1+i_2=i-1}u^{i-1}\nab^{i_1}\q^{i_2+1}\frac{1}{|u|^2}\|_{L_{u}^{1}L^{2}(\S)}\ls \frac{\ub\at O}{|u|^2}{\color{black}\leq\f{\at B^{-\f14}}{|u|}}. 
\end{split}
\end{equation*}
Here in the second inequality, we employ the bootstrap assumption \eqref{BA.1}. 
Collecting all the above estimates, we prove
\begin{equation}\label{om.1}
\sum_{i\leq 4}\|\sum_{i_1+i_2+i_3=i-1}{\color{black}\O^2}u^{i-1}\nab^{i_1}\q^{i_2+1}\nab^{i_3}(\K)\|_{L_{u}^{1}L^{2}(\S)}\ls \frac{1}{|u|}+{\color{black}\f{\at B^{-\f14}}{|u|}}.
\end{equation}
For the remaining term, using the bound for $\nab^i(\O\trch)$ in Proposition \ref{trch.bd}, we have 
\begin{equation}\label{om.2}
\begin{split}
\sum_{i\leq 4}\|u^{i-2}\nab^{i}(\O\trch, \O\nab_4\phi)\|_{L_{u}^{1}L^{2}(\S)}\ls& \frac{\ub a}{|u|^2}+\frac 1{|u|}+\f{1}{|u|}\sum_{i\leq 4}\|u^i \nab^i (\O\nab_4\phi)\|_{L^2(\S)}\\
 \ls& \f{\ub^{\f12} a^{\f34}}{|u|^{\f32}}+\f{1}{|u|}\sum_{i\leq 4}\|u^i \nab^i (\O\nab_4\phi)\|_{L^2(\S)}.
\end{split}
\end{equation} 

Combining all above estimates, we obtain 
$$\sum_{i\leq 4}\|u^{i-1}\nab^i (\O\h\om)\|_{L^2(\S)}\ls  \frac 1{|u|}+\f{\at B^{-\f14}}{|u|}+\f{1}{|u|}\sum_{i\leq 4}\|u^i \nab^i (\O\nab_4\phi)\|_{L^2(\S)} .$$
Multiplying by $|u|$ on both sides, we then derive
\begin{equation}\label{omega L2 nabla4phi}
\sum_{i\leq 4}\|u^{i}\nab^i (\O\h\om)\|_{L^2(\S)}\ls 1+\at B^{-\f14}+\sum_{i\leq 4}\|u^i \nab^i (\O\nab_4\phi)\|_{L^2(\S)}.
\end{equation}
\end{proof}
\begin{remark}
Note that in Proposition \ref{L2 scalar field}, we will prove
$$\sum_{i\leq 4}\|u^i \nab^i (\O\nab_4\phi)\|_{L^2(\S)}\ls \at.$$
Once this is obtained, back to \eqref{omega L2 nabla4phi} we will get
$$\sum_{i\leq 4}\|u^i \nab^i (\O\h\om)\|_{L^2(\S)}\ls \at.$$
\end{remark}

We now estimate the $L^2(\S)$ norms of the remaining Ricci coefficients.
\begin{proposition} \label{q.bd}
Under the assumptions of Theorem \ref{main thm} and the bootstrap assumptions \eqref{BA.1}, \eqref{BA.2}, \eqref{BA.3}, \eqref{BA.4}, we get
\[
 \sum_{i\leq 4}\frac{1}{\ub\at}\|u^{i+1}\nab^i\q\|_{L^2(\S)} \ls 1{\color{black}+\M O(\h\o)+\tilde{\M {O}}_{5,2}(\h\o)}+\mathcal R(\beta).
\]
{\color{black}In particular, it} holds 
\[
 \sum_{i\leq 4}\frac{1}{\ub\at}\|u^{i+1}\nab^i\q\|_{L^2(\S)} \ls 1{\color{black}+\M O(\h\o)+\f{1}{\ub^{\f12}a^{\f12}}\|u^5\nab^5(\O\h\om)\|_{L^\i_uL^2_{\ub}L^2(\S)}}+\mathcal R(\b).
\]

\end{proposition}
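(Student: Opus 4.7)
The plan is to integrate the $\nabla_4$ null structure equations for each member of $\q$ starting from $\ub=0$, where every component of $\q$ vanishes: for $\eta,\etab,\O\chibh$ this is because Christodoulou's data along $\Hb_0$ is spherically symmetric, and for $\O\omb-(\O\omb)_0$, $\O\tr\chib-(\O\tr\chib)_0$, $\O\nab_3\phi-(\O\nab_3\phi)_0$ this is built into the renormalization. After commuting with $\nabla^i$ ($i\leq 4$) and applying Proposition \ref{transport}, one gets
\[
\|u^{i+1}\nabla^i\q\|_{L^2(\S)}\ \lesssim\ \int_0^{\ub}\|u^{i+1}\,\O\nabla_4\nabla^i\q\|_{L^2(S_{u,\ub'})}\,d\ub',
\]
and it remains to bound the RHS by the quantities on the right of the proposition.

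For $\eta$, the equation $\nabla_4\eta=-\chi\cdot(\eta-\etab)-\beta-\tfrac12\nabla\phi\,\nabla_4\phi$ gives $\O\b$ as the only borderline source; by Cauchy--Schwarz in $\ub$ against the $\M R_{i,1}(\b)$ norm one loses exactly a factor of $\ub^{1/2}\cdot\ub^{1/2}\at=\ub\at$, which after dividing produces the $\M R(\b)$ term. For $\etab$ one avoids a separate evolution equation by using $\etab+\eta=2\nabla\log\O$ and $\nabla_4\log\O=-\h\o$. This yields
\[
\etab(u,\ub,\o)=-2\int_0^{\ub}\nabla\h\o\,d\ub'\ -\ \eta(u,\ub,\o)
\]
(since $\O(u,0)$ is spherically symmetric, so $\nabla\log\O|_{\ub=0}=0$). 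For $i\leq 3$ the $\nabla\h\o$ term is absorbed into $\M O(\h\o)$; at $i=4$, however, $\nabla^5\h\o$ is not controlled by $\M O$ and must be handled by the $L^2_{\ub}L^2(\S)$ norm, giving precisely the $\tilde{\M O}_{5,2}(\h\o)$ contribution. The lower-order nonlinearities $\chi\cdot\q$ and $\nabla\phi\nabla_4\phi$ are absorbed using Lemma \ref{product}, the bootstrap assumption \eqref{BA.1}, and the already-proved bounds on $\p$.

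For $\O\omb-(\O\omb)_0$ and $\O\tr\chib-(\O\tr\chib)_0$, the respective $\nabla_4$ equations contribute a $\rho$ source; we rewrite $\rho$ using the constraint equation as $-(K-1/|u|^2)-1/|u|^2+\tr\chi\tr\chib/4-\chih\cdot\chibh/2+(\text{scalar field})$, so that only the renormalized Gauss curvature $K-1/|u|^2-\tfrac14\nabla^A\phi\nabla_A\phi$ (or its difference with the scalar-field part already bounded in $\M O$) carries curvature weight, which is controlled by $\M R_{i,1}(K,\sigmac)\subset\M R(\b)$ via the coupled energy hierarchy (or, at lower order, by $\M R_{i,\i}$). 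For $\O\chibh$ the $\nabla_4$ equation produces $\nabla\hot\etab$; at $i=4$ this would require $\tilde{\M O}_{5,2}(\etab)$, which is not at our disposal at this stage. We circumvent this by again substituting $\etab=2\nabla\log\O-\eta$, trading $\nabla^5\etab$ for $2\nabla^6\log\O-\nabla^5\eta$; the first is converted via $\nabla_4\log\O=-\h\o$ into a $\ub$-integral of $\nabla^5\h\o$, contributing $\tilde{\M O}_{5,2}(\h\o)$, and the second is the top-order $\nabla^5\eta$ source bounded by $\M R(\b)$ exactly as in the $\eta$ estimate.

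The main obstacle is the top-order $i=4$ estimate for $\chibh$: naively closing it would demand $\tilde{\M O}_{5,2}(\etab)$ (not yet available), and the exchange of derivatives via $\etab=2\nabla\log\O-\eta$ is what keeps the right-hand side of the proposition restricted to $\tilde{\M O}_{5,2}(\h\o)$ plus $\M R(\b)$. A second subtlety is the appearance of $(\O\omb)_0\sim 1/|u|$ and $(\O\nabla_3\phi)_0\sim 1/|u|$, which generate linear terms proportional to the quantity being estimated; these are absorbed using \eqref{smallness} (in both the exterior and interior regions) to close the transport bound. Once all contributions are added, dividing by $\ub\at$ and combining yields the advertised estimate.
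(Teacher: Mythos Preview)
Your overall strategy matches the paper, and your handling of $\eta$ and of $\etab$ via $\etab=2\nabla\log\O-\eta$ is essentially how the paper's schematic source term $\nabla(\O\h\o)$ arises. The gap is in your treatment of $\O\chibh$ at the top order $i=4$ (and, by the same mechanism, of $\O\tr\chib-(\O\tr\chib)_0$, whose $\nabla_4$-equation also carries a $2\O^2\div\etab$ term that you do not address).

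You claim that $\nabla^6\log\O$ can be converted, via $\nabla_4\log\O=-2\h\o$, into a $\ub$-integral of $\nabla^5\h\o$. This is off by one derivative: integrating and taking six angular derivatives yields $\nabla^6\log\O\approx -2\int_0^{\ub}\nabla^6\h\o\,d\ub'$ (modulo lower-order commutators), which is beyond the reach of $\tilde{\M O}_{5,2}(\h\o)$. Likewise, your assertion that $\nabla^5\eta$ is ``bounded by $\M R(\b)$ exactly as in the $\eta$ estimate'' is off by one: the $\eta$ estimate in this proposition only reaches $\nabla^4\eta$, and controlling $\nabla^5\eta$ at this stage would again require a bootstrap bound from $\tilde{\M O}_{5,2}$. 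So your substitution trades one top-order quantity for two others of the same type and does not close.

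The paper's actual mechanism is simpler and rests on a point you overlook: the $(\nabla\hot\etab)$ source in the $\nabla_4(\O\chibh)$ equation carries an $\O^2$ weight. At $i=4$ one writes
\[
\|\O^2 u^i\nabla^5\etab\|_{L^1_{\ub}L^2(\S)}\ \lesssim\ \frac{\ub^{1/2}}{|u|}\,\O^2\,\|u^5\nabla^5\etab\|_{L^2_{\ub}L^2(\S)}
\]
and invokes the bootstrap bound $\|u^5\nabla^5\etab\|_{L^2_{\ub}L^2(\S)}\leq \ub^{1/2}a^{1/4}\cdot\frac{\ub^{1/2}a^{1/4}}{|u|^{1/2}\O}\,\tilde O$ from \eqref{BA.4}; the extra $\O^2$ then produces a factor $\O\cdot\frac{\ub^{1/2}a^{1/4}}{|u|^{1/2}}$, and the smallness hierarchy \eqref{BA.5} together with $\O\ll 1$ absorbs $\tilde O$ into the final bound $\frac{\ub\at}{|u|}$. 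Thus $\tilde{\M O}_{5,2}(\etab)$ \emph{is} at your disposal---as a bootstrap assumption---and what makes it harmless here is the $\O^2$ weight, not an algebraic avoidance.
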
 

\begin{proof}

For $\q\in\{\eta,\nab\log\Omega,\O\trchb-(\O\tr\chib)_0,\O\chibh {\color{black},\O\omb-(\O\omb)_0}\}$, the below schematic transport equation holds 
\begin{equation}\label{Omega nabla 4 psi}
\O\nab_4\q=\O\b{\color{black}+\nab(\O\h\om)}+{\color{black}\O^2}K+{\color{black}\O^2}\nab\etab+\p\q+\q\q+\f{1}{|u|}\p. \footnote{
{\color{black}
On the right of the equation \eqref{Omega nabla 4 psi}, we have that $\O^2 K$ obeys at least the same and even better estimates compared with $K$.  When there is no danger of confusion, we sometimes also write $K$ instead of $\O^2 K$ for the notational convenience and then the $\O^2$ could be viewed as a coefficient like $1$ and is usually omitted in the schematic writing.  
}
}
\end{equation}

\noindent By commuting the above equation with $i$ angular derivatives, we have 
\begin{equation*}
\begin{split}
\O\nab_4 \nab^i\q= &\nab^i(\O\b)+\nab^{i+1}({\color{black}\O\h\om}, {\color{black}\O^2}\etab)+\sum_{i_1+i_2+i_3=i} {\color{black}\O^2}\nabla^{i_1}\q^{i_2}\nab^{i_3}K\\
&+\sum_{i_1+i_2+i_3=i}\nabla^{i_1}\q^{i_2+1}\nab^{i_3}(\p,\q)+\sum_{i_1+i_2+i_3=i}\frac {1}{|u|}\nabla^{i_1}\q^{i_2}\nab^{i_3}\p.
\end{split}
\end{equation*}
\noindent Notice $\nab(\O\omb)_0=0$. Due to the {\color{black}naked-singularity} initial data along $\ub=0$, all the quantities $\q$ are initially $0$. By Proposition \ref{transport}, hence to estimate the quantity $\|u^i\nab^i\q\|_{L_{\ub}^{\i}L^{2}(\S)}$, we only need to bound the $\|u^i\cdot\|_{L_{\ub}^{1}L^{2}(\S)}$ norm of the right hand side. We proceed to control each of the terms
in the equation. We bound the $\O\beta$ term first 

\begin{equation}\label{q.1}
\sum_{i\leq 4}\|u^i\nab^i(\O\b)\|_{L_{\ub}^{1}L^{2}(\S)}\leq \frac{\ub^{\f12}}{|u|}\sum_{i\leq 4}\|u^{i+1}\nab^i(\O\b)\|_{L_{\ub}^{2}L^{2}(\S)}\ls \f{\ub\at}{|u|}\M R(\b).
\end{equation}
{\color{black}
We then estimate the term $\nab^{i+1}(\O\h\om)$ and get
\begin{equation}\label{q.2}
\begin{split}
&\sum_{i\leq 4}\|u^i\nab^{i+1}(\O\h\om)\|_{L_{\ub}^{1}L^{2}(\S)}\\
\ls &\frac{\ub}{|u|}\sum_{i\leq 3}\|u^{i+1}\nab^{i+1}{\color{black}(\O\h\o)}\|_{L_{\ub}^{\i}L^{2}(\S)}+\frac{\ub^{\f12}}{|u|}\|u^5\nab^5(\O\h\om)\|_{L_{\ub}^{2}L^{2}(\S)}\\
\ls &\frac{\ub a^{\frac 12}}{|u|}(\M O_{i,2}(\h\o)+\tilde{\M {O}}_{5,2}{\color{black}(\h\o)}).
\end{split}
\end{equation}
}

For the $\nab^{i+1}\etab$ term, using the bootstrap assumptions \eqref{BA.1} and \eqref{BA.4}, it holds
\begin{equation}\label{q.3}
\begin{split}
&\sum_{i\leq 4}\|{\color{black}\O^2}u^i\nab^{i+1}\etab\|_{L_{\ub}^{1}L^{2}(\S)}\\
\ls &\frac{\ub}{|u|^2}\sum_{i\leq 3}\|u^{i+2}\nab^{i+1}\q\|_{L_{\ub}^{\i}L^{2}(\S)}+\frac{\ub^{\f12}}{{\color{black}|u|}}\|{\color{black}\O^2 u^5}\nab^5\etab\|_{L_{\ub}^{2}L^{2}(\S)}\\
\ls &\frac{\ub^2 a^{\frac 12} O}{|u|^2}+{\color{black}\f{\ub^{\f12}}{|u|}\cdot\O^2\cdot\ub^{\f12}\af\cdot\f{\ub^{\f12}a^{\f14}}{|u|^{\f12}\O} }\cdot\tilde{O}\leq \f{\ub^2 a^{\f34} O}{|u|^2}+\f{\ub^{\f32}a^{\f34}}{|u|^{\f32}}\leq \f{\ub\at}{|u|}.
\end{split}
\end{equation} 

Employing the bootstrap assumptions \eqref{BA.1}, \eqref{BA.3}, we then bound the term containing the Gauss curvature $K$ {\color{black}and we write  
$$K=(K-\f{1}{|u|^2}-\f14\nab^A\phi\nab_A\phi)+\f{1}{|u|^2}+\f14\nab^A\phi\nab_A\phi$$
to deduce the estimate.} It holds
\begin{equation}\label{q.4}
\begin{split}
&\sum_{i\leq 4}\|\sum_{i_1+i_2+i_3=i} {\color{black}\O^2}u^i\nab^{i_1}\q^{i_2}\nab^{i_3}K\|_{L_{\ub}^{1}L^{2}(\S)} \leq \f{\ub\at}{|u|}. 
\end{split}
\end{equation}

Applying \eqref{BA.1} to control $\p$ and using Proposition \ref{product} to estimate the product of $\q$, as well as using Sobolev embedding in Proposition \ref{Sobolev}, for $i\leq 4$ we bound the fourth term and the final term as follows: 
\begin{equation}\label{q.6}
\begin{split}
\|\sum_{i_1+i_2+i_3=i}u^i\nabla^{i_1}\q^{i_2+1}\nab^{i_3}(\p,\q)\|_{L_{\ub}^{1}L^{2}(\S)}\ls \frac{\ub^2 a O^2}{|u|^2}\leq \f{\ub\at B^{-\f14}}{|u|},
\end{split}
\end{equation}

\begin{equation}\label{q.7} 
\begin{split}
\|\sum_{i_1+i_2+i_3=i} u^{i-1}\nabla^{i_1}\q^{i_2}\nab^{i_3}\p\|_{L_{\ub}^{1}L^{2}(\S)}\ls \f{\ub \at}{|u|}(1+\M O_{i,2}(\h\o))+\frac{\ub^2 a O^2}{|u|^2}.
\end{split}
\end{equation}
{\color{black}

\noindent Notice that here $\|u^i\nab^i\p\|_{L^{\i}_{\ub}L^2(\S)}\leq \at (1+\M O_{i,2}(\h\o))$ was derived before.}

Collecting all above estimates, we hence obtain 
$$\sum_{i\leq 4}\|u^i\nab^i\q\|_{L^{\i}_{\ub}L^2(\S)}\ls \f{\ub\at}{|u|}[1{\color{black}+\sum_{i\leq 4}\M O_{i,2}(\h\o)+\tilde{\M {O}}_{5,2}(\h\o)}+\mathcal R(\b)].$$
\end{proof}

For future use, we also prove an improved bound for $\nab^i({\color{black}\O}\trchb-(\O\tr\chib)_0)$ for $i\geq 1$:

\begin{proposition} \label{trchb.bd}
Under the assumptions of Theorem \ref{main thm} and the bootstrap assumptions \eqref{BA.1}, \eqref{BA.2}, \eqref{BA.3}, \eqref{BA.4}, it holds that
\[
 \sum_{1\leq i\leq 4}\|u^i\nab^i(\O\trchb-(\O\tr\chib)_0)\|_{L^2(\S)} \ls \frac{\ub^{\f32} a^{\f34}}{|u|^{\f32}}{\color{black}\leq \f{\ub\at B^{-\f12}}{|u|}}.
\]

\end{proposition}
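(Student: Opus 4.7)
The plan is to exploit two structural advantages of the null structure equation for $\O\trchb$ that are absent from the generic $\O\nab_4\q$ equation \eqref{Omega nabla 4 psi}: after multiplying the $\nab_4\trchb$ equation by $\O$ and using $\nab_4\O=-2\O\h\om$, the $\h\om\trchb$ terms cancel exactly, and neither $\O\b$ nor $\nab(\O\h\om)$ appears on the right. These are precisely the terms responsible for the $\ub\at/|u|$ size in \eqref{q.1}--\eqref{q.2}. Combining this cancellation with the Gauss-equation renormalization of $\rho$ will upgrade the bound to the claimed $\ub^{\f32}a^{\f34}/|u|^{\f32}$.

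First I would derive
\begin{equation*}
\O\nab_4(\O\trchb) = -\tfrac12(\O\trch)(\O\trchb) + 2\O^2\rho - \O^2\chih\cdot\chibh + 2\O^2\div\etb + 2\O^2|\etb|^2 - \tfrac13 \O^2 D_3\phi D_4\phi + \tfrac13 \O^2 D_A\phi D^A\phi.
\end{equation*}
For $1\leq i\leq 4$, since $(\O\trchb)_0$ depends only on $u$, we have $\nab^i(\O\trchb-(\O\trchb)_0)=\nab^i(\O\trchb)$ and the initial data at $\ub=0$ vanish. Commuting the above with $\nab^i$ and applying Proposition \ref{transport}, it then suffices to bound the $\|u^i\,\cdot\,\|_{L^1_{\ub}L^2(\S)}$ norm of each term on the right of the commuted equation.

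The two leading terms are $\O^2\nab^{i+1}\etb$ and $\O^2\nab^i\rho$. For the first, the bound \eqref{q.3} in the proof of Proposition \ref{q.bd} already delivers the target rate $\lesssim \ub^{\f32}a^{\f34}/|u|^{\f32}$. For the second, I would use the Gauss equation to substitute
\begin{equation*}
\rho = -\big(K-\tfrac{1}{|u|^2}-\tfrac14\nab^A\phi\nab_A\phi\big) - \tfrac{1}{|u|^2} -\tfrac14\nab^A\phi\nab_A\phi - \tfrac14\trch\trchb + \tfrac12\chih\cdot\chibh + \tfrac16 D_3\phi D_4\phi + \tfrac13 D_A\phi D^A\phi.
\end{equation*}
For $i\geq 1$ the $\nab^i(1/|u|^2)$ piece vanishes; the renormalized Gauss curvature derivative is controlled by $\M R_{i,2}$ and yields the desired rate after the $\O^2$ weight and $L^1_{\ub}$ integration, with \eqref{smallness} absorbing the compensating factor $1+\ub^{\f12}a^{\f14}/|u|^{\f12}\O$. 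The remaining terms are lower order: the $(\O\trchb)_0=-2/|u|$ portion of $-\f12(\O\trch)(\O\trchb)$ produces $\nab^i(\O\trch)/|u|$, which by Proposition \ref{trch.bd} contributes $\lesssim \ub^2 a O^2/|u|^2\lesssim \ub^{\f32}a^{\f34}/|u|^{\f32}$ via \eqref{BA.5}; the $\f{i}{2}\O\trch\,\nab^i(\O\trchb)$ generated by $[\O\nab_4,\nab^i]$ is absorbed on the left through Proposition \ref{evolution lemma}; and all quadratic contributions in $\q$-type or $\nab\phi$-type quantities are bounded by Lemma \ref{product} and the bootstrap, all of them subcritical compared with $\ub^{\f32}a^{\f34}/|u|^{\f32}$.

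The main obstacle is to ensure the $\rho$ contribution actually improves over the generic estimate \eqref{q.4}. This hinges on isolating the lone $1/|u|^2$ piece of $\rho$, whose angular derivatives vanish for $i\geq 1$, and then checking that the $\O^2$ prefactor absorbs the auxiliary weight in $\M R_{i,2}$: the product $\O^2\cdot\ub^{\f12}a^{\f14}/(|u|^{\f12}\O) = \O\,\ub^{\f12}a^{\f14}/|u|^{\f12}$ is controlled by $\O\cdot B^{-\f12}\O^{\f34}\leq B^{-\f12}$ via \eqref{largeness B}, which is the decisive smallness behind the sharp rate $\ub^{\f32}a^{\f34}/|u|^{\f32}\lesssim \ub\at B^{-\f12}/|u|$.
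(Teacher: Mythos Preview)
Your approach is correct and essentially matches the paper's: both exploit that the $\O\nab_4(\O\trchb)$ equation lacks the $\O\b$ and $\nab(\O\h\om)$ terms responsible for \eqref{q.1}--\eqref{q.2}, then use \eqref{q.3} for $\O^2\nab^{i+1}\etb$, the vanishing of $\nab^i(|u|^{-2})$ for $i\geq 1$ together with the renormalized-$K$ bound for the curvature piece, and Proposition~\ref{trch.bd} for the $\tfrac{1}{|u|}\nab^i(\O\trch)$ term. The paper simply writes $\O^2 K$ in the schematic equation (the Gauss substitution already folded in) rather than passing through $\rho$, but this is the same computation.

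One minor slip: the commutator contribution $(\O\chi)\,\nab^i(\O\trchb-(\O\trchb)_0)$ is not handled by Proposition~\ref{evolution lemma} --- that proposition is for $\nab_3$ transport, and the $\tfrac{i}{2}\O\trchb$ separation you invoke is the $\nab_3$ commutation pattern, not the $\nab_4$ one. In the $\nab_4$ commutation of Proposition~\ref{commuteeqn} this term sits inside the $\p\q$ sum and is bounded directly as in \eqref{q.6}; no absorption on the left is needed.
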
  

\begin{proof}
Note that $\O\trchb-(\O\tr\chib)_0$ obeys the equation
$$\O\nab_4(\O\trchb-(\O\tr\chib)_0)={\color{black}\O^2}K+{\color{black}\O^2}\nab\etab+\p\q+\q\q+\f1u \O\trch.$$
By commuting the above equation with $i$ angular derivatives, we have 
\begin{equation*}
\begin{split}
&\O\nab_4 \nab^i(\O\trchb-(\O\tr\chib)_0) \\
= &\O^2\nab^{i+1}\etab+\sum_{i_1+i_2+i_3=i}\O^2\nabla^{i_1}\q^{i_2}\nab^{i_3}K+\sum_{i_1+i_2+i_3=i}\nabla^{i_1}\q^{i_2+1}\nab^{i_3}(\p,\q)\\
&+\sum_{i_1+i_2+i_3=i}\frac {1}{|u|}\nabla^{i_1}\q^{i_2}\nab^{i_3}(\O\trch).
\end{split}
\end{equation*}

We revisit the proof of Proposition \ref{q.bd}. Except for the last term, all other terms obey the $\ub^{\f32}a^{\f34}/|u|^{\f32}$ upper bound. And for the last term, we have 
\begin{equation*}
\begin{split}
\sum_{1\leq i\leq 4}\|\sum_{i_1+i_2+i_3=i} u^{i-1}\nabla^{i_1}\q^{i_2}\nab^{i_3}(\O\trch)\|_{L_{\ub}^{1}L^{2}(\S)}\ls \f{\ub^{\f32} a^{\f34}}{|u|^{\f32}}{\color{black}\leq \f{\ub\at B^{-\f12}}{|u|}}.
\end{split}
\end{equation*}
Here besides using Proposition \ref{product}, we also employ the improved bound for $\nab^i(\O\trch)$ derived in Propositions \ref{trch.bd}. Collecting this with \eqref{q.3} and \eqref{q.6}, we then obtain the desired conclusion.
\end{proof}

We also notice that the $\nab_4$ equation for $K-\f{1}{|u|^2}{\color{black}-\f14 \nab^A\phi\nab_A\phi}$ contains exactly the same type of terms as the $\nab_4$ equation for $\nab\q$. Hence, for $\nab^i(\K{\color{black}-\f14 \nab^A\phi\nab_A\phi})$ we have the same estimates as for $\nab^{i+1}\q$ for $i\leq 3$. 
\begin{proposition} \label{K.bd}
With the assumptions of Theorem \ref{main thm} and the bootstrap assumptions \eqref{BA.1}, \eqref{BA.2}, \eqref{BA.3} and \eqref{BA.4}, it holds
\[
 \sum_{i\leq 3}\|u^{i+2}\nab^{i}(\K{\color{black}-\f14 \nab^A\phi\nab_A\phi})\|_{L^\i_uL^\i_{\ub}L^2(\S)} \ls \ub a^{\f12} [1+\sum_{i\leq 4}\M O_{i,2}(\h\o)+\tilde{\M O}_{5,2}(\h\o)+\M R].
\]
In particular,
\[
 \sum_{i\leq 3}\|u^{i+1}\nab^{i}(\K{\color{black}-\f14 \nab^A\phi\nab_A\phi})\|_{L^\i_uL^\i_{\ub}L^2(\S)} \ls \f{\ub\at}{|u|}[1+\sum_{i\leq 4}\M O_{i,2}(\h\o)+\tilde{\M O}_{5,2}(\h\o)+\M R]\ll 1.
\]
\end{proposition}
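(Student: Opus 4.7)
\textbf{Proof proposal for Proposition \ref{K.bd}.} The plan is to imitate the proof of Proposition \ref{q.bd} applied to the renormalized Gaussian curvature, using the $\nab_4$ equation for $K - \tfrac{1}{4}\nab^A\phi\nab_A\phi$ listed in Section \ref{seceqn}. First I would rewrite that equation in terms of the fully renormalized quantity $\widetilde K := K - \tfrac{1}{|u|^2} - \tfrac14\nab^A\phi\nab_A\phi$. Since $e_4 u = 0$ in the adopted double null coordinates, the correction $1/|u|^2$ is preserved by $\nab_4$, so after multiplying by $\O$ one gets
\begin{equation*}
\O\nab_4\widetilde K + \O\tr\chi\, \widetilde K = \O\cdot(\text{RHS of Bianchi}) - \O\tr\chi/|u|^2,
\end{equation*}
where the extra $\O\tr\chi/|u|^2 \sim |u|^{-3}$ is harmless. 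Splitting $\div\beta = \div(\beta-\tfrac12\nab_4\phi\nab\phi) + \tfrac12\nab^A\nab_4\phi\,\nab_A\phi + \tfrac12\nab_4\phi\,\Delta\phi$ reduces the principal curvature term to $\nab(\O\b - \tfrac12 \O\nab_4\phi\nab\phi)$, which is covered by $\M R(\beta)$.

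Next I would commute this equation with $\nab^i$ for $0 \leq i \leq 3$. Using the commutation formulas in Section \ref{commutation} together with the Codazzi replacement $\O\b = \nab\p+\p\q$, the resulting equation is schematically
\begin{equation*}
\O\nab_4\nab^i\widetilde K = \nab^i(\O\b-\tfrac12\O\nab_4\phi\nab\phi) + \nab^{i+1}(\O^2\etb) + \sum \O^2\nab^{i_1}\q^{i_2}\nab^{i_3}K + \sum\nab^{i_1}\q^{i_2+1}\nab^{i_3}(\p,\q) + \tfrac{1}{|u|}\sum\nab^{i_1}\q^{i_2}\nab^{i_3}\p + \Phi_i,
\end{equation*}
where $\Phi_i$ collects the genuinely new scalar-field contributions of schematic type $\nab^{i_1}(\p,\q,\nab\phi)^{i_2}\nab^{i_3}(\nab\phi)\cdot\nab^{i_4}(\nab_4\phi,\nab_3\phi)$. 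This is exactly the form treated in Proposition \ref{q.bd} with one extra angular derivative.

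Since for Christodoulou's naked-singularity data $K(u,0) = 1/|u|^2$ (as computed in Section \ref{setting}) and $\nab_A\phi(u,0)=0$ by spherical symmetry, $\widetilde K$ vanishes identically on $\Hb_0$. Proposition \ref{transport} then gives
\begin{equation*}
\|u^{i+2}\nab^i \widetilde K\|_{L^\infty_{\ub}L^2(\S)} \lesssim \int_0^{\ub}\|u^{i+2}\O\nab_4\nab^i\widetilde K\|_{L^2(\S)}\,d\ub'.
\end{equation*}
Each classical term on the right is estimated exactly as in \eqref{q.1}--\eqref{q.7}: after shifting by one derivative one picks up a factor of $|u|$, converting the $\ub\at/|u|$ bounds of Proposition \ref{q.bd} into the target $\ub\at$ bounds, with the same dependence $1 + \sum_{i\leq 4}\M O_{i,2}(\h\o)+\tilde{\M O}_{5,2}(\h\o)+\M R$. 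The new scalar-field terms $\Phi_i$ are handled using the available $L^2(\S)$-bounds on $\nab^i(\O\nab_4\phi)$, $\nab^i\nab_A\phi$ from Section \ref{scalar field} together with Proposition \ref{product} and Sobolev embedding; the structure $\nab\phi\cdot\nab\phi\cdot\chih$ is a quadratic small term and absorbs $\ub\at B^{-1/4}$.

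\textbf{Main obstacle.} The delicate point is the top-level contribution $\nab^{i+1}(\O^2\etb)$ for $i = 3$, which must be absorbed via the $\tilde{\M O}_{5,2}(\etb)$ norm against the $\bigl(\ub^{1/2}a^{1/4}/|u|^{1/2}\O\bigr)$-weight built into \eqref{tilde O 5 2}; this is the same mechanism used in \eqref{q.3} and works only because of the added $\O$-weight in the top-order norm. A secondary nuisance is that $\Phi_i$ contains the cubic-in-$\nab\phi$ type terms $\chih\cdot(\nab\phi\otimes\nab\phi)$, which require invoking the bootstrap $\M S$; these are sub-borderline since $\chih$ itself contributes $\ub\at/|u|$ after integration, so the product falls comfortably below the $\ub\at$ target.
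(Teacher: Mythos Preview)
Your proposal is correct and matches the paper's approach: the paper simply observes that the $\nab_4$-equation for $\widetilde K$ contains exactly the same type of terms as the $\nab_4$-equation for $\nab\q$, so the estimates of Proposition~\ref{q.bd} for $\nab^{i'}\q$ with $i'\le 4$ transfer verbatim, shifted by one power of $|u|$. One small correction to your ``main obstacle'': since here $i\le 3$, the term $\nab^{i+1}(\O^2\etb)$ involves at most four angular derivatives and is controlled already by the $\M O_{4,2}$ bootstrap, not by $\tilde{\M O}_{5,2}(\etb)$; the genuine top-order input is $\nab^{i+1}(\O\b-\tfrac12\O\nab_4\phi\nab\phi)$ at $i=3$, which is precisely where $\M R(\b)$ enters.
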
 

\section{$L^2(\S)$ Estimates for Scalar Field}\label{scalar field} 
We now move to derive the $L^2(\S)$ estimates for the scalar field. We first prove 
  
\begin{proposition}\label{L2 scalar field}
Under the assumptions of Theorem \ref{main thm} and the bootstrap assumptions \eqref{BA.1}, \eqref{BA.2}, \eqref{BA.3}, \eqref{BA.4}, we have
$$\sum_{i=0}^4\|u^i\nab^i(\O e_4\phi)\|_{L^2(\S)}\ls \at, \quad \sum_{i=0}^4\|u^i\nab^i(e_A\phi)\|_{L^2(\S)}\ls \f{\ub\at S}{|u|},$$
$$\sum_{i=0}^4\|u^i\nab^i(\O e_3\phi-(\O e_3\phi)_0)\|_{L^2(\S)}\ls \f{\ub\at S}{|u|}.$$
\end{proposition}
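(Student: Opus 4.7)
My plan is to mirror the Ricci-coefficient bounds of Section \ref{secRicci}: view each of $\O e_4\phi$, $e_A\phi$, $\O e_3\phi-(\O e_3\phi)_0$ as the unknown in a transport equation along one null direction, commute with $\nab^i$ for $0\leq i\leq 4$, and close via Propositions \ref{transport} and \ref{evolution lemma}. Since $\O e_4\phi\in\p$, the natural transport is in $e_3$ integrated from $H_{-1}$, where the hypothesis \eqref{upper bound main thm} gives $\|\nab^i(\O e_4\phi)\|_{L^2(S_{-1,\ub})}\ls \at$. Since $e_A\phi$ and $\O e_3\phi-(\O e_3\phi)_0$ are $\q$-type, the natural transport is in $e_4$ integrated from $\Hb_0$, where the spherically symmetric Christodoulou data force both quantities to vanish identically.

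For $\O e_4\phi$ the governing equation is \eqref{e3e4phi2}, which I rewrite as
\[
\O\nab_3(\O e_4\phi)+\f12\O\tr\chib\,(\O e_4\phi)=\O^2\Delta_g\phi-\f12\O\tr\chi\,\O e_3\phi+2\O^2\eta^A e_A\phi.
\]
Commuting with $\nab^i$ and applying Proposition \ref{evolution lemma} with $\lambda_0=(i+1)/2$ reduces the bound to estimating $\int_{-1}^u |u'|^i\|\O F_i\|_{L^2(S_{u',\ub})}du'$. The sources split into three kinds: (i) the Laplacian, producing $\O^2\nab^{i+1}(e_A\phi)$ plus lower-order commutator terms; for $i\leq 3$ this is absorbed by the bootstrap $\M O_{i+1,2}$ on $e_A\phi$, while the top-order case $i=4$ is handled by Cauchy--Schwarz in $u$ together with $\M S_{5,1}\leq S$; (ii) the Christodoulou contribution extracted by splitting $\O e_3\phi=(\O e_3\phi)_0+[\O e_3\phi-(\O e_3\phi)_0]$, where $(\O e_3\phi)_0=\sqrt{2}\sqrt{\ms/(1-\ms)}/|u|$ couples with $\O\tr\chi\sim \O^2/|u|$ to produce a source of size $\O^2/|u|^2$; and (iii) products of $\q$-type factors, which are lower-order. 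The delicate contribution is (ii), whose $u$-integration must be carried through with exactly the $\O$-weights of Section \ref{subsection norms} in order to be absorbed into the target bound $\lesssim \at$ via the smallness \eqref{smallness}.

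For $e_A\phi$ I use the commutator identity $\O\nab_4(e_A\phi)=e_A(\O e_4\phi)-\O\,\chi_A^{\ B} e_B\phi$, commute with $\nab^i$, and invoke Proposition \ref{transport} integrating in $\ub$ from $\ub=0$. The principal source $\nab^{i+1}(\O e_4\phi)$ is controlled for $i\leq 3$ by the bound just proved; for $i=4$ the top-order $\nab^5(\O e_4\phi)$ is handled via the $\M S_{5,1}$-bootstrap. The prefactor $\ub$ comes from the $\ub$-integration and $1/|u|$ from the $u^i$ rescaling, giving the claimed $\ub\at S/|u|$. The estimate for $\O e_3\phi-(\O e_3\phi)_0$ is analogous, now reading \eqref{e3e4phi2} as a $\nab_4$-transport equation whose sources $\O^2\Delta_g\phi$, $\O\tr\chi\cdot\O e_3\phi$, $\O\tr\chib\cdot\O e_4\phi$, $\O^2\etb^A e_A\phi$ are all controlled by the two preceding bounds and by the Ricci-coefficient estimates of Section \ref{secRicci}; the dominant source is $\f12\O\tr\chib\cdot\O e_4\phi\sim \at/|u|$, whose $\ub$-integration produces the claimed $\ub\at/|u|$ scaling.

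The main obstacle is the top-order case $i=4$ throughout: the Laplacian source engages $\nab^5$ of $\q$-type scalar-field quantities, which are only controlled by the integrated $\M S$-bootstrap norms, not pointwise in $(u,\ub)$. Handling these via Cauchy--Schwarz in the transport variable introduces an extra $|u|^{1/2}$ or $\ub^{1/2}$ factor that must be balanced against the exact $\O$-weights in the hierarchy of Section \ref{subsection norms}. The Christodoulou initial-data contributions, in particular the coupling of $\O\tr\chi$ with $(\O e_3\phi)_0$, constitute the remaining subtlety; these produce borderline-size sources that close only because of the exterior/interior splitting and the scale-critical smallness \eqref{largeness B}.
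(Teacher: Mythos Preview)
Your proposal is correct and follows essentially the same approach as the paper: the same transport equations \eqref{e3e4phi2} and the commutator identity for $\nab_{\O e_4}\nab_A\phi$, integrated in the same directions, with the top-order Laplacian sources handled via the $\M S$-bootstrap norms through Cauchy--Schwarz in the transport variable. One minor recalibration: the Christodoulou-data coupling $\O\tr\chi\cdot(\O e_3\phi)_0$ that you flag as ``delicate'' is in fact lower-order once the improved bound of Proposition \ref{trch.bd} is invoked (giving $\ub a/|u|\leq \at B^{-1}$), while the genuinely borderline terms in the $\O e_3\phi-(\O e_3\phi)_0$ estimate are $(\O e_3\phi)\nab^i(\O\chi)$ and $(\O\tr\chib)\nab^i(\O e_4\phi)$, which close only because $\M O_{i,2}(\O\chi)$ and $\M O_{i,2}(\O e_4\phi)$ have already been improved to constants independent of the bootstrap in Propositions \ref{chih.bd}, \ref{trch.bd}, and the first part of this very proposition.
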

\begin{proof} 

To estimate $\O e_4\phi$, we first rewrite equation (\ref{e3e4phi}) as
\begin{equation}\label{e3e4phi v2} 
(\O e_3)(\O e_4\phi)+\f12{\O\tr\chib}\O e_4\phi=\O^2\Delta_g \phi-\f12{\O\tr\chi}\O e_3\phi{\color{black}+2\O^2\eta^A e_A\phi}.
\end{equation}
Commuting the above equation with $i$ angular derivatives, we then obtain 
\begin{equation*}
\begin{split}
&(\O\nab_3)\nab^i(\O e_4\phi)+\f{i+1}{2}\O\tr\chib\nab^i(\O e_4\phi)\\
=&\sum_{i_1+i_2+i_3=i}\nab^{i_1}(\eta+\etb)^{i_2}\nab^{i_3}\bigg(\O^2\Delta_g \phi-\f12{\O\tr\chi}\O e_3\phi{\color{black}+2\O^2\eta^A e_A\phi} \bigg) \\
&+\sum_{i_1+i_2+i_3+i_4=i}\nab^{i_1}(\eta+\etb)^{i_2}\nab^{i_3}(\O\chibh, \O\tr\chib+\f{2}{|u|})\nab^{i_4}(\O e_4\phi)\\
&+\sum_{i_1+i_2+i_3+i_4=i-1}\nab^{i_1}(\eta+\etb)^{i_2}\nab^{i_3}(\O\beb)\nab^{i_4}(\O e_4\phi).
\end{split}
\end{equation*} 
{\color{black}Note that the last term only appears when $i\geq 1$.}Using Proposition \ref{evolution lemma} with $\lambda_0=\f{i+1}{2}$, we have that $\|u^i\nab^i(\O e_4\phi)\|_{L^2(\S)}$ is controlled by the  $\|u^i\cdot\|{L^1_{u}L^2(\S)}$ norm of the right. We first estimate 
\begin{equation*}
\begin{split}
&\|u^i \sum_{i_1+i_2+i_3=i}\nab^{i_1}(\eta+\etb)^{i_2}\nab^{i_3}(\O^2\Delta_g \phi)\|_{L^1_u L^2(\S)}\\
\leq& \|\O^2 u^i \nab^{i+1} (e_A\phi)\|_{L^1_{u}L^2(\S)}{\color{black}+\|\O^2 u^i \sum_{i_1+i_2+i_3=i-1}\nab^{i_1}(\eta+\etb)^{i_2+1}\nab^{i_3}\Delta_g\phi\|_{L^1_{u}L^2(\S)}}\\
\leq& \| \O u^{i+1} \nab^{i+1} (e_A\phi)\|_{L^2_{u}L^2(\S)}\|\f{\O}{|u|}\|_{L^2_{u}L^{\infty}(\S)}{\color{black}+\f{\ub\at O}{|u|}\cdot\f{\ub\at S}{|u|^3}|u|^2}\\
\leq& \f{\ub^{\f12}\at\M S}{|u|^{\f12}}{\color{black}+\f{\ub^2 a OS}{|u|^2}}\leq \at B^{-\f12}.
\end{split}
\end{equation*}
Here for the last line we use bootstrap assumption 
$$\|u^5\nab^5(\O e_4\phi)\|_{L^2_{\ub}L^2(\S)}+\|\O u^5\nab^5(e_A\phi)\|_{L^2_{u}L^2(\S)}\ls  \ub^{\f12}\at S.$$
The terms including $-\f12\O\tr\chi\cdot\O e_3\phi$ are lower order and they obey 
\begin{equation*}
\begin{split}
\|u^i\sum_{i_1+i_2+i_3=i}\nab^{i_1}(\eta+\etb)^{i_2}\nab^{i_3}(-\f12\O\tr\chi\cdot\O e_3\phi)\|_{L^1_u L^2(\S)}\leq& \f{\ub a}{|u|}\leq \at B^{-1},
\end{split}
\end{equation*}
where to control $\|u^i\nab^i(\O\tr\chi)\cdot\O e_3\phi\|_{L^1_u L^2(\S)}$ we employ improved estimates in Proposition \eqref{trch.bd}.  

Using the schematic Codazzi equation, we can rewrite $\O\beb=\nab\q+\q(\O\tr\chib+\q)$. All the rest terms are already estimated in Proposition \ref{chih.bd} and they are bounded by $\at$. Hence we prove
\begin{equation}\label{L2 e4 phi}
\|u^i\nab^i(\O e_4\phi)\|_{L^{\infty}_{\ub}L^2(\S)}\leq \at. 
\end{equation}

To derive estimates for $e_A\phi$, we employ the equation
{\color{black}
$$\nab_{\O e_4}\nab_A\phi=e_A(\O e_4\phi)-\O{\chi_A}^B e_B\phi.$$
}
Commuting this equation with $i$ angular derivatives, it holds
\begin{equation*}
\begin{split}
\O D_4(\nab^i\nab_A\phi)=&\sum_{i_1+i_2+i_3=i}\nab^{i_1}(\eta+\etb)^{i_2}\nab^{i_3}\nab_A(\O e_4\phi)\\
&+\sum_{i_1+i_2+i_3+i_4=i}\nab^{i_1}(\eta+\etb)^{i_2}\nab^{i_3}(\O \chi)\nab^{i_4}\nab_A\phi\\
&+\sum_{i_1+i_2+i_3+i_4=i-1}\nab^{i_1}(\eta+\etb)^{i_2}\nab^{i_3}(\O\b)\nab^{i_4}\nab_A\phi.
\end{split}
\end{equation*}

\noindent Via Proposition \ref{transport}, we then bound $\|u^{i}\nab^{i}(e_A\phi)\|_{L^2(\S)}$ by the $\|u^{i}\cdot\|_{L^1_{\ub}L^2(\S)}$ norm of the right hand side. With the aid of bootstrap assumption \ref{BA.3}, the term with the top-order derivative can be controlled by
\begin{equation*}
\begin{split}
 \|u^i \nab^{i+1} (\O e_4\phi)\|_{L^1_{\ub}L^2(\S)}\leq& \|u^{i+1}\nab^{i+1}(\O e_4\phi)\|_{L^2_{\ub}L^2(\S)}\|\f{1}{|u|}\|_{L^2_{\ub}L^{\infty}(\S)}\\
 \leq&{\ub^{\f12}\at \M S}\cdot{\f{\ub^{\f12}}{|u|}}= \f{\ub \at}{|u|}\M S.
\end{split}
\end{equation*}
All the rest terms can be estimated as in Proposition \ref{q.bd}. There is neither $\nab^i(\O\b)$ nor $\nab^{i+1}(\O\h\o)$. And they obey the upper bound $\ub\at/|u|$. Thus, we arrive at
$$\sum_{i=0}^4\|u^i\nab^i(e_A\phi)\|_{L^{\infty}_{\ub}L^2(\S)}\ls \f{\ub\at S}{|u|}.$$

To estimate $(\O e_3\phi)-(\O e_3\phi)_0$, using (\ref{e3e4phi2}) and commutation formula, we get 
\begin{equation*}
\begin{split}
&(\O \nab_4)\nab^i(\O e_3\phi-(\O e_3\phi)_0)\\
=&\sum_{i_1+i_2+i_3=i}\nab^{i_1}(\eta+\etb)^{i_2}\nab^{i_3}(\O^2\Delta_g\phi-\f12\O\tr\chi\O e_3\phi-\f12\O\tr\chib\O e_4\phi+2\O^2\eta^A e_A\phi)\\
&+\sum_{i_1+i_2+i_3+i_4=i}\nab^{i_1}(\eta+\etb)^{i_2}\nab^{i_3}(\O\chi)\nab^{i_4}(\O e_3\phi)+\sum_{i_1+i_2+i_3+i_4=i-1}\nab^{i_1}(\eta+\etb)^{i_2}\nab^{i_3}(\O\b)\nab^{i_4}(\O e_3\phi).
\end{split}
\end{equation*}  

With Proposition \ref{transport}, we control $\|u^i\nab^i[(\O e_3\phi)-(\O e_3\phi)_0]\|_{L^{\infty}_{\ub}L^2(\S)}$ by the $\|u^i\cdot\|_{L^1_{\ub}L^2(\S)}$ norm of the right. The term with the top-order derivative is bounded by   
\begin{equation*}
\begin{split}
 \|\O^2 u^i \nab^{i+1} (e_A\phi)\|_{L^1_{\ub}L^2(\S)}\leq&\|u^{i+1}\nab^{i+1}(e_A\phi)\|_{L^2_{\ub}L^2(\S)}\|\f{\O^2}{|u|}\|_{L^2_{\ub}L^{\infty}(\S)}\\
 \leq&{\ub^{\f12}\af}{\color{black}\F}\M S\cdot{\f{\ub^{\f12}}{|u|}}{\color{black}\O^2} \leq \f{\ub\at \M S}{|u|}. 
\end{split}
\end{equation*}
Here for the second inequality, we use bootstrap assumption \ref{BA.3}
$$\|u^5\nab^5(e_A\phi)\|_{L^2_{\ub}L^2(\S)}+\|\O^{-1}u^5\nab^5(\O e_3\phi)\|_{L^2_{u}L^2(\S)}\ls  {\color{black}\big(1+\f{\ub^{\f12}\af}{|u|^{\f12}\O}\big)\ub^{\f12}a^{\f14}}\M S.$$ 

\noindent Two borderline terms on the right are $(\O e_3\phi)\nab^i(\O\chi)$ and $(\O\tr\chib)\nab^i(\O e_4\phi)$ with $i=4$. And they obey
\begin{equation*}
\begin{split}
\|u^i(\O e_3\phi)\nab^{i}(\O\chi)\|_{L^1_{\ub}L^2(\S)}\leq&\mathcal{O}_{i,2}(\O\chi)\f{\ub\at}{|u|},
\end{split}
\end{equation*}
\begin{equation*} 
\begin{split}
\|u^i(\O\tr\chib)\nab^{i}(\O e_4\phi)\|_{L^1_{\ub}L^2(\S)}\leq&\mathcal{O}_{i,2}(\O e_4\phi)\f{\ub\at}{|u|}.
\end{split}
\end{equation*}
Note that by Proposition \ref{chih.bd}, Proposition \ref{trch.bd} and inequality \eqref{L2 e4 phi}, we have that
$\mathcal{O}_{i,2}(\O\chi)$, $\mathcal{O}_{i,2}(\O e_4\phi)$ depend only on initial data. 

All the rest terms can be estimated as in Proposition \ref{q.bd}. And they are
$$\leq \f{\ub\at}{|u|}\cdot\f{\ub\at}{|u|}\cdot (O+S)\leq \f{\ub\at B^{-\f12}}{|u|}.$$

\noindent We hence arrive at
$$\sum_{i=0}^4\|u^i\nab^i(\O e_3\phi-(\O e_3\phi)_0)\|_{L^2(\S)}\ls \f{\ub\at \M S}{|u|}.$$
\end{proof}

\section{Energy Estimates for Scalar Field}\label{energy scalar field} 
We now move to derive the energy bounds for the top-order derivatives of the scalar field. We first recall the below covariant decompositions\footnote{ The decompositions are the same as in \cite{Chr:book}.} 

		\begin{gather}
		D_A e_{B} = \nabla_A e_{B} +{\color{black}\frac12} \chi_{AB} e_3 +{\color{black}\frac12} \chib_{AB} e_4, \\ 
		D_3 e_A = \nabla_3 e_A + {\color{black}\eta_A} e_3, \hspace{2mm} D_4 e_A = \nabla_4 e_A + {\color{black}\ \etb_A} e_4, \\  
		D_A e_3= {\chib_A}^{\sharp B}e_B + {\color{black} \zeta_A} e_3, \hspace{2mm} D_A e_4 = {\chi_A}^{\sharp B} e_B - {\color{black}\zeta_A} e_4, \\   
		D_3 e_4 = {\color{black}2}\eta^{\sharp A} e_A +{\color{black}2}\hspace{.5mm} \omb \hspace{.5mm}e_4, \hspace{2mm} D_4 e_3 = {\color{black}2}\etb^{\sharp A} e_A +{\color{black}2}\hspace{.5mm}\h\omega \hspace{.5mm} e_3, \\ 
		D_3 e_3 = {\color{black}-2}\omb e_3, \hspace{2mm} D_4 e_4 = {\color{black}-2} \h\omega e_4.
		\end{gather}
Next we start to use equation (\ref{e3e4phi}) 
\begin{equation*}
D_3 D_4\phi=D_4 D_3\phi=\Delta_g\phi-\f12\tr\chi e_3\phi-\f12\tr\chib e_4\phi.
\end{equation*}
With the fact $-2\h\o=\nab_4\log\O=\O^{-1}\nab_4\O$ and the properties of covariant derivative
$$D_4D_3\phi=D_4(D_3\phi)-D_{D_4 e_3}\phi=D_4(D_3\phi)-2\etb^A e_A(\phi)+\O^{-1} D_4\O \,e_3\phi,$$
we can write (\ref{e3e4phi}) as
$$D_4(\O D_3\phi)-2\O \etb^A e_A\phi=\O D_4 D_3\phi=\O D_3D_4\phi=\O\Delta_g\phi-\f12\O\tr\chi e_3\phi-\f12\O\tr\chib e_4\phi.$$
This implies
\begin{equation}\label{covariant transport 0}
\O D_4(\O D_3\phi)=\O^2 \nab^A\nab_A\phi-\f12\O\tr\chi \cdot\O e_3\phi-\f12\O\tr\chib \cdot \O e_4\phi+2\O^2 \etb^A e_A\phi.
\end{equation}
Similarly, via employing the fact $-2\omb=\nab_3\log\O=\O^{-1}\nab_3\O$ and using
\begin{equation*}
\begin{split}
D_3D_4\phi=&D_3(D_4\phi)-D_{D_3 e_4}\phi=D_3(D_4\phi)-D_{2\eta^A e_A+2\omb e_4}\phi\\
=&D_3(D_4\phi)-2\eta^A e_A\phi-2\omb e_4\phi=D_3(D_4\phi)-2\eta^A e_A\phi+\O^{-1}D_3\O e_4\phi,
\end{split}
\end{equation*}
we also obtain
\begin{equation}\label{Omega wave equation}
\O D_3(\O D_4\phi)=\O^2 \nab^A\nab_A\phi-\f12\O\tr\chi \cdot\O e_3\phi-\f12\O\tr\chib \cdot \O e_4\phi+2\O^2 \eta^A e_A\phi.
\end{equation}

We then proceed to derive the desired form for the transport equation of $\nab_A\phi$. For $f$ being a scalar function, {\color{black}we define $\slashed{d}f$ to be the $1$-form restricted to each surface $\S$ of $df$ (the differential of $f$). Denote }$L, X$ being vector fields, we first have
$$(\mathcal{L}_{L}\slashed{d}f)(X)=L((\slashed{df})(X))-(\slashed{d}f)[L,X]=L(Xf)-[L,X]f=X(Lf).$$
Letting $f=\phi, \,\, L=\f{\partial}{\partial u}$ and $X=e_A$, we hence obtain 
\begin{equation}\label{Lie transport 1}
(\mathcal{L}_{\f{\partial}{\partial u}}\slashed{d}\phi)_A=\slashed{d}_A(\f{\partial}{\partial u}\phi).
\end{equation}
For $(\mathcal{L}_{\f{\partial}{\partial u}}\slashed{d}\phi)_A$, we also have
\begin{equation}\label{Lie transport 2}
(\mathcal{L}_{\f{\partial}{\partial u}}\slashed{d}\phi)_A=\f{\partial}{\partial u}(e_A\phi)-[\f{\partial}{\partial u}, e_A]\phi=\f{\partial}{\partial u}(e_A\phi)-(D_{\f{\partial}{\partial u}}e_A) \, \phi+(D_{e_A}\f{\partial}{\partial u}) \,\phi=D_{\f{\partial}{\partial u}}D_{e_A}\,\phi+D_{e_A}(\O e_3)\phi.
\end{equation}
For the last term, it holds
\begin{equation}\label{covariant transport 1}
\begin{split}
D_{e_A}(\O e_3)\phi=&\O^{-1}e_A(\O)(\O e_3\phi)+\O D_{e_A}e_3 \phi=e_A(\log \O)(\O e_3\phi)+\O({{\chib_A}^B} e_B\phi+\zeta_A e_3\phi)\\
=&\eta_A(\O e_3\phi)+\O{\chib_A}^B e_B\phi.
\end{split}
\end{equation}
Back to (\ref{Lie transport 2}), employing (\ref{Lie transport 1}) and (\ref{covariant transport 1}), we get
\begin{equation}\label{covariant transport 2}
\begin{split}
D_{\f{\partial}{\partial u}}D_{e_A}\,\phi=&e_A(\f{\partial}{\partial u}\phi)-D_{e_A}(\O e_3) \,\phi\\
=&e_A(\f{\partial}{\partial u}\phi)-\eta_A(\O e_3\phi)-\O{\chib_A}^B e_B\phi.
\end{split}
\end{equation}
At the same time, employing properties of the covariant derivative and the fact $D_3 e_A=\nab_3 e_A+\eta_A e_3$, we also have
\begin{equation}\label{covariant transport 3}
\begin{split}
D_{\f{\partial}{\partial u}}D_{e_A}\,\phi=&\f{\partial}{\partial u}(e_A\phi)-D_{D_{\f{\partial}{\partial u}e_A}}\phi=\f{\partial}{\partial u}(e_A\phi)-D_{\nab_{\O e_3}e_A+\O\eta_A e_3}\phi\\
=&\nab_{\O e_3}e_A\phi-\O \eta_A e_3\phi.
\end{split}
\end{equation}
Comparing \eqref{covariant transport 2} with \eqref{covariant transport 3} and noting $\partial/\partial u=\O e_3$, we hence obtain
\begin{equation}\label{nabla transport 1}
\nab_{\O e_3}\nab_A\phi=e_A(\O e_3\phi)-\O{\chib_A}^B e_B\phi. 
\end{equation}
In the same fashion, similar to \eqref{covariant transport 1}, we get
$$(D_{e_A}\O e_4)\phi=e_A(\log\O)\O e_4\phi+\O({\chi_A}^B e_B-\zeta_A e_4)\phi=\etb_A(\O e_4\phi)+\O{\chi_A}^B e_B\phi.$$
This implies
\begin{equation}\label{covariant transport 4}
\begin{split}
D_{e_A}D_{\O e_4}\,\phi=e_A(\O e_4\phi)-(D_{e_A}\O e_4)\,\phi=e_A(\O e_4\phi)-\etb_A(\O e_4\phi)-\O {\chi_A}^B e_B\phi.
\end{split}
\end{equation}
On the other side, we also have
\begin{equation}\label{covariant transport 5}
\begin{split}
D_{\O e_4}D_{e_A}\,\phi=&\O e_4(e_A\phi)-D_{D_{\O e_4}e_A}\phi=\O e_4(e_A \phi)-D_{\nab_{\O e_4}e_A+\O\etb_A e_4}\phi\\
=&\nab_{\O e_4} e_A\,\phi-\O \etb_A e_4\phi.
\end{split}
\end{equation}
Employing $D_{e_A}D_{\O e_4}\,\phi=D_{\O e_4}D_{e_A}\,\phi$ and collection both \eqref{covariant transport 4} and \eqref{covariant transport 5}, we then arrive at
\begin{equation}\label{nabla transport 2}
\nab_{\O e_4}\nab_A\phi=e_A(\O e_4\phi)-\O {\chi_A}^B e_B\phi. 
\end{equation}

In the rest of this section, to derive energy estimates for the scalar field, we rearrange \eqref{covariant transport 0}, \eqref{Omega wave equation}, \eqref{nabla transport 1}, \eqref {nabla transport 2} as the below paired equations 
\begin{equation}\label{phi transport 1}
\begin{split}
&{\color{black}\nab_{\O e_4}(\O \nab_A)\phi=\O e_A(\O e_4\phi)-\O^2{\chih_A\,}^B\nab_B\phi-\f12\O^2\tr\chi\nab_A\phi+\O e_4(\O)\nab_A\phi,}
\end{split}
\end{equation}
\begin{equation}\label{phi transport 2} 
\begin{split}
&\O e_3(\O e_4\phi)=\O^2 \nab^A\nab_A\phi-\f12\O\tr\chib \cdot \O e_4\phi-\f12\O\tr\chi \cdot\O e_3\phi+2\O^2 {\color{black}\eta^A}\cdot e_A\phi,
\end{split}
\end{equation}
and
\begin{equation}\label{phi transport 3}
\begin{split}
&\nab_{\O e_3}\nab_A\phi+\f12\O\tr\chib\nab_A\phi=e_A(\O e_3\phi)-\O {\underline{\chih}_A }^B \nab_B\phi,
\end{split}
\end{equation}
\begin{equation}\label{phi transport 4}
\begin{split}
&\O e_4(\O e_3 \phi)=\O^2\nab^A\nab_A\phi-\f12\O\tr\chib\cdot\O e_4\phi-\f12\O\tr\chi\cdot\O e_3\phi+2\O^2{\color{black}\etb^A}\cdot e_A\phi.
\end{split}
\end{equation}

\noindent For the first pair, on both sides multiplying ${\color{black}\O}\nab^A\phi$ for the first equation and multiplying $\O e_4\phi$ for the second equation, we get
\begin{equation}\label{phi first pair}
\begin{split}
({\color{black}\O}\nab^A\phi) (\O e_4)({\color{black}\O}\nab_A\phi)=&{\color{black}\O^2}(\nab^A\phi) e_A(\O e_4\phi)+\O\nab^A\phi\nab_A\phi(\O e_4\O)\\
&+(\O \nab^A\phi)\cdot\big(-{\color{black}\O^2} {\chih_A\,}^B \nab_B\phi-\f12{\color{black}\O^2}\tr\chi \nab_A\phi\big),\\
(\O e_4\phi)\O e_3 (\O e_4\phi)=(\O e_4\phi)\O^2& \nab^A \nab_A\phi+(\O e_4\phi)\cdot\bigg(-\f12\O\tr\chib \cdot \O e_4\phi-\f12\O\tr\chi \cdot\O e_3\phi+2\O^2 {\color{black}\eta^A}\cdot e_A\phi\bigg).
\end{split}
\end{equation}
For the second pair, multiplying $\nab^A\phi$ for the first equation and multiplying ${\color{black}\O^{-2}} (\O e_3\phi)$ for the second equation, it holds
\begin{equation}\label{phi second pair}
\begin{split}
\nab^A\phi \nab_{\O e_3}\nab_A\phi+\f12\O\tr\chib&\nab^A\phi\nab_A\phi=\nab^A\phi e_A(\O e_3\phi)-\O {\underline{\chih}_A}^B \nab_B\phi\nab^A\phi,\\
{\color{black}\O^{-2}}(\O e_3\phi)\O e_4 (\O e_3\phi)=&{\color{black}\O^{-2}}(\O e_3\phi)\O^2\nab^A \nab_A\phi\\
&+{\color{black}\O^{-2}}(\O e_3\phi)\cdot\bigg(-\f12\O\tr\chib\cdot\O e_4\phi-\f12\O\tr\chi\cdot\O e_3\phi+2\O^2{\color{black}\etb^A}\cdot e_A\phi\bigg).
\end{split}
\end{equation}

If we add the two equations in \eqref{phi first pair} and integrate on $\S$, we will encounter the potential top-order-derivative terms
\begin{equation}\label{first pair sum}
\int_{\S}\O^2(\nab^A\phi) e_A(\O e_4\phi)+(\O e_4\phi)\O^2\nab^A\nab_A\phi.
\end{equation}
Repeating the same procedure, for \eqref{phi second pair} we will have
\begin{equation}\label{second pair sum}
\int_{\S}\nab^A\phi \, e_A(\O e_3\phi)+\O^{-2}(\O e_3\phi)\, \O^2 \nab^A\nab_A\phi.
\end{equation}
Conducting integration by parts with respect to $\nab_A$ for both \eqref{first pair sum} and \eqref{second pair sum}, the potential top-order-derivative terms will be canceled. \\ 

We then recall two integration by parts formulas in \cite{AL}. Define
\begin{equation}\label{integration definition}
\begin{split}
\int_{H_{u}}\phi:=&2\int_0^{\ub}(\int_{S_{u, \ub'}}\O\phi)d\ub', \quad \int_{\Hb_{\ub}}\phi:=2\int_{-1}^u(\int_{S_{u',\ub}}\O\phi)du'\\
&\int_{D_{u,\ub}}\phi:=2\int_0^{\ub}\int_{-1}^u(\int_{S_{u', \ub'}}\O^2\phi)du'd\ub'.
\end{split}
\end{equation}

\noindent Applying $\phi_1=\O^{-1}\,{^{(1)}\phi}$ and $\phi_2={^{(2)}\phi}$ in Proposition 8.1 of \cite{AL}, we get 

\begin{proposition}\label{intbyparts34}
Suppose ${^{(1)}\phi}$ and ${^{(2)}\phi}$ are $r$ tensorfields, then it holds
{\color{black}
\begin{equation*}
\begin{split}
&\int_{D_{u,\ub}} \O^{-1}\,{^{(1)}\phi} \nabla_4{^{(2)}\phi}+\int_{D_{u,\ub}}\O^{-1}\,{^{(2)}\phi}\nabla_4{^{(1)}\phi}\\
=& \int_{\Hb_{\ub}(-1, u)} \O^{-1}\,{^{(1)}\phi}{^{(2)}\phi}-\int_{\Hb_0(-1, u)} \O^{-1}\,{^{(1)}\phi}{^{(2)}\phi}-\int_{D_{u,\ub}}\O^{-1}\trch{^{(1)}\phi}{^{(2)}\phi}.
\end{split}
\end{equation*}
}
\end{proposition}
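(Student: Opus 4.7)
The statement of the proposition itself indicates the proof strategy: invoke Proposition~8.1 of \cite{AL} with the substitution $\phi_1=\O^{-1}\,{^{(1)}\phi}$ and $\phi_2={^{(2)}\phi}$. The plan is therefore to carry out this substitution carefully, track the factors of $\O$, and verify that the resulting Leibniz correction cancels the $2\h\omega$ contribution in the original formula.

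First, I would recall Proposition~8.1 of \cite{AL}, which for general $r$-tensorfields $\phi_1,\phi_2$ on $\S$ reads
\begin{equation*}
\int_{D_{u,\ub}}\phi_1\nabla_4\phi_2+\int_{D_{u,\ub}}\phi_2\nabla_4\phi_1=\int_{\Hb_\ub(-1,u)}\phi_1\phi_2-\int_{\Hb_0(-1,u)}\phi_1\phi_2+\int_{D_{u,\ub}}(2\h\omega-\trch)\phi_1\phi_2.
\end{equation*}
This identity can itself be recovered by differentiating $\int_{S_{u,\ub}}\O\,\phi_1\phi_2\,d\mathrm{vol}_{\sg}$ in $\ub$, using $\partial_\ub\sqrt{\det\sg}=\O\trch\sqrt{\det\sg}$ and $\partial_\ub\O=-2\O^2\h\omega$ (which comes from $\h\omega=-\tfrac12\nabla_4\log\O$), and then integrating in $u$ according to the definitions in \eqref{integration definition}.

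Second, I would substitute and apply the Leibniz rule. From $\h\omega=-\tfrac12\nabla_4\log\O$ we have $\nabla_4\O^{-1}=2\h\omega\,\O^{-1}$, hence
\begin{equation*}
\nabla_4(\O^{-1}\,{^{(1)}\phi})=\O^{-1}\nabla_4{^{(1)}\phi}+2\h\omega\,\O^{-1}{^{(1)}\phi}.
\end{equation*}
Plugging $\phi_1=\O^{-1}\,{^{(1)}\phi}$, $\phi_2={^{(2)}\phi}$ into the identity above, the term $\phi_2\nabla_4\phi_1$ produces $\O^{-1}\,{^{(2)}\phi}\,\nabla_4{^{(1)}\phi}+2\h\omega\,\O^{-1}\,{^{(1)}\phi}\,{^{(2)}\phi}$, while $\phi_1\nabla_4\phi_2$, the two boundary integrals, and the $(2\h\omega-\trch)$ spacetime integral simply acquire the factor $\O^{-1}$.

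Third, I would rearrange. The two $2\h\omega\,\O^{-1}\,{^{(1)}\phi}\,{^{(2)}\phi}$ contributions — one arising from the Leibniz correction moved to the right-hand side, the other from the $2\h\omega$ term in Proposition~8.1 of \cite{AL} — cancel exactly, leaving only $-\trch\,\O^{-1}\,{^{(1)}\phi}\,{^{(2)}\phi}$ in the spacetime integral. This is precisely the claimed identity. There is no substantial obstacle; the proposition is purely a bookkeeping rewriting of the integration-by-parts formula, rescaled by $\O^{-1}$ so that it dovetails with the quantities (such as $\O e_4\phi$ and $\O\nab_A\phi$) that arise naturally in the subsequent energy estimates for the scalar field. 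The only point demanding care is the consistent use of $\h\omega=-\tfrac12\nabla_4\log\O$ to ensure that the $2\h\omega$ terms cancel with the correct sign.
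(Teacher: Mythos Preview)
Your proposal is correct and follows exactly the approach the paper indicates: the paper itself provides no detailed argument beyond the sentence ``Applying $\phi_1=\O^{-1}\,{^{(1)}\phi}$ and $\phi_2={^{(2)}\phi}$ in Proposition 8.1 of \cite{AL}, we get,'' and your computation carries out precisely this substitution with the Leibniz-rule verification of the $2\h\omega$ cancellation that the paper leaves implicit.
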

\noindent Set ${^{(1)}\phi}={^{(2)}\phi}=\phi$ in the above proposition. Together with (\ref{integration definition}) we then obtain
{\color{black}
\begin{proposition}\label{intbyparts34} 
Suppose $\phi$ is a $r$ tensorfield, then it holds
\begin{equation*}
\begin{split}
\int_{-1}^u\int_{S_{u', \ub}}|\phi|^2=\int_{-1}^u\int_{S_{u', 0}}|\phi|^2+\int_0^{\ub}\int_{-1}^u\int_{S_{u', \ub'}}2\phi(\O\nab_4\phi)+\int_0^{\ub}\int_{-1}^u\int_{S_{u', \ub'}}\O\tr\chi\phi^2.
\end{split}
\end{equation*}

\end{proposition}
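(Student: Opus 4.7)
The plan is to apply the immediately preceding proposition (Proposition~8.1 of \cite{AL}, specialized through the choice $\phi_1=\O^{-1}\,{}^{(1)}\phi$ and $\phi_2={}^{(2)}\phi$) to the diagonal case ${}^{(1)}\phi={}^{(2)}\phi=\phi$. That substitution immediately produces
$$2\int_{D_{u,\ub}}\O^{-1}\phi\,\nab_4\phi \;=\; \int_{\Hb_{\ub}(-1,u)}\O^{-1}|\phi|^2 \;-\; \int_{\Hb_0(-1,u)}\O^{-1}|\phi|^2 \;-\; \int_{D_{u,\ub}}\O^{-1}\,\trch\,|\phi|^2,$$
so the entire remaining task reduces to converting each side to the un-weighted spherical measure using the conventions in \eqref{integration definition}.

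The next step is purely algebraic. By \eqref{integration definition}, the spacetime integral carries a hidden factor of $2\O^2$ and each null-cone integral carries a hidden factor of $2\O$. Thus the $\O^{-1}$ in every term of the displayed identity cancels against either the $\O^2$ or the $\O$ sitting in the respective measure, leaving $4\int_0^{\ub}\!\int_{-1}^u\!\int_{S_{u',\ub'}}\phi\,(\O\nab_4\phi)$ on the left, $2\int_{-1}^u\!\int_{S_{u',\ub}}|\phi|^2-2\int_{-1}^u\!\int_{S_{u',0}}|\phi|^2$ as the boundary contribution on the right, and $2\int_0^{\ub}\!\int_{-1}^u\!\int_{S_{u',\ub'}}\O\,\trch\,|\phi|^2$ as the remaining bulk term. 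Dividing the resulting equality by $2$ and rearranging so that $\int_{-1}^u\!\int_{S_{u',\ub}}|\phi|^2$ is isolated on the left gives precisely the claimed identity.

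There is no genuine analytic obstacle here; this proposition is a purely algebraic corollary of the one preceding it. The only point that requires attention is the bookkeeping of the powers of $\O$ together with the various factors of $2$ appearing in \eqref{integration definition}: one must verify that after dividing by $2$ the coefficients in front of the three right-hand-side terms come out to be exactly $1$, $2$, and $1$, respectively, and that the only surviving $\O$-weights are the one inside $\O\nab_4\phi$ and the one inside $\O\trch$. Once that arithmetic check is done the proof is complete, and no use is made of the null structure equations, of the bootstrap assumptions \eqref{BA.1}--\eqref{BA.4}, or of any property of $\phi$ beyond its being an $\S$-tangent tensor field of rank $r$.
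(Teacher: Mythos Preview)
Your proposal is correct and follows exactly the paper's approach: the paper simply writes ``Set ${^{(1)}\phi}={^{(2)}\phi}=\phi$ in the above proposition. Together with \eqref{integration definition} we then obtain'' the stated identity, and you have spelled out the $\Omega$- and factor-of-$2$ bookkeeping that this entails.
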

}

By Proposition 8.2 in \cite{AL}, we also have

\begin{proposition}\label{intbypartssph}
For an $r$ tensorfield $^{(1)}\phi$ and an $r-1$ tensorfield $^{(2)}\phi$, they satisfy
\begin{equation*}
\begin{split}
&\int_{D_{u,\ub}}{ }^{(1)}\phi^{A_1A_2...A_r}\nabla_{A_r}{ }^{(2)}\phi_{A_1...A_{r-1}}+\int_{D_{u,\ub}}\nabla^{A_r}{ }^{(1)}\phi_{A_1A_2...A_r}{ }^{(2)}\phi^{A_1...A_{r-1}}\\
=& -\int_{D_{u,\ub}}(\eta+\etab){ }^{(1)}\phi{ }^{(2)}\phi.
\end{split}
\end{equation*}
\end{proposition}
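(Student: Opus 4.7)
The plan is to reduce this to the standard divergence theorem on each $2$-sphere $S_{u',\ub'}$ and then integrate in $(u',\ub')$, where the Ricci coefficient combination $\eta+\etb$ will arise precisely from differentiating the weight $\Omega^2$ that is built into the definition \eqref{integration definition} of $\int_{D_{u,\ub}}$.

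First I would fix $(u',\ub')$ and work on the closed Riemannian $2$-manifold $(S_{u',\ub'},\sg)$. For the tensor contraction $X^{A_1\dots A_{r-1}}:={}^{(1)}\phi^{A_1\dots A_{r-1}A_r}\,{}^{(2)}\phi_{A_1\dots A_{r-1}}$, one obtains the $\S$-tangent vector field $X^{A_1\dots A_{r-1}}$ whose divergence satisfies
\begin{equation*}
\nab^{A_r}\!\big({}^{(1)}\phi_{A_1\dots A_r}\,{}^{(2)}\phi^{A_1\dots A_{r-1}}\big)=\nab^{A_r}{}^{(1)}\phi_{A_1\dots A_r}\,{}^{(2)}\phi^{A_1\dots A_{r-1}}+{}^{(1)}\phi^{A_1\dots A_r}\nab_{A_r}{}^{(2)}\phi_{A_1\dots A_{r-1}}.
\end{equation*}
Multiplying by the scalar weight $\Omega^2$ and using that $\nab^A(\Omega^2\,Y_A)=\Omega^2\nab^A Y_A+\nab^A(\Omega^2)\,Y_A$ on $\S$, together with the fact that the integral of a divergence over a closed surface vanishes, I obtain
\begin{equation*}
\int_{S_{u',\ub'}}\!\!\Omega^2\Big(\,{}^{(1)}\phi^{A_1\dots A_r}\nab_{A_r}{}^{(2)}\phi_{A_1\dots A_{r-1}}+\nab^{A_r}{}^{(1)}\phi_{A_1\dots A_r}\,{}^{(2)}\phi^{A_1\dots A_{r-1}}\Big)=-\int_{S_{u',\ub'}}\!\!\nab^{A_r}(\Omega^2)\,{}^{(1)}\phi_{A_1\dots A_r}\,{}^{(2)}\phi^{A_1\dots A_{r-1}}.
\end{equation*}

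Next I would identify the boundary term. From the relations $\eta_A=\zeta_A+\nab_A\log\Omega$ and $\etb_A=-\zeta_A+\nab_A\log\Omega$ recorded in Section \ref{seceqn}, adding them yields the key identity $\eta_A+\etb_A=2\nab_A\log\Omega$, and consequently
\begin{equation*}
\nab_A(\Omega^2)=2\Omega^2\,\nab_A\log\Omega=\Omega^2(\eta_A+\etb_A).
\end{equation*}
Substituting this into the previous identity turns the right-hand side into $-\int_{S_{u',\ub'}}\Omega^2(\eta+\etb)\,{}^{(1)}\phi\,{}^{(2)}\phi$ (schematically, with the $A_r$ index contracted against $\eta+\etb$).

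Finally I would integrate the resulting spherical identity in $u'\in[-1,u]$ and $\ub'\in[0,\ub]$ against the measure $2\,du'\,d\ub'$, which by the definition $\int_{D_{u,\ub}}\phi=2\int_0^{\ub}\!\int_{-1}^u\!\int_{S_{u',\ub'}}\Omega^2\phi\,du'd\ub'$ produces exactly the spacetime integrals $\int_{D_{u,\ub}}$ appearing in the statement. No step here is delicate; the only point requiring any care is bookkeeping the contraction on the $A_r$-index to confirm that the $\nab_A\Omega^2$ factor indeed contracts into $(\eta+\etb)\,{}^{(1)}\phi\,{}^{(2)}\phi$ in the schematic sense used throughout the paper, so the stated identity follows.
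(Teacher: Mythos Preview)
Your proof is correct and is precisely the standard argument that the paper defers to by citing Proposition 8.2 in \cite{AL} rather than proving in-text: apply the divergence theorem on each closed sphere $S_{u',\ub'}$ to the vector field $Y^{A_r}={}^{(1)}\phi^{A_1\dots A_r}{}^{(2)}\phi_{A_1\dots A_{r-1}}$, pick up the derivative of the weight $\Omega^2$, and identify $\nab_A(\Omega^2)=\Omega^2(\eta_A+\etb_A)$ from $\eta_A+\etb_A=2\nab_A\log\Omega$. The only cosmetic slip is in the line defining ``the $\S$-tangent vector field $X^{A_1\dots A_{r-1}}$'': the free index should be $A_r$, with $A_1,\dots,A_{r-1}$ contracted, but the subsequent computation is unaffected.
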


{\color{black}
At the same time, it also holds 

\begin{proposition}\label{intbyparts3}
Let $\phi$ be an $r$ tensorfield and $\lambda_1=2(\lambda_0-\f12)$. Then we get
\begin{equation*}
\begin{split}
\int_0^{\ub}\int_{S_{u, \ub'}}|u|^{2\lambda_1}|\phi|^2=&\int_0^{\ub}\int_{S_{-1,\ub'}}|\phi|^2+\int_0^{\ub}\int_{-1}^u\int_{S_{u', \ub'}}2|u'|^{2\lambda_1}<\phi, \O\nab_3\phi+\lambda_0\O\trchb\phi>\\
&+\int_{0}^{\ub}\int_{-1}^u\int_{S_{u', \ub'}}|u'|^{2\lambda_1}(1-2\lambda_0)(\O\tr\chib+\f{2}{|u'|})|\phi|^2.
\end{split}
\end{equation*}
\end{proposition}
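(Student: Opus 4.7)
The plan is to reduce the identity to a one-dimensional fundamental theorem of calculus in the $u$-direction, starting from the obvious observation that, in double null coordinates, $\O e_3=\partial/\partial u$ acts on scalars as a plain coordinate derivative. First I would establish the well-known evolution of the induced volume form
\[
\frac{\partial}{\partial u}\bigl(\sqrt{\det \sg}\bigr)=\O\tr\chib\;\sqrt{\det \sg},
\]
which follows from $(\mathcal L_{\O e_3}\sg)_{AB}=2\O\chib_{AB}$ together with the metric ansatz \eqref{metric ansatz} and the definition $\chib_{AB}=g(D_A e_3,e_B)$. Combined with the fact that $\nab_3$ is metric-compatible on $\S$-tangent tensor fields (because $D_3$ is metric-compatible and $e_3,e_4$ are orthogonal to $\S$, so the normal correction is absorbed), this yields the master identity
\[
\frac{\partial}{\partial u}\int_{S_{u,\ub'}}|\phi|^2=\int_{S_{u,\ub'}}\bigl[2\langle\phi,\O\nab_3\phi\rangle+\O\tr\chib\,|\phi|^2\bigr].
\]

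Second, I would insert the weight $|u|^{2\lambda_1}$ and apply the Leibniz rule, using $\partial_u|u|^{2\lambda_1}=-2\lambda_1|u|^{2\lambda_1-1}$ (since $u<0$). This produces
\[
\frac{\partial}{\partial u}\int_{S_{u,\ub'}}|u|^{2\lambda_1}|\phi|^2
=\int_{S_{u,\ub'}}|u|^{2\lambda_1}\Bigl[2\langle\phi,\O\nab_3\phi\rangle+\Bigl(\O\tr\chib-\tfrac{2\lambda_1}{|u|}\Bigr)|\phi|^2\Bigr].
\]
Using the relation $\lambda_1=2\lambda_0-1$, the algebraic decomposition
\[
\O\tr\chib-\tfrac{2\lambda_1}{|u|}=2\lambda_0\,\O\tr\chib+(1-2\lambda_0)\Bigl(\O\tr\chib+\tfrac{2}{|u|}\Bigr)
\]
rearranges the integrand into
\[
2\bigl\langle\phi,\,\O\nab_3\phi+\lambda_0\,\O\tr\chib\,\phi\bigr\rangle+(1-2\lambda_0)\Bigl(\O\tr\chib+\tfrac{2}{|u|}\Bigr)|\phi|^2,
\]
since the coefficients of $\O\tr\chib$ sum to $1$ and the coefficient of $2/|u|$ equals $1-2\lambda_0=-\lambda_1$, which correctly reproduces the $-2\lambda_1/|u|$ term.

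Finally, I would integrate the resulting identity in $u'$ from $-1$ to $u$ and then in $\ub'$ from $0$ to $\ub$. The boundary term at $u'=-1$ is $\int_0^{\ub}\int_{S_{-1,\ub'}}|\phi|^2$ because $|u'|^{2\lambda_1}=1$ there, and the boundary term at $u'=u$ gives the left-hand side, matching the statement. There is really no hard step here: the entire computation is a pointwise identity. The only technical point worth recording carefully is the pair of underlying geometric facts — the volume-form evolution $\partial_u\sqrt{\det\sg}=\O\tr\chib\sqrt{\det\sg}$ and the compatibility of $\nab_3$ with $\sg$ on $\S$-tangent tensors — both of which are standard in the double null framework set up in Section \ref{setting} and used tacitly throughout the paper. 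This mirrors exactly the $\nab_4$-analogue stated as Proposition \ref{intbyparts34}, with the roles of $u,\ub$ and $\chi,\chib$ interchanged and the additional scalar weight $|u|^{2\lambda_1}$ absorbed into the $(1-2\lambda_0)(\O\tr\chib+2/|u|)$ correction term.
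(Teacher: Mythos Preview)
Your proposal is correct and follows essentially the same approach as the paper's own proof: both compute $\frac{d}{du}\int_{S_{u,\ub'}}|u|^{2\lambda_1}|\phi|^2$ using the first-variation identity $\partial_u\sqrt{\det\sg}=\O\tr\chib\sqrt{\det\sg}$ and metric compatibility of $\nab_3$, then use the relation $\lambda_1=2\lambda_0-1$ to absorb the weight derivative into the $(1-2\lambda_0)(\O\tr\chib+2/|u|)$ term, and finally integrate in $u$ and $\ub$. The only cosmetic difference is that you first state the unweighted identity and then insert $|u|^{2\lambda_1}$, whereas the paper differentiates the weighted integral directly.
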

\begin{proof}
Using the first variation of area formula, it holds
\begin{equation*}
\begin{split}
\frac{d}{du}(\int_{\S}|u|^{2\lambda_1}|\phi|^2)=&\int_{\S} 2|u|^{2\lambda_1}<\phi, \O\nab_3\phi+\lambda_0\O\trchb\phi>\\
&+\int_{\S}  |u|^{2\lambda_1}\l\f{2\lambda_1 \cdot (-\O e_3u)}{|u|}+(1-2\lambda_0)\O\trchb\r|\phi|^2.
\end{split}
\end{equation*}
We then choose $\lambda_1=2\lambda_0-1$, which satisfies

$$\f{2\lambda_1 (-\O e_3u)}{|u|}+(1-2\lambda_0)\O\trchb=(1-2\lambda_0)(\O\tr\chib+\f{2}{|u|}).$$
By integrating with respect to $du\,d\ub$ and applying the fundamental theorem of calculus in $u$, we then prove this proposition. 
\end{proof}
}

We now proceed to prove
\begin{proposition}\label{Omegae4phi energy} 
Under the assumptions of Theorem \ref{main thm} and the bootstrap assumptions \eqref{BA.1}, \eqref{BA.2}, \eqref{BA.3}, \eqref{BA.4}, it holds
\begin{equation}
\|u^5\nab^5(\O e_4\phi)\|_{L^2_{\ub}L^2(\S)}+\|\O u^5\nab^5(e_A\phi)\|_{L^2_{u}L^2(\S)}\ls  \ub^{\f12}\at.
\end{equation}
\end{proposition}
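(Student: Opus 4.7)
\medskip

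\noindent\textbf{Proof proposal.} The plan is to pair the two transport equations \eqref{phi transport 1} and \eqref{phi transport 2}, commute them with five angular derivatives, and carry out a $\nabla_4$--energy estimate in which the top-order terms $\Omega^2\nabla^A\nabla_A\phi$ (paired with $\nabla^5(\O e_4\phi)$) and $\O^2(\nabla^A\phi) e_A(\O e_4\phi)$ (paired with $\nabla^5(\O\nabla_A\phi)$) cancel after angular integration by parts, as already foreshadowed by the discussion of \eqref{first pair sum}. Concretely, set
\[
 \Phi_A := \nab^5(\O\nab_A\phi), \qquad \Psi := \nab^5(\O e_4\phi).
\]
Commuting \eqref{phi transport 1} with $\nabla^5$ and using Proposition \ref{commuteeqn} yields
\[
 \nab_{\O e_4}\Phi_A = \O e_A\Psi + \O e_4(\O)\Phi_A + \text{l.o.t.}_1,
\]
while commuting \eqref{phi transport 2} gives
\[
 \O e_3\Psi = \O^2\nab^A\Phi_A + \text{l.o.t.}_2,
\]
where $\text{l.o.t.}_{1,2}$ are sums of terms of the schematic type
$\nab^{i_1}\q^{i_2}\nab^{i_3}(\p,\O\b)\nab^{i_4}(\nab_A\phi,\O e_4\phi,\O e_3\phi)$, with distributed derivatives.

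Multiplying the first commuted equation by $u^{10}\Phi^A/\O$ and the second by $u^{10}\Psi/\O$, summing, and invoking Propositions \ref{intbyparts34} and \ref{intbypartssph} (angular integration by parts converts $\O^2\nab^A\Phi_A\cdot\Psi+\O^2\Phi^A\nab_A\Psi$ into an $(\eta+\etab)\Phi\,\Psi$ error) I obtain the basic energy identity
\[
 \int_{\Hb_{\ub}}\!\! u^{10}|\O\nab^5 e_A\phi|^2 + \int_{H_u}\!\! u^{10}|\nab^5(\O e_4\phi)|^2
 \ls \int_{H_{-1}}\!\! u^{10}|\nab^5(\O e_4\phi)|^2 + \sum_{j}\mathcal{E}_j,
\]
where the $\mathcal{E}_j$ collect all error integrals over $D_{u,\ub}$ coming from: the trivial weight $u^{10}\O\tr\chi\,|\Psi|^2$, the commutator lower-order terms, and the non-cancelling factors $\O e_4(\O)\Phi_A$, $\O^2\chih\cdot\nab^5\nab\phi$, $\O\tr\chi\cdot\Psi$, $\O\tr\chib\cdot\Psi$, $\eta\cdot e_A\phi$, and the angular $(\eta+\etab)\Phi\Psi$ boundary. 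Initial data contribute $\ls \ub\, a$ in view of $\mathcal I^{(0)}\ls 1$. Each $\mathcal{E}_j$ will be bounded via Cauchy--Schwarz against the two quantities on the left side and the already-established $L^2(\S)$ bounds of Section \ref{scalar field}; the gain comes from the smallness $\ub^{1/2}\at/|u|^{1/2}\O\ls B^{-1/4}$ and the extra $\O$-factors built into the weights, so that each error can be absorbed after a Grönwall/continuity argument in $\ub$.

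The main obstacle will be the \emph{borderline} error terms, in particular
\[
 \int_{D_{u,\ub}} u^{10}\,\O\tr\chib \cdot \Psi \cdot \Psi, \qquad \int_{D_{u,\ub}} u^{10}\,\O^2\chih\cdot\nab^5 \nab\phi\cdot\Psi,
\]
and the cross-term arising from $\O e_4(\O)\Phi_A=-2\O\h\o\,\O\Phi_A$: along $\ub=0$ one has $\O\tr\chib=-2/|u|$, which is not small, so the $\O\tr\chib$ error cannot be treated by smallness and instead must be controlled by using the improved bound in Proposition \ref{trchb.bd} together with the fact that $\f{2}{|u|}$ contribution is incorporated in the $|u|^{10}$ weight by the correct choice of $\lambda_0$ in Proposition \ref{intbyparts3}. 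Similarly, the $\chih$ top-order term requires using the Codazzi identity to trade $\nab^5\nab_A\phi$ for $\|\O u^5\nab^5 e_A\phi\|_{L^2_u L^2(\S)}$, which is exactly the quantity one is trying to control and hence is absorbed on the left. The lower-order commutator contributions (those with $\nab^i\p$, $\nab^i\q$ for $i\leq 4$ and $\nab^i(\O\b)$ for $i\leq 3$) are all strictly better than borderline by at least one power of the small quotient $\ub\at/|u|\O^{3/2}\leq B^{-1}$, so they yield only $\ub^{1/2}\at B^{-1/4}$ after integration. A standard Grönwall argument in $\ub$ then closes the bound $\ls\ub^{1/2}\at$ claimed.
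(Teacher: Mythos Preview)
Your strategy is exactly the paper's: pair \eqref{phi transport 1}--\eqref{phi transport 2}, commute with $\nab^5$, cancel the top-order angular terms via integration by parts, and estimate the remaining errors. The energy identity you sketch matches \eqref{e4 phi ea phi energy}.

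One correction worth noting: your identification of the ``borderline'' errors is misplaced. In \emph{this} proposition every error term gains at least a factor of $B^{-1/4}$ over $\ub a$; none is genuinely borderline, so no Gr\"onwall or absorption step is needed---the energy identity closes directly. The $\O\tr\chib\cdot\Psi\cdot\Psi$ term you flag is already absorbed into the weight by the choice $\lambda_0=\tfrac{i+1}{2}$ in Proposition~\ref{intbyparts3} (as you yourself note), and the $\O^2\chih\cdot\nab^5\nab\phi$ term pairs with $\Phi$, not $\Psi$, and is simply small because $\|\O\chi\|_{L^1_{\ub}L^\infty}\lesssim\ub\at/|u|$; no Codazzi identity is involved. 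The genuinely borderline term---which \emph{does} require absorption on the left---is $\O^{-2}(\O\tr\chib)\cdot\nab^5(\O e_4\phi)\cdot\nab^5(\O e_3\phi)$, and it appears only in the companion estimate for the second pair (Proposition~\ref{prop6.2}), not here.
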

\begin{proof}
Taking equation \eqref{phi transport 2}
$$\O e_3(\O e_4 \phi)=\O^2\nab^A\nab_A\phi-\f12\O\tr\chib\cdot\O e_4\phi-\f12\O\tr\chi\cdot\O e_3\phi+2\O^2\eta^A\cdot e_A\phi,$$
and commuting it with $i$ angular derivatives, we obtain
\begin{equation}\label{e3e4phi energy}
\begin{split}
&{\color{black}\O}\nab_3\nab^i(\O e_4\phi)+\f{i+1}{2}\O\tr\chib\nab^i(\O e_4\phi)\\
=&\sum_{i_1+i_2+i_3=i}\nab^{i_1}(\eta+\etb)^{i_2}\nab^{i_3}\big(\O^2 \Delta_g\phi-\f12\O\tr\chi\cdot {\color{black}\O}e_3\phi+2\O^2\eta^A e_A\phi\big)\\
&+\sum_{\substack{ i_1+i_2+i_3+i_4=i,\\ i_4<i}}\nab^{i_1}(\eta+\etb)^{i_2}\nab^{i_3}(\O\tr\chib)\nab^{i_4}(\O e_4\phi)\\
&+\sum_{i_1+i_2+i_3+i_4=i}\nab^{i_1}(\eta+\etb)^{i_2}\nab^{i_3}({\color{black}\O}\chibh, {\color{black}\O}\tr\chib+\f{2}{|u|})\nab^{i_4}(\O e_4\phi)\\
&+\sum_{i_1+i_2+i_3+i_4=i-1}\nab^{i_1}(\eta+\etb)^{i_2}\nab^{i_3}({\color{black}\O}\beb)\nab^{i_4}(\O e_4\phi).
\end{split}
\end{equation}
{\color{black}
Note that the term with the top-order derivative on the right is $ \O^2\nab^i\Delta_g\phi$.
}

Similarly, commuting \eqref{phi transport 1} with $i$ angular derivatives, we obtain 
\begin{equation}\label{e4eAphi energy}
\begin{split}
\O\nab_4\nab^i(\O\nab_A)\phi=&\sum_{i_1+i_2+i_3=i}\nab^{i_1}(\eta+\etb)^{i_2}\nab^{i_3}\bigg(\O e_A(\O e_4\phi)-\O^2 {\chi_A}^B e_B\phi+\O e_4(\O)\nab_A\phi \bigg)\\
&+\sum_{i_1+i_2+i_3+i_4=i}\nab^{i_1}(\eta+\etb)^{i_2}\nab^{i_3}(\O\chi)\nab^{i_4}(\O\nab_A)\phi\\
&+\sum_{i_1+i_2+i_3+i_4=i-1}\nab^{i_1}(\eta+\etb)^{i_2}\nab^{i_3}(\O\b)\nab^{i_4}(\O\nab_A)\phi.
\end{split}
\end{equation}
{\color{black}
Notice that the term with the top-order derivative on the right is $\nab^i[\O e_A(\O e_4\phi) ]$. 

We now consider
\begin{equation}\label{e4eAphi IBP}
\begin{split}
&\int_0^{\ub}\int_{u_0}^u\int_{S_{u,\ub}}u^{2i}\nab_{A_1\cdot\cdot\cdot A_i}(\O e_4\phi)[\O \nab_3\nab^{A_1\cdot\cdot\cdot A_i}(\O e_4\phi)+\f{i+1}{2}\O\tr\chib \nab^{A_1\cdot\cdot\cdot A_i}(\O e_4\phi)]\\
&+\int_0^{\ub}\int_{u_0}^u\int_{S_{u,\ub}}u^{2i}\nab_{A_1\cdot\cdot\cdot A_i}(\O\nab^A)\phi[\O\nab_4\nab^{A_1\cdot\cdot\cdot A_i}(\O\nab_A)\phi].
\end{split}
\end{equation}
Employing (\ref{e3e4phi energy}) and (\ref{e4eAphi energy}), on the right we encounter 
$$\int_{\S}\O^2 u^{2i}\nab_{A_1\cdot\cdot\cdot A_i}(\O e_4\phi)\nab^{A_1\cdot\cdot\cdot A_i}\Delta_g\phi+\O^2 u^{2i}\nab_{A_1\cdot\cdot\cdot A_i}\nab^A\phi \nab^{A_1\cdot\cdot\cdot A_i}\nab_A(\O e_4\phi).$$
Employing integration by parts with respect to $\nab_A$, we then cancel the top-order derivatives $\nab^{i+1}(\nab\phi)$, $\nab^{i+1}(\O e_4\phi)$. Applying Proposition \ref{intbyparts34} for \eqref{e4eAphi IBP}, together with \eqref{e3e4phi energy} and \eqref{e4eAphi energy} we arrive at 
}

{\color{black}
\begin{equation}\label{e4 phi ea phi energy}
\begin{split}
&\|u^i\nab^i(\O e_4\phi)\|^2_{L^2_{\ub}L^2(\S)}+\|\O u^i\nab^i(e_A\phi)\|^2_{L^2_{u}L^2(\S)}\\
\leq&\|u^i\nab^i(\O e_4\phi)\|^2_{L^2_{\ub}L^2(S_{-1,\ub})}+\|u^{2i}\O^2\sum_{i_1+i_2+i_3=i-1}\nab^{i_1}(\eta+\etb)^{i_2+1}\nab^{i_3}\Delta_g\phi \nab^i(\O e_4\phi) \|_{L^1_uL^1_{\ub}L^1(\S)}\\
&+\|u^{2i}\sum_{i_1+i_2+i_3=i}\nab^{i_1}(\eta+\etb)^{i_2}\nab^{i_3}\big(-\f12\O\tr\chi\cdot {\color{black}\O}e_3\phi+2\O^2\eta^A e_A\phi\big) \nab^i(\O e_4\phi) \|_{L^1_uL^1_{\ub}L^1(\S)}\\
&+\|u^{2i}\sum_{\substack{i_1+i_2+i_3+i_4=i,\\ i_4<i}}\nab^{i_1}(\eta+\etb)^{i_2}\nab^{i_3}(\O\tr\chib)\nab^{i_4}(\O e_4\phi) \nab^i(\O e_4\phi) \|_{L^1_uL^1_{\ub}L^1(\S)}\\
&+\|u^{2i}\sum_{i_1+i_2+i_3+i_4=i}\nab^{i_1}(\eta+\etb)^{i_2}\nab^{i_3}({\color{black}\O}\chibh, {\color{black}\O}\tr\chib-(\O\tr\chib)_0)\nab^{i_4}(\O e_4\phi) \nab^i(\O e_4\phi) \|_{L^1_uL^1_{\ub}L^1(\S)}\\
&+\|u^{2i}\sum_{i_1+i_2+i_3+i_4=i-1}\nab^{i_1}(\eta+\etb)^{i_2}\nab^{i_3}(\O\beb)\nab^{i_4}(\O e_4\phi) \nab^i(\O e_4\phi) \|_{L^1_uL^1_{\ub}L^1(\S)}\\
&+\|\O^2 u^{2i} \sum_{i_1+i_2+i_3=i-1}\nab^{i_1}(\eta+\etb)^{i_2+1}\nab^{i_3}\big(e_A(\O e_4\phi)\big) \nab^i\nab_A\phi\|_{L^1_uL^1_{\ub}L^1(\S)}\\
&+\|\O^2 u^{2i} \sum_{i_1+i_2+i_3+i_4=i}\nab^{i_1}(\eta+\etb)^{i_2}\nab^{i_3}(\O\chi)\nab^{i_4}\nab_A\phi \nab^i\nab_A\phi\|_{L^1_uL^1_{\ub}L^1(\S)}\\
&+\|\O^2 u^{2i} \sum_{i_1+i_2+i_3+i_4=i-1}\nab^{i_1}(\eta+\etb)^{i_2}\nab^{i_3}(\O\b)\nab^{i_4}\nab_A\phi \nab^i\nab_A\phi\|_{L^1_uL^1_{\ub}L^1(\S)}.
\end{split}
\end{equation}  

We then bound each of the terms on the right. For the initial data term, we have
$$\|u^i\nab^i(\O e_4\phi)\|^2_{L^2_{\ub}L^2(S_{-1,\ub})}\leq (\f{\at}{|u|}\ub^{\f12}|u|)^2=\ub a.$$

The terms with top-order derivatives (when $i=5$) can be controlled as follows: we first bound the terms involving $\O e_3\phi$ and get
\begin{equation*}
\begin{split}
&\|u^{10} (\O\tr\chi)\nab^5(\O e_3\phi)\nab^5(\O e_4\phi) \|_{L^1_uL^1_{\ub}L^1(\S)}\\
\leq&\|\O^{-1}u^{5}\nab^5 (\O e_3\phi)\|_{L^{\infty}_{\ub} L^2_{u}L^2(\S)}\|u^{5}\nab^5 (\O e_4\phi) \|_{L^{\infty}_{u} L^2_{\ub}L^2(\S)}\| \O(\O\tr\chi) \|_{L^{2}_{u} L^2_{\ub}L^{\infty}(\S)}\\
\leq&\ub^{\f12}\af{\color{black}\F}S\cdot\ub^{\f12}\at S\cdot\f{\at}{|u|}|u|^{\f12}\ub^{\f12}O=\f{\ub^{\f12}\af}{|u|^{\f12}}{\color{black}\F}S^2 O\cdot\ub a\leq \ub a B^{-\f14},
\end{split}
\end{equation*}
\begin{equation*} 
\begin{split}
&\|u^{10} (\O e_3\phi)\nab^5(\O\tr\chi)\nab^5(\O e_4\phi) \|_{L^1_uL^1_{\ub}L^1(\S)}\\
\leq&\| u^{6}\nab^5 (\O\tr\chi) \|_{L^{\infty}_{u} L^2_{\ub}L^2(\S)}\|u^{5}\nab^5 (\O e_4\phi) \|_{L^{\infty}_{u} L^2_{\ub}L^2(\S)}\| u^{-1}(\O e_3\phi)\|_{L^{\infty}_{\ub} L^1_{u}L^{\infty}(\S)}\\
\leq&\ub^{\f32}a\t O\cdot\ub^{\f12}\at S\cdot\f{1}{|u|}= \f{\ub\at \t O S}{|u|}\ub a\leq \ub a B^{-\f12}.
\end{split}
\end{equation*}

For the top-order terms with $\eta$, we have
\begin{equation*}
\begin{split}
&\|\O^2 u^{10}(\eta, \etb)\nab^5(e_A\phi)\nab^5(\O e_4\phi) \|_{L^1_uL^1_{\ub}L^1(\S)}\\
\leq&\| u^{5}\nab^5(e_A\phi) \|_{L^{\infty}_{u} L^2_{\ub}L^2(\S)}\|u^{5}\nab^5 (\O e_4\phi) \|_{L^{\infty}_{u} L^2_{\ub}L^2(\S)}\|\O^2(\eta, \etb)\|_{L^{\infty}_{\ub} L^1_{u}L^{\infty}(\S)}\\
\leq&\ub^{\f12}\af{\color{black}\F}S\cdot\ub^{\f12}\at S\cdot\f{\ub\at}{|u|^2}|u| O=\f{\ub a^{\f14}}{|u|}\ub a{\color{black}\F}S^2 O\leq \ub a B^{-1},
\end{split}
\end{equation*}
\begin{equation*}
\begin{split}
&\|\O^2 u^{10}e_A\phi\nab^5\eta \nab^5(\O e_4\phi)\|_{L^1_uL^1_{\ub}L^1(\S)}\\
\leq&\|u^{5}\nab^5\eta  \|_{L^{\infty}_{u} L^2_{\ub}L^2(\S)}\|u^{5}\nab^5 (\O e_4\phi) \|_{L^{\infty}_{u} L^2_{\ub}L^2(\S)}\| \O^2 e_A\phi \|_{L^{\infty}_{\ub} L^1_{u}L^{\infty}(\S)}\\
\leq&\O^{-1}\ub^{\f12}\at\t O\cdot \ub^{\f12}\at S \cdot \f{\ub\at}{|u|}O=\f{\ub\at}{|u|\O}\t O S O\cdot\ub a \leq \ub a B^{-\f12}.
\end{split}
\end{equation*}

For the top-order terms containing $\O\chibh, \O\tr\chib-(\O\tr\chib)_0$, they can be controlled via
\begin{equation*}
\begin{split}
&\|u^{10}(\O\chibh, \O\tr\chib-(\O\tr\chib)_0)\nab^5(\O e_4\phi) \nab^5(\O e_4\phi)\|_{L^1_uL^1_{\ub}L^1(\S)}\\
\leq&\| u^{5}\nab^5 (\O e_4\phi) \|_{L^{\infty}_{u} L^2_{\ub}L^2(\S)}\|u^{5}\nab^5 (\O e_4\phi) \|_{L^{\infty}_{u} L^2_{\ub}L^2(\S)}\| (\O\chibh, \O\tr\chib-(\O\tr\chib)_0) \|_{L^{\infty}_{\ub} L^1_{u}L^{\infty}(\S)}\\
\leq&\ub^{\f12} a^{\f12}S\cdot\ub^{\f12}\at S\cdot\f{\ub\at}{|u|}O\leq \ub a B^{-\f12},
\end{split}
\end{equation*}
\begin{equation*} 
\begin{split}
&\|u^{10} (\O e_4\phi)\nab^5 (\O\chibh, \O\tr\chib-(\O\tr\chib)_0) \nab^5(\O e_4\phi)\|_{L^1_uL^1_{\ub}L^1(\S)}\\
\leq&\|u^{5}\nab^5 (\O\chibh, \O\tr\chib-(\O\tr\chib)_0) \|_{L^{\infty}_{\ub} L^2_{u}L^2(\S)}\|u^{5}\nab^5 (\O e_4\phi) \|_{L^{\infty}_{u} L^2_{\ub}L^2(\S)}\| \O e_4\phi \|_{L^{2}_{u} L^2_{\ub}L^{\infty}(\S)}\\
\leq&\ub^{\f12}\af{\color{black}\F}\t O\cdot\ub^{\f12}\at S\cdot\f{\at}{|u|}|u|^{\f12}\ub^{\f12}O=\f{\ub^{\f12}\af \t O S O}{|u|^{\f12}}\ub a{\color{black}\F}\leq \ub a B^{-\f14}.
\end{split}
\end{equation*}

The top-order terms involving $\O\chi$ obey 
\begin{equation*}
\begin{split}
&\|\O^2 u^{10} (\O\chi)\nab^5 (e_A\phi) \nab^5(e_A\phi)\|_{L^1_uL^1_{\ub}L^1(\S)}\\
\leq&\|\O u^{5}\nab^5 (e_A\phi) \|_{L^{\infty}_{\ub} L^2_{u}L^2(\S)}\|\O u^{5}\nab^5 (e_A\phi) \|_{L^{\infty}_{\ub} L^2_{u}L^2(\S)}\| \O\chi\|_{L^{\infty}_{u} L^1_{\ub}L^{\infty}(\S)}\\
\leq&\ub^{\f12} \at S\cdot \ub^{\f12}\at S\cdot\f{\at}{|u|}\ub O\leq \ub a B^{-\f12},
\end{split}
\end{equation*}
\begin{equation*}
\begin{split}
&\|u^{10} \O^2 e_A\phi\nab^5(\O\chi) \nab^5(e_A\phi)\|_{L^1_uL^1_{\ub}L^1(\S)}\\
\leq&\|u^{10} \O^2 e_A\phi\nab^5(\O\chih) \nab^5(e_A\phi)\|_{L^1_uL^1_{\ub}L^1(\S)}+\|u^{10} \O^2 e_A\phi\nab^5(\O\tr\chi) \nab^5(e_A\phi)\|_{L^1_uL^1_{\ub}L^1(\S)}\\
\leq&\|u^{5}\nab^5 (\O\chih) \|_{L^{\infty}_{u} L^2_{\ub}L^2(\S)}\|\O u^{5}\nab^5 (e_A\phi) \|_{L^{\infty}_{\ub} L^2_{u}L^2(\S)}\| \O e_A\phi \|_{L^{2}_{u} L^2_{\ub}L^{\infty}(\S)}\\
&+\|u^{6}\nab^5 (\O\tr\chi) \|_{L^{\infty}_{u} L^2_{\ub}L^2(\S)}\|\O u^{5}\nab^5 (e_A\phi) \|_{L^{\infty}_{\ub} L^2_{u}L^2(\S)}\| \O u^{-1} e_A\phi \|_{L^{2}_{u} L^2_{\ub}L^{\infty}(\S)}\\
\leq&\ub^{\f12}\at \t O\cdot\ub^{\f12}\at S\cdot\f{\ub\at}{|u|^2}|u|^{\f12}\ub^{\f12}O+\ub^{\f32}a\t O'\cdot\ub^{\f12}\at S\cdot\f{\ub\at}{|u|^3}|u|^{\f12}\ub^{\f12}O\\
=&\f{\ub^{\f32}\at}{|u|^{\f32}}\t O S O\cdot \ub a+\f{\ub^{\f52}a\t O' S O}{|u|^{\f52}}\ub a\leq \ub a B^{-1}.
\end{split}
\end{equation*}

Finally, the top-order terms with $\beta$ and $\beb$ can be bounded via
\begin{equation*}
\begin{split}
&\|u^{10}(\O e_4\phi)\nab^4(\O\beb) \nab^5(\O e_4\phi)\|_{L^1_uL^1_{\ub}L^1(\S)}\\
\leq&\|\O^{-1} u^{6}\nab^4(\O\beb) \|_{L^{\infty}_{\ub} L^2_{u}L^2(\S)}\|u^{5}\nab^5 (\O e_4\phi) \|_{L^{\infty}_{u} L^2_{\ub}L^2(\S)}\| \O u^{-1}(\O e_4\phi) \|_{L^{2}_{u} L^2_{\ub}L^{\infty}(\S)}\\
\leq&\ub^{\f32}a^{\f34}{\color{black}\F}\mathcal{R}\cdot\ub^{\f12}\at S\cdot\f{\at}{|u|^2}|u|^{\f12}\ub^{\f12}O=\f{\ub^{\f32}a^{\f34}\mathcal{R}SO}{|u|^{\f32}}{\color{black}\F}\ub a\leq \ub a B^{-1},
\end{split}
\end{equation*}
\begin{equation*}
\begin{split}
&\|\O^2 u^{10} e_A\phi\nab^4(\O\b) \nab^5(e_A\phi)\|_{L^1_uL^1_{\ub}L^1(\S)}\\
\leq&\|u^{5}\nab^4 (\O\b) \|_{L^{\infty}_{u} L^2_{\ub}L^2(\S)}\|\O u^{5}\nab^5 (e_A\phi) \|_{L^{\infty}_{\ub} L^2_{u}L^{2}(\S)}\| \O e_A\phi \|_{L^{2}_{u} L^2_{\ub}L^{\infty}(\S)}\\
\leq&\ub^{\f12}\at \M R\cdot\ub^{\f12}\at S\cdot\f{\ub\at}{|u|^2}|u|^{\f12}\ub^{\f12}O=\f{\ub^{\f32}\at \M R S O}{|u|^{\f32}}\ub a\leq \ub a B^{-1}.
\end{split}
\end{equation*} 

}

\noindent It is a straight-forward check that the non-top-order-derivative terms in (\ref{e4 phi ea phi energy}) obey the same (and even smaller) upper bounds, compared with the estimates above. There is no borderline term and their upper bounds are much smaller than $\ub a$. Putting all the estimates together, we then finish the proof of this current proposition. 
 \end{proof}

Next, we move to establish the energy estimate for the other pair and we have 

\begin{proposition}\label{prop6.2}
Under the assumptions of Theorem \ref{main thm} and the bootstrap assumptions \eqref{BA.1}, \eqref{BA.2}, \eqref{BA.3}, \eqref{BA.4}, it holds
$$\|u^5\nab^5(e_A\phi)\|_{L^2_{\ub}L^2(\S)}+\|\O^{-1}u^5\nab^5(\O e_3\phi)\|_{L^2_{u}L^2(\S)}\ls \F \ub^{\f12}\af.$$
In addition, we also have the improved estimate
 \begin{equation*}
\begin{split}
\|u^5\nab^5(e_A\phi)\|_{L^2_{\ub}L^2(\S)}+\|\O^{-1}u^5\nab^5(\O e_3\phi)\|_{L^2_{u}L^2(\S)}\ls \f{\ub^{\f12}\af}{|u|^{\f12}\O}\cdot \ub^{\f12}\af+(\f{\ub\at}{|u|\O^{\f32}})^{\f14}\cdot\ub^{\f12}\af.  
\end{split}
\end{equation*}

\end{proposition}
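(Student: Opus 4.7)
\medskip

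\noindent\textbf{Proof proposal for Proposition \ref{prop6.2}.}

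The plan is to run the analogue of the pairing argument of Proposition \ref{Omegae4phi energy}, but using the second pair \eqref{phi transport 3}--\eqref{phi transport 4} instead of the first pair. First I would commute \eqref{phi transport 3} with five angular derivatives to obtain a $\O\nab_3$--transport equation for $\nab^5\nab_A\phi$ with principal source $\nab^5 e_A(\O e_3\phi)$, and commute \eqref{phi transport 4} with five angular derivatives to obtain an $\O\nab_4$--transport equation for $\nab^5(\O e_3\phi)$ with principal source $\O^2\nab^5\Delta_g\phi$. Pairing the first with $u^{10}\nab^5(e_A\phi)$ and the second with $\O^{-2}u^{10}\nab^5(\O e_3\phi)$, then integrating over $D_{u,\underline u}$, Proposition \ref{intbyparts3} (with $\lambda_0$ chosen to match the commuted $\O\tr\underline\chi$ coefficient) converts the first pairing into $\|u^5\nab^5(e_A\phi)\|_{L^2_{\underline u}L^2(\S)}^2$ plus initial data plus error, while Proposition \ref{intbyparts34} converts the second pairing into $\|\O^{-1}u^5\nab^5(\O e_3\phi)\|_{L^2_u L^2(\S)}^2$ plus initial data plus error. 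The two principal sources combine, at top order, into $\int_{D_{u,\underline u}}\O^2 u^{10}\big[\nab^{A_1\cdots A_5}\!\nab^A\phi\cdot e_A\nab_{A_1\cdots A_5}(\O e_3\phi)+\nab^{A_1\cdots A_5}(\O e_3\phi)\cdot\nab_A\nab^{A_1\cdots A_5}\!\nab^A\phi\big]$, which after Proposition \ref{intbypartssph} is reduced to a $(\eta+\underline\eta)$--commutator term of the same order as the other error terms.

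The remaining work is to bound the resulting non-top-order integrals against the target $\F\,\underline u^{1/2}a^{1/4}$. These come in the same flavours as in \eqref{e4 phi ea phi energy}: terms coupling $\O\tr\chi, \O\tr\underline\chi$ and $\O\widehat{\underline\chi}$ with $\nab^5(\O e_3\phi),\nab^5(e_A\phi)$; terms coupling $\eta,\underline\eta$ at top order with $\nab^5\eta$ or $\nab^5\underline\eta$ (using the bootstrap bounds for $\tilde{\M O}_{5,2}(\eta),\tilde{\M O}_{5,2}(\underline\eta)$ with their $\F$--weights); and curvature couplings through $\nab^4(\O\beta), \nab^4(\O\underline\beta)$. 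For the $\O\tr\underline\chi$ coefficient I would split as $\O\tr\underline\chi=(\O\tr\underline\chi)_0+[\O\tr\underline\chi-(\O\tr\underline\chi)_0]$, treat the second piece by Proposition \ref{trchb.bd}, and use Proposition \ref{intbyparts3} to absorb the singular piece $(\O\tr\underline\chi)_0=-2/|u|$ into the geometric weight. The $\O\underline\omega$ term, which here behaves like $1/|u|$ due to the naked-singularity data, is handled in the same way through the choice of $\lambda_0$.

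The main obstacle, and the reason the $\F$--factor enters rather than a clean $1$, is the $\O^{-2}$ weight on the $\nab_4$ pairing: several integrals of the form $\int\O^{-2}u^{10}\,\nab^5(\O e_3\phi)\cdot[\,(\O\tr\chi)\nab^5(\O e_3\phi)\,]$ or $\int\O^{-1}u^{10}\,\nab^5(\O e_3\phi)\cdot \nab^5\underline\eta\cdot e_A\phi$ carry an extra $\O^{-1}$ relative to the bounds of Proposition \ref{Omegae4phi energy}. Using \eqref{smallness} these are controlled by $(\underline u^{1/2}a^{1/4}/(|u|^{1/2}\O))^2\cdot\underline u\,a^{1/2}$, precisely the $\F^2\underline u\, a^{1/2}$ that the statement allows. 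The improved bound is then obtained by dichotomy on $|u|\O^2$ versus $\underline u\, a^{1/2}$: on $\{|u|\O^2\le \underline u\,a^{1/2}\}$ one uses $\F\lesssim \underline u^{1/2}a^{1/4}/(|u|^{1/2}\O)$ to produce the first term $\underline u^{1/2}a^{1/4}/(|u|^{1/2}\O)\cdot\underline u^{1/2}a^{1/4}$; on the complementary region one has $\F\lesssim 1$ together with the smallness $\underline u\at/(|u|\O^{3/2})\le B^{-1}$ from \eqref{largeness B}, which yields the factor $(\underline u\at/(|u|\O^{3/2}))^{1/4}$, giving the second term. A Gr\"onwall step in $u$ absorbs the diagonal term $\int (\O\tr\underline\chi+2/|u|)\,|\nab^5(\O e_3\phi)|^2/\O^2$ using the $\ub\at /|u|\ll 1$ smallness, closing the estimate.
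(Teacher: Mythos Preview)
Your overall strategy---pair the commuted versions of \eqref{phi transport 3} and \eqref{phi transport 4}, weight the second by $\O^{-2}$, cancel the top-order terms by angular integration by parts, and then estimate the resulting spacetime integrals term by term---is exactly what the paper does. However, two of the steps you sketch are incorrect.

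\medskip

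\textbf{The borderline term is misidentified.} The decisive term is not the diagonal $\int\O^{-2}u^{10}(\O\tr\chi)|\nab^5(\O e_3\phi)|^2$ you single out, but the \emph{cross} term
\[
\int\O^{-2}u^{10}\,(\O\tr\underline\chi)\,\nab^5(\O e_4\phi)\,\nab^5(\O e_3\phi)
\]
coming from the source $-\tfrac12\O\tr\underline\chi\cdot\O e_4\phi$ in \eqref{phi transport 4}. Your proposed treatment---split $\O\tr\underline\chi=(\O\tr\underline\chi)_0+[\O\tr\underline\chi-(\O\tr\underline\chi)_0]$ and absorb the singular piece $-2/|u|$ into the geometric weight via Proposition~\ref{intbyparts3}---only works for the \emph{diagonal} $\O\tr\underline\chi|\nab^5(e_A\phi)|^2$ term in the commuted $\nab_3$-equation; it cannot handle a cross term involving $\nab^5(\O e_4\phi)$. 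The paper instead uses the already-proved Proposition~\ref{Omegae4phi energy} to bound $\|u^5\nab^5(\O e_4\phi)\|_{L^\infty_uL^2_{\underline u}L^2}\lesssim\underline u^{1/2}a^{1/2}$ (with no $\F$-factor), then applies Young's inequality to produce $c_0\|\O^{-1}u^5\nab^5(\O e_3\phi)\|^2_{L^2_uL^2}+c_0^{-1}(\underline u^{1/2}a^{1/4}/(|u|^{1/2}\O))^2\underline u\,a^{1/2}$, and absorbs the first piece to the left. This is the precise origin of the $\underline u^{1/2}a^{1/4}/(|u|^{1/2}\O)\cdot\underline u^{1/2}a^{1/4}$ contribution in the improved bound. (Incidentally, $\O\underline\omega$ does not appear in \eqref{phi transport 3}--\eqref{phi transport 4}, so your remark about handling it via the choice of $\lambda_0$ is beside the point.)

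\medskip

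\textbf{The dichotomy argument does not yield the improved estimate.} On the region $\{|u|\O^2>\underline u\,a^{1/2}\}$ the first bound gives only $\F\,\underline u^{1/2}a^{1/4}\lesssim\underline u^{1/2}a^{1/4}$; the smallness $\underline u a^{1/2}/(|u|\O^{3/2})\le B^{-1}$ does not by itself produce the extra factor $(\underline u a^{1/2}/(|u|\O^{3/2}))^{1/4}$ from that crude bound. In the paper the improved estimate is not obtained by post-processing the first bound but by tracking each error integral sharply: the non-borderline terms (e.g.\ those with $\O\widehat{\underline\chi}$, $\O\chi$, $\nab^5(\O\tr\underline\chi,\O\widehat{\underline\chi})$) are each bounded by $\underline u\,a^{1/2}\cdot(\underline u a^{1/2}/(|u|\O^{3/2}))^{1/2}$ or better, and taking square roots gives the second summand of the improved bound directly.
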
 
\begin{proof}
Employing the paired equations \eqref{phi transport 3}, \eqref{phi transport 4}
\begin{equation*}
\begin{split}
\nab_{\O e_3}\nab_A\phi+\f12\O\tr\chib\nab_A\phi=&e_A(\O e_3\phi)-\O {\chibh_A\,}^B \nab_B\phi,\\
\nab_{\O e_4}[(\O e_3\phi)-(\O e_3\phi)_0]=&\O^2\Delta\phi-\f12\O\tr\chib\cdot\O e_4\phi-\f12\O\tr\chi\cdot\O e_3\phi+2\O^2\etb^A\cdot e_A\phi.
\end{split}
\end{equation*}
and commutating them with $i$ angular derivatives, for $i
\geq 1$ we arrive at 
\begin{equation*}
\begin{split}
&(\O\nab_3)\nab^i(e_A\phi)+\f{i+1}{2}\O\tr\chib\nab^i(e_A\phi)\\
=&\sum_{i_1+i_2+i_3=i}\nab^{i_1}(\eta+\etb)^{i_2}\nab^{i_3}\bigg(\nab_A(\O\nab_3\phi)-\O {\chibh_A\,}^B \nab_B\phi\bigg)\\
&+\sum_{\substack{i_1+i_2+i_3+i_4=i, \\ i_4<i}}\nab^{i_1}(\eta+\etb)^{i_2}\nab^{i_3}(\O\tr\chib)\nab^{i_4}(e_A\phi)\\
&+\sum_{i_1+i_2+i_3+i_4=i}\nab^{i_1}(\eta+\etb)^{i_2}\nab^{i_3}(\O\chibh, \O\tr\chib-(\O\tr\chib)_0)\nab^{i_4}(e_A\phi)\\
&+\sum_{i_1+i_2+i_3+i_4=i-1}\nab^{i_1}(\eta+\etb)^{i_2}\nab^{i_3}(\O\beb)\nab^{i_4}(e_A\phi),
\end{split}
\end{equation*}
\begin{equation*}
\begin{split}
&(\O\nab_4)\nab^i(\O e_3\phi)\\
=&(\O\nab_4)\nab^i[(\O e_3\phi)-(\O e_3\phi)_0]\\
=&\sum_{i_1+i_2+i_3=i}\nab^{i_1}(\eta+\etb)^{i_2}\nab^{i_3}\bigg(\O^2\Delta_g\phi-\f12\O\tr\chib\cdot\O e_4\phi-\f12\O\tr\chi\cdot\O e_3\phi+2\O^2\etb^A e_A\phi \bigg)\\
&+\sum_{i_1+i_2+i_3+i_4=i}\nab^{i_1}(\eta+\etb)^{i_2}\nab^{i_3}(\O\chi)\nab^{i_4}[(\O e_3\phi)-(\O e_3\phi)_0]\\
&+\sum_{i_1+i_2+i_3+i_4=i-1}\nab^{i_1}(\eta+\etb)^{i_2}\nab^{i_3}(\O\b)\nab^{i_4}[(\O e_3\phi)-(\O e_3\phi)_0].
\end{split}
\end{equation*}
With $i\geq 1$, we then consider 
\begin{equation*}
\begin{split}
&\int_0^{\ub}\int_{-1}^u\int_{S_{u,\ub}}u^{2i}\nab_{A_1\cdot\cdot\cdot A_i}(e^A\phi)[\O \nab_3\nab^{A_1\cdot\cdot\cdot A_i}(e_A\phi)+\f{i+1}{2}\O\tr\chib \nab^{A_1\cdot\cdot\cdot A_i}(e_A\phi)]\\
&+\int_0^{\ub}\int_{-1}^u\int_{S_{u,\ub}}u^{2i}\nab_{A_1\cdot\cdot\cdot A_i}[(\O e_3\phi)-(\O e_3\phi)_0]\{\O^{-2}\cdot \O\nab_4\nab^{A_1\cdot\cdot\cdot A_i}[(\O e_3\phi)-(\O e_3\phi)_0] \}.
\end{split}
\end{equation*}
Using (\ref{phi transport 3}) and (\ref{phi transport 4}) into the above expression, we see that the terms involving the top-order derivatives are of the form
$$\int_{\S} u^{2i}\nab_{A_1\cdot\cdot\cdot A_i}(e^A\phi)\nab^{A_1\cdot\cdot\cdot A_i}[e_A(\O e_3\phi)]+ u^{2i}\O^{-2}\O^2\nab_{A_1\cdot\cdot\cdot A_i}(\O e_3\phi) \nab^{A_1\cdot\cdot\cdot A_i}(\Delta_g\phi).$$
Employing integration by parts with respect to $\nab_A$, we notice that the top-order-derivative terms $\nab^{i+1}(\O e_3\phi)$ and $\nab^{i+1}(e_A\phi)$ are \underline{cancelled}. Carrying out the details of this approach, we get 

\begin{equation*}
\begin{split}
&\|u^i\nab^i(e_A\phi)\|^2_{L^2_{\ub}L^2(\S)}+\|\O^{-1} u^i\nab^i(\O e_3\phi)\|^2_{L^2_{u}L^2(\S)}\\
\leq&\|u^i\nab^i(e_A\phi)\|^2_{L^2_{\ub}L^2(S_{-1,\ub})}+\|u^{2i}\sum_{i_1+i_2+i_3=i-1}\nab^{i_1}(\eta+\etb)^{i_2+1}\nab^{i_3}\nab_A(\O e_3\phi)\nab^i(e_A\phi) \|_{L^1_uL^1_{\ub}L^1(\S)}\\
&+\|u^{2i}\sum_{i_1+i_2+i_3=i}\nab^{i_1}(\eta+\etb)^{i_2}\nab^{i_3}(\O {\chibh_A\,}^B \nab_B\phi) \nab^i(e_A\phi) \|_{L^1_uL^1_{\ub}L^1(\S)}\\
&+\|u^{2i}\sum_{i_1+i_2+i_3+i_4=i, \, i_4<i}\nab^{i_1}(\eta+\etb)^{i_2}\nab^{i_3}(\O\tr\chib)\nab^{i_4}(e_A\phi) \nab^i(e_A\phi) \|_{L^1_uL^1_{\ub}L^1(\S)}\\
&+\|u^{2i}\sum_{i_1+i_2+i_3+i_4=i}\nab^{i_1}(\eta+\etb)^{i_2}\nab^{i_3}(\O\chibh, \O\tr\chib-(\O\tr\chib)_0)\nab^{i_4}(e_A\phi) \nab^i(e_A\phi) \|_{L^1_uL^1_{\ub}L^1(\S)}\\
&+\|u^{2i}\sum_{i_1+i_2+i_3+i_4=i-1}\nab^{i_1}(\eta+\etb)^{i_2}\nab^{i_3}(\O\beb)\nab^{i_4}(e_A\phi) \nab^i(e_A\phi) \|_{L^1_uL^1_{\ub}L^1(\S)}\\
&+\|\O^{-2}\O^2 u^{2i}\sum_{i_1+i_2+i_3=i-1}\nab^{i_1}(\eta+\etb)^{i_2+1}\nab^{i_3}\Delta_g\phi\nab^i(\O e_3\phi) \|_{L^1_uL^1_{\ub}L^1(\S)}\\
&+\|\O^{-2} u^{2i}\sum_{i_1+i_2+i_3=i}\nab^{i_1}(\eta+\etb)^{i_2}\\
&\quad\quad\quad\quad\quad\quad\times\nab^{i_3}\bigg(-\f12\O\tr\chib\cdot\O e_4\phi-\f12\O\tr\chi \cdot\O e_3\phi+2\O^2\etb^A e_A\phi \bigg) \nab^i(\O e_3\phi) \|_{L^1_uL^1_{\ub}L^1(\S)}\\
\end{split}
\end{equation*}
\begin{equation}\label{energy estimate details e3phi}
\begin{split}
&+\|\O^{-2} u^{2i}\sum_{i_1+i_2+i_3+i_4=i}\nab^{i_1}(\eta+\etb)^{i_2}\nab^{i_3}(\O\chi)\nab^{i_4}[(\O e_3\phi)-(\O e_3\phi)_0] \nab^i(\O e_3\phi) \|_{L^1_uL^1_{\ub}L^1(\S)}\\
&+\|\O^{-2} u^{2i}\sum_{i_1+i_2+i_3+i_4=i-1}\nab^{i_1}(\eta+\etb)^{i_2}\nab^{i_3}(\O\b)\nab^{i_4}[(\O e_3\phi)-(\O e_3\phi)_0] \nab^i(\O e_3\phi) \|_{L^1_uL^1_{\ub}L^1(\S)}.
\end{split}
\end{equation}

We now derive bound for terms on the right. The initial-data term obeys
$$\|u^i\nab^i(e_A\phi)\|^2_{L^2_{\ub}L^2(S_{-1,\ub})}\leq (\f{\ub\at}{|u|^2}\ub^{\f12}|u|)^2=(\f{\ub^{\f32}\at}{|u|})^2= (\f{\ub\af}{|u|}\ub^{\f12}\af)^2\leq \ub\at\cdot\f{\ub^2\at}{|u|^2}.$$

{\color{black}
The terms with top-order derivatives (when $i=5$) are controlled as below: we first bound the 4 terms involving $\nab^5(e_A\phi)$ and they satisfy
\begin{equation*}
\begin{split}
&\|u^{10} (\eta,\etb)\nab^5(\O e_3\phi)\nab^5(e_A\phi)) \|_{L^1_uL^1_{\ub}L^1(\S)}\\
\leq&\|\O^{-1} u^{5}\nab^5 (\O e_3\phi) \|_{L^{\infty}_{\ub} L^2_{u}L^2(\S)}\|u^{5}\nab^5(e_A\phi)\|_{L^{\infty}_{u} L^2_{\ub}L^2(\S)}\| \O(\eta, \etb) \|_{L^{2}_{\ub} L^{2}_{u}L^{\infty}(\S)}\\
\leq&\ub^{\f12}\af\cdot\F S\cdot\ub^{\f12}\af\cdot\F S\cdot\f{\ub\at}{|u|^2}\ub^{\f12}|u|^{\f12}O\leq \ub\at\cdot\f{\ub\at}{|u|\O},
\end{split}
\end{equation*} 

\begin{equation*} 
\begin{split}
&\|u^{10} (\O\chibh, \O\tr\chib-(\O\tr\chib)_0)\nab^5(e_A\phi)\nab^5(e_A\phi) \|_{L^1_uL^1_{\ub}L^1(\S)}\\
\leq&\|u^{5}\nab^5(e_A\phi)\|_{L^{\infty}_{u} L^2_{\ub}L^2(\S)}\|u^{5}\nab^5 (e_A\phi) \|_{L^{\infty}_{u} L^2_{\ub}L^2(\S)}\| \O\chibh, \O\tr\chib-(\O\tr\chib)_0 \|_{L^{1}_{u} L^{\infty}_{\ub}L^{\infty}(\S)}\\
\leq&\ub^{\f12}\af\cdot\F S\cdot\ub^{\f12}\af\cdot\F S\cdot\f{\ub\at}{|u|}O\leq\ub \at\cdot(\f{\ub\at}{|u|\O^{\f32}})^{\f12},
\end{split}
\end{equation*}

\begin{equation*}
\begin{split}
&\|u^{10} (e_A\phi)\nab^5(\O \tr\chib, \O\chibh)\nab^5(e_A\phi) \|_{L^1_uL^1_{\ub}L^1(\S)}\\
\leq&\|u^{5}\nab^5 (\O\tr\chib, \O\chibh) \|_{L^{\infty}_{\ub} L^2_{u}L^2(\S)}\|u^{5}\nab^5 (e_A\phi) \|_{L^{\infty}_{u} L^2_{\ub}L^2(\S)}\| e_A\phi \|_{L^{2}_{u} L^2_{\ub}L^{\infty}(\S)}\\
\leq&\ub^{\f12}\af\cdot\F \t O\cdot \ub^{\f12}\af\cdot\F S\cdot\f{\ub\at}{|u|^2}|u|^{\f12}\ub^{\f12}O\leq \ub\at\cdot\f{\ub\at}{|u|\O}, 
\end{split}
\end{equation*}

\begin{equation*}
\begin{split}
&\|u^{10} (e_A\phi)\nab^4(\O\beb)\nab^5(e_A\phi) \|_{L^1_uL^1_{\ub}L^1(\S)}\\
\leq&\|u^{6}\O^{-1}\nab^4(\O\beb)  \|_{L^{\infty}_{\ub} L^2_{u}L^2(\S)}\|u^{5}\nab^5 (e_A\phi) \|_{L^{\infty}_{u} L^2_{\ub}L^2(\S)}\| u^{-1}\O e_A\phi \|_{L^{2}_{u} L^2_{\ub}L^{\infty}(\S)}\\
\leq&\ub^{\f32}a^{\f34}\F\mathcal{R}\cdot\ub^{\f12}\af\cdot\F S\cdot \f{\ub\at}{|u|^3}|u|^{\f12}\ub^{\f12}O\leq \ub\at\cdot\f{\ub\at}{|u|\O}.
\end{split}
\end{equation*}

For top-order-derivative terms containing $\nab^5(\O e_3\phi)$ and $\O\chi$, we have 
\begin{equation*}
\begin{split}
&\|\O^{-2} u^{10} (\O\chi)\nab^5(\O e_3\phi)\nab^5(\O e_3\phi) \|_{L^1_uL^1_{\ub}L^1(\S)}\\
\leq&\|\O^{-1} u^{5}\nab^5 (\O e_3\phi) \|_{L^{\infty}_{\ub} L^2_{u}L^2(\S)}\|\O^{-1} u^{5}\nab^5 (\O e_3\phi) \|_{L^{\infty}_{\ub} L^2_{u}L^2(\S)}\| \O\chi \|_{L^{1}_{\ub} L^{\infty}_{u}L^{\infty}(\S)}\\
\leq&\ub^{\f12}\af\F S\cdot\ub^{\f12}\af\F S\cdot\f{\at}{|u|}\ub O\leq \ub\at\cdot(\f{\ub\at}{|u|\O})^{\f12},
\end{split}
\end{equation*}

\begin{equation*}
\begin{split}
&\|\O^{-2} u^{10} (\O e_3\phi)\nab^5(\O \tr\chi)\nab^5(\O e_3\phi) \|_{L^1_uL^1_{\ub}L^1(\S)}\\
\leq&\|u^{6}\nab^5 (\O\tr\chi) \|_{L^{\infty}_{u} L^2_{\ub}L^2(\S)}\|\O^{-1} u^{5}\nab^5 (\O e_3\phi) \|_{L^{\infty}_{\ub} L^2_{u}L^2(\S)}\| \O^{-1} u^{-1}(\O e_3\phi) \|_{L^{2}_{u} L^2_{\ub}L^{\infty}(\S)}\\
\leq&\ub^{\f32}a\t O'\cdot\ub^{\f12}\af S \cdot\F\cdot\f{1}{|u|^2\O}|u|^{\f12}\ub^{\f12}\leq \ub\at\cdot\f{\ub\at}{|u|\O}.\end{split}
\end{equation*}
Noting that in above inequality, we employ the crucial estimates via $\t O'$. 

To deal with the top-order-derivative terms involving both $\nab^5(\O e_3\phi)$ and $\O\tr\chib$, we encounter the borderline estimate. Using the derived Proposition \ref{Omegae4phi energy}, we have 

\begin{equation*}
\begin{split}
&\|\O^{-2} u^{10} (\O\tr\chib)\nab^5(\O e_4\phi)\nab^5(\O e_3\phi) \|_{L^1_uL^1_{\ub}L^1(\S)}\\
\leq&\|u^{5}\nab^5 (\O e_4\phi) \|_{L^{\infty}_{u} L^2_{\ub}L^2(\S)}\|\O^{-1} u^{5}\nab^5 (\O e_3\phi) \|_{L^{\infty}_{\ub} L^2_{u}L^2(\S)}\| \O^{-1}(\O\tr\chib) \|_{L^{2}_{u} L^2_{\ub}L^{\infty}(\S)}\\
\leq&\ub^{\f12}\at\cdot\f{1}{|u|\O}|u|^{\f12}\ub^{\f12}\cdot \|\O^{-1}u^{5}\nab^5 (\O e_3\phi) \|_{L^{\infty}_{\ub} L^2_{u}L^2(\S)}\\
\leq& c_0 \|\O^{-1}u^{5}\nab^5 (\O e_3\phi) \|^2_{L^{\infty}_{\ub} L^2_{u}L^2(\S)}+\f{1}{c_0}(\f{\ub^{\f12}\af}{|u|^{\f12}\O})^2\cdot\ub\at. 
\end{split}
\end{equation*}
By choosing $c_0$ to be suitably small, the term $c_0 \|\O^{-1}u^{5}\nab^5 (\O e_3\phi) \|^2_{L^{\infty}_{\ub} L^2_{u}L^2(\S)}$ can be absorbed to the left of \eqref{energy estimate details e3phi}. 

The other top-order-derivative terms with $\O\tr\chib$ satisfies a better upper bound
\begin{equation*}
\begin{split}
&\|\O^{-2} u^{10} (\O e_4\phi)\nab^5(\O\tr\chib)\nab^5(\O e_3\phi) \|_{L^1_uL^1_{\ub}L^1(\S)}\\
\leq&\|u^{5}\nab^5 (\O\tr\chib) \|_{L^{\infty}_{\ub} L^2_{u}L^2(\S)}\|\O^{-1} u^{5}\nab^5 (\O e_3\phi) \|_{L^{\infty}_{\ub} L^2_{u}L^2(\S)}\| \O^{-1}(\O e_4\phi) \|_{L^{1}_{\ub} L^{\infty}_{u}L^{\infty}(\S)}\\
\leq&\ub^{\f12}\af\F\t O\cdot\ub^{\f12}\af\F S\cdot\f{\at}{|u|\O}\ub O\leq \ub\at\cdot(\f{\ub\at}{|u|\O^{\f32}})^{{\color{black}\f12}}.
\end{split}
\end{equation*} 

For the top-order-derivative terms containing $\nab^5(\O e_3\phi)$ and $\etb$, we have
\begin{equation*}
\begin{split}
&\|u^{10} \etb^A\nab^5(e_A\phi)\nab^5(\O e_3\phi) \|_{L^1_uL^1_{\ub}L^1(\S)}\\
\leq&\|u^{5}\nab^5 (e_A\phi) \|_{L^{\infty}_{u} L^2_{\ub}L^2(\S)}\|\O^{-1}u^{5}\nab^5 (\O e_3\phi) \|_{L^{\infty}_{\ub} L^2_{u}L^2(\S)}\| \O \etb^A \|_{L^{2}_{u} L^2_{\ub}L^{\infty}(\S)}\\
\leq&\ub^{\f12}\af\F S\cdot\ub^{\f12}\af\F S\cdot\f{\ub\at}{|u|^2}|u|^{\f12}\ub^{\f12}O\leq \ub\at\cdot\f{\ub\at}{|u|\O},
\end{split}
\end{equation*}
\begin{equation*}
\begin{split}
&\|u^{10} (e_A\phi)\nab^5\etb\nab^5(\O e_3\phi) \|_{L^1_uL^1_{\ub}L^1(\S)}\\
\leq&\|u^{5}\nab^5 \etb \|_{L^{\infty}_{u} L^2_{\ub}L^2(\S)}\|\O^{-1}u^{5}\nab^5 (\O e_3\phi) \|_{L^{\infty}_{\ub} L^2_{u}L^2(\S)}\| \O e_A\phi \|_{L^{2}_{u} L^2_{\ub}L^{\infty}(\S)}\\
\leq&\ub^{\f12}\af\F\t O\cdot \ub^{\f12}\af\F S\cdot\f{\ub\at}{|u|^2}|u|^{\f12}\ub^{\f12}O\leq \ub\at\cdot\f{\ub\at}{|u|\O}.
\end{split}
\end{equation*}

The top-order-derivative terms involving $\nab^5(\O e_3\phi)$ and $\O e_3\phi-(\O e_3\phi)_0$ obey 

\begin{equation*}
\begin{split}
&\|\O^{-2} u^{10}[(\O e_3\phi)-(\O e_3\phi)_0]\nab^5(\O \tr\chi)\nab^5(\O e_3\phi) \|_{L^1_uL^1_{\ub}L^1(\S)}\\
\leq&\|u^{5}\nab^5 (\O\tr\chi) \|_{L^{\infty}_{u} L^2_{\ub}L^2(\S)}\|\O^{-1} u^{5}\nab^5 (\O e_3\phi) \|_{L^{\infty}_{\ub} L^2_{u}L^2(\S)}\\
&\quad\quad\quad\quad\quad\quad\quad\quad\quad\quad\quad\quad \times\|\O^{-1}[(\O e_3\phi)-(\O e_3\phi)_0] \|_{L^{2}_{u} L^2_{\ub}L^{\infty}(\S)}\\
\leq&\ub^{\f12}\at\t O\cdot \ub^{\f12}\af \F S\cdot\f{\ub\at}{|u|^2\O}|u|^{\f12}\ub^{\f12}O\leq \ub\at\cdot\f{\ub\at}{|u|\O},
\end{split}
\end{equation*}

\begin{equation*}
\begin{split}
&\|\O^{-2} u^{10}[(\O e_3\phi)-(\O e_3\phi)_0]\nab^5(\O \chih)\nab^5(\O e_3\phi) \|_{L^1_uL^1_{\ub}L^1(\S)}\\
\leq&\|u^{5}\nab^5 (\O\chih) \|_{L^{\infty}_{u} L^2_{\ub}L^2(\S)}\|\O^{-1} u^{5}\nab^5 (\O e_3\phi) \|_{L^{\infty}_{\ub} L^2_{u}L^2(\S)}\|\O^{-1}[(\O e_3\phi)-(\O e_3\phi)_0] \|_{L^{2}_{u} L^2_{\ub}L^{\infty}(\S)}\\
\leq&\ub^{\f12}\at \t O\cdot \ub^{\f12}\af\F S\cdot\f{\ub\at}{|u|^2\O}|u|^{\f12}\ub^{\f12} O\leq\ub\at\cdot\f{\ub\at}{|u|\O}.
\end{split}
\end{equation*}

And the top-order-derivative term with $\nab^5(\O e_3\phi)$ and $\nab^4(\O\b)$ obey
\begin{equation*}
\begin{split}
&\|\O^{-2} u^{10}[(\O e_3\phi)-(\O e_3\phi)_0]\nab^4(\O \beta)\nab^5(\O e_3\phi) \|_{L^1_uL^1_{\ub}L^1(\S)}\\
\leq&\|u^{5}\nab^4 (\O\beta) \|_{L^{\infty}_{u} L^2_{\ub}L^2(\S)}\|\O^{-1} u^{5}\nab^5 (\O e_3\phi) \|_{L^{\infty}_{\ub} L^2_{u}L^2(\S)}\\
&\quad\quad\quad\quad\quad\quad\quad\quad\quad\quad\quad\quad  \times\|\O^{-1}[(\O e_3\phi)-(\O e_3\phi)_0] \|_{L^{2}_{u} L^2_{\ub}L^{\infty}(\S)}\\
\leq&\ub^{\f12}\at \M R \cdot \ub^{\f12}\af\F S\cdot\f{\ub\at}{|u|^2\O}|u|^{\f12}\ub^{\f12}O\leq\ub\at\cdot\f{\ub\at}{|u|\O}.
\end{split}
\end{equation*}

\noindent We hence bound all the top-order-derivative terms. It is a straight-forward check that compared with the upper bounds of the obtained estimates above, the rest terms obey the same (and even smaller) upper bounds. There is no other borderline term and their upper bounds are much smaller than $\ub a^{\f12}$. Gathering all these estimates, we finish the proof of this current proposition. 

}
\end{proof}

\section{Elliptic estimates for the fifth derivatives of the Ricci coefficients}\label{elliptic estimates} 

{\color{black}
We now move to bound the fifth angular derivatives for the Ricci coefficients. We start from the estimates for $\h\om$. 
\begin{proposition} \label{om.5.bd}
Under the assumptions of Theorem \ref{main thm} and the bootstrap assumptions \eqref{BA.1}, \eqref{BA.2}, \eqref{BA.3}, \eqref{BA.4}, it holds
$$\|u^5\nab^5({\color{black}\O}\h\o)\|_{L^{\i}_uL^2_{\ub}L^2(\S)}\ls \ub^{\f12}\at[1+\M R(\b)].$$

\end{proposition}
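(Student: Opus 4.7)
The plan is to apply the standard renormalized Hodge-system strategy for top-order estimates of $\h\omega$, adapted to the $\O$-weighted mixed norm $L^\infty_uL^2_{\ub}L^2(\S)$ used in this paper. Naively commuting the null structure equation $\O\nab_3(\O\h\omega)=\O^2 K+\text{l.o.t.}$ with five angular derivatives would produce a term $\nab^5 K$ that cannot be absorbed into the available curvature norms. We avoid this by working with the mass-aspect-type quantity
$$\kappa \;:=\; \nab\h\omega + {}^*\nab\omega^{\dagger} - \tfrac12\bigl(\beta - \tfrac12\nab_4\phi\,\nab\phi\bigr),$$
already introduced after the definition of the norms, with $\omega^{\dagger}\equiv 0$ on $H_{-1}$. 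Taking $\nab$ of the $\nab_3$-equation for $\h\omega$ and comparing with the renormalized Bianchi equation for $\beta-\tfrac12\nab_4\phi\nab\phi$, a direct computation shows that $\kappa$ satisfies a schematic transport equation of the form
$$\O\nab_3\kappa + \O\tr\chib\,\kappa \;=\; \nab\!\left(\O^2\bigl(K-|u|^{-2}-\tfrac14\nab^A\phi\nab_A\phi\bigr)\right) + \O^2\nab\sigmac + \text{Ricci and scalar-field products},$$
in which the dangerous $\nab^5 K$ is replaced by gradients of the renormalized Gauss curvature and of $\sigmac$, both controlled by $\M R$.

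I would first commute this equation with four angular derivatives, apply Proposition~\ref{evolution lemma} with $\lambda_0$ adapted to the weight carried by $\kappa$, and bound the resulting $L^\infty_u L^2(\S)$ norm of $|u|^5\nab^4\kappa$ by the initial datum $\|u^4\nab^4(\O\kappa)\|_{L^2(S_{-1,\ub})}\lesssim a^{1/2}$ plus $L^1_u$-integrals of the source. Taking $L^2_{\ub}$, the curvature source produces precisely the term $\M R(\beta)+\M R(K,\sigmac)$ after pairing with the $\O$-weighted norms in \eqref{R i 1}--\eqref{R i 2}; all other nonlinear products on the right are estimated as in Sections~\ref{secRicci}--\ref{scalar field} and absorb factors of $B^{-\sigma}$ by \eqref{smallness}. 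Next, to convert $\nab^4\kappa$-control into $\nab^5(\O\h\omega)$-control, I would use the Hodge system for the curl-free $1$-form $\nab\h\omega$: $\curl(\nab\h\omega)=0$ and, by definition of $\kappa$ together with a corresponding Hodge identity for $\omega^{\dagger}$,
$$\div(\nab\h\omega) \;=\; \div\kappa + \tfrac12\div\bigl(\beta-\tfrac12\nab_4\phi\nab\phi\bigr) + \text{l.o.t.}$$
Iterating Proposition~\ref{ellipticthm} and combining with the bound $\|u^5\nab^4(\O\beta-\tfrac12\O\nab_4\phi\nab\phi)\|_{L^2_{\ub}L^2(\S)}\lesssim \ub^{1/2}a^{1/2}\M R(\beta)$ from the definition of $\M R_{4,1}$ yields the claimed estimate.

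The main obstacle I foresee is bookkeeping of $\O$-weights so that the new borderline contributions are not amplified when $|u|$ is small. In particular, the term in which $\nab$ hits the factor $\O$ inside $\O\h\omega$ produces $\nab(\log\O)\cdot\nab^5(\O\h\omega)=-2\h\omega\,\nab^5(\O\h\omega)$-type contributions, and one must use $\O\h\omega(u,0)=0$ only at initial $\ub=0$ together with a Grönwall argument in $u$ rather than the smallness $1/a\ll 1$ invoked in \cite{AL}. Moreover, one has to verify that the lower-order scalar-field products coming from the renormalization $\beta-\tfrac12\nab_4\phi\nab\phi$ do not introduce borderline terms carrying the problematic weight $\F$; for this, the improved estimate $\sum_{i\leq 4}\|u^i\nab^i(\O e_4\phi)\|_{L^2(\S)}\lesssim a^{1/2}$ from Proposition~\ref{L2 scalar field}, which is free of such factors, is decisive.
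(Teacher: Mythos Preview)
Your overall strategy---introduce $\kappa$, propagate $\nab^4(\O\kappa)$ by a $\nab_3$-transport equation, then recover $\nab^5(\O\h\omega)$ via a Hodge system---matches the paper exactly. However, the transport equation you write for $\kappa$ is wrong in a way that breaks the argument. The entire point of adjoining $\omega^\dagger$ (defined by $\nab_3\omega^\dagger=\tfrac12\sigmac$) and subtracting $\tfrac12(\beta-\tfrac12\nab_4\phi\nab\phi)$ is that the top-order curvature terms \emph{cancel}: the $\nab K$ arising from $\nab(\nab_3\h\omega)$ is killed by the $\nab K$ in the $\nab_3$-Bianchi equation for $\beta$, and the residual ${}^*\nab\sigmac$ is then killed by $\nab_3({}^*\nab\omega^\dagger)$. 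The correct right-hand side (cf.\ the paper) contains only $\q K$, $\tfrac{\O}{|u|}\beta$, $\nab(\O e_3\phi)\p$, $\tfrac{1}{|u|}\nab(\O e_4\phi)$ and Ricci products---no $\nab K$ and no $\nab\sigmac$.

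This matters because your claim that ``gradients of the renormalized Gauss curvature and of $\sigmac$ are controlled by $\M R$'' is false at the level you need: $\M R$ (see \eqref{R i 1}--\eqref{R i 2}) bounds $\nab^i(K-|u|^{-2}-\tfrac14|\nab\phi|^2,\sigmac)$ only for $i\le 4$. After commuting your equation with four derivatives you would face $\nab^5(K-\ldots)$ and $\nab^5\sigmac$, for which there is no available estimate, so the bound cannot close. With the correct cancellation, the top-order terms are instead $\p\nab^5(\O\chibh,\O\tr\chib)$, $\nab^5(\O e_3\phi)$, $\nab^5(\O e_4\phi)$, and $\tfrac{1}{|u|}\nab^4(\O\beta)$---all of which are controlled by $\tilde{\M O}_{5,2}$, $\M S$, and $\M R(\beta)$. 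A further point you miss: the Hodge step yields a coupled system for $\nab^5(\O\h\omega)$ and $\nab^5(\O\omega^\dagger)$, and the $\omega^\dagger$-term reappears on the right with a small coefficient $\sim B^{-1/2}$ that must be absorbed; this is how the paper closes without the $1/a$ smallness of \cite{AL}. Finally, two small slips: $\nab\log\O=\tfrac12(\eta+\etb)$, not $-2\h\omega$ (that is $\nab_4\log\O$); and $\h\omega$ vanishes on $H_{-1}$, not on $\Hb_0$.
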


\begin{proof}
We first construct an auxiliary function $\od$, which satisfies 
$$\nab_3 \od=\f 12\sigmac$$
and has zero initial data on $H_{-1}$. We further let
$$\kappa:=\nab\h\o+^*\nab\od-\f12(\b-\f12\nab_4\phi\nab\phi).$$

Using Proposition \ref{evolution lemma} and bootstrap assumption \eqref{BA.2}, we have
\begin{align*}
\sum_{i\leq 4}\|u^{i}\nab^i\od\|_{L^\i_uL^\i_{\ub}L^2(\S)}\leq& \|\O u^{i+1}\nab^i\sigmac\|_{L^\i_{\ub}L^2_{u}L^2(\S)}\|u^{-1}\|_{L^{\i}_{\ub}L^2_u L^{\i}(\S)}+\f{\ub^{\f12} a^{\f12}}{|u|^{\f12}}\\
\leq&\f{\ub^{\f12}\at \M R}{|u|^{\f12}}+\f{\ub^{\f12}\at}{|u|^{\f12}}\leq \at B^{-\f14}.
\end{align*}
This also implies
$$\sum_{i\leq 4}\|u^{i}\nab^i(\O \od)\|_{L^\i_uL^\i_{\ub}L^2(\S)}\leq\at B^{-\f14}\leq \at.$$

\noindent Note that in above inequality $\O\od$ obeys the same bounds as for $\p$. Hence we regard $\O\od$ as one of the {\color{black}$\p$ terms and we employ the notation $\p\in\{\O\trch,\O\chih,\O\h\om,\O\od\}$.} Using this notation, we write the transport equation for $\O\kappa$ as
\begin{equation*}
\begin{split}
&{\color{black}\O}\nab_3({\color{black}\O}\kappa)+\f12 {\color{black}\O}\tr\chib({\color{black}\O}\kappa)\\
=&\sum_{i_1+i_2+i_3=1}\q^{i_1}\nab^{i_2}(\O\tr\chib+\f{2}{|u|}, \O\chibh)\cdot\nab^{i_3}\p+\sum_{i_1+i_2=1}\O^2\q^{i_1+1}\nab^{i_2}(\eta,\etab, e_A\phi)\\
&+\frac{1}{|u|}\q\p+\frac{{\color{black}\O}}{|u|}\beta+\q K+\nab(\O e_3\phi)\p+\f{1}{|u|}\nab(\O e_4\phi).
\end{split}
\end{equation*}
Commuting with angular derivatives for $4$ times and employing the schematic Codazzi equation ${\color{black}\O}\beta=\sum_{i_1+i_2=1}\q^{i_1}\nab^{i_2}\p$, we obtain 
\begin{equation*}
\begin{split}
&{\color{black}\O}\nab_3 \nab^4 ({\color{black}\O}\kappa)+\f{5}{2}{\color{black}\O}\tr\chib \nab^4 ({\color{black}\O}\kappa)\\
=&\sum_{i_1+i_2+i_3+i_4=5} \nab^{i_1}\q^{i_2}\nab^{i_3}({\color{black}\O}\trchb+\f2{|u|},{\color{black}\O}\chibh)\cdot\nab^{i_4}\p+\sum_{\substack{i_1+i_2+i_3=5\\ i_1\leq 4}}\O^2\nab^{i_1}\q^{i_2+1}\nab^{i_3}(\eta,\etab, e_A\phi)\\
&+\frac{1}{|u|}\nab^4({\color{black}\O}\beta)+\frac{1}{|u|}\sum_{i_1+i_2+i_3=4} \nab^{i_1}\q^{i_2+1}\nab^{i_3}\p {\color{black}+\sum_{\substack{i_1+i_2+i_3=4\\ i_1\leq 3}}\nab^{i_1+1}\q^{i_2+1}\nab^{i_3}\p}\\
&+\sum_{i_1+i_2+i_3+i_4=4} \nab^{i_1}\q^{i_2}\nab^{i_3}\q\nab^{i_4}K{\color{black}+\p\nab^5(\O e_3\phi)+\f{1}{|u|}\nab^5(\O e_4\phi)}.
\end{split}
\end{equation*}
Employing Proposition \ref{evolution lemma} with $\lambda_0=\f52$, we then bound the $\|u^4\nab^4({\color{black}\O}\kappa)\|_{L^2_{\ub}L_u^{\i}L^{2}(\S)}$ norm by the $\|u^4\cdot\|_{L_{\ub}^{2}L_{u}^{1}L^{2}(\S)}$ norm of the right. 

In the first term, we encounter $u^4\p\nab^5(\O\tr\chib, \O\chibh)$ and it satisfies
\begin{equation*}
\begin{split}
\|u^4\p \nab^5(\O\tr\chib, \O\chibh)\|_{L^2_{\ub}L^1_u L^2(\S)}\leq&\|u^5\nab^5(\O\tr\chib, \O\chibh)\|_{L^{\i}_{\ub}L^2_u L^2(\S)}\cdot\|u^{-1}\p\|_{L^2_{\ub}L^2_u L^{\infty}(\S)}\\
\leq& \ub^{\f12}\af\F \M R\cdot \f{\ub^{\f12}\at}{|u|^{\f32}}O\leq\f{\ub^{\f12}\at B^{-\f14}}{|u|}.
\end{split}
\end{equation*}
The rest of the first term obey
\begin{equation*}
\begin{split}
\ls&\|u^4({\color{black}\O}\trchb+\f2{|u|},{\color{black}\O}\chibh)\nab^5\p\|_{L^2_{\ub}L_{u}^{1}L^{2}(\S)}+\sum_{\substack{i_1+i_2+i_3=5\\i_1,i_3\leq 4}}\|u^4\nab^{i_1}\q^{i_2+1}\nab^{i_3}\p\|_{L^2_{\ub}L_{u}^{1}L^{2}(\S)}\\
\ls& \|u^2\q\|_{L^\infty_{\ub}L^\infty_u L^\infty(\S)}\|u^{-3}\|_{L^1_u}\\
&\quad\times\bigg(\|u^5\nab^5({\color{black}\O}\trch, {\color{black}\O}\chih, {\color{black}\O}\h\om)\|_{L^\infty_uL_{\ub}^{2}L^{2}(\S)}+\|u^5\nab^5({\color{black}\O}\od)\|_{L^\infty_uL_{\ub}^{2}L^{2}(\S)}\bigg)+\f{\ub^{\f32} a O^2}{|u|^2}\\
\ls & \frac{\ub a^{\f12}O}{|u|^2}\cdot\bigg(\ub^{\f12} a^{\f12} \t O+\|u^5\nab^5({\color{black}\O}\od)\|_{L^\infty_uL_{\ub}^{2}L^{2}(\S)}\bigg)+\f{\ub^{\f32} a O^2}{|u|^2}\\
\leq&\f{\ub^{\f12}\at B^{-\f12}}{|u|}+\f{B^{-\f12}}{|u|}\|u^5\nab^5({\color{black}\O}\od)\|_{L^\infty_uL_{\ub}^{2}L^{2}(\S)}.
\end{split}
\end{equation*}

In the second term, there is $\O^2 u^4\q \nab^5(\eta, \etb, e_A\phi)$ and it obeys
\begin{equation*}
\begin{split}
&\|\O^2 u^4 \q \nab^5(\eta,\etb, e_A\phi)\|_{L^2_{\ub}L^1_u L^2(\S)}\\
\leq&\|u^5\nab^5\etb\|_{L^{\i}_u L^2_{\ub} L^2(\S)}\|\O^2 u^{-1}\q\|_{L^1_u L^{\i}_{\ub} L^{\i}(\S)}+\|\O u^5\nab^5\eta\|_{L^{\i}_{\ub} L^2_{u} L^2(\S)}\|\O u^{-1}\q\|_{L^2_{\ub} L^{2}_{u} L^{\i}(\S)}\\
&+\|\O u^5\nab^5(e_A\phi)\|_{L^{\i}_{\ub} L^2_{u} L^2(\S)}\|\O u^{-1}\q\|_{L^2_{\ub} L^{2}_{u} L^{\i}(\S)}\\
\leq&\ub^{\f12}\af\F \cdot\t O\cdot\f{\ub\at}{|u|^2}O+\ub^{\f12}\at(\t O+S)\cdot\f{\ub\at O}{|u|^3}\ub^{\f12}|u|^{\f12}\leq \f{\ub^{\f12}\at B^{-1}}{|u|}.
\end{split}
\end{equation*}

The rest of the second term can be bounded by
\begin{equation*}
\begin{split}
\sum_{\substack{i_1+i_2+i_3=5\\i_1,i_3\leq 4}}\|\O^2 u^4\nab^{i_1}\q^{i_2+1}\nab^{i_3}\q\|_{L^2_{\ub}L_{u}^{1}L^{2}(\S)}
\ls &\f{\ub^{\f32} a O^2}{|u|^2}\leq \f{\ub^{\f12}\at B^{-\f12}}{|u|}.
\end{split}
\end{equation*}

The third term containing $\beta$ satisfies
\begin{equation*}
\begin{split}
\|u^3\nab^4({\color{black}\O}\beta)\|_{L^2_{\ub}L_{u}^{1}L^{2}(\S)}
\ls& \|u^{-2}\|_{L^1_u}\|u^5\nab^4({\color{black}\O}\beta)\|_{L^\infty_uL^2_{\ub}L^2(\S)}
\ls \f{\ub^{\f12} \at}{|u|}\M R(\b).
\end{split}
\end{equation*}

For the next two terms, we have 
\begin{equation*}
\begin{split}
&\sum_{i_1+i_2+i_3=4}\|u^3\nab^{i_1}\q^{i_2+1}\nab^{i_3}\p\|_{L^2_{\ub}L_{u}^{1}L^{2}(\S)}\ls \f{\ub^{\f32} a O^2}{|u|^2}\leq \f{\ub^{\f12}\at B^{-\f12}}{|u|},
\end{split}
\end{equation*}
{\color{black}
$$\sum_{\substack{i_1+i_2+i_3=4\\ i_1\leq 3}}\|u^4 \nab^{i_1+1}\q^{i_2+1}\nab^{i_3}\p\|_{L^2_{\ub}L_{u}^{1}L^{2}(\S)}\ls \f{\ub^{\f32} a O^2}{|u|^2}\leq \f{\ub^{\f12}\at B^{-\f12}}{|u|}.$$
}

The sixth term containing $K$ obeys
\begin{equation*}
\begin{split}
&\sum_{i_1+i_2+i_3=4}\|u^4\nab^{i_1}\q^{i_2+1}\nab^{i_3}K\|_{L^2_{\ub}L_{u}^{1}L^{2}(\S)}\\
\ls &\ub a^{\f12}O\bigg(\frac{\ub^{\f32}a^{\f34}}{|u|^3}\F \M R+\frac{\ub^{\f12}}{|u|^2}\bigg)\ls \f{\ub^{\f12}\at B^{-\f12}}{|u|},
\end{split}
\end{equation*}
where we use Proposition \ref{product} and the bootstrap assumption \eqref{BA.3}.

{\color{black}
For the last two terms, we have 
\begin{equation*}
\begin{split}
\|u^4\p\nab^5(\O e_3\phi)\|_{L^2_{\ub}L_{u}^{1}L^{2}(\S)}
\leq& \|\O^{-1} u^5\nab^5(\O e_3\phi)\|_{L^{\infty}_{\ub} L^2_u L^2(\S)}\|\O u^{-1}\p\|_{L^2_{\ub}L^2_{u}L^{\i}(\S)}\\
 \leq& \ub^{\f12}\af\F S\cdot\f{\at}{|u|^2}|u|^{\f12}\ub^{\f12}O\leq \f{\ub^{\f12}\at B^{-\f12}}{|u|}.
\end{split}
\end{equation*}
And by Proposition \eqref{Omegae4phi energy}, it holds
\begin{equation*}
\begin{split}
\|u^3\nab^5(\O e_4\phi)\|_{L^2_{\ub}L_{u}^{1}L^{2}(\S)}
\ls \|u^5\nab^5(\O e_4\phi)\|_{L^{\infty}_{u} L^2_{\ub} L^2(\S)}\|u^{-2}\|_{L^1_u}
 \ls \f{\ub^{\f12}\at}{|u|}.
 \end{split}
\end{equation*} 
}

\noindent Collecting all the above estimates, we arrive at
$$\|u^4\nab^4({\color{black}\O}\kappa)\|_{L^2_{\ub}L^{\i}_uL^2(\S)}\ls \frac{\ub^{\f12}\at}{|u|}[1+\M R(\b)]+\f{B^{-\f12}}{|u|}\|u^5\nab^5({\color{black}\O}\od)\|_{L^\infty_uL_{\ub}^{2}L^{2}(\S)}.$$
This also implies
\begin{equation}\label{nab4kappa}
\|u^5\nab^4({\color{black}\O}\kappa)\|_{L^2_{\ub}L^{\i}_uL^2(\S)}\ls \ub^{\f12}\at[1+\M R(\b)+\f{B^{-\f12}}{\ub^{\f12}a^{\f12}}\|u^5\nab^5({\color{black}\O}\od)\|_{L^\infty_uL_{\ub}^{2}L^{2}(\S)}].
\end{equation}

We then appeal to the following $\div$-$\curl$ system: 
$$\div\nab\h\o=\div\kappa+\f12\div(\b-\f12\nab_4\phi\nab\phi),\quad \curl\nab\h\o=0,$$
$$\curl\nab\od=\curl\kappa+\f12\curl(\b-\f12\nab_4\phi\nab\phi), \quad \div\nab\od=0.$$
Applying the elliptic estimates from Proposition \ref{ellipticthm}, we get
\begin{equation*}
\begin{split}
&\|u^5\nab^5({\color{black}\O}\h\o, {\color{black}\O}\od)\|_{L^2(\S)}\\
\ls &\sum_{j\leq 4}\bigg(\|u^{j+1}\nab^j({\color{black}\O}\kappa)\|_{L^2(\S)}+\|u^{j+1}\nab^{j}({\color{black}\O}\b-\f12\O\nab_4\phi\nab\phi)\|_{L^2(\S)}+\|u^j\nab^j({\color{black}\O}\h\o,{\color{black}\O}\od)\|_{L^2(\S)}\bigg)\\
\ls &\|u^5\nab^4({\color{black}\O}\kappa)\|_{L^2(\S)}+\sum_{j\leq 4}\bigg(\|u^{j+1}\nab^{j}({\color{black}\O}\b-\f12\O\nab_4\phi\nab\phi)\|_{L^2(\S)}+\|u^j\nab^j({\color{black}\O}\h\o,{\color{black}\O}\od)\|_{L^2(\S)}\bigg).
\end{split}
\end{equation*}
After taking $L^2$ in $\ub$, together with \eqref{nab4kappa}, the above inequality gives 
\begin{equation*}
\begin{split}
&\|u^5\nab^5({\color{black}\O}\h\o,{\color{black}\O}\od)\|_{L^2_{\ub}L^2(\S)}\\
\ls& \ub^{\f12}\at\bigg(1+\M R(\b)+\f{B^{-\f12}}{\ub^{\f12}\at}\|u^5\nab^5({\color{black}\O}\od)\|_{L^\infty_uL_{\ub}^{2}L^{2}(\S)}\bigg)+\ub^{\f12}\at[1+\M R(\b)].
\end{split}
\end{equation*}
With $B$ sufficiently large, we absorb the $u^5\nab^5(\O\od)$ term to the left and get
\begin{equation*}
\begin{split}
\|u^5\nab^5({\color{black}\O}\h\o,{\color{black}\O}\od)\|_{L^2_{\ub}L^2(\S)}\ls& \ub^{\f12}\at[1+\M R(\b)].
\end{split}
\end{equation*}
\end{proof} 

}

We then derive the estimates for $\trch$ and $\chih$.

\begin{proposition}\label{chi.5.bd}
With the assumptions of Theorem \ref{main thm} and the bootstrap assumptions \eqref{BA.1}, \eqref{BA.2}, \eqref{BA.3},\eqref{BA.4}, we have
$$\|u^6\nab^5({\color{black}\O}\trch)\|_{L^2_{\ub}L^2(\S)}\ls  \ub^{\f32}a [1+\M R(\b)],\quad \|u^5\nab^5({\color{black}\O}\trch)\|_{L^2_{\ub}L^2(\S)}\ls \ub^{\f12}\at,$$
$$\|u^5\nab^5({\color{black}\O}\chih)\|_{L^2_{\ub}L^2(\S)}\ls  \ub^{\f12}\at[1+\mathcal R(\b)].$$

\end{proposition}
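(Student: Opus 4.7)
The plan is to prove the three estimates in sequence, beginning with the transport equation for $\O\trch$ and then deducing the $\O\chih$ bound via the Codazzi $\div$-system. For the $\O\trch$ estimates, I would start from the null structure equation
\begin{equation*}
\O\nab_4(\O^{-1}\trch - (\O^{-1}\trch)_0) = -|\chih|^2 - (\nab_4\phi)^2 - \tfrac12(\trch)^2,
\end{equation*}
commute with five angular derivatives to produce a schematic transport equation of the form $\O\nab_4\nab^5(\O^{-1}\trch - (\O^{-1}\trch)_0) = G$, where $G$ is quadratic in the Ricci coefficients and scalar-field derivatives, and then invoke Proposition \ref{transport} together with a Cauchy--Schwarz in $\ub$ to reduce matters to $\|u^k G\|_{L^1_{\ub}L^2(\S)}$ for $k = 5, 6$. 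The borderline top-order contributions are of schematic form $\chih\cdot\nab^5\chih$ and $\nab_4\phi\cdot\nab^5\nab_4\phi$; these I would estimate by Hölder splits such as $\|u\chih\|_{L^2_{\ub}L^\infty(\S)}\|u^5\nab^5\chih\|_{L^2_{\ub}L^2(\S)}\ls \ub^{1/2}\at\cdot\ub^{1/2}\at[1+\M R(\b)] = \ub a[1+\M R(\b)]$, using $\tilde{\M O}_{5,2}(\O\chih)$, and similarly for the $\nab_4\phi$ factor using $\M S_{5,1}(\O e_4\phi)$ from Proposition \ref{Omegae4phi energy}. This yields the sharper $|u|^6$ estimate $\|u^6\nab^5(\O\trch)\|_{L^2_{\ub}L^2(\S)}\ls \ub^{3/2}a[1+\M R(\b)]$; the intermediate $|u|^5$ estimate follows from the same scheme with one less $|u|$-weight, where the small factor $\ub\at/|u|\ll 1$ absorbs the $\M R(\b)$ contribution and produces a clean $\ub^{1/2}\at$ bound without $\M R(\b)$ dependence.

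For the $\O\chih$ estimate I would invoke Codazzi, rewritten as
\begin{equation*}
\div(\O\chih)_A + \bigl(\O\b_A - \tfrac12\O\nab_4\phi\nab_A\phi\bigr) = \tfrac12\nab_A(\O\trch) + \sum\q\cdot\p,
\end{equation*}
so that $\O\chih$, being a symmetric traceless $2$-tensor, fits into the Hodge system of Proposition \ref{elliptictraceless}. This gives
\begin{equation*}
\|u^5\nab^5(\O\chih)\|_{L^2(\S)}\ls \sum_{j\leq 4}\Bigl(\|u^{j+1}\nab^j\div(\O\chih)\|_{L^2(\S)} + \|u^j\nab^j(\O\chih)\|_{L^2(\S)}\Bigr),
\end{equation*}
and taking $L^2_{\ub}$ the principal contributions are $\|u^5\nab^4(\O\b - \tfrac12\O\nab_4\phi\nab\phi)\|_{L^2_{\ub}L^2(\S)} \ls \ub^{1/2}\at\M R(\b)$ and the $|u|^5$-weighted $\trch$ estimate just obtained, delivering $\|u^5\nab^5(\O\chih)\|_{L^2_{\ub}L^2(\S)}\ls \ub^{1/2}\at[1+\M R(\b)]$.

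The main obstacle is organizing the argument to avoid circularity, since the $|u|^6$ bound on $\O\trch$ is written in terms of $\tilde{\M O}_{5,2}(\O\chih)$ (which involves $\M R(\b)$), while the $\O\chih$ bound uses $\|u^5\nab^5(\O\trch)\|_{L^2_{\ub}L^2(\S)}$. The resolution is that the cleaner $|u|^5$-weighted $\O\trch$ bound can be closed using only the lower-order $\chih$ information from Proposition \ref{chih.bd} combined with the top-order $\nab^5\chih$ bound in its rawest form, after which Codazzi produces $\nab^5(\O\chih)$ with the explicit $\M R(\b)$ factor, and only then does one return to the $|u|^6$-weighted $\O\trch$ estimate where $\M R(\b)$ appears legitimately on the right-hand side. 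A secondary subtlety is tracking the extra $|u|$-weight in the $\tilde{\M O}'_{5,2}$ bound: this improvement is harvested from the $L^\infty(\S)$ decay $|u|\cdot\|\chih\|_{L^\infty(\S)}\ls\at$ and the quadratic structure of the right-hand side of the $\O^{-1}\trch$ equation, so that every term supplies an additional power of $|u|^{-1}$ that is unavailable for a generic $\q$-type Ricci coefficient.
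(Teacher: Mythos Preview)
Your proposal is correct and follows essentially the same route as the paper: a transport estimate for the fifth derivative of $\trch$, then the Codazzi $\div$-system for $\O\chih$, with the ordering (preliminary $|u|^5$-weighted $\trch$ bound $\to$ Codazzi for $\O\chih$ picking up $\M R(\b)$ $\to$ revisit the $|u|^6$-weighted $\trch$ bound) exactly as the paper organizes it to avoid circularity.

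One minor difference worth noting: the paper works with the equation $(\O\nab_4)(\O\trch)=-\tfrac12(\O\trch)^2-|\O\chih|^2-4\O\h\o\,\O\trch-(\O\nab_4\phi)^2$ rather than your $\O\nab_4(\O^{-1}\trch-(\O^{-1}\trch)_0)$ form. The paper's version carries an explicit $\O\h\o$ term, so in the revisit step it also feeds in the top-order $\|u^5\nab^5(\O\h\o)\|_{L^2_{\ub}L^2(\S)}\ls\ub^{1/2}\at[1+\M R(\b)]$ from Proposition~\ref{om.5.bd}; your form avoids $\h\o$ at the cost of carrying $\O^{-2}$ factors through $|\chih|^2$ and $(\nab_4\phi)^2$, which you then cancel when passing from $\nab^5(\O^{-1}\trch)$ back to $\nab^5(\O\trch)$. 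Both routes close, and the paper's preliminary pass uses the bootstrap constant $\tilde O$ (rather than $\M R(\b)$) for $\nab^5(\O\chih)$ before the Codazzi step supplies the explicit $\M R(\b)$ dependence---this is what your final paragraph describes as ``the top-order $\nab^5\chih$ bound in its rawest form.''
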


\begin{proof}
We use the following equation:
$$(\O\nab_4)(\O\tr\chi)=-\f12(\O\tr\chi)^2-|\O\chih|^2-4\O\h\o \O\tr\chi-(\O\nab_4\phi)^2.$$
Commuting it with $i$ angular derivatives, we get
\begin{equation*}
\begin{split}
(\O\nab_4) \nab^{5} (\O\tr\chi)
=&\sum_{i_1+i_2+i_3+i_4=5} \nab^{i_1}\q^{i_2}\nab^{i_3}\p\nab^{i_4}\p.
\end{split}
\end{equation*}
Via applying Proposition \ref{transport}, we can bound $\|u^6\nab^5(\O\tr\chi)\|_{L_{\ub}^{\i}L_{u}^{\i}L^{2}(\S)}$ by the $\|u^6\cdot\|_{L_{u}^{\i}L_{\ub}^{1}L^{2}(\S)}$ norm of the right. Together with $\nab^5(\O\tr\chi)(u,0)=0$, we get 
\begin{equation}\label{nab5 Omega trchi mid step}
\begin{split}
\|u^6\nab^5(\O\tr\chi)\|_{L^2(\S)}\leq&\|u^6\sum_{i_1+i_2+i_3+i_4=5} \nab^{i_1}\q^{i_2}\nab^{i_3}\p\nab^{i_4}\p\|_{L^{\i}_u L^1_{\ub}L^2(\S)}\\
\ls & \ub^{\f12} \|u\p\|_{L^\infty_{\ub}L^\infty(\S)}\|u^5\nab^5\p\|_{L^{\i}_uL^2_{\ub}L^2(\S)}\\
+&\|u^6\sum_{\substack{i_1+i_2+i_3+i_4=5\\  i_3, i_4\leq 4} } \nab^{i_1}\q^{i_2}\nab^{i_3}\p\nab^{i_4}\p\|_{L^{\i}_u L^1_{\ub}L^2(\S)}\\
\ls& \ub a (O\tilde{O}+O^2). 
\end{split}
\end{equation}
This gives
\begin{equation}\label{trch.5.bd v1}
\|u^6\nab^5(\O\trch)\|_{L^2_{\ub}L^2(\S)}\ls \ub^{\f32} a (O \t O+O^2)\ll \ub^{\f32}a \t O'.
\end{equation}
{\color{black}It also implies
\begin{equation}\label{trch.5.bd 2nd}
\|u^5\nab^5(\O\trch)\|_{L^2_{\ub}L^2(\S)}\ls \ub^{\f12} a^{\f12}.
\end{equation} 
}

We then use the Codazzi equation with the schematic form
$$\div({\color{black}\O}\chih)-\frac 12\nab({\color{black}\O}\trch)+{\color{black}\O}\beta=\p\q.$$
Applying the elliptic estimates in Proposition \ref{elliptictraceless}, we obtain
\begin{equation*}
\begin{split}
\|u^5\nab^5({\color{black}\O}\chih)\|_{L^2(\S)}\ls & \sum_{i\leq 5}\|u^i\nab^i(\O\tr\chi)\|_{L^2(\S)}+\sum_{i\leq 4}\|u^{i+1}\nab^i(\O\b)\|_{L^2(\S)}\\
&+\sum_{i\leq 4}\sum_{i_1+i_2=i}\|u^{i+1}\nab^{i_1}\q\nab^{i_2}\p\|_{L^2(\S)}+\sum_{i\leq 4}\|u^i\nab^i(\O\chih)\|_{L^2(\S)}.
\end{split}
\end{equation*}
We further take the $L^2$ norm in $\ub$. With the bound \eqref{trch.5.bd v1} for $\nab^5\trch$, we deduce 
\begin{equation*}
\begin{split}
\|u^5\nab^5&({\color{black}\O}\chih)\|_{L^2_{\ub}L^2(\S)}
\ls \sum_{i\leq 5}\|u^i\nab^i({\color{black}\O}\tr\chi)\|_{L^2_{\ub}L^2(\S)}\\
&+\sum_{i\leq 4}\|u^{i+1}\nab^i(\O\b{\color{black}-\f{\O}{2}\nab_4\phi\nab_A\phi})\|_{L^2_{\ub}L^2(\S)}{\color{black}+\sum_{i\leq 4}\|u^{i+1}\nab^i(\O\nab_4\phi\nab_A\phi)\|_{L^2_{\ub}L^2(\S)}}\\
&+\sum_{i\leq 4}\sum_{i_1+i_2=i}\|u^{i+1}\nab^{i_1}\q\nab^{i_2}\p\|_{L^2_{\ub}L^2(\S)}+\sum_{i\leq 4}\|u^i\nab^i(\O\chih)\|_{L^2_{\ub}L^2(\S)}\\
\ls & \ub^{\frac 12}\at+\ub^{\f12}\at\mathcal R(\b)+\f{\ub^{\f32}a O^2}{|u|}\ls \ub^{\f12}\at[1+\mathcal R(\beta)].
\end{split}
\end{equation*} 

With the above estimate for $u^5\nab^5(\O\chih)$, together with Proposition \ref{trch.bd}, {\color{black}Proposition \ref{L2 scalar field}, Proposition \ref{Omegae4phi energy}}, we then revisit \eqref{nab5 Omega trchi mid step} and obtain 
\begin{equation*}
\begin{split}
\|u^6\nab^5&(\O\tr\chi)\|_{L^2(\S)}\leq\ub^{\f12}\|(\O\tr\chi, \O\h \o)\|_{L^{\infty}_{\ub}L^{\infty}(\S)}\|u^6 \nab^5(\O\tr\chi)\|_{L^2_{\ub}L^2(\S)}\\
&\quad\quad\quad\quad\quad\quad\quad+\ub^{\f12}\|u\O\tr\chi\|_{L^{\infty}_{\ub}L^{\infty}(\S)}\|u^5 \nab^5(\O\h \o)\|_{L^2_{\ub}L^2(\S)}\\
&\quad\quad\quad\quad\quad\quad\quad+\ub^{\f12}\|u\O\chih\|_{L^{\infty}_{\ub}L^{\infty}(\S)}\|u^5 \nab^5(\O\chih)\|_{L^2_{\ub}L^2(\S)}\\
&\quad\quad\quad\quad\quad\quad\quad{\color{black}+\ub^{\f12}\|u\O\nab_4\phi\|_{L^{\infty}_{\ub}L^{\infty}(\S)}\|u^5 \nab^5(\O\nab_4\phi)\|_{L^2_{\ub}L^2(\S)}}\\
&\quad\quad\quad\quad\quad\quad\quad+\|u^6\sum_{\substack{i_1+i_2+i_3+i_4=5\\  i_3, i_4\leq 4} } \nab^{i_1}\q^{i_2}\nab^{i_3}\p\nab^{i_4}\p\|_{L^{\i}_u L^1_{\ub}L^2(\S)}\\
\leq&\f{\ub^{\f12}\at}{|u|}\cdot \ub^{\f32}a \t O+\ub^{\f12}\O^2\cdot\ub^{\f12}\af[1+\M R(\b)]+\ub^{\f12}\at\cdot\ub^{\f12}\at[1+\M R(\b)]{\color{black}+\ub^{\f12}\at\cdot\ub^{\f12}\at} \\
\leq&\ub a[1+\M R(\b)].
\end{split}
\end{equation*}
This further implies

\begin{equation}\label{trch.5.bd}
\|u^6\nab^5(\O\trch)\|_{L^2_{\ub}L^2(\S)}\ls \ub^{\f32} a [1+\M R(\b)].
\end{equation}

\end{proof}

We then control the highest derivative of $\etb$.

\begin{proposition} \label{etab.5.bd}
Under the assumptions of Theorem \ref{main thm} and the bootstrap assumptions \eqref{BA.1}, \eqref{BA.2}, \eqref{BA.3}, \eqref{BA.4}, it holds

{\color{black}
\[
 \|u^5\nab^5\etb\|_{L^\i_uL^2_{\ub}L^2(\S)} \ls \ub^{\f12}\af\cdot \f{\ub^{\f12}\af}{|u|^{\f12}\O}.
\]
}

\end{proposition}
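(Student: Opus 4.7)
\medskip

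\noindent\textbf{Proof proposal for Proposition \ref{etab.5.bd}.} The plan is to follow the same two-step scheme used for $\h\omega$ in Proposition \ref{om.5.bd}: first bound a fourth-order transport quantity (here the renormalized mass aspect $\underline{\mu}$), then close via a Hodge (div--curl) system for $\etb$ on $\S$, using Proposition \ref{ellipticthm}. Concretely, I would introduce
$$
\underline{\mu}:=-\div\etb+\Big(K-\tfrac{1}{|u|^2}-\tfrac14\nab^A\phi\nab_A\phi\Big),
$$
so that by the constraint equations of Section \ref{seceqn} one has the Hodge system
$$
\div\etb=-\underline{\mu}+\Big(K-\tfrac{1}{|u|^2}-\tfrac14\nab^A\phi\nab_A\phi\Big),\qquad
\curl\etb=-\sigma-\tfrac12\chih\wedge\chibh=-\sigmac.
$$
Applying Proposition \ref{ellipticthm} in the form used for $\h\omega$ and then taking $L^2_{\ub}$, I reduce matters to controlling (i) $\|u^5\nab^4\underline{\mu}\|_{L^\i_u L^2_{\ub}L^2(\S)}$ with the weight $\ub^{\f12}\af\cdot\F$, and (ii) $\|u^{i+1}\nab^i(K-\tfrac{1}{|u|^2}-\tfrac14\nab^A\phi\nab_A\phi,\sigmac)\|_{L^2_{\ub}L^2(\S)}$ for $i\leq 4$ together with the lower-order terms $\|u^i\nab^i\etb\|_{L^2_{\ub}L^2(\S)}$, which are already handled by $\M R_{i,2}$ in \eqref{R i 2} and by Proposition \ref{q.bd}.

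\smallskip

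For step (i) I would derive the $\nab_3$ equation for $\underline{\mu}$ by combining the $\nab_3$ null structure equation for $\etb$ with the $\nab_3$ renormalized Bianchi equation for $K-\tfrac{1}{|u|^2}-\tfrac14\nab^A\phi\nab_A\phi$, so that the $\div\betab$ term on the right of the latter cancels with $-\div(\nab_3\etb)$ up to $[\nab_3,\div]\etb$ and lower order. Schematically,
$$
\O\nab_3(\O\underline{\mu})+\O\trchb\,(\O\underline{\mu})=\sum_{i_1+i_2+i_3=1}\q^{i_1}\nab^{i_2}(\O\trchb+\tfrac{2}{|u|},\O\chibh)\nab^{i_3}\p+\tfrac{1}{|u|}\q\p+\mbox{cubic/lower order},
$$
with no $\betab$ appearing at top order. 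Commuting with $\nab^4$ and applying Proposition \ref{evolution lemma} with $\lambda_0=3$, I would estimate the right-hand side in $\|u^4\,\cdot\,\|_{L^2_{\ub}L^1_u L^2(\S)}$ exactly as in the proof of Proposition \ref{om.5.bd}, invoking Proposition \ref{chi.5.bd} for $\nab^5(\O\tr\chi)$, $\nab^5(\O\chih)$, the bounds on $\nab^5(\O\tr\chib,\O\chibh)$ from $\tilde{\M O}_{5,2}$, and the scalar-field energies from Propositions \ref{Omegae4phi energy}--\ref{prop6.2} in the terms generated by $\nab^A\phi\nab_A\phi$.

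\smallskip

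The main obstacle, and the point where the factor $\F$ must be tracked carefully, is the curvature/scalar-field input on the elliptic side, namely the $\nab^4$ of $K-\tfrac{1}{|u|^2}-\tfrac14\nab^A\phi\nab_A\phi$ and $\sigmac$. These are exactly the quantities whose $\|u^{i+2}\nab^i\cdot\|_{L^2_{\ub}L^2(\S)}$ norms carry the weight $\F\cdot\ub^{\f32}a^{\f34}$ in $\M R_{i,2}$. Dividing out one factor of $|u|$ to match the $u^{i+1}$ weight in Proposition \ref{ellipticthm} produces precisely $\ub^{\f12}\af\cdot\f{\ub^{\f12}\af}{|u|^{\f12}\O}$, i.e.\ the target bound, after using $\M R\lesssim 1$ (to be closed in Section \ref{energy estimate curvature}). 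A similar computation gives the $\sigmac$ contribution from $\curl\etb$. The cross-terms $\chih\wedge\chibh$ are lower order and bounded by $\f{\ub a}{|u|}\cdot O^2$ times the appropriate $|u|$-weight, which is subdominant. Putting the transport bound from (i) together with the elliptic estimate and the $\M R_{i,2}$ input yields
$$
\|u^5\nab^5\etb\|_{L^\i_u L^2_{\ub}L^2(\S)}\ls \ub^{\f12}\af\cdot\f{\ub^{\f12}\af}{|u|^{\f12}\O}\bigl[1+\M R+\t{\M O}_{5,2}+\M S\bigr],
$$
and the bracket is $\ls 1$ once all bootstrap constants are recovered in the prescribed order of Section \ref{order of bootstrap}. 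The hardest bookkeeping is ensuring the $\F$ factor on the right of $\M R_{i,2}$ and the missing $\F$ on the left of the target bound are reconciled by exactly one power of $\f{\ub^{\f12}\af}{|u|^{\f12}\O}$; this is what forces the use of the renormalized quantities $K-\tfrac{1}{|u|^2}-\tfrac14\nab^A\phi\nab_A\phi$ and $\sigmac$ rather than $K$ and $\sigma$ individually.
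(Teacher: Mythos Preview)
Your overall strategy matches the paper's: introduce $\underline{\mu}=-\div\etb+(K-\tfrac{1}{|u|^2}-\tfrac14\nab^A\phi\nab_A\phi)$, run a $\nab_3$ transport estimate for $\nab^4\underline{\mu}$ with $\lambda_0=3$, and close via the Hodge system $\div\etb=-\underline{\mu}+(\Kt)$, $\curl\etb=-\sigmac$ together with Proposition~\ref{ellipticthm}. That skeleton is correct.

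However, two concrete points in your execution are off. First, the schematic you wrote for $\O\nab_3(\O\underline{\mu})$ is the one for $\O\kappa$ (it has $\nab^{i_2}(\O\trchb+\tfrac{2}{|u|},\O\chibh)\,\nab^{i_3}\p$ products), not the one for $\underline{\mu}$. Since $\underline{\mu}$ is built from $\etb$ and $K$, the $\nab_3$ equation contains only $\q$-type products together with the special terms $\O\trchb\,\div\eta$, $\O\trchb\,K$, and the scalar-field contributions $\Delta_g\phi\cdot(\O e_3\phi)$, $e_A\phi\cdot e^A(\O e_3\phi)$ coming from $-\tfrac12\nab^A\nab_3\phi\,\nab_A\phi$ in the renormalized Bianchi equation. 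After commuting four times, the top-order pieces are $\tfrac{\O}{|u|}\nab^5\eta$, $(\O e_3\phi)\,\nab^5(e_A\phi)$, $e_A\phi\,\nab^5(\O e_3\phi)$, and $\q\,\nab^5(\eta,\etb,\O\trchb,\O\chibh)$; there are no $\p$ factors at leading order. In particular you are missing the $\tfrac{\O}{|u|}\nab^5\eta$ term, which the paper handles using the $\|u^6\nab^5\eta\|_{L^2_{\ub}L^2}$ bound.

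Second, and more importantly, you misattribute where the factor $\tfrac{\ub^{\f12}\af}{|u|^{\f12}\O}$ comes from. The curvature input $\|u^{i+1}\nab^i(\Kt,\sigmac)\|_{L^2_{\ub}L^2}$ on the elliptic side, after dividing out $|u|$ from $\M R_{i,2}$, gives $\tfrac{\ub^{\f32}a^{\f34}}{|u|}\F\M R$, which is \emph{strictly smaller} than the target by a factor $\tfrac{\ub\at}{|u|}\le B^{-1}$. The bound is actually saturated on the \emph{transport} side by the borderline term
\[
\|u^5(\O e_3\phi)\,\nab^5(e_A\phi)\|_{L^2_{\ub}L^1_uL^2(\S)}
\le \|\O u^5\nab^5(e_A\phi)\|_{L^\infty_{\ub}L^2_uL^2(\S)}\cdot\|\O^{-1}\O e_3\phi\|_{L^2_uL^2_{\ub}L^\infty(\S)}
\le \ub^{\f12}\at\cdot\tfrac{\ub^{\f12}}{|u|^{\f12}\O}
=\ub^{\f12}\af\cdot\tfrac{\ub^{\f12}\af}{|u|^{\f12}\O},
\]
using precisely Proposition~\ref{Omegae4phi energy} (with a constant depending only on data, so the bootstrap for $\etb$ is genuinely improved). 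This is the mechanism you need to name; your paragraph about ``dividing out one factor of $|u|$'' on the elliptic side does not produce the sharp factor and would leave you with either a weaker estimate or an unexplained source of $\O^{-1}$.
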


\begin{proof}
We define $\mub$ via 
$$\mub=-\div\etb+\Kt.$$ 
Thus $\etb$ and $\mub$ satisfy the following Hodge system:
$$\div\etb=-\mub+\Kt,\quad\curl\etb=-\sigmac.$$
Recall $\nab_3\etb_A=-\chib_{AB}\cdot(\etb-\eta)_B+\beb_A-\f12\nab_A\phi \cdot e_3\phi$ and
\begin{equation*}
\begin{split}
\nab_3 (K-\f{1}{|u|^2}-&\f14\nab^A\phi\nab_A\phi)+\f32\trchb (K-\f{1}{|u|^2}-\f14\nab^A\phi\nab_A\phi)\\
=& \div(\betab+\f12\nab_3\phi\nab\phi)-\f12\nab^A\nab_3\phi\nab_A\phi-\zeta\cdot\betab+2\eta\cdot\betab\\
&+\frac 12 \chibh\cdot\nab\widehat{\otimes}\eta+\frac 12 \chibh\cdot(\eta\widehat{\otimes}\eta)-\frac 12 \trchb |\eta|^2+\f12\eta^A\nab_A\phi\nab_3\phi\\
&+\f12(\nab_A\phi\nab_B\phi-\f12 g_{AB}\nab_C\phi\nab^C\phi)\cdot\chibh^{AB}-\f18\tr\chib\nab^A\phi\nab_A\phi\\
&+\f12\tr\chib(-\div \eta+\Kt)-\f{\Omega^{-1}}{|u|^2}(\O\tr\chib+\f{2}{|u|}). 
\end{split}
\end{equation*}

\noindent Hence, in the schematic form, we have that $\mub$ obeys 
\begin{equation*}
\begin{split}
\O\nab_3\mub+\O\tr\chib\,\mub=&\q\nab(\eta,\etb)+\q\,\q\,\q+\q\nab(\O\chibh, \O\trchb)+\O\tr\chib\div\eta+\O\tr\chib K+\f{\O}{|u|^3}\\
&{\color{black}+\O\tr\chib\,\q\,\q+\Delta_g\phi (\O e_3\phi)+e_A\phi e^A(\O e_3\phi)+\eta e_A\phi (\O e_3\phi)}.
\end{split}
\end{equation*}
Commuting with $4$ angular derivatives, we get
\begin{equation*}
\begin{split}
&\O\nab_3 \nab^4 \mub+3\O\tr\chib\nab^4\mub\\
=&\frac{\O}{|u|}\nab^5 \eta+\O e_3\phi\cdot \nab^5(e_A\phi)+e_A\phi\cdot\nab^5(\O e_3\phi)+\q\nab^5(\eta,\etab,\O\trchb,\O\chibh)+\frac{1}{|u|}\sum_{i_1+i_2+i_3=4}\nab^{i_1}\q^{i_2+1}\nab^{i_3} \q\\
&+\frac{1}{|u|}\sum_{i_1+i_2+i_3=4}\nab^{i_1}\q^{i_2}\nab^{i_3} K+\sum_{i_1+i_2+i_3=4} \nab^{i_1}\q^{i_2+2}\nab^{i_3}\q+\sum_{i_1+i_2+i_3=4} \nab^{i_1}\q^{i_2+1}\nab^{i_3}K\\
&{\color{black}+\sum_{\substack{i_1+i_2+i_3=4\\ i_3\leq 3}}\nab^{i_1}\q^{i_2}\nab^{i_3+1}e_A\phi\nab^{i_4}(\O e_3\phi)+\sum_{\substack{i_1+i_2+i_3=4\\ i_4\leq 3}}\nab^{i_1}\q^{i_2}\nab^{i_3}e_A\phi\nab^{i_4+1}(\O e_3\phi).}
\end{split}
\end{equation*}

We then employ Proposition \ref{evolution lemma} with $\lambda_0=3$ and $\|u^{5}\nab^4\mub\|_{L_{u}^{\i}L^{2}(\S)}$ can be controlled by the $\|u^5\cdot\|_{L_{u}^{1}L^{2}(\S)}$ norm of the right hand side. We bound each of these terms. We begin with terms containing the $5$ angular derivatives on the Ricci coefficients. First note that by Proposition \ref{eta.5.bd}, we have
\begin{equation*}
\begin{split}
&\|\O u^4\nab^5\eta\|_{L^2_{\ub}L^1_uL^2(\S)}\leq \|\O u^{-2}\|_{L^1_u}\|u^6\nab^5\eta\|_{L^\infty_uL^2_{\ub}L^2(\S)}\ls \frac{\ub^{\f32}a^{\f34}}{|u|}\F \mathcal R \leq \ub^{\f12}\af\cdot\f{\ub^{\f12}\af}{|u|^{\f12}\O}.
\end{split}
\end{equation*}

{\color{black}
The next term is a borderline term. Via Proposition \ref{Omegae4phi energy}, we control it as 
\begin{equation*}
\begin{split}
&\|u^5\nab^5(e_A\phi)(\O e_3\phi)\|_{L^2_{\ub}L^1_uL^2(\S)}\\
\leq& \|\O u^5\nab^5(e_A\phi)\|_{L^{\infty}_{\ub}L^2_uL^2(\S)}       \|\O^{-1}(\O e_3\phi)\|_{L^2_{u}L^2_{\ub}L^{\infty}(\S)}\\
\leq&\ub^{\f12}\at\cdot\f{\O^{-1}}{|u|}\cdot|u|^{\f12}\ub^{\f12}=\ub^{\f12}\af\cdot \f{\ub^{\f12}\af}{|u|^{\f12}\O} .
\end{split}
\end{equation*} 

And the term containing $\nab^5(\O e_3\phi)$ obeys
\begin{equation*}
\begin{split}
&\|u^5\nab^5(\O e_3\phi)(e_{{\color{black}A}}\phi)\|_{L^2_{\ub}L^1_uL^2(\S)}\\
\leq& \|\O^{-1} u^5\nab^5(\O e_3\phi)\|_{L^{\infty}_{\ub}L^2_uL^2(\S)}       \|\O e_{{\color{black}A}}\phi\|_{L^2_{u}L^2_{\ub}L^{\infty}(\S)}\\
\leq&\ub^{\f12}\af\F\M S\cdot\f{\ub\at}{|u|^2}|u|^{\f12}\ub^{\f12} O=\ub^{\f12}\af\cdot\f{\ub^{\f32}\at}{|u|^{\f32}}\F\M S O\leq \ub^{\f12}\af\cdot\f{\ub^{\f12}\af}{|u|^{\f12}\O}.
\end{split}
\end{equation*} 

}

{\color{black}
Then the top-order-derivative terms with $\nab^5\eta$ or $\nab^5\etb$ obey
\begin{equation*}
\begin{split}
&\|u^5\q\nab^5(\eta,\etab)\|_{L^2_{\ub}L^1_uL^2(\S)}\\
\leq &\|u^{-2}\|_{L^1_u}\|u^2\q\|_{L^\i_{\ub}L^\infty_uL^\infty(\S)}\|u^5\nab^5\etab\|_{L^\infty_uL^2_{\ub}L^2(\S)}+\|u^{-1}\q\|_{L^{\infty}_{\ub}L^1_u L^{\infty}(\S)}\|u^6 \nab^5\eta\|_{L^{\infty}_u L^2_{\ub} L^2(\S)}\\
\leq&\f{\ub\at O}{|u|}\cdot \ub^{\f12}\af\F \t O+\f{\ub\at}{|u|^2}\ub^{\f32}a^{\f34} O \t O\leq \ub^{\f12}\af\cdot\f{\ub^{\f12}\af}{|u|^{\f12}\O}.
\end{split}
\end{equation*}
}

And the top-order-derivative terms $\nab^5(\O\trchb,\O\chibh)$ satisfy 
{\color{black}
\begin{equation*}
\begin{split}
&\|u^5\q\nab^5(\O\trchb, \O\chibh)\|_{L^2_{\ub}L^1_uL^2(\S)}\\
\leq&\|\q\|_{L^2_{\ub}L^2_u L^{\infty}(\S)}\|u^5 \nab^5(\O\tr\chib, \O\chibh)\|_{L^{\infty}_{\ub} L^2_u L^2(\S)}\\
\leq&\f{\ub\at}{|u|^2}\ub^{\f12}|u|^{\f12}O\cdot \ub^{\f12}\af\F \t O\leq \ub^{\f12}\af\cdot\f{\ub^{\f12}\af}{|u|^{\f12}\O}.
\end{split}
\end{equation*}
}

The next term is lower-order and we have
\begin{equation*}
\begin{split}
&\|u^5\frac{1}{|u|}\sum_{i_1+i_2+i_3=4}\nab^{i_1}\q^{i_2+1}\nab^{i_3} \q\|_{L^2_{\ub}L_{u}^{1}L^{2}(\S)}\leq  \frac{\ub^{\f52} a O^2}{|u|^2}\leq \ub^{\f12}\af\cdot\f{\ub^{\f12}\af}{|u|^{\f12}\O}.
\end{split}
\end{equation*}

For the term containing $K$, via using Proposition \ref{Sobolev}, we deduce
\begin{equation*} 
\begin{split}
&\|u^5\frac{1}{|u|}\sum_{i_1+i_2+i_3=4}\nab^{i_1}\q^{i_2}\nab^{i_3} K\|_{L^2_{\ub}L_{u}^{1}L^{2}(\S)}\leq \ub^{\f12}\af\cdot\f{\ub^{\f12}\af}{|u|^{\f12}\O}.
\end{split}
\end{equation*}

The remaining lower-order terms are all bounded by $\ub^{\f12}\af\cdot\f{\ub^{\f12}\af}{|u|^{\f12}\O}$. Hence, combining the above estimates, we arrive at
$$\|u^5\nab^4\mub\|_{L^2_{\ub}L^\i_u L^2(\S)}\leq \ub^{\f12}\af\cdot\f{\ub^{\f12}\af}{|u|^{\f12}\O}.$$

By using the div-curl system
$$\div\etb=-\mub+\Kt,\quad\curl \etb=-\sigmac$$
and employing elliptic estimates from Proposition \ref{ellipticthm}, we further deduce
{\color{black}
\begin{equation*}
\begin{split}
\|u^5\nab^5\etb\|_{L^2_{\ub}L^2(\S)}\ls& \|u^5\nab^4\mub\|_{L^2_{\ub}L^2(\S)}+\|u^5\nab^4(\Kt, \sigmac)\|_{L^2_{\ub}L^2(\S)}\\
&+\sum_{i\leq 3}\|u^{i+1}\nab^i(-\underline{\mu}+\Kt, \sigmac)\|_{L^2_{\ub}L^2(\S)}\\
\leq& {\ub^{\f12}\af}\cdot\f{\ub^{\f12}\af}{|u|^{\f12}\O}+\f{\ub^{\f32}a^{\f34}}{|u|}\F\M R\leq \ub^{\f12}\af\cdot\f{\ub^{\f12}\af}{|u|^{\f12}\O}.
\end{split}
\end{equation*}  

}

\end{proof}

We proceed to obtain the highest order derivative estimates for $\eta$. In the proof we will derive an improved bound for $\nab^i\mu$ with $1\leq i\leq 4$.

\begin{proposition}\label{eta.5.bd}
With the assumptions of Theorem \ref{main thm} and the bootstrap assumptions \eqref{BA.1}, \eqref{BA.2}, \eqref{BA.3}, \eqref{BA.4} and the proved {\color{black}$\M O_{5,2}(\etb)\ls 1$}, we have 
{\color{black}
\begin{equation*}
\begin{split}
\|u^6\nab^5\eta\|_{L^{\i}_uL^2_{\ub}L^2(\S)}
\ls \ub^{\f32} a^{\f34}\F[1+\M R(K,\sigmac)]+\ub^{\f32}a^{\f34}\cdot a^{-\f14}\M R(\b), 
\end{split}
\end{equation*}
$$\|u^5\nab^5\eta\|_{L^{\i}_{\ub}L^2_uL^2(\S)}\ls {\color{black}\O^{-1}}\ub^{\f12}\at [1+\underline{\M R}(K,\sigmac)],$$
$$\|u^5\O\nab^5\eta\|_{L^{\i}_{\ub}L^2_uL^2(\S)}\ls \ub^{\f12}\at [1+\underline{\M R}(K,\sigmac)].$$
}
And $\nab^i\mu$ obeys the below improved bound:
$$\sum_{1\leq i\leq 4}\|u^{i+2}\nab^i\mu\|_{L^\i_u L^\i_{\ub}L^2(\S)}\ls\ub a^{\f34}\cdot\f{\ub^{\f12}\af}{|u|^{\f12}\O}.$$
\end{proposition}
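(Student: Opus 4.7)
The plan is to mimic the argument for $\etb$ in Proposition \ref{etab.5.bd} but now propagate the mass aspect function in the $\nab_4$ direction, then close via the Hodge system and elliptic estimates. Define
$$\mu:=-\div\eta+K-\f{1}{|u|^2}-\f14\nab^A\phi\nab_A\phi.$$
Because the initial data along $\ub=0$ is exactly Christodoulou's spherically-symmetric naked-singularity data, each $\q$ (hence $\eta$) vanishes there, $K(u,0)=1/|u|^2$ and $\nab\phi(u,0)=0$, which gives $\mu|_{\ub=0}=0$. Then $\eta$ satisfies the schematic Hodge system
$$\div\eta=-\mu+K-\f{1}{|u|^2}-\f14\nab^A\phi\nab_A\phi,\qquad \curl\eta=\sigma=\sigmac-\tfrac{1}{2}\chih\wedge\chibh.$$

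First I would derive the $\O\nab_4$ transport equation for $\mu$ by combining $\nab_4\eta_A=-\chi_{AB}(\eta-\etb)^B-\b_A-\tfrac12\nab_4\phi\nab_A\phi$ with the renormalized Bianchi equation for $\nab_4(K-\f{1}{|u|^2}-\f14\nab^A\phi\nab_A\phi)$ recorded in \eqref{eq:null.Bianchi2}; the $\div\b$ terms cancel and one obtains an equation of the schematic form
\begin{equation*}
\begin{split}
\O\nab_4\mu+\O\tr\chi\,\mu=&\,\tfrac{\O}{|u|}\nab^5\etb+\q\nab(\O\chih,\O\tr\chi)+\q\,\nab(\eta,\etb)+\q\,\q\,\q\\
&+\O\tr\chi\,\div\etb+\O\tr\chi\,K+\text{lower scalar-field terms}.
\end{split}
\end{equation*}
Commute with four angular derivatives and apply Proposition \ref{transport} with zero initial value at $\ub=0$ to get
$$\|u^6\nab^4\mu\|_{L^\infty_u L^2_{\ub}L^2(\S)}\lesssim\ub a^{\f34}\cdot\f{\ub^{\f12}\af}{|u|^{\f12}\O},$$
which yields the improved bound for $\nab^i\mu$, $1\le i\le 4$, stated in the proposition. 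The hardest top-order contribution here is $\q\,\nab^5(\O\chi)$ and $\q\,\nab^5\etb$, which are handled exactly as in Proposition \ref{etab.5.bd} using $\tilde{\M O}_{5,2}$ and the $\F$-weighted estimate $\M O_{5,2}(\etb)\lesssim 1$ from Proposition \ref{etab.5.bd}, together with the improved $\|u^6\nab^5(\O\tr\chi)\|_{L^2_{\ub}L^2(\S)}$-bound of Proposition \ref{chi.5.bd}.

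Next, feed this estimate into the Hodge system. Proposition \ref{ellipticthm} yields
\begin{equation*}
\|u^5\nab^5\eta\|_{L^2(\S)}\lesssim\|u^5\nab^4\mu\|_{L^2(\S)}+\sum_{j\le 4}\|u^{j+1}\nab^j(K-\tfrac{1}{|u|^2}-\tfrac14\nab^A\phi\nab_A\phi,\sigmac)\|_{L^2(\S)}+\text{l.o.t.}
\end{equation*}
plus lower-order nonlinear products which are absorbed as usual. Taking $L^2$ in $\ub$ gives the first bound with the $\F[1+\M R(K,\sigmac)]$ factor coming from $\M R_{i,2}$ applied to the curvature term, and the $a^{-\f14}\M R(\b)$ factor coming from the Codazzi/$\chih$--$\chibh$ correction in $\sigmac-\sigma$ together with Proposition \ref{chi.5.bd}. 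Taking instead $L^2$ in $u$ produces the $L^\infty_{\ub}L^2_u$ bound, where now the curvature norm that enters is $\underline{\M R}_{i,1}$, giving the $[1+\underline{\M R}(K,\sigmac)]$ factor; the extra $\O^{-1}$ in the statement reflects the absence of a $\F$ factor on $\underline{\M R}$ and the $\O$-weight difference between the two norms in \eqref{R i 1}--\eqref{R i 2}. The third estimate is the second multiplied by $\O$.

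The main obstacle will be the borderline contributions at top order: the term $\tfrac{\O}{|u|}\nab^5\etb$ (handled by Proposition \ref{etab.5.bd} with the crucial gain $\M O_{5,2}(\etb)\lesssim 1$), the term $\q\,\nab^5(\O\chi,\O\chih)$ (handled by $\tilde{\M O}_{5,2}(\tr\chi,\chih)$ from Proposition \ref{chi.5.bd}), and correctly tracking the $\F$ weights so that the statement's powers of $\ub,a,|u|,\O$ come out exactly right. As in the $\etb$ argument, the cleanest way is to estimate $\nab^4\mu$ first, then carefully identify which pieces of the elliptic estimate contribute $\M R$ versus $\underline{\M R}$, avoiding any double counting of the $\F$ factor.
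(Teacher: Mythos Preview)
Your overall strategy matches the paper's: define $\mu$, propagate it via $\O\nab_4$, then close through the Hodge system and Proposition~\ref{ellipticthm}. However, two concrete points in your execution are wrong and would prevent you from recovering the stated bounds.

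\textbf{(1) The scalar-field top-order terms are borderline, not ``lower''.} In the $\O\nab_4\mu$ equation the terms $\Delta_g\phi\cdot\O e_4\phi$ and $\nab_A\phi\,\nab^A(\O e_4\phi)$ survive, and after commuting they produce $\O e_4\phi\cdot\nab^{i+1}(e_A\phi)$ and $e_A\phi\cdot\nab^{i+1}(\O e_4\phi)$ at top order. The first of these is exactly the term that saturates the target bound $\ub a^{3/4}\cdot\frac{\ub^{1/2}a^{1/4}}{|u|^{1/2}\O}$: you must invoke the \emph{improved} estimate of Proposition~\ref{prop6.2} for $\|u^5\nab^5(e_A\phi)\|_{L^2_{\ub}L^2(\S)}$ to close. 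Treating these as lower-order would leave an uncontrolled borderline contribution.

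\textbf{(2) The $a^{-1/4}\M R(\b)$ factor has a different origin.} It does not arise from any $\sigmac-\sigma$ correction (indeed $\curl\eta=\sigmac$ directly, so no such correction is needed). In the paper it comes from the \emph{lower-order} term $\sum_{i\le 4}\|u^{i+1}\nab^i\eta\|_{L^2(\S)}$ in the elliptic estimate, which by Proposition~\ref{q.bd} and Proposition~\ref{om.5.bd} is bounded by $\ub a^{1/2}[1+\M R(\b)]=\ub^{3/2}a^{3/4}\cdot a^{-1/4}\cdot\ub^{-1/2}[1+\M R(\b)]$; after taking $L^2_{\ub}$ this gives the $\ub^{3/2}a^{3/4}\cdot a^{-1/4}\M R(\b)$ contribution.

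Two smaller corrections: in the transport equation the coefficient of $\nab^{i+1}(\eta,\etb)$ is $\chi$ (a $\p$-type quantity of size $a^{1/2}/|u|$), not $\q$; this matters for getting the right power count when you pair with $\|u^5\nab^5\etb\|_{L^2_{\ub}L^2(\S)}$ from Proposition~\ref{etab.5.bd}. Also the $\mu$ bound in the statement is in $L^\infty_u L^\infty_{\ub}L^2(\S)$, not $L^\infty_u L^2_{\ub}L^2(\S)$; Proposition~\ref{transport} with zero data at $\ub=0$ gives this directly.
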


\begin{proof} 

We first define the mass aspect function $\mu$  
$$\mu:=-\div\eta+\K-\f14\nab^A\phi\nab_A\phi.$$
Recall that $\eta$ obeys the following elliptic system
$$\div\eta=-\mu+\K-\f14\nab^A\phi\nab_A\phi, \quad \quad \curl\eta=\sigmac.$$

{\color{black}
And via employing the below renormalized null Bianchi equation
\begin{equation}
\begin{split}
&\nab_4 [\O(K-\f{1}{|u|^2}-\f14\nab^A\phi \nab_A\phi)]+\trch [\O(K-\f{1}{|u|^2}-\f14\nab^A\phi \nab_A\phi)]\\
=&-\O\div(\beta-\f12\nab_4\nab\phi)-\f12\O\nab^A\nab_4\phi\cdot \nab_A\phi-2\O\h\o(K-\f{1}{|u|^2}-\f14\nab^A\phi\nab_A\phi)-\O\tr\chi\cdot\f{1}{|u|^2}\\
&+\O[-\zeta\cdot\beta-2\etab\cdot\beta+\frac 12 \chih\cdot\nab\widehat{\otimes}\etab+\frac 12 \chih\cdot(\etab\widehat{\otimes}\etab)-\frac 12 \trch\div\etab-\frac 12\trch |\etb|^2]\\
&+\O[\f12\etb^A\nab_A\phi\nab_4\phi+\f12(\nab_A\phi\nab_B\phi-\f12g_{AB}\nab_C\phi\nab^C\phi)\cdot\chih^{AB}],
\end{split}
\end{equation}
we deduce that $\mu$ satisfies the schematic equation 
}
\begin{equation*}
\begin{split}
({\color{black}\O}\nab_4) \mu=&\chi\nab(\eta,\etb)+\p\q\hspace{1pt}\q+\q\nab({\color{black}\O}\trch,{\color{black}\O}\chih)+{\color{black}\p(\Kt)+\f{\O\tr\chi}{|u|^2}}\\
& {\color{black}+\Delta_g\phi ({\color{black}\O}e_4\phi)+\nab_A\phi\nab^A ({\color{black}\O}e_4\phi}).
\end{split}
\end{equation*}
Commuting with $i$ angular derivatives, we have
\begin{equation*}
\begin{split}
&({\color{black}\O}\nab_4) \nab^i \mu\\
=&\O e_4\phi\cdot\nab^{i+1}(e_A\phi)+e_A\phi \nab^{i+1}(\O e_4\phi)+\chi\nab^{i+1}(\eta,\etb)+\q\nab^{i+1}({\color{black}\O}\trch,{\color{black}\O}\chih){\color{black}+\f{1}{|u|^2}\nab^4(\O\tr\chi)}\\
&+\sum_{\substack{i_1+i_2+i_3=i+1\\i_1,i_3\leq i}} \nab^{i_1}\q^{i_2+1}\nab^{i_3}\p+\sum_{i_1+i_2+i_3+i_4=i}\nab^{i_1}\q^{i_2}\nab^{i_3}{\color{black}\p \nab^{i_4} (\Kt)}\\
&{\color{black}+\sum_{\substack{i_1+i_2+i_3+i_4=i\\ i_3\leq i-1}}\nab^{i_1}\q^{i_2}\nab^{i_3+1}(\nab_A\phi)\nab^{i_4}({\color{black}\O}e_4\phi) }{\color{black}+\sum_{\substack{i_1+i_2+i_3+i_4=i\\ i_3\leq i-1}}\nab^{i_1}\q^{i_2}\nab^{i_3+1}({\color{black}\O}e_4\phi)\nab^{i_4}(e_A\phi) }.\\
\end{split}
\end{equation*}
Since $\nab^i\mu$ vanishes on $\ub=0$, by Proposition \ref{transport}, to bound $\|u^{i+2}\nab^i\mu\|_{L_{\ub}^{\i}L_{u}^{\i}L^{2}(\S)}$, we only need to control the $\|u^{i+2}\cdot\|_{L_{u}^{\i}L_{\ub}^{1}L^{2}(\S)}$ norm of the right. We now estimate them.

{\color{black}
Via employing the obtained estimates in Proposition \eqref{Omegae4phi energy}, Proposition \eqref{prop6.2} and Proposition \eqref{L2 scalar field}, the terms involving the highest angular derivatives of the scalar field satisfy 
\begin{equation*}
\begin{split}
&\|u^{i+2}\nab^{i+1}(e_A\phi)(\O e_4\phi)\|_{L^1_{\ub}L^2(\S)}\\
\leq&\|u^{i+1}\nab^{i+1}(e_A\phi)\|_{L^2_{\ub}L^2(\S)}   \|u(\O e_4\phi)\|_{L^2_{\ub}L^{\infty}(\S)}\\
\leq&\bigg(\f{\ub^{\f12}\af}{|u|^{\f12}\O}\cdot\ub^{\f12}\af+(\f{\ub \at}{|u|\O^{\f32}})^{\f14}\cdot \ub^{\f12}\af \bigg)\cdot \ub^{\f12}\at\leq \ub a^{\f34}\cdot\f{\ub^{\f12}\af}{|u|^{\f12}\O},
\end{split}
\end{equation*}

\begin{equation*}
\begin{split}
&\|u^{i+2}\nab^{i+1}(\O e_4\phi)(e_A\phi)\|_{L^1_{\ub}L^2(\S)}\\
\leq&\|u^{i+1}\nab^{i+1}(\O e_4\phi)\|_{L^2_{\ub}L^2(\S)}   \|u e_A\phi\|_{L^2_{\ub}L^{\infty}(\S)}\\
\leq&\ub^{\f12}\at \M S\cdot \f{\ub\at}{|u|}\ub^{\f12}O=\ub\at\f{\ub\at}{|u|}\M S O\leq \ub a^{\f34}\cdot\f{\ub^{\f12}\af}{|u|^{\f12}\O}.
\end{split}
\end{equation*}
}

The terms containing the highest angular derivatives of $\eta$ and $\etb$ obey
\begin{equation*}
\begin{split}
&\sum_{i\leq 4}\|u^{i+2}\chi\nab^{i+1}(\eta,\etb)\|_{L^1_{\ub}L^2(\S)}\\
\leq&\ub^{\f12}\|\chi\|_{L^{\infty}_{\ub}L^{\infty}(\S)}\|u^6\nab^5\eta\|_{L^2_{\ub}L^2(\S)}+\ub^{\f12}\|u\chi\|_{L^{\infty}_{\ub}L^{\infty}(\S)}\|u^5\nab^5\etb\|_{L^2_{\ub}L^2(\S)}\\
\leq&\f{\ub^{\f12}\at O}{|u|}\ub^{\f32}a^{\f34}\M R+\ub^{\f12}\at O(\chi)\cdot\ub^{\f12}\af\cdot\f{\ub^{\f12}\af}{|u|^{\f12}\O}\t O(\etb)\ls \ub a^{\f34}\cdot\f{\ub^{\f12}\af}{|u|^{\f12}\O}.
\end{split}
\end{equation*}
For the last line we utilize Proposition \eqref{chih.bd}, Proposition \eqref{trch.bd} and Proposition \eqref{etab.5.bd}.

The terms with the highest derivatives of ${\color{black}\O}\trch$ and ${\color{black}\O}\chih$ satisfy 
\begin{equation*}
\begin{split}
&\sum_{i\leq 4}\|u^{i+2}\q\nab^{i+1}({\color{black}\O}\trch,{\color{black}\O}\chih)\|_{L^1_{\ub}L^2(\S)}\\
\ls & \frac{\ub^{\f12}}{|u|}\|u^2\q\|_{L^{\i}_{\ub}L^{\i}(\S)}\cdot\bigg(\sum_{i\leq 3}\|u^{i+1}\nab^{i+1}({\color{black}\O}\trch,{\color{black}\O}\chih)\|_{L^2_{\ub}L^2(\S)}+\|u^5\nab^5({\color{black}\O}\trch,{\color{black}\O}\chih)\|_{L^2_{\ub}L^2(\S)}\bigg)\\
\ls & \frac{\ub^{\frac 12}}{|u|}\ub\at O\cdot\ub^{\f12}\at \t O
\ls \frac{\ub^2 a O \t O}{|u|}\ls \ub a^{\f34}\cdot\f{\ub^{\f12}\af}{|u|^{\f12}\O}.
\end{split}
\end{equation*}

{\color{black}
By Proposition \eqref{trch.bd}, we also have
\begin{equation*}
\begin{split}
\|u^4\nab^4(\O\tr\chi)\|_{L^1_{\ub}L^2(\S)}\leq&\|\f{1}{|u|}\|_{L^1_{\ub}L^{\i}(\S)}\|u^5\nab^4(\O\tr\chi-\O^2(\O^{-1}\tr\chi)_0)\|_{L^{\i}_{\ub}L^2(\S)}+\|u^3\nab^4 \O^2\|_{L^1_{\ub}L^2(\S)}\\
\leq&\f{\ub}{|u|}\cdot\ub a+\f{\ub\at}{|u|^2}\cdot\ub |u|O\leq \ub a^{\f34}\cdot\f{\ub\at}{|u|}+\ub\at\cdot\f{\ub O}{|u|}\leq \ub a^{\f34}\cdot\f{\ub^{\f12}\af}{|u|^{\f12}\O}.
\end{split}
\end{equation*}
}

\noindent Employing Proposition \ref{product} together with the bootstrap assumption \eqref{BA.2}, we then bound the other lower order terms and they obey $\ls \ub a^{\f34}\cdot \f{\ub^{\f12}\af}{|u|^{\f12}\O}$.

Combining all the estimates above, we obtain
{\color{black}
\begin{equation}\label{mu.est}
\begin{split}
\sum_{1\leq i\leq 4}\|u^{i+2}\nab^i\mu\|_{L^{\i}_uL^2(\S)}
\ls& \ub a^{\f34}\cdot\f{\ub^{\f12}\af}{|u|^{\f12}\O}.
\end{split}
\end{equation}
}

We then recall the div-curl system
 $$\div\eta=-\mu+\Kt,\quad\curl \eta=\sigmac.$$
Via employing elliptic estimates from Proposition \ref{ellipticthm} and the fact $\O\leq 1$, we obtain
\begin{equation*}
\begin{split}
&\|u^6\nab^5\eta\|_{L^2(\S)}\\
\ls& \sum_{i\leq 4}\bigg(\|u^{i+2}\nab^i\mu\|_{L^2(\S)}+\|u^{i+2}\nab^i (\Kt,\sigmac)\|_{L^2(\S)}+\|u^{i+1}\nab^i\eta\|_{L^2(\S)}\bigg)\\
\ls& \ub a^{\f34}\F\cdot \M R(K)+\sum_{i\leq 4}\|u^{i+2}\nab^i (\Kt,\sigmac)\|_{L^2(\S)}+\ub\at \M R(\b),
\end{split}
\end{equation*}
\begin{equation*}
\begin{split}
&\|u^5\O\nab^5\eta\|_{L^2(\S)}\\
\ls&\sum_{i\leq 4} \bigg(\|\O u^{i+1}\nab^i\mu\|_{L^2(\S)}+\|\O u^{i+1}\nab^i (\Kt,\sigmac)\|_{L^2(\S)}+\|\O u^{i}\nab^i\eta\|_{L^2(\S)}\bigg)\\
\ls& \sum_{i\leq 4}\|\O u^{i+1}\nab^i (\Kt,\sigmac)\|_{L^2(\S)}+\f{\ub a^{\f34}}{|u|}\cdot\O\cdot\F \cdot \M R(K)+\f{\ub\at}{|u|}\cdot \O O.
\end{split}
\end{equation*}
We then take the $L^2$ norm in $\ub$ and $u$ respectively.  Via using \eqref{mu.est}, we further obtain
\begin{equation*}
\begin{split}
\|u^6\nab^5\eta\|_{L^{\i}_uL^2_{\ub}L^2(\S)}\ls& {\ub^{\f32} a^{\f34}}\F\cdot\M R(K, \sigmac)+\ub^{\f32}a^{\f34}\cdot a^{-\f14}\M R(\b),
\end{split}
\end{equation*}
 
\begin{equation}\label{eta elliptic Omega}
\begin{split}
\|\O u^5\nab^5\eta\|_{L^{\i}_{\ub}L^2_uL^2(\S)}\ls \ub^{\f12}\at \underline{\M R}(K, \sigmac)+\ub^{\f12}\at \ls \ub^{\f12}\at[1+\underline{\M R}(K, \sigmac)].
\end{split}
\end{equation}
This also infers
$$\|u^5\nab^5\eta\|_{L^{\i}_{\ub}L^2_u L^2(\S)}\leq \O^{-1}\ub^{\f12}\at\cdot[1+\underline{\M R}(K, \sigmac)].$$

We hence finish the proof of this proposition.
\end{proof}

Finally, we move to bound the highest order estimates for $\trchb$ and $\chibh$.

\begin{proposition}\label{nabla5chib}
Under the assumptions of Theorem \ref{main thm} and the bootstrap assumptions \eqref{BA.1}, \eqref{BA.2}, \eqref{BA.3}, \eqref{BA.4}, it holds

\begin{equation*}
\begin{split}
\|u^{5}\nab^5(\O\trchb, \O\chibh)\|_{L^\i_{\ub}L^2_uL^2(\S)}\ls& {\ub^{\f12}\af}\bigg(1+\f{\ub^{\f12}\af}{|u|^{\f12}\O} \bigg),\\
\|u^{5}\nab^5\tr\chib\|_{L^\i_{\ub}L^2_u L^2(\S)}\ls& {\ub^{\f12}\af}{\color{black}\bigg(1+\f{\ub^{\f12}\af}{|u|^{\f12}\O}\bigg)},\\
\|u^{5}\nab^5\chibh\|_{L^\i_{\ub}L^2_u L^2(\S)}\ls& {\ub^{\f12}\af}{\color{black}\bigg(1+\f{\ub^{\f12}\af}{|u|^{\f12}\O}\bigg)}\cdot\M R(\b).
\end{split}
\end{equation*}

\end{proposition}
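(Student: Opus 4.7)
\medskip

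\noindent\textbf{Proof proposal.} The plan is to obtain these top-order bounds by pairing a $\nab_3$-transport equation for $\nab^5(\O\trchb)$ with the Codazzi elliptic system that controls $\nab^5(\O\chibh)$ in terms of $\nab^5(\O\trchb)$ and $\nab^4\beb$. This mirrors the $(\trchi,\chih)$ pair treated in Proposition \ref{chi.5.bd}, but now propagated backwards in $u$ from $H_{-1}$ rather than forward in $\ub$ from $\Hb_0$, so the $L^{\infty}_{\ub}L^2_u L^2(\S)$ norm is the natural one produced by Proposition \ref{evolution lemma}.

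First I rewrite $\nab_3\trchb+\tfrac12(\trchb)^2=-2\omb\trchb-|\chibh|^2-(e_3\phi)^2$ in conformal form as
\begin{equation*}
\O\nab_3(\O\trchb)+\tfrac12(\O\trchb)^2=-4(\O\omb)(\O\trchb)-|\O\chibh|^2-(\O e_3\phi)^2,
\end{equation*}
commute with $\nab^5$, and apply Proposition \ref{evolution lemma} with $\lambda_0=5/2$. Using the prescribed initial data on $H_{-1}$, the quantity $\|u^5\nab^5(\O\trchb)\|_{L^{\infty}_{\ub}L^2_u L^2(\S)}$ is bounded by the $\|u^5\,\cdot\,\|_{L^{\infty}_{\ub}L^1_u L^2(\S)}$ norm of the right-hand side. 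The potentially top-order pieces are $(\O\chibh)\nab^5(\O\chibh)$, $(\O e_3\phi)\nab^5(\O e_3\phi)$, and the pair $(\O\omb)\nab^5(\O\trchb)+(\O\trchb)\nab^5(\O\omb-(\O\omb)_0)$; everything else consists of $\p$-$\q$ products already controlled by $\M O,\M R,\M S$ with smallness from \eqref{largeness B}.

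Next, from the schematic Codazzi equation
\begin{equation*}
\mbox{div}(\O\chibh)_A=\tfrac12\nab_A(\O\trchb)+(\O\beb)_A+\tfrac12(\O e_3\phi)\nab_A\phi+\p\q,
\end{equation*}
Proposition \ref{elliptictraceless} yields
\begin{equation*}
\|u^5\nab^5(\O\chibh)\|_{L^2(\S)}\lesssim\sum_{i\leq 5}\|u^i\nab^i(\O\trchb)\|_{L^2(\S)}+\sum_{i\leq 4}\|u^{i+1}\nab^i(\O\beb+\tfrac12\O\nab_3\phi\nab\phi)\|_{L^2(\S)}+\text{l.o.t.}
\end{equation*}
Taking $L^2_u$ and using $\M R$ (from the curvature energy estimates in Section \ref{energy estimate curvature}) together with $\M S$ controls the $(\O\chibh)$ bound. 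Substituting this back into the transport estimate closes the loop: the product $(\O\chibh)\nab^5(\O\chibh)$ is absorbed via the smallness $\|\O\chibh\|_{L^{\infty}}\lesssim\ub\at/|u|\leq B^{-1/2}$. The $\omb$-borderline piece is handled by adapting the Hodge-system argument of Proposition \ref{om.5.bd} to the conjugate auxiliary $\ombd$ satisfying $\nab_4\ombd=\tfrac12\sigmac$, yielding $\|u^5\nab^5(\O\omb-(\O\omb)_0)\|_{L^2_u L^2(\S)}$ with the same structure as for $\h\om$. The extra factor $1+\ub^{\f12}\af/(|u|^{\f12}\O)$ on the right enters through the borderline product $(\O\trchb)_0\nab^5(\O e_3\phi)$, since $(\O\trchb)_0=-2/|u|$ is of unit size and the top-order scalar-field bound from Proposition \ref{prop6.2} carries exactly this factor.

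Finally, the bounds without the $\O$ factor follow by writing $\nab^5\trchb=\O^{-1}\nab^5(\O\trchb)+\sum_{j<5}c_j\nab^{5-j}(\eta+\etb)^j\,\O^{-1}\nab^j(\O\trchb)$ and similarly for $\chibh$; the prefactor $\O^{-1}$ produces the displayed $\ub^{\f12}\af/(|u|^{\f12}\O)$ weight, while the commutator terms are lower order. For $\chibh$ alone, the Codazzi source for the unweighted estimate involves $\nab^4\b$ rather than $\nab^4(\O\beb)$, which is why only the third estimate inherits the factor $\M R(\b)$. The main obstacle is the delicate tracking of $\O$-weights when pairing the transport with Codazzi: losing a single power of $\O$ in $(\O\trchb)_0\nab^5(\O\omb-(\O\omb)_0)$ or in $(\O e_3\phi)_0\nab^5(\O\chibh)$ would spoil the stated right-hand side, so one must exploit the naked-singularity identities $\O\omb(u,0)=\ms/(4(1-\ms))\cdot 1/|u|$ and $(\O\trchb)(u,0)=-2/|u|$ together with the renormalizations $\O\omb-(\O\omb)_0$ and $\O\trchb-(\O\trchb)_0$ to extract the correct power of $\O$.
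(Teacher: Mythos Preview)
Your overall architecture (Raychaudhuri transport for the trace plus Codazzi elliptic for the trace-free part) is correct, but you diverge from the paper at a key structural point: the paper propagates $\O^{-1}\trchb$, not $\O\trchb$. In the form
\[
\O\nab_3(\O^{-1}\trchb)+\tfrac12\O\trchb\,(\O^{-1}\trchb)=-|\chibh|^2-(e_3\phi)^2
\]
the $\omb$-coupling is \emph{absent}: the $-2\omb\trchb$ in $\nab_3\trchb$ is exactly canceled by $\O^{-1}\nab_3\O\cdot\trchb$. After commuting with $\nab^5$ (giving $\lambda_0=3$), the only genuine top-order sources are $\chibh\,\nab^5\chibh$, fed back through Codazzi, and $e_3\phi\,\nab^5(\O e_3\phi)$, controlled by Proposition~\ref{prop6.2}; the latter is precisely what generates the factor $(1+\ub^{1/2}a^{1/4}|u|^{-1/2}\O^{-1})$. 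The $\O$-weighted and unweighted conclusions then both follow by multiplying the bound on $\nab^5(\O^{-1}\trchb)$ by $\O^2$ or $\O$ and absorbing commutators.

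Your route through $\O\trchb$ leaves the term $(\O\trchb)\,\nab^5(\O\omb-(\O\omb)_0)\approx |u|^{-1}\nab^5(\O\omb-(\O\omb)_0)$ on the right. The paper has no fifth-derivative estimate for $\omb$ at all---note that $\omb$ does not appear in $\tilde{\mathcal O}_{5,2}$ in \eqref{tilde O 5 2}---precisely because the $\O^{-1}\trchb$ formulation makes it unnecessary. Your proposed conjugate Hodge construction ($\nab_4\ombd=\tfrac12\sigmac$, $\kappab=\nab\omb+{^*\nab}\ombd+\ldots$) is not implausible, but it is a genuine detour: you would have to set up the $\nab_4$-transport for $\nab^4(\O\kappab)$, handle its data on $\Hb_0$, and then check that the resulting norm (produced by forward $\ub$-integration) can actually be inserted as $L^1_u$ into the $\trchb$ transport. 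None of this is carried out, and none of it is needed if you start from $\O^{-1}\trchb$.

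One further correction: the Codazzi equation for $\chibh$ contains $\beb$, not $\b$, so your explanation of the $\M R(\b)$ factor in the third line is off. In the paper that factor enters through the lower-order term $-\tfrac12(\eta-\etb)\cdot(\chibh-\tfrac12\trchb)$: in the unweighted estimate the coefficient $(\chibh,\trchb)=\O^{-1}(\O\chibh,\O\trchb)$ makes this contribution borderline, and the $\M O(\eta,\etb)$ bound from Proposition~\ref{q.bd} (via Proposition~\ref{om.5.bd}) carries the $\M R(\b)$.
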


\begin{proof}
We utilize the following equation for $\trchb$:
{\color{black}
$$\O\nab_3(\O^{-1}\tr\chib)+\f12\O\tr\chib(\O^{-1}\tr\chib)=-|\chibh|^2-\nab_3\phi\nab_3\phi.$$
}
Via commuting the above equation with $5$ angular derivatives, we obtain
\begin{equation}\label{nabla 5 trchib}
\begin{split}
&\O\nab_3 \nab^5 (\O^{-1}\tr\chib)+3\O\tr\chib\nab^5(\O^{-1}\tr\chib)\\
=&\q\nab^5(\O^{-1}\trchb,\O^{-1}\chibh)+{\color{black}\O^{-1}}e_3\phi\nab^5({\color{black}\O}e_3\phi)+{\color{black}\O^{-2}}\sum_{\substack{i_1+i_2+i_3=5\\i_1,i_3\leq 4}} \nab^{i_1}\q^{i_2+1}\nab^{i_3}\q\\
&+\sum_{i_1+i_2+i_3=4}\f{{\color{black}\O^{-2}}}{|u|}\nab^{i_1}\q^{i_2+1}\nab^{i_3}\q+\sum_{i_1+i_2=4}\f{{\color{black}\O^{-2}}}{|u|^2}\nab^{i_1}\q^{i_2+1}.
\end{split}
\end{equation} 

{\color{black}
In the derivation of above equation, we employ 
\begin{equation*}
\begin{split}
e_3\phi\nab^5(e_3\phi)=&e_3\phi\nab^5[e_3\phi-(e_3\phi)_0]=e_3\phi\nab^5[\O^{-1}\O(e_3\phi-(e_3\phi)_0)]\\
=&\O^{-1}e_3\phi\sum_{\substack{i_1+i_2=5\\ i_2\leq 4}}\q^{i_1}\nab^{i_2}[\O(e_3\phi-(e_3\phi)_0)]+\O^{-1}e_3\phi\nab^5[\O(e_3\phi-(e_3\phi)_0)]\\
=&\O^{-1}e_3\phi\sum_{\substack{i_1+i_2=5\\ i_2\leq 4}}\q^{i_1}\nab^{i_2}[\O(e_3\phi-(e_3\phi)_0]+\O^{-1}e_3\phi\nab^5[(\O e_3\phi)-(\O e_3\phi)_0]\\
&-\O^{-1}e_3\phi\nab^5[(\O-\O_0)(e_3\phi)_0].
\end{split}
\end{equation*}
Note that $\O^{-1}e_3\phi\nab^5[(\O e_3\phi)-(\O e_3\phi)_0]$ is the same as the last term in (\ref{nabla 5 trchib}) and the rest terms are absorbed by the other same-type terms in (\ref{nabla 5 trchib}).
} We also notice
\begin{equation*}
\begin{split}
\|u^5\nab^5(\O\tr\chib)\|_{L^2(\S)}\ls& \|\O^2\|_{L^{\i}(\S)}\|u^5\nab^5(\O^{-1}\tr\chib)\|_{L^2(\S)}+\f{\ub\at O}{|u|},\\
\|u^5\nab^5\tr\chib\|_{L^2(\S)}\ls& \|\O\|_{L^{\i}(\S)}\|u^5\nab^5(\O^{-1}\tr\chib)\|_{L^2(\S)}+\f{\ub\at O}{|u|\O}.
\end{split}
\end{equation*}
To bound $\|u^5\nab^5(\O^{-1}\tr\chib)\|_{L^2(\S)}$, we then apply Proposition \ref{evolution lemma} with $\lambda_0=3$ and it can be controlled by the $\|u^5\cdot\|_{L_{\ub}^{\i}L_{u}^{1}L^{2}(\S)}$ norm of the right hand side of \eqref{nabla 5 trchib}. 

Together, to estimate $\|u^5\nab^5(\O\tr\chib)\|_{L^2(\S)}$, we first bound
{\color{black}
\begin{equation*}
\begin{split}
&\|\O^2\|_{L^{\infty}(\S)}\cdot \|u^5\q\nab^5(\O^{-1}\trchb,\O^{-1}\chibh)\|_{L_{\ub}^{\i}L_{u}^{1}L^{2}(\S)}\\
\leq&\|u^5\nab^5(\O\tr\chib, \O\chibh)\|_{L^{\infty}_{\ub}L^2_u L^2(\S)}\cdot\|\q\|_{L^{\infty}_{\ub}L^2_u L^{\i}(\S)}\\
\leq&\ub^{\f12}\af\F\t O\cdot\f{\ub\at}{|u|^2}|u|^{\f12}O=\f{\ub^{\f12}\af}{|u|^{\f12}}\cdot\f{\ub\at}{|u|}\F \t O O\leq \f{\ub^{\f12}\af}{|u|^{\f12}}\cdot\f{\ub^{\f12}\af}{|u|^{\f12}\O^{\f12}}.
\end{split}
\end{equation*}
}
To bound $\|u^5\nab^5\tr\chib\|_{L^2(\S)}$, similarly, we get
\begin{equation*}
\begin{split}
&\|\O\|_{L^{\infty}(\S)}\cdot \|u^5\q\nab^5(\O^{-1}\trchb,\O^{-1}\chibh)\|_{L_{\ub}^{\i}L_{u}^{1}L^{2}(\S)}\\
\leq&\|u^5\nab^5(\tr\chib, \chibh)\|_{L^{\infty}_{\ub}L^2_u L^2(\S)}\cdot\|\q\|_{L^{\infty}_{\ub}L^2_u L^{\i}(\S)}\\
\leq&\ub^{\f12}\af\F\t O\cdot\f{\ub\at}{|u|^2}|u|^{\f12}O=\f{\ub^{\f12}\af}{|u|^{\f12}}\cdot\f{\ub\at}{|u|}\F \t O O\leq \f{\ub^{\f12}\af}{|u|^{\f12}}\cdot\f{\ub^{\f12}\af}{|u|^{\f12}\O^{\f12}}.
\end{split}
\end{equation*} 

We then move to control the remaining terms. We first bound
\begin{equation*}
\begin{split}
&\|(\O^2, \O)\|_{L^{\infty}(\S)}\cdot\|\O^{-1} u^5  e_3\phi\nab^5(\O e_3 \phi)\|_{L^{\i}_{\ub}L^1_uL^2(\S)}\\
\ls&\|(\O, 1)\|_{L^{\infty}(\S)}\|\O e_3\phi\|_{L^{\i}_{\ub}L^2_uL^{\i}(\S)}    \|\O^{-1} u^5\nab^5(\O e_3 \phi)\|_{L^{\i}_{\ub}L^2_uL^2(\S)}\\
\ls &\|(\O, 1)\|_{L^{\infty}(\S)}\cdot\|u^{-1}\|_{L^2_u}\cdot \bigg(\f{\ub^{\f12}\af}{|u|^{\f12}\O}\cdot\ub^{\f12}\af+(\f{\ub\at}{|u|\O^{\f32}})^{\f14}\cdot\ub^{\f12}\af\bigg)\\
\ls& \f{1}{|u|^{\f12}}\cdot \bigg(\f{\ub^{\f12}\af}{|u|^{\f12}\O}\cdot\ub^{\f12}\af+(\f{\ub\at}{|u|\O^{\f32}})^{\f14}\cdot\ub^{\f12}\af\bigg).
\end{split}
\end{equation*} 
Here for the third line, we employ Proposition {\color{black}\ref{prop6.2}}. For the rest terms, we have 

\begin{equation*}
\begin{split}
&{\color{black}\|(\O^2, \O)\|_{L^{\infty}(\S)}}\cdot\|\O^{-2}u^5\sum_{\substack{i_1+i_2+i_3=5\\i_1,i_3\leq 4}} \nab^{i_1}\q^{i_2+1}\nab^{i_3}\q\|_{L_{\ub}^{\i}L_{u}^{1}L^{2}(\S)}
\leq \f{\ub^{\f12}\af}{|u|^{\f12}}\cdot\f{\ub^{\f12}\af}{|u|^{\f12}\O^{\f12}},
\end{split}
\end{equation*}
\begin{equation*}
\begin{split}
&{\color{black}\|(\O^2, \O)\|_{L^{\infty}(\S)}}\cdot \|u^5\sum_{i_1+i_2+i_3=4} \frac{\O^{-2}}{|u|}\nab^{i_1}\q^{i_2+1}\nab^{i_3}\q\|_{L_{\ub}^{\i}L_{u}^{1}L^{2}(\S)}\leq \f{\ub^{\f12}\af}{|u|^{\f12}}\cdot\f{\ub^{\f12}\af}{|u|^{\f12}\O^{\f12}},
\end{split}
\end{equation*}

\begin{equation*}
\begin{split}
&{\color{black}\|(\O^2,\O)\|_{L^{\infty}(\S)}}\cdot \|u^5\sum_{i_1+i_2=4}\f{\O^{-2}}{|u|^2}\nab^{i_1}\q^{i_2+1}\|_{L_{\ub}^{\i}L_{u}^{1}L^{2}(\S)}\leq\f{\ub^{\f12}\af}{|u|^{\f12}}\cdot(\f{\ub^{\f12}\af}{|u|^{\f12}\O^{\f12}})^{\f12}.
\end{split}
\end{equation*}

Collecting all the above estimates, we derive
\begin{equation}
\begin{split}
&\|u^5\nab^5(\O\tr\chib)\|_{L^2(\S)}\leq \f{\ub^{\f12}\af}{|u|^{\f12}}\cdot\O\cdot{\color{black}(1+\f{\ub^{\f12}\af}{|u|^{\f12}\O})}+\f{\ub^{\f12}\af}{|u|^{\f12}}(\f{\ub^{\f12}\af}{|u|^{\f12}\O^{\f12}})^{\f12},\\
&\|u^5\nab^5(\O\tr\chib)\|_{L^{\i}_{\ub}L^2_u L^2(\S)}\leq  {\ub^{\f12}\af}{\color{black}(1+\f{\ub^{\f12}\af}{|u|^{\f12}\O})}, \\
&\|u^5\nab^5\tr\chib\|_{L^2(\S)}\leq\f{\ub^{\f12}\af}{|u|^{\f12}\O}\cdot\f{\ub^{\f12}\af}{|u|^{\f12}}+(\f{\ub\at}{|u|\O^{\f32}})^{\f14}\cdot\f{\ub^{\f12}\af}{|u|^{\f12}}+\f{\ub^{\f12}\af}{|u|^{\f12}}(\f{\ub^{\f12}\af}{|u|^{\f12}\O^{\f12}})^{\f12},\\
&\|u^5\nab^5\tr\chib\|_{L^{\infty}_{\ub}L^2_uL^2(\S)}\leq {\ub^{\f12}\af}{\color{black}(1+\f{\ub^{\f12}\af}{|u|^{\f12}\O})}.
\end{split}
\end{equation}

To bound the fifth angular derivatives of $\chibh$, we now employ the Codazzi equation
$$\div\chibh=\beb+\f12\nab\tr\chib-\f12(\eta-\etb)\cdot(\chibh-\f12\tr\chib).$$
Applying Proposition \ref{elliptictraceless}, we get
\begin{equation*}
\begin{split}
\|u^{5}\nab^5(\O\chibh)\|_{L^2(\S)}\ls& \sum_{i\leq 4}\bigg(\|u^{i+1}\nab^{i+1}(\O\tr\chib)\|_{L^2(\S)}+\|u^{i+1}\nab^{i}(\O\beb)\|_{L^2(\S)}\bigg)+\f{\ub\at O}{|u|}\\
\ls &\f{\ub^{\f12}\af \O}{|u|^{\f12}}(1+\f{\ub^{\f12}\af}{|u|^{\f12}\O})+\f{\ub^{\f12}\af}{|u|^{\f12}}\cdot (\f{\ub^{\f12}\af}{|u|^{\f12}\O^{\f12}})^{\f12}+\|u^{i+1}\nab^{i}(\O\beb)\|_{L^2(\S)}+\f{\ub\at O}{|u|},
\end{split}
\end{equation*}
\begin{equation*}
\begin{split}
\|u^{5}\nab^5\chibh\|_{L^2(\S)}\ls& \sum_{i\leq 4}\bigg(\|u^{i+1}\nab^{i+1}\tr\chib\|_{L^2(\S)}+\|u^{i+1}\nab^{i}\beb\|_{L^2(\S)}\bigg)+\f{\ub\at O(\eta, \etb)}{|u|\O}.
\end{split}
\end{equation*}
We then take the $L^2$ norm in $u$ and obtain
{\color{black}
\begin{equation*}
\begin{split}
&\|u^{5}\nab^5(\O\chibh)\|_{L^\i_{\ub}L^2_u L^2(\S)}\\
\ls& \ub^{\f12}\af\F+\sum_{i\leq 4}\|u^{i+2}\nab^{i}\beb\|_{L^\i_{\ub}L^2_u L^2(\S)}\|u^{-1}\O\|_{L^{\i}_{\ub}L^{\i}_u L^{\i}(\S)}+\f{\ub\at O}{|u|^{\f12}}\\
\ls &{\ub^{\f12}\af}{\color{black}(1+\f{\ub^{\f12}\af}{|u|^{\f12}\O})}+\ub^{\f32}a^{\f34}\F \M R\cdot\f{1}{|u|}+\f{\ub\at O}{|u|^{\f12}}\leq {\ub^{\f12}\af}{\color{black}(1+\f{\ub^{\f12}\af}{|u|^{\f12}\O})},
\end{split}
\end{equation*} 
\begin{equation*}
\begin{split}
&\|u^{5}\nab^5\chibh\|_{L^\i_{\ub}L^2_u L^2(\S)}\\
\ls& \sum_{i\leq 4}\bigg(\|u^{i+1}\nab^{i+1}\tr\chib\|_{L^\i_{\ub}L^2_u L^2(\S)}+\|u^{i+1}\nab^{i}\beb\|_{L^\i_{\ub}L^2_u L^2(\S)}\bigg)+\f{\ub\at O(\eta, \etb)}{|u|^{\f12}\O}\\
\ls& \sum_{i\leq 4}\|u^{i+1}\nab^{i+1}\tr\chib\|_{L^\i_{\ub}L^2_u L^2(\S)}+\sum_{i\leq 4}\|u^{i+2}\nab^{i}\beb\|_{L^\i_{\ub}L^2_u L^2(\S)}\cdot\|u^{-1}\|_{L^{\i}_{\ub}L^{\i}_u L^{\i}(\S)}+\f{\ub\at \M R(\b)}{|u|^{\f12}\O}\\
\ls &{\ub^{\f12}\af}{\color{black}(1+\f{\ub^{\f12}\af}{|u|^{\f12}\O})}+\ub^{\f12}\af\cdot\f{\ub^{\f12}\af}{|u|^{\f12}\O}\cdot\M R(\b),
\end{split}
\end{equation*} 
where we use Proposition {\color{black}\ref{q.bd}} and Proposition  {\color{black}\ref{om.5.bd}}.

}

Collecting all above estimates, we hence prove this proposition. 

\end{proof}

We conclude this section by summarizing all the proved estimates in this section with the aid of the $\tilde{\mathcal O}_{5,2}$ norm:
\begin{proposition}\label{O52.bd}
Under the assumptions of Theorem \ref{main thm} and the bootstrap assumptions \eqref{BA.1}, \eqref{BA.2}, \eqref{BA.3}, \eqref{BA.4}, it holds
\[
 \tilde{\mathcal O}_{5,2}\ls 1+\mathcal R.
\]

\end{proposition}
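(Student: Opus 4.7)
The plan is to obtain this proposition as a direct repackaging of the six top-order estimates already established in Propositions \ref{om.5.bd}, \ref{chi.5.bd}, \ref{etab.5.bd}, \ref{eta.5.bd}, and \ref{nabla5chib}. Since $\tilde{\mathcal O}_{5,2}$ is defined as a sum of six weighted norms, I would treat each term one at a time, divide the corresponding proved estimate by the prescribed weight, and verify that the resulting right-hand side reduces to $1+\mathcal R$.

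Concretely, for the block $\frac{1}{\ub^{1/2}\at}\|u^5\nab^5(\O\tr\chi,\O\h\omega,\O\chih)\|_{L^2_{\ub}L^2(\S)}$, I would invoke Proposition \ref{om.5.bd} (for $\O\h\omega$) and the two bounds in Proposition \ref{chi.5.bd} (for $\O\tr\chi$ and $\O\chih$), each of the form $\ls \ub^{1/2}\at[1+\mathcal R(\beta)]$; dividing by $\ub^{1/2}\at$ yields $\ls 1+\mathcal R$. For the $\etb$-block, Proposition \ref{etab.5.bd} directly supplies the bound $\ub^{1/2}\af\cdot\frac{\ub^{1/2}\af}{|u|^{1/2}\O}$ which is exactly the weight, giving a contribution $\ls 1$. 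For the two $\eta$-blocks involving $L^2_{\ub}$ and $L^2_u$, I would use the three estimates of Proposition \ref{eta.5.bd}: the bound $\|u^6\nab^5\eta\|_{L^2_{\ub}L^2(\S)}\ls \ub^{3/2}a^{3/4}\mathcal F[1+\mathcal R(K,\sigmac)]+\ub^{3/2}a^{3/4}a^{-1/4}\mathcal R(\beta)$ matches the weight $\ub^{3/2}a^{3/4}$, while the bounds $\|\O u^5\nab^5\eta\|_{L^2_uL^2(\S)}\ls \ub^{1/2}\at[1+\underline{\mathcal R}(K,\sigmac)]$ and the corresponding $\O$-free version match the weight $\ub^{1/2}\at$.

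For the final block containing $(\O\tr\chib,\O\chibh,\tr\chib,\chibh)$, the output of Proposition \ref{nabla5chib} has the natural form $\ub^{1/2}\af\bigl(1+\frac{\ub^{1/2}\af}{|u|^{1/2}\O}\bigr)$ for $\O\tr\chib,\O\chibh,\tr\chib$, and an extra factor of $\mathcal R(\beta)$ for $\chibh$; dividing by the prescribed weight $\bigl(1+\frac{\ub^{1/2}\af}{|u|^{1/2}\O}\bigr)\ub^{1/2}\af$ again collapses everything to $\ls 1+\mathcal R$. No obstacle is expected here: the entire proof is a bookkeeping exercise, since the weights in the definition of $\tilde{\mathcal O}_{5,2}$ were tailored precisely to absorb the sharp $\O$-weighted and $(\ub^{1/2}\af/|u|^{1/2}\O)$-weighted bounds produced by the elliptic arguments. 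Summing the six bounds then gives
\[
\tilde{\mathcal O}_{5,2}\ls 1+\mathcal R(\beta)+\mathcal R(K,\sigmac)+\underline{\mathcal R}(K,\sigmac)+\mathcal R(\beta)\ls 1+\mathcal R,
\]
as claimed. The only subtlety I would be careful about is to distinguish the two types of $L^2_u$ estimates for $\nab^5\eta$ (with and without the $\O$-weight inside the norm), since both appear in the definition of $\tilde{\mathcal O}_{5,2}$ and correspond to different statements within Proposition \ref{eta.5.bd}.
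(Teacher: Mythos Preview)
Your proposal is correct and matches the paper's approach exactly: the paper itself gives no proof beyond the one-line remark that the proposition ``summariz[es] all the proved estimates in this section,'' and your term-by-term matching of Propositions \ref{om.5.bd}--\ref{nabla5chib} against the weights in the definition \eqref{tilde O 5 2} is precisely the bookkeeping that this summary amounts to. If anything, you have written out more than the paper does.
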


\section{Energy Estimates for Curvature Components}\label{energy estimate curvature} 

{\color{black} To carry out the energy estimates, we employ the below paired equations with suitable $\O$ weights. They are from the renormalized null Bianchi equations. The first pair are 
\begin{equation}
\label{eq:null.Bianchi2}
\begin{split}
&\nab_3[\O(\beta_{\cdot}-\f12\nab_4\phi\nab_{\cdot}\phi)]_A+\trchb\O(\beta_A-\f12\nab_4\phi\nab_A\phi)\\
=&-\O\nabla (K-\f{1}{|u|^2}-\f14\nab^C\phi\nab_C\phi)  +\O{^*\nabla}\sigmac-\f12\O\tr\chib\nab_4\phi\nab_A\phi+\O\omb\nab_4\phi\nab_A\phi\\
&+\O[2\chih\cdot\betab\Red{-}3(\eta K\Red{-}^*\eta\sigmac)+\frac 1 2(\nabla(\chih\cdot\chibh)-^*\nabla(\chih\wedge\chibh))-\frac 34 \eta\trch\trchb]\\
&+\O[{\color{black}\f32}(\eta\chih\cdot\chibh-^*\eta\chih\wedge\chibh)-\frac 14 (\nab\trch \trchb+\trch\nab\trchb)]\\
&+\f34\O\nab_A(\nab^C\phi\nab_C\phi)-\O\nab^B R_{BA}+\f12\eta_A\nab_3\phi\nab_4\phi+\eta_A\nab_C\phi\nab^C\phi,\\
\end{split}
\end{equation}

\begin{equation}
\begin{split}
\nab_4(\O\sigmac)+\frac 32\trch\O\sigmac=&-\O\div^*(\beta-\f12\nab_4\phi\nab\phi)+\O(-\nab_1 R_{42}+\nab_2 R_{41})-2\O\h\o\sigmac\\
&+\O[-\zeta\cdot^*\beta-2\etab\cdot ^*\beta+\frac 12 \chih\cdot^*(\nab\widehat{\otimes}\etab)+\frac 12 \chih\cdot^*(\etab\widehat{\otimes}\etab)]\\
&+\f12\O(\nab_A\phi \nab_B\phi-\f12 g_{AB}\nab_C\phi \nab^C\phi)\wedge \chih^{AB},\\
\end{split}
\end{equation}

\begin{equation}
\begin{split}
&\nab_4 [\O(K-\f{1}{|u|^2}-\f14\nab^A\phi \nab_A\phi)]+\trch \O(K-\f{1}{|u|^2}-\f14\nab^A\phi \nab_A\phi)\\
=&-\O\div(\beta-\f12\nab_4\nab\phi)-\f12\O\nab^A\nab_4\phi \nab_A\phi-2\O\h\o(K-\f{1}{|u|^2}-\f14\nab^A\phi\nab_A\phi)-\O\tr\chi\cdot\f{1}{|u|^2}\\
&+\O[-\zeta\cdot\beta-2\etab\cdot\beta+\frac 12 \chih\cdot\nab\widehat{\otimes}\etab+\frac 12 \chih\cdot(\etab\widehat{\otimes}\etab)-\frac 12 \trch\div\etab-\frac 12\trch |\etb|^2]\\
&+\O[\f12\etb^A\nab_A\phi\nab_4\phi+\f12(\nab_A\phi\nab_B\phi-\f12g_{AB}\nab_C\phi\nab^C\phi)\cdot\chih^{AB}]. 
\end{split}
\end{equation} 

The second pair are
\begin{equation}
\begin{split}
\nab_3\sigmac+\frac 32\trchb\sigmac=&-\div ^*(\betab+\f12\nab_3\phi\nab\phi)-\zeta\cdot ^*\betab-2\eta\cdot^*\betab+\frac 12 \chibh\cdot^*(\nab\widehat{\otimes}\eta)+\frac 12 \chibh\cdot^*(\eta\widehat{\otimes}\eta)\\
&+\nab_1 R_{32}-\nab_2 R_{31}+\f12\chibh^{AB}\wedge(\nab_A\phi \nab_B\phi-\f12 g_{AB}\nab_C\phi \nab^C\phi),\\
\end{split}
\end{equation}

\begin{equation}
\begin{split}
\nab_3 (K-\f{1}{|u|^2}-&\f14\nab^A\phi\nab_A\phi)+\f32\trchb (K-\f{1}{|u|^2}-\f14\nab^A\phi\nab_A\phi)\\
=& \div(\betab+\f12\nab_3\phi\nab\phi)-\f12\nab^A\nab_3\phi\nab_A\phi-\zeta\cdot\betab+2\eta\cdot\betab\\
&+\frac 12 \chibh\cdot\nab\widehat{\otimes}\eta+\frac 12 \chibh\cdot(\eta\widehat{\otimes}\eta)-\frac 12 \trchb |\eta|^2+\f12\eta^A\nab_A\phi\nab_3\phi\\
&+\f12(\nab_A\phi\nab_B\phi-\f12 g_{AB}\nab_C\phi\nab^C\phi)\cdot\chibh^{AB}-\f18\tr\chib\nab^A\phi\nab_A\phi\\
&+\f12\tr\chib(-\div\eta+\Kt)-\f{\Omega^{-1}}{|u|^2}(\O\tr\chib+\f{2}{|u|}),
\end{split}
\end{equation}

\begin{equation}
\begin{split}
&\nab_4(\betab_{\cdot}+\f12\nab_3\phi\nab_{\cdot}\phi)_A+\trch(\betab_A{\color{black}+\f12\nab_3\phi\nab_A\phi})\\
=&\nabla (K-\f{1}{|u|^2}-\f14\nab^A\phi \nab_A\phi)+^*\nabla\sigmac+ 2\h\omega\betab +2\chibh\cdot\beta+3(\Red{-}\etab K+^*\etab\sigmac)\\
&+\nabla(-\f14\nab^A\phi \nab_A\phi+\f14\tr\chi \tr\chib-\f12\chih\cdot\chibh)-\f12{^*\nabla}(\chih\wedge\chibh){\color{black}+\f12\tr\chi\nab_3\phi\nab_A\phi}\\
&+\nab_A(g^{CD}R_{CD})+\nab^B R_{BA}-\f12\etb_A\nab_3\phi\nab_4\phi-\etb_A\nab_C\phi\nab^C\phi.\\
\end{split}
\end{equation}
}

Via applying Propositions \ref{intbypartssph}-\ref{intbyparts3}, for the first pair we have 
\begin{proposition} \label{EE.1}
Under the assumptions of Theorem \ref{main thm} and the bootstrap assumptions \eqref{BA.1}, \eqref{BA.2}, \eqref{BA.3}, \eqref{BA.4}, it holds
\begin{equation*}
\begin{split}
&\sum_{i\leq 4}(\|u^{i+1}\nab^i(\O\b-\f12\O\nab_4\phi\nab\phi)\|_{L^{\i}_uL^{2}_{\ub}L^2(\S)}+\|\O u^{i+1}\nab^i(\Kt,\sigmac)\|_{L^{\i}_{\ub}L^{2}_{u}L^2(\S)})\ls \ub^{\f12}\at.
\end{split}
\end{equation*}
By the obtained $L^2(\S)$ estimates in Section \ref{L2 scalar field} for $\phi$, it also holds
\begin{equation*}
\begin{split}
&\sum_{i\leq 4}(\|u^{i+1}\nab^i(\O\b)\|_{L^{\i}_uL^{2}_{\ub}L^2(\S)}+\|\O u^{i+1}\nab^i(\K,\sigmac)\|_{L^{\i}_{\ub}L^{2}_{u}L^2(\S)})\ls \ub^{\f12}\at.
\end{split}
\end{equation*}

\end{proposition}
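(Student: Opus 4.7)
The plan is to derive an energy identity by pairing the three renormalized Bianchi equations for $\O\b_{\mathrm{ren}} := \O\b - \tfrac12\O\nab_4\phi\nab\phi$, $\O K_{\mathrm{ren}} := \O(K-\tfrac{1}{|u|^2}-\tfrac14\nab^A\phi\nab_A\phi)$, and $\O\sigmac$. First I would commute each of the three equations with $\nab^i$ for $0\leq i\leq 4$, collecting the top-order transport structure
\[
\O\nab_3\nab^i(\O\b_{\mathrm{ren}}) + (i+1)\O\tr\chib\,\nab^i(\O\b_{\mathrm{ren}}) = -\O\nab\nab^i(\O K_{\mathrm{ren}}) + \O\,{^*\nab}\nab^i(\O\sigmac) + E^{(1)}_i,
\]
\[
\O\nab_4\nab^i(\O K_{\mathrm{ren}}) + (i+1)\O\tr\chi\,\nab^i(\O K_{\mathrm{ren}}) = -\O\div\nab^i(\O\b_{\mathrm{ren}}) + E^{(2)}_i,
\]
\[
\O\nab_4\nab^i(\O\sigmac) + (i+1)\O\tr\chi\,\nab^i(\O\sigmac) = -\O\div{^*}\nab^i(\O\b_{\mathrm{ren}}) + E^{(3)}_i,
\]
where the commutator plus lower-order error tensors $E^{(\ast)}_i$ are schematically of the type already enumerated in Sections \ref{secRicci}--\ref{scalar field} and controlled by $\M O,\M S,\t{\M O}_{5,2}$ and their products with $\nab^{\leq i}(\O\b_{\mathrm{ren}}, \O K_{\mathrm{ren}},\O\sigmac)$.

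Next I would take the $u^{2(i+1)}$-weighted contraction of the first equation against $\nab^i(\O\b_{\mathrm{ren}})$ and of the other two against $\nab^i(\O K_{\mathrm{ren}})$, $\nab^i(\O\sigmac)$ respectively, sum, and integrate over $D_{u,\ub}$. Applying Proposition \ref{intbyparts3} with $\lambda_0=i+1$ to the $\O\nab_3$ term produces $\|u^{i+1}\nab^i(\O\b_{\mathrm{ren}})\|_{L^2_{\ub}L^2(\S)}^2$ on $H_u$ (with a lower-order $\O\tr\chib+2/|u|$ bulk correction), and Proposition \ref{intbyparts34} converts the $\O\nab_4$ terms into $\|\O u^{i+1}\nab^i(K_{\mathrm{ren}},\sigmac)\|_{L^2_u L^2(\S)}^2$ on $\Hb_{\ub}$. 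The crucial point is that the top-order cross terms combine into
\[
\int_{D_{u,\ub}} u^{2(i+1)}\Bigl[\nab^i(\O\b_{\mathrm{ren}})^A\bigl(-\nab_A\nab^i(\O K_{\mathrm{ren}})+\eps_A{}^B\nab_B\nab^i(\O\sigmac)\bigr) - \nab^i(\O K_{\mathrm{ren}})\div\nab^i(\O\b_{\mathrm{ren}}) - \nab^i(\O\sigmac)\div^{\!*}\nab^i(\O\b_{\mathrm{ren}})\Bigr],
\]
which, by Proposition \ref{intbypartssph}, telescopes down to integrals involving the factor $\eta+\etb$ contracted against the same three renormalized quantities, hence representing only a lower-order error.

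The initial data contribution on $\Hb_0$ vanishes because the renormalizations are precisely designed so that Christodoulou's naked-singularity data satisfies $\sigmac|_{\Hb_0}=0$ and $(K-1/|u|^2-\tfrac14\nab^A\phi\nab_A\phi)|_{\Hb_0}=0$ (using $K(u,0)=1/|u|^2$ computed in Section \ref{setting}), while the contribution on $H_{-1}$ is controlled by $\M I^{(0)}\lesssim 1$. After these reductions, the proof reduces to bounding the weighted $L^1_u L^1_{\ub} L^1(\S)$ norms of products $u^{2(i+1)}\,E^{(\ast)}_i\cdot \nab^i(\O\b_{\mathrm{ren}},\O K_{\mathrm{ren}},\O\sigmac)$. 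By Cauchy--Schwarz these are bounded by $\ub a$ times various combinations of the bootstrap norms $\M O,\M S,\t{\M O}_{5,2}$ and the factor $\F=1+\ub^{1/2}a^{1/4}/(|u|^{1/2}\O)$, each of which either closes with a small factor $B^{-c}$ or is absorbable to the left via Gronwall on $\ub$.

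The main obstacle will be the borderline terms. These include the top-order coupling $\O\chih\cdot\nab^i\beb$ appearing in $E^{(1)}_i$ (handled by decomposing $\nab^i\beb$ via the second pair and postponing its estimate to Proposition \ref{EE.1}'s companion), the scalar-field-sourced terms $\O\tr\chib\,\nab_4\phi\nab\phi$ and $\f34\O\nab(\nab^C\phi\nab_C\phi)$ (which exploit the improved bound $\sum_{i\le 4}\|u^i\nab^i(\O\nab_4\phi)\|_{L^2(\S)}\lesssim\at$ from Proposition \ref{L2 scalar field}), and the Christodoulou-type term $\O\tr\chi/|u|^2$ in the $\nab_4(\O K_{\mathrm{ren}})$ equation (which is integrable in $\ub$ owing to $\|\O\tr\chi\|_{L^1_\ub}\lesssim \ub\at O/|u|$). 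Tracking the $\O$-weights through all these couplings is the technical heart; once the borderline coefficients are shown to be $\lesssim B^{-1/4}$ or absorbable, the final estimate $\lesssim \ub^{1/2}\at$ follows, and the version without renormalization of $K$ is then immediate from Proposition \ref{L2 scalar field}.
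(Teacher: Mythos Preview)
Your approach is essentially the same as the paper's: commute the three renormalized Bianchi equations with $\nab^i$, pair them via Propositions \ref{intbyparts34}, \ref{intbypartssph}, \ref{intbyparts3}, cancel the top-order spatial derivatives, and then bound the error integrals term by term. The paper organizes the errors as $G_{1,i}, G_{2,i}, G_{3,i}$ and estimates them in $\|u^{i+1}\cdot\|_{L^1_uL^2_{\ub}L^2}$ and $\|u^{i+1}\cdot\|_{L^1_{\ub}L^2_uL^2}$ respectively, exactly as you outline.

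Two small corrections. First, in Proposition \ref{intbyparts3} the relation is $\lambda_1=2(\lambda_0-\tfrac12)$, so $\lambda_1=i+1$ corresponds to $\lambda_0=\tfrac{i+2}{2}$, matching the coefficient $\tfrac{2+i}{2}\tr\chib$ in the commuted $\nab_3$ equation; your stated $\lambda_0=i+1$ is off. Second, your handling of the $\chih\cdot\nab^i\beb$ term by ``postponing to the companion proposition'' is neither what the paper does nor advisable: that would introduce circularity between the two energy estimates. Instead the paper rewrites $\beb$ schematically via the Codazzi equation as $\nab\q+\tfrac{1}{|u|}\q+\q\,\q$, so $\chih\cdot\nab^i\beb$ becomes part of the term $\p\,\nab^{i+1}(\chibh,\tr\chib)$ in $G_{1,i}$, which is then bounded using the bootstrap constant $\t O$ from \eqref{BA.4} (no Gronwall needed). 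With these adjustments your sketch matches the paper's argument.
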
 

\begin{proof}
Commuting equations for 
$$\nab_3[\O(\beta-\f12\nab_4\phi\nab\phi)], \quad \nab_{ {\color{black}4}}(\O\sigmac), \quad \nab_4[\O(K-\f{1}{|u|^2}-\f14\nab^A\phi\nab_A\phi)]$$
with $i$ angular derivatives, we get 
\begin{equation}\label{eqn.G1}
\nab_3\nab^i[\O(\b_{\cdot}-\f12\nab_4\phi\nab_{\cdot}\phi]_A+\O\nab\nab^i(\K-\f14\nab^B\phi\nab_B\phi)-\O{^*\nab}\nab^i\sigmac+\f{2+i}{2}\tr\chib\nab^i[\O(\b_{\cdot}-\f12\nab_4\phi\nab_{\cdot}\phi]_A=G_{1,i}
\end{equation}
with $G_{1,i}$ being
\begin{equation*}
\begin{split}
G_{1,i}=&{\color{black}\O^{-1}}\q\nab^{i+1}\p+\f{1}{|u|}\nab^{i+1}\trch+\p\nab^{i+1}(\chibh,\tr\chib)+\sum_{i_1+i_2+i_3=i}\nab^{i_1}\q^{i_2+1}\nab^{i_3}(\K,\sigmac)\\
+&\sum_{\substack{i_1+i_2+i_3=i+1\\i_1, i_3\leq i}}\nab^{i_1}\q^{i_2+1}\nab^{i_3}\p+\f{{\color{black}\O^{-1}}}{|u|}\sum_{i_1+i_2+i_3=i}\nab^{i_1}\q^{i_2+1}\nab^{i_3}\p+\sum_{\substack{i_1+i_2=i+1\\ i_1\leq i}}\nab^{i_1}\q^{i_2+2}\\
+&\f{1}{|u|^2}\sum_{i_1+i_2=i}\nab^{i_1}\q^{i_2+1}{\color{black}+\O\nab^{i+2}\phi\nab\phi,}
\end{split}
\end{equation*}
\begin{equation}\label{eqn.G2}
\nab_4\nab^i(\O\sigmac)+\O\div^*\nab^i(\b-\f12\nab_4\phi\nab\phi)=G_{2,i},
\end{equation}
where $G_{2,i}$ is 
\begin{equation*}
\begin{split}
G_{2,i}=&\q\nab^{i+1}\p+\p\nab^{i+1}\etb+\sum_{i_1+i_2+i_3+i_4=i}\nab^{i_1}\q^{i_2}\nab^{i_3}\p\nab^{i_4}\sigmac\\
&+\sum_{\substack{i_1+i_2+i_3=i+1\\i_1, i_3\leq i}}\nab^{i_1}\q^{i_2+1}\nab^{i_3}\p{\color{black}+\nab^{i+1}(\O e_4\phi)\nab\phi},
\end{split}
\end{equation*}
\begin{equation}\label{eqn.G3}
\nab_4\nab^i[\O(\K-\f14\nab^A\phi\nab_A\phi)]+\O\div\nab^i(\b-\f12\nab_4\phi\nab\phi)=G_{3,i}
\end{equation}
with $G_{3,i}$ satisfying
\begin{equation*}
\begin{split}
G_{3,i}=&\q\nab^{i+1}\p+\p\nab^{i+1}\etb+\sum_{i_1+i_2+i_3+i_4=i}\nab^{i_1}\q^{i_2}\nab^{i_3}\p\nab^{i_4}(\K,\sigmac)\\
&+\sum_{\substack{i_1+i_2+i_3=i+1\\i_1, i_3\leq i}}\nab^{i_1}\q^{i_2+1}\nab^{i_3}\p+\f{1}{|u|^2}\sum_{i_1+i_2+i_3=i}\nab^{i_1}\q^{i_2}\nab^{i_3}\p+\nab^{i+1}(\O e_4\phi)\nab\phi.
\end{split}
\end{equation*}
Note that by the structure of the renormalized null Bianchi equation, we will have $-\nab_1 R_{42}+\nab_2 R_{41}$, which enables us to avoid the troublesome term $\nab^{i+2}\phi e_4\phi$ in above two equations.

We now employ Proposition \ref{intbyparts34} for $\phi=u^{i+1}\nab^i[\O(\K-\f14\nab^A\phi\nab_A\phi)]$ and $\phi=u^{i+1}\nab^i[\O \sigmac]$, and apply Proposition \ref{intbyparts3} for $\phi=\nab^i[\O(\b-\f12\nab_4\phi\nab\phi)]$ with $\lambda_1=i+1$. We then add these identities together and obtain
\begin{equation}\label{main.ee.2}
\begin{split}
&\sum_{i\leq 4}\l\|u^{i+1}\nab^{i}({\color{black}\O}\b)\|_{L^2_{\ub}L^2(\S)}+\|{\color{black}\O}u^{i+1}\nab^{i}(\K)\|_{L^2(\Hb)}+\|{\color{black}\O}u^{i+1}\nab^{i}\sigmac\|_{L^2_uL^2(\S)}\r\\
\ls&\sum_{i\leq 4}\|u^{i+1}\nab^{i}({\color{black}\O}\b)\|_{L^2_{\ub}L^2(S_{0,\ub})}+\sum_{i\leq4}\l\|{\color{black}\O}u^{i+1}G_{1,i}\|_{L^1_{u}L^2_{\ub}L^2(S_{u,\ub})}+\|{\color{black}\O}u^{i+1}G_{2,i}\|_{L^1_{\ub}L^2_{u}L^2(S_{u,\ub})}\r\\
&+\sum_{i\leq 4}\|{\color{black}\O}u^{i+1}G_{3,i}\|_{L^1_{\ub}L^2_{u}L^2(S_{u,\ub})}.
\end{split}
\end{equation}

Note that the highest derivatives of the scalar field in in $G_{1,i}$ satisfies 
\begin{equation*}
\begin{split}
\|\O u^{i+1}\nab^{i+1}\nab\phi \nab\phi&\|_{L^1_{u}L^2_{\ub}L^2(S_{u,\ub})}\leq\|u^{i+1}\nab^{i+1}\nab\phi\|_{L^{\infty}_{u}L^2_{\ub}L^2(S_{u,\ub})} \|\O\nab\phi\|_{L^{1}_{u}L^{\infty}_{\ub}L^{\infty}(S_{u,\ub})}\\
\leq&\ub^{\f12}\at\cdot \F\M S\cdot \f{\ub\at O}{|u|}\leq \ub^{\f12}\at.
\end{split}
\end{equation*}

The corresponding terms in $G_{2,i}$ and $G_{3,i}$ obey
\begin{equation*} 
\begin{split}
\|\O u^{i+1}\nab^{i+1}(\O e_4\phi) \nab\phi&\|_{L^1_{\ub}L^2_{u}L^2(S_{u,\ub})}\leq\|u^{i+1}\nab^{i+1}(\O e_4\phi)\|_{L^{\infty}_{u}L^2_{\ub}L^2(S_{u,\ub})} \|\O\nab\phi\|_{L^{2}_{u}L^2_{\ub}L^{\infty}(S_{u,\ub})}\\
\leq&\ub^{\f12}\at \M S \cdot \f{\ub\at}{|u|^2}|u|^{\f12}\ub^{\f12}O=\ub^{\f12}\at \f{\ub^{\f32}\at}{|u|^{\f32}}\M S O\leq \ub^{\f12}\at. 
\end{split}
\end{equation*} 

We then proceed to bound other terms. Among the terms in ${\color{black}\O u^{i+1}}G_{1,i}$, we first control the contributions from the fifth derivatives the Ricci coefficients. These terms are
\begin{equation}\label{terms.5.ee.2}
\q\nab^5(\O\tr\chi, \O\chih),\quad \f{1}{|u|}\nab^5(\O\trch),\quad \p\nab^5(\O\chibh,\O\trchb)
\end{equation}
and in $\|u^5\cdot\|_{L^1_u L^2_{\ub}L^2(\S)}$ norm they satisfy
\begin{equation*}
\begin{split}
\|u^5\q\nab^5(\O\tr\chi, \O\chih)\|_{L^1_{u}L^2_{\ub}L^2(S_{u,\ub})}\ls& \|u^5\nab^5(\O\tr\chi, \O\chih)\|_{L^{\i}_{u}L^2_{\ub}L^2(S_{u,\ub})}\|\q\|_{L^1_{u}L^{\i}_{\ub}L^{\i}(\S)}\\
\ls& \ub^{\f12}a^{\f12}\t O\cdot\f{\ub a^{\f12}O}{|u|}\leq\ub^{\f12}a^{\f12},
\end{split}
\end{equation*}
\begin{equation*}
\begin{split}
\|u^4\nab^5(\O\trch)\|_{L^1_{u}L^2_{\ub}L^2(S_{u,\ub})}\ls& \|u^6\nab^5(\O\trch)\|_{L^{\i}_{u}L^2_{\ub}L^2(S_{u,\ub})}\|\f{1}{|u|^2}\|_{L^1_{u}}\ls \f{\ub^{\f32}a\t O}{|u|}\leq \ub^{\f12}\at.
\end{split}
\end{equation*}
Here we use the improved bound for $u^6\nab^5(\O\trch)$ in Proposition \ref{chi.5.bd}, which is better than other $\nab^5(\O\p)$ components. For the terms containing $(\O\chibh, \O\tr\chib)$, employing the bootstrap assumption \eqref{BA.3}, we get 
\begin{equation*} 
\begin{split}
\|u^5 \p\nab^5(\O\chibh,\O\tr\chib)&\|_{L^1_{u}L^2_{\ub}L^2(S_{u,\ub})}\ls\|u^5\nab^5(\O\chibh,\O\tr\chib)\|_{L^{\i}_{\ub}L^2_{u}L^2(S_{u,\ub})}\|\p\|_{L^2_{u}L^2_{\ub}
L^{\i}(S_{u,\ub})}\\
\ls&\ub^{\f12}\af\F \t O\cdot\f{\ub^{\f12}\at O}{|u|^{\f12}}
\leq\ub^{\f12}\at \f{\ub^{\f12}\af}{|u|^{\f12}}\F \t O O\leq \ub^{\f12}\at.
\end{split}
\end{equation*}

For $\|\O u^{i+1}G_{1,i}\|_{L^1_u L^2_{\ub} L^2(\S)}$, the other terms obey
$$\|\O u^{i+1}\sum_{i_1+i_2+i_3=i}\nab^{i_1}\q^{i_2+1}\nab^{i_3}(K-\f{1}{|u|^2}, \sigmac)\|_{L^1_u L^2_{\ub} L^2(\S)}\leq \f{\ub\at O}{|u|^2}|u|^{\f12}\ub^{\f12}\cdot \ub^{\f12}\at\leq \ub^{\f12}\at,$$
$$\|\O u^{i+1}\sum_{\substack{i_1+i_2+i_3=i+1\\i_1, i_3\leq i}}\nab^{i_1}\q^{i_2+1}\nab^{i_3}\p\|_{L^1_u L^2_{\ub} L^2(\S)}\leq \f{\ub\at O}{|u|^2}\cdot\f{\at O}{|u|}\cdot|u|^2\ub^{\f12}\leq \ub^{\f12}\at,$$
$$\|u^{i}\sum_{i_1+i_2+i_3=i}\nab^{i_1}\q^{i_2+1}\nab^{i_3}\p\|_{L^1_u L^2_{\ub} L^2(\S)}\leq \f{\ub\at O}{|u|^2}\cdot\f{\at O}{|u|}\cdot|u|^2\ub^{\f12}\leq \ub^{\f12}\at,$$
$$\|\O u^{i+1}\sum_{\substack{i_1+i_2=i+1\\i_1\leq i}}\nab^{i_1}\q^{i_2+2}\|_{L^1_u L^2_{\ub} L^2(\S)}\leq \f{\ub\at\cdot\ub\at O^2}{|u|^4}\cdot|u|^2\ub^{\f12}\leq \ub^{\f12}\at,$$
$$\|\O u^{i-1}\sum_{i_1+i_2=i}\nab^{i_1}\q^{i_2+1}\|_{L^1_u L^2_{\ub} L^2(\S)}\leq\f{\ub\at O}{|u|^3}\cdot|u|^2\ub^{\f12}\leq \ub^{\f12}\at.$$

We then proceed to derive the estimates for $G_{2,i}$ and $G_{3,i}$. Since all the terms of $G_{2,i}$ are contained in the expression for $G_{3,i}$, hence we only need to bound the terms in $G_{3,i}$.  The terms containing the $5$ derivatives of the Ricci coefficients are
$$\O u^5\q\nab^5\p,\quad \O u^5\p\nab^5\etab,$$
and they obey
\begin{equation*}
\begin{split}
\|\O u^5\q\nab^5\p\|_{L^1_{\ub}L^2_{u}L^2(S_{u,\ub})}\ls& \|u^5\nab^5\p\|_{L^{\i}_{u}L^2_{\ub}L^2(S_{u,\ub})}\|\O\q\|_{L^{2}_{\ub}L^2_{u}L^{\i}(S_{u,\ub})}\ls \ub^{\f12}a^{\f12}\t O\f{\ub^{\f32}a^{\f12}O}{|u|^{\f32}}\leq \ub^{\f12}\at,
\end{split}
\end{equation*}

\begin{equation*}
\begin{split}
\|\O u^5\p\nab^5\etb\|_{L^1_{\ub}L^2_{u}L^2(S_{u,\ub})}\ls&\|u^{{\color{black}5}}\nab^5\etb\|_{L^{\i}_{u}L^2_{\ub}L^2(S_{u,\ub})}\|\O{\color{black}\p}\|_{L^2_{u}L^2_{\ub}L^{\i}(S_{u,\ub})}\\
\ls&\ub^{\f12}\af\F\t O\cdot\f{\ub^{\f12}\at}{|u|^{\f12}}\leq \ub^{\f12}\at\f{\ub^{\f12}\af}{|u|^{\f12}}\F\t O\leq \ub^{\f12}\at.
\end{split}
\end{equation*}
\noindent The rest terms in $\|\O u^{i+1}(G_{2,i}, G_{3,i})\|_{L^1_{\ub}L^2_u L^2(\S)}$ are lower order and they satisfy
$$\|\O u^{i+1}\sum_{i_1+i_2+i_3+i_4=i}\nab^{i_1}\q^{i_2}\nab^{i_3}\p \nab^{i_4}(K-\f{1}{|u|^2}, \sigmac)\|_{L^1_{\ub} L^2_{u} L^2(\S)}\leq \f{\ub\at O}{|u|}\cdot\ub^{\f12}\at\leq \ub^{\f12}\at,$$
$$\|\O u^{i+1}\sum_{\substack{i_1+i_2+i_3=i+1\\i_1, i_3\leq i}}\nab^{i_1}\q^{i_2+1}\nab^{i_3}\p\|_{L^1_{\ub} L^2_{u} L^2(\S)}\leq \f{\ub\at O}{|u|^2}\cdot\f{\at O}{|u|}\cdot|u|^{\f32}\ub\leq \ub^{\f12}\at,$$
$$\|\O u^{i-1}\sum_{i_1+i_2+i_3=i}\nab^{i_1}\q^{i_2}\nab^{i_3}\p\|_{L^1_{\ub} L^2_{u} L^2(\S)}\leq \f{\at}{|u|^2}|u|^{\f32}\ub O\leq \ub^{\f12}\at.$$ 
Back to \eqref{main.ee.2}, we hence prove this proposition.\\
\end{proof}

{\color{black} 
We then utilize the other paired equations and proceed to prove
\begin{proposition} \label{EE.2}
Under the assumptions of Theorem \ref{main thm} and the bootstrap assumptions \eqref{BA.1}, \eqref{BA.2}, \eqref{BA.3}, \eqref{BA.4}, it holds
\begin{equation*}
\begin{split}
&\sum_{1\leq i\leq 4}\bigg(\|u^{i+2}\nab^i(\K-\f14\nab^A\phi\nab_A\phi,\sigmac)\|_{L^{\i}_uL^{2}_{\ub}L^2(\S)}+\|u^{i+2}\nab^i(\beb+\f12\nab_3\phi\nab\phi)\|_{L^{\i}_{\ub}L^{2}_{u}L^2(\S)}\bigg)\\
\ls& \ub^{\f32} a^{\f34}\bigg(1+\f{\ub^{\f12}\af}{|u|^{\f12}\O}\bigg).
\end{split}
\end{equation*}
By further employing the $L^2(\S)$ estimates proved in Section \ref{L2 scalar field} for $\phi$, it also holds
\begin{equation*}
\begin{split}
&\sum_{1\leq i\leq 4}\bigg(\|u^{i+2}\nab^i(\K, \sigmac)\|_{L^{\i}_uL^{2}_{\ub}L^2(\S)}+\|u^{i+2}\nab^i\beb\|_{L^{\i}_{\ub}L^{2}_{u}L^2(\S)}\bigg)\ls\ub^{\f32} a^{\f34}\bigg(1+\f{\ub^{\f12}\af}{|u|^{\f12}\O}\bigg). 
\end{split}
\end{equation*}

\end{proposition}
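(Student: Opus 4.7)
The plan is to mirror the scheme used for Proposition \ref{EE.1}, but applied to the second triple of renormalized null Bianchi equations for $\sigmac$, $K-\f{1}{|u|^2}-\f14\nab^A\phi\nab_A\phi$ and $\beb+\f12\nab_3\phi\nab\phi$. First I would commute each of the three equations with $i$ angular derivatives for $1\leq i\leq 4$, producing a system of the schematic form
\begin{align*}
\nab_3\nab^i\sigmac+\tfrac{3+i}{2}\tr\chib\nab^i\sigmac+\div{}^{*}\nab^i\bigl(\beb+\tfrac12\nab_3\phi\nab\phi\bigr)&=H_{1,i},\\
\nab_3\nab^i\bigl(K-\tfrac{1}{|u|^2}-\tfrac14\nab^A\phi\nab_A\phi\bigr)+\tfrac{3+i}{2}\tr\chib\,\nab^i(\cdots)-\div\nab^i\bigl(\beb+\tfrac12\nab_3\phi\nab\phi\bigr)&=H_{2,i},\\
\nab_4\nab^i\bigl(\beb+\tfrac12\nab_3\phi\nab\phi\bigr)-\nab\nab^i\bigl(K-\tfrac{1}{|u|^2}-\tfrac14\nab^A\phi\nab_A\phi\bigr)-{}^{*}\nab\nab^i\sigmac&=H_{3,i},
\end{align*}
where all the top-order derivatives of $\sigmac$, $K-\tfrac{1}{|u|^2}-\tfrac14\nab^A\phi\nab_A\phi$ and $\beb+\tfrac12\nab_3\phi\nab\phi$ have been collected on the left. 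I would then pair $\nab^i\sigmac$ and $\nab^i(K-\tfrac{1}{|u|^2}-\tfrac14\nab^A\phi\nab_A\phi)$ against the first two equations with the weight $u^{2(i+2)}$ via Proposition \ref{intbyparts3} (with $\lambda_0=(3+i)/2$, $\lambda_1=i+2$) and pair $\nab^i(\beb+\tfrac12\nab_3\phi\nab\phi)$ against the third equation via Proposition \ref{intbyparts34}, integrating by parts on $\S$ using Proposition \ref{intbypartssph} so that the top-order $\nab^{i+1}$ derivatives of the three Bianchi quantities cancel exactly.

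The resulting energy identity bounds the desired norm by the initial data on $H_{-1}$ (which vanishes for the first two quantities at $\ub=0$ and is of the order of $\mathcal I^{(0)}$ for $\beb+\tfrac12\nab_3\phi\nab\phi$ on $H_{-1}$) plus the spacetime integrals $\|u^{i+2}H_{\ell,i}\|$ in the appropriate $L^1L^2L^2$ norms. I would then enumerate the terms in $H_{\ell,i}$ exactly as in the first half of Proposition \ref{EE.1}, classifying them into three categories: (a) genuine top-order Ricci terms $\q\nab^5\p$, $\p\nab^5(\eta,\etb)$, $\p\nab^5(\O\tr\chib,\O\chibh)$ and $\f{1}{|u|}\nab^5(\O\tr\chib)$; (b) lower-order products $\sum\nab^{i_1}\q^{i_2+1}\nab^{i_3}(\p,\q,K,\sigmac)$ which are by now routine; (c) the extra terms peculiar to this pair, namely $\tfrac12\tr\chib(-\div\eta+\K-\tfrac14\nab^A\phi\nab_A\phi)$ from the renormalized $K$-equation, the scalar-field source $\nab^{i+1}(\O e_3\phi)\nab\phi$, and the singular $\O^{-1}/|u|^2\cdot(\O\tr\chib+\tfrac{2}{|u|})$ coming from the subtraction of the flat Gauss curvature. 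For category (a) I would invoke the elliptic estimates already built in Section \ref{elliptic estimates}, in particular the sharp bounds $\|u^6\nab^5\eta\|_{L^\i_uL^2_{\ub}L^2}\ls \ub^{3/2}a^{3/4}\F[1+\M R(K,\sigmac)]+\ub^{3/2}a^{3/4}a^{-1/4}\M R(\b)$ from Proposition \ref{eta.5.bd} and the bounds on $\nab^5(\O\tr\chib,\O\chibh)$ from Proposition \ref{nabla5chib}; for the $\div\eta$ term I would rewrite $\tr\chib\,\div\eta$ using the schematic identity $\div\eta=-\mu+K-\tfrac{1}{|u|^2}-\tfrac14\nab^A\phi\nab_A\phi$ and bound $\mu$ via \eqref{mu.est} so that its contribution reduces to already-controlled quantities.

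The target upper bound $\ub^{3/2}a^{3/4}(1+\ub^{1/2}a^{1/4}/|u|^{1/2}\O)$ is weaker than the one in Proposition \ref{EE.1} by exactly the factor $\F$, which is needed because of two borderline contributions: the bulk term
$\|u^{i+2}\p\nab^5(\O\tr\chib,\O\chibh)\|_{L^1_{\ub}L^2_uL^2}$, which via Proposition \ref{nabla5chib} yields the $\F$ factor, and the bulk term $\|u^{i+2}\tr\chib\nab^{i+1}\eta\|_{L^1_{\ub}L^2_uL^2}$, which via the improved version of Proposition \ref{eta.5.bd} produces the same factor. All other contributions I expect to lose at least one factor of the smallness $\ub\at/|u|\O^{3/2}\leq B^{-1}$ (cf.\ \eqref{largeness B}) and therefore land below the stated bound. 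The main obstacle, as in Proposition \ref{EE.1}, will be to show that the singular term $\O^{-1}|u|^{-2}(\O\tr\chib+2/|u|)$ is actually integrable: for this I would use the improved estimate $\|\nab^i(\O\tr\chib-(\O\tr\chib)_0)\|$ from Proposition \ref{trchb.bd} together with the identity $(\O\tr\chib)_0=-2/|u|$, which turns the dangerous factor into $\O^{-1}|u|^{-2}(\O\tr\chib-(\O\tr\chib)_0)$, and thereby gains the necessary smallness $\ub^{1/2}a^{1/4}/|u|^{1/2}\O$. Once this is done, summing and taking suprema in $u,\ub$ yields the proposition, and the last sentence then follows because $\nab^i(\nab^A\phi\nab_A\phi)$ is controlled by Proposition \ref{L2 scalar field} and Proposition \ref{prop6.2} with a strictly smaller bound.
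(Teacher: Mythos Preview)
Your overall strategy---commute the three renormalized Bianchi equations, pair them with weights $u^{2(i+2)}$ via Propositions \ref{intbyparts3} and \ref{intbyparts34}, cancel top order by angular integration by parts, and then estimate the inhomogeneities term by term---matches the paper exactly. However, your identification of the borderline terms is off in two places.

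First, the term $\O^{-1}|u|^{-2}\bigl(\O\tr\chib+\tfrac{2}{|u|}\bigr)$ in $H_{2,i}$ is \emph{not} the main obstacle. In the energy identity the relevant quantity is $\|\O u^{i+2}H_{2,i}\|_{L^1_uL^2_{\ub}L^2}$, so the external $\O$ cancels the $\O^{-1}$, and Proposition \ref{trchb.bd} then gives a harmless $\ub^{3/2}a^{3/4}$ bound with no $\O^{-1}$ loss.

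Second, you miss the genuinely critical borderline term in $H_{3,i}$: the piece $\O^{-2}(\O\tr\chib)\nab^{i+1}(\O\tr\chi)\sim |u|^{-1}\O^{-1}\nab^{i+1}(\O\tr\chi)$ coming from $\nab(\tfrac14\tr\chi\tr\chib)$. Here $\tr\chib$ is \emph{not} a $\q$ quantity (only $\O\tr\chib-(\O\tr\chib)_0$ is), so it does not fit under your schematic $\q\nab^5\p$. The paper handles this by invoking the enhanced estimate $\|u^6\nab^5(\O\tr\chi)\|_{L^2_{\ub}L^2}\lesssim\ub^{3/2}a$ from Proposition \ref{chi.5.bd} (the $\tilde{\mathcal O}'_{5,2}$ norm), together with the already-established $\mathcal R(\beta)\lesssim 1$ from Proposition \ref{EE.1}; this is what produces one copy of $\F$. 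The other two $\F$-sources you identify ($\p\nab^5(\tr\chib,\chibh)$ via Proposition \ref{nabla5chib}, and the $\tfrac{1}{|u|}\nab^i\mu$ piece via \eqref{mu.est}) are correct, but note the latter lives in $H_{2,i}$ and should be placed in the $L^1_uL^2_{\ub}$ norm, not $L^1_{\ub}L^2_u$.
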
 

}

\begin{proof}
We first commute the renormalized null Bianchi equations with $i$ angular derivatives and get
\begin{equation}\label{eqn.sigma}
\nab_3\nab^i\sigmac+\f{3+i}{2}\tr\chib\nab^i\sigmac+\div^*\nab^i(\beb+\f12\nab_3\phi\nab\phi)=F_{1,i},
\end{equation}
where $F_{1,i}$ is 
\begin{equation*}
\begin{split}
F_{1,i}=&{\color{black}\nab^{i+1}(e_3\phi)\nab\phi}+{\color{black}\O^{-1}}\sum_{i_1+i_2+i_3=i+1}\nab^{i_1}\q^{i_2+1}\nab^{i_3}\q+\f{{\color{black}\O^{-1}}}{|u|}\sum_{i_1+i_2+i_3=i-1}\nab^{i_1}\q^{i_2+1}\nab^{i_3}\sigmac,
\end{split}
\end{equation*}

\begin{equation}\label{eqn.K}
\nab_3\nab^i(\K{\color{black}-\f14\nab^A\phi \nab_A\phi})-\div \nab^i(\beb+\f12\nab_3\phi\nab\phi)+\f{3+i}{2}\tr\chib\nab^i(\K{\color{black}-\f14\nab^A\phi \nab_A\phi})=F_{2,i}
\end{equation}
with $F_{2,i}$ being
\begin{equation*}
\begin{split}
F_{2,i}=&{\color{black}\nab^{i+1}(e_3\phi)\nab\phi}+\frac{{\color{black}\O^{-1}}}{|u|}\nab^i\mu+{\color{black}\O^{-1}}\sum_{i_1+i_2+i_3=i+1}\nab^{i_1}\q^{i_2+1}\nab^{i_3}\q\\
&+{\color{black}\O^{-1}}\sum_{i_1+i_2+i_3=i}\nab^{i_1}\q^{i_2+1}\nab^{i_3}(\K{\color{black}-\f14\nab^A\phi\nab_A\phi})\\
&+\frac{{\color{black}\O^{-1}}}{|u|}\sum_{i_1+i_2+i_3=i}\nab^{i_1}\q^{i_2+1}\nab^{i_3}\q+\frac{{\color{black}\O^{-1}}}{|u|^2}\sum_{i_1+i_2+i_3=i}\nab^{i_1}\q^{i_2}\nab^{i_3}({\color{black}\O}\trchb-(\O\tr\chib)_0)\\
&+{\color{black}\O^{-1}}\sum_{i_1+i_2+i_3=i-1}\frac 1{|u|}\nab^{i_1}\q^{i_2+1}\nab^{i_3}(\K{\color{black}-\f14\nab^A\phi\nab_A\phi}),
\end{split}
\end{equation*} 

\begin{equation}\label{eqn.betab} 
\nab_4\nab^i(\beb+\f12 \nab_3\phi\nab\phi)-\div \nab^{i}(\K-\f14\nab^A\phi\nab_A\phi)-^*\nab\nab^i\sigmac=F_{3,i},
\end{equation}
where $F_{3,i}$ satisfies
\begin{equation}\label{F3,i}
\begin{split}
F_{3,i}=&\O^{-2}(\O\tr\chib)\nab^{i+1}(\O\tr\chi)+\O^{-2}\cdot\O\chibh\nab^{i+1}(\O\chih,\O\tr\chi)+\O^{-2}\cdot\O\chih\nab^{i+1}(\O\chibh)\\
&+\O^{-2}(\O\tr\chi)\nab^{i+1}(\O\tr\chib)+\O^{-2}\cdot\O\h\o\nab^{i+1}(\O\chibh, 
\O\tr\chib-(\O\tr\chib)_0)+{\color{black}\nab^{i+1}(\nab\phi)\nab\phi}\\
&+{\color{black}\O^{-2}}\sum_{\substack{i_1+i_2+i_3+i_4=i+1\\i_3, i_4\leq i}}\nab^{i_1}\q^{i_2}\nab^{i_3}{\color{black}(\O\chibh, \O\tr\chib-(\O\tr\chib)_0)}\nab^{i_4}(\O\tr\chi, \O\chih, \O\h\o)\\
&+\sum_{i_1+i_2+i_3=i}\nab^{i_1}\q^{i_2+1}\nab^{i_3}(K,\sigmac)
+\frac{{\color{black}\O^{-2}}}{|u|}\sum_{i_1+i_2+i_3=i}\nab^{i_1}\q^{i_2+1}\nab^{i_3}\p\\
&{\color{black}+\sum_{i_1+i_2=i-1}\nab^{i_1+1}\q\nab^{i_2+1}\q+\O^{-2}\sum_{i_1+i_2+i_3=i}\nab^{i_1}\q \nab^{i_2}\q \nab^{i_3}\p}.
\end{split}
\end{equation}

We then applying Proposition \ref{intbyparts34} for $\phi=u^{i+2}\nab^i(\beb+\f12\nab_3\phi\nab\phi)$ and Proposition \ref{intbyparts3} for $\phi=\nab^i(K-\f{1}{|u|^2}-\f14\nab^A\phi\nab_A\phi)$ and $\phi=\nab^i\sigmac$ with $\lambda_1=i+2$ and add these identities together. We then obtain
\begin{equation}\label{main.ee.0}
\begin{split}
&\|u^{i+2}\nab^{i}(\K{\color{black}-\f14\nab^A\phi\nab_A\phi},\sigmac)\|_{L^2_{\ub}L^2(\S)}^2+\|u^{i+2}\nab^{i}(\beb{\color{black}+\f12\nab_3\phi\nab\phi})\|_{L^2_u L^2(\S)}^2\\
\ls &\|\nab^{i}(\K{\color{black}-\f14\nab^A\phi\nab_A\phi},\sigmac)\|_{L^2_{\ub}L^2(S_{0,\ub})}^2+\|u^{2i+4} \nab^i\sigmac {\color{black}\cdot\O} F_{1,i}\|_{L^1_uL^1_{\ub}L^1(\S)}^2\\
&+\|u^{2i+4}\nab^i (\K{\color{black}-\f14\nab^A\phi\nab_A\phi}) {\color{black}\cdot\O}F_{2,i}\|_{L^1_uL^1_{\ub}L^1(\S)}+\|u^{2i+4}\nab^i(\beb{\color{black}+\f12\nab_3\phi\nab\phi}) {\color{black}\cdot\O}F_{3,i}\|_{L^1_{\ub}L^1_uL^1(\S)}.
\end{split}
\end{equation}

We proceed to control terms on the right. For the term involving $F_{3,i}$ we have
\begin{equation*} 
\begin{split}
&\|u^{2i+4}\nab^i(\beb{\color{black}+\f12\nab_3\phi\nab\phi}) {\color{black}\cdot\O}F_{3,i}\|_{L^1_{\ub}L^1_uL^1(\S)}\\
\leq&\|u^{i+2}\nab^i(\beb{\color{black}+\f12\nab_3\phi\nab\phi})\|_{L^{\infty}_{\ub}L^2_uL^2(\S)} \|{\color{black}\O} u^{i+2}F_{3,i}\|_{L^1_{\ub}L^2_uL^2(\S)}.
\end{split}
\end{equation*} 

\noindent In $\|{\color{black}\O} u^{i+2}F_{3,i}\|_{L^1_{\ub}L^2_uL^2(\S)}$, we note that the borderline terms are from the first two lines of \eqref{F3,i}. These borderline terms obey
\begin{equation*}
\begin{split}
\|u^6\nab^5(\O\tr\chi)\tr\chib\|_{L^1_{\ub}L^2_u L^2(\S)}\ls& \|\O^{-1}u^5\nab^5(\O\tr\chi)\|_{L^1_{\ub}L^2_uL^2(\S)}\\
\leq&\|u^6\nab^5(\O\tr\chi)\|_{L^{\i}_u L^2_{\ub}L^2(\S)}\cdot\|\f{1}{|u|\O}\|_{L^2_u L^2_{\ub} L^2(\S)}\\
\leq&\ub^{\f32}a\t O(\tr\chi)\cdot\f{\ub^{\f12}}{|u|^{\f12}\O}\leq \ub^{\f32}a^{\f34}\cdot\f{\ub^{\f12}\af}{|u|^{\f12}\O}\M R(\b)\ls\ub^{\f32}a^{\f34}\cdot\f{\ub^{\f12}\af}{|u|^{\f12}\O},
\end{split}
\end{equation*}

\begin{equation*}
\begin{split}
&\|u^6 \nab^5(\tr\chib, \chibh) (\O\tr\chi, \O\chih, \O\h\o)\|_{L^1_{\ub}L^2_uL^2(\S)}+\|u^6 \nab^5(\O\chih, \O\tr\chi)\chibh\|_{L^1_{\ub}L^2_uL^2(\S)}\\
\leq&\|u^5\nab^5(\tr\chib, \chibh)\|_{L^{\infty}_{\ub}L^2_{u}L^2(\S)}\cdot\|u\p\|_{L^1_{\ub}L^{\infty}_u L^{\i}(\S)}\\
&+\|u^5\nab^5(\O\chih, \O\tr\chi)\|_{L^{\infty}_{u}L^2_{\ub}L^{2}(\S)}\cdot\|u\O^{-1}(\O\chibh)\|_{L^2_{\ub}L^{2}_u L^{\infty}(\S)}\\
\leq&\ub^{\f12}\af(1+\f{\ub^{\f12}\af}{|u|^{\f12}\O})\M R(\b)\cdot\ub\at+\ub^{\f12}\at\M R(\b)\cdot\f{\ub\at}{|u|\O}\ub^{\f12}|u|^{\f12}\\
\leq&\ub^{\f32}a^{\f34}\bigg(1+\f{\ub^{\f12}\af}{|u|^{\f12}\O} \bigg)\M R(\b)\ls \ub^{\f32}a^{\f34}\bigg(1+\f{\ub^{\f12}\af}{|u|^{\f12}\O} \bigg).
\end{split}
\end{equation*}

Here we employ Proposition \eqref{chi.5.bd}, Proposition \eqref{nabla5chib} and Proposition \eqref{EE.1}. And to control $\|\O\chibh\|_{L^{\i}(\S)}$, we utilize
\begin{equation*}
\begin{split}
(\O\nab_4)(\O\chibh)_{AB}+\f12\O\tr\chi(\O\chibh)_{AB}=&\O^2(\nab\widehat{\otimes}\etb)_{AB}-\f12\O\tr\chib(\O\chih)_{AB}+\O^2(\etb\widehat{\otimes}\etb)_{AB}\\
&+\O^2\nab_A\phi\nab_B\phi-\f12\O^2g_{AB}\nab^C\phi\nab_C\phi.
\end{split}
\end{equation*}
Applying Proposition \eqref{chih.bd}, it holds
$$\|\O\chibh\|_{L^{\infty}(\S)}\leq \f{\ub}{|u|}\|\O\chih\|_{L^{\i}(\S)}+\f{\ub}{|u|}\cdot\f{\ub\at(O+O^2)}{|u|^2}\leq\f{\ub\at}{|u|}.$$

Similarly, we bound the rest terms in $\|\O u^{i+2}F_{3,i}\|_{L^1_{\ub}L^2_u L^2(\S)}$ and they obey 
\begin{equation*}
\begin{split}
\|\O u^{i+2}&\nab^{i+1}\nab\phi \nab\phi\|_{L^1_{\ub}L^2_uL^2(\S)}\leq\|u^{i+1}\nab^{i+1}\nab\phi\|_{L^{\infty}_{u}L^2_{\ub}L^2(\S)} \|u\O\nab\phi\|_{L^{2}_{u}L^2_{\ub}L^{\infty}(\S)}\\
\leq&\ub^{\f12}\af\F\M S\cdot\f{\ub\at}{|u|}|u|^{\f12}\ub^{\f12}O=\ub^{\f32}a^{\f34}\F\f{\ub^{\f12}}{|u|^{\f12}} \M S O \leq \ub^{\f32}a^{\f34}, 
\end{split}
\end{equation*}
\begin{equation*}
\begin{split}
&\|\O^{-2}u^{i+2}\sum_{\substack{i_1+i_2+i_3+i_4=i+1\\i_3, i_4\leq i}}\nab^{i_1}\q^{i_2}\nab^{i_3}(\O\chibh, \O\tr\chib-(\O\tr\chib)_0)\nab^{i_4}(\O\tr\chi, \O\chih, \O\h\o)\|_{L^1_{\ub}L^2_u L^2(\S)}\\
\leq& \ub^{\f32}a^{\f34}\F \M R(\b)\leq \ub^{\f32}a^{\f34}\F,
\end{split}
\end{equation*}

\begin{equation*}
\begin{split}
&\|\O^{-2}u^{i+1}\sum_{i_1+i_2+i_3=i}\nab^{i_1}\q^{i_2+1}\nab^{i_3}\p\|_{L^1_{\ub}L^2_u L^2(\S)}\\
\leq&\O^{-2}|u|\cdot\f{\ub\at}{|u|^2}\f{\at}{|u|}\cdot\f{\ub\at}{|u|}O^2\cdot\ub|u|^{\f32}\leq \ub^{\f32}a^{\f34}\cdot \f{\ub^{\f32}a^{\f34}O^2}{|u|^{\f32}\O^2}\leq \ub^{\f32}a^{\f34},
\end{split}
\end{equation*}
\begin{equation*}
\begin{split}
&\|u^{i+2}\sum_{i_1+i_2+i_3=i}\nab^{i_1}\q^{i_2+1}\nab^{i_3}(K, \sigmac) \|_{L^1_{\ub}L^2_u L^2(\S)}\\
\leq& \sum_{i_3\leq 4}\|\O u^{i_3+1}\nab^{i_3}(K-\f{1}{|u|^2}, \sigmac)\|_{L^{\i}_{\ub}L^2_u L^2(\S)}\cdot \sum_{i_2\leq 2}\|\O^{-1}u^{i_1+i_2+1}\nab^{i_1}\q^{i_2+1}\|_{L^1_{\ub}L^{\i}_u L^{\i}(\S)}\\
&+\sum_{i_3\leq 2}\|u^{i_3+2}\nab^{i_3}(K, \sigmac)\|_{L^{\i}_{\ub}L^{\i}_u L^{\i}(\S)}\cdot\sum_{i_1\leq 4}\|u^{i_1+i_2}\nab^{i_1}\q^{i_2+1}\|_{L^1_{\ub}L^2_u L^2(\S)}\\
\leq&\ub^{\f12}\at \M R\cdot\f{\ub\at O}{|u|\O}\cdot\ub+\f{\ub\at O}{|u|^2}\cdot\ub|u|^{\f32}\leq \ub^{\f32}a^{\f34}\cdot\f{\ub\af \M R O}{|u|\O}+\ub^{\f32}\at\cdot\f{\ub^{\f12}O}{|u|^{\f12}}\leq \ub^{\f32}a^{\f34},
\end{split}
\end{equation*}
\begin{equation*}
\begin{split}
\|\O u^{i+2} \sum_{i_1+i_2=i-1}\nab^{i_1+1}\q\nab^{i_2+1}\q  \|_{L^{1}_{\ub}L^2_{u}L^2(\S)}\leq\f{\ub^3 a O^2}{|u|^{\f32}}\leq \ub^{\f32}a^{\f34},
\end{split}
\end{equation*}
\begin{equation*}
\begin{split}
\|\O^{-1}u^{i+2}\cdot \sum_{i_1+i_2+i_3=i}\nab^{i_1}\q\nab^{i_2}\q\nab^{i_3}\p  \|_{L^{1}_{\ub}L^2_{u}L^2(\S)}\leq\ub^{\f32}a^{\f34}\cdot\f{\ub^{\f32}a^{\f34}O^3}{|u|^{\f32}\O}\leq \ub^{\f32}a^{\f34}.
\end{split}
\end{equation*}

In the similar manner, we also control terms in $F_{2,i}$ and $F_{1,i}$. In $\|\O u^{i+2}F_{2,i}\|_{L^1_u L^2_{\ub}L^2(\S)}$ and $\|\O u^{i+2}F_{1,i}\|_{L^1_u L^2_{\ub}L^2(\S)}$, the top-order derivatives appear in the next four terms. The first term satisfies
\begin{equation*}
\begin{split}
&\|u^{i+2}\nab^{i+1}(\O e_3\phi)\nab\phi\|_{L^1_uL^2_{\ub}L^2(\S)}\\
\leq&\|\O^{-1}u^{i+1}\nab^{i+1}(\O e_3\phi)\|_{L^{\infty}_{\ub}L^{2}_{u}L^{2}(\S)}\|u\O\nab\phi\|_{L^{2}_{u}L^2_{\ub}L^{\infty}(\S)}\\
\leq&\ub^{\f12}\af\F\M S\cdot\f{\ub\at}{|u|}|u|^{\f12}\ub^{\f12}\leq\ub^{\f32}a^{\f34}\F\f{\ub^{\f12}}{|u|^{\f12}}\M S\leq \ub^{\f32}a^{\f34}.
\end{split}
\end{equation*}

\noindent The second term obeys
\begin{equation*}
\begin{split}
&\|u^{i+1}\nab^i\mu\|_{L^1_u L^2_{\ub}L^2(\S)}\\
\leq&\|u^{i+\f52}\nab^i\mu\|_{L^{\i}_u L^{\i}_{\ub} L^2(\S)}\|u^{-\f32}\|_{L^1_u L^2_{\ub}L^{\i}(\S)}\leq\ub a^{\f34}\cdot\f{\ub^{\f12}\af}{\O}\cdot\f{\ub^{\f12}}{|u|^{\f12}}=\ub^{\f32}a^{\f34}\cdot\f{\ub^{\f12}\af}{|u|^{\f12}\O}
\end{split}
\end{equation*}
where Proposition \eqref{eta.5.bd} is used. For the third term, we have
\begin{equation}\label{F1.1} 
\begin{split}
&\sum_{i\leq 4}\|u^{i+2}\sum_{i_1+i_2+i_3=i+1}\nab^{i_1}\q^{i_2+1}\nab^{i_3}\q\|_{L^1_uL^2_{\ub}L^2(\S)}\\
\ls &\sum_{\substack{i_1+i_2\leq 5\\ i_1\leq 2}}\|u^{i_1+i_2+2}\nab^{i_1}\q^{i_2+1} \|_{L^\infty_u L^\infty_{\ub} L^\infty(\S)}\cdot\bigg(\ub^{\f12}\sum_{i_3\leq 4}\|u^{i_3+1}\nab^{i_3}\q \|_{L^\infty_uL^\infty_{\ub}L^2(\S)}\|u^{-2} \|_{L^1_u}\\
&+\ub^{\f12}\|u^{i+1}\nab^{i+1}(\O\trchb,\O\chibh)\|_{L^\infty_{\ub}L^2_uL^2(\S)}\|u^{-1}\|_{L^2_u}+\|u^{i+2}\nab^{i+1}{\color{black}\eta}\|_{L^\infty_u L^2_{\ub}L^2(\S)}\|u^{-2}\|_{L^1_u}\bigg)\\
\ls &\ub\at O\cdot\bigg(\frac{\ub^{\f32}\at O}{|u|}+\f{\ub^{\f12}}{|u|^{\f12}}\ub^{\f12}\af\F\bigg)+\ub\at O\cdot\f{\ub^{\f32}a^{\f34}\t O}{|u|}\leq \ub^{\f32}a^{\f34},
\end{split}
\end{equation}
where for the first inequality of \eqref{F1.1} we use the top-order terms are of form $\q\nab^{i+1}(\O\chibh, \O\tr\chib)$ and $\q\nab^{i+1}\eta$. And for the fourth term, it holds
\begin{equation*}
\begin{split}
&\|\O^{-1}u^{i+2}\sum_{i_1+i_2+i_3=i}\nab^{i_1}\q^{i_2+1}\nab^{i_3}(\Kt)\|_{L^1_u L^2_{\ub}L^2(\S)}\\
\leq&\f{\ub\at O}{|u|\O}\cdot\ub^{\f32}a^{\f34}\cdot\F \M R\leq \ub^{\f32}a^{\f34}.
\end{split}
\end{equation*}

\bigskip

For the rest terms in $\|\O u^{i+2}F_{2,i}\|_{L^1_u L^2_{\ub}L^2(\S)}$ and $\|\O u^{i+2}F_{1,i}\|_{L^1_u L^2_{\ub}L^2(\S)}$, we have
\begin{equation*}
\begin{split}
\|u^{i+1}\sum_{i_1+i_2+i_3=i}\nab^{i_1}\q^{i_2+1}\nab^{i_3}\q\|_{L^1_u L^2_{\ub}L^2(\S)}\leq |u|\cdot\f{\ub\at}{|u|^2}\cdot\f{\ub\at O^2}{|u|^2}|u|^2\ub^{\f12}=\ub^{\f32}a^{\f34}\cdot\f{\ub\af O^2}{|u|}\leq \ub^{\f32}a^{\f34},
\end{split}
\end{equation*}
\begin{equation*}
\begin{split}
&\|\O^{-1}u^{i+1}\sum_{i_1+i_2+i_3=i-1}\nab^{i_1}\q^{i_2}\nab^{i_3}(\Kt, \sigmac)\|_{L^1_u L^2_{\ub}L^2(\S)}\\
\leq&\O^{-1}|u|^2\cdot\f{\ub\at}{|u|^2}\cdot\f{\ub\at O^2}{|u|^3}|u|^2\ub^{\f12}\leq \ub^{\f32}a^{\f34}\cdot\f{\ub\af}{|u|\O}\leq \ub^{\f32}a^{\f34},
\end{split}
\end{equation*}
\begin{equation*}
\begin{split}
&\|u^{i}\sum_{i_1+i_2+i_3=i}\nab^{i_1}\q^{i_2}\nab^{i_3}(\O\tr\chib-(\O\tr\chib)_0)\|_{L^1_u L^2_{\ub}L^2(\S)}\\
\leq&\|u^{i+\f32}\sum_{i_1+i_2+i_3=i}\nab^{i_1}\q^{i_2}\nab^{i_3}(\O\tr\chib-(\O\tr\chib)_0)\|_{L^{\i}_u L^{\i}_{\ub} L^2(\S)}\|u^{-\f32}\|_{L^1_u L^2_{\ub}L^{\i}(\S)}\leq\ub^{\f32} a^{\f34}\cdot\f{\ub^{\f12}}{|u|^{\f12}}\leq \ub^{\f32}a^{\f34},
\end{split}
\end{equation*}
where Proposition \eqref{trchb.bd} is employed. 

Gathering the above estimates, together with Proposition \eqref{EE.1} we then obtain
\begin{equation*}
\begin{split}
&\sum_{i\leq 4}\bigg(\|u^{i+2}\nab^{i}(\K,\sigmac)\|_{L^2_{\ub}L^2(\S)}+\|u^{i+2}\nab^{i}\beb\|_{L^2_u L^2(\S)}\bigg)\\
\ls& \ub^{\f32}a^{\f34}{\color{black}\F(1+\mathcal{R}[\b])}\ls  \ub^{\f32}a^{\f34}\F.
\end{split}
\end{equation*}

\end{proof}

Combining Propositions \ref{EE.2}, \ref{EE.1}, \ref{EE.lower}, we thus conclude
\begin{equation}\label{R.final}
\mathcal R\ls 1.
\end{equation}
Substituting this bound \eqref{R.final} into Proposition \ref{O52.bd}, we furthermore acquire
\begin{equation}\label{O52.final}
\tilde{\mathcal O}_{5,2}\ls 1.
\end{equation} 
This proves the hyperbolic part of Theorem \ref{main thm}. \\

We further derive an additional estimate for $\K-\f14 \nab_A\phi \nab^A\phi$. 
\begin{proposition}\label{EE.lower}
Under the assumptions of Theorem \ref{main thm} and the bootstrap assumptions \eqref{BA.1}, \eqref{BA.2}, \eqref{BA.3},\eqref{BA.4}, it holds
\begin{equation*}
\begin{split}
\|u^{2}(\K-\f14 \nab_A\phi \nab^A\phi)\|_{L^{\i}_uL^{2}_{\ub}L^2(S)}
\ls & \ub^{\f32} a^{\f34}.
\end{split}
\end{equation*}
\end{proposition}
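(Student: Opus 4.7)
The plan is to replay the energy-estimate scheme of Proposition \ref{EE.2} at the single index $i=0$, using the three coupled renormalized null Bianchi equations for $\O(K-\f{1}{|u|^2}-\f14\nab^A\phi\nab_A\phi)$, $\O\sigmac$ and $\O(\beb+\f12\nab_3\phi\nab\phi)$ \emph{without} any angular commutation. I would apply Proposition \ref{intbyparts34} to $\phi=u^{2}\O(K-\f{1}{|u|^2}-\f14\nab^A\phi\nab_A\phi)$ and to $\phi=u^{2}\O\sigmac$ on the $e_4$ side, and Proposition \ref{intbyparts3} to $\phi=\beb+\f12\nab_3\phi\nab\phi$ on the $e_3$ side with $\lambda_0=\tfrac32$ (so $\lambda_1=2$), then use Proposition \ref{intbypartssph} to cancel the top curvature derivatives paired through $\div$, $\div^*$ and $\nab$. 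Because Christodoulou's initial data enforce $K(u,0)=\f{1}{|u|^2}$, $\nab_A\phi(u,0)=0$ and $\sigmac(u,0)=0$ along $\Hb_0$, the initial-data boundary term on $\Hb_0$ vanishes identically, so the resulting identity reads schematically
\begin{equation*}
\|u^{2}(\K-\tfrac14\nab^A\phi\nab_A\phi,\sigmac)\|^2_{L^2_{\ub}L^2(\S)}+\|u^{2}(\beb+\tfrac12\nab_3\phi\nab\phi)\|^2_{L^2_u L^2(\S)}\lesssim \mathcal E_0+\mathcal E_3,
\end{equation*}
where $\mathcal E_0,\mathcal E_3$ collect the nonlinear source integrals arising from $F_{1,0},F_{2,0},F_{3,0}$ (in the notation of Proposition \ref{EE.2}).

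The next step is to bound each source integral. All contributions in $F_{1,0}$, $F_{2,0}$, $F_{3,0}$ are products of a single low-order curvature/Ricci/scalar factor with one of: $\mathcal O_{1,2}$-controlled quantities (including the improved bound $\|u\nab(\O\trch)\|_{L^2(\S)}\ls\ub a O^{2}/|u|$ from Proposition \ref{trch.bd}), the already established $\mathcal S$-bounds for $\phi$ and $\O\nabla_4\phi$ from Section \ref{scalar field} and Proposition \ref{Omegae4phi energy}, and the renormalized mass aspect $\mu,\mub$ at the $L^2(\S)$ level (which at this order are controlled by $\|u^{2}\nab\eta\|_{L^2(\S)}+\|u^{2}(\K-\tfrac14|\nab\phi|^2)\|_{L^2(\S)}$ via their definition, allowing a Grönwall absorption into the left-hand side). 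Using the Sobolev embedding of Proposition \ref{Sobolev} and Lemma \ref{product}, each source is bounded by $\ub^{3/2}a^{3/4}\times(\text{small})$; in particular the quadratic scalar-field nonlinearities $\O\nab(\O e_3\phi)\nab\phi$ and $\O\nab(\O e_4\phi)\nab\phi$ are estimated exactly as in the $i=1$ analysis of Proposition \ref{EE.2}.

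The main obstacle, as in Proposition \ref{EE.2}, is the pair of near-top terms in $F_{3,0}$, namely $\O^{-2}(\O\tr\chib)\nab(\O\tr\chi)$ and $\O^{-2}(\O\chibh)\nab(\O\chi)$, paired against $\beb$. For these I would place $\nab(\O\trch)$ and $\nab(\O\chih)$ in $L^{\infty}_u L^2_{\ub}L^2(\S)$ using Propositions \ref{trch.bd}--\ref{chih.bd}, place $\O\tr\chib,\O\chibh$ in $L^2_{\ub}L^{\infty}_u L^{\infty}(\S)$ via $\|\O\chibh\|_{L^{\infty}(\S)}\ls \ub\at/|u|$ (transport of $\O\chibh$), and place the remaining factor $u^{2}\beb$ in $L^{\infty}_{\ub}L^2_u L^2(\S)$, which is the output quantity we are solving for. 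A Cauchy–Schwarz together with the smallness $\ub^{1/2}\at/(|u|^{1/2}\O)\ll1$ permits these borderline contributions to be absorbed into the left-hand side after choosing the absorption constant small.

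Crucially, no $\nab^{5}$ Ricci coefficient appears at $i=0$, so the factor $(1+\ub^{1/2}\af/(|u|^{1/2}\O))$ that contaminated the $i\geq1$ estimate in Proposition \ref{EE.2} is absent: the only place that factor entered was through the estimate $\|u^{6}\nab^{5}\eta\|_{L^2_{\ub}L^2(\S)}$ of Proposition \ref{eta.5.bd}, which does not occur here. Collecting everything and absorbing the borderline pieces into the left, I would conclude $\|u^{2}(\K-\tfrac14\nab^A\phi\nab_A\phi)\|_{L^{\infty}_u L^2_{\ub}L^2(\S)}\lesssim \ub^{3/2}a^{3/4}$, as claimed.
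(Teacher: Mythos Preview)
Your approach would eventually work, but it is a dramatically longer route than the paper's. The paper's entire proof is two lines: Proposition~\ref{K.bd} already gives the pointwise-in-$\ub$ bound
\[
\|u^{2}(K-\tfrac{1}{|u|^2}-\tfrac14\nab^A\phi\nab_A\phi)\|_{L^\infty_u L^\infty_{\ub}L^2(\S)}\lesssim \ub\,a^{1/2},
\]
and taking $L^2$ in $\ub$ over an interval of length $\ub$ picks up an extra factor $\ub^{1/2}$, giving $\ub^{3/2}a^{1/2}\le \ub^{3/2}a^{3/4}$. No Bianchi pairing, no integration by parts, no borderline absorption is needed. The point is that Proposition~\ref{K.bd} was obtained directly from the $\nabla_4$ transport equation for $K-\tfrac{1}{|u|^2}-\tfrac14|\nab\phi|^2$ (it has the same structure as $\nabla_4\nab\underline\psi$), so the $L^\infty_{\ub}$ bound is already available before any curvature energy estimate is run.

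Your proposal also contains some setup confusions that would need to be repaired before it could be carried out. In the second pair used in Proposition~\ref{EE.2}, the equations for $K-\tfrac{1}{|u|^2}-\tfrac14|\nab\phi|^2$ and $\sigmac$ are $\nabla_3$-equations and the equation for $\beb+\tfrac12\nabla_3\phi\nab\phi$ is a $\nabla_4$-equation; hence one applies Proposition~\ref{intbyparts3} to $K,\sigmac$ (producing the $L^2_{\ub}$ flux with initial data on $H_{-1}$) and Proposition~\ref{intbyparts34} to $\beb$ (producing the $L^2_u$ flux with initial data on $\Hb_0$), not the other way around as you wrote. Consequently the initial data for $K,\sigmac$ lives on $H_{-1}$, where it is nonzero (it is the perturbed data, controlled by $\mathcal I^{(0)}$), while it is the $\beb$ data on $\Hb_0$ that vanishes by spherical symmetry. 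Also, the second pair carries no $\Omega$ weight on $K,\sigmac,\beb$; the $\Omega$-weighted equations belong to the first pair in Proposition~\ref{EE.1}. None of these issues is fatal, but fixing them and then bounding all the $F_{1,0},F_{2,0},F_{3,0}$ source integrals is a lot of work that the two-line argument via Proposition~\ref{K.bd} avoids entirely.
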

\begin{proof}
Applying Proposition \ref{K.bd} and the bootstrap assumption \eqref{BA.3}, we get
$$\|u^{2}(\K-\f14 \nab_A\phi \nab^A\phi)\|_{L^{\i}_uL^{2}_{\ub}L^2(S)}\ls \ub^{\f12}\cdot\ub\at=\ub^{\f32}a^{\f34}.$$
\end{proof}

The above sections finish the hyperbolic estimates. Together with a standard local existence and extension argument, we have obtained a desired spacetime existence region. We now move to the elliptic part and construct the apparent horizon within this obtained existence regime.  

\section{Subsolutions and Supersolutions for the MOTS}\label{sec-Barriers}

Along each fixed $\Hb_{\ub}$, we consider a $2$-sphere $(u, \ub, \o)$ on it with $\o\in\mathbb{S}^2$ and $u$ satisfying $u=-R(\o,\ub)$. Considering $R=R(\o,\ub)$ as a function on $\mathbb S^2$, we define 
\begin{equation}\label{eq-expression-cal-G-ub}
\begin{split}
\mathcal G(R,\ub)&:=\D_{g} R+\f12 \O \tr_g\chib |\nab_g R|^2-\f12\O^{-1} \tr_g\chi\\
&\qquad+4\O \omb |\nab_g R|^2+2\Omega\chibh(\nab_g R, \nab_g R)+2g(\eta, \nab_g R),
\end{split}
\end{equation}
where $g$ is the induced metric on the 2-sphere $M=\{u=-R\}$ along $\Hb_{\ub}$. 

When there is no confusion, we suppress $\ub$ and write $R=R(\o)$. Following \cite{AH}, along $\Hb_{\ub}$ it holds that   
$$u=-R(\o) \text{ is a} 
\begin{cases}
\text{MOTS} & \text{if $\mathcal G(R,\ub)=0$ on }\mathbb{S}^2 ,\\
\text{trapped surface} & \text{if $\mathcal G(R,\ub)>0$ on }\mathbb{S}^2,\\
\text{untrapped surface} & \text{if $\mathcal G(R,\ub)<0$ on }\mathbb{S}^2.
\end{cases}$$

\begin{figure}[h]
\begin{center}
\begin{tikzpicture}[scale=0.8]
\draw [white](-1, 0)-- node[midway, sloped, below,black]{$\Hb_{\ub}$}(-2.3, 0);
\draw [white](-0.1, -2.1)-- node[midway, sloped, below,black]{$M_{\ub}(\o)$}(0.1, -2.1);
\draw [white](-0.1, -3.8)-- node[midway, sloped, below,black]{$\o\in\mathbb{S}^2$}(0.1, -3.8);
\draw [white](1, 0.5)-- node[midway, sloped, below,black]{$$}(3, 0.5);
\draw [white](2.4, -1.5)-- node[midway, sloped, below,black]{$u=-R(\ub, \o)$}(3.9, -1.5);
\draw [white](2.2, -3.7)-- node[midway, sloped, below,black]{$u=-1$}(3.7, -3.7);
\draw (0,0) ellipse (1cm and 0.3cm);
\draw (0,-4) ellipse (2cm and 0.6cm);
\draw[thin] (-1,0)--(-2,-4);
\draw[thin] (1,0)--(2,-4);
\draw [thick] (-1.375,-1.5) to [out=10,in=150] (1.5,-2);
\draw [thick] (1.5,-2) to [out=190,in=-35] (-1.375,-1.5);
\draw[fill] (1.5,-2) circle [radius=0.05];
\end{tikzpicture}
\end{center}
\caption{}
	\label{Figure MOTS}
\end{figure} 

A remaining main goal of this article is to solve $R=R(\o, \ub)$, which satisfies $\mathcal G(R,\ub)=0$ for each $\ub$. In later sections, we will also study the regularity of $R=R(\o, \ub)$ in terms of $\ub$. 

To solve for $R=R(\o, \ub)$, we first construct the anisotropic sub- and super- solutions to $\M G(R, \ub)=0$. With $u=-R(\o, \ub)$, on $\S$ we define $\gamma$ and $\phi$ via
$$g=R^2\gamma \mbox{ and } R=(\ub a)^{1-\mu^*}e^{-\phi}.$$  

{\color{black}
\begin{remark}\label{phi def with perturbed data}
For perturbed Christodoulou's initial data prescribed along $\ub=0$, we can define the corresponding $\phi$ via letting
$$R\cdot\O(-R, 0)^2=\ub a e^{-\f{\phi}{1-\ms}}.$$ 
\end{remark}
\begin{remark}
Compared with \cite{AH} by An-Han, there $\ms=0$ and $a\gg 1$. In this paper, we have $0<\ms<1$ and we can set $a=1$. {\color{black}Two major observations in below elliptic arguments are that i) owing to Christodoulou's initial data, a potentially dangersome term is with a favored sign and hence the approach to construct the multi- and single-valley anisotropic trapped surface as in \cite{AH} and \cite{KLR} still works. ii) even though the borderline quadratic nonlinear term with gradient is no longer negligible as $1/a$ fails to be the small parameter, the method of deriving apriori estimates via Moser's iteration developed in \cite{AH} by An-Han is robust enough to be extended to this new setting. Besides these observations, a novel ingredient in the step of linearization is a new ansatz for the to-be-solved solution and the use of it plays a vital role.}
\end{remark} 
}

{\color{black}In below, we provide the details.} Note that on $\S$ it holds
$${\color{black}\Delta_g\phi=R^{-2}\Delta_{\gamma}\phi}.$$ 
And equation $\mathcal G(R,\ub)=0$ can be rewritten as
\begin{equation}\label{G R ub=0 v2}
\begin{split}
0=\Delta_{\gamma}\phi&-|\nab_{\gamma}\phi|^2-4\O\omb R|\nab_{\gamma}\phi|^2+2\gamma^{ij}\eta_i \partial_j\phi+{\color{black}2R^{-1}\O\chibh_{kl}\gamma^{ik}\gamma^{jl}\nab_i\phi\nab_j\phi}\\
&-\f{R}{2}\O\tr_g\chib|\nab_{\gamma}\phi|^2+\f{R}{2}\O^{-1}\tr_g\chi.
\end{split}
\end{equation}
We point out that the main contribution in $\O^{-1}\tr_g\chi$ is
$$\f{2}{R}\cdot(1-\ms)-\f{\ub a f(\ub,\o)}{R^2\O^2_0(R)}\cdot (1-\ms)$$
and $\O_0^2(R):=\O^2(-R, 0)=R^{\f{\ms}{1-\ms}}$. We then write $\f12 R\O^{-1}\tr_g\chi$ as 
$$\f12 R\O^{-1}\tr_g\chi=1-\ms-\f{\ub a f(\ub,\o)\cdot(1-\ms)}{2R\O^2_0(R)}+[\f12 R\O^{-1}\tr_g\chi-(1-\ms)+\f{\ub a f(\ub,\o)\cdot(1-\ms)}{2R\O^2_0(R)}].$$  

Together with the fact
\begin{equation}\label{ROmega2}
R\O_0^2(R)=R\cdot R^{\f{\ms}{1-\ms}}=R^{\f{1}{1-\ms}}=[(\ub a)^{1-\ms}e^{-\phi}]^{\f{1}{1-\ms}}=\ub a e^{-\f{\phi}{1-\ms}},
\end{equation}
we can rewrite $\M G(R,\ub)=0$ as
\begin{equation*}
\begin{split}
0=\Delta_{\gamma}\phi&+(1-\ms)-\f12 f(\ub,\o)(1-\mu^*)e^{\frac{\phi}{1-\mu^*}}-4\O\omb R|\nab_{\gamma}\phi|^2\\
&-(1+\f{R}{2}\O\tr_g\chib)|\nab_\gamma \phi|^2+2\gamma^{ij}\eta_i\partial_j \phi+{\color{black}2R^{-1}\O\chibh_{kl}\gamma^{ik}\gamma^{jl}\nab_i\phi\nab_j\phi}\\
&+[\f12 R\O^{-1}\tr_g\chi-(1-\ms)+\f12f(\ub,\o)(1-\ms)e^{\f{\phi}{1-\ms}}]. 
\end{split}
\end{equation*}
Multiplying $1/(1-\ms)$ on both sides of the equation and setting
$$\tp:=\f{\phi}{1-\mu^*},$$
we then transfer $\M G(R,\ub)=0$ into 

\begin{equation}\label{S=0}
\begin{split}
0=S(\tp,\ub)=&\Delta_{\gamma}\tp+1-\f12 f(\ub,\o)e^{\tp}-{4(1-\ms)\O\omb R}|\nab_\gamma \tp|^2\\
&-(1+\f{R}{2}\O\tr_g\chib)(1-\ms)|\nab_{\gamma}\tp|^2+2\gamma^{ij}\eta_i\partial_j\tp\\
&+{\color{black}2(1-\ms)R^{-1}\O\chibh_{kl}\gamma^{ik}\gamma^{jl}\nab_i\tp\nab_j\tp}{\color{black}+[\f{R}{2(1-\ms)} \O^{-1}\tr_g\chi-1+\f12 f(\ub,\o)e^{\tp}]}.
\end{split}
\end{equation}
And we denote
\begin{equation}\label{Ftp}
\begin{split}
F(\tp):=&-(1+\f{R}{2}\O\tr_g\chib)(1-\ms)|\nab_{\gamma}\tp|^2+2\gamma^{ij}\eta_i\partial_j\tp\\
&+{\color{black}2(1-\ms)R^{-1}\O\chibh_{kl}\gamma^{ik}\gamma^{jl}\nab_i\tp\nab_j\tp}{\color{black}+[\f{R}{2(1-\ms)} \O^{-1}\tr_g\chi-1+\f12 f(\ub,\o)e^{\tp}]}.
\end{split}
\end{equation}

Applying $g=R^2\gamma$ and the estimates in Section \ref{metric}, we have that $\gamma$ satisfies 
\begin{align}
\label{Lambda-gamma-1}
\f12\leq \sqrt{|\gamma|(u, \o)}\leq \f32, \quad &\Lambda^{-1}I\leq \gamma_{ij}(u, \o)\leq \Lambda I,\\
\label{Lambda-gamma-2}|D_{\o}\gamma_{ij}(u, \o)|+&R|\partial_{u}\gamma_{ij}(u, \o)|\leq \Lambda,\\
\label{Lambda-gamma-3} |D^2_{\o}\gamma_{ij}(u, \o)|+R|D_{\o}\partial_{u}&\gamma_{ij}(u, \o)|+R^2|\partial^2_{u}\gamma_{ij}(u, \o)|\leq \Lambda\end{align}
with $\Lambda$ being a uniform positive constant.

To proceed, we use the below lemma in \cite{AH}:  

\begin{lemma}\label{lemma-variation-Laplacians} (Lemma 3.5 in \cite{AH}) Let $\{\gamma(s)\}$ be a family of metrics and $\{\phi(s)\}$ be a family of functions, both parametrized by $s$. 
Then, 
\begin{equation}\label{eq-variation-Laplace-v1}
(\Delta_{\gamma}\phi)^{\boldsymbol\cdot}=\Delta_\gamma\dot{\phi}-\gamma^{ij}\gamma^{kl}\dot{\gamma}_{jl}\nabla^2_{ik}\phi
-\gamma^{ij}\gamma^{kl}\nabla_i\dot{\gamma}_{jl}\nabla_k\phi+\frac12\gamma^{ij}\gamma^{kl}\nabla_l\dot{\gamma}_{ij}\nabla_k\phi,
\end{equation}
where we denote $\dot{\,}=\f{d\,}{ds}$. 
\end{lemma}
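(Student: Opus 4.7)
\textbf{Proof proposal for Lemma \ref{lemma-variation-Laplacians}.} The plan is to compute the $s$-derivative of $\Delta_\gamma\phi$ directly from the coordinate expression
\begin{equation*}
\Delta_\gamma\phi=\gamma^{ij}\nabla^2_{ij}\phi=\gamma^{ij}\bigl(\partial_i\partial_j\phi-\Gamma^k_{ij}\partial_k\phi\bigr),
\end{equation*}
using only two standard variational identities: the variation of the inverse metric,
\begin{equation*}
(\gamma^{ij})^{\boldsymbol\cdot}=-\gamma^{ik}\gamma^{jl}\dot\gamma_{kl},
\end{equation*}
and the variation of the Christoffel symbols,
\begin{equation*}
\dot\Gamma^k_{ij}=\tfrac12\gamma^{kl}\bigl(\nabla_i\dot\gamma_{jl}+\nabla_j\dot\gamma_{il}-\nabla_l\dot\gamma_{ij}\bigr).
\end{equation*}
The latter is the standard formula obtained by working in geodesic normal coordinates at a point, where $\Gamma^k_{ij}=0$ but its derivative $\dot\Gamma^k_{ij}$ is a tensor and hence the coordinate computation gives a covariantly-valid expression.

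I would then apply the Leibniz rule in two steps. First,
\begin{equation*}
(\Delta_\gamma\phi)^{\boldsymbol\cdot}=(\gamma^{ij})^{\boldsymbol\cdot}\nabla^2_{ij}\phi+\gamma^{ij}(\nabla^2_{ij}\phi)^{\boldsymbol\cdot},
\end{equation*}
and on the second term I use $(\nabla^2_{ij}\phi)^{\boldsymbol\cdot}=\nabla^2_{ij}\dot\phi-\dot\Gamma^k_{ij}\partial_k\phi$, which follows because $\partial_i\partial_j\phi$ depends on $s$ only through $\phi$ while $\Gamma^k_{ij}$ depends on $s$ only through $\gamma$. Combining gives
\begin{equation*}
(\Delta_\gamma\phi)^{\boldsymbol\cdot}=\Delta_\gamma\dot\phi-\gamma^{ik}\gamma^{jl}\dot\gamma_{kl}\nabla^2_{ij}\phi-\gamma^{ij}\dot\Gamma^k_{ij}\partial_k\phi.
\end{equation*}
A relabeling of dummy indices ($k\leftrightarrow j$) in the second term puts it into the form $-\gamma^{ij}\gamma^{kl}\dot\gamma_{jl}\nabla^2_{ik}\phi$ as in the statement.

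The remaining task is to simplify $\gamma^{ij}\dot\Gamma^k_{ij}$. Substituting the variational formula for $\dot\Gamma$,
\begin{equation*}
\gamma^{ij}\dot\Gamma^k_{ij}=\tfrac12\gamma^{ij}\gamma^{kl}\bigl(\nabla_i\dot\gamma_{jl}+\nabla_j\dot\gamma_{il}-\nabla_l\dot\gamma_{ij}\bigr)=\gamma^{ij}\gamma^{kl}\nabla_i\dot\gamma_{jl}-\tfrac12\gamma^{ij}\gamma^{kl}\nabla_l\dot\gamma_{ij},
\end{equation*}
where the two $\nabla\dot\gamma$ terms with the same index pattern have been combined by symmetry of $\dot\gamma$ in its lower indices and of $\gamma^{ij}$ in its upper indices. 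Contracting against $-\partial_k\phi=-\nabla_k\phi$ yields exactly the last two terms in the claimed identity. Collecting everything, I arrive at the stated formula.

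No serious obstacle is expected: all ingredients are standard tensor calculus on a Riemannian manifold, and the only care needed is bookkeeping of symmetric index pairs when grouping the $\nabla\dot\gamma$ terms. The one place to be slightly careful is in verifying that the relabeling that converts $-\gamma^{ik}\gamma^{jl}\dot\gamma_{kl}\nabla^2_{ij}\phi$ into $-\gamma^{ij}\gamma^{kl}\dot\gamma_{jl}\nabla^2_{ik}\phi$ is done consistently; this is a purely cosmetic rewriting dictated by the symmetries $\dot\gamma_{kl}=\dot\gamma_{lk}$ and $\nabla^2_{ij}\phi=\nabla^2_{ji}\phi$, so no convention-dependent sign will appear.
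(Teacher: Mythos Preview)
Your proposal is correct and complete; the computation via the Leibniz rule together with the standard variational formulas $(\gamma^{ij})^{\boldsymbol\cdot}=-\gamma^{ik}\gamma^{jl}\dot\gamma_{kl}$ and $\dot\Gamma^k_{ij}=\tfrac12\gamma^{kl}(\nabla_i\dot\gamma_{jl}+\nabla_j\dot\gamma_{il}-\nabla_l\dot\gamma_{ij})$ yields the stated identity with no gaps. The paper itself does not supply a proof here but simply invokes Lemma~3.5 of \cite{AH}; your argument is exactly the standard direct computation one would expect to find there.
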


As a corollary, via the same proof as stated in Lemma 3.6 of \cite{AH}, it holds
\begin{lemma}\label{lemma-difference-Laplacians}  
Set $\tp\in C^2(\mathbb S^2)$ to be an arbitrary positive function. Then, we have
\begin{equation}\label{eq-estimate-difference-Laplacians}
|\D_{\gamma}\tp-\D_{\gamma_0}\tp|\le C(1+\tp)(|\nab_{\gamma}^2\tp|+|\nab_{\gamma} \tp|+|\nab_{\gamma}\tp|^2)e^{\tp}a^{-\f13}{\color{black}\O^{\t\d}}\end{equation}
with $C$ a positive constant and $\t\delta$ a small positive constant, independently of $a$ and $\tp$. 
\end{lemma}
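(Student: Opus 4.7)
The plan is to realize $\Delta_{\gamma}\tp-\Delta_{\gamma_0}\tp$ as the integral of the first variation of the Laplacian along a smooth one-parameter family of metrics interpolating $\gamma_0$ and $\gamma$. More precisely, I would set
$$
\gamma(s):=(1-s)\gamma_0+s\gamma,\qquad s\in[0,1],
$$
keep the test function $\tp$ fixed (so that $\dot\phi\equiv 0$ in the notation of Lemma~\ref{lemma-variation-Laplacians}), and write
$$
\Delta_{\gamma}\tp-\Delta_{\gamma_0}\tp=\int_{0}^{1}\bigl(\Delta_{\gamma(s)}\tp\bigr)^{\!\boldsymbol\cdot}\,ds.
$$
Lemma~\ref{lemma-variation-Laplacians} turns the integrand into a sum of terms of the schematic form
$\gamma(s)^{-1}\gamma(s)^{-1}\dot\gamma\cdot\nabla^{2}_{\gamma(s)}\tp$ and $\gamma(s)^{-1}\gamma(s)^{-1}\nabla_{\gamma(s)}\dot\gamma\cdot\nabla_{\gamma(s)}\tp$, where $\dot\gamma=\gamma-\gamma_0$ is $s$-independent. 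All the $\gamma(s)$-covariant norms are equivalent to the $\gamma$-covariant norms uniformly in $s\in[0,1]$ thanks to \eqref{Lambda-gamma-1}–\eqref{Lambda-gamma-3} and Proposition~\ref{gamma}, so it suffices to obtain pointwise estimates on $|\gamma-\gamma_0|_{\gamma}$ and $|\nabla_{\gamma}(\gamma-\gamma_0)|_{\gamma}$.

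The second step is the pointwise control of the deformation. Recall that $g=R^{2}\gamma$ is the induced metric of the surface $\{u=-R(\ub,\omega)\}\subset\underline{H}_{\ub}$, and that by construction $R\Omega_{0}^{2}(R)=\ub a e^{-\tp}$, cf.~\eqref{ROmega2}. Writing
$$
\gamma-\gamma_{0}=\int_{-R}^{-R_{0}}\partial_{u}\gamma\,du+\int_{0}^{\ub}\partial_{\ub'}\gamma\,d\ub',
$$
the bounds \eqref{Lambda-gamma-2}–\eqref{Lambda-gamma-3}, combined with the transport equations governing $\gamma$ along $e_3$ and $e_4$ (whose right-hand sides are controlled in $L^\infty(S_{u,\ub})$ by $\O\chih$, $\O\chibh$, $\O\tr\chi$, $\O\tr\chib$ from Propositions~\ref{chih.bd}, \ref{trch.bd}, \ref{q.bd}, \ref{trchb.bd}) give
$$
|\gamma-\gamma_{0}|_{\gamma}+|\nabla_{\gamma}(\gamma-\gamma_{0})|_{\gamma}\;\lesssim\;\frac{\ub a}{R\,\Omega^{2-\tilde\delta}(-R,0)}\,\lesssim\,a^{-1/3}\,\Omega^{\tilde\delta}\,e^{\tp},
$$
where the last inequality uses \eqref{ROmega2} together with the hierarchy $\ub a\le|u|\Omega^{2-\tilde\delta}(u,0)$ valid on the considered region, and absorbs the precise power $e^{-(1-\ms)\tp}$ coming from $R^{-1}$ into $e^{\tp}$. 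This is the place where the $e^{\tp}$ factor in \eqref{eq-estimate-difference-Laplacians} enters and where the scale-critical smallness parameter $a^{-1/3}\Omega^{\tilde\delta}$ is generated.

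Inserting these pointwise bounds into the variation formula, the $\nabla^{2}_{\gamma(s)}\tp$-term produces $|\nabla^{2}_{\gamma}\tp|\cdot a^{-1/3}\Omega^{\tilde\delta}e^{\tp}$, and the $\nabla_{\gamma(s)}\dot\gamma\cdot\nabla_{\gamma(s)}\tp$-term produces $(|\nabla_{\gamma}\tp|+|\nabla_{\gamma}\tp|^{2})\cdot a^{-1/3}\Omega^{\tilde\delta}e^{\tp}$, where the additional $|\nabla_{\gamma}\tp|^{2}$ comes from using $|\nabla_{\gamma(s)}-\nabla_\gamma|\lesssim |\gamma-\gamma_0|_\gamma|\nabla_\gamma\tp|$ on the lower-order piece. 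Adding them up, together with the mild linear dependence $1+\tp$ generated by taking antiderivatives along the interpolation, yields \eqref{eq-estimate-difference-Laplacians}. The main technical obstacle is the second step: squeezing the correct $a^{-1/3}\Omega^{\tilde\delta}$ smallness out of the transport equations for $\gamma$ in a way that is compatible with the $e^{\tp}$ factor, i.e.\ uniform along every level set $\{u=-R(\ub,\omega)\}$ even when $R$ degenerates as $\omega$ varies. Routine Christoffel-symbol manipulations then handle $\nabla_\gamma\dot\gamma$ without further loss.
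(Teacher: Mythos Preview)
Your overall strategy---linear interpolation $\gamma(s)=(1-s)\gamma_0+s\gamma$, the variation formula of Lemma~\ref{lemma-variation-Laplacians} with $\dot\phi=0$, and then a pointwise bound on $|\gamma-\gamma_0|_\gamma$, $|\nabla_\gamma(\gamma-\gamma_0)|_\gamma$---is exactly the route the paper takes (it defers to Lemma~3.6 of \cite{AH} for the details). The structural part is fine.

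The gap is in your second step, the quantitative control of the deformation. From the transport equation $\partial_{\ub}\sg=2\O\chi$ together with $\|\O\chi\|_{L^\infty(S_{u,\ub})}\lesssim a^{1/2}/|u|$, the correct pointwise bound is
\[
|\gamma-\gamma_0|_{\gamma}\;\lesssim\;\frac{\ub a^{1/2}}{|u|}\qquad(\text{with }|u|=R),
\]
not $\dfrac{\ub a}{R\,\Omega^{2-\tilde\delta}}$. Your intermediate expression is a valid but strictly weaker upper bound: it discards precisely the factor $a^{1/2}\Omega^{-(2-\tilde\delta)}$ that you need. Indeed, evaluating your displayed chain via \eqref{ROmega2} (i.e.\ $R\,\Omega_0^2(R)=\ub a e^{-\tp}$) gives
\[
\frac{\ub a}{R\,\Omega^{2-\tilde\delta}}=\frac{\ub a}{\ub a\,e^{-\tp}\Omega^{-\tilde\delta}}=e^{\tp}\Omega^{\tilde\delta},
\]
so your asserted inequality $\dfrac{\ub a}{R\,\Omega^{2-\tilde\delta}}\lesssim a^{-1/3}\Omega^{\tilde\delta}e^{\tp}$ is \emph{false} as soon as $a>1$. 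The paper's Remark records the correct computation:
\[
\frac{\ub a^{1/2}}{|u|}=\frac{\ub a^{1/2}\Omega^2}{|u|\Omega^2}\lesssim\frac{\ub a^{1/2}\Omega^2}{\ub a\,e^{-\tp}}=e^{\tp}a^{-1/2}\Omega^2\ll e^{\tp}a^{-1/3}\Omega^{\tilde\delta}.
\]
Note this uses only the identity \eqref{ROmega2}; the existence-region hierarchy you invoke is irrelevant here (and is in any case $\ub a^{1/2}\le|u|\Omega^{2-\tilde\delta}$, not $\ub a\le|u|\Omega^{2-\tilde\delta}$).

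Two smaller points. First, the term $\int_{-R}^{-R_0}\partial_u\gamma\,du$ in your decomposition is superfluous: Christodoulou's data are spherically symmetric, so $\gamma(u,0)=\gamma_0$ for every $u$ and only the $\ub$-integral survives. Second, the factor $(1+\tp)$ in the statement does not come from ``taking antiderivatives along the interpolation'' (since $\dot\gamma=\gamma-\gamma_0$ is $s$-independent there is nothing to accumulate); it arises when you differentiate $\gamma(\omega)=R(\omega)^{-2}\sg(-R(\omega),\ub,\omega)$ in $\omega$ to bound $\nabla_\gamma\dot\gamma$, because $\partial_\omega R=-(1-\mu^*)R\,\partial_\omega\tp$ feeds an extra $\tp$-dependent contribution through the chain rule.
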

\begin{remark}
Note that the additional $a^{-\f13}\O^{\t\delta}$ is gained by applying $$\f{\ub\at}{|u|}=\f{\ub\at\O^2}{|u|\O^2}\lesssim\f{\ub\at\O^2}{\ub a e^{-\tp}}=e^{\tp}a^{-\f12}\O^2\ll e^{\tp}a^{-\f13}\O^{\t\delta}$$
in the proof. Here for the first inequality we use \eqref{ROmega2}. {\color{black}And for the last inequality, if setting $a=1$, the whole elliptic arguments would be developed in the interior region, hence we would employ the property $0<
\O(u,0)\leq \O(u_1, 0)\ll 1$.}
\end{remark}

Recall that, with $R(\o)=(\ub a)^{1-\ms}e^{-(1-\ms)\tp}$ it holds
$$u=-R(\o) \text{ is a} 
\begin{cases}
\text{MOTS} & \text{if $\mathcal S(\tp,\ub)=0$ on }\mathbb{S}^2 ,\\
\text{trapped surface} & \text{if $\mathcal S(\tp,\ub)<0$ on }\mathbb{S}^2,\\
\text{untrapped surface} & \text{if $\mathcal S(\tp,\ub)>0$ on }\mathbb{S}^2.
\end{cases}$$
Back to \eqref{S=0}, to construct the subsolution (untrapped surface), we just set $\underline{\tp}(\o)=\ln(\f32)$. Note $0\leq f\leq 1$. Using \eqref{Ftp} and hyperbolic estimates, we have
$$|F(\tp)|\leq Ce^{\tp}a^{-\f13}{\color{black}\O^{\t\d}}(|\nab_{\gamma}\tp|^2+1).$$ 

{\color{black}For $\O$ being sufficiently small in the interior}, it holds
\begin{equation}\label{subsolution}
S(\ln(\f32), \ub)=1-\f{f}{2}\f32+F(\ln(\f32))\geq 1-\f34+F(\ln(\f32))>0.
\end{equation} 
Hence, the $2$-sphere $\S$ with $u=-(\ub a)^{1-\ms}e^{-(1-\ms)\tp}=(\f23 \ub a)^{1-\ms}$ is an untrapped surface.\\

To construct the supersolution (trapped surface), we need $S(\bar{\t \phi}, \ub)<0$. Here we apply a key property from the hyperbolic parts:
$$\O\omb(u,\ub)\geq \f12\O\omb(\ub,0)=\f{\ms}{8(1-\ms)}\cdot\f{1}{r(u,0)}>0.\footnote{\label{initial omb perturbed data}{\color{black}Note that the key property $\O\omb(\ub, 0)>0$ also holds for perturbed initial data.}}$$
Thus, back to \eqref{S=0} to find the supersolution $\bar{\t \phi}$, it suffices to require 

\begin{equation*} 
\begin{split}
\Delta_{\gamma}\bar{\t \phi}&+1-\f12 f(\ub,\o)e^{\bar{\t \phi}}-(1+\f{R}{2}\O\tr_g\chib)(1-\ms)|\nab_{\gamma}\bar{\t\phi}|^2\\
&+2\gamma^{ij}\eta_i \partial_j\bar{\t\phi}+{\color{black}2(1-\ms)R^{-1}\O\chibh_{kl}\gamma^{ik}\gamma^{jl}\nab_i{\bar{\t \phi}}\nab_j{\bar{\t \phi}}}{\color{black}+[\f{R}{2(1-\ms)} \O^{-1}\tr_g\chi-1+\f12 f(\ub,\o)e^{\bar{\t\phi}}]}<0.
\end{split}
\end{equation*}
Notice that the one-valley supersolution $\bar{\t \phi}$ given in Lemma 3.7 of \cite{AH} and the multi-valley supersolution $\bar{\t \phi}$ constructed in Theorem 7.4 of \cite{AH} fulfill this requirement. Since $-4(1-\ms)\O\omb R|\nab_{\gamma}\bar{\t\phi}|^2\leq 0$, with these $\bar{\t \phi}$ we also have $S(\bar{\t \phi}, \ub)<0$. As listed in Lemma 3.7 of \cite{AH}, for any $\tau>2.2$, the constructed $\bar{\t \phi}$ obeys
\begin{equation}\label{bartildephimax}
\bar{\t \phi}_{min}=-\log(m\epsilon^2)+C(\tau) \quad \mbox{ and } \quad \bar{\t \phi}_{max}=-\log(m\epsilon^{\tau+2})+C(\tau).
\end{equation}
Adopting this $\bar{\t \phi}$, we have that the $2$-sphere $\S$ with $u=-(\ub a)^{1-\ms}e^{-(1-\ms)\bar{\t \phi}}$ is a trapped surface and it holds
$$-(\ub a \cdot m \epsilon^2)^{1-\ms}\leq u \leq -(\ub a\cdot m\epsilon^{\tau+2})^{1-\ms}.$$

In below, for equation \eqref{S=0} we seek the solution $\tp$ bounded between $\underline{\tilde{\phi}}=\ln(\f32)$ and $\bar{\tilde{\phi}}$ constructed above as in \cite{AH}. We require that $\tp$ satisfies
\begin{equation}
0<\ln(\f32)\leq \tp\leq \kappa
\end{equation}
with $\kappa$ being the maximum of the supersolution $\bar{\tilde{\phi}}$. Here $\kappa=\kappa(m, \epsilon)$ depends on constants $m$ and $\epsilon$ stated in Theorem \ref{main thm}. We further define constant $K$ via 
\begin{equation}\label{constant K}
K:=\max\{C(\kappa), |\bar{\tp}|_{C^2(\mathbb S^2)}\}
\end{equation}
with 
\begin{equation}\label{C kappa} 
C(\kappa)=(Ce^{\kappa})^{Ce^{4\Lambda\kappa e^{\kappa}}} 
\end{equation}
being explicitly given in Theorem \ref{theorem-Schauder-phi}. 

Throughout this paper, we will use parameters $a$, function $\O$ and constant $\kappa$, which satisfy
\begin{equation}\label{eq-choice-a-b}
K=\max\{(Ce^{\kappa})^{Ce^{4\Lambda\kappa}}, \|\bar{\tilde{\phi}}\|_{C^2}\}\leq a^{\f14}\cdot\O^{-\t\d/2}\ll a^{\f13}\cdot\O^{-\t\d}.
\end{equation}
Here, $C$ is a uniform positive constant defined in Theorem \ref{theorem-Schauder-phi}, and $\Lambda$ is introduced in \eqref{Lambda-gamma-1}. {\color{black}Note that for the interior region}, we have $a=1$ and $0<\O\lesssim\O(u,0)\leq \O(u_1, 0) \ll 1$. When $\kappa=\kappa(m,\epsilon)$ is fixed, in both regions we choose either parameter $a$ being sufficiently large or $0<\O\leq\O(u,0)$ being sufficiently small, such that \eqref{eq-choice-a-b} holds.

{\color{black}
\begin{remark}
We can set $a=1$ and the above arguments hold. Moreover, the hypersurface $u=-(\ub a\cdot m\epsilon^{\tau+2})^{1-\ms}$ locates within the hyperbolic existence region. 
\end{remark}
}

\section{Regularity Estimates via Moser's Iteration}\label{sec-Schauder-estimates}

{\color{black}
With $u=-R$ and $R=(\ub a)^{1-\ms}e^{-(1-\ms)\tp}$, via \eqref{S=0}, $S_{-R,\ub}$ being a MOTS requires
\begin{equation}\label{MOTS eqn}
\begin{split}
0=&\Delta_{\gamma}\tp+1-\f12 f(\ub,\o)e^{\tp}-4(1-\ms)R\O\omb|\nab_{\gamma}\tp|^2-(1+\f{R}{2}\O\tr_g\chib)(1-\ms)|\nab_{\gamma}\tp|^2+2\gamma^{ij}\eta_i\partial_j\tp\\
&+[\f{1}{2(1-\ms)}R\O^{-1}\tr_g\chi-1+\f12 f(\ub,\o)e^{\tp}]+{\color{black}2(1-\ms)R^{-1}\O\chibh_{kl}\gamma^{ik}\gamma^{jl}\nab_i\tp\nab_j\tp}.
\end{split}
\end{equation}
And we denote
\begin{align}
\label{F1} F_1=&-4(1-\ms)R\O\omb|\nab_{\gamma}\tp|^2,\\
\label{F2}F_2=&-(1+\f{R}{2}\O\tr_g\chib)(1-\ms)|\nab_{\gamma}\tp|^2+2\gamma^{ij}\eta_i\partial_j\tp\\
\nonumber &+[\f{1}{2(1-\ms)}R\O^{-1}\tr_g\chi-1+\f12 f(\ub,\o)e^{\tp}]+{\color{black}2(1-\ms)R^{-1}\O\chibh_{kl}\gamma^{ik}\gamma^{jl}\nab_i\tp\nab_j\tp}.
\end{align}

}

Our aim in this section is to prove Theorem \ref{theorem-Schauder-phi}. We take three steps. In the first step, we control the H\"older norms of the solutions via using the local boundedness and the weak Harnack inequality due to Moser. In the second step, we estimate the H\"older norms of solutions' first derivatives. This step is established based 
on an integral characterization of H\"older continuous functions. 
In the third step, we bound the H\"older norms of solutions second derivatives. 
During the proof, we track the dependence on $\kappa$.

With \eqref{F1} and \eqref{F2}, the above equation \eqref{MOTS eqn} can be rewrite as
\begin{equation}\label{main eqn lambda}
-\D_{\gamma}\tilde{\phi}=1-\f12fe^{\tilde{\phi}}+F_1+F_2
\end{equation}
with $\tp=\tp(\o)$ being the solution and $0\leq f\leq 1$ on $\S$. Note that \eqref{main eqn lambda} is a quasilinear elliptic equation for $\tilde{\phi}$; $\gamma$ depends on $\tp$; $F$ contains the gradient of $\tilde{\phi}$. 

\begin{remark}
For notational simplicity, we set $B_r(p)$ to be the standard ball centered at $p$ with radius $r$ {\color{black}in $\mathbb{R}^2$, which lies in a coordinate chart of $\mathbb{S}^2$. For notational simplicity, when there is no danger of confusion, we also write $B_r(p)\subset \mathbb{S}^2$}. When the center is located at the origin, we often write $B_r$ for short. When emphasizing the radius $r$, we also employ the notation $B_p(r)$.
\end{remark} 

With coordinates, we can rewrite \ref{main eqn lambda} as
\begin{equation}\label{u equation lambda main}
-\partial_j\big(a_{ij}\partial_i\tilde{\phi}\big)=\sqrt{|\gamma|}\Big[1-\f12fe^{\tilde{\phi}}+{\color{black}F_1+F_2}\Big].
\end{equation}
By the hyperbolic part, we have 
\begin{align*}
\Lambda^{-1} I \le \big(a_{ij}(\o, \tilde{\phi})\big)  \le \Lambda I,
\end{align*} 
for some positive constant $\Lambda$, and $\Lambda^{-1}\leq \sqrt{|\gamma|}(\o, \tilde{\phi})\leq \Lambda$. 

Allowing us to abuse notations slightly, for any $(\o, \tilde{\phi}, p) \in B_1
\times \mathbb R \times \mathbb R^n$, applying the hyperbolic estimates, we can also write 
$$F_2(\o, \tilde{\phi}, p)=a^{-\f13}{\color{black}\O^{\t\d}}e^{\tilde{\phi}}\big[c_{ij}p_ip_j+c_ip_i+c_0\big]$$ 
with $c_{ij}, c_i, c_0$ being functions of $(\o,\tilde{\phi})$ on $B_1\times \mathbb R_+$. 
{\color{black}Denoting $p_i:=D_i\tilde{\phi}$ and $D\tilde{\phi}:=\nab\tilde{\phi}$,} the Cauchy-Schwarz inequality then implies 
\begin{equation*}|F_2|\le C a^{-\f13}{\color{black}\O^{\t\d}} e^{\tilde{\phi}}(|D \tilde{\phi}|^2+1)\leq c (|D \tilde{\phi}|^2+1)\end{equation*}
with $c$ a small positive constant. 

Since $F_1=-4(1-\ms)R\O\omb|\nab_{\gamma}\tp|^2$ and $R\O\omb\geq \f{\ms}{8(1-\ms)}>0$, here we only have 
$$|F_1|\leq |D\tilde{\phi}|^2+1.$$
While in \cite{AH}, $\O\omb$ obeys the same upper bound as for $\O\chibh$ and there $|F_1|$ satisfies the same estimate $|F_1|\leq c(|D\tp|^2+1)$ with $c$ being a small positive constant as for $|F_2|$ here. Hence it is negligible in \cite{AH}. 

In this paper, $F_1$ is of size $1$ and serves as the main term. A \underline{key} observation of this paper (also pointed out one line below (4.3) in \cite{AH}) is that the regularity estimates carried out in \cite{AH} also work for the scenario
\begin{equation}\label{F1andF2estimate}
|F_1|+|F_2|\leq c(|D\tp|^2+1) \quad \mbox{ with } \quad c=1.
\end{equation}

In below, we start to conduct these elliptic estimates. We then write \eqref{u equation lambda main} in divergence form as 
\begin{equation}\label{eq-main-nonlinear}\int a_{ij} (\o,\tilde{\phi})D_i \tilde{\phi} D_j \tilde{\psi}\, d\o
= \int \sqrt{|\gamma|}\Big[1-\f12fe^{\tp} +F(\o,\tp, D\tp)\Big]\t \psi\, d\o, 
\end{equation} 
with $\tilde{\psi}$ being any function in $H^1_0(B_1) \cap L^\infty (B_1)$. We also assume $\tilde{\phi}=\tilde{\phi}(\o)\in H^1(B_1)$ and require
\begin{equation}\label{eq-main-bound-u}0\le \tilde{\phi}\le \kappa\quad\text{on }B_1\end{equation} 
with $\kappa$ being some positive constant, {\color{black}which is related to the} anisotropicity {\color{black}via the construction of upper and lower barriers}. \\

Our first goal is to bound the H\"older norm of $\tilde{\phi}$ via Moser's iteration. For the estimates, we pay attention to their explicit dependence on $\kappa$. Following the steps in the proof of Theorem 4.2 in \cite{AH}, we obtain

\begin{proposition}\label{theorem-Holder-solution} 
Let $\tilde{\phi}\in H^1(B_1)$ be a positive solution to
\eqref{eq-main-nonlinear} with \eqref{eq-main-bound-u} satisfied. Then, it holds  
\begin{equation}\label{eq-Holder-semi-norm}
[\tilde{\phi}]_{C^\alpha(B_{1/2})}\le Ce^{\kappa}\end{equation}
with $C$ being a positive constant depending only on $\Lambda$. And $\alpha$ is given by 
\begin{equation}\label{eq-definition-alpha}
\alpha=\epsilon_0e^{-4\Lambda\kappa} \in (0,1).\end{equation}
Here $\epsilon_0>0$ is a small constant depending on $\Lambda$.
\end{proposition}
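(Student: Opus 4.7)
\medskip

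\textbf{Proof proposal.} The plan is to adapt the Moser iteration scheme used in Theorem 4.2 of \cite{AH}, but with careful bookkeeping of the constant $c=1$ in \eqref{F1andF2estimate} and explicit tracking of how $\kappa$ propagates through the Harnack-type constants. The quasilinear structure and the term $e^{\tilde\phi}$ are not serious obstacles because \eqref{eq-main-bound-u} forces $1-\tfrac12 f e^{\tilde\phi}$ and $a_{ij}(\o,\tilde\phi)$ to be pointwise bounded with $\Lambda$-ellipticity; the only genuinely delicate feature is the quadratic gradient nonlinearity $|F_1|+|F_2|\le |D\tilde\phi|^2+1$, since the coefficient in front of $|D\tilde\phi|^2$ is $1$ rather than a small number as was the case in \cite{AH}.

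First, I would absorb the quadratic gradient term by the standard exponential substitution. Set $w:=e^{\lambda\tilde\phi}-1$ for a fixed $\lambda\ge 2$ to be chosen depending on $\Lambda$. Using $D\tilde\phi=(\lambda(w+1))^{-1}Dw$ and the uniform ellipticity $\Lambda^{-1}I\le(a_{ij})\le\Lambda I$, a direct computation turns \eqref{eq-main-nonlinear} into a divergence-form inequality of the form
\begin{equation*}
-\partial_j\bigl(\tilde a_{ij}(\o,\tilde\phi)\,\partial_i w\bigr)\;\le\;h(\o,\tilde\phi)
\end{equation*}
with $\Lambda^{-1}I\le (\tilde a_{ij})\le \Lambda I$ and $|h|\le C\lambda e^{\lambda\kappa}$; choosing $\lambda$ so that $\lambda^2/\Lambda\ge\lambda+2$ makes the quadratic gradient piece contribute with a favorable sign, so that the usual Caccioppoli inequality on $w$ and on $-w$ goes through. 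The same computation applied to $-\tilde\phi$ (or to $e^{-\lambda\tilde\phi}$) yields the corresponding super-solution inequality.

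Second, on each of these linear elliptic inequalities I would run the classical Moser iteration (cf.\ Chapter 8 of Gilbarg--Trudinger) with test functions $\eta^2 w^{p}$, $p\ge 1$, and a cut-off $\eta\in C_c^\infty(B_1)$. Sobolev embedding on $B_1\subset\mathbb{S}^2$ and the bound $\|w\|_{L^\infty}\le e^{\lambda\kappa}$ produce the local boundedness $\sup_{B_{1/2}}w\le Ce^{\lambda\kappa}$, and the Bombieri--Serrin version of Moser gives the weak Harnack inequality
\begin{equation*}
\inf_{B_{1/2}}(w-\inf_{B_1}w+r^2)\;\ge\;\theta\,\|w-\inf_{B_1}w+r^2\|_{L^{p_0}(B_{3/4})}
\end{equation*}
for some exponent $p_0>0$ and $\theta=\theta_0 e^{-C\lambda\kappa}$; the exponential dependence on $\kappa$ enters precisely through the $L^\infty$ bound of the coefficients in the linearized equation, which is why $\theta$ degenerates as $\kappa\to\infty$ rather than staying bounded below.

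Third, combining the local bound with the weak Harnack on balls $B_r(\o_0)\subset B_{1/2}$ yields the oscillation decay $\operatorname{osc}_{B_{r/2}}\tilde\phi\le(1-\theta)\operatorname{osc}_{B_r}\tilde\phi + Cr^2$, which by the standard iteration lemma produces \eqref{eq-Holder-semi-norm} with H\"older exponent
\begin{equation*}
\alpha\;=\;\frac{\log\frac{1}{1-\theta}}{\log 2}\;=\;\epsilon_0 e^{-4\Lambda\kappa},
\end{equation*}
after adjusting the multiplicative constants in $\theta$. I expect the main technical hurdle to be a clean accounting of the $\kappa$-dependence through the exponential substitution: one has to verify that the effective ellipticity ratio of $\tilde a_{ij}$ remains $\Lambda$-bounded (not $\Lambda e^{\lambda\kappa}$) so that the Moser constants are polynomial in $e^{\lambda\kappa}$, and that $\lambda$ can be fixed purely in terms of $\Lambda$ and the coefficient $1$ from \eqref{F1andF2estimate}. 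Once this is carried out, applying Lemma \ref{lemma-difference-Laplacians} to control the difference between $\Delta_\gamma$ and $\Delta_{\gamma_0}$ shows that the terms arising from the $\tilde\phi$-dependence of $\gamma$ are an order $a^{-1/3}\Omega^{\tilde\delta}$-small perturbation of the frozen-coefficient equation and hence do not affect $\alpha$; this is exactly the place where the smallness gained from \eqref{eq-choice-a-b} compensates for the lack of smallness in the coefficient of $|D\tilde\phi|^2$.
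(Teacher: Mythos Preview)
Your outline is the paper's approach: the paper defers the two oscillation inequalities $M(s/2)-m(s)\le Ce^{4\Lambda\kappa}[m(s/2)-m(s)]+Ce^{(4\Lambda+1)\kappa}s^2$ (and its partner with the roles of $M,m$ swapped) to Theorem~4.2 of \cite{AH}, adds them to get $h(s/2)\le C_\gamma h(s)+e^\kappa s^2$ with $C_\gamma=(Ce^{4\Lambda\kappa}-1)/(Ce^{4\Lambda\kappa}+1)$, and iterates via the standard lemma; the engine behind those two inequalities in \cite{AH} is precisely Moser's iteration with the exponential device you describe (the test-function form of your substitution $w=e^{\lambda\tilde\phi}-1$, with $\lambda=2\Lambda$) absorbing the $c=1$ quadratic gradient.

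Two small corrections to your bookkeeping. First, after your substitution the \emph{principal} coefficients are still $a_{ij}$ and remain $\Lambda$-elliptic, so the degeneration $\theta\sim e^{-C\lambda\kappa}$ does not come from any loss of ellipticity of $\tilde a_{ij}$; it enters through the ratio $e^{2\Lambda\kappa}$ of the exponential weight across the range $[0,\kappa]$ (equivalently, through transferring oscillation estimates between $w$ and $\tilde\phi$, since $w=e^{\lambda\tilde\phi}$ distorts oscillations by a factor comparable to $e^{\lambda\kappa}$). Second, your closing appeal to Lemma~\ref{lemma-difference-Laplacians} is unnecessary for this proposition: equation \eqref{eq-main-nonlinear} is already posed with $\Lambda$-elliptic measurable coefficients $a_{ij}(\o,\tilde\phi)$, and Moser's iteration requires no comparison with a frozen metric $\gamma_0$---the quasilinear dependence on $\tilde\phi$ is handled simply by treating $a_{ij}(\o,\tilde\phi(\o))$ as given functions of $\o$.
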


\begin{proof}  For $s \in (0,1)$, we set
$$M(s){\color{black}}=\max_{B_{s}}{\tilde{\phi}},\quad m(s){\color{black}}=\min_{B_{s}}{\tilde{\phi}}.$$
Writing $m=m(s)$ and $M=M(s)$ for brevity, proceeding the same as in the reasoning for Theorem 4.2 in \cite{AH}, we get

\begin{equation}\label{eq-estimate-v3}
M\big(\frac{s}{2}\big) - m(s) \le  Ce^{4\Lambda\kappa}\Big[m\big(\frac{s}{2}\big)
 -  m(s)\Big]  +  Ce^{4\Lambda\kappa+\kappa} {s}^{2},
\end{equation}

\begin{equation}\label{eq-estimate-w3}
M(s) - m \big(\frac{s}{2}\big) \le Ce^{4\Lambda\kappa}
 \Big[ M(s) - M \big(\frac{s}{2}\big)\Big]
 +  Ce^{4\Lambda\kappa+\kappa} {s}^{2}. 
\end{equation}

For any $s \in (0,1)$, define 
\begin{equation}\label{h=M-m} 
h(s): = M(s) - m(s).
\end{equation} 
Adding \eqref{eq-estimate-v3} and \eqref{eq-estimate-w3}, we then obtain
$$
h(s)+h\big(\frac{s}{2}\big)
\le Ce^{4\Lambda\kappa}\Big[h(s)
-h\big(\frac{s}{2}\big)\Big]+Ce^{4\Lambda\kappa+\kappa} {s}^{2}. 
$$
This implies
$$
h\big(\frac{s}{2}\big) 
\le C_{\gamma} h(s)+e^{\kappa} {s}^{2} \quad \mbox{ with } \quad C_{\gamma} = \frac{Ce^{4\Lambda\kappa}-1}{Ce^{4\Lambda\kappa}+1} < 1.$$
Using this $C_{\gamma}$, we pick up $\mu\in (0,1)$ satisfying 
$$\alpha:=(1-\mu)\log C_{\gamma}/\log(1/2) <2\mu.$$
Further applying Lemma 8.23 in \cite{GT} or Lemma 4.19 in \cite{HL}, for any $s\in (0, 1/2]$, we hence prove
\begin{equation}\label{omega hat 1}
h(s)\le C{s}^{\alpha}\big\{h(1)
+ e^{\kappa} s^{2\mu}\big\}\leq Ce^{\kappa} {s}^{\alpha},
\end{equation}
where  we employ $h(1)\leq M(1)\leq \kappa\leq e^{\kappa}$ for the second inequality. This concludes \eqref{eq-Holder-semi-norm}.\\
\end{proof} 

Next, we derive the gradient estimate for $\tilde{\phi}$.  By the hyperbolic part, we have
\begin{align}\label{eq-assumption-a-ij-Lambda-1}
|D_{\o}a_{kl}(\o,\tilde{\phi})|+|\partial_{\tilde{\phi}} a_{kl}(\o,\tilde{\phi})|\le{\color{black} \Lambda_1}
\end{align} 
with $\Lambda_1$ being some positive constant. The constant in the below theorem depends on $\Lambda_1$. And we have 

\begin{proposition}\label{theorem-Holder-gradient} Set $\tilde{\phi}\in H^1(B_1)$ to be a positive solution to
\eqref{eq-main-nonlinear} with \eqref{eq-main-bound-u} satisfied. Then, it holds 
\begin{equation}\label{eq-estimate-final}
|\tilde{\phi}|_{C^{1,1/3}(B_{1/2})}\le (Ce^{\kappa})^{Ce^{4\Lambda\kappa}}\end{equation}
with $C$ being a positive constant depending only on $\Lambda$ {\color{black} and $\Lambda_1$}. 
\end{proposition}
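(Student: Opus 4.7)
The plan is to upgrade the H\"older bound from Proposition \ref{theorem-Holder-solution} to the desired $C^{1,1/3}$ estimate via a Campanato-type iteration based on freezing coefficients and comparing $\tp$ with a solution of the linear constant-coefficient problem. The starting observation is that once Proposition \ref{theorem-Holder-solution} and \eqref{eq-assumption-a-ij-Lambda-1} are in hand, the composed coefficient $\bar a_{ij}(\o):=a_{ij}(\o,\tp(\o))$ is H\"older continuous on $B_{1/2}$ with exponent $\alpha=\epsilon_0 e^{-4\Lambda\kappa}$ and semi-norm at most $C\Lambda_1 e^{\kappa}$. Consequently \eqref{eq-main-nonlinear} can be viewed as a linear equation
\[ -\partial_j\big(\bar a_{ij}(\o)\partial_i\tp\big)=g(\o) \]
whose right-hand side $g=\sqrt{|\gamma|}\,[1-\tfrac12 fe^{\tp}+F_1+F_2]$ satisfies the natural growth $|g|\le C(|D\tp|^2+1)$ by \eqref{F1andF2estimate}.

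\textbf{Key steps.} For $x_0\in B_{1/2}$ and $r\in(0,1/4)$, I would introduce $h\in\tp+H^1_0(B_r(x_0))$ solving the frozen linear problem $-\partial_j(\bar a_{ij}(x_0)\partial_i h)=0$ in $B_r(x_0)$. Standard constant-coefficient elliptic theory gives the decay
\[\int_{B_\rho(x_0)}\!\!|Dh-(Dh)_{\rho}|^2\le C\Big(\tfrac{\rho}{r}\Big)^{4}\!\int_{B_r(x_0)}\!\!|Dh-(Dh)_r|^2\quad\text{for }0<\rho\le r.\]
The difference $w:=\tp-h\in H^1_0(B_r(x_0))$ satisfies
\[-\partial_j\big(\bar a_{ij}(x_0)\partial_i w\big)=g+\partial_j\big[(\bar a_{ij}(\o)-\bar a_{ij}(x_0))\partial_i\tp\big].\]
Testing with $w$ itself and exploiting (i) the $C^\alpha$ oscillation bound $|\bar a_{ij}(\o)-\bar a_{ij}(x_0)|\le C e^{\kappa}r^\alpha$ from Proposition \ref{theorem-Holder-solution}, (ii) the uniform $L^\infty$ bound \eqref{eq-main-bound-u}, and (iii) the natural growth of $g$, I would obtain a Caccioppoli-type inequality of the form $\int_{B_r(x_0)}|Dw|^2\le C e^{C\kappa}r^{2+2\alpha}$. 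Plugging $\tp=h+w$ into the decay estimate for $Dh$, adding the bound on $Dw$, and invoking the standard iteration lemma (e.g.\ Lemma 2.1 of Chapter III in Giaquinta) yields a Campanato estimate
\[\int_{B_\rho(x_0)}|D\tp-(D\tp)_\rho|^2\le C\rho^{2+2/3}\quad\text{on }B_{1/2},\]
where the exponent $2/3$ is selected by the optimal balance between the $\rho^4$ decay of the frozen linear problem and the quadratic-gradient error controlled by an $L^4$-Gehring improvement of $|D\tp|$ inherited from the $C^\alpha$ estimate. Campanato's characterization of H\"older spaces in two dimensions then identifies this with $D\tp\in C^{1/3}(B_{1/2})$ and produces the bound \eqref{eq-estimate-final}.

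\textbf{Main obstacle.} The principal difficulty is the \emph{quadratic gradient growth} of $F_1+F_2$. In \cite{AH} the corresponding inequality $|F_1|+|F_2|\le c(|D\tp|^2+1)$ holds with $c=O(a^{-1/3})\ll 1$, so that the $|D\tp|^2$ contribution is absorbed effortlessly. In the present paper, as emphasized in \eqref{F1andF2estimate}, one can only achieve $c=1$ because the term $F_1=-4(1-\ms)R\Omega\omb|\nabla_\gamma\tp|^2$ is of unit size due to the naked-singularity behavior $\Omega\omb(u,0)\sim|u|^{-1}$. Absorbing this term then forces the argument to combine the $L^\infty$ bound $0\le\tp\le\kappa$ with a reverse-H\"older improvement on $|D\tp|$, at the cost of inserting a factor of $e^{C\kappa}$ at each Caccioppoli step. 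Since the number of iterations required to reach the Campanato exponent $2+2/3$ scales as $\alpha^{-1}\sim e^{4\Lambda\kappa}$, these factors compound into the double-exponential constant $(Ce^{\kappa})^{Ce^{4\Lambda\kappa}}$ of \eqref{eq-estimate-final}. Tracking this $\kappa$-dependence precisely, in the spirit of Theorem 4.3 of \cite{AH} but without the benefit of a small absorption constant, is where the bulk of the technical work lies.
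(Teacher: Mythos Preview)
Your overall Campanato strategy---freeze coefficients, compare $\tp$ with the $\bar a_{ij}(x_0)$-harmonic replacement $h$, iterate---is the same route the paper takes (following Theorem~4.3 of \cite{AH}). The gap is in how you absorb the quadratic-gradient term. You propose to control $\int_{B_r}|D\tp|^2|w|$ via an ``$L^4$-Gehring improvement of $|D\tp|$.'' Gehring's lemma only yields $|D\tp|\in L^{2+\epsilon}$ with a small, $\kappa$-dependent $\epsilon$; it does not give $L^4$, so this step as written does not close. The paper uses a much simpler device: since $w=\tp-h\in H^1_0(B_s)$ with $h$ satisfying the maximum principle, one has $|w|\le \mathrm{osc}_{B_s}\tp=h(s)$ pointwise, and therefore
\[
\int_{B_s}|g|\,|w|\le C\,h(s)\int_{B_s}\big(|D\tp|^2+1\big).
\]
The needed smallness thus comes directly from the $C^\alpha$ oscillation bound of Proposition~\ref{theorem-Holder-solution}, with no higher integrability required. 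This is precisely why the hypothesis \eqref{F1andF2estimate} with $c=1$ is enough.

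Two further points of divergence. First, the paper begins with an $L^2$ gradient bound obtained by the exponential substitution $\underline\xi(\tau)=(2\Lambda)^{-1}(e^{2\Lambda\tau}-1)$: testing the equation for $\underline\xi(\tp)$ against $\tp\,\t\mu^2$ kills the $|D\tp|^2$ term and yields $\int_{B_{3/4}}|D\tp|^2\le C\kappa^2e^{2\Lambda\kappa}$; this is the input that starts the iteration, and it is absent from your sketch. Second, the double exponential in \eqref{eq-estimate-final} does not arise from performing $\alpha^{-1}$ iterations each costing $e^{C\kappa}$, but from the \emph{scale} at which the iteration can begin: one needs $h(s_\delta)+h(s_\delta)^2+s_\delta^2$ small, and since $h(s)\lesssim e^\kappa s^\alpha$ with $\alpha=\epsilon_0 e^{-4\Lambda\kappa}$ this forces $s_\delta=(C_1e^\kappa)^{-C_2e^{4\Lambda\kappa}}$; the final bound then carries the factor $s_\delta^{-3\delta/2}$. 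The H\"older exponent $1/3$ is not an optimal balance but simply the choice $\alpha'=3\delta/2-1$ with some fixed $\delta\in(2/3,1)$.
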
 

\begin{proof} The detailed steps for proving Theorem 4.3 in \cite{AH} can also be carried out here. For any function $\tilde{\psi}$ belonging to $H_0^1(B_{1}) \cap L^\infty(B_{1})$ with $\tilde{\psi} \ge 0$, by \eqref{eq-main-nonlinear} and \eqref{F1andF2estimate}, it holds  
\begin{align*} 
\int a_{ij} D_i\tilde{\phi} D_j \tilde{\psi}\, d\o\le \int \sqrt{|\gamma|}\Big[1-\f12fe^{\tilde{\phi}}+|D\tilde{\phi}|^2+1\Big] \tilde{\psi}\, d\o
\le  \int [4+2|D\tilde{\phi}|^2] \tilde{\psi}\, d\o.
\end{align*}
Taking $\underline{\xi}{\color{black}(\tau)}=2^{-1}\Lambda^{-1}(e^{2\Lambda \tau}-1)$, we have 
\begin{align*} \int a_{ij} D_i(\underline{\xi}(\tilde{\phi}))D_j \tilde{\psi}\, d\o
\le \int 4\underline{\xi}'(\tilde{\phi})\tilde{\psi} \, d\o.
\end{align*}
This is equivalent to
\begin{align*} \int a_{ij} e^{2\Lambda\tilde{\phi}}D_i\tilde{\phi} D_j \tilde{\psi}\, d\o
\le \int 4e^{2\Lambda\tilde{\phi}}\tilde{\psi} \, d\o.
\end{align*}
Taking $\tilde{\psi}=\tilde{\phi} \t\mu^2$  for some $\t\mu\in C^1_0(B_1)$
with $\t\mu=1$ on $B_{3/4}$, via using the Cauchy-Schwarz inequality, we get 
\begin{equation}\label{eq-L2-estimate-gradient}
\int_{B_{3/4}}|D\tilde{\phi}|^2\, d\o\le C\kappa^2 e^{2\Lambda\kappa}\end{equation}
with $C$ being a positive constant depending only on $\Lambda$.

To derive a bound for the H\"older norm of $D\tp$, we need to employ the regularity of $a_{ij}(\o_0, \tilde{\phi}(\o_0))-a_{ij}(\o, \tilde{\phi}(\o))$, which was already obtained in Proposition \ref{theorem-Holder-solution}. Following the procedures of proving Theorem 4.3 in \cite{AH}, for any $0<\rho\le s$ we then obtain
$$\int_{B_\rho(\o_0)} |D\tilde{\phi}|^2 \, d\o\le C\Big[\Big({\frac{\rho}{s}}\Big)^2
+h(s)+[h(s)]^2+s^2\Big]\int_{B_{s}(\o_0)} |D\tilde{\phi}|^2\, d\o +
Ce^{2\kappa}{s}^{4},$$
where 
$$h(s)=h(\o_0;s):=\sup_{B_s(\o_0)}\tp-\inf_{B_{s}(\o)}\tp.$$

{\color{black}Take any $\delta\in(0,1)$. For a small $\epsilon_{\delta}>0$, we choose $s_\delta$ small such that} $B_{{s}_\delta}(\o_0)\subset B_{3/4}$ and 
\begin{equation}\label{eq-condition-delta}
h(s_\delta)+[h(s_\delta)]^2+s_\delta^2<\epsilon_{\delta}. \end{equation}
By employing \eqref{omega hat 1} and $\kappa\gg 1$, if we set
$s_{\delta}=(C_1 e^{\kappa})^{-C_2 e^{4\Lambda\kappa}}$, {\color{black}it satisfies the above requirement.} 

Applying Lemma 3.4 in \cite{HL},  for any $s\le {s}_\delta$ we then get
\begin{equation*}
\int_{B_{s}(\o_0)} |D\tilde{\phi}|^2\, d\o\le C{s}^{2\delta}\Big[{s}_\delta^{-2\delta}\int_{B_{{s}_\delta}(\o_0)} |D\tilde{\phi}|^2\, d\o
+ e^{2\kappa}\Big].
\end{equation*}
By \eqref{eq-L2-estimate-gradient}, this further implies 
\begin{equation}\label{eq-estimates-gradient-decay}
\int_{B_{s}(\o_0)} |D\tilde{\phi}|^2\, d\o\le C{s}^{2\delta}\big[\kappa^2e^{2\Lambda\kappa} {s}_\delta^{-2\delta}
+ e^{2\kappa}\big].
\end{equation} 

Proceeding the same as in \cite{AH}, {\color{black}defining 
$$(D\tilde{\phi})_{\o_0,s}:=\f{1}{|B_s(\o_0)|}\int_{B_s(\o_0)}D\tilde{\phi},$$} 
it also holds
\begin{align*}\int_{B_{s}(\o_0)}|D\tilde{\phi}-(D\tilde{\phi})_{\o_0,s}|^2 \, d\o\le 
C\kappa^{3}e^{3\Lambda\kappa} {s}_\delta^{-3\delta}{s}^{2+2\alpha'} 
\end{align*}
{\color{black}if we further impose} $\delta\in(2/3, 1)$ and $\alpha'=3\delta/2-1$. 

\noindent Applying Theorem 3.1 in \cite{HL}, together with (\ref{eq-L2-estimate-gradient}), we deduce that, for any $\o\in B_{1/2}$ 
\begin{equation}\label{C11}|D\tilde{\phi}(\o)|\le C\big[\kappa^{\f32}e^{\f32\Lambda\kappa} {s}_\delta^{-\f{3\d}{2}}+\kappa e^{\Lambda\kappa}\big]
\leq C \kappa^{\f32}e^{\f32\Lambda\kappa} {s}_\delta^{-\f{3\d}{2}},\end{equation}
and for any $\o_1,\o_2\in B_{1/2}$ satisfying $|\o_1-\o_2|<{s}_\delta$,
$$|D\tilde{\phi}(\o_1)-D\tilde{\phi}(\o_2)|\le C\kappa^{\f32}e^{\f32\Lambda\kappa} {s}_\delta^{-\f{3\d}{2}}|\o_1-\o_2|^{\alpha'}.$$
In addition, for any $\o_1,\o_2\in B_{1/2}$ we also have
\begin{equation}\label{C12}
|D\tilde{\phi}(\o_1)-D\tilde{\phi}(\o_2)|\le C\kappa^{\f32}e^{\f32\Lambda\kappa} {s}_\delta^{-\f{3\d}{2}}{s}_\delta^{-\alpha'}
|\o_1-\o_2|^{\alpha'}.
\end{equation}
Recall that ${s}_\delta=(C_1 e^{\kappa})^{-C_2 e^{4\Lambda \kappa}}$
and we have $0<\alpha'<1/2$. We now choose $\alpha'=1/3$. Utilizing (\ref{C11}) and (\ref{C12}), we hence conclude
\begin{equation}\label{eq-estimate-gradient}
|D\tilde{\phi}|_{C^{1/3}(B_{1/2})}\le (Ce^{\kappa})^{Ce^{2\Lambda\kappa}}.\\
\end{equation}

\end{proof}

We further prove the following Schauder estimates. 
We work under the assumption 
\begin{align}\label{eq-assumption-a-ij-Lambda-2}
|D^2_{\o}a_{kl}(\o,\tilde{\phi})|+|D_{\o}\partial_{\tilde{\phi}} a_{kl}(\o,\tilde{\phi})|+|\partial^2_{\tilde{\phi}} a_{kl}(\o,\tilde{\phi})|\le \Lambda_2
\end{align} 
with $\Lambda_2$ being a positive constant. 

\begin{proposition}\label{theorem-Holder-second derivative} 
Suppose $\tilde{\phi}\in H^1(B_1)$ is a positive solution to
\eqref{eq-main-nonlinear} with \eqref{eq-main-bound-u} satisfied, and $f\in C^{1/3}(B_1)$.  
Then, it holds 
\begin{equation}\label{eq-estimate-second-derivative}
|\tilde{\phi}|_{C^{2,1/3}(B_{1/2})}\le (Ce^{\kappa})^{Ce^{4\Lambda\kappa}}.\end{equation}
Here $C$ is a positive constant depending only on $\Lambda$, {\color{black} $\Lambda_1$, $\Lambda_2$,  
the $C^{1}$-norms of $\sqrt{|\gamma|}$, $c_{ij}$, $c_i$, and $c_0$ on 
$B_1\times \mathbb R_+$,}
and the $C^{1/3}$-norm of $f$ on $B_1$.
\end{proposition}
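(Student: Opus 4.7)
\medskip

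\noindent\textbf{Proof proposal for Proposition \ref{theorem-Holder-second derivative}.}
The plan is to bootstrap from the $C^{1,1/3}$ bound already established in Proposition \ref{theorem-Holder-gradient} by freezing the dependence of the coefficients on $\tilde{\phi}$ along the known solution, thereby reducing \eqref{u equation lambda main} to a linear divergence-form equation on $B_1$ whose coefficients and right-hand side are H\"older continuous in $\o$. Concretely, define
\[
\tilde{a}_{ij}(\o):=a_{ij}(\o,\tilde{\phi}(\o)),\qquad h(\o):=\sqrt{|\gamma|(\o,\tilde{\phi}(\o))}\Bigl[1-\tfrac12 f(\o)e^{\tilde{\phi}(\o)}+F_1(\o,\tilde{\phi},D\tilde{\phi})+F_2(\o,\tilde{\phi},D\tilde{\phi})\Bigr],
\]
so that $-\partial_j(\tilde{a}_{ij}\partial_i\tilde{\phi})=h$ in $B_1$. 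Since Proposition \ref{theorem-Holder-gradient} yields $|\tilde{\phi}|_{C^{1,1/3}(B_{1/2})}\le (Ce^\kappa)^{Ce^{4\Lambda\kappa}}$, the assumptions \eqref{eq-assumption-a-ij-Lambda-1} and \eqref{eq-assumption-a-ij-Lambda-2} together with the chain rule give
\[
|\tilde{a}_{ij}|_{C^{1/3}(B_{1/2})}\le \Lambda+C(\Lambda_1)\,|\tilde{\phi}|_{C^{0,1/3}(B_{1/2})}\le (Ce^\kappa)^{Ce^{4\Lambda\kappa}},
\]
and uniform ellipticity $\Lambda^{-1}I\le(\tilde{a}_{ij})\le \Lambda I$ is preserved from the hyperbolic estimates.

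Next, I would show that $h\in C^{1/3}(B_{1/2})$ with a quantitative bound. The factor $\sqrt{|\gamma|(\o,\tilde{\phi}(\o))}$ is $C^{1/3}$ by the same chain-rule argument using the $C^1$ bound on $\sqrt{|\gamma|}$; the term $f(\o)e^{\tilde{\phi}(\o)}$ is $C^{1/3}$ by the hypothesis $f\in C^{1/3}(B_1)$ and $\tilde{\phi}\in C^{1,1/3}$; and the nonlinear pieces $F_1$ and $F_2$ (see \eqref{F1}, \eqref{F2}) are polynomials in $D\tilde{\phi}$ with coefficients that depend smoothly on $\o$, $\tilde{\phi}$, $R$, and the hyperbolic Ricci/curvature quantities, all of which are controlled in $C^{1/3}$ on the relevant sphere (the dependence on $\o$ alone enters through $\gamma$, $\O\omb$, $\O\chibh$, $\eta$, $\O\tr_g\chib$, $\O^{-1}\tr_g\chi$, which are smooth up to $\ub$ by the results of Sections \ref{secRicci}--\ref{elliptic estimates}, in particular the $C^{1,1/3}$ regularity in $\o$ of $R$ itself via the composition). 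Putting these together yields
\[
|h|_{C^{1/3}(B_{1/2})}\le (Ce^\kappa)^{Ce^{4\Lambda\kappa}}.
\]

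With the linearized equation $-\partial_j(\tilde{a}_{ij}\partial_i\tilde{\phi})=h$ now posed in divergence form with $C^{1/3}$ coefficients, uniform ellipticity, $C^{1/3}$ right-hand side, and $\tilde{\phi}\in C^{1,1/3}\cap L^\infty$ whose $L^\infty$ and $C^{1,1/3}$ norms are already controlled by $(Ce^\kappa)^{Ce^{4\Lambda\kappa}}$, I would apply interior Schauder estimates (for example, the divergence-form version of Theorem 8.32 combined with Theorem 6.2 of Gilbarg--Trudinger, or Theorem 3.13 of \cite{HL}) on concentric balls $B_{1/2}\Subset B_{3/4}\Subset B_1$. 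These yield
\[
|\tilde{\phi}|_{C^{2,1/3}(B_{1/2})}\le C_{\mathrm{Sch}}\Bigl(|h|_{C^{1/3}(B_{3/4})}+|\tilde{\phi}|_{C^{0}(B_{3/4})}\Bigr),
\]
where $C_{\mathrm{Sch}}$ depends only on $\Lambda$ and on $|\tilde{a}_{ij}|_{C^{1/3}(B_{3/4})}$; because the H\"older semi-norm of $\tilde{a}_{ij}$ already enters the Schauder constant multiplicatively, one must track $\kappa$-dependence carefully here. After this tracking, both factors combine to give the stated bound $(Ce^\kappa)^{Ce^{4\Lambda\kappa}}$.

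The main obstacle I foresee is exactly this last bookkeeping step: the Schauder constant $C_{\mathrm{Sch}}$ depends on the H\"older norm of $\tilde{a}_{ij}$, which is itself of order $(Ce^\kappa)^{Ce^{4\Lambda\kappa}}$, so a naive product would worsen the double-exponential. The resolution is to exploit that $C_{\mathrm{Sch}}$ depends only polynomially (not exponentially) on $|\tilde{a}_{ij}|_{C^{1/3}}$, so that a factor of $(Ce^\kappa)^{Ce^{4\Lambda\kappa}}$ raised to a fixed power is still of the same form $(Ce^\kappa)^{C'e^{4\Lambda\kappa}}$ after possibly enlarging the constant $C$ in the exponent. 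One also needs to verify that the quadratic gradient nonlinearity $F_1+F_2$ does not introduce additional $\kappa$-growth beyond what is already encoded in the $C^{1,1/3}$ bound on $\tilde{\phi}$, which follows from \eqref{F1andF2estimate} and the fact that $|D\tilde{\phi}|_{C^{1/3}}$ was already estimated in Proposition \ref{theorem-Holder-gradient}. This yields the desired estimate \eqref{eq-estimate-second-derivative}.
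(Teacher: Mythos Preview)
Your proposal is correct and follows essentially the same route as the paper. The only cosmetic difference is that the paper passes immediately to the nondivergence form
\[
a_{ij}(\o,\tilde{\phi})\,\partial_{ij}\tilde{\phi}=h(\o),\qquad h=-\partial_j a_{ij}\,\partial_i\tilde{\phi}-\partial_{\tilde{\phi}}a_{ij}\,\partial_i\tilde{\phi}\,\partial_j\tilde{\phi}-\sqrt{|\gamma|}\bigl[1-\tfrac12 fe^{\tilde{\phi}}+F\bigr],
\]
and then applies the standard interior Schauder estimate directly, whereas you begin in divergence form and invoke the combination of Theorems 8.32 and 6.2 in \cite{GT}; once you expand the divergence, the two computations coincide. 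Your explicit bookkeeping of the $\kappa$-dependence of the Schauder constant (via polynomial dependence on $|\tilde a_{ij}|_{C^{1/3}}$, so that powers of $(Ce^\kappa)^{Ce^{4\Lambda\kappa}}$ are absorbed into the same double-exponential form) is a point the paper leaves implicit in the sentence ``$C_*$ depending only on $\Lambda$, and the $C^{1/3}$-norm of $a_{ij}$ on $B_{3/4}$,'' so on that issue you are in fact more careful than the paper's own proof.
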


\begin{proof} 
We have already proved $\tp\in C^{1, 1/3}(B_{1/2})$ and we can write \eqref{eq-main-nonlinear} in nondivergence form
\begin{equation}\label{Theorem 4.7 application}
a_{ij} (\o,\tilde{\phi}) \partial_{ij}\tilde{\phi} = {h}(\omega)  
\end{equation}
with 
\begin{equation*}
{h}(\o)=-\partial_{j}a_{ij}(\o,\tilde{\phi})\partial_i \tilde{\phi}
-\partial_{\tilde{\phi}} a_{ij}(\o,\tilde{\phi})\partial_i \tilde{\phi}\partial_j \tilde{\phi}- \sqrt{|\gamma|}\Big[1-\f12fe^{\tilde{\phi}}+F(\o, \tilde{\phi}, D\tp)\Big].
\end{equation*}
We now treat equation \eqref{Theorem 4.7 application} as a linear equation of $\tilde{\phi}$, and consider ${h}$ as a given term defined in $B_1$. 
Using the expression of $F$ and Proposition \ref{theorem-Holder-gradient}, we get 
$$|{h}|_{C^{1/3}(B_{3/4})}\le (Ce^{\kappa})^{Ce^{4\Lambda\kappa}}.$$
Employing the standard interior Schauder estimate, we then obtain  
\begin{equation*}
|\tilde{\phi}|_{C^{2,1/3}(B_{1/2})}\le C_*\big[|\tilde{\phi}|_{L^\infty(B_{3/4})}+|h|_{C^{1/3}(B_{3/4})}\big]
\le C_*(Ce^{\kappa})^{Ce^{4\Lambda\kappa}}\end{equation*}
with $C_*>0$ a constant depending only on $\Lambda$, and the $C^{1/3}$-norm of $a_{ij}$ on $B_{3/4}$. 
\end{proof} 

Gathering the above three propositions just obtained, we are ready to prove  

\begin{theorem}\label{theorem-Schauder-phi}
With $\mathcal S(\tilde{\phi})$ being defined as in \eqref{S=0} and $\kappa$ being a positive constant, if we assume $\tilde{\phi}$ is a solution to $\mathcal S(\tilde{\phi})=0$ satisfying 
$0\le \tilde{\phi}\le\kappa$ on $\mathbb S^2$,  
then it holds
$$|\tilde{\phi}|_{C^{2, 1/3}(\mathbb{S}^2)}\leq (Ce^{\kappa})^{Ce^{4\Lambda\kappa}},$$
with $\Lambda$ being the ellipticity constant for the metric component $\gamma_{ij}$ and $C$ being a 
positive constant depending only on
$$\gamma, \chibh, \eta, \, \alpha'_1=-(\f12\O\tr_g\chib+\f{1}{R}), \,\a_2=\f12\O^{-1}\tr_g\chi-\f{1-\ms}{R}+\f{\ub a f(\o, \ub)}{2\O^2_0 R^2}(1-\ms)$$  
and their angular derivatives.
\end{theorem}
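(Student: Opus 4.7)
The plan is to deduce the theorem from the three local propositions that precede it (Propositions \ref{theorem-Holder-solution}, \ref{theorem-Holder-gradient}, \ref{theorem-Holder-second derivative}) by a standard covering argument on $\mathbb{S}^2$, taking care that the structural coefficients $\Lambda$, $\Lambda_1$, $\Lambda_2$ appearing in those local estimates are controlled by the quantities $\gamma$, $\chibh$, $\eta$, $\alpha_1'$, $\alpha_2$ listed in the statement, and that the exponential dependence on $\kappa$ is the worst that arises.

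First I would translate the global equation $\mathcal S(\tp,\ub)=0$ on $\mathbb{S}^2$ into the local divergence form \eqref{eq-main-nonlinear} used in the preceding propositions. Fix a finite atlas $\{(U_\nu, \psi_\nu)\}_{\nu=1}^N$ of $\mathbb{S}^2$ such that each chart contains a coordinate ball $B_1$, and the finite collection $\{\psi_\nu^{-1}(B_{1/2})\}$ still covers $\mathbb{S}^2$; $N$ depends only on $\gamma$. In each chart, equation \eqref{MOTS eqn}-\eqref{F2} can be written as
\begin{equation*}
-\partial_j(a_{ij}(\o,\tp)\partial_i\tp)=\sqrt{|\gamma|}\bigl[1-\tfrac12 f e^{\tp}+F_1+F_2\bigr],
\end{equation*}
where $a_{ij}=\sqrt{|\gamma|}\gamma^{ij}$. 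The hyperbolic estimates in Sections \ref{secRicci}--\ref{energy estimate curvature} guarantee the ellipticity bound $\Lambda^{-1}I\le(a_{ij})\le\Lambda I$ and the $C^2$ regularity of $a_{ij}(\o,\tp)$, with bounds $\Lambda_1,\Lambda_2$ that depend only on $\gamma$ and its angular derivatives. Crucially, using $R\O\omb\ge \ms/[8(1-\ms)]>0$ together with the coefficient estimates for $\alpha_1'$ and $\alpha_2$, the inhomogeneity $F_1+F_2$ obeys the structural bound $|F_1|+|F_2|\le |D\tp|^2+1$ from \eqref{F1andF2estimate}, which is the only input the local theory requires.

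Next I would apply the three propositions in each chart. Proposition \ref{theorem-Holder-solution} produces $[\tp]_{C^\alpha(B_{1/2})}\le Ce^\kappa$ with $\alpha=\epsilon_0 e^{-4\Lambda\kappa}$; Proposition \ref{theorem-Holder-gradient} upgrades this to $|\tp|_{C^{1,1/3}(B_{1/2})}\le (Ce^\kappa)^{Ce^{2\Lambda\kappa}}$; Proposition \ref{theorem-Holder-second derivative} then yields $|\tp|_{C^{2,1/3}(B_{1/2})}\le(Ce^\kappa)^{Ce^{4\Lambda\kappa}}$. The constants throughout depend only on $\Lambda,\Lambda_1,\Lambda_2$ and the $C^{1/3}$ norm of $f$, hence only on the quantities listed in the theorem. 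Taking the maximum over the finitely many charts $\psi_\nu$, together with the transition functions between overlapping charts (whose $C^{2,1/3}$ norms are controlled by $\gamma$), yields the global bound
\begin{equation*}
|\tp|_{C^{2,1/3}(\mathbb{S}^2)}\le (Ce^\kappa)^{Ce^{4\Lambda\kappa}},
\end{equation*}
with $C$ depending only on the listed data.

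The main technical point, rather than an obstacle, is keeping the $\kappa$-tracking consistent across the three steps: the exponent $e^{4\Lambda\kappa}$ produced by the oscillation argument in Proposition \ref{theorem-Holder-solution} propagates into the radius $s_\delta=(C_1 e^\kappa)^{-C_2 e^{4\Lambda\kappa}}$ used in the Campanato-type gradient estimate, and ultimately determines the exponent in Proposition \ref{theorem-Holder-second derivative}. Since the covering argument only introduces a multiplicative constant $N$ absorbed into $C$, the final exponent $(Ce^\kappa)^{Ce^{4\Lambda\kappa}}$ is unchanged by patching. I would conclude by noting that no new estimate is needed for the passage from local to global: the global $C^{2,1/3}$ norm on $\mathbb{S}^2$ is equivalent, up to constants depending only on $\gamma$, to the supremum over the finite atlas of the local $C^{2,1/3}$ norms.
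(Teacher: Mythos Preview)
Your proposal is correct and follows essentially the same approach as the paper: pass to local charts, write the equation in the divergence form \eqref{u equation lambda main} with $a_{ij}=\sqrt{|\gamma|}\gamma^{ij}$, verify the hypotheses of Propositions \ref{theorem-Holder-solution}--\ref{theorem-Holder-second derivative}, and patch. The paper's proof is slightly more explicit on one point you gloss over: since $a_{ij}$ and the coefficients $c_{ij},c_i,c_0$ of $F_2$ depend on $\tp$ through $u=-R=-(\ub a)^{1-\ms}e^{-(1-\ms)\tp}$, verifying the bounds $\Lambda_1,\Lambda_2$ and the $C^1$-norms required in Proposition \ref{theorem-Holder-second derivative} amounts to controlling not only angular derivatives but also $\partial_{\tp}=R\partial_u$ derivatives of $\gamma^{ij}$, $R^{-1}\Omega\chibh_{kl}$, $\eta_k$, $R\alpha_1'$, $R\alpha_2$; the paper lists these explicitly and notes they are all bounded by $e^{\tp}a^{-1/3}\O^{\t\d}$ from the hyperbolic estimates, which is how the dependence on the quantities in the theorem statement actually enters.
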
  

\begin{proof}
We first fix an arbitrary ball $B$ on $\mathbb S^2$. In local coordinates within $B$, 
we have \eqref{u equation lambda main} and 
$$a_{ij}(\o,\tilde{\phi})=\sqrt{|\gamma(u, \o)|}\gamma^{ij}(u, \o).$$
Here $u=-R=-(\ub a)^{1-\ms}e^{-(1-\ms)\tilde{\phi}}.$ Recall that {\color{black} $F=F_1+F_2$ with $F_1=-4(1-\ms)R\O\omb|\nab_{\gamma}\tilde{\phi}|^2$} and $F_2$ being a linear combination of $\nab_i\tilde{\phi}\nab_j\tilde{\phi}, \nab_i\tilde{\phi}, 1$ with the below coefficients 
$$\gamma^{ij}, \quad R^{-1}\Omega\chibh_{kl}, \quad \eta_i, \quad R\alpha'_1, \quad R\alpha_2.$$
The coefficients of $F_2$ are functions of $\o$ and $u=-R=-(\ub a)^{1-\ms}e^{-(1-\ms)\tilde{\phi}}$. To control the $C^{1/3}$-norm of $F_2$, we estimate these coefficients and their derivatives with respect to $\o$ and $\tilde{\phi}$. 
With $\partial_{\tilde{\phi}}=R\partial_u$, we hence need to bound
\begin{align*} 
&R^{-1}\Omega\chibh_{kl},\, \eta_k,\, {\color{black}R\alpha'_1},\, R\alpha_2,\\
&R^{-1}\nab_i(\Omega\chibh_{kl}),\, \nab_i\eta_k,\, {\color{black}R\nab_i\alpha'_1},\, R\nab_i\alpha_2,\\
&\partial_u(\Omega\chibh_{kl}),\, R\partial_u\eta_k,\, {\color{black}R^2\partial_u\alpha'_1},\, R^2\partial_u\alpha_2
\end{align*}
with 
\begin{equation}\label{alpha2} 
\alpha'_1:=-(\f12\O\tr_g\chib+\f{1}{R}),\quad \a_2:=\f12\O^{-1}\tr_g\chi-\f{1-\ms}{R}+\f{\ub a f(\ub, \o)}{2\O^2_0 R^2}(1-\ms). 
\end{equation}
By the hyperbolic estimates, all these quantities are bounded by $e^{\tilde{\phi}}a^{-\f13}{\color{black}\O^{\t\d}}$. 
The conditions in Proposition \ref{theorem-Holder-solution}, 
Proposition \ref{theorem-Holder-gradient}, and Proposition \ref{theorem-Holder-second derivative} all hold. Therefore, we derive the desired $C^{2, 1/3}$ estimates for solution $\tp$ to the equation of MOTS.
\end{proof}   

\section{The Linearized Equation}\label{linearization}  
To derive and to analyze the linearized equation, in contrast to \cite{AH} and previous sections, here we recast the equation of MOTS with a different form and carry out the associated linear analysis.

For $u=-R(\o, \ub)$ being the MOTS along $\Hb_{\ub}$, we set $R=(\ub a)^{1-\ms}\cp^{\lambda}$ with $\lambda$ to be determined later. Then, it holds
$$\nab_g R=\lambda(\ub a)^{1-\ms}\cp^{\lambda-1}\nab_g\cp=\lambda R\cp^{-1}\nab_g\cp,$$
which implies
\begin{align*}
\div(\nab_g R)=&\div(\lambda R \cp^{-1}\nab_g\cp)=(\lambda^2-\lambda)R\cp^{{\color{black}-2}}|\nab_g\cp|^2+\lambda R \cp^{-1}\Delta_g\cp.
\end{align*}
Thus, with the aid of \eqref{eq-expression-cal-G-ub}, the equation of MOTS $0=\M G(R,\ub)$ is changed into
\begin{align*}
0=&\Delta_g R+\f12\O\tr_g\chib|\nab_g R|^2+4\O\omb|\nab_g R|^2-\f12\O^{-1}\tr_g\chi+2g(\eta,\nab_g R){\color{black}+2\O\chibh(\nab_g R, \nab_g R)}\\
=&[-\lambda+\f{\ms}{1-\ms}\lambda^2]R\cp^{-2}|\nab_g\cp|^2+\lambda R\cp^{-1}\Delta_g\cp-\f12\O^{-1}\tr_g\chi+2\lambda R\cp^{-1}g(\eta,\nab_g \cp)\\
&{\color{black}+(\f12\O\tr_g\chib+\f{1}{R})\lambda^2 R^2\cp^{-2}|\nab_g\cp|^2+[4\O\omb-\f{\ms}{1-\ms}\cdot\f{1}{R}]\lambda^2 R^2\cp^{-2}|\nab_g\cp|^2+2\O\chibh(\nab_g R, \nab_g R).}
\end{align*}
We now pick 
$$\lambda=\f{1-\ms}{\ms}, \mbox{ which satisfies } -\lambda+\f{\ms}{1-\ms}\lambda^2=0.$$
This renders the above equation to be 
\begin{align*}
0=&\lambda R\cp^{-1}\Delta_g\cp-\f12\O^{-1}\tr_g\chi+2\lambda R\cp^{-1}g(\eta,\nab_g \cp)+(\f12\O\tr_g\chib+\f{1}{R})\lambda^2 R^2\cp^{-2}|\nab_g\cp|^2\\
&{\color{black}+[4\O\omb-\f{\ms}{1-\ms}\cdot\f{1}{R}]\lambda^2 R^2\cp^{-2}|\nab_g\cp|^2+2\O\chibh(\nab_g R, \nab_g R).}
\end{align*}
{\color{black}With Christodoulou's naked-singularity initial data, we note that $4\O\omb(-R,0)=\f{\ms}{1-\ms}\cdot\f{1}{R}$}. Hence we further define $\a_1$ as below and recall $\a_2$ given in \eqref{alpha2}
$$\a_1:=-(\f12\O\tr_g\chib+\f{1}{R}){\color{black}-[4\O\omb-(4\O\omb)_0],}\footnote{\label{alpha1 perturbed data}For perturbed  initial data, in $\alpha_1$ there is an additional term, i.e., $4\O\omb(u,0)-\f{\ms}{1-\ms}\cdot\f{1}{|u|}$. Its absolute value is bounded by $\f{1}{|u|}\cdot\f{\O(u,0)^{\t\d}}{a^{\f13}}$ and in later arguments, this additional term obeys a desired control.}$$
$$\a_2:=\f12\O^{-1}\tr_g\chi-\f{1-\ms}{R}+\f{\ub a f(\o, \ub)}{2\O^2_0 R^2}(1-\ms). $$
Together with $R=(\ub a)^{1-\ms}\cp^{\lambda}$, $\gamma=R^{-2}g$, $\Delta_g\cp=R^{-2}\Delta_\gamma \cp$ and $\O^2_0(R)=\O^2(u,0)=R^{\f{\ms}{1-\ms}}$, the above equation infers
\begin{align*}
&\lambda\cp^{-1}\Delta_\gamma \cp-(1-\ms)+(1-\ms)\f{f(\o, \ub)}{2}\cp^{-\f{\lambda}{1-\ms}}{\color{black}+}2\lambda R^2\cp^{-1}g(\eta,\nab_g \cp)\\
&-\lambda^2 R^3\cp^{-2}|\nab_g \cp|^2\a_1-R\a_2{\color{black}+2R\O\chibh(\nab_g R, \nab_g R)}=0.
\end{align*}
With $\lambda=(1-\ms)/\ms$, we now define 
\begin{equation}\label{Stp}
\begin{split}
S(\cp, \o):=&\Delta_\gamma \cp-\ms \cp+\f{\ms}{2}f(\o, \ub)\cp^{1-\f{1}{\ms}}{\color{black}+}\f{2\ms}{1-\ms}\lambda R^2g(\eta,\nab_g \cp)\\
&-\f{\ms}{1-\ms}\lambda^2 R^3\cp^{{\color{black}-1}}|\nab_g \cp|^2\a_1-\f{\ms}{1-\ms}\cp R\a_2{\color{black}+\f{2\ms}{1-\ms}\cp R\O\chibh(\nab_g R, \nab_g R)}.
\end{split}
\end{equation} 
And for $2$-sphere $S_{-(\ub a)^{1-\ms}\cp^{\f{1-\ms}{\ms}}, \,\ub}$, we have that, if
$$S(\cp, \o) 
\begin{cases}
<0, & \text{it corresponds to an untrapped surface},\\
=0, & \text{it corresponds to a MOTS},\\
>0, & \text{it corresponds to a trapped surface}.
\end{cases}$$ 
 
For the operator $\mathcal S(\cp, \o)$ defined in \eqref{Stp}, it further holds  

\begin{proposition}\label{form of linearization} 
Suppose $\cp$ is a given function on $\mathbb S^2$. 
The linearized operator $\partial_{\cp} \mathcal S(\cp)$ can be expressed as
\begin{equation}\label{eq-linearization-S}
\partial_{\cp} \mathcal S(\cp)[w]
=\Delta_{\gamma}w-\ms w-\f{1-\ms}{2} f(\o)\cp^{-\f{1}{\ms}}w+c^{i}\nab_i w+cw.
\end{equation}
Here the connection is with respect to the metric $\gamma$, and functions
$c=c(\o,\cp,\ub)$ and $c^i=c^i(\o,\cp,\ub)$ satisfy 
$$|c|+|c^i|\leq Ca^{-\f13}{\color{black}\O^{\t\d}} {\color{black} (|\cp^{-1}\nabla^2_{\gamma}\cp|+|\cp^{-1}\nabla_\gamma\cp|^2+1)}.$$
\end{proposition}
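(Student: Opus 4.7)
The plan is to compute $\partial_{\cp}\mathcal{S}(\cp)[w]=\tfrac{d}{ds}\big|_{s=0}\mathcal{S}(\cp+sw,\o)$ term by term using the definition \eqref{Stp}, identifying the ``main'' contributions $\Delta_{\gamma}w-\ms w-\tfrac{1-\ms}{2}f\cp^{-1/\ms}w$ in \eqref{eq-linearization-S} and absorbing everything else into $c^{i}\nab_{i}w+cw$. The delicate point is that $\gamma$ itself depends on $\cp$ through the location $u=-R$ of the sphere, with $R=(\ub a)^{1-\ms}\cp^{\lambda}$, so every Laplacian, gradient, and coefficient (namely $\eta$, $\a_{1}$, $\a_{2}$, $\O\chibh$, $\O\tr_{g}\chib$, $\O^{-1}\tr_{g}\chi$) must be varied with respect to $\cp$ via $\dot{R}=\lambda R\cp^{-1}w$.

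The three main-term identifications are immediate. First, $\partial_{\cp}(-\ms\cp)[w]=-\ms w$. Second, $\partial_{\cp}\big(\tfrac{\ms}{2}f\cp^{1-1/\ms}\big)[w]=\tfrac{\ms}{2}f\cdot(1-1/\ms)\cp^{-1/\ms}w=-\tfrac{1-\ms}{2}f\cp^{-1/\ms}w$. Third, applying Lemma \ref{lemma-variation-Laplacians} to $\Delta_{\gamma}\cp$ along the family $\cp_{s}=\cp+sw$ with $\dot{\gamma}_{jl}=-\lambda R\cp^{-1}w\cdot\partial_{u}\gamma_{jl}|_{u=-R}$ yields $\Delta_{\gamma}w$ plus the three remainder terms
\begin{equation*}
-\gamma^{ij}\gamma^{kl}\dot{\gamma}_{jl}\nab^{2}_{ik}\cp,\qquad
-\gamma^{ij}\gamma^{kl}\nab_{i}\dot{\gamma}_{jl}\nab_{k}\cp,\qquad
\tfrac{1}{2}\gamma^{ij}\gamma^{kl}\nab_{l}\dot{\gamma}_{ij}\nab_{k}\cp.
\end{equation*}
Using \eqref{Lambda-gamma-2}--\eqref{Lambda-gamma-3} together with the identity $R\O_{0}^{2}=\ub a\cp^{1/\ms}$ (which supplies the smallness $\ub\at/|u|\lesssim a^{-1/3}\O^{\t\d}$, exactly as in Lemma \ref{lemma-difference-Laplacians}), these three terms are absorbed into $cw+c^{i}\nab_{i}w$: the $\dot{\gamma}\cdot\nab^{2}\cp$ piece produces the $|\cp^{-1}\nab^{2}_{\gamma}\cp|$ contribution to $|c|$; splitting $\nab\dot{\gamma}$ into a $(\nab w)$-part and a $w\cdot(\nab\cp)$-part (the latter coming from $\nab R=\lambda R\cp^{-1}\nab\cp$ and the chain rule hitting $\partial_{u}\gamma$) and pairing with $\nab\cp$ produces the $|\cp^{-1}\nab_{\gamma}\cp|^{2}$ contribution to $|c|$ and a $|\cp^{-1}\nab_{\gamma}\cp|$ contribution to $|c^{i}|$, the latter absorbed into $1+|\cp^{-1}\nab_{\gamma}\cp|^{2}$ by Cauchy--Schwarz.

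The remaining four terms in \eqref{Stp} are polynomials in $R$, $\cp$, $\nab_{g}\cp$ whose coefficients are evaluated at $u=-R$. Their differentiation in $\cp$ produces (i) $\nab_{g}w$-factors whenever the derivative hits $\nab_{g}\cp$, which enter $c^{i}\nab_{i}w$; (ii) $w$-factors when it hits the explicit $R^{k}\cp^{m}$ prefactors; and (iii) $w$-factors weighted by $\dot{R}=\lambda R\cp^{-1}w$ when it hits the coefficients through $\partial_{u}$. The quantitative bound then follows from the crucial hyperbolic cancellations at $\ub=0$: the combinations $\tfrac{1}{2}R\O\tr_{g}\chib+1$, $R\alpha_{1}$, $R\alpha_{2}$, and $R\O\omb-\tfrac{\ms}{4(1-\ms)}$ all vanish at $\ub=0$, so by the estimates in Sections \ref{secRicci}--\ref{energy estimate curvature} each of them, together with $\eta$, $\O\chibh$, $\nab_{g}R$, and their $R\partial_{u}$-derivatives, is pointwise $O(a^{-1/3}\O^{\t\d})$. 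Since $\cp$ is bounded above and below on $\mathbb{S}^{2}$, the residual $\nab\cp$ factors are absorbed into $|\cp^{-1}\nab_{\gamma}\cp|^{2}+1$ via Cauchy--Schwarz.

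The main obstacle is the bookkeeping in the Laplacian variation: one must split $\dot{\gamma}$ and $\nab\dot{\gamma}$ into ``leading'' and ``correction'' pieces so that the nominally $O(1)$ prefactor $R|\partial_{u}\gamma|$ is in fact controlled by the small quantity $\ub\at/|u|\lesssim a^{-1/3}\O^{\t\d}$, exactly mirroring the argument of Lemma \ref{lemma-difference-Laplacians}. Once this reduction is set up, all remaining estimates follow from the uniform hyperbolic bounds established in earlier sections, and no new analytic input is needed to complete the proof.
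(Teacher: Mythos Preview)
Your overall strategy matches the paper's: decompose $\mathcal{S}(\cp)$ into $I_1,\dots,I_5$, extract the three main terms from $\dot{I}_1$ via Lemma~\ref{lemma-variation-Laplacians}, and absorb the rest into $c^i\nab_iw+cw$. The handling of the Laplacian variation and of the $\dot{\gamma}$-smallness is correct and agrees with the paper.

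However, there is a genuine gap in your treatment of the $R\partial_u$-derivatives, and you misidentify the main obstacle. You assert that because $R\alpha_1$, $R\alpha_2$, etc.\ vanish at $\ub=0$, the hyperbolic estimates of Sections~\ref{secRicci}--\ref{energy estimate curvature} automatically give $R^2\partial_u\alpha_1,\,R^2\partial_u\alpha_2=O(a^{-1/3}\O^{\t\d})$. This inference fails: the $\partial_u$-derivative is governed by null structure equations, not by the $\nab_4$-evolution estimates of those sections. Concretely, the null structure equation for $\partial_u(\O\tr\chib+\tfrac{2}{R})$ produces on the right the combination $-4\O\omb\cdot\O\tr\chib-(\partial_u\phi)^2$, whose leading part at $\ub=0$ is
\[
-4(\O\omb)_0(\O\tr\chib)_0-(\partial_u\phi)_0^2=\frac{2\ms}{(1-\ms)R^2}-\frac{2\ms}{(1-\ms)R^2}=0.
\]
This exact algebraic cancellation is specific to Christodoulou's naked-singularity data and is \emph{not} a consequence of the estimates you cite; without it, $R^2\partial_u\alpha_1$ would be $O(1)$ rather than small, and the bound on $|c|$ would fail for the $\dot{I}_4$ contribution. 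The paper isolates this cancellation explicitly and then controls the residual terms (including $\partial_u[\O\omb-(\O\omb)_0]$, which in turn requires bounds on $\O e_3(\O e_3\phi)$ and $\nab_3\eta$ derived on the spot). Similarly, bounding $\partial_u\alpha_2$ requires the difference estimate \eqref{e4 difference}, which is established later in Section~\ref{BV instability}, not in the hyperbolic sections you invoke.

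So the real obstacle is not the Laplacian bookkeeping but verifying the smallness of $R^2\partial_u\alpha_1$ and $R^2\partial_u\alpha_2$, each of which demands structure-specific cancellations beyond the general hyperbolic estimates.
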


\begin{proof} 

We first write
\begin{align}\label{eq-expression-S-I-1-5}\mathcal S(\cp)=I_1+I_2+I_3+I_4+I_5,\end{align}
with
\begin{align}\label{eq-expression-S-I-1}
I_1=\D_{\gamma}\cp-\ms\cp+\f{\ms}{2}f \cp^{1-\f{1}{\ms}},\end{align}
\begin{align}
\nonumber I_2=\f{2\ms}{1-\ms}{\color{black}\lambda^2\cp^{-1}} R^{-1}\Omega\chibh_{kl} \gamma^{ik}\gamma^{jl}\nab_i \cp\nab_j \cp, \quad
&I_3=\f{2\ms}{1-\ms} \lambda\gamma^{ij}\eta_i \nab_j \cp,\\
\label{eq-expression-S-I-2-5} I_4=-\f{\ms}{1-\ms}{\color{black}\lambda^2\cp^{-1}} R\alpha_1 |\nab_\gamma \cp|^2_{\gamma},\quad
&I_5=-\f{\ms}{1-\ms}\cp R\alpha_2. 
\end{align}
Then we take $\partial_{\cp}$ with respect to each $I_i$ with $i=1,2,3,4,5$. Recall $R=(\ub a)^{1-\ms}\cp^{\f{1-\ms}{\ms}}$ and we set
$$R(\epsilon)=(\ub a)^{1-\ms}(\cp+\epsilon w)^{\f{1-\ms}{\ms}}.$$
Denoting $\dot{\,}=\f{d\,}{d\epsilon}\Big|_{\epsilon=0}$, it then holds 
$$\dot{R}=R\cp^{-1}\f{1-\ms}{\ms}w, \quad \dot{u}=-R\cp^{-1}\f{1-\ms}{\ms}w, \quad \dot{\cp}=w.$$ 
We also have
$$\dot{\gamma}^{ij}=-\gamma^{ik}\gamma^{jl}\dot{\gamma}_{kl} \quad \mbox{ and } \quad \dot{\gamma}_{kl}=\dot{u}\partial_u \gamma_{kl}=-\f{1-\ms}{\ms}\cp^{-1}w[(R\O\tr_g\chib+2)\gamma_{kl}+2R^{-1}\O\chibh_{kl}].$$
For $I_1=\Delta_{\gamma}\cp-\ms\cp+\f{\ms}{2}f\cp^{1-\f{1}{\ms}}$, we further apply Lemma \ref{lemma-variation-Laplacians} to $\gamma$. Together with $\dot{\phi}=w$, we obtain 
\begin{equation}\label{eq-variation-Laplace}
\begin{split}
(\Delta_{\gamma}\cp)^{\boldsymbol\cdot}=&\Delta_\gamma w-\gamma^{ik}\gamma^{jl}\dot{\gamma}_{kl}\nabla^2_{ij}\cp
-\gamma^{ij}\gamma^{kl}\nabla_i\dot{\gamma}_{jl}\nabla_k\cp+\frac12\gamma^{ij}\gamma^{kl}\nabla_l\dot{\gamma}_{ij}\nabla_k\cp
\end{split}
\end{equation} 
and 
$$\dot{I}_1=\Delta_{\gamma}w-\ms w-\f{1-\ms}{2} f(\o)\cp^{-\f{1}{\ms}}w.$$
For $I_2=\f{2\ms}{1-\ms}{\color{black}\lambda^2\cp^{-1}} R^{-1}\Omega\chibh_{kl} \gamma^{ik}\gamma^{jl}\nab_i \cp\nab_j \cp$, with $\lambda=\f{1-\ms}{\ms}$ it holds
\begin{align*}
\dot{I}_2=&-2\lambda\cp^{-2}w R^{-1}\O\chibh_{kl}\gamma^{ik}\gamma^{jl}\nab_i\cp\nab_j\cp-2\lambda\cp^{-1}R^{-2}R\cp^{-1}\lambda w\O\chibh_{kl}\gamma^{ik}\gamma^{jl}\nab_i\cp\nab_j\cp\\
&+2\lambda\cp^{-1}R^{-1}\partial_{u}(\O\chibh_{kl})\cdot(-R\cp^{-1}\lambda w)\cdot \gamma^{ik}\gamma^{jl}\nab_i\cp\nab_j\cp+2\lambda\cp^{-1}R^{-1}\O\chibh_{kl}(\dot{\gamma}^{ik}\gamma^{jl}+\gamma^{ik}\dot{\gamma}^{jl})\nab_i\cp\nab_j\cp\\
&+2\lambda \cp^{-1}R^{-1}\O\chibh_{kl}\gamma^{ik}\gamma^{jl}(\nab_i w\nab_j \cp+\nab_i \cp\nab_j w),
\end{align*}
where the coefficient $c$ for $w$ and the coefficient $c_i$ for $\nab_i w$ satisfying 
$$|c|\leq \f{{\color{black}\O^{\t\d}}}{a^{\f13}}|\cp^{-1}\nab_j\cp|^2, \quad |c_i|\leq \f{{\color{black}\O^{\t\d}}}{a^{\f13}}|\cp^{-1}\nab\cp|.$$

\noindent For $I_3=-\f{2\ms}{1-\ms} \gamma^{ij}\eta_i \nab_j \cp$, we get
\begin{align*}
\dot{I}_3=&\f{2\ms}{1-\ms}\gamma^{ik}\gamma^{jl}\bigg(-\f{1-\ms}{\ms}\cp^{-1}w[(R\O\tr_g\chib+2)\gamma_{kl}+2R^{-1}\O\chibh_{kl}] \bigg)\eta_i\nab_i\cp\\
&-\f{2\ms}{1-\ms}\gamma^{ij}\partial_u\eta_i\cdot(-R\cp^{-1}\f{1-\ms}{\ms}w)\nab_j\cp-\f{2\ms}{1-\ms}\gamma^{ij}\eta_i\nab_j w.
\end{align*}
And on the right the coefficient $c$ for $w$ and the coefficient $c_i$ for $\nab_i w$ satisfy 
$$|c|\leq \f{{\color{black}\O^{\t\d}}}{a^{\f13}}|\cp^{-1}\nab_j\cp|, \quad |c_i|\leq \f{{\color{black}\O^{\t\d}}}{a^{\f13}}.$$ 

\noindent For $I_4=-\f{\ms}{1-\ms}{\color{black}\lambda^2\cp^{-1}} R\alpha_1 |\nab_\gamma \cp|^2$, we have the expression
\begin{align*}
\dot{I}_4=&\lambda\cp^{-1}R\partial_u\alpha_1 R\cp^{-1}\lambda w \gamma^{ij}\nab_i\cp\nab_j\cp+\lambda\cp^{-2}w R\alpha_1\gamma^{ij}\nab_i\cp\nab_j\cp\\
&-\lambda\cp^{-1}R\cp^{-1}\lambda w \alpha_1 \gamma^{ij}\nab_i\cp\nab_j\cp-\lambda\cp^{-1}R\alpha_1\dot{\gamma}^{ij}\nab_i\cp\nab_j\cp\\
&-\lambda\cp^{-1}R\alpha_1\gamma^{ij}(\nab_i w\nab_j \cp+\nab_i\cp\nab_j w).
\end{align*}
With precise calculations, we show that the size of the coefficients for $\o$ and $\nab_i\o$ are small. First recall 
$$\partial_u\alpha_1=-\f12\partial_u[\O\tr_g\chib+\f{2}{R}]-4\partial_u[\O\omb-(\O\omb)_0].$$
For the term $-\partial_u[\f12\O\tr_g\chib+\f{1}{R}]$, there are cancellations. By the null structure equation and $\partial_u(2R^{-1})-2R^{-2}=0$, it holds
\begin{equation*}
\begin{split}
&\partial_u(\O\tr\chib+\f{2}{R})+\f12(\O\tr\chib-\f{2}{R})(\O\tr\chib+\f{2}{R})\\
=&-4\O\omb\cdot\O\tr\chib-\O^2|\chibh|^2_g-(\partial_u\phi)^2\\
=&-4[\O\omb-(\O\omb)_0]\O\tr\chib-4(\O\omb)_0[\O\tr\chib-(\O\tr\chib)_0]-\O^2|\chibh|^2_g-[\partial_u\phi-(\partial\phi)_0]\partial_u\phi\\
&-(\partial_u\phi)_0[\partial_u\phi-(\partial_u\phi)_0]-4(\O\omb)_0(\O\tr\chib)_0-(\partial_u\phi)_0(\partial_u\phi)_0.
\end{split}
\end{equation*}
Using the hyperbolic part, we have improved estimates 
$$|\O\omb-(\O\omb)_0|\leq\f{\ub\at}{|u|^2}, \quad |\partial_u\phi-(\partial_u\phi)_0|\leq\f{\ub\at}{|u|^2}.$$
The remaining worrisome terms are 
$$-4(\O\omb)_0(\O\tr\chib)_0-(\partial_u\phi)_0(\partial_u\phi)_0.$$
They could be of size $R^{-2}$ and not small, which is dangerous. However, cancellations of Christodoulou's naked-singularity initial data help us. Along $\ub=0$, we have
\begin{align*}
-4(\O\omb)_0\cdot(\O\tr\chib)_0-(\partial_u\phi)_0^2=&-4\f{\ms}{4(1-\ms)}\cdot\f{1}{R}\cdot(\f{-2}{R})-2\f{\ms}{1-\ms}\cdot\f{1}{R^2}\\
=&\f{2\ms}{1-\ms}\cdot\f{1}{R^2}-\f{2\ms}{1-\ms}\cdot\f{1}{R^2}=0.
\end{align*}
Hence, the worrisome terms disappear.\footnote{\label{initial data cancellation 1}{\color{black}For perturbed initial data, after the cancellation of the leading terms, these worrisome items are of size $\O(-R, 0)^{\t\d}R^{-2}a^{-\f13}$, which are sufficiently small.}}  

The term $-4\partial_u[\O\omb-(\O\omb)_0]$ in $\partial_u\alpha_1$ also encodes this smallness. We have
\begin{align*}
\partial_u[\O\omb-(\O\omb)_0]=&\partial_u[\int_0^{\ub}(\O\nab_4)(\O\omb)(u,\ub',\theta_1, \theta_2)d\ub']=\int_0^{\ub}\partial_u(\O\nab_4)(\O\omb)(u,\ub',\theta_1, \theta_2)d\ub'\\
=&\int_0^{\ub}[\O\nab_3(-\O^2\eta\cdot\etb+\f12\O^2|\eta|^2+\f12\O^2\rho{\color{black}+\f16\O e_3\phi\cdot\O e_4\phi+\f{1}{12}\O^2 e_A\phi e^{A}\phi})](u,\ub',\theta_1, \theta_2)d\ub'.
\end{align*}

In the last line of above equation, except $\O e_3(\O e_3\phi)$ and $\nab_3\eta$, all the other terms can be explicitly expressed and obey desired estimates. To control $\O\nab_3(\O \nab_3\phi)$ term, for $h$ being a scalar function, we recall a fact
$$\O \nab_4(\O \nab_3 h)=\O\nab_3(\O\nab_4 h)-4\O^2\zeta\cdot\nab h.$$
Setting $h=\O e_3\phi$, we then have 
\begin{align*}
\O e_4 (\O e_3)(\O e_3\phi)=&\O e_3 (\O e_4)(\O e_3\phi)-4\O^2\zeta\nab(\O e_3\phi)\\
=&\O e_3[-\f12\O\tr\chib\cdot\O e_4\phi-\O^2\Delta_g\phi-\f12\O\tr\chi\cdot\O e_3\phi+2\O^2\etb^A e_A\phi]-4\O^2\zeta\nab(\O e_3\phi). 
\end{align*}
Using naked-singularity initial data and integrating along $e_4$ direction, we get
$$|\O e_3 (\O e_3\phi)|\lesssim \f{1}{|u|^2},$$
which is a desired estimate for here. To bound $\nab_3\eta$ term, we utilize  
$$\nab_3\eta=-\nab_3\etb+2\nab_3\nab\log\O.$$
There is the null structure equation for $\nab_3\etb$ and we also have
$$\nab_3\nab\log\O=\nab\nab_3\log\O+\f12(\eta+\etb)\nab_3\log\O-\chib\cdot\nab\log\O=-2\nab\omb-(\eta+\etb)\omb-\chib\cdot\nab\log\O.$$
These imply
$$|\nab_3\eta|\leq\f{\ub\at}{|u|^3}\leq \f{1}{|u|^2}\cdot\f{{\color{black}\O^{\t\d}}}{a^{\f13}},$$
which is a desired estimate for here as well. 

With the derived estimate above, a direct check shows that, for $\dot{I}_4$ the coefficient $c$ for $w$ and the coefficient $c_i$ for $\nab_i w$ obey 
$$|c|\leq \f{{\color{black}\O^{\t\d}}}{a^{\f13}}(\cp^{-1}\nab_j\cp)^2, \quad |c_i|\leq \f{{\color{black}\O^{\t\d}}}{a^{\f13}}|\cp^{-1}\nab_{\gamma}\cp|.$$  

We proceed to study $\dot{I}_5$. For $I_5=-\lambda^{-1}\cp R\alpha_2$ with 
\begin{align*}
\alpha_2=&[\f12\O^{-1}\tr_g\chi-\f{1-\ms}{R}+\f{\ub a f(\o, \ub)}{2\O^2_0 R^2}(1-\ms)](u, \ub, \o)\\
=&-\f12\int_0^{\ub}(\f12\tr\chi\tr\chi+|\chih|_g^2+\partial_4\phi\partial_4\phi)(u, \ub', \o)d\ub'\\
&+\f12\int_0^{\ub}\f{\O^{-2}(u,\ub', \o)}{|u|^2} \bigg(|\O\chih|_g^2(-1, \ub', \o)+[\O e_4\phi(-1, \ub', \o)-\O e_4\phi(-1, 0, \o)]^2 \bigg) d\ub'\\
=&-\f12\int_0^{\ub}\f{\O^{-2}}{2}(\O\tr\chi)^2(u,\ub', \o)d\ub'-\f12\int_0^{\ub}\f{\O^{-2}}{|u|^2}[|u\O\chih|^2_g(u, \ub', \o)-|\O\chih|^2_g(-1, \ub', \o)]d\ub'\\
&-\f12\int_0^{\ub}\f{\O^{-2}}{|u|^2}\bigg((u\O e_4\phi)^2(u, \ub', \o)-[\O e_4\phi(-1, \ub, \o)-\O e_4\phi(-1, 0, \o)]^2\bigg)d\ub',
\end{align*}
it holds
\begin{align*}
\dot{I}_5=\lambda^{-1}\cp R\partial_u \alpha_2\cdot R\cp^{-1}\lambda w-\lambda^{-1}w R\alpha_2-\lambda^{-1}\cp R\cp^{-1}\lambda w\alpha_2.
\end{align*}

To bound $\partial_u \alpha_2$, we first control
\begin{equation}\label{partialutrchitrchi}
\begin{split}
&\int_0^{\ub}\partial_u \tr\chi \cdot\tr\chi(u, \ub', \o)d\ub'\\
=&\int_0^{\ub}\bigg(-\f12\O\tr\chib \tr\chi+2\O\omb\tr\chi+2\O\rho-\O\chih\cdot\chibh\bigg)\tr\chi(u, \ub', \o)d\ub'\\
&+\int_0^{\ub}\bigg(2\O\div\eta+2\O|\eta|^2{\color{black}-\f13 \O e_3\phi e_4\phi+\f13 \O e_A\phi e^A\phi} \bigg)\tr\chi(u, \ub', \o)d\ub'
\end{split}
\end{equation}
Applying the improved estimate for $\tr\chi$, we bound the potentially dangerous terms
\begin{align*}
&\int_0^{\ub}(\O\tr\chib, \O\omb, \O e_3\phi)\cdot(\tr\chi, e_4\phi)\cdot\tr\chi(u, \ub', \o)d\ub'\\
\leq&\f{\ub\O^{-2}}{|u|}\cdot\f{\at}{|u|}\cdot(\f{\O}{|u|}+\O^{-1}\f{\ub a}{|u|^2})\\
=&\f{\ub a}{|u|^2\O^2}\cdot\f{a^{-\f12}\O}{|u|}+\f{\ub a}{|u|^2\O^2}\cdot\f{1}{|u|}\cdot\f{\ub\at}{|u|\O}\leq \f{1}{|u|^2}\cdot\f{\O^{\t\delta}}{a^{\f13}}.
\end{align*}
All the other terms in \eqref{partialutrchitrchi} obey even smaller upper bound.

For the remaining terms in $\partial_u\alpha_2$, we need to show that\footnote{\label{initial data cancellation 2}With perturbed initial data, we would get a similar bound.}
\begin{equation}\label{e4 difference}
||u|\O e_4\phi(u, \ub, \o)-[\O e_4\phi(-1, \ub, \o)-\O e_4\phi(-1, 0, \o)]|\leq \at\cdot a^{-\f13}\O^{{\color{black}\t\delta}}. 
\end{equation}
A detailed discussion in Section \ref{BV instability}, in particular \eqref{e4phi difference with initial data}, will give
\begin{equation*}
|u\O e_4\phi(u, \ub, \o)-[\O e_4\phi(-1, \ub, \o)-\O e_4\phi(-1, 0, \o)]|\lesssim \f{\ub a}{|u|}+|u|^{\f{\ms}{1-\ms}}\leq \at\cdot a^{-\f13}\O^{{\color{black}\t\delta}}.
\end{equation*}

With these estimates, we can check that $\partial_u\alpha_2$ obeys
$$|\partial_u \alpha_2|\leq \f{1}{|u|^2}\f{{\color{black}\O^{\t\d}}}{a^{\f13}}.$$
Back to $\dot{I}_5$, we conclude the coefficient $c$ for $w$ and the coefficient $c_i$ for $\nab_i w$ obey 
$$|c|+|c_i|\leq \f{{\color{black}\O^{\t\d}}}{a^{\f13}}.$$ 

\end{proof}

For the next step, we set $$L_0{\color{black}w}:=\Delta_{\gamma}w-\ms w+\f{\ms-1}{2}f(\o, \ub)\cp^{-\f{1}{\ms}}w.$$ 
Integrating $-wL_0 w$ on $\mathbb{S}^2$ and applying integration by parts, we obtain
$$\int_{\mathbb{S}^2}\ms w^2 d\o\leq \int_{\mathbb{S}^2}\bigg(|\nab_{\gamma}w|^2+\ms w^2+\f{1-\ms}{2}f(\o, \ub)\cp^{-\f{1}{\ms}}w^2\bigg)d\o=-\int_{\mathbb{S}^2}w L_0 w d\o.$$
This implies
$$\int_{\mathbb{S}^2}w^2 d\o\leq C\int_{\mathbb{S}^2}|L_0 w|^2 d\o.$$
We further let
$$<w_1, w_2>:=\int_{\mathbb{S}^2}\bigg(\nab_{\gamma}w_1\cdot \nab_{\gamma}w_2+\ms w_1 w_2+\f{1-\ms}{2}f(\o, \ub)\cp^{-\f{1}{\ms}}w_1 w_2\bigg)d\o.$$
Since $0<\ms<1$ and $f\geq 0$, we can use $<w_1, w_2>$ as an inner-product on $H^1(\mathbb{S}^2)$, equivalent to the standard $H^1(\mathbb{S}^2)$ inner product. The first eigenvalue of $-L_0$ can be expressed as
$$\mu_1=\inf\{ \int_{\mathbb{S}^2}[|\nab_{\gamma}w|^2+\ms w^2+\f{1-\ms}{2}f(\o, \ub)\cp^{-\f{1}{\ms}}w^2]d\o; \int_{\mathbb{S}^2}w^2 d\o=1\}.$$
Immediately, we see that $\mu_1\geq \ms$. And via using  Theorem 8.38 in \cite{GT}, we also have

\begin{lemma}\label{lemma-H1-estimates-L0} 
Assume $f$ and $\cp$ to be given nonnegative functions on $\mathbb S^2$ with \eqref{eq-assumption-lower-bound-f} satisfied. Then the equation $L_0w=0$ only admits the trivial solution. Moreover, the first eigenvalue $\mu_1$ of $-L_0$ is positive and simple and it has a positive eigenfunction.
\end{lemma}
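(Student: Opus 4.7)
The operator $-L_0 = -\Delta_\gamma + \ms + \tfrac{1-\ms}{2} f(\o,\ub)\cp^{-1/\ms}$ is a second-order linear self-adjoint elliptic operator on $\mathbb{S}^2$ with smooth coefficients (the zeroth-order coefficient is nonnegative since $0<\ms<1$ and $f\ge 0$, and moreover is bounded because $\cp$ is bounded below away from zero by the barrier construction in Section \ref{sec-Barriers}). My plan is to extract the three conclusions from this functional-analytic structure combined with the Rayleigh-quotient computation already recorded just before the statement.

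\textbf{Step 1: Positivity of $\mu_1$ and triviality of $L_0w=0$.} The variational characterization
\begin{equation*}
\mu_1 = \inf\Bigl\{\int_{\mathbb{S}^2}\bigl(|\nab_\gamma w|^2+\ms w^2+\tfrac{1-\ms}{2}f\cp^{-1/\ms}w^2\bigr)\,d\o \,:\, \|w\|_{L^2(\mathbb{S}^2)}=1\Bigr\}
\end{equation*}
together with $f\ge 0$ and $0<\ms<1$ immediately gives $\mu_1\ge \ms>0$. In particular, $0$ is not an eigenvalue of $-L_0$, so the only $H^1$ solution to $L_0w=0$ is $w\equiv 0$. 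The nontrivial part of this step is to guarantee that the infimum is attained and defines an honest eigenvalue: here I will invoke standard spectral theory for compact self-adjoint operators, applying it to the bounded inverse $(-L_0)^{-1}:L^2(\mathbb{S}^2)\to L^2(\mathbb{S}^2)$, which is well-defined because of the coercivity above (and is compact by the Rellich--Kondrachov embedding $H^1\hookrightarrow L^2$ on $\mathbb{S}^2$).

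\textbf{Step 2: Existence of a minimizer and regularity.} Taking a minimizing sequence $\{w_k\}$ for the Rayleigh quotient with $\|w_k\|_{L^2}=1$, the inequality above shows $\|w_k\|_{H^1}$ is uniformly bounded, so along a subsequence $w_k\rightharpoonup w_\star$ in $H^1$ and $w_k\to w_\star$ in $L^2$. Lower semicontinuity of the Dirichlet energy and strong $L^2$ convergence of the zeroth-order terms show $w_\star$ attains the infimum, hence $-L_0 w_\star=\mu_1 w_\star$ weakly; elliptic regularity promotes $w_\star$ to $C^\infty(\mathbb{S}^2)$.

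\textbf{Step 3: Positivity and simplicity.} Since $|w_\star|$ has the same Rayleigh quotient as $w_\star$, it is also a minimizer, and therefore also a smooth eigenfunction. Applying the strong maximum principle to the equation $(-\Delta_\gamma+\ms+\tfrac{1-\ms}{2}f\cp^{-1/\ms}-\mu_1)|w_\star|=0$ (written, if necessary, with the constant $\mu_1$ absorbed into a possibly sign-changing zeroth-order coefficient and the maximum principle applied in the standard form, e.g.\ Theorem~8.19 of \cite{GT}) yields $|w_\star|>0$ on all of $\mathbb{S}^2$. Consequently $w_\star$ itself cannot change sign, so we may take $w_\star>0$. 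Simplicity then follows in the usual way: if $\tilde w$ were another first eigenfunction linearly independent of $w_\star$, one could find a linear combination $w_\star+c\tilde w$ that vanishes at a point of $\mathbb{S}^2$ but is still a first eigenfunction, contradicting the strict positivity just established. This is the only slightly delicate step, but it is entirely standard and follows the same template as Theorem 8.38 of \cite{GT} quoted in the statement; no obstacle specific to the present geometry arises, because the metric $\gamma$, the function $f$ and the reciprocal $\cp^{-1/\ms}$ all satisfy the hypotheses of that theorem on the compact closed manifold $\mathbb{S}^2$.
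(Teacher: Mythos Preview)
Your proof is correct and follows essentially the same route as the paper. The paper records the Rayleigh-quotient bound $\mu_1\ge\ms>0$ in the paragraph immediately preceding the lemma (yielding positivity of $\mu_1$ and the triviality of $L_0w=0$) and then simply invokes Theorem~8.38 in \cite{GT} for the simplicity of $\mu_1$ and the existence of a positive first eigenfunction; you have supplied the standard details behind that citation.
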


With this property, we obtain the crucial invertibility of the operator $\partial_{\cp}S(\cp)$:  
\begin{lemma}\label{lemma-linearized-op-invert}
Let $\beta\in (0,1)$ be a fixed constant. Set 
$f$ and $\cp$ to be given nonnegative functions on $\mathbb S^2$
with \eqref{eq-assumption-lower-bound-f} satisfied and 
$|\cp|_{C^2(\mathbb S^2)}\le K$ with $K$ obeying \eqref{constant K}. Then within the hyperbolic region, it holds that  
$\partial_{\cp}\mathcal S(\cp): C^{2, \beta}(\mathbb{S}^2)\rightarrow C^{{0},\beta}(\mathbb{S}^2)$
is invertible.
\end{lemma}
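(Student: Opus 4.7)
The plan is to show that $\partial_{\cp}\mathcal{S}(\cp)$ is Fredholm of index zero between the given H\"older spaces, reduce invertibility to triviality of its kernel, and then kill the kernel with a maximum principle argument that draws its strength from the hyperbolically-inherited smallness factor $a^{-1/3}\O^{\tilde\delta}$ in Proposition \ref{form of linearization}. The key new ingredient compared with the analogous step in \cite{AH} is that the ansatz $R=(\ub a)^{1-\ms}\cp^{(1-\ms)/\ms}$ has already absorbed the borderline quadratic gradient term $-4(1-\ms)R\O\omb|\nabla_{\gamma}\tilde{\phi}|^2$, leaving behind the cleaner principal operator $L_0 w=\Delta_{\gamma}w-\ms w-\tfrac{1-\ms}{2}f\cp^{-1/\ms}w$ whose invertibility is already recorded in Lemma \ref{lemma-H1-estimates-L0}.

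First I would decompose $\partial_{\cp}\mathcal{S}(\cp)=L_0+T$, with $T[w]=c^{i}\nabla_i w+cw$. Since $\gamma_{ij}$, $\cp$, and all of the Ricci coefficients appearing in $c,c^i$ are $C^\beta$ in the considered region (by the hyperbolic estimates and the assumption $|\cp|_{C^2(\mathbb S^2)}\le K$), both $c$ and $c^i$ lie in $C^\beta(\mathbb S^2)$. Standard elliptic theory on the compact manifold $\mathbb S^2$ then yields that $\partial_{\cp}\mathcal{S}(\cp)\colon C^{2,\beta}(\mathbb S^2)\to C^{0,\beta}(\mathbb S^2)$ is a second-order linear elliptic operator with H\"older coefficients, hence Fredholm of index zero; equivalently, $T$ is a compact perturbation of $L_0$. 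Consequently invertibility is equivalent to injectivity.

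To establish injectivity I would argue by the maximum principle. Suppose $w\in C^{2,\beta}(\mathbb S^2)$ satisfies $\partial_{\cp}\mathcal{S}(\cp)[w]=0$. At a point $p_+$ where $w$ attains its maximum, $\nabla_{\gamma}w(p_+)=0$ and $\Delta_{\gamma}w(p_+)\le 0$, so the equation reads
$$0\ge \Delta_{\gamma} w(p_+)=\Big[\ms+\tfrac{1-\ms}{2}f(p_+)\cp(p_+)^{-1/\ms}-c(p_+)\Big]w(p_+).$$
Since $f\ge 0$ and $\cp\ge \ln(3/2)>0$, the bracket is bounded below by $\ms-|c|_{L^\infty(\mathbb S^2)}$. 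Now the pointwise bound of Proposition \ref{form of linearization}, together with $\cp\ge\ln(3/2)$ and $|\cp|_{C^2(\mathbb S^2)}\le K$, gives $|c|_{L^\infty(\mathbb S^2)}\le Ca^{-1/3}\O^{\tilde\delta}(1+K^2)$; and the strict hierarchy \eqref{eq-choice-a-b}, namely $K\ll a^{1/4}\O^{-\tilde\delta/2}$, forces $|c|_{L^\infty(\mathbb S^2)}\ll 1$, in particular $|c|_{L^\infty(\mathbb S^2)}\le \ms/2$. The bracket is therefore bounded below by $\ms/2>0$, so necessarily $w(p_+)\le 0$. The identical reasoning at the negative minimum of $w$ gives $w(p_-)\ge 0$, and together $w\equiv 0$.

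The main delicate point will be ensuring that the pointwise smallness of $|c|$ is genuine even as the anisotropy parameter $\kappa$ (and hence $K$ through \eqref{C kappa}) grows large: it is precisely the double-exponential design of the hierarchy \eqref{eq-choice-a-b} that reconciles the anisotropy-dependent constant $K$ with the hyperbolically inherited factor $a^{-1/3}\O^{\tilde\delta}$. A secondary technical issue is the verification that $c,c^i\in C^\beta(\mathbb S^2)$, which requires tracking the dependence of the coefficients on $\cp$ through the identity $R=(\ub a)^{1-\ms}\cp^{(1-\ms)/\ms}$ and on $u=-R$ through the hyperbolic norms, as already carried out in Theorem \ref{theorem-Schauder-phi} and in the proof of Proposition \ref{form of linearization}.
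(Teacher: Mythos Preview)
Your argument is correct and takes a genuinely different route from the paper. The paper proceeds by invoking Lemma~\ref{lemma-H1-estimates-L0} to produce a positive first eigenfunction $\psi_1$ of $L_0$, checks that $\partial_{\cp}\mathcal S(\cp)[\psi_1]<0$ once the small perturbation $c^i\nabla_i\psi_1+c\psi_1$ is absorbed, and then uses the transformation $w\mapsto w/\psi_1$ (Theorem~2.11 in \cite{HL}) to reduce to an operator with strictly negative zeroth-order coefficient. You instead observe that the zeroth-order coefficient of $\partial_{\cp}\mathcal S(\cp)$ is \emph{already} strictly negative, namely $-\ms-\tfrac{1-\ms}{2}f\cp^{-1/\ms}+c\le -\ms/2$, and apply the maximum principle directly together with the Fredholm alternative. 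Your route is shorter precisely because the new ansatz $R=(\ub a)^{1-\ms}\cp^{(1-\ms)/\ms}$ has produced the helpful term $-\ms w$; the paper's eigenfunction trick is more robust (it is the device used in \cite{AH} where no such term is present) but is not strictly needed here.

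One slip to correct: the lower barrier $\ln(3/2)$ is for $\tilde\phi$, not for $\cp=e^{-\ms\tilde\phi}$. The correct lower bound is $\cp\ge e^{-\ms\kappa}$ (from $\tilde\phi\le\kappa$), so your estimate for $|c|_{L^\infty}$ should read $Ca^{-1/3}\O^{\tilde\delta}(1+K^2e^{2\ms\kappa})$ rather than $Ca^{-1/3}\O^{\tilde\delta}(1+K^2)$. This does not affect the conclusion: since $e^{\kappa}\ll K$ and the hierarchy \eqref{eq-choice-a-b} is arranged with $\O$ (or $a$) chosen after $\kappa$ is fixed, the factor $a^{-1/3}\O^{\tilde\delta}$ still beats any $\kappa$-dependent constant. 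Note also that the bracket bound $\ms+\tfrac{1-\ms}{2}f\cp^{-1/\ms}-c\ge \ms-|c|$ needs only $f\ge 0$ and $\cp>0$, not the specific lower bound, so the slip only enters when you bound $|c|$.
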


\begin{proof} Applying Lemma \ref{lemma-H1-estimates-L0},  we first take a positive 
eigenfunction $\psi_1$ associated with the first eigenvalue $\mu_1>0$, i.e.,
$$\D_{\gamma} \psi_1-\ms\psi_1-\f{1-\ms}{2}f \cp^{-\f{1}{\ms}}\psi_1=-\mu_1\psi_1.$$
Using Proposition \ref{form of linearization}, we have   
\begin{align*}
\partial_{\cp} \mathcal S(\cp)[\psi_1]
&=\Delta_{\gamma}\psi_1-\ms\psi_1-\f{1-\ms}{2}f \cp^{-\f{1}{\ms}}\psi_1+c^{i}\nab_i \psi_1+c\psi_1\\
&=-\mu_1\psi_1+c^{i}\nab_i \psi_1+c\psi_1.
\end{align*}
{\color{black}From derived apriori estimates and constructed lower barrier for $\cp$, we have $|\cp^{-1}\nab^2_\gamma\cp|+|\cp^{-1}\nab_\gamma \cp|^2+1$ is independent of $u, \ub$ and it satisfies  
$$|c|+|c^i|\leq Ca^{-\f13}\O^{\tilde{\d}}(|\cp^{-1}\nab^2_\gamma\cp|+|\cp^{-1}\nab_\gamma \cp|^2+1).$$}
{\color{black}With $\O$ being sufficiently small {\color{black}depending on $u, \ub$} in the interior region,} we have  
$|c^i|+|c|\ll 1$. This implies
$\partial_{\cp} \mathcal S(\cp)[\psi_1]<0$ on $\mathbb S^2$.
Employing {\color{black}the calculation in the proof of} Theorem 2.11 in \cite{HL}, for any $w\in C^2(\mathbb S^2)$, {\color{black}we consider an elliptic equation for $\f{w}{\psi_1}$. For notional simplicity, let us rewrite the elliptic operator $\partial_{\cp} \mathcal S(\cp)$ as
$$\partial_{\cp} \mathcal S(\cp)[w]=a_{ij}D_{ij}w+b_i D_iw+cw$$
with $a_{ij}, b_i, c$ being functions. We then calculate and directly check that $\f{w}{\psi_1}$ satisfies 
\begin{equation}\label{Theorem 2.11 in HL}
a_{ij}D_{ij}\f{w}{\psi_1}+(b_i+\f{2}{\psi_1}a_{ij}D_j\psi_1)D_i\f{w}{\psi_1}+\f{\partial_{\cp} \mathcal S(\cp)[\psi_1]}{\psi_1}\f{w}{\psi_1}=\f{\partial_{\cp} \mathcal S(\cp)[w]}{\psi_1}. 
\end{equation}
Since $\partial_{\cp} \mathcal S(\cp)[\psi_1]<0$ and $\psi_1>0$, the above elliptic equation is invertible. For fixed $\psi_1>0$, giving $\partial_{\cp} \mathcal S(\cp)[w]/\psi_1$, we can solve for the unique solution $\f{w}{\psi_1}$. This further implies that $\partial_{\cp}\mathcal S(\cp): C^{2, \beta}(\mathbb{S}^2)\rightarrow C^{{0},\beta}(\mathbb{S}^2)$ is invertible and it holds
\begin{equation}\label{eq-estimate-linearization}
\begin{split}
|\f{w}{\psi_1}|_{C^{2,\beta}(\mathbb S^2)}\le C|\f{\partial_{\cp}\mathcal S(\cp)[w]}{\psi_1}|_{C^{0,\beta}(\mathbb S^2)} \quad \mbox{ and } \quad |w|_{C^{2,\beta}(\mathbb S^2)}\le C|\partial_{\cp}\mathcal S(\cp)[w]|_{C^{0,\beta}(\mathbb S^2)}
\end{split}
\end{equation}
with some positive constant $C$ independent of $w$ {\color{black}and depending on function $\cp$ and its associated eigenfunction $\psi_1$ }.
 }
\end{proof} 
 
In the next lemma, we eliminate such dependence {\color{black}on $\cp$}. 

\begin{lemma}\label{lemma-linearized-apriori-estimates} 
Assume $f$ and $\cp$ to be given nonnegative functions on $\mathbb S^2$ with \eqref{eq-assumption-lower-bound-f} satisfied and 
$|\cp|_{C^2(\mathbb S^2)}\le K$ with $K$ obeying \eqref{constant K}. Then, within the hyperbolic region, for any $w\in C^2(\mathbb S^2)$, we have that
 \begin{equation}\label{eq-estimate-linearization-v2}
\max_{\mathbb S^2}|w|\le C\max_{\mathbb S^2}|\partial_{\cp}\mathcal S(\cp)[w]|.\end{equation}
Here the positive constant $C$ depends only on {\color{black} $m$ and $\e$ in \eqref{eq-assumption-lower-bound-f} and $K$,  
independent of $\cp$ and $\ub\in (0,\delta]$.}
\end{lemma}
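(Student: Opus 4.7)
The plan is to remove the implicit $\cp$-dependence of the constant in the estimate \eqref{eq-estimate-linearization} of Lemma \ref{lemma-linearized-op-invert} through a standard compactness and contradiction argument. Suppose the conclusion fails. Then there exist sequences $\{\cp_n\}\subset C^2(\mathbb S^2)$ satisfying the hypotheses (in particular $|\cp_n|_{C^2}\le K$ and \eqref{eq-assumption-lower-bound-f}), values $\ub_n\in(0,\delta]$, and functions $w_n\in C^2(\mathbb S^2)$ with $\max_{\mathbb S^2}|w_n|=1$ and $\epsilon_n:=\max_{\mathbb S^2}|\partial_{\cp_n}\mathcal S(\cp_n)[w_n]|\to 0$. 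Since $|\cp_n|_{C^2}\le K$ uniformly, Arzel\`a--Ascoli supplies a subsequence with $\cp_n\to\cp_\infty$ in $C^{1,\alpha}(\mathbb S^2)$ for any $\alpha\in(0,1)$, with $\cp_\infty\in C^{1,1}(\mathbb S^2)$ and $\|\cp_\infty\|_{C^{1,1}}\le K$; after extracting further, $\ub_n\to\ub_\infty\in[0,\delta]$. The lower bound condition \eqref{eq-assumption-lower-bound-f}, being a pointwise property of $f$ alone, passes to $\ub_\infty$ without issue.

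Next I would pass to the limit in the linearized equation. By Proposition \ref{form of linearization}, the coefficients of $\partial_{\cp_n}\mathcal S(\cp_n)$ are built from $\cp_n^{-1/\ms}$, $\cp_n^{-1}\nab_\gamma\cp_n$, $\nab_\gamma^2\cp_n$, together with Ricci quantities evaluated at $u=-R_n$, $\ub=\ub_n$, which are smooth in the spacetime variables by the hyperbolic estimates. Convergence of $\cp_n\to\cp_\infty$ in $C^{1,\alpha}$ therefore gives $C^{0,\alpha}$ convergence of all coefficients (the $\nab^2\cp_n$ contributions enter only through the sub-leading terms $c,c^i$, which are multiplied by the factor $a^{-1/3}\Omega^{\tilde\delta}$ and remain uniformly controlled). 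Together with $\max|w_n|=1$ and $\epsilon_n\to 0$, Schauder estimates then yield a uniform $C^{2,\alpha}$ bound on $\{w_n\}$; a further subsequence converges in $C^2(\mathbb S^2)$ to some $w_\infty$ with $\max_{\mathbb S^2}|w_\infty|=1$, and the limit satisfies
\begin{equation*}
\partial_{\cp_\infty}\mathcal S(\cp_\infty)[w_\infty]=0\quad\text{on }\mathbb S^2.
\end{equation*}

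Finally, I would derive a contradiction by the spectral argument used in Lemma \ref{lemma-linearized-op-invert}. The first eigenvalue $\mu_1$ and the associated positive eigenfunction $\psi_1=\psi_1(\cp_\infty)$ of $L_0(\cp_\infty)$ exist by Lemma \ref{lemma-H1-estimates-L0}, because that lemma only uses $f\ge 0$ together with the strict lower bound on $B_p(\e)$ and the bound $|\cp_\infty|_{L^\infty}\le K$. The sub-leading coefficients $c,c^i$ of $\partial_{\cp_\infty}\mathcal S(\cp_\infty)$ still satisfy $|c|+|c^i|\le Ca^{-1/3}\Omega^{\tilde\delta}\ll 1$ uniformly, so $\partial_{\cp_\infty}\mathcal S(\cp_\infty)[\psi_1]<0$ on $\mathbb S^2$. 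Writing $w_\infty=\psi_1\, v$ and using the identity \eqref{Theorem 2.11 in HL} turns the equation for $w_\infty$ into an elliptic equation for $v$ with strictly negative zeroth-order coefficient $\partial_{\cp_\infty}\mathcal S(\cp_\infty)[\psi_1]/\psi_1$; the maximum principle then forces $v\equiv 0$, hence $w_\infty\equiv 0$, contradicting $\max|w_\infty|=1$.

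The main obstacle is the endpoint case $\ub_\infty=0$, where $R_n=(\ub_n a)^{1-\ms}\cp_n^{\lambda}\to 0$ and the base point $(-R_n,\ub_n)$ degenerates toward the central singularity. One has to verify that the coefficients of the linearized operator, reorganized with the natural $\O$- and $R$-weights, remain bounded in $C^{0,\alpha}$ uniformly in $n$; this is where the gains from the hyperbolic estimates and the smallness factor $a^{-1/3}\Omega^{\tilde\delta}$ (which survives in the limit as a nonnegative function of $(\o,\ub_\infty)$) are essential, since they guarantee both the $C^{0,\alpha}$-equicontinuity of $c,c^i$ and the decisive sign $\partial_{\cp_\infty}\mathcal S(\cp_\infty)[\psi_1]<0$ needed in the final step.
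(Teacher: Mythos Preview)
Your compactness--contradiction route is genuinely different from the paper's direct construction. The paper fixes a reference operator $L_* w = \Delta_{\gamma_0} w - \ms w - \tfrac{1-\ms}{2} f_0\, w$, with $\gamma_0$ the standard round metric and $f_0$ a time-independent minorant of $f$ satisfying $f_0 \ge m/2$ on $B_p(\e)$ and $f(\cdot,\ub)\ge f_0$ for every $\ub$; its positive first eigenfunction $\psi_1$ depends only on $m,\e$. Using $0<\cp\le 1$ (so $\cp^{-1/\ms}\ge 1$), $f\ge f_0$, and Lemma~\ref{lemma-difference-Laplacians}, one checks $\partial_{\cp}\mathcal S(\cp)[\psi_1]\le -c(m,\e,K)<0$ uniformly in $\cp$ and $\ub$, and then the identity~\eqref{Theorem 2.11 in HL} applied with this \emph{single} $\psi_1$ yields the $L^\infty$ bound with a constant independent of $\cp,\ub$. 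Your approach can be made to work but needs two repairs: (i) the lower-order coefficients $c,c^i$ of Proposition~\ref{form of linearization} involve $\nab_\gamma^2\cp_n$, which for $\cp_n\in C^2$ is only $L^\infty$-bounded and not $C^{0,\alpha}$, so the uniform $C^{2,\alpha}$ bound on $w_n$ should come from $W^{2,p}$ estimates rather than Schauder; (ii) at the endpoint $\ub_\infty=0$ you must extract a limit metric $\gamma_\infty$ from the uniform $\Lambda$-ellipticity bounds on $\gamma$ and note that $\Omega^{\t\delta}\to 0$ forces $c,c^i\to 0$, leaving in the limit an operator of $L_0$-type whose first eigenvalue is still $\ge \ms>0$. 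The paper's constructive route avoids both issues entirely by never taking a limit.
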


\begin{proof} 
We take a smooth nonnegative function $f_0=f_0(\o)$ on $\mathbb S^2$ satisfying $f_0\ge m/2$ on $B_p(\e)$ and $f(\cdot, \ub)\ge f_0$ on $\mathbb S^2$ for all $\ub\in (0,\delta]$.
With the standard spherical metric $\gamma_0$ on $\mathbb S^2$, we first consider 
$$L_*w=\D_{\gamma_0} w{\color{black}-\ms w-\f{1-\ms}{2}f_0 w}.$$
Employing Lemma \ref{lemma-H1-estimates-L0} to $L_*$, we can find a positive 
eigenfunction $\psi_1$  associated with its first eigenvalue $\mu_1>0$, i.e., 
$$\D_{\gamma_0} \psi_1{\color{black}-\ms\psi_1-\f{1-\ms}{2}f\psi_1}=-\mu_1\psi_1.$$
{\color{black} Here $\mu_1$ and $\psi_1$ depend only on $m$, $\e$ and are independent of $\ub\in (0,\delta]$.} With the fact {\color{black}$0<\cp\leq 1$}, {\color{black} $f\ge f_0$}, applying Lemma \ref{lemma-difference-Laplacians}, and Proposition \ref{form of linearization}, we then get 
\begin{align*}
\partial_{\cp} \mathcal S({\color{black}\cp})[\psi_1]
&=\Delta_{\gamma_0}\psi_1-\ms\psi_1-\f{1-\ms}{2}f\cp^{-\f{1}{\ms}}\psi_1+\Delta_{\gamma}\psi_1-\Delta_{\gamma_0}\psi_1+c^{i}\nab_i \psi_1+c\psi_1\\
&\le\Delta_{\gamma_0}\psi_1-\ms\psi_1-\f{1-\ms}{2}f_0\psi_1+\Delta_{\gamma}\psi_1-\Delta_{\gamma_0}\psi_1+c^{i}\nab_i \psi_1+c\psi_1\\
&=-\mu_1\psi_1+\Delta_{\gamma}\psi_1-\Delta_{\gamma_0}\psi_1+c^{i}\nab_i \psi_1+c\psi_1\le -{\color{black} c(m, \e, K)}<0.
\end{align*}
For the last inequality, we choose $a$ being suitably large in the exterior and $\O$ being sufficiently small in the interior. {\color{black}We then repeat the proof of Lemma \ref{lemma-linearized-op-invert} with this $\psi_1$, which is independent of $\cp$.} And we prove the current lemma. \end{proof}

In below we further prove that  the equation $\mathcal S(\cp)=0$ admits a unique solution $\cp$ satisfying 
$|\cp|_{C^2(\mathbb S^2)}\le K$.

\begin{proposition}\label{lemma-comparison}
Set $f$ and $\cp_1, \cp_2$ to be given nonnegative functions on $\mathbb S^2$ 
with \eqref{eq-assumption-lower-bound-f} satisfied. Require
$|\cp_1|_{C^2(\mathbb S^2)}\le K$ and $|\cp_2|_{C^2(\mathbb S^2)}\le K$ with $K$ obeying \eqref{constant K}.
If $\mathcal S(\cp_1)=\mathcal S(\cp_2)$, 
then it holds that $\cp_1\equiv\cp_2$ for all $\o\in\mathbb S^2$. 
\end{proposition}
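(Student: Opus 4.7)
\smallskip

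The plan is to reduce the problem to the invertibility/maximum-principle argument already set up in Lemma \ref{lemma-linearized-op-invert} and Lemma \ref{lemma-linearized-apriori-estimates} by linearising along a path. Setting $w:=\cp_1-\cp_2$ and $\cp_t:=\cp_2+tw=(1-t)\cp_2+t\cp_1$ for $t\in[0,1]$, the $C^2$ bound gives $|\cp_t|_{C^2(\mathbb S^2)}\le K$ uniformly in $t$ by the triangle inequality; moreover $\cp_t$ inherits the positive lower bound from the sub/super-barrier construction in Section \ref{sec-Barriers}, so the coefficient $f(\o,\ub)\cp_t^{-1/\ms}$ is well defined and uniformly bounded. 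Applying the fundamental theorem of calculus to the map $t\mapsto \mathcal S(\cp_t)$, the hypothesis $\mathcal S(\cp_1)=\mathcal S(\cp_2)$ yields
\begin{equation*}
0=\mathcal S(\cp_1)-\mathcal S(\cp_2)=\int_0^1\partial_{\cp}\mathcal S(\cp_t)[w]\,dt=:\tilde L[w].
\end{equation*}

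The next step is to check that $\tilde L$ is a second-order elliptic operator on $\mathbb S^2$ sharing exactly the structure of $\partial_{\cp}\mathcal S(\cp)$ identified in Proposition \ref{form of linearization}. Indeed, the formula for $\partial_{\cp}\mathcal S(\cp_t)[w]$ is
\begin{equation*}
\Delta_{\gamma(\cp_t)}w-\ms w-\tfrac{1-\ms}{2}f(\o,\ub)\cp_t^{-1/\ms}\,w+c^i(\cp_t)\nab_i w+c(\cp_t)\,w,
\end{equation*}
and each piece depends smoothly on $t$ through $\cp_t$; writing $\Delta_{\gamma(\cp_t)}=a^{ij}(\o,\cp_t)\partial_i\partial_j+b^i(\o,\cp_t)\partial_i$ and integrating in $t$ produces an operator of the form $\bar a^{ij}\partial_i\partial_j w+\bar b^i\partial_i w+\bar c\,w$. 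The leading part stays uniformly elliptic, while Proposition \ref{form of linearization} together with \eqref{eq-choice-a-b} gives the uniform smallness $|\bar c^i|+|\bar c-(-\ms-\tfrac{1-\ms}{2}f\bar{\cp}^{-1/\ms})|\le Ca^{-1/3}\O^{\tilde\delta}(|\cp^{-1}\nab^2_\gamma\cp|+|\cp^{-1}\nab_\gamma\cp|^2+1)$ of the lower-order coefficients, averaged over $t$.

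With $\tilde L$ in hand, I would simply rerun the argument of Lemma \ref{lemma-linearized-apriori-estimates}: pick the reference operator $L_\ast=\Delta_{\gamma_0}-\ms-\tfrac{1-\ms}{2}f_0$ associated to a minorant $f_0\le f$ on $\mathbb S^2$, take its positive first eigenfunction $\psi_1$ with eigenvalue $\mu_1>0$, and verify through the metric-comparison Lemma \ref{lemma-difference-Laplacians} that $\tilde L[\psi_1]\le -\mu_1\psi_1+O(a^{-1/3}\O^{\tilde\delta})<0$ on $\mathbb S^2$ (the smallness follows from $|\cp_t|_{C^2}\le K$ and \eqref{eq-choice-a-b}, uniformly in $t$). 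Then, exactly as in \eqref{Theorem 2.11 in HL}, the quotient $v:=w/\psi_1$ satisfies an elliptic equation with a strictly negative zeroth-order coefficient and vanishing right-hand side, so the strong maximum principle forces $v\equiv 0$, hence $w\equiv 0$.

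The main obstacle I anticipate is controlling the integrated nonlinear coefficient in $\tilde L$ arising from the interior borderline term $-4(1-\ms)R\O\omb|\nab_\gamma \cp|^2$ in $\mathcal S$. Unlike in \cite{AH}, this contribution is $O(1)$ rather than $O(1/a)$, so its derivative with respect to $\cp$ does not vanish in the limit. However, after the averaging in $t$ the quadratic gradient term still produces only contributions of the schematic form $c^i\nab_i w+c\,w$ (the $\nab_\gamma\cp_t$ factors become coefficient functions on $\mathbb S^2$), and the gain $a^{-1/3}\O^{\tilde\delta}$ in Proposition \ref{form of linearization}, combined with the uniform $C^2$ bound on $\cp_t$, keeps them strictly dominated by $-\mu_1\psi_1$ when evaluated on $\psi_1$. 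Once this smallness is secured the maximum-principle step is routine.
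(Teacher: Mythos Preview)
Your proposal is correct and follows essentially the same approach as the paper: both linearize along the path $\cp_t=(1-t)\cp_2+t\cp_1$, observe that the integrated operator $\tilde L=\int_0^1\partial_{\cp}\mathcal S(\cp_t)\,dt$ inherits the structure of Proposition \ref{form of linearization}, and then invoke the positive-supersolution/maximum-principle machinery of Lemmas \ref{lemma-linearized-op-invert}--\ref{lemma-linearized-apriori-estimates}. The only cosmetic difference is in the final step: the paper argues in two directions (using $\mathcal L(\cp_1-\cp_2)\ge 0$ to get $\cp_1\le\cp_2$ or $\cp_1\equiv\cp_2$, then the reverse), whereas you go straight from $\tilde L[w]=0$ with $\tilde L[\psi_1]<0$ to $w\equiv 0$ via the quotient $w/\psi_1$ --- your route is slightly more direct but uses the same ingredients. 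Your anticipated obstacle regarding the $-4(1-\ms)R\O\omb|\nab_\gamma\cp|^2$ term is in fact already resolved at the level of Proposition \ref{form of linearization}: the whole point of the $\cp$-ansatz in Section \ref{linearization} is to absorb this $O(1)$ contribution, so the residual lower-order coefficients genuinely carry the $a^{-1/3}\O^{\tilde\delta}$ smallness you invoke.
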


\begin{proof} For the given $\cp_1$ and $\cp_2$, we define the below $\M L$ and view it as a linear operator of $\cp_1-\cp_2$. 
$$\mathcal L(\cp_1-\cp_2):=\mathcal S(\cp_1)-\mathcal S(\cp_2)=\int_0^1\partial_{\cp} \mathcal S(t\cp_1+(1-t)\cp_2)[\cp_1-\cp_2]dt.$$
The same reasoning as for Lemma 5.5 {\color{black}of \cite{AH}} yields the desired result. 

Here operator $\int_0^1 \partial_{\cp}\mathcal S(t\cp_1+(1-t)\cp_2)dt$ possesses a similar structure to the linearized operator $\partial_{\cp}\mathcal S(\cp)$ in Proposition \ref{form of linearization}. The corresponding Lemmas \ref{lemma-H1-estimates-L0}-\ref{lemma-linearized-apriori-estimates} can be obtained in the same manner. By the assumption, we first use $\mathcal S(\cp_1)\geq\mathcal S(\cp_2)$ and it gives $\mathcal L(\cp_1-\cp_2)\ge 0$. With the same proof as in Lemma \ref{lemma-linearized-op-invert}, for the operator $\M L$, applying Theorem 2.11 in \cite{HL}, the strong maximum principle holds here. Hence it holds either $\cp_1<\cp_2$ or $\cp_1\equiv \cp_2$ on $\mathbb{S}^2$. We then use $\mathcal S(\cp_2)\geq\mathcal S(\cp_1)$. Repeating the argument, we now have either $\cp_2<\cp_1$ or $\cp_1\equiv \cp_2$ on $\mathbb{S}^2$. Combining these together, we have {\color{black}proved} the desired uniqueness result. 
\end{proof}

As a corollary of the above lemma, we also prove that  

\begin{theorem}\label{theorem-existence-solutions}
Take $\ub$ to be a fixed constant in $(0,\delta]$. Let $f$ be a given smooth function on $\mathbb S^2$, 
satisfying \eqref{eq-assumption-f-0-1} and \eqref{eq-assumption-lower-bound-f}, 
and $\underline{\cp}$ and  $\overline{\cp}$ 
be constructed as in Section \ref{sec-Barriers}. Then, there exists a unique smooth solution ${\cp}=\cp(\o,\ub)$ of $\mathcal S(\cp, \ub)=0$, satisfying 
$\underline{\cp}< \cp(\cdot, \ub)<\overline{\cp}$ on $\mathbb S^2$. 
\end{theorem}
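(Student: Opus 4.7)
My plan is to run the method of continuity inside the cage provided by the barriers $\underline\cp,\overline\cp$, using the invertibility of the linearization supplied by Lemmas \ref{lemma-linearized-op-invert}--\ref{lemma-linearized-apriori-estimates}, the a priori Schauder bound of Theorem \ref{theorem-Schauder-phi}, and the comparison principle of Proposition \ref{lemma-comparison}. Concretely, I introduce the one-parameter family
\[
\mathcal S_t(\cp):=I_1(\cp;f_t)+t\bigl(I_2+I_3+I_4+I_5\bigr)(\cp),\qquad f_t:=(1-t)\bar f+t f(\cdot,\ub),\quad t\in[0,1],
\]
with $I_1,\dots,I_5$ as in \eqref{eq-expression-S-I-1}--\eqref{eq-expression-S-I-2-5} and $\bar f\in(2e^{-\kappa},1]$ a constant for which $\cp_0:=(\bar f/2)^{\ms}$ lies strictly inside $(\underline\cp,\overline\cp)$. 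Then $\mathcal S_1=\mathcal S$, the constant $\cp_0$ solves $\mathcal S_0(\cp_0)=0$ (since $\Delta_{\gamma}\cp_0=0$), and $f_t$ satisfies \eqref{eq-assumption-lower-bound-f} uniformly in $t$ with lower bound $\min\{\bar f,m\}$ on $B_p(\e)$.

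\noindent\textbf{Continuity argument.} Let
\[
T:=\bigl\{\,t\in[0,1]\colon\mathcal S_t(\cp_t)=0\text{ admits }\cp_t\in C^{2,1/3}(\mathbb S^2)\text{ with }\underline\cp\le\cp_t\le\overline\cp\,\bigr\}.
\]
By construction $0\in T$. For openness at $t_*\in T$, Theorem \ref{theorem-Schauder-phi} gives $|\cp_{t_*}|_{C^2}\le K$ with $K$ depending only on $m,\e$, so Lemma \ref{lemma-linearized-op-invert} yields invertibility of $\partial_\cp\mathcal S_{t_*}(\cp_{t_*})\colon C^{2,\beta}\to C^{0,\beta}$ and the implicit function theorem produces a $C^{2,\beta}$-family $\{\cp_t\}$ for $t$ near $t_*$. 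The strict ordering $\underline\cp<\cp_t<\overline\cp$ propagates via the strong maximum principle applied to $\cp_t-\underline\cp$ and $\overline\cp-\cp_t$, since $\underline\cp,\overline\cp$ remain strict sub/super-solutions of each $\mathcal S_t$: the construction of Section \ref{sec-Barriers} only invokes $0\le f_t\le 1$, $f_t\ge m$ on $B_p(\e)$, and the favored sign $\Omega\omb>0$ (all $t$-independent), and the prefactor $t\in[0,1]$ on $I_2+I_3+I_4+I_5$ preserves the defining inequalities because these terms are a priori $o(1)$ in the regime parameters. For closedness, a sequence $t_k\in T$ with $t_k\to t_\infty$ furnishes solutions uniformly bounded in $C^{2,1/3}$ by Theorem \ref{theorem-Schauder-phi} ($\kappa(m,\e)$ is $t$-independent through the barriers), and an Arzel\`a--Ascoli subsequence converges in $C^{2,\beta'}$ for every $\beta'<1/3$ to a solution $\cp_{t_\infty}$ of $\mathcal S_{t_\infty}(\cp_{t_\infty})=0$ sandwiched by the barriers, upgraded to strict inequality by Proposition \ref{lemma-comparison}. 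Hence $T=[0,1]$ and $\cp:=\cp_1$ is the desired MOTS profile. Uniqueness is Proposition \ref{lemma-comparison}, and smoothness follows by standard elliptic bootstrap applied to \eqref{MOTS eqn}: once $\cp\in C^{2,1/3}$ the right-hand side is $C^{1,1/3}$ in a uniformly elliptic linear equation, so iterating interior Schauder estimates gives $\cp\in C^{k,1/3}(\mathbb S^2)$ for every $k$.

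\noindent\textbf{Main obstacle.} I expect the delicate point to be the $t$-uniformity of Lemma \ref{lemma-linearized-apriori-estimates}, which hinges on the smallness of the coefficients $c,c^i$ in the linearization \eqref{eq-linearization-S}. That smallness is structural rather than automatic: it depends on the two Christodoulou-specific cancellations exploited in Proposition \ref{form of linearization} --- the identity $4(\Omega\omb)_0(\Omega\tr\chib)_0+(\partial_u\phi)_0^2=0$ that tames $\partial_u\alpha_1$, and the analogous one governing $\partial_u\alpha_2$ --- together with the regime smallness $a^{-1/3}\Omega(u,0)^{\tilde\delta}\ll1$ supplied by the hyperbolic construction. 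Under the proposed homotopy these cancellations enter linearly in $t$, so only a careful bookkeeping check is needed to show that the positive-eigenfunction $\psi_1$ of Lemma \ref{lemma-H1-estimates-L0} and the first eigenvalue $\mu_1>0$ may be chosen $t$-independently; this follows because the leading operator $L_*$ used in the proof of Lemma \ref{lemma-linearized-apriori-estimates} depends only on $\bar f,\ms,\gamma_0$. The chief technical effort in writing this up is therefore tracking the exact $t$-dependence of the error terms in \eqref{eq-linearization-S} so that the final constant in \eqref{eq-estimate-linearization-v2} depends only on $m,\e,K$.
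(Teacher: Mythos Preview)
Your proposal is correct and takes essentially the same approach as the paper: the paper's proof defines the homotopy $\mathcal S_\lambda(\cp)=\Delta_{\gamma}\cp-\ms\cp+\tfrac{\ms}{2}\bigl[\lambda f+(1-\lambda)\bigr]\cp^{1-1/\ms}+\lambda(I_2+I_3+I_4+I_5)$, which is exactly your $\mathcal S_t$ with the particular choice $\bar f=1$, and then invokes the method of continuity via the invertibility of $\partial_\cp\mathcal S_\lambda$ together with Proposition~\ref{lemma-comparison} for uniqueness. Your write-up is in fact more detailed than the paper's (which simply cites the analogous Theorem~6.1 in \cite{AH}), and your identification of the $t$-uniformity of the linearized estimates as the main bookkeeping point is accurate.
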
 

\begin{proof} 

Here, for any constant $\lambda\in [0,1]$ and 
any $C^2$-function $\cp$ on $\mathbb S^2$, we set  
$$\mathcal S_\lambda(\cp)=\Delta_{\gamma}\cp-\ms\cp+\f{\ms}{2}\big[\lambda f+(1-\lambda)]\cp^{1-\f{1}{\ms}}+\lambda(I_2+I_3+I_4+I_5),$$
where $I_i$, with $2\le i\le 5$, are defined in \eqref{eq-expression-S-I-2-5}. {\color{black}With the same line of reasoning as for the detailed proof of Theorem 6.1 in \cite{AH}, by applying the method of continuity for parameter $\lambda\in[0,1]$, together with the utilization of the invertibility of  $\partial_{\cp}\mathcal S_\lambda(\cp)$, we conclude the existence. The uniqueness follows from Proposition \ref{lemma-comparison}.} 
\end{proof}

\section{The Anisotropic Apparent Horizon}\label{sec-existence-solutions}   

Considering $\ub$ as a parameter, in this section we show that our constructed MOTSs form a smooth apparent horizon. Viewing the solution $\cp$ in Theorem \ref{theorem-existence-solutions} as a function of $\o$ and $\ub$, we have

\begin{theorem}\label{thrm-smoothness-tubes}  
With the assumptions in Theorem \ref{theorem-existence-solutions}, 
for each $\ub \in (0,\delta]$, setting $\cp=\cp(\o, \ub)$ to be the solution 
to $\mathcal S(\cp, \ub)=0$ as in Theorem \ref{theorem-existence-solutions}, 
then it holds that the function $\cp=\cp(\o, \ub)$ is smooth with respect to both variables $(\o, \ub)\in \mathbb S^2\times(0,\d]$.
\end{theorem}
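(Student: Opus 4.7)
The plan is to apply the implicit function theorem in H\"older spaces, using the uniform invertibility of the linearized operator established in Lemma \ref{lemma-linearized-op-invert}. Fix an arbitrary $\ub_0\in (0,\delta]$ and let $\cp_0(\o):=\cp(\o,\ub_0)$ be the unique solution provided by Theorem \ref{theorem-existence-solutions}, which in particular satisfies $|\cp_0|_{C^{2,\beta}(\mathbb S^2)}\le K$ with $K$ as in \eqref{constant K}. View $\mathcal S$ as a map
\[
\mathcal S:\; U\times J \longrightarrow C^{0,\beta}(\mathbb S^2),\qquad (\cp,\ub)\longmapsto \mathcal S(\cp,\ub),
\]
where $U\subset C^{2,\beta}(\mathbb S^2)$ is a small neighbourhood of $\cp_0$ on which $\underline{\cp}<\cp<\overline{\cp}$, and $J$ is a small open interval around $\ub_0$. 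Then $\mathcal S(\cp_0,\ub_0)=0$, and by Lemma \ref{lemma-linearized-op-invert}, the partial Fr\'echet derivative $\partial_{\cp}\mathcal S(\cp_0,\ub_0):C^{2,\beta}(\mathbb S^2)\to C^{0,\beta}(\mathbb S^2)$ is an isomorphism.

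The main input to verify is that $\mathcal S$ is jointly smooth on $U\times J$. Going back to the expression \eqref{Stp} and recalling $R=(\ub a)^{1-\ms}\cp^{\lambda}$ with $\lambda=(1-\ms)/\ms$, the $\ub$-dependence of $\mathcal S(\cp,\ub)$ enters through: (i) the function $f(\o,\ub)$, which is smooth in $\ub$ by assumption; (ii) the factor $R$ itself, which is a smooth function of $\ub$ and $\cp$; and (iii) the Ricci coefficients $\O\tr_g\chib$, $\O\chibh$, $\eta$, $\O\omb$, $\O^{-1}\tr_g\chi$, $\gamma$ evaluated at $u=-R(\o,\ub)$. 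The induced metric components and Ricci coefficients are smooth functions of $(u,\ub,\o)$ on the existence region obtained in Sections \ref{secRicci}--\ref{energy estimate curvature} (the hyperbolic solution is classical there once higher-order commutations are propagated, which one checks in the usual way by commuting the null structure equations with arbitrarily many angular derivatives and repeating the estimates of Sections \ref{secRicci}--\ref{energy estimate curvature}). Composing these smooth functions with $u=-R(\o,\ub)$, which is smooth in $(\o,\ub,\cp)$, yields joint smoothness of the map $\mathcal S:(\cp,\ub)\mapsto \mathcal S(\cp,\ub)\in C^{0,\beta}(\mathbb S^2)$.

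With smoothness and invertibility in hand, the implicit function theorem in Banach spaces produces an open interval $J'\subset J$ around $\ub_0$ and a $C^{\infty}$ map $\ub\mapsto \tilde\cp(\cdot,\ub)\in C^{2,\beta}(\mathbb S^2)$ solving $\mathcal S(\tilde\cp(\cdot,\ub),\ub)=0$ with $\tilde\cp(\cdot,\ub_0)=\cp_0$. The uniqueness statement in Proposition \ref{lemma-comparison} then forces $\tilde\cp(\cdot,\ub)=\cp(\cdot,\ub)$ on $J'$, so $\ub\mapsto \cp(\cdot,\ub)\in C^{2,\beta}(\mathbb S^2)$ is $C^{\infty}$ on a neighbourhood of $\ub_0$; since $\ub_0$ was arbitrary, this gives smoothness in $\ub$ with values in $C^{2,\beta}(\mathbb S^2)$.

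To upgrade to joint smoothness in $(\o,\ub)$ on $\mathbb S^2\times(0,\delta]$, one iterates: differentiate the MOTS equation $\mathcal S(\cp,\ub)=0$ in $\ub$, treat the resulting linear elliptic equation for $\partial_{\ub}^k\cp$ as
\[
\partial_{\cp}\mathcal S(\cp)[\partial_{\ub}^k\cp]=H_k(\o,\ub,\cp,\nabla_\gamma\cp,\nabla_\gamma^2\cp,\partial_{\ub}^{<k}\cp),
\]
and apply Schauder theory (Theorem \ref{theorem-Schauder-phi} and elliptic regularity) on each slice $\{\ub={\rm const}\}$ to trade angular regularity of $\partial_{\ub}^k\cp$ against the regularity of the already-controlled quantities $\partial_{\ub}^{<k}\cp$. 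Induction on $k$ then yields smoothness of $\cp(\o,\ub)$ in both variables. The main obstacle I expect is the bookkeeping for this last bootstrap, namely checking that the right-hand sides $H_k$ indeed lie in the correct H\"older class so the Schauder estimates close at each order; the key point, however, is qualitative smoothness rather than any sharp quantitative bound, so one is free to absorb constants that blow up with $k$.
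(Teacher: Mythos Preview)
Your proposal is correct and follows essentially the same approach as the paper: apply the implicit function theorem in H\"older spaces with $\ub$ as the parameter, using the invertibility of $\partial_{\cp}\mathcal S(\cp,\ub):C^{2,\beta}(\mathbb S^2)\to C^{0,\beta}(\mathbb S^2)$ from Lemma \ref{lemma-linearized-op-invert}, and then invoke uniqueness to identify the IFT branch with the constructed solution. Your write-up is in fact more detailed than the paper's---you spell out the verification of joint smoothness of $\mathcal S$ and the bootstrap for higher $\ub$-derivatives---but the underlying strategy is the same.
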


\begin{proof} We employ the method of continuity again as in Section \ref{linearization}. This time we take $\ub$ as a new parameter. Given $\cp(\o,\ub)$ being the constructed unique solution to $\mathcal S(\cp,\ub)=0$, we then apply the implicit function theorem to obtain a smooth solution $\cp(\o,\ub')$ to $\mathcal S(\cp,\ub')=0$ with $\ub'$ close to $\ub$. The key is the invertibility of $\partial_{\cp}\mathcal{S}(\cp,\ub): C^{2,\beta}(\mathbb{S}^2)\rightarrow C^{0,\beta}(\mathbb{S}^2)$, which was proved in Lemma \ref{lemma-linearized-op-invert}. Hence, $\cp(\o, \ub)$ is smooth with respect to both $\o$ and $\ub$ variables. 
 \end{proof}

Applying {\color{black}a null comparison principle and an upcoming theorem in \cite{AnHe} by the author and He}, with double null foliations we show that the apparent horizon formed in gravitational collapse is always either null or spacelike. 

In this section, we specify a condition of $f$ to guarantee the spacelikeness of our constructed apparent horizon $\{u=-(\ub a)^{1-\ms}\cp^{\f{1-\ms}{\ms}}\}$. To do so, we derive the equation for $\partial_{\ub}\cp$ {\color{black} and to obtain its estimate.  

\begin{proposition}\label{prop-equation-partial-ub-phi}  
Let $\cp=\cp(\o, \ub)$ be the solution 
to $\mathcal S(\cp, \ub)=0$ as in Theorem \ref{theorem-existence-solutions}. For each $\ub \in (0,\delta]$,  $\partial_{\ub}\cp$ obeys
\begin{equation}\label{eq-equation-partial-ub-phi}
\Delta_{\gamma}(\partial_{\ub}\cp)-\ms\partial_{\ub}\cp-\f{1-\ms}{2}f(\o, \ub)\cp^{-\f{1}{\ms}}\partial_{\ub}\cp+c^{i}\nab_i (\partial_{\ub}\cp)+c\partial_{\ub}\cp=\frac{1}{\ub}h
+\f{\ms}{2}\partial_{\ub}f\cp^{1-\f{1}{\ms}}.
\end{equation}
Here $c=c(\o,\cp,\ub)$, $c^i=c^i(\o,\cp,\ub)$ and $h=h(\o,\cp,\ub)$
are functions satisfying 
\begin{equation}\label{eq-estimates-partial-ub-coefficients}
|c|+|c^i|+|h|\leq Ca^{-\f13}\cdot{\color{black}\O^{\t\d}}\cdot {\color{black}\cp^{-\f{1}{\ms}}}{\color{black} (|\nabla^2_{\gamma}\cp|+|\nabla_\gamma\cp|^2+1)}.\end{equation}
\end{proposition}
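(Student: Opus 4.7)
The plan is to differentiate the defining identity $\mathcal S(\cp(\o,\ub),\ub)=0$ in the parameter $\ub$, separate the contribution that comes from the $\cp$-derivative from the explicit $\ub$-derivative, and then recognise that the explicit $\ub$-derivative has the structure $\frac{1}{\ub}h+\frac{\ms}{2}\partial_{\ub}f\cdot\cp^{1-1/\ms}$ with $h$ small. By the chain rule
\[
0\;=\;\partial_{\ub}\bigl[\mathcal S(\cp,\ub)\bigr]\;=\;\partial_{\cp}\mathcal S(\cp)\bigl[\partial_{\ub}\cp\bigr]\;+\;\partial_{\ub}\mathcal S(\cp)\big|_{\cp\text{ fixed}}.
\]
Proposition~\ref{form of linearization} identifies the first term with the operator on the left of \eqref{eq-equation-partial-ub-phi} (applied to $w=\partial_{\ub}\cp$), producing the Laplacian, the $-\ms w$ term, the $-\tfrac{1-\ms}{2}f\cp^{-1/\ms}w$ term, plus first-order corrections $c^{i}\nabla_{i}w+cw$ whose coefficients are controlled by $Ca^{-1/3}\O^{\tilde\d}(|\cp^{-1}\nabla_\gamma^2\cp|+|\cp^{-1}\nabla_\gamma\cp|^{2}+1)$. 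Hence the entire task reduces to computing $\partial_{\ub}\mathcal S(\cp)|_{\cp\text{ fixed}}$ and verifying that it takes the form $-\tfrac{1}{\ub}h-\tfrac{\ms}{2}\partial_{\ub}f\cdot\cp^{1-1/\ms}$ with $h$ obeying \eqref{eq-estimates-partial-ub-coefficients}.

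The key computation is that, at fixed $\cp$,
\[
\partial_{\ub}R\big|_{\cp\text{ fixed}}\;=\;\frac{1-\ms}{\ub}\,R,\qquad
\partial_{\ub}u\big|_{\cp\text{ fixed}}\;=\;-\frac{1-\ms}{\ub}\,R.
\]
Thus every term in $\mathcal S$ built from the geometric quantities $\O\chibh$, $\eta$, $\a_{1}$, $\a_{2}$, $\gamma$, evaluated at $u=-R$, contributes two pieces when one differentiates in $\ub$ at $\cp$ fixed: an \emph{explicit} $\partial_{\ub}$ acting on the quantity at fixed $u$, and a factor $\tfrac{1-\ms}{\ub}R\,\partial_{u}$. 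I will go through the five pieces $I_{1},\dots,I_{5}$ of the decomposition \eqref{eq-expression-S-I-1}--\eqref{eq-expression-S-I-2-5} one by one; the only $\ub$-derivative $I_{1}$ contributes is the announced $\tfrac{\ms}{2}\partial_{\ub}f\,\cp^{1-1/\ms}$ term, while $I_{2},I_{3},I_{4},I_{5}$ each yield contributions of the schematic form
\[
\frac{1}{\ub}\cdot\bigl[R\cdot(\text{Ricci or curvature quantity at }u=-R)\bigr]\;\cdot\;(\text{quadratic in }\cp,\nabla\cp),
\]
together with the analogous terms in which $\partial_{u}$ hits the coefficient.

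The smallness claimed in \eqref{eq-estimates-partial-ub-coefficients} is then read off exactly as in the proof of Proposition~\ref{form of linearization}: by the hyperbolic estimates of Sections \ref{secRicci}--\ref{energy estimate curvature}, each combination $R\,\O\chibh$, $R\,\eta$, $R\,\a_{1}$, $R\,\a_{2}$ is bounded by $Ca^{-1/3}\O^{\tilde\d}$ (note in particular that the delicate $\partial_{u}\a_{1}$ and $\partial_{u}\a_{2}$ computations carried out in Proposition~\ref{form of linearization}, which exploited the precise Christodoulou cancellations $-4(\O\omb)_{0}(\O\tr\chib)_{0}-(\partial_{u}\phi)_{0}^{2}=0$ and the identity \eqref{e4 difference}, apply verbatim because they never used the particular direction of differentiation, only that $\partial_{u}$ hits these specific quantities). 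Collecting the $a^{-1/3}\O^{\tilde\d}$ factors with the quadratic $(|\nabla^{2}_{\gamma}\cp|+|\nabla_{\gamma}\cp|^{2}+1)$ bound for the $\cp$-dependent part, then absorbing the factor $\cp^{-1/\ms}$ coming from $\cp^{1-1/\ms}\cdot\cp^{-1}$ into the definition of $h$, delivers exactly \eqref{eq-estimates-partial-ub-coefficients}.

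The main obstacle, as in Proposition~\ref{form of linearization}, is the $I_{4}$ piece. Here $\partial_{u}\a_{1}$ contains the a priori dangerous combination $-4(\O\omb)_{0}(\O\tr\chib)_{0}-(\partial_{u}\phi)_{0}^{2}$ of size $R^{-2}$; it is only thanks to the identity
\[
-4(\O\omb)_{0}(\O\tr\chib)_{0}-(\partial_{u}\phi)_{0}^{2}\;=\;\frac{2\ms}{1-\ms}\frac{1}{R^{2}}-\frac{2\ms}{1-\ms}\frac{1}{R^{2}}\;=\;0
\]
dictated by the Christodoulou naked-singularity data that this contribution vanishes (up to a harmless $a^{-1/3}\O^{\tilde\d}R^{-2}$ remainder coming from the transport estimates). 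Once this cancellation is invoked and the analogous smallness of $\partial_{u}\a_{2}$ is taken from \eqref{e4 difference}, the remaining bookkeeping is routine and the proposition follows.
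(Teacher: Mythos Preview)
Your overall strategy is correct and coincides with the paper's: differentiate $\mathcal S(\cp,\ub)=0$ in $\ub$, invoke Proposition~\ref{form of linearization} for the $\partial_{\cp}\mathcal S(\cp)[\partial_{\ub}\cp]$ piece, and show the remaining $\partial_{\ub}\mathcal S|_{\cp\text{ fixed}}$ has the form $\tfrac{1}{\ub}h+\tfrac{\ms}{2}\partial_{\ub}f\,\cp^{1-1/\ms}$.

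There is, however, a genuine gap in your treatment of the ``explicit $\partial_{\ub}$'' contributions. You correctly note that differentiating a coefficient evaluated at $u=-R(\ub)$ yields two pieces: a $\tfrac{1-\ms}{\ub}R\,\partial_u$ piece and an explicit $\partial_{\ub}$ at fixed $u$. You then argue that the $\partial_u$ computations of Proposition~\ref{form of linearization} ``apply verbatim.'' That is fine for the $\partial_u$ pieces, but Proposition~\ref{form of linearization} says nothing about the explicit $\partial_{\ub}$ derivatives of $\gamma$, $\O\chibh$, $\eta$, $\a_1$, $\a_2$ at fixed $u$; those are new and you never estimate them. The paper explicitly singles this out as the additional work required here: $\partial_{\ub}\a_1$ is handled via the $\nabla_4$ null structure equations for $\O\tr\chib$ and $\O\omb$ (giving the needed $a^{-1/3}\O^{\tilde\delta}|u|^{-2}$ smallness), while $\partial_{\ub}\a_2$ is computed by differentiating the integral formula for $\a_2$ under the integral sign and then controlled by the same mechanism as \eqref{e4 difference}. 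Without these, your claim that ``each combination $R\,\O\chibh$, $R\,\eta$, $R\,\a_{1}$, $R\,\a_{2}$ is bounded by $Ca^{-1/3}\O^{\tilde\d}$'' does not yet cover $\partial_{\ub}(R\a_1)$ or $\partial_{\ub}(R\a_2)$, and the bound \eqref{eq-estimates-partial-ub-coefficients} for $h$ is not established.
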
 

\begin{proof} 
The main part of this proof is similar to the corresponding one in \cite{AH} and hence is omitted. Only noting that, in the estimates for here we employ
$$\alpha_1:=-[\f12\O\tr_g\chib+\f{1}{R}]-4[\O\omb-(\O\omb)_0],$$
\begin{align*}
\alpha_2:=&-\f12\int_0^{\ub}[(\f12\tr\chi\tr\chi+|\chih|_g^2+\partial_4\phi\partial_4\phi)(u, \ub', \o)]d\ub'\\
&+\f12\int_0^{\ub}\f{\O^{-2}(u, \ub', \o)}{|u|^2}\bigg(|\O\chih|_g^2(-1, \ub', \o)+[\O e_4\phi(-1, \ub', \o)-\O e_4\phi(-1, 0, \o)]^2 \bigg)d\ub'.
\end{align*}
The desired estimates for $\partial_u\alpha_1$ and $\partial_u \alpha_2$ were already obtained in Proposition \ref{form of linearization}. To bound $\partial_{\ub}\alpha_1$ and $\partial_{\ub}\alpha_2$, we utilize the explicit formulas for $\partial_{\ub}[\f12\O\tr_g\chib+\f{1}{R}]$ and $\partial_{\ub}(\O\omb)$, the estimates for $\partial_{\ub}\alpha_1$ can be obtained straightforwardly. And for $\partial_{\ub}\alpha_2$, we have
\begin{align*}
\partial_{\ub}\alpha_2(u, \ub, \o)=&-\f12\times(\f12\tr\chi\tr\chi+|\chih|_g^2+\partial_4\phi\partial_4\phi)(u, \ub, \o)\\
&+\f12\times\f{\O^{-2}(u, \ub, \o)}{|u|^2}\bigg(|\O\chih|_g^2(-1, \ub, \o)+[\O e_4\phi(-1, \ub, \o)-\O e_4\phi(-1, 0, \o)]^2 \bigg).
\end{align*}
The desired estimate for above expression can be got in the same fashion as for \eqref{e4 difference}.  
\end{proof}

Proceeding similarly to the argument for Corollary 6.4 in \cite{AH}, we then obtain

\begin{corollary}\label{cor-equation-partial-ub-phi}
For each $\ub \in (0,\delta]$, let $\cp=\cp(\o, \ub)$ be the solution 
to $\mathcal S(\cp, \ub)=0$. Under the assumptions in Theorem \ref{theorem-existence-solutions}, it holds
\begin{equation}\label{eq-estimate-partial-ub-phi}
\max_{\mathbb S^2}|\partial_{\ub}\cp|\le C\big[\ub^{-1}a^{-\f13}{\color{black}\O^{\t\d}} \max_{\mathbb S^2}
({\color{black}|\cp|+|\nab_{\gamma}\cp|+|\nab^2_{ij}\cp|+|\cp^{-1}\nab_{\gamma}\cp\nab_{\gamma}\cp |})+\max_{\mathbb S^2}|\partial_{\ub}f{\color{black}\cp^{1-\f{1}{\ms}}}|\big],
\end{equation}
where $C$ is a positive constant independent of $a$, $\ub$ and $\cp$. 
\end{corollary}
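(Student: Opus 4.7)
The key observation is that the linear operator appearing on the left-hand side of \eqref{eq-equation-partial-ub-phi} is precisely the linearized operator $\partial_{\cp}\mathcal{S}(\cp)$ identified in Proposition \ref{form of linearization}, applied to the function $w=\partial_{\ub}\cp$. That is, equation \eqref{eq-equation-partial-ub-phi} rewrites as
\begin{equation*}
\partial_{\cp}\mathcal{S}(\cp)[\partial_{\ub}\cp] \;=\; \frac{1}{\ub}\,h \;+\; \frac{\ms}{2}\,\partial_{\ub}f\,\cp^{1-\f{1}{\ms}}.
\end{equation*}
This is the bridge connecting Proposition \ref{prop-equation-partial-ub-phi} with the linearized apriori estimates established in Section \ref{linearization}.

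The plan is then to invoke Lemma \ref{lemma-linearized-apriori-estimates} with the choice $w=\partial_{\ub}\cp$. Since $\cp=\cp(\cdot,\ub)$ satisfies $|\cp|_{C^2(\mathbb S^2)}\le K$ by the Schauder estimates (Theorem \ref{theorem-Schauder-phi}) and lies between the constructed barriers, the hypotheses of that lemma are met, and the resulting constant $C$ is independent of $\ub$ and of the particular solution $\cp$. The lemma yields
\begin{equation*}
\max_{\mathbb S^2}|\partial_{\ub}\cp|\;\le\;C\,\max_{\mathbb S^2}\bigl|\partial_{\cp}\mathcal{S}(\cp)[\partial_{\ub}\cp]\bigr|\;=\;C\,\max_{\mathbb S^2}\Bigl|\tfrac{1}{\ub}h+\tfrac{\ms}{2}\partial_{\ub}f\,\cp^{1-\f{1}{\ms}}\Bigr|.
\end{equation*}

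To conclude, I insert the pointwise bound \eqref{eq-estimates-partial-ub-coefficients} for $h$, namely
\begin{equation*}
|h|\;\le\;C\,a^{-\f13}\O^{\t\d}\,\cp^{-\f{1}{\ms}}\bigl(|\nabla_\gamma^2\cp|+|\nabla_\gamma\cp|^2+1\bigr),
\end{equation*}
and the triangle inequality to split the two source terms. Since $\cp$ is pointwise bounded below away from $0$ by the subsolution $\underline{\cp}=\ln(3/2)$ translated to the $\cp$-variable via $\cp=e^{-\tp}$, all negative powers of $\cp$ appearing in the bound for $h$ are absorbed into the structural factor $\cp^{-1}\nabla_\gamma\cp\,\nabla_\gamma\cp$ plus lower-order contributions, giving the final form
\begin{equation*}
\max_{\mathbb S^2}|\partial_{\ub}\cp|\;\le\;C\bigl[\ub^{-1}a^{-\f13}\O^{\t\d}\max_{\mathbb S^2}\bigl(|\cp|+|\nabla_\gamma\cp|+|\nabla^2_{ij}\cp|+|\cp^{-1}\nabla_\gamma\cp\,\nabla_\gamma\cp|\bigr)+\max_{\mathbb S^2}|\partial_{\ub}f\,\cp^{1-\f{1}{\ms}}|\bigr].
\end{equation*}

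The main (minor) subtlety is to verify that the $C$ delivered by Lemma \ref{lemma-linearized-apriori-estimates} is genuinely independent of $\ub\in(0,\delta]$: this is exactly the improvement carried out in that lemma, where the positive eigenfunction $\psi_1$ was chosen for a fixed comparison operator $L_*$ built from the lower bound $f_0$ of $f$ rather than from $f(\cdot,\ub)$ itself, and the smallness of $a^{-1/3}\O^{\t\d}$ was used to absorb $\cp$-dependent perturbations. Once this uniformity is in hand, the corollary follows immediately from the chain of inequalities above.
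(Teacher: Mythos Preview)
Your proposal is correct and follows essentially the same approach the paper intends (the paper itself only writes ``Proceeding similarly to the argument for Corollary 6.4 in \cite{AH}''): recognize that the left-hand side of \eqref{eq-equation-partial-ub-phi} is the linearized operator $\partial_{\cp}\mathcal S(\cp)$ acting on $\partial_{\ub}\cp$, then apply the uniform apriori estimate of Lemma \ref{lemma-linearized-apriori-estimates} and insert the bound \eqref{eq-estimates-partial-ub-coefficients} for $h$. One minor slip: the relation between the two ans\"atze is $\cp=e^{-\ms\tp}$ (not $\cp=e^{-\tp}$), so the lower barrier $\underline{\tp}=\ln(3/2)$ gives $\cp\ge (2/3)^{\ms}$, as also noted in the proof of Proposition \ref{thrm-spacelike}; this does not affect your argument.
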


This further implies 

\begin{proposition}\label{thrm-spacelike} 
For each $\ub \in (0,\delta]$, set $\cp=\cp(\o, \ub)$ to be the solution 
of $\mathcal S(\cp, \ub)=0$. Under the assumptions in Theorem \ref{theorem-existence-solutions}, if we further require
\begin{equation}\label{eq-assumption-ub-f}
|\ub\partial_{\ub}f(\o,\ub)|\leq a^{-\f13}{\color{black}\O^{\t\d}(u_1)}, 
\end{equation} 
the apparent horizon $\{(\o, u, \ub) |\, u=-(\ub a)^{1-\ms}\cp^{\lambda}(\o,\ub)\}$ is spacelike. 
\end{proposition}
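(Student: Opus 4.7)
The plan is to show that the conormal to the horizon is everywhere timelike. Set $\Phi(u,\ub,\o):=u+(\ub a)^{1-\ms}\cp^{\lambda}(\o,\ub)$ so that the horizon is the level set $\{\Phi=0\}$. Using the expression of the inverse metric in double-null coordinates, namely $g^{-1}=-\f12(e_3\otimes e_4+e_4\otimes e_3)+\sg^{AB}e_A\otimes e_B$ with $e_3=\O^{-1}\partial_u$ and $e_4=\O^{-1}(\partial_{\ub}+d^A\partial_A)$, one computes directly
\[
g^{-1}(d\Phi,d\Phi)\;=\;-\O^{-2}\bigl(\partial_{\ub}R+d^A\partial_A R\bigr)+|\nab R|^2_{\sg},
\]
where $R=(\ub a)^{1-\ms}\cp^{\lambda}$. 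Establishing spacelikeness thus reduces to verifying that $\O^{-2}(\partial_{\ub}R+d^A\partial_A R)$ strictly dominates $|\nab R|^2_{\sg}$ pointwise on the horizon.

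From the ansatz one reads off $\partial_{\ub}R=(1-\ms)\ub^{-1}R+\lambda R\cp^{-1}\partial_{\ub}\cp$ and $\partial_A R=\lambda R\cp^{-1}\partial_A\cp$. I would then invoke Corollary \ref{cor-equation-partial-ub-phi} with the new hypothesis $|\ub\partial_{\ub}f|\le a^{-1/3}\O^{\t\d}(u_1)$, together with the a priori bound $|\cp|_{C^{2,1/3}(\mathbb S^2)}\le K$ from Theorem \ref{theorem-Schauder-phi} and the uniform lower bound $\cp\geq\ln(3/2)$ from Section \ref{sec-Barriers}, to obtain
\[
\max_{\mathbb S^2}|\partial_{\ub}\cp|\;\le\;C\,\ub^{-1}a^{-1/3}\O^{\t\d}.
\]
This bound is the workhorse: it guarantees that the second contribution in $\partial_{\ub}R$ is absorbed by the first, so that $|\lambda R\cp^{-1}\partial_{\ub}\cp|\le \f{(1-\ms)R}{2\ub}$ and hence $\partial_{\ub}R\ge \f{(1-\ms)R}{2\ub}>0$.

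For the remaining terms, $\sg=R^2\gamma$ yields $|\nab R|^2_{\sg}=\lambda^2\cp^{-2}|\nab \cp|^2_{\gamma}\le C$ by Theorem \ref{theorem-Schauder-phi}. The hyperbolic bootstrap delivers $|d^A|\ll 1$ (indeed $d^A=0$ along $H_{-1}$ and $|d^A|\lesssim \ub a^{1/2}/|u|$), so that $|d^A\partial_A R|\leq \f{(1-\ms)R}{4\ub}$. Finally, using the identity $R\O_0^2=\ub a\cp^{1/\ms}$ derived in \eqref{ROmega2} together with $\O^2\le 2\O_0^2$, we find
\[
\O^{-2}\partial_{\ub}R\;\gtrsim\;\f{R^2}{\ub^2 a\cp^{1/\ms}}\;=\;\ub^{-2\ms}\,a^{1-2\ms}\,\cp^{\f{2(1-\ms)}{\ms}-\f{1}{\ms}},
\]
which, for $\ub\in(0,\delta]$ with $\delta$ small enough, is much larger than the uniformly bounded $|\nab R|^2_{\sg}$. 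Consequently $g^{-1}(d\Phi,d\Phi)<0$ holds pointwise on the horizon, and the apparent horizon is a spacelike hypersurface.

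The main obstacle in the argument is the control of $|\partial_{\ub}\cp|$: the inhomogeneity in the linearized equation of Proposition \ref{prop-equation-partial-ub-phi} carries the singular prefactor $\ub^{-1}$, and only the combined smallness supplied by the hyperbolic gain $a^{-1/3}\O^{\t\d}$ in the coefficients $c$, $c^i$, $h$ and the new hypothesis $|\ub\partial_{\ub}f|\le a^{-1/3}\O^{\t\d}(u_1)$ allows Corollary \ref{cor-equation-partial-ub-phi} to defeat the otherwise dominant $(1-\ms)\ub^{-1}R$ contribution in $\partial_{\ub}R$. Once this smallness is in hand, the remainder of the verification is purely algebraic bookkeeping with the hyperbolic estimates on $\O$, $d^A$ and $\cp$.
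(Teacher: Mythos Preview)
Your argument is correct and is the natural dual of the paper's computation: the paper pulls the spacetime metric back to the horizon and shows that every tangent vector has positive norm, while you compute $g^{-1}(d\Phi,d\Phi)$ and show the conormal is timelike. Both routes hinge on exactly the same ingredient, namely the bound on $\partial_{\ub}\cp$ from Corollary \ref{cor-equation-partial-ub-phi} under the hypothesis $|\ub\partial_{\ub}f|\le a^{-1/3}\O^{\t\d}(u_1)$, which yields $\partial_{\ub}R\ge\tfrac{1-\ms}{2}\ub^{-1}R$. Your version has the pleasant feature that the $\O^{-2}$ prefactor makes the dominance of $\partial_{\ub}R$ over $|\nab R|^2_{\sg}$ even more transparent, since $\O^{-2}\ub^{-1}R\gtrsim \ub^{-2\ms}$ blows up while $|\nab R|^2_{\sg}=\lambda^2\cp^{-2}|\nab\cp|^2_\gamma$ stays bounded.

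One small correction: the lower bound $\cp\ge\ln(3/2)$ you quote from Section \ref{sec-Barriers} is a slip --- that is the bound on $\tp$, not on $\cp$. The two are related by $\cp=e^{-\ms\tp}$, so what you actually have is $\cp\in[e^{-\ms\bar{\tp}_{\max}},(2/3)^{\ms}]$. The paper uses the barrier $u=-(\tfrac23\ub a)^{1-\ms}$ to extract $\cp\ge(2/3)^{\ms}$. Either way the only thing needed is that $\cp$ is bounded below by a positive constant independent of $\ub$, so this does not affect the argument.
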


\begin{proof} 
Using $R=(\ub a)^{1-\ms}\cp^{\lambda}(\o,\ub)$ and $\lambda=\f{1-\ms}{\ms}$, we have $\partial_{\ub}R=\ub^{-1}R[(1-\ms)+\lambda\ub\cp^{-1}\partial_{\ub}\cp]$.
Applying Corollary \ref{cor-equation-partial-ub-phi}, if $|\ub \partial_{\ub}f(\o,\ub)|\leq a^{-\f13}\O^{{\color{black}\t\delta}}$, then we get $\partial_{\ub}R\geq \f{1-\ms}{2}\ub^{-1}R$. 
Assume $g'$ is the induced metric to the apparent horizon
$u=-R(\o, \ub)$. Rewrite $\o\in\mathbb{S}^2$ as $\o=(\theta_1, \theta_2)$. As shown in \cite{An17}, then it holds
\begin{align*}
g'_{\theta_i \theta_j}=g_{\theta_i \theta_j},\quad
g'_{\theta_i \ub}=2\partial_{\theta_i}R,\quad
g'_{\ub\, \ub}=4\partial_{\ub} R.
\end{align*}

Suppose $X=\lambda_1\partial_{\theta_1}+\lambda_2\partial_{\theta_2}+\lambda_3\partial_{\ub}$ to be an arbitrary 
nonzero tangent vector along the apparent horizon with constants $\lambda_1, \lambda_2, \lambda_3$. It holds
\begin{align*}
g'(X,X)
=\lambda_i\lambda_jg_{ij}+4\lambda_i \lambda_3 \partial_{i}R+4\lambda_3^2 \partial_{\ub} R,\end{align*}
{\color{black} where $i,j=1,2$ are summed over.}
The case $\lambda_3=0$ implies $g'(X,X)
=\lambda_i\lambda_jg_{ij}>0$. For the case $\lambda_3\neq0$, using  $g_{ij}=R^2\gamma_{ij}$ and $\partial_i R=\lambda R\cp^{-1}\nab_i\cp$, we get
\begin{equation}\label{gXX}
\begin{split}
g'(X,X)=&\lambda_i \lambda_j g_{ij}+4\lambda_i \lambda_3 \partial_i R+4\lambda_3^2 \partial_{\ub}R \\
\geq& R^2 \lambda_i\lambda_j \gamma_{ij}+4\lambda_i\lambda_3\lambda R \cp^{-1}\nab_i\cp+4\lambda_3^2\cdot\f{1-\ms}{2}\ub^{-1}R\\
\geq&2\lambda_3^2(1-\ms)\ub^{-1}R-4\lambda_3^2\lambda^2(\cp^{-1}\nab_i\cp)^2\\
>&2\lambda_3^2(1-\ms)\ub^{-\ms}(a m \epsilon^{\tau+2})^{1-\ms}-4\lambda_3\lambda^2(\cp^{-1}\nab_i\cp)^2.
\end{split}
\end{equation}
For the last inequality, we use $\bar{\tp}_{max}=-\log(m\epsilon^{\tau+2})+C(\tau)$ in \eqref{bartildephimax} and hence 
$$R\geq (\ub a\cdot m\epsilon^{\tau+2})^{1-\ms} \mbox{ with } \tau>2.2 \mbox{ fixed}.$$
At the same time, by \eqref{subsolution} we have $u=-(\f23 \ub a)^{1-\ms}$ is an untrapped surface.  For the constructed MOTS locating at $u=-R=-(\ub a)^{1-\ms}\cp^{\f{1-\ms}{\ms}}$, it thus holds
$\cp^{\f{1-\ms}{\ms}} \geq (\f{2}{3})^{1-\ms}$. That is $\cp\geq (\f23)^{\ms}$.  
Therefore, back to \eqref{gXX}, {\color{black}when $\ub$ is sufficiently small}, we have $g'(X, X)>0$. This renders our constructed apparent horizon spacelike  
and hence a dynamical horizon.  
\end{proof}  

\section{The Instability Theorems}\label{section instability theorems}
{\color{black}
In this section, we prove nonlinear instability theorems with Christodoulou's naked-singularity initial data prescribed along $\ub=0$.\footnote{\label{instability theorems perturbed initial data}{\color{black}For perturbed initial data, analogous instability theorems can be proved correspondingly.}}
}
\subsection{Instability Subject to {\color{black}Perturbations with Finite BV Norms}}\label{BV instability} 

We first recall equation \eqref{e3e4phi v2}
\begin{equation*}
(\O e_3)(\O e_4\phi)+\f12{\O\tr\chib}\O e_4\phi=\O^2\Delta_g \phi-\f12{\O\tr\chi}\O e_3\phi{\color{black}+2\O^2\eta^A e_A\phi}.
\end{equation*}
With $e_3=\O^{-1}{\partial}/{\partial u}$, this equation is equivalent to
\begin{equation*}
\begin{split}
\f{\partial}{\partial u}(u\O e_4\phi)+\f12(\O\tr\chib+\f{2}{|u|})u\O e_4\phi=u\O^2\Delta_g\phi-\f12 u\O\tr\chi\cdot\O e_3\phi+2u\O^2\eta^A e_A\phi.
\end{split}
\end{equation*}
Via integration, we hence obtain
\begin{equation}\label{e3e4phi integration} 
\begin{split}
u\O e_4\phi(u,\ub, \o)=&u\O e_4\phi(-1,\ub, \o)+\int_{-1}^u -\f12u'\O\tr\chi\cdot\O e_3\phi(u',\ub, \o)du'\\
&+\int_{-1}^u [-\f12(\O\tr\chib+\f{2}{|u|})\cdot u\O e_4\phi+u\O^2\Delta_g\phi+2u\O^2\eta^A e_A\phi](u',\ub, \o)du'.
\end{split}
\end{equation}
For the first two terms on the right of \eqref{e3e4phi integration}, we can rewrite them as
\begin{equation*}
\begin{split}
&u\O e_4\phi(-1,\ub, \o)+\int_{-1}^u -\f12u'\O\tr\chi\cdot\O e_3\phi(u', \ub, \o)du'\\
=&u\O e_4\phi(-1, \ub, \o)-u\O e_4\phi(-1, 0, \o)+\underline{u\O e_4\phi(-1, 0, \o)}-\f12\int_{-1}^u u'\O\tr\chi\cdot [\O e_3\phi-(\O e_3\phi)_0](u', \ub, \o)du'\\
&-\f12\int_{-1}^u u'[\O\tr\chi-(\O\tr\chi)_0]\cdot(\O e_3\phi)_0(u',\ub, \o)du' \underline{-\f12\int_{-1}^u u'(\O\tr\chi)_0\cdot(\O e_3\phi)_0(u', \ub, \o)du'}. 
\end{split}
\end{equation*}
Employing \eqref{e3e4phi integration}, with Christodoulou's initial data along $\ub=0$ we have
$$u\O e_4\phi(u, 0, \o)=u\O e_4\phi(-1, 0,\o)+\int_{-1}^u-\f12u'(\O\tr\chi)_0\cdot(\O e_3\phi)_0(u', 0, \o)du'.$$
Hence, the underlined two terms can be replaced by $u\O e_4\phi(u, 0, \o)$. Back to \eqref{e3e4phi integration} again, we now obtain
\begin{equation*}
\begin{split}
&u\O e_4\phi(u, \ub, \o)-[u\O e_4\phi(-1, \ub, \o)-u\O e_4\phi(-1, 0, \o)]\\
=&u\O e_4\phi(u, 0, \o)-\f12\int_{-1}^u u'\O\tr\chi[\O e_3\phi-(\O e_3\phi)_0](u', \ub, \o)du'\\
&-\f12\int_{-1}^u u'[\O\tr\chi-(\O\tr\chi)_0]\cdot(\O e_3\phi)_0(u', \ub, \o)du'\\
&+\int_{-1}^u [-\f12(\O\tr\chib+\f{2}{|u|})\cdot u\O e_4\phi+u\O^2\Delta_g\phi+2u\O^2\eta^A e_A\phi](u',\ub, \o)du'.
\end{split}
\end{equation*}
Utilizing the fact 
$$u\O e_4\phi(u, 0, \o)=-\sqrt{2}\sqrt{\f{1-\ms}{\ms}}\cdot\f{1-\ms}{4}\cdot|u|^{\f{\ms}{1-\ms}}$$
and the hyperbolic estimates derived in previous sections, we get
\begin{equation}\label{e4phi difference with initial data}
|u\O e_4\phi(u, \ub, \o)-[\O e_4\phi(-1, \ub, \o)-\O e_4\phi(-1, 0, \o)]|\lesssim \f{\ub a}{|u|}+|u|^{\f{\ms}{1-\ms}}.
\end{equation}

Now we consider the key equation that drives the trapped surface formation
$$\O \nab_4(\O^{-1}\tr\chi-(\O^{-1}\tr\chi)_0)=-|\chih|^2-(\nab_4\phi)^2-\f12(\tr\chi)^2.$$ 
With this equation, we have 
\begin{equation}\label{MOTS formation integration}
\begin{split}
&\O^{-1}\tr\chi(u, \ub, \o)-(\O^{-1}\tr\chi)(u, 0, \o)\\
=&\int_0^{\ub}-|\chih|^2(u, \ub', \o)d\ub'+\int_0^{\ub}-|\nab_4\phi|^2(u, \ub', \o)d\ub'+\int_0^{\ub}-\f12(\tr\chi)^2(u, \ub', \o)d\ub'.
\end{split}
\end{equation}
We then find the main terms on the right. By hyperbolic estimates, we first have
$$|\int_0^{\ub}-|\chih|^2(u, \ub', \o)d\ub'-\int_0^{\ub}-\f{|\O\chih|^2(-1, \ub', \o)}{|u|^2\O^2}d\ub'|\leq \f{\ub a}{|u|^2\O^2}\cdot\f{\ub\at}{|u|}.$$
Using \eqref{e4phi difference with initial data}, we also get 
\begin{equation*}
\begin{split}
&|\int_0^{\ub}-|\nab_4\phi|^2(u, \ub', \o)d\ub'+\f{1}{|u|^2\O^2}\int_0^{\ub}[\O e_4\phi(-1, \ub' ,\o)-\O e_4\phi(-1, 0, \o)]^2du'|\\
\leq&\int_0^{\ub}\f{1}{\O^2}\cdot\f{1}{|u|^2}\cdot(\f{\ub' a}{|u|}+|u|^{\f{\ms}{1-\ms}})\cdot\at d\ub' \leq\f{1}{|u|^2\O^2}\cdot(\f{\ub^2 a^{\f32}}{|u|}+|u|^{\f{\ms}{1-\ms}}\cdot\ub\at)\\
\leq& \f{\ub a}{|u|^2\O^2}\cdot(\f{\ub\at}{|u|}+|u|^{\f{\ms}{1-\ms}}a^{-\f12}).
\end{split}
\end{equation*}
For the last term, via the obtained estimate in Proposition \ref{trch.bd}
$$|\O\tr\chi-\f{2(1-\ms)\O^2}{|u|}|\lesssim\f{\ub a}{|u|^2},$$
we deduce
\begin{equation*}
\begin{split}
|\int_0^{\ub}-\f12(\tr\chi)^2(u, \ub', \o)d\ub'|\lesssim&\int_0^{\ub}\f{1}{\O^2}\cdot(\O\tr\chi)^2(u, \ub' ,\o)d\ub'\\
\leq&\int_0^{\ub}\f{1}{\O^2}(\f{2(1-\ms)\O^2}{|u|}+\f{\ub' a}{|u|^2})^2 (u, \ub' ,\o) d\ub'\\
\lesssim& \f{\ub}{|u|^2\O^2}\O^4(u, \ub, \o)+\f{\ub a}{|u|^2\O^2}\cdot\f{\ub^2 a}{|u|^2}(u, \ub, \o).
\end{split}
\end{equation*}
Note that the MOTS forms in the interior region, where $\O(u, \ub, \o)\lesssim \O(u, 0)\leq \O(u_1, 0)\ll1$. With these estimates, we spot the main contribution on the right of \eqref{MOTS formation integration}, that is
$$-\int_0^{\ub}\f{|\O\chih|^2(-1, \ub', \o)}{|u|^2\O^2}d\ub'-\f{1}{|u|^2\O^2}\int_0^{\ub}[\O e_4\phi(-1, \ub' ,\o)-\O e_4\phi(-1, 0, \o)]^2du'.$$
We then define $f(\o, \ub)$ via requiring 
\begin{equation}\label{f definition}
\begin{split}
&\f{\ub a f(\o, \ub)}{|u|^2\O^2(u, 0)}\cdot(1-\ms)\\
:=&\int_0^{\ub}\f{|\O\chih|^2(-1, \ub', \o)+[\O e_4\phi(-1, \ub', \o)-\O e_4\phi(-1, 0, \o)]^2}{|u|^2\O^2(u, 0)} d\ub'.
\end{split}
\end{equation}
With this defined $f(\o, \ub)$, combining all the previous arguments above, we arrive at our first main theorem 

\begin{theorem} \label{main thm section 14} 
Consider the characteristic initial value problem for the Einstein-scalar field system \eqref{1.1}. Assigning Christodoulou's naked-singularity initial data in \cite{Chr.2} along  $\ub=0$ with $-1\leq u \leq 0$ and prescribing perturbed initial data along $u=-1$ satisfying
\begin{equation}\label{upper bound main thm a=1 section 14}
\sum_{i\leq 5, j\leq 3}\ub^{j}a^{-\f12} \|(\partial_{\ub})^j\nab^{i}\chih\|_{L^{\infty}_{\ub}L^2(S_{-1,\ub})}+\ub^{j}a^{-\f12} \|(\partial_{\ub})^j\nab^{i}\nab_4\phi\|_{L^{\infty}_{\ub}L^2(S_{-1,\ub})}\leq B, 
\end{equation}
then for each $B$ there exist a sufficient small $\delta=\delta(B)$ and constant $a=a(B)$\footnote{In particular, we can choose $a=1$.}, and the Einstein-scalar field system admits a unique regular solution in the spacetime region $(u,\ub, \o)$, where $0\leq\ub\leq \delta$ and $\ub\leq |u|\O^{2-\t\delta}(u,0)\leq 1$ with $0<\t\d\ll1$.

Moreover, if we further assume  
\begin{equation}\label{main thm f def}
 \int_0^{\ub}|\chih|^2 (-1, \ub', \o)+[\nab_4\phi(-1, \ub', \o)-\nab_4\phi(-1, 0, \o)]^2 d\ub'=f(\o, \ub)\ub a
\end{equation}
with $f(\o,\ub)$ obeying  
\begin{equation}\label{main thm f lower bound}
0\leq f\leq 1 \mbox{ on } \mathbb S^2\times (0,\delta] \quad \mbox{ and } \quad f(\cdot, \ub)\ge m \mbox{ on } B_{p}(\e)  
\end{equation}
for a point  $p\in \mathbb S^2$  and constants  $m\in (0,1)$,  $\e\in (0,\pi/2)$, then within the solved hyperbolic region, there \textit{exists a unique} MOTS $M_{\ub}$ on each $\Hb_{\ub}$ with $0< \ub \leq \d$, and the collection of MOTS $\{M_{\ub}\}$ emerges and censors the central singularity. And they form an achronal apparent horizon.
\end{theorem}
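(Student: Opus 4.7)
The plan is to prove Theorem~\ref{main thm section 14} in two separable stages, hyperbolic existence followed by elliptic MOTS construction, and then glue the families of MOTSs into a smooth achronal apparent horizon. In the hyperbolic stage, I will run a bootstrap argument with the assumptions \eqref{BA.1}--\eqref{BA.4} on $\mathcal{O}$, $\mathcal{R}$, $\mathcal{S}$, $\widetilde{\mathcal{O}}_{5,2}$, and $\widetilde{\mathcal{O}}'_{5,2}$, using as small parameter the quotient $\ub\at/(|u|\O^{3/2})\le B^{-1}$ which holds in both the exterior region $\{-1\le u\le u_1\}$ and the interior region $\{u_1\le u<0\}$ under the constraint $\ub\at\le |u|\O^{2-\tilde\delta}(u,0)$. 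The argument proceeds in the order sketched in Section~\ref{order of bootstrap}: first $L^2(\S)$ control of the Ricci coefficients in Section~\ref{secRicci}, then the scalar field bounds of Section~\ref{scalar field}, the scalar field energy estimates of Section~\ref{energy scalar field}, the top order elliptic estimates of Section~\ref{elliptic estimates} (providing $\widetilde{\mathcal{O}}_{5,2}\lesssim 1+\mathcal{R}$), and finally the curvature energy estimates of Section~\ref{energy estimate curvature} giving $\mathcal{R}\lesssim 1$. Each estimate improves the corresponding bootstrap constant by a factor of $B^{-c}$ or $|u_1|^{c}$, closing the bootstrap and yielding a unique regular solution throughout the stated spacetime region.

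In the elliptic stage, I translate the lower bound \eqref{main thm f lower bound} on $f(\o,\ub)$ into the MOTS problem. The computation in \eqref{MOTS formation integration}--\eqref{f definition} shows that, for $\ub$ in the interior, the leading contribution to $\O^{-1}\mathrm{tr}\chi(u,\ub,\o)-(\O^{-1}\mathrm{tr}\chi)_0$ is exactly $-\ub a f(\o,\ub)(1-\mu^*)/(|u|^2\O^2(u,0))$, with all other contributions of smaller size (a fact I would verify using Proposition~\ref{trch.bd}, the hyperbolic control of $\O\chih$ and $\O\nabla_4\phi$, and the elementary identity \eqref{e4phi difference with initial data} for $u\O e_4\phi$). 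Consequently, for each fixed $\ub\in(0,\delta]$ the equation $\mathcal{G}(R,\ub)=0$ is equivalent to $\mathcal{S}(\tilde{\phi},\ub)=0$ in \eqref{S=0} with the prescribed $f$. I then invoke the barrier construction of Section~\ref{sec-Barriers}, choose the universal constant $\kappa=\kappa(m,\epsilon)$ as in Theorem~\ref{theorem-Schauder-phi}, deploy the Moser iteration and Schauder estimates of Section~\ref{sec-Schauder-estimates} to obtain a priori $C^{2,1/3}$ bounds, and finally use the continuity method in Section~\ref{linearization} with the crucial ansatz $R=(\ub a)^{1-\mu^*}\cp^{(1-\mu^*)/\mu^*}$. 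Here the linearized operator takes the form \eqref{eq-linearization-S}, whose invertibility is furnished by Lemma~\ref{lemma-linearized-op-invert} and Lemma~\ref{lemma-linearized-apriori-estimates}. This yields a unique MOTS $M_{\ub}\subset\Hb_{\ub}$ for each $\ub\in(0,\delta]$.

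To upgrade the family $\{M_{\ub}\}$ to a smooth apparent horizon, I apply Theorem~\ref{thrm-smoothness-tubes}: the implicit function theorem together with the invertibility of $\partial_{\cp}\mathcal{S}(\cp,\ub)$ at every fixed $\ub$ gives smooth dependence of $\cp(\o,\ub)$ on both variables. The achronal (causal) character follows from the null comparison principle for MOTSs lying in a double null foliation, as in Proposition~\ref{thrm-spacelike} and the reference to \cite{AnHe}; note that because the theorem as stated does not assume \eqref{eq-assumption-ub-f}, I only claim achronality rather than strict spacelikeness. Finally, the emergence from the central singularity $O$ is read off from the location bounds $-(\ub a\cdot m\epsilon^2)^{1-\mu^*}\le u\le -(\ub a\cdot m\epsilon^{\tau+2})^{1-\mu^*}$ for the MOTS, which force $u\to 0$ as $\ub\to 0^+$.

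The main obstacle is the hyperbolic stage, specifically the treatment of the borderline terms that arise because the parameter $a$ is no longer large. Concretely, quantities such as $\ub^{1/2}a^{1/4}/(|u|^{1/2}\O)$ are no longer uniformly small, and they appear in the top-order energy estimates through the weights in $\widetilde{\mathcal{O}}_{5,2}(\etb)$, $\underline{\mathcal{S}}(\nabla_3\phi)$, $\mathcal{R}(K,\check\sigma)$ and $\underline{\mathcal{R}}(\betab)$; see \eqref{tilde O 5 2}--\eqref{S i 2}. Closing the bootstrap therefore hinges on absorbing these borderline terms by means of the $\F$-weighted norms, the refined estimate for $\widetilde{\mathcal{O}}'_{5,2}(\O\mathrm{tr}\chi)$ in Proposition~\ref{chi.5.bd}, the renormalizations $\b_A-\tfrac12\nabla_4\phi\nabla_A\phi$ and $K-|u|^{-2}-\tfrac14\nabla^A\phi\nabla_A\phi$, and the crucial cancellations stemming from Christodoulou's initial values $\O\omb(u,0)=\tfrac{\mu^*}{4(1-\mu^*)}|u|^{-1}$ and $\partial_u\phi(u,0)=\sqrt{2\mu^*/(1-\mu^*)}|u|^{-1}$. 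The subtler elliptic obstacle is that $-4(1-\mu^*)R\O\omb\approx -4\mu^*$ is no longer negligible in the equation for $\tilde{\phi}$, which is why the alternative ansatz $R=(\ub a)^{1-\mu^*}\cp^{(1-\mu^*)/\mu^*}$ of Section~\ref{linearization} is indispensable for running the continuity method. Once both obstacles are overcome, the pieces assemble into the stated theorem.
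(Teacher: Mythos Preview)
Your proposal is correct and follows essentially the same route as the paper: the theorem is proved by assembling the hyperbolic bootstrap (Sections~\ref{secbasic}--\ref{energy estimate curvature}) and the elliptic MOTS construction (Sections~\ref{sec-Barriers}--\ref{sec-existence-solutions}), with the computation \eqref{MOTS formation integration}--\eqref{f definition} identifying the role of $f(\omega,\ub)$, and then invoking Theorem~\ref{thrm-smoothness-tubes} together with the null comparison principle of \cite{AnHe} for smoothness and achronality. You have correctly identified both principal obstacles---the borderline terms from $\ub^{1/2}a^{1/4}/(|u|^{1/2}\O)$ in the hyperbolic part and the order-one coefficient $-4(1-\mu^*)R\O\omb$ in the elliptic part---and the mechanisms (the $\F$-weighted norms, the renormalized curvature quantities, and the ansatz $R=(\ub a)^{1-\mu^*}\cp^{(1-\mu^*)/\mu^*}$) used to overcome them.
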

Set $a=1$, we then get Theorem \ref{main thm}.   \\ 

Note that for $\ub<0$ we require our spacetime to coincide with Christodoulou's naked-singularity solution in \cite{Chr.2}. Along $u=-1$ we consider the possible extensions of $\chih(-1, \ub, \o), \partial_{\ub}\phi(-1, \ub, \o)$ for $\ub>0$. Via the next theorem, we can see that Christodoulou's naked-singularity solution is associated with nonlinear instability of high co-dimensions: 

\begin{theorem}\label{co dimension instability theorem v1}
Consider the space of prescribed initial data for $\chih(-1, \ub, \o)$ and $\partial_{\ub}\phi(-1, \ub, \o)$ along $u=-1$ with $\ub> 0$. With $|\chih|(-1, 0, \o)=0, \,\, \partial_{\ub}\phi(-1, 0, \o)=\sqrt{2}\cdot\sqrt{\f{1-\ms}{\ms}}\cdot\f{1-\ms}{4}$ corresponding to their values of Christodoulou's naked-singularity solution in \cite{Chr.2}, then for any $k\in \mathbb{Z}^+$, we can find smooth functions $f_1, f_2, ... , f_k$ and $g_1, g_2, ... , g_k$ with variables $\ub$ and $\o$, such that the below prescribed initial data for $|\chih|$ and $\partial_{\ub}\phi$ with $\ub>0$
\begin{equation*}
\begin{split}
&\bigg(|\chih|(-1, \ub, \o), \quad \partial_{\ub}\phi(-1, \ub, \o)\bigg)\\
=&\bigg(0+\lambda_1 f_1+\lambda_2 f_2+\cdot\cdot\cdot+\lambda_k f_k, \quad \partial_{\ub}\phi(-1, 0, \o)+\lambda_1' g_1+\lambda_2' g_2+\cdot\cdot\cdot+\lambda_k' g_k\bigg)
\end{split}
\end{equation*}
would lead to the trapped surface and the apparent horizon formation, when 
$$\sum_{1}^k \lambda_i^2+\sum_1^k {\lambda_i'}^2\neq 0 \quad \mbox{ with } \quad \lambda_k, \lambda_k'\in \mathbb{R}.$$ 

Moreover, if it holds
\begin{equation*}
\begin{split}
&\bigg(0+\lambda_1 f_1+\lambda_2 f_2+\cdot\cdot\cdot+\lambda_k f_k, \quad \partial_{\ub}\phi(-1, 0, \o)+\lambda_1' g_1+\lambda_2' g_2+\cdot\cdot\cdot+\lambda_k' g_k\bigg)\\
=&\bigg(0+\t\lambda_1 f_1+\t\lambda_2 f_2+\cdot\cdot\cdot+\t\lambda_k f_k, \quad \partial_{\ub}\phi(-1, 0, \o)+\t\lambda_1' g_1+\t\lambda_2' g_2+\cdot\cdot\cdot+\t\lambda_k' g_k\bigg),
\end{split}
\end{equation*}
then it renders
$$\lambda_i=\t\lambda_i \quad \mbox{ and } \quad \lambda_i'=\t\lambda_i \quad \mbox{ with } \quad 1\leq i \leq k.$$
We may therefore say that, for any $k\in\mathbb{Z}^+$, Christodoulou's naked-singularity solution in \cite{Chr.2} has at least co-dimensional $2k$ nonlinear instability subject to outgoing BV characteristic perturbations at $(-1, 0, \o)$. 
\end{theorem}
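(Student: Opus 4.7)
The plan is to reduce the co-dimensional instability claim directly to the anisotropic apparent horizon formation criterion in Theorem \ref{main thm section 14} by choosing $f_i,g_i$ with pairwise disjoint angular supports on $\mathbb{S}^2$. Specifically, I would fix $k$ distinct points $p_1,\dots,p_k\in \mathbb{S}^2$ together with pairwise disjoint geodesic balls $B_{p_i}(\epsilon_i)$, and pick smooth cutoffs $\psi_i\in C^{\infty}(\mathbb{S}^2;[0,1])$ with $\operatorname{supp}\psi_i\subset B_{p_i}(\epsilon_i)$ and $\psi_i\equiv 1$ on $B_{p_i}(\epsilon_i/2)$. I would then set
\begin{equation*}
f_i(\o,\ub):= c_i\,\psi_i(\o)\,\sqrt{a}, \qquad g_i(\o,\ub):= d_i\,\psi_i(\o)\,\sqrt{a} \qquad (\ub>0),
\end{equation*}
where the small positive constants $c_i,d_i$ are fixed in advance (they may be chosen depending on a prescribed bounded range for the coefficients $\lambda_i,\lambda_i'$; the excerpt allows $\delta, a, B$ to adjust with the perturbation). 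For any tuple with $\sum\lambda_i^2+\sum(\lambda_i')^2\neq 0$, the prescribed data $|\chih|(-1,\ub,\o)=\sqrt{a}\sum_i\lambda_i c_i\psi_i(\o)$ and $\partial_{\ub}\phi(-1,\ub,\o)=(\partial_{\ub}\phi)_0+\sqrt{a}\sum_i\lambda_i' d_i\psi_i(\o)$ are smooth in $(\o,\ub)$ for $\ub>0$ with (permitted) jump at $\ub=0$, and all $\ub$-derivatives of $f_i,g_i$ vanish.

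Next I would check both hypotheses of Theorem \ref{main thm section 14}. The upper bound \eqref{upper bound main thm a=1 section 14} reduces to $a^{-1/2}\|\nabla^i\chih\|_{L^2(S_{-1,\ub})}+a^{-1/2}\|\nabla^i\nabla_4\phi\|_{L^2(S_{-1,\ub})}\lesssim \sum_i(|\lambda_i|c_i+|\lambda_i'|d_i)\|\psi_i\|_{C^5}$, which is a finite constant $B$; derivatives in $\ub$ contribute nothing. For the lower bound, the disjointness of the supports kills all cross-terms, so
\begin{equation*}
\int_0^{\ub}|\chih|^2(-1,\ub',\o)+\bigl[\partial_{\ub}\phi(-1,\ub',\o)-(\partial_{\ub}\phi)_0\bigr]^2\,d\ub'
= \ub\,a\,\sum_i\bigl(\lambda_i^2 c_i^2+(\lambda_i')^2 d_i^2\bigr)\,\psi_i^2(\o),
\end{equation*}
so that $f(\o,\ub)=\sum_i(\lambda_i^2 c_i^2+(\lambda_i')^2 d_i^2)\psi_i^2(\o)$. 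With $c_i,d_i$ taken small enough, $0\le f\le 1$, while nontriviality produces some $i_0$ with $\lambda_{i_0}^2+(\lambda_{i_0}')^2>0$, giving $f(\cdot,\ub)\ge m:=\lambda_{i_0}^2 c_{i_0}^2+(\lambda_{i_0}')^2 d_{i_0}^2>0$ on $B_{p_{i_0}}(\epsilon_{i_0}/2)$. Thus \eqref{main thm f def}--\eqref{main thm f lower bound} are in force, and Theorem \ref{main thm section 14} delivers the unique MOTS $M_{\ub}$ on each $\Hb_{\ub}$ and the achronal apparent horizon censoring the naked singularity. For uniqueness of the decomposition, if two tuples $(\lambda_i),(\tilde\lambda_i)$ produce the same $|\chih|$, then $\sum_i(\lambda_i-\tilde\lambda_i)c_i\psi_i\equiv 0$; evaluating at an interior point of each $B_{p_i}(\epsilon_i/2)$ where only $\psi_i$ is nonzero forces $\lambda_i=\tilde\lambda_i$, and the same argument via $g_i$ forces $\lambda_i'=\tilde\lambda_i'$.

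The conceptual heart of the argument is the disjoint-support construction that simultaneously diagonalises the quadratic form appearing in \eqref{main thm f def} and enforces linear independence, so that every nonzero direction in the $2k$-dimensional perturbation space satisfies the anisotropic lower-bound hypothesis of the apparent horizon formation criterion. The main technical point to watch is compatibility between the two a~priori competing constraints: the upper bound $B$ must remain admissible (so that Theorem \ref{main thm section 14} furnishes suitable $\delta=\delta(B)$ and $a=a(B)$), while the lower bound $m$ must remain strictly positive. Since both $B$ and $m$ are computed from the same finite data $(c_i,d_i,\lambda_i,\lambda_i')$, one rescales $c_i,d_i$ downward if $\max_j(|\lambda_j|,|\lambda_j'|)$ is large; the localization to disjoint balls ensures this rescaling never interferes with $m>0$. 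No new hyperbolic or elliptic input is needed beyond the already-proved Theorem \ref{main thm section 14}, so the instability result is essentially a structural consequence of the anisotropic trapping criterion together with the linear-algebraic observation that bump functions on disjoint angular regions span a $2k$-dimensional space of perturbations whose every nonzero element triggers horizon formation.
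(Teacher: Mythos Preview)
Your proposal is correct and follows essentially the same approach as the paper: choose bump functions with pairwise disjoint angular supports so that the quadratic form in \eqref{main thm f def} diagonalizes, then invoke Theorem~\ref{main thm section 14}. The only cosmetic difference is that the paper places the $f_i$ and $g_i$ in $2k$ mutually disjoint discs $D_1,\dots,D_k,D_1',\dots,D_k'$, whereas you let $f_i$ and $g_i$ share the same disc $B_{p_i}(\e_i)$; both choices work because $|\chih|^2$ and $[\partial_{\ub}\phi-(\partial_{\ub}\phi)_0]^2$ never produce cross terms with one another, and the uniqueness argument only requires the $f_i$'s (respectively the $g_i$'s) to have disjoint supports among themselves, which you have.
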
 
 
\begin{proof}
For any $k\in\mathbb{Z}^+$ and $0\leq\ub \ll 1$,  on the $2$-sphere $S_{-1, \ub}$ we first fix $2k$ disjoint discs: $D_1, D_2, ..., D_k$ and $D_1', D_2', ..., D_k'$. 
We prescribe smooth functions $f_i$ being compactly supported in $D_i$ and smooth functions $g_i$ being compactly supported in $D_i'$. For $1\leq i \leq k$, we further require
$$0\leq f_i\leq \at \mbox{ on } \mathbb S^2\times (0,\delta], \quad f_i(\cdot, \ub)\ge m_i\at \mbox{ on } B_{p_i}(\e_i)\subset D_i ,$$
$$0\leq g_i\leq \at \mbox{ on } \mathbb S^2\times (0,\delta], \quad  g_i(\cdot, \ub)\ge m_i'\at \mbox{ on } B_{p_i'}(\e_i')\subset D_i'$$  
with points  $p_i\in D_i\subset \mathbb S^2$, $p_i'\in D_i'\subset \mathbb S^2$ and constants  $m_i, m_i'\in (0,1)$, $\e_i, \e_i'\in (0,\pi/2)$. With these choices of initial data, conditions \eqref{main thm f def} and \eqref{main thm f lower bound} in Theorem \ref{main thm section 14} are satisfied and the corresponding conclusions for trapped surface and apparent horizon formation hold. And if we have 
\begin{equation*}
\begin{split}
&\bigg(0+\lambda_1 f_1+\lambda_2 f_2+\cdot\cdot\cdot+\lambda_k f_k, \quad \partial_{\ub}\phi(-1, 0, \o)+\lambda_1' g_1+\lambda_2' g_2+\cdot\cdot\cdot+\lambda_k' g_k\bigg)\\
=&\bigg(0+\t\lambda_1 f_1+\t\lambda_2 f_2+\cdot\cdot\cdot+\t\lambda_k f_k, \quad \partial_{\ub}\phi(-1, 0, \o)+\t\lambda_1' g_1+\t\lambda_2' g_2+\cdot\cdot\cdot+\t\lambda_k' g_k\bigg),
\end{split}
\end{equation*}
since all the $f_i$ and $g_i$ with $1\leq i \leq k$ are compacted supported in disjoint discs, it immediately implies 
$$\lambda_i=\t\lambda_i \quad \mbox{ and } \quad \lambda_i'=\t\lambda_i \quad \mbox{ with } \quad 1\leq i \leq k.$$

\end{proof}

\subsection{Instability Subject to {\color{black}Perturbations with Finite $C^0$ Norms}}\label{C1 instability} 

The above two theorems in Section \ref{BV instability} generalize both the results in \cite{AL} by An-Luk and in \cite{AH} by An-Han with the requirement: 
$$f_i(\o, \ub)\ge m_i\at \mbox{ on } B_{p_i}(\e_i)\subset D_i, \quad g_i(\o, \ub)\ge m_i'\at \mbox{ on } B_{p_i'}(\e_i')\subset D_i' \quad \mbox{ for} \quad \ub> 0,$$
and
$$f_i(\o, \ub)=0, \quad  g_i(\o, \ub)=0 \quad \mbox{ for } \quad \ub\leq 0.$$
In this subsection, we extend the instability theorems in Section \ref{BV instability} further by allowing choosing continuous functions $f_i(\o, \ub)$ and $g_i(\o, \ub)$ for the $\ub$ variable.  

We start from reconsidering
\begin{equation*}
\begin{split}
&\O^{-1}\tr\chi(u, \ub, \o)-\f{2(1-\ms)}{|u|}\\
=&\int_0^{\ub}-|\chih|^2(u, \ub', \o)d\ub'+\int_0^{\ub}-|\nab_4\phi|^2(u, \ub', \o)d\ub'+\int_0^{\ub}-\f12(\tr\chi)^2(u, \ub', \o)d\ub'.
\end{split}
\end{equation*}
By the estimates derived above, we have
\begin{equation*}
\begin{split}
&|\O^{-1}\tr\chi(u, \ub, \o)-\f{2(1-\ms)}{|u|}-\int_0^{\ub}\f{|\O\chih|^2(-1, \ub', \o)+[\O e_4\phi(-1, \ub', \o)-\O e_4\phi(-1, 0, \o)]^2}{|u|^2\O^2(u, 0, \o)} d\ub'|\\
\leq&\f{\ub a}{|u|^2\O^2}\cdot\f{\ub\at}{|u|}+\f{\ub a}{|u|^2\O^2}\cdot(\f{\ub\at}{|u|}+|u|^{\f{\ms}{1-\ms}}a^{-\f12})+\f{\ub}{|u|^2\O^2}\O^4(u, \ub, \o)+\f{\ub a}{|u|^2\O^2}\cdot\f{\ub^2 a}{|u|^2}(u, \ub, \o).
\end{split}
\end{equation*}
Together with $\O(u,0)=|u|^{\f{\ms}{2(1-\ms)}}$, this implies
\begin{equation*}
\begin{split}
&|\O^{-1}\tr\chi(u, \ub, \o)-\f{2(1-\ms)}{|u|}-\int_0^{\ub}\f{|\O\chih|^2(-1, \ub', \o)+[\O e_4\phi(-1, \ub', \o)-\O e_4\phi(-1, 0, \o)]^2}{|u|^2\O^2(u, 0, \o)} d\ub'|\\
\leq&\f{\ub a}{|u|^2\O^2}\cdot \O^{\f32} \cdot [\f{\ub\at}{|u|\O^{\f32}}+\f{\O^{\f12}}{\at}+\f{\O^{\f52}}{a}]\ll \f{\ub a}{|u|^2\O^2}\cdot \O^{\f32}.
\end{split}
\end{equation*} 
 
\vfill 
 
For the $f(\o, \ub)$ introduced in \eqref{main thm f def}, with any constant $C_1>2$ and $\Lambda$ defined in Proposition \ref{gamma}, we further set
\begin{equation}\label{f tilde def}
f(\o, \ub)=g(\ub) \t f(\o, \ub) \quad \mbox{ with } \quad g(\ub)=(C_1\Lambda+C_1)^{\f12}\cdot [\ln (\ln\f{1}{\ub})]^{-\f12}
\end{equation}
and $\t f(\o, \ub)$ to satisfy  
\begin{equation}\label{main thm f bound v2}
0\leq \t f \leq 1 \mbox{ on } \mathbb{S}^2\times (0, \delta] \quad \mbox{ and } \quad \t f(\cdot, \ub){\color{black}\geq} m \mbox{ on } B_p(\epsilon)
\end{equation}  
for a point $p\in \mathbb{S}^2$ and constants $m\in (0,1)$, $\epsilon\in(0, \pi/2)$. With this choice of $g(\ub)$, we let
$$\kappa:=\ln[\f{3}{2}g(\ub)^{-1}]=\ln\bigg(\f32\{\ln[(\ln\f{1} {\ub})^{\f{1}{C_1\Lambda+C_1}}]\}^{\f12}\bigg).$$
Now we have $e^{\kappa}=3\{\ln[(\ln{1}/{\ub})^{\f{1}{C_1\Lambda+C_1}}]\}^{\f12}/2$. And when $0\leq\ub\ll 1$, it holds
$$8\Lambda \kappa e^{\kappa}\leq e^{2\kappa}=\ln[(\ln\f{1}{\ub})^{\f{1}{C_1\Lambda+C_1}}]\leq \ln [l\ln(\ub^{-1})] = \ln\ln(\ub^{-l}) \quad \mbox{ with } \quad 0<l\ll1. $$
For the $C(\kappa)$ defined in \eqref{C kappa}, we then obtain 
$$\ln [C(\kappa)]=C e^{4\Lambda \kappa e^{\kappa}}\ln (C e^{\kappa})\leq e^{8\Lambda \kappa e^{\kappa}}\leq \ln(\ub^{-l}).$$
Hence it holds that the $C(\kappa)$ defined in \eqref{C kappa} obeys
$$C(\kappa):=(C e^{\kappa})^{C e^{4\Lambda \kappa e^{\kappa}}}\leq \ub^{-l} \quad \mbox{ with } \quad 0<l\ll1.$$
If we require 
\begin{equation}\label{ub delta l}
\ub^{-l} \leq |u|^{\f{-\t\d\ms}{2(1-\ms)}} \iff \ub\geq |u|^{\f{\t\d \ms}{2l(1-\ms)}},
\end{equation}
we immediately have
\begin{equation}\label{kappa and Omega}
C e^{\kappa}\ll C(\kappa)=(C e^{\kappa})^{C e^{4\Lambda \kappa e^{\kappa}}}\leq \ub^{-l}\leq |u|^{\f{-\t\d \ms}{2(1-\ms)}}=\O(0,u)^{-\t\delta}.
\end{equation} 

With new conditions \eqref{f tilde def} and \eqref{main thm f bound v2} we then revisit our detailed constructions of the anisotropic trapped and untrapped surfaces in Section 3 of \cite{AH}. We replace $f(\o, \ub)$ by $g(\ub) \t f(\o, \ub)$. Consider the below main operator defined in (3.1) of \cite{AH}
$$\mathcal{S}(\phi)=\Delta_{\gamma}\phi+1-\f12 f e^{\phi}+F_1(\phi)+F_2(\phi)$$
\begin{equation*}
\begin{split}
\mbox{with } F_1(\phi):=&-4\O\omb\cdot \ub a e^{-\phi}|\nab_{\gamma}\phi|^2,\\
F_2(\phi):=&{\color{black}2R^{-1}\O\chibh_{kl}\gamma^{ik}\gamma^{jl}\nab_i\phi\nab_j\phi}+2\gamma(\eta, \nab_{\gamma}\phi)+[-\f12\O \tr_g\chib-\f{1}{\ub a e^{-\phi}}]\cdot\ub a e^{-\phi}|\nab_{\gamma}\phi|^2\\
&+\ub a e^{-\phi}\cdot[\f12\O^{-1}\tr_g\chi-\f{1}{\ub a e^{-\phi}}+\f{\ub a f(\o, \ub)}{2(\ub a)^2 e^{-2\phi}}].
\end{split}
\end{equation*}
Replacing $f$ by $g(\ub)\t f$, by the hyperbolic estimates in this article, we have
\begin{equation}\label{operator S phi}
\mathcal{S}(\phi)=\Delta_{\gamma}\phi+1-\f12 g(\ub) \t f e^{\phi}+F_1(\phi)+F_2(\phi) \mbox{ with } |F_2(\phi)|\leq C e^{\phi}a^{-\f13}\O^{\t\d}(|\nab_{\gamma}\phi|^2+1). 
\end{equation} 
Recalling
$$\kappa=\ln[\f{3}{2}g(\ub)^{-1}]=\ln\bigg(\f32\{\ln[(\ln\f{1}{\ub})^{\f{1}{C_1\Lambda+C_1}}]\}^{\f12}\bigg),$$
and using \eqref{operator S phi}, we first check that 
$$\mathcal{S}(\kappa)=1-\f12g(\ub)\cdot\t f\cdot \f32 g(\ub)^{-1}+F_1(\kappa)+F_2(\kappa)\geq 1-\f34-Ce^{\kappa}a^{-\f13}\O(0, u)^{-\t\d}>\f18,$$
where we use $F_1(\kappa)=0$ and \eqref{kappa and Omega}. We also revisit the construction of $\bar{\tilde{\phi}}$ in Section 3 of \cite{AH} and with $f$ replaced by $g(\ub)\t f$ we have that 
$$|\bar{\tilde{\phi}}|_{C^0(\mathbb{S}^2)}\leq -\log(g(\ub)m\epsilon^{\tau+2}), \quad |\nab_{\gamma_0}\bar{\tilde{\phi}}|\leq \epsilon^{-1}, \quad |\nab^2_{\gamma_0}\bar{\tilde{\phi}}|\leq \epsilon^{-2}$$
 with $\tau$ being a fixed number larger than $2.2$. Back to \eqref{eq-choice-a-b}, if requiring \eqref{ub delta l}, we then verify
\begin{equation}\label{K requirement v2}
K=\max\{(Ce^{\kappa})^{Ce^{4\Lambda\kappa}}, \|\bar{\tilde{\phi}}\|_{C^2}\}\leq \ub^{-l} \leq a^{\f13}\cdot\O(u,0)^{-\t\d}=a^{\f13}\cdot|u|^{\f{-\t\d\ms}{2(1-\ms)}}.
\end{equation}

Now we revisit the arguments in Section \ref{sec-Schauder-estimates}, Section \ref{linearization} and Section \ref{sec-existence-solutions} of this article with $f(\o, \ub)$ replaced by  $g(\ub)\t f(\o, \ub)$.  The arguments in Section \ref{sec-Schauder-estimates} stay the same. For the arguments in Lemma \ref{lemma-linearized-op-invert} of Section \ref{linearization}, we now consider the positive eigenfunction $\psi_1$ associated with the corresponding first eigenvalue $\mu_1>0$, i.e.,
\begin{align*}
\partial_{\cp} \mathcal S(\cp)[\psi_1]
&=\Delta_{\gamma}\psi_1-\ms\psi_1-\f{1-\ms}{2}f \cp^{-\f{1}{\ms}}\psi_1+c^{i}\nab_i \psi_1+c\psi_1\\
&=-\mu_1\psi_1+c^{i}\nab_i \psi_1+c\psi_1.
\end{align*} 
With $\O$ being sufficiently small in the interior region, we have $|c^i|+|c|\ll 1$. To make sure that  $\partial_{\cp} \mathcal S(\cp)[\psi_1]<0$ on $\mathbb{S}^2$, here we check the size of $f\cp^{-\f{1}{\ms}}$. Using the facts $f(\o,\ub)=g(\ub)\t f(\o, \ub)$ with $0\leq \t f\leq 1$ and $R=(\ub a)^{1-\ms}\cp^{\f{1-\ms}{\ms}}$ being the MOTS, we have $\ub a \cp^{\f{1}{\ms}}=R^{\f{1}{1-\ms}}\approx \ub a g(\ub)$, {\color{black}which is from the construction of the barriers. Here $\approx$ is independent of $\ub$.} Hence it holds $f\cp^{-\f{1}{\ms}}=g(\ub)\t f(\ub, \o)\cp^{-\f{1}{\ms}}\approx 1$, {\color{black}being independent of $\ub$}. 
Thus, the rest arguments stay the same as in Section \ref{linearization}. We proceed similarly as in Lemma \ref{lemma-linearized-apriori-estimates}. With replacing $f$ and $f_0$ by $g(\ub)\t f$ and $g(\ub) \t f_0$, we get the corresponding desired conclusion.  All the other proofs for the elliptic part can be obtained in the same way as in Section \ref{linearization} and Section \ref{sec-existence-solutions}. In summary, allowing the continuous perturbation for $\chih(-1, \ub, \o)$ and $\nab_4\phi(-1, \ub, \o)$ with respect to the $\ub$ variable, proceeding the same as in Section \ref{BV instability} we then prove 

\begin{theorem} \label{main thm 2 section 14} 
Consider the characteristic initial value problem for the Einstein-scalar field system \eqref{1.1}. Assigning Christodoulou's naked-singularity initial data in \cite{Chr.2} along  $\ub=0$ with $-1\leq u \leq 0$ and prescribing perturbed initial data along $u=-1$ satisfying
\begin{equation}
\sum_{i\leq 5, j\leq 3}\ub^{j}a^{-\f12} \|(\partial_{\ub})^j\nab^{i}\chih\|_{L^{\infty}_{\ub}L^2(S_{-1,\ub})}+\ub^{j}a^{-\f12} \|(\partial_{\ub})^j\nab^{i}\nab_4\phi\|_{L^{\infty}_{\ub}L^2(S_{-1,\ub})}\leq B, 
\end{equation}
then for each $B$ there exist a sufficient small $\delta=\delta(B)$ and constant $a=a(B)$\footnote{In particular, we can choose $a(B)=1$.}, and the Einstein-scalar field system admits a unique regular solution in the spacetime region $(u, \ub, \o)$ satisfying $0\leq\ub\leq \delta$ and $\ub\leq |u|\O^{2-\f{\t\delta}{2}}(u,0)\leq 1$ with $0<\t\d\ll 1$. 

Moreover, if we further assume  
\begin{equation}\label{lower bound for C0 perturbation} 
 \int_0^{\ub}|\chih|^2 (-1, \ub', \o)+[\nab_4\phi(-1, \ub', \o)-\nab_4\phi(-1, 0, \o)]^2 d\ub'=g(\ub)\t f(\o, \ub)\ub a,
\end{equation}
with 
$$g(\ub)=(C_1\Lambda+C_1)^{\f12}\cdot [\ln(\ln\f{1}{\ub})]^{-\f12} $$
where $C_1>2$, $\Lambda$ being a fixed positive constant, and $\t f(\o,\ub)$ obeying  
\begin{equation}
0\leq \t f\leq 1 \mbox{ on } \mathbb S^2\times (0,\delta] \quad \mbox{ and } \quad \t f(\cdot, \ub)\ge m \mbox{ on } B_{p}(\e)  
\end{equation} 
for a point  $p\in \mathbb S^2$  and constants  $m\in (0,1)$, $\e\in (0,\pi/2)$, then within the solved hyperbolic region, there \textit{exists a unique} MOTS $M_{\ub}$ on each $\Hb_{\ub}$ with $0< \ub \leq \d$ and the collection of MOTS $\{M_{\ub}\}$ emerges and censors the central singularity. And they form an achronal apparent horizon.
\end{theorem}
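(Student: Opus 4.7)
\smallskip

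The plan is to show that Theorem~\ref{main thm 2 section 14} follows by running the proof of Theorem~\ref{main thm section 14} with $f(\o,\ub)$ replaced by $g(\ub)\tilde f(\o,\ub)$ throughout, provided we slightly tighten the hyperbolic existence region from $\{\ub\leq |u|\O^{2-\tilde\d}(u,0)\}$ to $\{\ub\leq |u|\O^{2-\tilde\d/2}(u,0)\}$. The extra factor of $\O^{\tilde\d/2}$ is not needed in the hyperbolic part (which only uses the upper bound $B$ on the initial norm and hence is indifferent to the precise profile of the perturbation) but is needed to accommodate the growth of constants in the elliptic construction as $\ub\to 0^+$. With this understood, the existence and uniqueness of the regular spacetime, together with all the Ricci/curvature/scalar-field estimates of Sections~\ref{secRicci}--\ref{energy estimate curvature}, carry over verbatim.

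First I would replay the barrier construction of Section~\ref{sec-Barriers} for the modified right-hand side. The subsolution $\underline{\tilde\phi}=\ln(3/2)$ still verifies $S(\ln(3/2),\ub)\geq 1-\tfrac{3}{4}g(\ub)\tilde f+F>0$, since $g(\ub)\leq 1$ and $F$ is absorbed by $\O^{\tilde\d}a^{-1/3}$-smallness. For the supersolution, the one-valley construction of Lemma~3.7 of \cite{AH} applied to the shrunken source $g(\ub)\tilde f$ produces $\bar{\tilde\phi}$ with $\|\bar{\tilde\phi}\|_{C^0}\leq -\log(g(\ub)m\epsilon^{\tau+2})$ and bounded gradient/Hessian in $\o$. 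Setting $\kappa=\ln[\tfrac{3}{2}g(\ub)^{-1}]$ (up to the $-\log(m\epsilon^{\tau+2})$ constant), the specific algebraic choice of $g(\ub)=(C_1\Lambda+C_1)^{1/2}[\ln\ln(1/\ub)]^{-1/2}$ yields $e^{2\kappa}=\ln[(\ln 1/\ub)^{1/(C_1\Lambda+C_1)}]$ and hence $e^{4\Lambda\kappa e^\kappa}\leq \ln(\ub^{-l})$ for arbitrarily small $l>0$. This is the engine of the proof: it forces the Moser--Schauder constant $K=(Ce^\kappa)^{Ce^{4\Lambda\kappa e^\kappa}}\leq \ub^{-l}\leq \O(u,0)^{-\tilde\d}$, exactly the inequality \eqref{K requirement v2} required for our linearization regime.

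With $K$ so controlled, the apriori $C^{2,1/3}$ regularity of Section~\ref{sec-Schauder-estimates} applies unchanged (the estimates there were derived for $0\leq f\leq 1$ with explicit dependence on $\kappa$). The linearization analysis of Section~\ref{linearization} then goes through after one subtle check: in the linearized operator
\[
L_0 w=\Delta_\gamma w-\ms w-\tfrac{1-\ms}{2}f\,\cp^{-1/\ms}w
\]
the coefficient $f\cp^{-1/\ms}=g(\ub)\tilde f\cp^{-1/\ms}$ must remain bounded above and below uniformly in $\ub$. But on the solved MOTS the relation $R\O_0^2(R)=\ub a e^{-\phi/(1-\ms)}$ together with the barrier bounds $\underline{\tilde\phi}\leq\tilde\phi\leq\bar{\tilde\phi}$ gives $\cp^{1/\ms}\asymp g(\ub)$ with constants depending only on $m,\epsilon,\tau$; hence $f\cp^{-1/\ms}\asymp \tilde f$, which lies in $[0,1]$ uniformly and is bounded below on $B_p(\epsilon)$. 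Therefore the first eigenvalue $\mu_1\geq \ms$ and the positive eigenfunction $\psi_1$ are under uniform control, the invertibility Lemma~\ref{lemma-linearized-op-invert} holds with $\ub$-independent constants, and the method of continuity, uniqueness (Proposition~\ref{lemma-comparison}), existence (Theorem~\ref{theorem-existence-solutions}), and smoothness in $(\o,\ub)$ (Theorem~\ref{thrm-smoothness-tubes}) all transfer.

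The main obstacle, and the reason the specific double-logarithmic profile is chosen, is the competition between the shrinking perturbation $g(\ub)\to 0^+$ and the super-exponential Schauder constant $(Ce^\kappa)^{Ce^{4\Lambda\kappa e^\kappa}}$ it produces through the barrier. A slower decay of $g$ would give a smaller $\kappa$ but defeat the ``arbitrarily small in scale-critical norms" conclusion of Remark~\ref{tilde f requirement}; a faster decay of $g$ would violate $K\leq \O(u,0)^{-\tilde\d}$ and break the linearization. The profile $[\ln\ln(1/\ub)]^{-1/2}$ is the quantitative sweet spot, and verifying that it threads the needle in both the Schauder bound and the eigenvalue estimate for $L_0$ is the heart of the argument. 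Once these pieces are assembled, the spacelike/dynamical horizon property (if $|\ub\partial_{\ub}f|\leq a^{-1/3}\O^{\tilde\d}(u_1)$) is inherited from Proposition~\ref{thrm-spacelike} by a direct computation of $\partial_{\ub}R$.
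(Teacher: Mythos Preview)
Your proposal is correct and follows essentially the same approach as the paper's own argument in Section~\ref{C1 instability}: replace $f$ by $g(\ub)\tilde f$, set $\kappa=\ln[\tfrac32 g(\ub)^{-1}]$, exploit the double-logarithmic profile to force $(Ce^\kappa)^{Ce^{4\Lambda\kappa e^\kappa}}\leq \ub^{-l}\leq \O(u,0)^{-\tilde\delta}$, observe that Section~\ref{sec-Schauder-estimates} is unchanged, and verify $f\cp^{-1/\ms}\asymp\tilde f$ via the barrier relation $\cp^{1/\ms}\asymp g(\ub)$ so that the linearization of Section~\ref{linearization} goes through with $\ub$-independent constants. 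Your conceptual framing of the double-logarithm as the ``sweet spot'' between scale-critical smallness and Schauder-constant blowup is a nice gloss the paper does not make explicit, but the mechanics are identical.
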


We also have the corresponding nonlinear instability result with high co-dimensions. It is summarized in Theorem \ref{main thm 1.2}. Here we prove it. 

\begin{proof} 
We proceed in the same way as in Theorem \ref{co dimension instability theorem v1}. Only we modify the requirement for $f_i$ and $g_i$ as: For 
$$1\leq i\leq k, \quad C_1>2 \quad \mbox{ and } \quad g(\ub)=(C_1\Lambda+C_1)^{\f12}\cdot [\ln (\ln\f{1}{\ub})]^{-\f12},$$ 
the prescribed $f_i$ and $g_i$ obey
$$0\leq f_i\leq g(\ub)^{\f12}\at \mbox{ on } \mathbb S^2\times (0,\delta], \quad f_i(\cdot, \ub)\ge m_i g(\ub)^{\f12}\at \mbox{ on } B_{p_i}(\e_i)\subset D_i ,$$
$$0\leq g_i\leq g(\ub)^{\f12}\at \mbox{ on } \mathbb S^2\times (0,\delta], \quad  g_i(\cdot, \ub)\ge m_i' g(\ub)^{\f12}\at \mbox{ on } B_{p_i'}(\e_i')\subset D_i'$$  
with points  $p_i\in D_i\subset \mathbb S^2$, $p_i'\in D_i'\subset \mathbb S^2$ and constants  $m_i, m_i'\in (0,1)$, $\e_i, \e_i'\in (0,\pi/2)$. Note that we can set $a=1$. And as $\ub\rightarrow 0^+$, we have $$\lim_{\ub\rightarrow 0^+}g(\ub)^{\f12}=0.$$ 
Therefore, we can choose continuous-in-$\ub$ and smooth-in-$\o$ functions $f_i, g_i, \chih, \partial_{\ub}\phi$ along $u=-1$ to verify all the requirement of this theorem. The rest proof is the same as in Theorem \ref{co dimension instability theorem v1}. 

\end{proof}

{\color{black}
At the end of this section, we check the size of the initial perturbation. We first prove

\begin{proposition}\label{ln ln initial data size} 
For $0\leq \ub \leq \delta$, it holds 
$$\|[\ln(\ln \f{1}{\ub})]^{-\f14}\|_{\dot{H}^{\f12}([0,\delta])}\leq \f{C}{|\ln\f{1}{\delta}|^{\f12}}$$
with $C$ a uniform positive constant.  
\end{proposition}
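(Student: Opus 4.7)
The plan is to express the $\dot{H}^{1/2}$-seminorm via the Gagliardo double integral
$$[f]^2_{\dot{H}^{1/2}([0,\delta])}=\int_0^{\delta}\!\!\int_0^{\delta}\frac{(f(\ub)-f(\ub'))^2}{(\ub-\ub')^2}\,d\ub\,d\ub',$$
and then pass to the double-logarithmic scale via $\ub=\delta e^{-s}$, $\ub'=\delta e^{-t}$. A direct computation using the identity $e^{-s}-e^{-t}=2e^{-(s+t)/2}\sinh((t-s)/2)$ shows that the measure transforms as $d\ub\,d\ub'/(\ub-\ub')^2=ds\,dt/(4\sinh^2((t-s)/2))$, while $f(\ub)$ becomes $F(s):=[\ln(L+s)]^{-1/4}$ with $L:=\ln(1/\delta)$. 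Thus
$$[f]^2_{\dot{H}^{1/2}}=\int_0^{\infty}\!\!\int_0^{\infty}\frac{(F(s)-F(t))^2}{4\sinh^2((t-s)/2)}\,ds\,dt.$$
The key structural advantage is that the new kernel behaves like $(t-s)^{-2}$ near the diagonal but decays like $e^{-|t-s|}$ off the diagonal; the latter is essential for controlling the slow decay of $F$.

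Next I would establish the pointwise derivative bound
$$|F'(s)|=\frac{1}{4(L+s)[\ln(L+s)]^{5/4}},$$
which is decreasing in $s$. By the mean value theorem, for $s\le t$ this gives $|F(s)-F(t)|\le(t-s)|F'(s)|$, and hence $(F(s)-F(t))^2\le(t-s)^2|F'(\min(s,t))|^2$ after symmetrization. I would then split the double integral into the near-diagonal region $\{|s-t|\le 1\}$ and the off-diagonal region $\{|s-t|\ge 1\}$. On the near-diagonal piece, $4\sinh^2((t-s)/2)\ge(t-s)^2$ collapses the integral to $2\int_0^{\infty}|F'(s)|^2\,ds$, and a direct computation in the variable $w=L+s$ yields $\int_0^{\infty}|F'(s)|^2\,ds\le C/(L(\ln L)^{5/2})$.

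On the off-diagonal piece, $4\sinh^2((t-s)/2)\ge c\,e^{|t-s|}$, so the squared MVT bound produces an integrand $\le(t-s)^2|F'(\min(s,t))|^2 e^{-|t-s|}$. Integrating first in $h=t-s\in[1,\infty)$ leaves the finite constant $\int_1^{\infty}h^2 e^{-h}\,dh$, and what remains is again $C\int_0^{\infty}|F'(s)|^2\,ds\le C/(L(\ln L)^{5/2})$. Combining the two regions yields $[f]^2_{\dot{H}^{1/2}}\le C/(L(\ln L)^{5/2})\le C/L$, which, upon taking the square root, is precisely the stated bound $[f]_{\dot{H}^{1/2}}\le C/|\ln(1/\delta)|^{1/2}$.

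The main obstacle I would anticipate is the temptation to use the crude off-diagonal bound $(F(s)-F(t))^2\le 2(F(s)^2+F(t)^2)$; although the kernel decays exponentially in that region, $F(s)^2=1/\sqrt{\ln(L+s)}$ is not integrable on $[0,\infty)$, so this approach diverges. The remedy is to insist on the derivative estimate throughout, exploiting the decay $|F'(s)|\sim 1/((L+s)[\ln(L+s)]^{5/4})$ to recover both the $s$-integrability and, after evaluating $\int_L^{\infty}w^{-2}\,dw=1/L$, the precise $1/L$ weight that matches the right-hand side. Once this structural observation is in place, all remaining steps are routine calculus.
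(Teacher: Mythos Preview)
Your argument is correct and shares the same backbone as the paper's proof: both start from the Gagliardo double integral, pass to logarithmic coordinates, split into near-diagonal and off-diagonal regions, and use the mean value theorem to control the near-diagonal piece. The organization differs slightly. The paper first splits dyadically in the original variables ($x\le y\le 2x$ versus $y>2x$), applies the MVT on the near piece in those variables, and only changes to exponential coordinates $x=e^{u}$, $y=e^{v}$ on the far piece, where the kernel is crudely bounded by $4/y^{2}$. You instead perform the change of variables $\ub=\delta e^{-s}$ globally at the outset, which produces the clean kernel $\big(4\sinh^{2}((t-s)/2)\big)^{-1}$; then you split at $|s-t|=1$ and apply the MVT uniformly on both regions. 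Your packaging is tidier and in fact yields the sharper bound $[f]^2_{\dot H^{1/2}}\lesssim L^{-1}(\ln L)^{-5/2}$, whereas the paper stops at $L^{-1}$; since only $L^{-1}$ is needed for the proposition, both proofs suffice. Your closing remark about why the crude bound $(F(s)-F(t))^2\le 2F(s)^2+2F(t)^2$ fails off-diagonal is a useful diagnostic and is exactly the pitfall the paper's more hands-on treatment of $\hat I_2$ also avoids.
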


\begin{proof}
Let $\hat{f}(\ub):=[\ln(\ln \f{1}{\ub})]^{-\f14}$. Using the Sobolev-Slobodeckij space, we have
\begin{equation*}
\begin{split}
\|\hat{f}\|^2_{\dot{H}^{\f12}([0,\delta])}=&\iint_{\substack{0\leq x\leq \delta, \\ 0\leq y\leq \delta}}
\f{|\hat{f}(x)-\hat{f}(y)|^2}{|x-y|^2}dx dy=2\iint_{0\leq x\leq y\leq \delta}\f{|\hat{f}(x)-\hat{f}(y)|^2}{|x-y|^2}dx dy\\
=&2\iint_{\substack{0\leq x\leq y\leq \delta, \\ x\leq y\leq 2x}}\f{|\hat{f}(x)-\hat{f}(y)|^2}{|x-y|^2}dx dy+2\iint_{\substack{0\leq x\leq y\leq \delta, \\ y>2x }}\f{|\hat{f}(x)-\hat{f}(y)|^2}{|x-y|^2}dx dy\\
=&\hat{I}_1+\hat{I}_2. 
\end{split}
\end{equation*} 
For $\hat{I}_1$, applying the intermediate value theorem, with $\theta$ being a constant $\in[0,1]$, we get 
\begin{equation*}
\begin{split}
\hat{I}=&2\iint_{\substack{0\leq x\leq y\leq \delta, \\ x\leq y\leq 2x}}\bigg(\f14[\ln(\ln\f{1}{x+\theta(y-x)})]^{-\f54}\cdot\f{1}{\ln\f{1}{x+\theta(y-x)}}\cdot\f{1}{x+\theta(y-x)} \bigg)^2 dx dy\\
\lesssim&2\iint_{\substack{0\leq x\leq y\leq \delta, \\ x\leq y\leq 2x}}\f{1}{(\ln\f{1}{x})^2}\cdot\f{1}{y^2}dxdy \lesssim \int_0^{\delta}\f{1}{(\ln\f{1}{x})^2}\cdot\f{1}{x}dx \lesssim \f{1}{\ln\f{1}{\delta}}.
\end{split}
\end{equation*} 
To bound $\hat{I}_2$, we set $x:=e^u$ and $y:=e^v$. The requirement $0<2x<y<\delta$ implies
$$-\infty<u \leq -\ln\f{1}{\delta}-\ln 2, \quad -\infty<v \leq -\ln\f{1}{\delta} \quad \mbox{ and } \quad v-u> \ln 2>0.$$
Utilizing $(u, v)$ and $(u, w)$ coordinates with $w=v-u$, we then obtain
\begin{equation*}
\begin{split}
\hat{I}_2=&2\iint_{\substack{0\leq x\leq y\leq \delta, \\ y>2x }}\f{|[\ln(\ln \f1y)]^{-\f14}-[\ln(\ln \f1x)]^{-\f14}|^2}{|y-x|^2}dx dy\\
\lesssim&2\iint_{\substack{-\infty<u, v\leq -\ln\f{1}{\delta}, \\ v-u\geq \ln 2}}e^{-2v}e^u e^v |[\ln(-v)]^{-\f14}-[\ln(-u)]^{-\f14}|^2 dudv\\
\lesssim&2\int_{\ln 2}^{+\infty}\int_{-\infty}^{-\ln\f{1}{\delta}-w}e^{-w} |[\ln(-u-w)]^{-\f14}-[\ln(-u)]^{-\f14}|^2 du dw.
\end{split}
\end{equation*} 
The last line can be bounded as 
\begin{equation*}
\begin{split}
\hat{I}_2\lesssim&2\int_{\ln 2}^{+\infty}\int_{-\infty}^{-\ln\f{1}{\delta}-w}e^{-w}|\int_{-u}^{-u-w}-\f14(\ln r)^{-\f54}\cdot\f{1}{r} dr|^2 du dw\\
 \lesssim& \int_{\ln 2}^{+\infty}\bigg(\int_{-\infty}^{-\ln\f{1}{\delta}-w}e^{-w}\cdot\f{w^2}{|u+w|^2}du\bigg) dw=\int_{\ln 2}^{+\infty} e^{-w} w^2\bigg(\int_{-\infty}^{-\ln\f{1}{\delta}-w}\f{1}{|u+w|^2}du\bigg) dw\\
 =&\int_{\ln 2}^{+\infty} e^{-w} w^2\bigg(\int_{-\infty}^{-\ln\f{1}{\delta}}\f{1}{|\tilde{u}|^2} d\tilde{u}\bigg) dw\lesssim \f{1}{\ln\f{1}{\delta}}.
\end{split}
\end{equation*} 

\end{proof}

Back to Theorem \ref{main thm 2 section 14}, along $H_{-1}^{[0,\delta]}$ we could set the perturbation
\begin{equation}\label{set the C0 perturbation}
|\chih|(-1, \ub, \o)+|\nab_4\phi(-1, \ub, \o)-\nab_4\phi(-1, 0, \o)|=[\ln(\ln\f{1}{\ub})]^{-\f14}\cdot \tilde{f}(\o)^{\f12}a^{\f12}
\end{equation}
with $\tilde{f}(\o)$ being a smooth function of $\o$. Hence, to trigger the naked-singularity censoring, we only need the perturbation of $g$ in scale-critical $\|\cdot\|_{\dot{H}^{\f32}{(H_{-1}^{[0, \delta]}})}$ norm, i.e., 
$$\mbox{the } \|\cdot\|_{\dot{H}^{\f12}{(H_{-1}^{[0, \delta]}})}=\|\cdot\|_{L^2_{\o}\dot{H}^{\f12}_{\ub}([0,\delta])}+\|\cdot\|_{L^2_{\ub}\dot{H}^{\f12}_{\o}(S_{-1,\ub})} \mbox{ norm }$$ 
of the above perturbation in \eqref{set the C0 perturbation} to be of size $(\ln\f{1}{\delta})^{-\f12}a^{\f12}\ll 1$. And as $\delta\rightarrow 0^+$, we allow the size of the scale-critical perturbation shrink to $0$. 
}

\appendix
\section{Construction of Localized-In-Angle Initial Data}\label{Appendix A}

Along $u=-1$, in this appendix we construct the localized-in-angle initial data for $\partial_{\ub}\phi(-1, \ub, \o)$ and  $|\chih|^2_{g}(-1, \ub, \o)$ with $\ub\geq 0$. Since $\partial_{\ub}\phi(-1, \ub, \o)$ is a scalar function, we can prescribe the desired localized-in-angle initial data as we need. Notice that $\chih_{AB}(-1, \ub, \o)$ is a traceless $2$-form. To prescribe its characteristic initial data,  we follow calculations in Chapter 2 of \cite{Chr:book} and in Appendix C of \cite{An17}. With stereographic coordinates $(\theta_1, \theta_2)$, we have
\begin{equation}\label{initial data hat phi}
g|_{S_{-1,\ub}}=(\hat\phi|_{S_{-1,\ub}})^2 \hat{g}|_{S_{-1,\ub}}
\end{equation}
$$\mbox{with} \quad \hat{g}_{AB}(-1,\ub,\theta_1, \theta_2)=\f{1}{1+\f14(\theta_1^2+\theta_2^2)} \,m_{AB}(-1,\ub, \theta_1, \theta_2)$$ 
and $m_{AB}$ can be expressed as $m(-1, \ub, \theta_1, \theta_2)=\exp \Psi(\ub, \theta_1, \theta_2)$. Here $\Psi$ is a symmetric trace-free 2-dimensional matrix. Notice that as pointed out in Chapter 2 of \cite{Chr:book}, $\Psi(\ub, \theta_1, \theta_2)$ is the free data we can prescribe along $u=-1$ and $\hat\phi|_{S_{-1,\ub}}$ in \eqref{initial data hat phi} is determined by $\Psi(\ub, \theta_1, \theta_2)$.\footnote{In the below, we will prescribe $\Psi(\ub, \theta_1, \theta_2)$ in the north polar chart. And  $\Psi'(\ub, \theta_1', \theta_2')$ in the south polar chart can be obtained accordingly.} 

For $\sqrt{\theta^2_1+\theta^2_2}<20$} we set $$\Psi(\ub,\theta_1, \theta_2):=\sqrt2 x(\ub, \theta_1, \theta_2)\cdot\at\cdot\Psi_0(\ub, \theta_1, \theta_2).$$ 
Here $\partial x(\ub, \theta_1, \theta_2)/\partial \ub$ is a free scalar function to be prescribed and we also choose
$$\Psi_0(\ub, \theta_1, \theta_2)=
\begin{bmatrix}
1&0\\
0&-1
\end{bmatrix} \quad \mbox{for} \quad \ub\geq 0.$$
With a similar calculation carried out in Appendix C of \cite{An17}, we deduce
\begin{equation*}
\begin{split}
\f{\partial}{\partial \ub}m(-1, \ub,\theta_1,\theta_2)=&\sqrt2\at \begin{bmatrix}
\exp\bigg(\sqrt2 x(\ub, \theta_1, \theta_2)\cdot\at\bigg)&0\\
0&-\exp\bigg(-\sqrt2 x(\ub, \theta_1, \theta_2)\cdot\at\bigg)
\end{bmatrix}\\
&\times \f{\partial x}{\partial \ub}(\ub, \theta_1, \theta_2)
\end{split}
\end{equation*}
From Chapter 2 in \cite{Chr:book}, we have 
$$\chih_{AB}=\f12\phi^2\f{\partial \hat{g}}{\partial \ub}=\f12\phi^2 W^2(\theta_1, \theta_2)\f{\partial}{\partial \ub}m_{AB}(-1, \ub,\theta_1, \theta_2) \mbox{ and}$$

\begin{equation*}
\begin{split}
\f12 |\chih|^2_{g}=&\f18 (m^{-1})^{CA}(\f{\partial m}{\partial \ub})_{AB} (m^{-1})^{BD} (\f{\partial m}{\partial \ub})_{DC}.
\end{split}
\end{equation*}
For our concrete construction, along $u=-1$ it hence holds
\begin{equation}\label{e value}
\begin{split}
\f12|\chih|^2_g=&\f18\cdot \bigg(\sqrt2\cdot \f{\partial x}{\partial \ub}(\ub, \theta_1, \theta_2)\cdot\at\bigg)^2 \cdot\tr \begin{bmatrix} 1&0\\0&1 \end{bmatrix}=\f12 a\cdot  \bigg(\f{\partial x}{\partial \ub}(\ub, \theta_1, \theta_2)\bigg)^2.
\end{split}
\end{equation}
With $\partial x(\ub, \theta_1, \theta_2)/\partial \ub$ as the freely prescribed scalar function on $S_{-1, \ub}$, we therefore obtain the desired localized-in-angle initial data for $|\chih|^2_g(-1, \ub, \o)$.

\end{document}